\newif\ifextended
\newcommand{\makefitcolumn}[1]{
\noindent
\resizebox{\linewidth}{!}{
\begin{minipage}{\linewidth}
#1
\end{minipage}
}
}
\keywords{WebAssembly, Information Flow Control, Static Analysis, Noninterference, Taint Tracking, Abstract Interpretation}
\newcommand{\shortened}[2]{#2}
\newtheorem{theorem}{Theorem}
\newtheorem{lemma}{Lemma}
\newtheorem{corollary}{Corollary}
\theoremstyle{definition}
\newtheorem{defthm}{Definition}
\newtheorem{examplethm}{Example}
\title{\textsc{Wanilla}: Sound Noninterference Analysis for WebAssembly (Technical Report)}
\title{\textsc{Wanilla}: Sound Noninterference Analysis for WebAssembly}
\author{Markus Scherer}
\affiliation{%
  \institution{TU Wien}
  \city{Vienna}
  \country{Austria}
  }
  \institution{Christian Doppler Laboratory Blockchain Technologies for the Internet of Things}
\author{Jeppe Fredsgaard Blaabjerg}
\affiliation{%
  \institution{Aarhus University}
  \city{Aarhus}
  \country{Denmark}
  }
\author{Alexander Sjösten}
\author{Matteo Maffei}
\affiliation{%
  \institution{TU Wien}
  \city{Vienna}
  \country{Austria}
  }
\begin{document}

\lstdefinestyle{mystyle}{
    commentstyle=\color{green!70!black},
    keywordstyle=\color{magenta},
    stringstyle=\color{blue!65!white},
    basicstyle=\ttfamily\footnotesize,
    breakatwhitespace=false,
    breaklines=true,
    captionpos=b,
    keepspaces=true,
    numbers=left,
    numbersep=5pt,
    showspaces=false,
    showstringspaces=false,
    showtabs=false,
    tabsize=2,
    escapeinside={|@}{@|}
}
\lstset{
  literate={\_}{}{0\discretionary{\_}{}{\_}}%
}
\lstset{style=mystyle}

\def\ccode{\lstinline[language=C, breakatwhitespace, basicstyle=\ttfamily\normalsize]}
\def\fnccode{\lstinline[language=C, breakatwhitespace, basicstyle=\ttfamily\footnotesize]}

\newcounter{linenumber}
\newcounter{indent}

\newcommand{\TInstr}[1]{
  \draw (1,\arabic{linenumber}) node [anchor=north east,yshift=-1] { \footnotesize \arabic{linenumber} \texttt{\vphantom{A}} };
  \draw (\arabic{indent} * 0.6 + 1,\arabic{linenumber}) node [anchor=north west] { #1 \vphantom{A} };
  \stepcounter{linenumber}
}
\newcommand{\TInstrNN}[1]{
  \draw (\arabic{indent} + 1,\arabic{linenumber}) node [anchor=north west] { #1 \vphantom{A} };
  \stepcounter{linenumber}
}
  

\newcommand{\TInstrBlock}[1]{
  \TInstr{$\Block{#1}$}
  \stepcounter{indent}
}
\newcommand{\TInstrLoop}{
  \TInstr{$\Loop{}$}
  \stepcounter{indent}
}
\newcommand{\TInstrEnd}{
  \addtocounter{indent}{-1}
  \TInstr{$\End{}$}
}

\newcommand{\li}[1]{\texttt{L#1}}

\newcommand{\wanillaBenchmarkModuleCount}{57\xspace}
\newcommand{\wanillaBenchmarkTestCaseCount}{311\xspace}
\newcommand{\wanillaBenchmarkMinimalInstructionCount}{1\xspace}
\newcommand{\wanillaBenchmarkMaximalInstructionCount}{60\xspace}
\newcommand{\wanillaBenchmarkAverageInstructionCount}{17.26\xspace}

\newcommand{\cmark}{\ding{51}}%
\newcommand{\xmark}{\ding{55}}%
\newcommand{\nimark}{{\small \faShield*}}
\newcommand{\inmark}{{\small \faBolt}}


\newcommand{\rapidBenchmarkAsmtPreciseQueries}{8}
\newcommand{\rapidBenchmarkAsmtImpreciseQueries}{9}
\newcommand{\rapidBenchmarkAsmtPreciseTestcases}{4}
\newcommand{\rapidBenchmarkAsmtImpreciseTestcases}{6}

\newcommand{\wanillaBenchmarkAsmtPreciseQueries}{251}
\newcommand{\wanillaBenchmarkAsmtImpreciseQueries}{135}
\newcommand{\wanillaBenchmarkAsmtPreciseTestcases}{198}
\newcommand{\wanillaBenchmarkAsmtImpreciseTestcases}{113}

\include{macros/background}
\newcommand{\itodo}[1]{\colorbox{orange}{#1}}
\newcommand{\checkme}[1]{#1}

\newcommand{\ST}{\textit{HH}\xspace}
\newcommand{\PT}{\textit{LH}\xspace}
\newcommand{\SU}{\textit{HL}\xspace}
\newcommand{\PU}{\textit{LL}\xspace}

\newcommand{\callstackprefix}[2]{#1 \prec #2}
\newcommand{\loweq}[2]{#1 \mathrel{=_L} #2}
\newcommand{\lowneq}[2]{#1 \mathrel{\neq_L} #2}

\newcommand{\pos}{\mathit{pos}}
\newcommand{\absloweq}[2]{#1 \mathrel{=_L} #2}

\newcommand{\stepSYMBOL}{\hookrightarrow\mkern-17mu\hookrightarrow}
\newcommand{\step}[2]{#1 \mathbin{\stepSYMBOL} #2} 
\newcommand{\nstep}[3]{#2 \mathbin{\overset{#1}{\stepSYMBOL}} #3} 

\newcommand{\addedInWanilla}[1]{{\color{orange!80!black} #1}}

\newcommand{\pcof}[1]{\mathit{pos}\mleft(#1\mright)}
\newcommand{\ctoa}[1]{\alpha\mleft(#1\mright)}
\newcommand{\absmodule}{\ctoa{m}}
\newcommand{\derive}[2]{#1 \mathbin{\vdash} #2}
\newcommand{\nderive}[2]{#1 \mathbin{\nvdash} #2}
\newcommand{\abslowneq}[2]{#1 \mathrel{\neq_L} #2}
\newcommand{\D}{\Delta}
\newcommand{\poseq}[2]{#1 \mathrel{=_{pos}} #2}
\newcommand{\posneq}[2]{#1 \mathrel{\neq_{pos}} #2}
\newcommand{\labeleq}[2]{#1 \mathrel{=_\ell} #2}
\newcommand{\valeq}[2]{#1 \mathrel{=_v} #2}
\newcommand{\EC}[1]{E[#1]}

\newcommand{\Nat}{\mathbb{N}}
\newcommand{\MemPos}{\textit{MemPos}}

\newcommand{\horst}{\textsc{HoRSt}\xspace}
\newcommand{\wappler}{\textsc{Wappler}\xspace}
\newcommand{\tool}{\textsc{Wanilla}\xspace}
\newcommand{\java}{\textsc{Java}\textsuperscript{TM}\xspace}

\newcommand{\deftupaccess}[2]{#1\widetilde{\left[#2\right]}}
\newcommand{\leak}[2]{\call{leak_{#1}}{#2}}
\newcommand{\cleak}[3]{\call{leak_{#1}^{#2}}{#3}}
\newcommand{\pbr}{\h{br}<p>}
\newcommand{\ppc}{\h{pc}<p>}

\newcommand{\cd}{\dot{c}}
\newcommand{\ci}{\overline{c}}
\newcommand{\di}{\overline{\Delta}}

\newcommand{\disj}[4]{\call{disj_{#1,#2}}{#3,#4}}

\newcommand{\annotate}[1]{\mathit{annotate}\mleft(#1\mright)}

\newcommand{\cssOf}[1]{\access{#1}{cs}}
\newcommand{\att}{\ensuremath{\textit{att}}}
\newcommand{\flowsto}{\sqsubseteq}
\newcommand{\notflowsto}{\not\sqsubseteq}
\newcommand{\lub}{\sqcup}
\newcommand{\nat}{\mathbb{N}}
\newcommand{\mpaccess}[2]{\horstACCESS{#1}{#2}}
\newcommand{\mpset}[3]{#1 \mleft[ #2 \leftarrow #3 \mright]}
\newcommand{\elleq}[2]{#1 \mathrel{=_\ell} #2}
\newcommand{\ellneq}[2]{#1 \mathrel{\neq_\ell} #2}
\newcommand{\ipp}{\textit{pp}}
\newcommand{\fid}{\textit{fid}}

\newcommand{\ppReturn}[1]{\rho_{#1}}

\NewDocumentCommand{\abstracts}{O{} O{} O{} O{}} {
  \mathbin{\prescript{#1}{#2}{\ge}^{#3}_{#4}}  
}

\newcommand{\abstractshvalue}[4]{
  #3 \mathbin{\prescript{#1}{}{\ge^{#2}_H}} #4
}
\newcommand{\abstractsatcontext}[4]{
  #3 \mathbin{\prescript{#1}{}{\ge^{\textit{ctx}}_H}} #4
}
\newcommand{\lcomplete}[1]{\call{C_\ell}{#1}}
\newcommand{\abstractsforell}[2]{
  #1 \mathbin{\ge^\ell} #2
}

\newcommand{\subsetofmps}[3]{
  #1 \mathbin{\downarrow^{#3}_{#2}}
}

\newcommand{\es}{[]}

\ExplSyntaxOn
\NewDocumentCommand{\lctx}{o}{
  \Mlctx
  \IfNoValueF{#1}{
    ^{#1}
  }
}
\NewDocumentCommand{\hctx}{o}{
  \textit{hctx}^\ell
  \IfNoValueF{#1}{
    \c_math_subscript_token {#1}
  }
}
\ExplSyntaxOff
\newcommand{\CLAP}[1]{\call{CLAP^\ell}{#1}}
\newcommand{\LCLAP}[1]{\call{LCLAP^\ell}{#1}}
\newcommand{\WCLAP}[1]{\call{CLAP^\ell_W}{#1}}
\newcommand{\Tr}[1]{\mathit{Tr}\left(#1\right)}
\NewDocumentCommand{\aaind}{m m O{\ipp}}{#1 \mathrel{\hat{=}^\ell_{#3}} #2}
\NewDocumentCommand{\aadis}{m m O{\ipp}}{#1 \mathrel{\hat{\neq}^\ell_{#3}} #2}
\newcommand{\ppOf}[1]{\access{#1}{pp}}
\newcommand{\haspp}[2]{\ppOf{#1} = #2}

\newcommand{\Lat}{\ensuremath{\mathcal{L}}}

\makeatletter
\newenvironment{smallalign*}
{
  \vspace{-1.4ex}
  \begingroup
  \small
  \start@align\@ne\st@rredtrue\m@ne
}
{
  \endalign
  \endgroup
}
\makeatother

\NewDocumentCommand{\terminalconf}{o}{
  \IfNoValueTF{#1}{
    \textit{term}
  }{
    \call{term}{#1}
  }
}

\NewDocumentCommand{\newfunc}{m o}{
  \IfNoValueTF{#2}{
    #1
  }{
    \call{#1}{#2}
  }
}

\newcommand{\hbr}{\h{br}<p>}
\newcommand{\helse}{\h{else}<p>}
\newcommand{\hidx}{\h{idx}<p>}
\newcommand{\hpc}{\h{pc}<p>}
\newcommand{\aseq}[1]{\call{\alpha_{\textit{seq}}}{#1}}

\newcommand{\exconftuple}{\newfunc{\eta_c}}
\newcommand{\exconfpp}{\newfunc{\eta_\textit{pp}}}
\newcommand{\exconfmem}{\newfunc{\eta_\textit{mem}}}
\newcommand{\exstack}[1]{\call{\eta_S}{#1}}
\newcommand{\exglobals}[1]{\call{\eta_G}{#1}}
\newcommand{\exlocals}[1]{\call{\eta_L}{#1}}
\newcommand{\exargs}[1]{\call{\eta_A}{#1}}
\newcommand{\exmemsize}[1]{\call{\eta_{MS}}{#1}}
\newcommand{\exmem}[1]{\call{\eta_{M}}{#1}}
\newcommand{\aframe}[1]{\call{\alpha_F}{#1}}
\newcommand{\atable}[1]{\call{\alpha_t}{#1}}
\newcommand{\athread}[1]{\call{\alpha_T}{#1}}
\newcommand{\athreadcur}[1]{\call{\alpha_{T_{\textit{cur}}}}{#1}}
\newcommand{\athreadcf}[1]{\call{\alpha_{T_{\textit{cf}}}}{#1}}
\newcommand{\ctoacur}[1]{\call{\alpha_{\textit{cur}}}{#1}}
\newcommand{\ctoacf}[1]{\call{\alpha_{\textit{cf}}}{#1}}
\newcommand{\atrap}[1]{\call{\alpha_{trap}}{#1}}
\newcommand{\avalue}[1]{\call{\alpha_V}{#1}}
\newcommand{\afunctions}[1]{\call{\alpha_C}{#1}}

\newcommand{\MPStore}{\textit{Store}}
\newcommand{\MPLocal}[1]{\call{Frame}{#1}}
\newcommand{\MPAnnot}[1]{\call{Annot}{#1}}

\renewcommand{\st}{\h{st}<f>}
\newcommand{\gt}{\h{gt}<f>}
\newcommand{\lt}{\h{lt}<f>}
\newcommand{\mem}{\h{mem}<f>}
\newcommand{\omem}{\h{mem}<f>_0}
\newcommand{\tbl}{\h{tbl}<f>}
\newcommand{\ctx}{\h{ctx}<f>}
\newcommand{\oat}{\h{at}<f>_0}
\newcommand{\ogt}{\h{gt}<f>_0}
\newcommand{\MState}[2]{\h{MState}[#1][#2]}
\newcommand{\hpcI}{\h{pc_1}<p>}
\newcommand{\hpcII}{\h{pc_2}<p>}
\newcommand{\hfid}{\h{fid}<p>}
\newcommand{\hnext}{\h{next}<p>}
\newcommand{\amem}[1]{\call{\alpha_M}{#1}}
\newcommand{\sz}{\h{sz}<f>}
\newcommand{\osz}{\h{sz}<f>_0}
\newcommand{\vm}{\h{v}<f>}
\newcommand{\ovm}{\h{v}<f>_0}
\newcommand{\ainstr}[2]{(\mkern-3.4mu| #2 |\mkern-3.3mu)^{\h{fid}<p>}_{#1}} 
\newcommand{\imp}{\textit{mp}}

\newcommand{\CmdI}{\Cmd_{\hpcI}}
\newcommand{\CmdII}{\Cmd_{\hpcII}}
\newcommand{\constprefix}[1]{\call{cp}{#1}}
\newcommand{\constprefixP}[1]{\call{cp'}{#1}}
\newcommand{\cppc}[1]{\call{cp_{pc}}{#1}}
\newcommand{\cei}[1]{\access{#1}{cmd}}

\newcommand{\concretemod}[1]{\call{\gamma_m}{#1}}
\newcommand{\preframes}{\newfunc{\gamma_{M,F}}}
\newcommand{\memaccess}[2]{\arraccess{#1}{#2}}

\newcommand{\pca}{\textit{pc}_\textit{a}}
\newcommand{\pcb}{\textit{pc}_\textit{b}}
\newcommand{\fida}{\textit{fid}_\textit{a}}
\newcommand{\fidb}{\textit{fid}_\textit{b}}

\newcommand{\pposs}[2]{
\ensuremath{(#1, #2) \cdot \textit{cs}}
}
\newcommand{\memmp}[1]{
\ensuremath{\mu(#1)}
}
\newcommand{\ea}{
\textit{ea}
}
\newcommand{\ii}{
\textit{i}
}
\NewDocumentCommand{\tos}{o}{%
    \IfNoValueTF{#1}
    { \mpst{z} }
    { \mpst{z#1} }
}
\newcommand{\memarg}{\textit{memarg}}

\newcommand{\getframe}{\Mfunc{get\_frame}}
\newcommand{\getstack}{\Mfunc{get\_stack}}

\newcommand{\abstractcallgraph}{\newfunc{G_{M}}}

\newcommand{\concrete}[1]{\call{\gamma_M}{#1}}
\newcommand{\concretestack}[1]{\call{\gamma_{M,st}}{#1}}
\newcommand{\concretestackp}[1]{\call{\gamma'_{M,st}}{#1}}
\newcommand{\concretestacklocal}[1]{\call{\gamma_{t}}{#1}}
\newcommand{\concretemem}[1]{\call{\gamma_{M,\textit{mem}}}{#1}}
\newcommand{\concretememlocal}[1]{\call{\gamma'_{M,\textit{mem}}}{#1}}
\newcommand{\concretetable}[1]{\call{\gamma_{t}}{#1}}
\newcommand{\concretestore}[1]{\call{\gamma_{M,S}}{#1}}
\newcommand{\concretestorelocal}[1]{\call{\gamma'_{M,S}}{#1}}
\newcommand{\concreteglobals}[1]{\call{\gamma_{M,g}}{#1}}
\newcommand{\concreteglobalslocal}[1]{\call{\gamma'_{M,g}}{#1}}
\newcommand{\icnt}{\textit{cnt}}
\newcommand{\ist}{\textit{st}}
\newcommand{\iinstrsrest}{\iinstrs_\textit{rest}}

\newenvironment{myalign}
 {\par\nopagebreak\small\noindent\ignorespaces\csname align*\endcsname}
 {\nopagebreak\ignorespacesafterend\csname endalign*\endcsname}

\newcommand{\mycode}[1]{\ensuremath{\bm{\mathsf{#1}}}}
\newcommand{\wasmimmediate}[1]{\textit{#1}}

\ExplSyntaxOn
\NewDocumentCommand{\InstructionWithImmediate}{m m o}{
  \ensuremath{
  \mycode{#1}
  \IfNoValueTF{#3}{
  }{
    \c_math_subscript_token {#3}
  }
  \tl_if_blank:nF{#2}{\text{~}}\mathit{#2}
  }
}
\NewDocumentCommand{\InstructionWithOptionalImmediate}{m o o}{
  \ensuremath{
  \mycode{#1}
  \IfNoValueTF{#2}{
  }{
    \IfNoValueTF{#3}{
    }{
      \c_math_subscript_token {#3}
    }
    \tl_if_blank:nF{#2}{\text{~}}\mathit{#2}
  }
  }
}

\ExplSyntaxOff

\newcommand{\watype}[1]{\ensuremath{\mathsf{#1}}}
\newcommand{\intt}{\watype{i32}}
\newcommand{\FuncVoid}[1]{\mycode{func}~\wasmimmediate{#1}}
\newcommand{\FuncNoParam}[2]{\mycode{func}~\wasmimmediate{#1}~(\mycode{result}~#2)}
\newcommand{\Block}[1]{\InstructionWithImmediate{block}{#1}}
\newcommand{\Loop}[1]{\InstructionWithImmediate{loop}{#1}}
\NewDocumentCommand{\End}{o}{\mycode{end}\IfNoValueTF{#1}{}{_{#1}}}
\NewDocumentCommand{\Const}{m m o}{\ensuremath{\watype{#1}.\mycode{const}\IfNoValueTF{#3}{}{_{#3}}~#2}}
\newcommand{\LocalDecl}[2]{(\mycode{local}~\mathit{#1}~#2)}
\newcommand{\LocalSet}[1]{\InstructionWithImmediate{local.set}{#1}}
\newcommand{\LocalTee}[1]{\InstructionWithImmediate{local.tee}{#1}}
\newcommand{\LocalGet}[1]{\InstructionWithImmediate{local.get}{#1}}
\newcommand{\GlobalSet}[1]{\InstructionWithImmediate{global.set}{#1}}
\newcommand{\GlobalGet}[1]{\InstructionWithImmediate{global.get}{#1}}
\newcommand{\Br}[1]{\InstructionWithImmediate{br}{#1}}
\newcommand{\BrIf}[1]{\InstructionWithImmediate{br\_if}{#1}}
\newcommand{\BrTable}[1]{\InstructionWithImmediate{br\_table}{#1}}
\newcommand{\Eq}[1]{\watype{#1}.\mycode{eq}}
\newcommand{\Eqz}[1]{\watype{#1}.\mycode{eqz}}
\newcommand{\Add}[1]{\watype{#1}.\mycode{add}}
\newcommand{\Sub}[1]{\watype{#1}.\mycode{sub}}
\newcommand{\Mul}[1]{\watype{#1}.\mycode{mul}}
\newcommand{\Div}[1]{\watype{#1}.\mycode{div}}
\newcommand{\Mod}[1]{\watype{#1}.\mycode{mod}}
\newcommand{\Store}[1]{\watype{#1}.\mycode{store}}
\newcommand{\StoreN}[1]{\watype{#1}.\mycode{store}\textit{N}~\textit{memarg}}
\newcommand{\Load}[1]{#1.\mycode{load}}
\newcommand{\GeneralLoad}[1]{\AnyCmd{#1}{loadN\_sx}}
\newcommand{\GeneralStore}[1]{\AnyCmd{t}{storeN\_sx}}
\newcommand{\If}{\InstructionWithOptionalImmediate{if}}
\newcommand{\Then}{\mycode{then}}
\NewDocumentCommand{\Else}{o}{\mycode{else}\IfNoValueTF{#1}{}{_{#1}}}
\newcommand{\Result}[1]{(\mycode{result}~#1)}
\newcommand{\IfResult}[1]{\If~\Result{#1}}
\newcommand{\Trap}{\mycode{trap}}
\newcommand{\Label}[2]{\ensuremath{\mycode{label}_{#1}\{#2\}}}
\newcommand{\Frame}[2]{\mycode{frame}_{#1}\{#2\}}
\ExplSyntaxOn
\NewDocumentCommand\Call{m o}{
  \IfNoValueTF{#2}{
    \mycode{call}
  }{
    \mycode{call \c_math_subscript_token {#2}}
  }
  \tl_if_blank:nF{#1}{\hspace{1mm}{#1}}
}
\ExplSyntaxOff
\ExplSyntaxOn
\NewDocumentCommand\CallIndirect{o o}{
\IfNoValueTF{#1}{
\mycode{call_indirect}
}{
  \IfNoValueTF{#2}{
    \mycode{call_indirect}
  }{
    \mycode{call_indirect \c_math_subscript_token {#2}}
  }
  \tl_if_blank:nF{#1}{\hspace{1mm}{#1}}
}
}
\ExplSyntaxOff
\newcommand{\Grow}{\mycode{memory.grow}}
\newcommand{\Size}{\mycode{memory.size}}
\newcommand{\Return}{\mycode{return}}
\newcommand{\Nop}{\mycode{nop}}
\newcommand{\Unreachable}{\mycode{unreachable}}
\newcommand{\Drop}{\mycode{drop}}
\newcommand{\Select}{\mycode{select}}
\NewDocumentCommand{\Cmd}{o}{\mycode{cmd}\IfNoValueTF{#1}{}{_{#1}}}
\newcommand{\AnyCmd}[2]{\ensuremath{\watype{#1}.\mycode{#2}}}

\newcommand{\Invoke}{\mycode{invoke}} 
\newcommand{\Invokea}{\mycode{invoke}~a}

\newcommand{\BlockContext}[2]{B^{#1}\left[#2\right]}


\newcommand{\size}[1]{|{#1}|}
\newcommand{\record}[1]{\left\{{#1}\right\}}
\newcommand{\opt}[1]{{#1}^{?}}
\newcommand{\seq}[1]{{{#1}^{*}}}
\newcommand{\recordentry}[2]{\textsf{#1}\ #2}

\newcommand{\origconfig}{\textit{config}} 
\newcommand{\ival}{\textit{val}} 
\ExplSyntaxOn
\NewDocumentCommand\iinstr{o}{
  \IfNoValueTF{#1}{
    \textit{instr}
  }{
    \ensuremath{\textit{instr}\c_math_subscript_token#1}
  }
}
\NewDocumentCommand\iinstrs{o}{
  \IfNoValueTF{#1}{
    \seq{\iinstr}
  }{
    \seq{\iinstr[#1]}
  }
}
\ExplSyntaxOff
\newcommand{\ianninstr}{\textit{anninstr}} 
\newcommand{\iframe}{\textit{frame}} 
\newcommand{\imoduleinst}{\textit{moduleinst}} 
\newcommand{\ifuncaddr}{\textit{funcaddr}} 
\newcommand{\istore}{\textit{store}} 

\newcommand{\ostep}{\mathrel{\hookrightarrow}}
\newcommand{\bstep}{\mathrel{\underset{b}{\hookrightarrow}}}
\newcommand{\bnstep}{\mathrel{\overset{*}{\bstep}}}
\newcommand{\call}[2]{\mathit{#1}\mkern-2mu\left(#2\right)} 
\newcommand{\mcall}[2]{#1\mkern-2mu\left(#2\right)} 

\newcommand{\pluseq}{\mathrel{+}=}
\newcommand{\concat}{\mathrel{+}\mathrel{+}}
\newcommand{\access}[2]{#1.\mathsf{#2}} 
\newcommand{\arraccess}[2]{#1 \mleft[ #2 \mright]} 
\newcommand{\modify}[3]{#1\left[\textsf{#2} \mapsto #3\right]} 
\newcommand{\increment}[2]{#1\left[\textsf{#2} \pluseq 1\right]} 

\newcommand{\true}{\textit{true}}
\newcommand{\false}{\textit{false}}
\newcommand{\opCONS}{\horstCONS{}{}}

\newcommand{\bconfig}{\textit{bconfig}} 
\newcommand{\ibframe}{\textit{bframe}} 
\newcommand{\iblabel}{\textit{blabel}} 
\newcommand{\ibool}{\textit{bool}} 
\newcommand{\ipc}{\textit{pc}} 
\newcommand{\cond}[3]{\horstCOND{#1}{#2}{#3}} 
\newcommand{\first}[1]{\call{first}{#1}} 

\begin{horstSelectorFunction}{memorySizeSafe}
  \horstName{memorySizeSafe}
  \horstReturnType{\horstTypeint}
\end{horstSelectorFunction}
\begin{horstSelectorFunction}{getAmountOfReturnValuesInBlock}
  \horstName{getAmountOfReturnValuesInBlock}
  \horstParameterType{\horstTypeint}
  \horstParameterType{\horstTypeint}
  \horstReturnType{\horstTypeint}
\end{horstSelectorFunction}
\begin{horstSelectorFunction}{tableSafe}
  \horstName{tableSafe}
  \horstReturnType{\horstTypeint}
\end{horstSelectorFunction}
\begin{horstSelectorFunction}{memoryStoreInLoopBlock}
  \horstName{memoryStoreInLoopBlock}
  \horstParameterType{\horstTypeint}
  \horstParameterType{\horstTypeint}
  \horstReturnType{\horstTypebool}
\end{horstSelectorFunction}
\begin{horstSelectorFunction}{memoryDataSafe}
  \horstName{memoryDataSafe}
  \horstReturnType{\horstTypeint}
\end{horstSelectorFunction}
\begin{horstSelectorFunction}{isMemoryPresent}
  \horstName{isMemoryPresent}
  \horstReturnType{\horstTypebool}
\end{horstSelectorFunction}
\begin{horstSelectorFunction}{tableOutLabel}
  \horstName{tableOutLabel}
  \horstReturnType{\horstTypebool}
  \horstReturnType{\horstTypebool}
\end{horstSelectorFunction}
\begin{horstSelectorFunction}{memoryDataInLabelForImportedFunction}
  \horstName{memoryDataInLabelForImportedFunction}
  \horstParameterType{\horstTypeint}
  \horstReturnType{\horstTypebool}
  \horstReturnType{\horstTypebool}
\end{horstSelectorFunction}
\begin{horstSelectorFunction}{breakTableDestinations}
  \horstName{breakTableDestinations}
  \horstParameterType{\horstTypeint}
  \horstParameterType{\horstTypeint}
  \horstParameterType{\horstTypeint}
  \horstReturnType{\horstTypeint}
\end{horstSelectorFunction}
\begin{horstSelectorFunction}{importCallContextSafe}
  \horstName{importCallContextSafe}
  \horstParameterType{\horstTypeint}
  \horstReturnType{\horstTypeint}
\end{horstSelectorFunction}
\begin{horstSelectorFunction}{binOps}
  \horstName{binOps}
  \horstReturnType{\horstTypeint}
\end{horstSelectorFunction}
\begin{horstSelectorFunction}{globalSafe}
  \horstName{globalSafe}
  \horstReturnType{\horstTypeint}
\end{horstSelectorFunction}
\begin{horstSelectorFunction}{joinsForFunctionId}
  \horstName{joinsForFunctionId}
  \horstParameterType{\horstTypeint}
  \horstReturnType{\horstTypeint}
\end{horstSelectorFunction}
\begin{horstSelectorFunction}{importedFunctionIds}
  \horstName{importedFunctionIds}
  \horstReturnType{\horstTypeint}
\end{horstSelectorFunction}
\begin{horstSelectorFunction}{tableLeak}
  \horstName{tableLeak}
  \horstReturnType{\horstTypeint}
\end{horstSelectorFunction}
\begin{horstSelectorFunction}{returnCountForFunctionId}
  \horstName{returnCountForFunctionId}
  \horstParameterType{\horstTypeint}
  \horstReturnType{\horstTypeint}
\end{horstSelectorFunction}
\begin{horstSelectorFunction}{blocksForFunctionId}
  \horstName{blocksForFunctionId}
  \horstParameterType{\horstTypeint}
  \horstReturnType{\horstTypeint}
\end{horstSelectorFunction}
\begin{horstSelectorFunction}{resultLabelForImportedFunctionAndPosition}
  \horstName{resultLabelForImportedFunctionAndPosition}
  \horstParameterType{\horstTypeint}
  \horstParameterType{\horstTypeint}
  \horstReturnType{\horstTypebool}
  \horstReturnType{\horstTypebool}
\end{horstSelectorFunction}
\begin{horstSelectorFunction}{importCallGlobalLeak}
  \horstName{importCallGlobalLeak}
  \horstParameterType{\horstTypeint}
  \horstReturnType{\horstTypeint}
\end{horstSelectorFunction}
\begin{horstSelectorFunction}{importCallContextLeak}
  \horstName{importCallContextLeak}
  \horstParameterType{\horstTypeint}
  \horstReturnType{\horstTypeint}
\end{horstSelectorFunction}
\begin{horstSelectorFunction}{memoryAreaLeak}
  \horstName{memoryAreaLeak}
  \horstReturnType{\horstTypeint}
  \horstReturnType{\horstTypeint}
  \horstReturnType{\horstTypeint}
\end{horstSelectorFunction}
\begin{horstSelectorFunction}{argumentLabelForPosition}
  \horstName{argumentLabelForPosition}
  \horstParameterType{\horstTypeint}
  \horstReturnType{\horstTypebool}
  \horstReturnType{\horstTypebool}
\end{horstSelectorFunction}
\begin{horstSelectorFunction}{memoryDataOutLabelForImportedFunction}
  \horstName{memoryDataOutLabelForImportedFunction}
  \horstParameterType{\horstTypeint}
  \horstReturnType{\horstTypebool}
  \horstReturnType{\horstTypebool}
  \horstReturnType{\horstTypebool}
\end{horstSelectorFunction}
\begin{horstSelectorFunction}{cvtOps}
  \horstName{cvtOps}
  \horstReturnType{\horstTypeint}
\end{horstSelectorFunction}
\begin{horstSelectorFunction}{interval}
  \horstName{interval}
  \horstParameterType{\horstTypeint}
  \horstParameterType{\horstTypeint}
  \horstReturnType{\horstTypeint}
\end{horstSelectorFunction}
\begin{horstSelectorFunction}{functionIds}
  \horstName{functionIds}
  \horstReturnType{\horstTypeint}
\end{horstSelectorFunction}
\begin{horstSelectorFunction}{globalOutLabelForImportedFunctionAndPosition}
  \horstName{globalOutLabelForImportedFunctionAndPosition}
  \horstParameterType{\horstTypeint}
  \horstParameterType{\horstTypeint}
  \horstReturnType{\horstTypebool}
  \horstReturnType{\horstTypebool}
  \horstReturnType{\horstTypebool}
\end{horstSelectorFunction}
\begin{horstSelectorFunction}{memoryDataLeak}
  \horstName{memoryDataLeak}
  \horstReturnType{\horstTypeint}
\end{horstSelectorFunction}
\begin{horstSelectorFunction}{onlyIf}
  \horstName{onlyIf}
  \horstParameterType{\horstTypebool}
  \horstReturnType{\horstTypebool}
\end{horstSelectorFunction}
\begin{horstSelectorFunction}{stackSizeForFunctionIdAndPc}
  \horstName{stackSizeForFunctionIdAndPc}
  \horstParameterType{\horstTypeint}
  \horstParameterType{\horstTypeint}
  \horstReturnType{\horstTypeint}
\end{horstSelectorFunction}
\begin{horstSelectorFunction}{trappingCvtOps}
  \horstName{trappingCvtOps}
  \horstReturnType{\horstTypeint}
\end{horstSelectorFunction}
\begin{horstSelectorFunction}{valueAndTopOfGlobal}
  \horstName{valueAndTopOfGlobal}
  \horstParameterType{\horstTypeint}
  \horstReturnType{\horstTypeint}
  \horstReturnType{\horstTypebool}
\end{horstSelectorFunction}
\begin{horstSelectorFunction}{storeOps}
  \horstName{storeOps}
  \horstReturnType{\horstTypeint}
\end{horstSelectorFunction}
\begin{horstSelectorFunction}{trappingBinOps}
  \horstName{trappingBinOps}
  \horstReturnType{\horstTypeint}
\end{horstSelectorFunction}
\begin{horstSelectorFunction}{importCallTableLeak}
  \horstName{importCallTableLeak}
  \horstParameterType{\horstTypeint}
  \horstReturnType{\horstTypeint}
\end{horstSelectorFunction}
\begin{horstSelectorFunction}{importCallArgumentSafe}
  \horstName{importCallArgumentSafe}
  \horstParameterType{\horstTypeint}
  \horstReturnType{\horstTypeint}
\end{horstSelectorFunction}
\begin{horstSelectorFunction}{importCallGlobalSafe}
  \horstName{importCallGlobalSafe}
  \horstParameterType{\horstTypeint}
  \horstReturnType{\horstTypeint}
\end{horstSelectorFunction}
\begin{horstSelectorFunction}{sizeOfBreakTable}
  \horstName{sizeOfBreakTable}
  \horstParameterType{\horstTypeint}
  \horstParameterType{\horstTypeint}
  \horstReturnType{\horstTypeint}
\end{horstSelectorFunction}
\begin{horstSelectorFunction}{memoryAreaSafe}
  \horstName{memoryAreaSafe}
  \horstReturnType{\horstTypeint}
  \horstReturnType{\horstTypeint}
  \horstReturnType{\horstTypeint}
\end{horstSelectorFunction}
\begin{horstSelectorFunction}{memoryDataInLabel}
  \horstName{memoryDataInLabel}
  \horstReturnType{\horstTypebool}
  \horstReturnType{\horstTypebool}
\end{horstSelectorFunction}
\begin{horstSelectorFunction}{exitPointsForFunctionId}
  \horstName{exitPointsForFunctionId}
  \horstParameterType{\horstTypeint}
  \horstReturnType{\horstTypeint}
\end{horstSelectorFunction}
\begin{horstSelectorFunction}{globalCount}
  \horstName{globalCount}
  \horstReturnType{\horstTypeint}
\end{horstSelectorFunction}
\begin{horstSelectorFunction}{memorySizeLeak}
  \horstName{memorySizeLeak}
  \horstReturnType{\horstTypeint}
\end{horstSelectorFunction}
\begin{horstSelectorFunction}{endsForFunctionId}
  \horstName{endsForFunctionId}
  \horstParameterType{\horstTypeint}
  \horstReturnType{\horstTypeint}
  \horstReturnType{\horstTypeint}
\end{horstSelectorFunction}
\begin{horstSelectorFunction}{globalLeak}
  \horstName{globalLeak}
  \horstReturnType{\horstTypeint}
\end{horstSelectorFunction}
\begin{horstSelectorFunction}{possibleCallTargets}
  \horstName{possibleCallTargets}
  \horstParameterType{\horstTypeint}
  \horstParameterType{\horstTypeint}
  \horstReturnType{\horstTypeint}
  \horstReturnType{\horstTypeint}
\end{horstSelectorFunction}
\begin{horstSelectorFunction}{tableInLabelForImportedFunction}
  \horstName{tableInLabelForImportedFunction}
  \horstParameterType{\horstTypeint}
  \horstReturnType{\horstTypebool}
  \horstReturnType{\horstTypebool}
\end{horstSelectorFunction}
\begin{horstSelectorFunction}{maxPcForFunctionId}
  \horstName{maxPcForFunctionId}
  \horstParameterType{\horstTypeint}
  \horstReturnType{\horstTypeint}
\end{horstSelectorFunction}
\begin{horstSelectorFunction}{memorySizeOutLabel}
  \horstName{memorySizeOutLabel}
  \horstReturnType{\horstTypebool}
  \horstReturnType{\horstTypebool}
\end{horstSelectorFunction}
\begin{horstSelectorFunction}{importCallArgumentLeak}
  \horstName{importCallArgumentLeak}
  \horstParameterType{\horstTypeint}
  \horstReturnType{\horstTypeint}
\end{horstSelectorFunction}
\begin{horstSelectorFunction}{localModifiedInLoopBlock}
  \horstName{localModifiedInLoopBlock}
  \horstParameterType{\horstTypeint}
  \horstParameterType{\horstTypeint}
  \horstParameterType{\horstTypeint}
  \horstReturnType{\horstTypebool}
\end{horstSelectorFunction}
\begin{horstSelectorFunction}{loopsForFunctionId}
  \horstName{loopsForFunctionId}
  \horstParameterType{\horstTypeint}
  \horstReturnType{\horstTypeint}
\end{horstSelectorFunction}
\begin{horstSelectorFunction}{resultLabelForPosition}
  \horstName{resultLabelForPosition}
  \horstParameterType{\horstTypeint}
  \horstReturnType{\horstTypebool}
  \horstReturnType{\horstTypebool}
\end{horstSelectorFunction}
\begin{horstSelectorFunction}{tableOutLabelForImportedFunction}
  \horstName{tableOutLabelForImportedFunction}
  \horstParameterType{\horstTypeint}
  \horstReturnType{\horstTypebool}
  \horstReturnType{\horstTypebool}
  \horstReturnType{\horstTypebool}
\end{horstSelectorFunction}
\begin{horstSelectorFunction}{argumentLabelForImportedFunctionAndPosition}
  \horstName{argumentLabelForImportedFunctionAndPosition}
  \horstParameterType{\horstTypeint}
  \horstParameterType{\horstTypeint}
  \horstReturnType{\horstTypebool}
  \horstReturnType{\horstTypebool}
\end{horstSelectorFunction}
\begin{horstSelectorFunction}{getMemoryMin}
  \horstName{getMemoryMin}
  \horstReturnType{\horstTypeint}
\end{horstSelectorFunction}
\begin{horstSelectorFunction}{memoryGrowInLoopBlock}
  \horstName{memoryGrowInLoopBlock}
  \horstParameterType{\horstTypeint}
  \horstParameterType{\horstTypeint}
  \horstReturnType{\horstTypebool}
\end{horstSelectorFunction}
\begin{horstSelectorFunction}{immediateForFunctionIdAndPc}
  \horstName{immediateForFunctionIdAndPc}
  \horstParameterType{\horstTypeint}
  \horstParameterType{\horstTypeint}
  \horstReturnType{\horstTypeint}
\end{horstSelectorFunction}
\begin{horstSelectorFunction}{startFunctionId}
  \horstName{startFunctionId}
  \horstReturnType{\horstTypeint}
\end{horstSelectorFunction}
\begin{horstSelectorFunction}{memoryDataInLabelExceptions}
  \horstName{memoryDataInLabelExceptions}
  \horstReturnType{\horstTypeint}
  \horstReturnType{\horstTypeint}
  \horstReturnType{\horstTypebool}
  \horstReturnType{\horstTypebool}
\end{horstSelectorFunction}
\begin{horstSelectorFunction}{memoryDataOutLabel}
  \horstName{memoryDataOutLabel}
  \horstReturnType{\horstTypebool}
  \horstReturnType{\horstTypebool}
\end{horstSelectorFunction}
\begin{horstSelectorFunction}{tableInLabel}
  \horstName{tableInLabel}
  \horstReturnType{\horstTypebool}
  \horstReturnType{\horstTypebool}
\end{horstSelectorFunction}
\begin{horstSelectorFunction}{contextLabelForImportedFunction}
  \horstName{contextLabelForImportedFunction}
  \horstParameterType{\horstTypeint}
  \horstReturnType{\horstTypebool}
  \horstReturnType{\horstTypebool}
\end{horstSelectorFunction}
\begin{horstSelectorFunction}{resultSafe}
  \horstName{resultSafe}
  \horstReturnType{\horstTypeint}
\end{horstSelectorFunction}
\begin{horstSelectorFunction}{pcsForFunctionId}
  \horstName{pcsForFunctionId}
  \horstParameterType{\horstTypeint}
  \horstReturnType{\horstTypeint}
\end{horstSelectorFunction}
\begin{horstSelectorFunction}{argumentCountForFunctionId}
  \horstName{argumentCountForFunctionId}
  \horstParameterType{\horstTypeint}
  \horstReturnType{\horstTypeint}
\end{horstSelectorFunction}
\begin{horstSelectorFunction}{datasegmentsWithPositions}
  \horstName{datasegmentsWithPositions}
  \horstReturnType{\horstTypeint}
  \horstReturnType{\horstTypeint}
\end{horstSelectorFunction}
\begin{horstSelectorFunction}{getMemoryMax}
  \horstName{getMemoryMax}
  \horstReturnType{\horstTypeint}
\end{horstSelectorFunction}
\begin{horstSelectorFunction}{isMemoryImported}
  \horstName{isMemoryImported}
  \horstReturnType{\horstTypebool}
\end{horstSelectorFunction}
\begin{horstSelectorFunction}{importCallMemorySizeLeak}
  \horstName{importCallMemorySizeLeak}
  \horstParameterType{\horstTypeint}
  \horstReturnType{\horstTypeint}
\end{horstSelectorFunction}
\begin{horstSelectorFunction}{memorySizeInLabelForImportedFunction}
  \horstName{memorySizeInLabelForImportedFunction}
  \horstParameterType{\horstTypeint}
  \horstReturnType{\horstTypebool}
  \horstReturnType{\horstTypebool}
\end{horstSelectorFunction}
\begin{horstSelectorFunction}{importCallMemoryDataSafe}
  \horstName{importCallMemoryDataSafe}
  \horstParameterType{\horstTypeint}
  \horstReturnType{\horstTypeint}
\end{horstSelectorFunction}
\begin{horstSelectorFunction}{importCallMemorySizeSafe}
  \horstName{importCallMemorySizeSafe}
  \horstParameterType{\horstTypeint}
  \horstReturnType{\horstTypeint}
\end{horstSelectorFunction}
\begin{horstSelectorFunction}{resultLeak}
  \horstName{resultLeak}
  \horstReturnType{\horstTypeint}
\end{horstSelectorFunction}
\begin{horstSelectorFunction}{globalInLabelForImportedFunctionAndPosition}
  \horstName{globalInLabelForImportedFunctionAndPosition}
  \horstParameterType{\horstTypeint}
  \horstParameterType{\horstTypeint}
  \horstReturnType{\horstTypebool}
  \horstReturnType{\horstTypebool}
\end{horstSelectorFunction}
\begin{horstSelectorFunction}{globalOutLabelForPosition}
  \horstName{globalOutLabelForPosition}
  \horstParameterType{\horstTypeint}
  \horstReturnType{\horstTypebool}
  \horstReturnType{\horstTypebool}
\end{horstSelectorFunction}
\begin{horstSelectorFunction}{getHighestDataSegmentIndex}
  \horstName{getHighestDataSegmentIndex}
  \horstReturnType{\horstTypeint}
\end{horstSelectorFunction}
\begin{horstSelectorFunction}{bitwidthForLocal}
  \horstName{bitwidthForLocal}
  \horstParameterType{\horstTypeint}
  \horstParameterType{\horstTypeint}
  \horstReturnType{\horstTypeint}
\end{horstSelectorFunction}
\begin{horstSelectorFunction}{memoryOffsetForFunctionIdAndPc}
  \horstName{memoryOffsetForFunctionIdAndPc}
  \horstParameterType{\horstTypeint}
  \horstParameterType{\horstTypeint}
  \horstReturnType{\horstTypeint}
\end{horstSelectorFunction}
\begin{horstSelectorFunction}{memorySizeOutLabelForImportedFunction}
  \horstName{memorySizeOutLabelForImportedFunction}
  \horstParameterType{\horstTypeint}
  \horstReturnType{\horstTypebool}
  \horstReturnType{\horstTypebool}
  \horstReturnType{\horstTypebool}
\end{horstSelectorFunction}
\begin{horstSelectorFunction}{bitwidthForGlobal}
  \horstName{bitwidthForGlobal}
  \horstParameterType{\horstTypeint}
  \horstReturnType{\horstTypeint}
\end{horstSelectorFunction}
\begin{horstSelectorFunction}{globalModifiedInLoopBlock}
  \horstName{globalModifiedInLoopBlock}
  \horstParameterType{\horstTypeint}
  \horstParameterType{\horstTypeint}
  \horstParameterType{\horstTypeint}
  \horstReturnType{\horstTypebool}
\end{horstSelectorFunction}
\begin{horstSelectorFunction}{importCallTableSafe}
  \horstName{importCallTableSafe}
  \horstParameterType{\horstTypeint}
  \horstReturnType{\horstTypeint}
\end{horstSelectorFunction}
\begin{horstSelectorFunction}{isImportedFunction}
  \horstName{isImportedFunction}
  \horstParameterType{\horstTypeint}
  \horstReturnType{\horstTypebool}
\end{horstSelectorFunction}
\begin{horstSelectorFunction}{endForIf}
  \horstName{endForIf}
  \horstParameterType{\horstTypeint}
  \horstParameterType{\horstTypeint}
  \horstReturnType{\horstTypeint}
\end{horstSelectorFunction}
\begin{horstSelectorFunction}{elseForIf}
  \horstName{elseForIf}
  \horstParameterType{\horstTypeint}
  \horstParameterType{\horstTypeint}
  \horstReturnType{\horstTypeint}
\end{horstSelectorFunction}
\begin{horstSelectorFunction}{pcsAndValueAndTopOfConstsForFunctionId}
  \horstName{pcsAndValueAndTopOfConstsForFunctionId}
  \horstParameterType{\horstTypeint}
  \horstReturnType{\horstTypeint}
  \horstReturnType{\horstTypeint}
  \horstReturnType{\horstTypebool}
\end{horstSelectorFunction}
\begin{horstSelectorFunction}{globalInLabelForPosition}
  \horstName{globalInLabelForPosition}
  \horstParameterType{\horstTypeint}
  \horstReturnType{\horstTypebool}
  \horstReturnType{\horstTypebool}
\end{horstSelectorFunction}
\begin{horstSelectorFunction}{pcsForFunctionIdAndOpcode}
  \horstName{pcsForFunctionIdAndOpcode}
  \horstParameterType{\horstTypeint}
  \horstParameterType{\horstTypeint}
  \horstReturnType{\horstTypeint}
\end{horstSelectorFunction}
\begin{horstSelectorFunction}{breakDestinations}
  \horstName{breakDestinations}
  \horstParameterType{\horstTypeint}
  \horstParameterType{\horstTypeint}
  \horstReturnType{\horstTypeint}
\end{horstSelectorFunction}
\begin{horstSelectorFunction}{bitwidthForResult}
  \horstName{bitwidthForResult}
  \horstParameterType{\horstTypeint}
  \horstParameterType{\horstTypeint}
  \horstReturnType{\horstTypeint}
\end{horstSelectorFunction}
\begin{horstSelectorFunction}{importCallMemoryDataLeak}
  \horstName{importCallMemoryDataLeak}
  \horstParameterType{\horstTypeint}
  \horstReturnType{\horstTypeint}
\end{horstSelectorFunction}
\begin{horstSelectorFunction}{possibleHavokCallTargets}
  \horstName{possibleHavokCallTargets}
  \horstParameterType{\horstTypeint}
  \horstParameterType{\horstTypeint}
  \horstReturnType{\horstTypeint}
\end{horstSelectorFunction}
\begin{horstSelectorFunction}{memorySizeInLabel}
  \horstName{memorySizeInLabel}
  \horstReturnType{\horstTypebool}
  \horstReturnType{\horstTypebool}
\end{horstSelectorFunction}
\begin{horstSelectorFunction}{localCountForFunctionId}
  \horstName{localCountForFunctionId}
  \horstParameterType{\horstTypeint}
  \horstReturnType{\horstTypeint}
\end{horstSelectorFunction}
\begin{horstSelectorFunction}{ifs}
  \horstName{ifs}
  \horstParameterType{\horstTypeint}
  \horstReturnType{\horstTypeint}
\end{horstSelectorFunction}
\begin{horstSelectorFunction}{bitwidthForArgument}
  \horstName{bitwidthForArgument}
  \horstParameterType{\horstTypeint}
  \horstParameterType{\horstTypeint}
  \horstReturnType{\horstTypeint}
\end{horstSelectorFunction}
\begin{horstSelectorFunction}{isTableImprecise}
  \horstName{isTableImprecise}
  \horstReturnType{\horstTypebool}
\end{horstSelectorFunction}
\begin{horstSelectorFunction}{unOps}
  \horstName{unOps}
  \horstReturnType{\horstTypeint}
\end{horstSelectorFunction}
\begin{horstSelectorFunction}{loadOps}
  \horstName{loadOps}
  \horstReturnType{\horstTypeint}
\end{horstSelectorFunction}
\begin{horstSelectorFunction}{unit}
  \horstName{unit}
\end{horstSelectorFunction}
\begin{horstPredicate}{Init}
  \horstName{Init}
  \horstParVar{fid}{\horstTypeint}
  \horstArgumentType{\horstTypeLabel}
  \horstArgumentType{\horstTypeHomInit{\horstTypeLValue}{\horstOpAppas{\horstParVarfid}{}}}
  \horstArgumentType{\horstTypeHomInit{\horstTypeLValue}{\horstOpAppgs{}{}}}
  \horstArgumentType{\horstTypeMemory}
\end{horstPredicate}
\begin{horstPredicate}{Return}
  \horstName{Return}
  \horstParVar{fid}{\horstTypeint}
  \horstArgumentType{\horstTypeContext}
  \horstArgumentType{\horstTypeHomInit{\horstTypeLValue}{\horstOpApprs{\horstParVarfid}{}}}
  \horstArgumentType{\horstTypeHomInit{\horstTypeLValue}{\horstOpAppgs{}{}}}
  \horstArgumentType{\horstTypeMemory}
  \horstArgumentType{\horstTypeTable}
  \horstArgumentType{\horstTypeHomInit{\horstTypeLValue}{\horstOpAppas{\horstParVarfid}{}}}
  \horstArgumentType{\horstTypeHomInit{\horstTypeLValue}{\horstOpAppgs{}{}}}
  \horstArgumentType{\horstTypeMemory}
\end{horstPredicate}
\begin{horstPredicate}{ReturnCall}
  \horstName{ReturnCall}
  \horstParVar{fid}{\horstTypeint}
  \horstArgumentType{\horstTypeContext}
  \horstArgumentType{\horstTypeHomInit{\horstTypeLValue}{\horstOpApprs{\horstParVarfid}{}}}
  \horstArgumentType{\horstTypeHomInit{\horstTypeLValue}{\horstOpAppgs{}{}}}
  \horstArgumentType{\horstTypeMemory}
  \horstArgumentType{\horstTypeTable}
  \horstArgumentType{\horstTypeHomInit{\horstTypeLValue}{\horstOpAppas{\horstParVarfid}{}}}
  \horstArgumentType{\horstTypeHomInit{\horstTypeLValue}{\horstOpAppgs{}{}}}
  \horstArgumentType{\horstTypeMemory}
\end{horstPredicate}
\begin{horstPredicate}{MStateToJoin}
  \horstName{MStateToJoin}
  \horstParVar{fid}{\horstTypeint}
  \horstParVar{pc}{\horstTypeint}
  \horstArgumentType{\horstTypeContext}
  \horstArgumentType{\horstTypeHomInit{\horstTypeLValue}{\horstOpAppss{\horstParVarfid,\horstParVarpc}{}}}
  \horstArgumentType{\horstTypeHomInit{\horstTypeLValue}{\horstOpAppgs{}{}}}
  \horstArgumentType{\horstTypeHomInit{\horstTypeLValue}{\horstOpAppls{\horstParVarfid}{}}}
  \horstArgumentType{\horstTypeMemory}
  \horstArgumentType{\horstTypeTable}
  \horstArgumentType{\horstTypeHomInit{\horstTypeLValue}{\horstOpAppas{\horstParVarfid}{}}}
  \horstArgumentType{\horstTypeHomInit{\horstTypeLValue}{\horstOpAppgs{}{}}}
  \horstArgumentType{\horstTypeMemory}
\end{horstPredicate}
\begin{horstPredicate}{MState}
  \horstName{MState}
  \horstParVar{fid}{\horstTypeint}
  \horstParVar{pc}{\horstTypeint}
  \horstArgumentType{\horstTypeContext}
  \horstArgumentType{\horstTypeHomInit{\horstTypeLValue}{\horstOpAppss{\horstParVarfid,\horstParVarpc}{}}}
  \horstArgumentType{\horstTypeHomInit{\horstTypeLValue}{\horstOpAppgs{}{}}}
  \horstArgumentType{\horstTypeHomInit{\horstTypeLValue}{\horstOpAppls{\horstParVarfid}{}}}
  \horstArgumentType{\horstTypeMemory}
  \horstArgumentType{\horstTypeTable}
  \horstArgumentType{\horstTypeHomInit{\horstTypeLValue}{\horstOpAppas{\horstParVarfid}{}}}
  \horstArgumentType{\horstTypeHomInit{\horstTypeLValue}{\horstOpAppgs{}{}}}
  \horstArgumentType{\horstTypeMemory}
\end{horstPredicate}
\begin{horstPredicate}{ReturnToJoin}
  \horstName{ReturnToJoin}
  \horstParVar{fid}{\horstTypeint}
  \horstArgumentType{\horstTypeContext}
  \horstArgumentType{\horstTypeHomInit{\horstTypeLValue}{\horstOpApprs{\horstParVarfid}{}}}
  \horstArgumentType{\horstTypeHomInit{\horstTypeLValue}{\horstOpAppgs{}{}}}
  \horstArgumentType{\horstTypeMemory}
  \horstArgumentType{\horstTypeTable}
  \horstArgumentType{\horstTypeHomInit{\horstTypeLValue}{\horstOpAppas{\horstParVarfid}{}}}
  \horstArgumentType{\horstTypeHomInit{\horstTypeLValue}{\horstOpAppgs{}{}}}
  \horstArgumentType{\horstTypeMemory}
\end{horstPredicate}
\begin{horstPredicate}{ScopeExtend}
  \horstName{ScopeExtend}
  \horstParVar{fid}{\horstTypeint}
  \horstArgumentType{\horstTypeint}
  \horstArgumentType{\horstTypeint}
\end{horstPredicate}
\begin{horstOperation}{pow}
  \horstName{pow}
  \horstParVar{n}{\horstTypeint}
  \horstVar{a}{\horstTypeint}
  \horstReturnType{\horstTypeint}
  \horstBody{\horstSimpleSumExp{MUL}{\horstSelectorFunctionAppinterval{\horstParVari}{0,\horstParVarn}}{\horstVara}{{i}}}
\end{horstOperation}
\begin{horstOperation}{abs}
  \horstName{abs}
  \horstVar{i}{\horstTypeint}
  \horstReturnType{\horstTypeint}
  \horstBody{\horstCOND{\horstLT{\horstVari}{0}}{\horstSUB{0}{\horstVari}}{\horstVari}}
\end{horstOperation}
\begin{horstOperation}{min}
  \horstName{min}
  \horstVar{a}{\horstTypeint}
  \horstVar{b}{\horstTypeint}
  \horstReturnType{\horstTypeint}
  \horstBody{\horstCOND{\horstLT{\horstVara}{\horstVarb}}{\horstVara}{\horstVarb}}
\end{horstOperation}
\begin{horstOperation}{max}
  \horstName{max}
  \horstVar{a}{\horstTypeint}
  \horstVar{b}{\horstTypeint}
  \horstReturnType{\horstTypeint}
  \horstBody{\horstCOND{\horstGT{\horstVara}{\horstVarb}}{\horstVara}{\horstVarb}}
\end{horstOperation}
\begin{horstOperation}{pcmax}
  \horstName{pcmax}
  \horstParVar{fid}{\horstTypeint}
  \horstReturnType{\horstTypeint}
  \horstBody{\horstSimpleSumExp{ADD}{\horstSelectorFunctionAppmaxPcForFunctionId{\horstParVarpc}{\horstParVarfid}}{\horstParVarpc}{{pc}}}
\end{horstOperation}
\begin{horstOperation}{ss}
  \horstName{ss}
  \horstParVar{fid}{\horstTypeint}
  \horstParVar{pc}{\horstTypeint}
  \horstReturnType{\horstTypeint}
  \horstBody{\horstSimpleSumExp{ADD}{\horstSelectorFunctionAppstackSizeForFunctionIdAndPc{\horstParVarss}{\horstParVarfid,\horstParVarpc}}{\horstParVarss}{{ss}}}
\end{horstOperation}
\begin{horstOperation}{ls}
  \horstName{ls}
  \horstParVar{fid}{\horstTypeint}
  \horstReturnType{\horstTypeint}
  \horstBody{\horstSimpleSumExp{ADD}{\horstSelectorFunctionApplocalCountForFunctionId{\horstParVarls}{\horstParVarfid}}{\horstParVarls}{{ls}}}
\end{horstOperation}
\begin{horstOperation}{as}
  \horstName{as}
  \horstParVar{fid}{\horstTypeint}
  \horstReturnType{\horstTypeint}
  \horstBody{\horstSimpleSumExp{ADD}{\horstSelectorFunctionAppargumentCountForFunctionId{\horstParVaras}{\horstParVarfid}}{\horstParVaras}{{as}}}
\end{horstOperation}
\begin{horstOperation}{rs}
  \horstName{rs}
  \horstParVar{fid}{\horstTypeint}
  \horstReturnType{\horstTypeint}
  \horstBody{\horstSimpleSumExp{ADD}{\horstSelectorFunctionAppreturnCountForFunctionId{\horstParVarrs}{\horstParVarfid}}{\horstParVarrs}{{rs}}}
\end{horstOperation}
\begin{horstOperation}{gs}
  \horstName{gs}
  \horstReturnType{\horstTypeint}
  \horstBody{\horstSimpleSumExp{ADD}{\horstSelectorFunctionAppglobalCount{\horstParVargs}{}}{\horstParVargs}{{gs}}}
\end{horstOperation}
\begin{horstOperation}{ds}
  \horstName{ds}
  \horstReturnType{\horstTypeint}
  \horstBody{\horstCustomSumExp{\horstSelectorFunctionAppgetHighestDataSegmentIndex{\horstParVari}{}}{x}{\horstParVari}{0}{{i}}}
\end{horstOperation}
\begin{horstOperation}{mms}
  \horstName{mms}
  \horstReturnType{\horstTypeint}
  \horstBody{\horstCustomSumExp{\horstSelectorFunctionAppgetMemoryMax{\horstParVarmax}{}}{x}{\horstParVarmax}{0}{{max}}}
\end{horstOperation}
\begin{horstOperation}{freeBaseValue}
  \horstName{freeBaseValue}
  \horstReturnType{\horstTypeBVLXIV}
\end{horstOperation}
\begin{horstOperation}{bvIIint}
  \horstName{bv2int}
  \horstVar{v}{\horstTypeBVLXIV}
  \horstReturnType{\horstTypeint}
\end{horstOperation}
\begin{horstOperation}{intIIbv}
  \horstName{int2bv}
  \horstParVar{bw}{\horstTypeint}
  \horstVar{v}{\horstTypeint}
  \horstReturnType{\horstTypeBVLXIV}
\end{horstOperation}
\begin{horstOperation}{cintIIbv}
  \horstName{cint2bv}
  \horstParVar{bw}{\horstTypeint}
  \horstParVar{v}{\horstTypeint}
  \horstReturnType{\horstTypeBVLXIV}
\end{horstOperation}
\begin{horstOperation}{base}
  \horstName{base}
  \horstVar{v}{\horstTypeValue}
  \horstReturnType{\horstTypeBVLXIV}
  \horstBody{\horstMatchExp{{{{x}},\horstMATCHIF{\horstEQ{\horstVarv}{\horstConstructorAppVal{\horstVarx}}},\horstVarx},{{},\horstOTHERWISE,\horstOpAppfreeBaseValue{}{}}}}
\end{horstOperation}
\begin{horstOperation}{cinj}
  \horstName{cinj}
  \horstParVar{i}{\horstTypeint}
  \horstReturnType{\horstTypeBVLXIV}
  \horstBody{\horstOpAppcintIIbv{64,\horstParVari}{}}
\end{horstOperation}
\begin{horstOperation}{inj}
  \horstName{inj}
  \horstVar{i}{\horstTypeint}
  \horstReturnType{\horstTypeBVLXIV}
  \horstBody{\horstOpAppintIIbv{64}{\horstVari}}
\end{horstOperation}
\begin{horstOperation}{proj}
  \horstName{proj}
  \horstVar{i}{\horstTypeBVLXIV}
  \horstReturnType{\horstTypeint}
  \horstBody{\horstOpAppbvIIint{}{\horstVari}}
\end{horstOperation}
\begin{horstOperation}{freeValOrTop}
  \horstName{freeValOrTop}
  \horstReturnType{\horstTypeValue}
  \horstBody{\horstConstructorAppVal{\horstOpAppfreeBaseValue{}{}}}
\end{horstOperation}
\begin{horstOperation}{freeBool}
  \horstName{freeBool}
  \horstReturnType{\horstTypebool}
\end{horstOperation}
\begin{horstOperation}{freeInt}
  \horstName{freeInt}
  \horstReturnType{\horstTypeint}
\end{horstOperation}
\begin{horstOperation}{mkConst}
  \horstName{mkConst}
  \horstParVar{i}{\horstTypeint}
  \horstReturnType{\horstTypeValue}
  \horstBody{\horstConstructorAppVal{\horstOpAppcinj{\horstParVari}{}}}
\end{horstOperation}
\begin{horstOperation}{mkValue}
  \horstName{mkValue}
  \horstVar{i}{\horstTypeint}
  \horstReturnType{\horstTypeValue}
  \horstBody{\horstConstructorAppVal{\horstOpAppinj{}{\horstVari}}}
\end{horstOperation}
\begin{horstOperation}{mkIntOfValue}
  \horstName{mkIntOfValue}
  \horstVar{val}{\horstTypeValue}
  \horstReturnType{\horstTypeint}
  \horstBody{\horstMatchExp{{{{v}},\horstMATCHIF{\horstEQ{\horstVarval}{\horstConstructorAppVal{\horstVarv}}},\horstOpAppproj{}{\horstVarv}},{{},\horstOTHERWISE,\horstOpAppfreeInt{}{}}}}
\end{horstOperation}
\begin{horstOperation}{val}
  \horstName{val}
  \horstParVar{top}{\horstTypebool}
  \horstParVar{v}{\horstTypeint}
  \horstReturnType{\horstTypeValue}
  \horstBody{\horstCOND{\horstParVartop}{\horstOpAppfreeValOrTop{}{}}{\horstOpAppmkConst{\horstParVarv}{}}}
\end{horstOperation}
\begin{horstOperation}{isJustV}
  \horstName{isJustV}
  \horstVar{mv}{\horstTypeMaybeValue}
  \horstReturnType{\horstTypebool}
  \horstBody{\horstMatchExp{{{},\horstMATCHIF{\horstEQ{\horstVarmv}{\horstConstructorAppJustV{\horstVarWILDCARD}}},\horstTrue},{{},\horstOTHERWISE,\horstFalse}}}
\end{horstOperation}
\begin{horstOperation}{fromJustV}
  \horstName{fromJustV}
  \horstVar{mv}{\horstTypeMaybeValue}
  \horstReturnType{\horstTypeValue}
  \horstBody{\horstMatchExp{{{{v}},\horstMATCHIF{\horstEQ{\horstVarmv}{\horstConstructorAppJustV{\horstVarv}}},\horstVarv},{{},\horstOTHERWISE,\horstOpAppfreeValOrTop{}{}}}}
\end{horstOperation}
\begin{horstOperation}{bvextendu}
  \horstName{bvextendu}
  \horstParVar{m}{\horstTypeint}
  \horstParVar{n}{\horstTypeint}
  \horstVar{x}{\horstTypeBVLXIV}
  \horstReturnType{\horstTypeBVLXIV}
\end{horstOperation}
\begin{horstOperation}{bvextends}
  \horstName{bvextends}
  \horstParVar{m}{\horstTypeint}
  \horstParVar{n}{\horstTypeint}
  \horstVar{x}{\horstTypeBVLXIV}
  \horstReturnType{\horstTypeBVLXIV}
\end{horstOperation}
\begin{horstOperation}{bvextract}
  \horstName{bvextract}
  \horstParVar{m}{\horstTypeint}
  \horstParVar{n}{\horstTypeint}
  \horstVar{x}{\horstTypeBVLXIV}
  \horstReturnType{\horstTypeBVLXIV}
\end{horstOperation}
\begin{horstOperation}{csigned}
  \horstName{csigned}
  \horstParVar{bw}{\horstTypeint}
  \horstVar{i}{\horstTypeBVLXIV}
  \horstReturnType{\horstTypeBVLXIV}
  \horstBody{\horstVari}
\end{horstOperation}
\begin{horstOperation}{cunsigned}
  \horstName{cunsigned}
  \horstParVar{bw}{\horstTypeint}
  \horstVar{i}{\horstTypeBVLXIV}
  \horstReturnType{\horstTypeBVLXIV}
  \horstBody{\horstVari}
\end{horstOperation}
\begin{horstOperation}{cextendu}
  \horstName{cextendu}
  \horstParVar{m}{\horstTypeint}
  \horstParVar{n}{\horstTypeint}
  \horstVar{x}{\horstTypeBVLXIV}
  \horstReturnType{\horstTypeBVLXIV}
  \horstBody{\horstOpAppbvextendu{\horstParVarm,\horstParVarn}{\horstVarx}}
\end{horstOperation}
\begin{horstOperation}{cextends}
  \horstName{cextends}
  \horstParVar{m}{\horstTypeint}
  \horstParVar{n}{\horstTypeint}
  \horstVar{x}{\horstTypeBVLXIV}
  \horstReturnType{\horstTypeBVLXIV}
  \horstBody{\horstOpAppbvextends{\horstParVarm,\horstParVarn}{\horstVarx}}
\end{horstOperation}
\begin{horstOperation}{cwrap}
  \horstName{cwrap}
  \horstParVar{m}{\horstTypeint}
  \horstParVar{n}{\horstTypeint}
  \horstVar{x}{\horstTypeBVLXIV}
  \horstReturnType{\horstTypeBVLXIV}
  \horstBody{\horstOpAppbvextract{\horstParVarm,\horstParVarn}{\horstVarx}}
\end{horstOperation}
\begin{horstOperation}{signed}
  \horstName{signed}
  \horstParVar{bw}{\horstTypeint}
  \horstVar{v}{\horstTypeValue}
  \horstReturnType{\horstTypeValue}
  \horstBody{\horstMatchExp{{{{x}},\horstMATCHIF{\horstEQ{\horstVarv}{\horstConstructorAppVal{\horstVarx}}},\horstConstructorAppVal{\horstOpAppcsigned{\horstParVarbw}{\horstVarx}}},{{},\horstOTHERWISE,\horstOpAppfreeValOrTop{}{}}}}
\end{horstOperation}
\begin{horstOperation}{unsigned}
  \horstName{unsigned}
  \horstParVar{bw}{\horstTypeint}
  \horstVar{v}{\horstTypeValue}
  \horstReturnType{\horstTypeValue}
  \horstBody{\horstMatchExp{{{{x}},\horstMATCHIF{\horstEQ{\horstVarv}{\horstConstructorAppVal{\horstVarx}}},\horstConstructorAppVal{\horstOpAppcunsigned{\horstParVarbw}{\horstVarx}}},{{},\horstOTHERWISE,\horstOpAppfreeValOrTop{}{}}}}
\end{horstOperation}
\begin{horstOperation}{extendu}
  \horstName{extendu}
  \horstParVar{m}{\horstTypeint}
  \horstParVar{bw}{\horstTypeint}
  \horstVar{i}{\horstTypeValue}
  \horstReturnType{\horstTypeValue}
  \horstBody{\horstMatchExp{{{{x}},\horstMATCHIF{\horstEQ{\horstVari}{\horstConstructorAppVal{\horstVarx}}},\horstConstructorAppVal{\horstOpAppcextendu{\horstParVarm,\horstParVarbw}{\horstVarx}}},{{},\horstOTHERWISE,\horstOpAppfreeValOrTop{}{}}}}
\end{horstOperation}
\begin{horstOperation}{extends}
  \horstName{extends}
  \horstParVar{m}{\horstTypeint}
  \horstParVar{bw}{\horstTypeint}
  \horstVar{i}{\horstTypeValue}
  \horstReturnType{\horstTypeValue}
  \horstBody{\horstMatchExp{{{{x}},\horstMATCHIF{\horstEQ{\horstVari}{\horstConstructorAppVal{\horstVarx}}},\horstConstructorAppVal{\horstOpAppcextends{\horstParVarm,\horstParVarbw}{\horstVarx}}},{{},\horstOTHERWISE,\horstOpAppfreeValOrTop{}{}}}}
\end{horstOperation}
\begin{horstOperation}{wrap}
  \horstName{wrap}
  \horstParVar{m}{\horstTypeint}
  \horstParVar{bw}{\horstTypeint}
  \horstVar{i}{\horstTypeValue}
  \horstReturnType{\horstTypeValue}
  \horstBody{\horstMatchExp{{{{x}},\horstMATCHIF{\horstEQ{\horstVari}{\horstConstructorAppVal{\horstVarx}}},\horstConstructorAppVal{\horstOpAppcwrap{\horstParVarm,\horstParVarbw}{\horstVarx}}},{{},\horstOTHERWISE,\horstOpAppfreeValOrTop{}{}}}}
\end{horstOperation}
\begin{horstOperation}{truncu}
  \horstName{truncu}
  \horstParVar{m}{\horstTypeint}
  \horstParVar{n}{\horstTypeint}
  \horstVar{v}{\horstTypeValue}
  \horstReturnType{\horstTypeValue}
  \horstBody{\horstOpAppfreeValOrTop{}{}}
\end{horstOperation}
\begin{horstOperation}{truncs}
  \horstName{truncs}
  \horstParVar{m}{\horstTypeint}
  \horstParVar{n}{\horstTypeint}
  \horstVar{v}{\horstTypeValue}
  \horstReturnType{\horstTypeValue}
  \horstBody{\horstOpAppfreeValOrTop{}{}}
\end{horstOperation}
\begin{horstOperation}{promote}
  \horstName{promote}
  \horstParVar{m}{\horstTypeint}
  \horstParVar{n}{\horstTypeint}
  \horstVar{v}{\horstTypeValue}
  \horstReturnType{\horstTypeValue}
  \horstBody{\horstOpAppfreeValOrTop{}{}}
\end{horstOperation}
\begin{horstOperation}{demote}
  \horstName{demote}
  \horstParVar{m}{\horstTypeint}
  \horstParVar{n}{\horstTypeint}
  \horstVar{v}{\horstTypeValue}
  \horstReturnType{\horstTypeValue}
  \horstBody{\horstOpAppfreeValOrTop{}{}}
\end{horstOperation}
\begin{horstOperation}{convertu}
  \horstName{convertu}
  \horstParVar{m}{\horstTypeint}
  \horstParVar{n}{\horstTypeint}
  \horstVar{v}{\horstTypeValue}
  \horstReturnType{\horstTypeValue}
  \horstBody{\horstOpAppfreeValOrTop{}{}}
\end{horstOperation}
\begin{horstOperation}{converts}
  \horstName{converts}
  \horstParVar{m}{\horstTypeint}
  \horstParVar{n}{\horstTypeint}
  \horstVar{v}{\horstTypeValue}
  \horstReturnType{\horstTypeValue}
  \horstBody{\horstOpAppfreeValOrTop{}{}}
\end{horstOperation}
\begin{horstOperation}{reinterpret}
  \horstName{reinterpret}
  \horstParVar{tI}{\horstTypeint}
  \horstParVar{tII}{\horstTypeint}
  \horstVar{v}{\horstTypeValue}
  \horstReturnType{\horstTypeValue}
  \horstBody{\horstOpAppfreeValOrTop{}{}}
\end{horstOperation}
\begin{horstOperation}{bvneg}
  \horstName{bvneg}
  \horstParVar{bw}{\horstTypeint}
  \horstVar{iI}{\horstTypeBVLXIV}
  \horstReturnType{\horstTypeBVLXIV}
\end{horstOperation}
\begin{horstOperation}{bvadd}
  \horstName{bvadd}
  \horstParVar{bw}{\horstTypeint}
  \horstVar{iI}{\horstTypeBVLXIV}
  \horstVar{iII}{\horstTypeBVLXIV}
  \horstReturnType{\horstTypeBVLXIV}
\end{horstOperation}
\begin{horstOperation}{bvsub}
  \horstName{bvsub}
  \horstParVar{bw}{\horstTypeint}
  \horstVar{iI}{\horstTypeBVLXIV}
  \horstVar{iII}{\horstTypeBVLXIV}
  \horstReturnType{\horstTypeBVLXIV}
\end{horstOperation}
\begin{horstOperation}{bvmul}
  \horstName{bvmul}
  \horstParVar{bw}{\horstTypeint}
  \horstVar{iI}{\horstTypeBVLXIV}
  \horstVar{iII}{\horstTypeBVLXIV}
  \horstReturnType{\horstTypeBVLXIV}
\end{horstOperation}
\begin{horstOperation}{bvudiv}
  \horstName{bvudiv}
  \horstParVar{bw}{\horstTypeint}
  \horstVar{iI}{\horstTypeBVLXIV}
  \horstVar{iII}{\horstTypeBVLXIV}
  \horstReturnType{\horstTypeBVLXIV}
\end{horstOperation}
\begin{horstOperation}{bvurem}
  \horstName{bvurem}
  \horstParVar{bw}{\horstTypeint}
  \horstVar{iI}{\horstTypeBVLXIV}
  \horstVar{iII}{\horstTypeBVLXIV}
  \horstReturnType{\horstTypeBVLXIV}
\end{horstOperation}
\begin{horstOperation}{bvand}
  \horstName{bvand}
  \horstParVar{bw}{\horstTypeint}
  \horstVar{iI}{\horstTypeBVLXIV}
  \horstVar{iII}{\horstTypeBVLXIV}
  \horstReturnType{\horstTypeBVLXIV}
\end{horstOperation}
\begin{horstOperation}{bvor}
  \horstName{bvor}
  \horstParVar{bw}{\horstTypeint}
  \horstVar{iI}{\horstTypeBVLXIV}
  \horstVar{iII}{\horstTypeBVLXIV}
  \horstReturnType{\horstTypeBVLXIV}
\end{horstOperation}
\begin{horstOperation}{bvxor}
  \horstName{bvxor}
  \horstParVar{bw}{\horstTypeint}
  \horstVar{iI}{\horstTypeBVLXIV}
  \horstVar{iII}{\horstTypeBVLXIV}
  \horstReturnType{\horstTypeBVLXIV}
\end{horstOperation}
\begin{horstOperation}{bvshl}
  \horstName{bvshl}
  \horstParVar{bw}{\horstTypeint}
  \horstVar{iI}{\horstTypeBVLXIV}
  \horstVar{iII}{\horstTypeBVLXIV}
  \horstReturnType{\horstTypeBVLXIV}
\end{horstOperation}
\begin{horstOperation}{bvlshr}
  \horstName{bvlshr}
  \horstParVar{bw}{\horstTypeint}
  \horstVar{iI}{\horstTypeBVLXIV}
  \horstVar{iII}{\horstTypeBVLXIV}
  \horstReturnType{\horstTypeBVLXIV}
\end{horstOperation}
\begin{horstOperation}{bvashr}
  \horstName{bvashr}
  \horstParVar{bw}{\horstTypeint}
  \horstVar{iI}{\horstTypeBVLXIV}
  \horstVar{iII}{\horstTypeBVLXIV}
  \horstReturnType{\horstTypeBVLXIV}
\end{horstOperation}
\begin{horstOperation}{bvult}
  \horstName{bvult}
  \horstParVar{bw}{\horstTypeint}
  \horstVar{iI}{\horstTypeBVLXIV}
  \horstVar{iII}{\horstTypeBVLXIV}
  \horstReturnType{\horstTypebool}
\end{horstOperation}
\begin{horstOperation}{bvugt}
  \horstName{bvugt}
  \horstParVar{bw}{\horstTypeint}
  \horstVar{iI}{\horstTypeBVLXIV}
  \horstVar{iII}{\horstTypeBVLXIV}
  \horstReturnType{\horstTypebool}
\end{horstOperation}
\begin{horstOperation}{bvule}
  \horstName{bvule}
  \horstParVar{bw}{\horstTypeint}
  \horstVar{iI}{\horstTypeBVLXIV}
  \horstVar{iII}{\horstTypeBVLXIV}
  \horstReturnType{\horstTypebool}
\end{horstOperation}
\begin{horstOperation}{bvuge}
  \horstName{bvuge}
  \horstParVar{bw}{\horstTypeint}
  \horstVar{iI}{\horstTypeBVLXIV}
  \horstVar{iII}{\horstTypeBVLXIV}
  \horstReturnType{\horstTypebool}
\end{horstOperation}
\begin{horstOperation}{bvslt}
  \horstName{bvslt}
  \horstParVar{bw}{\horstTypeint}
  \horstVar{iI}{\horstTypeBVLXIV}
  \horstVar{iII}{\horstTypeBVLXIV}
  \horstReturnType{\horstTypebool}
\end{horstOperation}
\begin{horstOperation}{bvsgt}
  \horstName{bvsgt}
  \horstParVar{bw}{\horstTypeint}
  \horstVar{iI}{\horstTypeBVLXIV}
  \horstVar{iII}{\horstTypeBVLXIV}
  \horstReturnType{\horstTypebool}
\end{horstOperation}
\begin{horstOperation}{bvsle}
  \horstName{bvsle}
  \horstParVar{bw}{\horstTypeint}
  \horstVar{iI}{\horstTypeBVLXIV}
  \horstVar{iII}{\horstTypeBVLXIV}
  \horstReturnType{\horstTypebool}
\end{horstOperation}
\begin{horstOperation}{bvsge}
  \horstName{bvsge}
  \horstParVar{bw}{\horstTypeint}
  \horstVar{iI}{\horstTypeBVLXIV}
  \horstVar{iII}{\horstTypeBVLXIV}
  \horstReturnType{\horstTypebool}
\end{horstOperation}
\begin{horstOperation}{ciadd}
  \horstName{ciadd}
  \horstParVar{bw}{\horstTypeint}
  \horstVar{x}{\horstTypeBVLXIV}
  \horstVar{y}{\horstTypeBVLXIV}
  \horstReturnType{\horstTypeBVLXIV}
  \horstBody{\horstOpAppbvadd{\horstParVarbw}{\horstVarx,\horstVary}}
\end{horstOperation}
\begin{horstOperation}{cisub}
  \horstName{cisub}
  \horstParVar{bw}{\horstTypeint}
  \horstVar{x}{\horstTypeBVLXIV}
  \horstVar{y}{\horstTypeBVLXIV}
  \horstReturnType{\horstTypeBVLXIV}
  \horstBody{\horstOpAppbvsub{\horstParVarbw}{\horstVarx,\horstVary}}
\end{horstOperation}
\begin{horstOperation}{cimul}
  \horstName{cimul}
  \horstParVar{bw}{\horstTypeint}
  \horstVar{x}{\horstTypeBVLXIV}
  \horstVar{y}{\horstTypeBVLXIV}
  \horstReturnType{\horstTypeBVLXIV}
  \horstBody{\horstOpAppbvmul{\horstParVarbw}{\horstVarx,\horstVary}}
\end{horstOperation}
\begin{horstOperation}{cidivs}
  \horstName{cidivs}
  \horstParVar{bw}{\horstTypeint}
  \horstVar{s}{\horstTypeBVLXIV}
  \horstVar{t}{\horstTypeBVLXIV}
  \horstReturnType{\horstTypeBVLXIV}
  \horstBody{\horstMatchExp{{{},\horstMATCHIF{\horstAND{\horstEQ{\horstOpAppbvslt{\horstParVarbw}{\horstVars,\horstOpAppcintIIbv{\horstParVarbw,0}{}}}{\horstFalse}}{\horstEQ{\horstOpAppbvslt{\horstParVarbw}{\horstVart,\horstOpAppcintIIbv{\horstParVarbw,0}{}}}{\horstFalse}}},\horstOpAppbvudiv{\horstParVarbw}{\horstVars,\horstVart}},{{},\horstMATCHIF{\horstAND{\horstEQ{\horstOpAppbvslt{\horstParVarbw}{\horstVars,\horstOpAppcintIIbv{\horstParVarbw,0}{}}}{\horstTrue}}{\horstEQ{\horstOpAppbvslt{\horstParVarbw}{\horstVart,\horstOpAppcintIIbv{\horstParVarbw,0}{}}}{\horstFalse}}},\horstOpAppbvneg{\horstParVarbw}{\horstOpAppbvudiv{\horstParVarbw}{\horstOpAppbvneg{\horstParVarbw}{\horstVars},\horstVart}}},{{},\horstMATCHIF{\horstAND{\horstEQ{\horstOpAppbvslt{\horstParVarbw}{\horstVars,\horstOpAppcintIIbv{\horstParVarbw,0}{}}}{\horstFalse}}{\horstEQ{\horstOpAppbvslt{\horstParVarbw}{\horstVart,\horstOpAppcintIIbv{\horstParVarbw,0}{}}}{\horstTrue}}},\horstOpAppbvneg{\horstParVarbw}{\horstOpAppbvudiv{\horstParVarbw}{\horstVars,\horstOpAppbvneg{\horstParVarbw}{\horstVart}}}},{{},\horstOTHERWISE,\horstOpAppbvudiv{\horstParVarbw}{\horstOpAppbvneg{\horstParVarbw}{\horstVars},\horstOpAppbvneg{\horstParVarbw}{\horstVart}}}}}
\end{horstOperation}
\begin{horstOperation}{cidivu}
  \horstName{cidivu}
  \horstParVar{bw}{\horstTypeint}
  \horstVar{x}{\horstTypeBVLXIV}
  \horstVar{y}{\horstTypeBVLXIV}
  \horstReturnType{\horstTypeBVLXIV}
  \horstBody{\horstOpAppbvudiv{\horstParVarbw}{\horstVarx,\horstVary}}
\end{horstOperation}
\begin{horstOperation}{ciremu}
  \horstName{ciremu}
  \horstParVar{bw}{\horstTypeint}
  \horstVar{x}{\horstTypeBVLXIV}
  \horstVar{y}{\horstTypeBVLXIV}
  \horstReturnType{\horstTypeBVLXIV}
  \horstBody{\horstOpAppbvurem{\horstParVarbw}{\horstVarx,\horstVary}}
\end{horstOperation}
\begin{horstOperation}{cirems}
  \horstName{cirems}
  \horstParVar{bw}{\horstTypeint}
  \horstVar{s}{\horstTypeBVLXIV}
  \horstVar{t}{\horstTypeBVLXIV}
  \horstReturnType{\horstTypeBVLXIV}
  \horstBody{\horstMatchExp{{{},\horstMATCHIF{\horstAND{\horstEQ{\horstOpAppbvslt{\horstParVarbw}{\horstVars,\horstOpAppcintIIbv{\horstParVarbw,0}{}}}{\horstFalse}}{\horstEQ{\horstOpAppbvslt{\horstParVarbw}{\horstVart,\horstOpAppcintIIbv{\horstParVarbw,0}{}}}{\horstFalse}}},\horstOpAppbvurem{\horstParVarbw}{\horstVars,\horstVart}},{{},\horstMATCHIF{\horstAND{\horstEQ{\horstOpAppbvslt{\horstParVarbw}{\horstVars,\horstOpAppcintIIbv{\horstParVarbw,0}{}}}{\horstTrue}}{\horstEQ{\horstOpAppbvslt{\horstParVarbw}{\horstVart,\horstOpAppcintIIbv{\horstParVarbw,0}{}}}{\horstFalse}}},\horstOpAppbvneg{\horstParVarbw}{\horstOpAppbvurem{\horstParVarbw}{\horstOpAppbvneg{\horstParVarbw}{\horstVars},\horstVart}}},{{},\horstMATCHIF{\horstAND{\horstEQ{\horstOpAppbvslt{\horstParVarbw}{\horstVars,\horstOpAppcintIIbv{\horstParVarbw,0}{}}}{\horstFalse}}{\horstEQ{\horstOpAppbvslt{\horstParVarbw}{\horstVart,\horstOpAppcintIIbv{\horstParVarbw,0}{}}}{\horstTrue}}},\horstOpAppbvurem{\horstParVarbw}{\horstVars,\horstOpAppbvneg{\horstParVarbw}{\horstVart}}},{{},\horstOTHERWISE,\horstOpAppbvneg{\horstParVarbw}{\horstOpAppbvurem{\horstParVarbw}{\horstOpAppbvneg{\horstParVarbw}{\horstVars},\horstOpAppbvneg{\horstParVarbw}{\horstVart}}}}}}
\end{horstOperation}
\begin{horstOperation}{ciand}
  \horstName{ciand}
  \horstParVar{bw}{\horstTypeint}
  \horstVar{x}{\horstTypeBVLXIV}
  \horstVar{y}{\horstTypeBVLXIV}
  \horstReturnType{\horstTypeBVLXIV}
  \horstBody{\horstOpAppbvand{\horstParVarbw}{\horstVarx,\horstVary}}
\end{horstOperation}
\begin{horstOperation}{cior}
  \horstName{cior}
  \horstParVar{bw}{\horstTypeint}
  \horstVar{x}{\horstTypeBVLXIV}
  \horstVar{y}{\horstTypeBVLXIV}
  \horstReturnType{\horstTypeBVLXIV}
  \horstBody{\horstOpAppbvor{\horstParVarbw}{\horstVarx,\horstVary}}
\end{horstOperation}
\begin{horstOperation}{cixor}
  \horstName{cixor}
  \horstParVar{bw}{\horstTypeint}
  \horstVar{x}{\horstTypeBVLXIV}
  \horstVar{y}{\horstTypeBVLXIV}
  \horstReturnType{\horstTypeBVLXIV}
  \horstBody{\horstOpAppbvxor{\horstParVarbw}{\horstVarx,\horstVary}}
\end{horstOperation}
\begin{horstOperation}{cishl}
  \horstName{cishl}
  \horstParVar{bw}{\horstTypeint}
  \horstVar{x}{\horstTypeBVLXIV}
  \horstVar{s}{\horstTypeBVLXIV}
  \horstReturnType{\horstTypeBVLXIV}
  \horstBody{\horstOpAppbvshl{\horstParVarbw}{\horstVarx,\horstOpAppbvurem{\horstParVarbw}{\horstVars,\horstOpAppcintIIbv{\horstParVarbw,\horstParVarbw}{}}}}
\end{horstOperation}
\begin{horstOperation}{cishru}
  \horstName{cishru}
  \horstParVar{bw}{\horstTypeint}
  \horstVar{x}{\horstTypeBVLXIV}
  \horstVar{s}{\horstTypeBVLXIV}
  \horstReturnType{\horstTypeBVLXIV}
  \horstBody{\horstOpAppbvlshr{\horstParVarbw}{\horstVarx,\horstOpAppbvurem{\horstParVarbw}{\horstVars,\horstOpAppcintIIbv{\horstParVarbw,\horstParVarbw}{}}}}
\end{horstOperation}
\begin{horstOperation}{cilshr}
  \horstName{cilshr}
  \horstParVar{bw}{\horstTypeint}
  \horstVar{x}{\horstTypeBVLXIV}
  \horstVar{s}{\horstTypeBVLXIV}
  \horstReturnType{\horstTypeBVLXIV}
  \horstBody{\horstOpAppbvlshr{\horstParVarbw}{\horstVarx,\horstVars}}
\end{horstOperation}
\begin{horstOperation}{cishrs}
  \horstName{cishrs}
  \horstParVar{bw}{\horstTypeint}
  \horstVar{x}{\horstTypeBVLXIV}
  \horstVar{s}{\horstTypeBVLXIV}
  \horstReturnType{\horstTypeBVLXIV}
  \horstBody{\horstOpAppbvashr{\horstParVarbw}{\horstVarx,\horstOpAppbvurem{\horstParVarbw}{\horstVars,\horstOpAppcintIIbv{\horstParVarbw,\horstParVarbw}{}}}}
\end{horstOperation}
\begin{horstOperation}{cirotl}
  \horstName{cirotl}
  \horstParVar{bw}{\horstTypeint}
  \horstVar{x}{\horstTypeBVLXIV}
  \horstVar{r}{\horstTypeBVLXIV}
  \horstReturnType{\horstTypeBVLXIV}
  \horstBody{\horstOpAppbvor{\horstParVarbw}{\horstOpAppbvshl{\horstParVarbw}{\horstVarx,\horstOpAppbvurem{\horstParVarbw}{\horstVarr,\horstOpAppcintIIbv{\horstParVarbw,\horstParVarbw}{}}},\horstOpAppbvlshr{\horstParVarbw}{\horstVarx,\horstOpAppbvsub{\horstParVarbw}{\horstOpAppcintIIbv{\horstParVarbw,\horstParVarbw}{},\horstOpAppbvurem{\horstParVarbw}{\horstVarr,\horstOpAppcintIIbv{\horstParVarbw,\horstParVarbw}{}}}}}}
\end{horstOperation}
\begin{horstOperation}{cirotr}
  \horstName{cirotr}
  \horstParVar{bw}{\horstTypeint}
  \horstVar{x}{\horstTypeBVLXIV}
  \horstVar{r}{\horstTypeBVLXIV}
  \horstReturnType{\horstTypeBVLXIV}
  \horstBody{\horstOpAppbvor{\horstParVarbw}{\horstOpAppbvlshr{\horstParVarbw}{\horstVarx,\horstOpAppbvurem{\horstParVarbw}{\horstVarr,\horstOpAppcintIIbv{\horstParVarbw,\horstParVarbw}{}}},\horstOpAppbvshl{\horstParVarbw}{\horstVarx,\horstOpAppbvsub{\horstParVarbw}{\horstOpAppcintIIbv{\horstParVarbw,\horstParVarbw}{},\horstOpAppbvurem{\horstParVarbw}{\horstVarr,\horstOpAppcintIIbv{\horstParVarbw,\horstParVarbw}{}}}}}}
\end{horstOperation}
\begin{horstOperation}{cieq}
  \horstName{cieq}
  \horstParVar{bw}{\horstTypeint}
  \horstVar{iI}{\horstTypeBVLXIV}
  \horstVar{iII}{\horstTypeBVLXIV}
  \horstReturnType{\horstTypebool}
  \horstBody{\horstEQ{\horstVariI}{\horstVariII}}
\end{horstOperation}
\begin{horstOperation}{cine}
  \horstName{cine}
  \horstParVar{bw}{\horstTypeint}
  \horstVar{x}{\horstTypeBVLXIV}
  \horstVar{y}{\horstTypeBVLXIV}
  \horstReturnType{\horstTypebool}
  \horstBody{\horstNEQ{\horstVarx}{\horstVary}}
\end{horstOperation}
\begin{horstOperation}{ciltu}
  \horstName{ciltu}
  \horstParVar{bw}{\horstTypeint}
  \horstVar{x}{\horstTypeBVLXIV}
  \horstVar{y}{\horstTypeBVLXIV}
  \horstReturnType{\horstTypebool}
  \horstBody{\horstOpAppbvult{\horstParVarbw}{\horstVarx,\horstVary}}
\end{horstOperation}
\begin{horstOperation}{cilts}
  \horstName{cilts}
  \horstParVar{bw}{\horstTypeint}
  \horstVar{x}{\horstTypeBVLXIV}
  \horstVar{y}{\horstTypeBVLXIV}
  \horstReturnType{\horstTypebool}
  \horstBody{\horstOpAppbvslt{\horstParVarbw}{\horstVarx,\horstVary}}
\end{horstOperation}
\begin{horstOperation}{cigtu}
  \horstName{cigtu}
  \horstParVar{bw}{\horstTypeint}
  \horstVar{x}{\horstTypeBVLXIV}
  \horstVar{y}{\horstTypeBVLXIV}
  \horstReturnType{\horstTypebool}
  \horstBody{\horstOpAppbvugt{\horstParVarbw}{\horstVarx,\horstVary}}
\end{horstOperation}
\begin{horstOperation}{cigts}
  \horstName{cigts}
  \horstParVar{bw}{\horstTypeint}
  \horstVar{x}{\horstTypeBVLXIV}
  \horstVar{y}{\horstTypeBVLXIV}
  \horstReturnType{\horstTypebool}
  \horstBody{\horstOpAppbvsgt{\horstParVarbw}{\horstVarx,\horstVary}}
\end{horstOperation}
\begin{horstOperation}{cileu}
  \horstName{cileu}
  \horstParVar{bw}{\horstTypeint}
  \horstVar{x}{\horstTypeBVLXIV}
  \horstVar{y}{\horstTypeBVLXIV}
  \horstReturnType{\horstTypebool}
  \horstBody{\horstOpAppbvule{\horstParVarbw}{\horstVarx,\horstVary}}
\end{horstOperation}
\begin{horstOperation}{ciles}
  \horstName{ciles}
  \horstParVar{bw}{\horstTypeint}
  \horstVar{x}{\horstTypeBVLXIV}
  \horstVar{y}{\horstTypeBVLXIV}
  \horstReturnType{\horstTypebool}
  \horstBody{\horstOpAppbvsle{\horstParVarbw}{\horstVarx,\horstVary}}
\end{horstOperation}
\begin{horstOperation}{cigeu}
  \horstName{cigeu}
  \horstParVar{bw}{\horstTypeint}
  \horstVar{x}{\horstTypeBVLXIV}
  \horstVar{y}{\horstTypeBVLXIV}
  \horstReturnType{\horstTypebool}
  \horstBody{\horstOpAppbvuge{\horstParVarbw}{\horstVarx,\horstVary}}
\end{horstOperation}
\begin{horstOperation}{ciges}
  \horstName{ciges}
  \horstParVar{bw}{\horstTypeint}
  \horstVar{x}{\horstTypeBVLXIV}
  \horstVar{y}{\horstTypeBVLXIV}
  \horstReturnType{\horstTypebool}
  \horstBody{\horstOpAppbvsge{\horstParVarbw}{\horstVarx,\horstVary}}
\end{horstOperation}
\begin{horstOperation}{ciclz}
  \horstName{ciclz}
  \horstParVar{bw}{\horstTypeint}
  \horstVar{x}{\horstTypeBVLXIV}
  \horstReturnType{\horstTypeBVLXIV}
  \horstBody{\horstCustomSumExp{\horstSelectorFunctionAppinterval{\horstParVarp}{0,\horstParVarbw}}{acc}{\horstCOND{\horstOpAppbvult{\horstParVarbw}{\horstVarx,\horstOpAppcintIIbv{\horstParVarbw,\horstOpApppow{\horstSUB{\horstSUB{\horstParVarbw}{1}}{\horstParVarp}}{2}}{}}}{\horstOpAppcintIIbv{\horstParVarbw,\horstADD{\horstParVarp}{1}}{}}{\horstVaracc}}{\horstOpAppcintIIbv{\horstParVarbw,0}{}}{{p}}}
\end{horstOperation}
\begin{horstOperation}{cictz}
  \horstName{cictz}
  \horstParVar{bw}{\horstTypeint}
  \horstVar{x}{\horstTypeBVLXIV}
  \horstReturnType{\horstTypeBVLXIV}
  \horstBody{\horstCustomSumExp{\horstSelectorFunctionAppinterval{\horstParVarp}{0,\horstParVarbw}}{res}{\horstCOND{\horstNEQ{\horstOpAppbvurem{\horstParVarbw}{\horstVarx,\horstOpAppcintIIbv{\horstParVarbw,\horstOpApppow{\horstSUB{\horstParVarbw}{\horstParVarp}}{2}}{}}}{\horstOpAppcintIIbv{\horstParVarbw,0}{}}}{\horstOpAppcintIIbv{\horstParVarbw,\horstSUB{\horstSUB{\horstParVarbw}{\horstParVarp}}{1}}{}}{\horstVarres}}{\horstOpAppcintIIbv{\horstParVarbw,\horstParVarbw}{}}{{p}}}
\end{horstOperation}
\begin{horstOperation}{cipopcnt}
  \horstName{cipopcnt}
  \horstParVar{bw}{\horstTypeint}
  \horstVar{x}{\horstTypeBVLXIV}
  \horstReturnType{\horstTypeBVLXIV}
  \horstBody{\horstCustomSumExp{\horstSelectorFunctionAppinterval{\horstParVarp}{0,\horstParVarbw}}{y}{\horstOpAppbvadd{\horstParVarbw}{\horstCOND{\horstEQ{\horstOpAppbvand{\horstParVarbw}{\horstOpAppbvlshr{\horstParVarbw}{\horstVarx,\horstOpAppcintIIbv{\horstParVarbw,\horstParVarp}{}},\horstOpAppcintIIbv{\horstParVarbw,1}{}}}{\horstOpAppcintIIbv{\horstParVarbw,1}{}}}{\horstOpAppcintIIbv{\horstParVarbw,1}{}}{\horstOpAppcintIIbv{\horstParVarbw,0}{}},\horstVary}}{\horstOpAppcintIIbv{\horstParVarbw,0}{}}{{p}}}
\end{horstOperation}
\begin{horstOperation}{cieqz}
  \horstName{cieqz}
  \horstParVar{bw}{\horstTypeint}
  \horstVar{x}{\horstTypeBVLXIV}
  \horstReturnType{\horstTypebool}
  \horstBody{\horstEQ{\horstVarx}{\horstOpAppcintIIbv{64,0}{}}}
\end{horstOperation}
\begin{horstOperation}{cinot}
  \horstName{cinot}
  \horstParVar{bw}{\horstTypeint}
  \horstVar{x}{\horstTypeBVLXIV}
  \horstReturnType{\horstTypeBVLXIV}
  \horstBody{\horstOpAppbvsub{64}{\horstOpAppbvneg{64}{\horstVarx},\horstOpAppcintIIbv{64,1}{}}}
\end{horstOperation}
\begin{horstOperation}{abseq}
  \horstName{abseq}
  \horstVar{a}{\horstTypeValue}
  \horstVar{b}{\horstTypeValue}
  \horstReturnType{\horstTypebool}
  \horstBody{\horstMatchExp{{{{x}{y}},\horstMATCHIF{\horstAND{\horstEQ{\horstVara}{\horstConstructorAppVal{\horstVarx}}}{\horstEQ{\horstVarb}{\horstConstructorAppVal{\horstVary}}}},\horstEQ{\horstVarx}{\horstVary}},{{},\horstOTHERWISE,\horstTrue}}}
\end{horstOperation}
\begin{horstOperation}{absneq}
  \horstName{absneq}
  \horstVar{a}{\horstTypeValue}
  \horstVar{b}{\horstTypeValue}
  \horstReturnType{\horstTypebool}
  \horstBody{\horstMatchExp{{{{x}{y}},\horstMATCHIF{\horstAND{\horstEQ{\horstVara}{\horstConstructorAppVal{\horstVarx}}}{\horstEQ{\horstVarb}{\horstConstructorAppVal{\horstVary}}}},\horstNEQ{\horstVarx}{\horstVary}},{{},\horstOTHERWISE,\horstTrue}}}
\end{horstOperation}
\begin{horstOperation}{absgt}
  \horstName{absgt}
  \horstVar{a}{\horstTypeValue}
  \horstVar{b}{\horstTypeValue}
  \horstReturnType{\horstTypebool}
  \horstBody{\horstMatchExp{{{{x}{y}},\horstMATCHIF{\horstAND{\horstEQ{\horstVara}{\horstConstructorAppVal{\horstVarx}}}{\horstEQ{\horstVarb}{\horstConstructorAppVal{\horstVary}}}},\horstOpAppbvugt{32}{\horstVarx,\horstVary}},{{},\horstOTHERWISE,\horstTrue}}}
\end{horstOperation}
\begin{horstOperation}{absge}
  \horstName{absge}
  \horstVar{a}{\horstTypeValue}
  \horstVar{b}{\horstTypeValue}
  \horstReturnType{\horstTypebool}
  \horstBody{\horstMatchExp{{{{x}{y}},\horstMATCHIF{\horstAND{\horstEQ{\horstVara}{\horstConstructorAppVal{\horstVarx}}}{\horstEQ{\horstVarb}{\horstConstructorAppVal{\horstVary}}}},\horstOpAppbvuge{32}{\horstVarx,\horstVary}},{{},\horstOTHERWISE,\horstTrue}}}
\end{horstOperation}
\begin{horstOperation}{abslt}
  \horstName{abslt}
  \horstVar{a}{\horstTypeValue}
  \horstVar{b}{\horstTypeValue}
  \horstReturnType{\horstTypebool}
  \horstBody{\horstMatchExp{{{{x}{y}},\horstMATCHIF{\horstAND{\horstEQ{\horstVara}{\horstConstructorAppVal{\horstVarx}}}{\horstEQ{\horstVarb}{\horstConstructorAppVal{\horstVary}}}},\horstOpAppbvult{32}{\horstVarx,\horstVary}},{{},\horstOTHERWISE,\horstTrue}}}
\end{horstOperation}
\begin{horstOperation}{absle}
  \horstName{absle}
  \horstVar{a}{\horstTypeValue}
  \horstVar{b}{\horstTypeValue}
  \horstReturnType{\horstTypebool}
  \horstBody{\horstMatchExp{{{{x}{y}},\horstMATCHIF{\horstAND{\horstEQ{\horstVara}{\horstConstructorAppVal{\horstVarx}}}{\horstEQ{\horstVarb}{\horstConstructorAppVal{\horstVary}}}},\horstOpAppbvule{32}{\horstVarx,\horstVary}},{{},\horstOTHERWISE,\horstTrue}}}
\end{horstOperation}
\begin{horstOperation}{iadd}
  \horstName{iadd}
  \horstParVar{bw}{\horstTypeint}
  \horstVar{iI}{\horstTypeValue}
  \horstVar{iII}{\horstTypeValue}
  \horstReturnType{\horstTypeValue}
  \horstBody{\horstMatchExp{{{{x}{y}},\horstMATCHIF{\horstAND{\horstEQ{\horstVariI}{\horstConstructorAppVal{\horstVarx}}}{\horstEQ{\horstVariII}{\horstConstructorAppVal{\horstVary}}}},\horstConstructorAppVal{\horstOpAppciadd{\horstParVarbw}{\horstVarx,\horstVary}}},{{},\horstOTHERWISE,\horstOpAppfreeValOrTop{}{}}}}
\end{horstOperation}
\begin{horstOperation}{isub}
  \horstName{isub}
  \horstParVar{bw}{\horstTypeint}
  \horstVar{iI}{\horstTypeValue}
  \horstVar{iII}{\horstTypeValue}
  \horstReturnType{\horstTypeValue}
  \horstBody{\horstMatchExp{{{{x}{y}},\horstMATCHIF{\horstAND{\horstEQ{\horstVariI}{\horstConstructorAppVal{\horstVarx}}}{\horstEQ{\horstVariII}{\horstConstructorAppVal{\horstVary}}}},\horstConstructorAppVal{\horstOpAppcisub{\horstParVarbw}{\horstVarx,\horstVary}}},{{},\horstOTHERWISE,\horstOpAppfreeValOrTop{}{}}}}
\end{horstOperation}
\begin{horstOperation}{imul}
  \horstName{imul}
  \horstParVar{bw}{\horstTypeint}
  \horstVar{iI}{\horstTypeValue}
  \horstVar{iII}{\horstTypeValue}
  \horstReturnType{\horstTypeValue}
  \horstBody{\horstMatchExp{{{{x}{y}},\horstMATCHIF{\horstAND{\horstEQ{\horstVariI}{\horstConstructorAppVal{\horstVarx}}}{\horstEQ{\horstVariII}{\horstConstructorAppVal{\horstVary}}}},\horstConstructorAppVal{\horstOpAppcimul{\horstParVarbw}{\horstVarx,\horstVary}}},{{},\horstOTHERWISE,\horstOpAppfreeValOrTop{}{}}}}
\end{horstOperation}
\begin{horstOperation}{idivuaux}
  \horstName{idivuaux}
  \horstParVar{bw}{\horstTypeint}
  \horstVar{iI}{\horstTypeValue}
  \horstVar{iII}{\horstTypeValue}
  \horstReturnType{\horstTypeValue}
  \horstBody{\horstMatchExp{{{{x}{y}},\horstMATCHIF{\horstAND{\horstEQ{\horstVariI}{\horstConstructorAppVal{\horstVarx}}}{\horstEQ{\horstVariII}{\horstConstructorAppVal{\horstVary}}}},\horstConstructorAppVal{\horstOpAppcidivu{\horstParVarbw}{\horstVarx,\horstVary}}},{{},\horstOTHERWISE,\horstOpAppfreeValOrTop{}{}}}}
\end{horstOperation}
\begin{horstOperation}{idivu}
  \horstName{idivu}
  \horstParVar{bw}{\horstTypeint}
  \horstVar{iI}{\horstTypeValue}
  \horstVar{iII}{\horstTypeValue}
  \horstReturnType{\horstTypeMaybeValue}
  \horstBody{\horstCOND{\horstOpAppabseq{}{\horstVariII,\horstOpAppmkConst{0}{}}}{\horstConstructorAppNothingV{}}{\horstConstructorAppJustV{\horstOpAppidivuaux{\horstParVarbw}{\horstVariI,\horstVariII}}}}
\end{horstOperation}
\begin{horstOperation}{idivs}
  \horstName{idivs}
  \horstParVar{bw}{\horstTypeint}
  \horstVar{iI}{\horstTypeValue}
  \horstVar{iII}{\horstTypeValue}
  \horstReturnType{\horstTypeMaybeValue}
  \horstBody{\horstCOND{\horstOpAppabseq{}{\horstVariII,\horstOpAppmkConst{0}{}}}{\horstConstructorAppNothingV{}}{\horstMatchExp{{{{x}{y}},\horstMATCHIF{\horstAND{\horstEQ{\horstVariI}{\horstConstructorAppVal{\horstVarx}}}{\horstEQ{\horstVariII}{\horstConstructorAppVal{\horstVary}}}},\horstCOND{\horstAND{\horstEQ{\horstVarx}{\horstOpAppcinj{\horstOpApppow{\horstSUB{\horstParVarbw}{1}}{2}}{}}}{\horstEQ{\horstVary}{\horstOpAppcinj{\horstSUB{\horstOpApppow{\horstParVarbw}{2}}{1}}{}}}}{\horstConstructorAppNothingV{}}{\horstConstructorAppJustV{\horstConstructorAppVal{\horstOpAppcidivs{\horstParVarbw}{\horstVarx,\horstVary}}}}},{{},\horstOTHERWISE,\horstConstructorAppJustV{\horstOpAppfreeValOrTop{}{}}}}}}
\end{horstOperation}
\begin{horstOperation}{iremu}
  \horstName{iremu}
  \horstParVar{bw}{\horstTypeint}
  \horstVar{iI}{\horstTypeValue}
  \horstVar{iII}{\horstTypeValue}
  \horstReturnType{\horstTypeMaybeValue}
  \horstBody{\horstCOND{\horstOpAppabseq{}{\horstVariII,\horstOpAppmkConst{0}{}}}{\horstConstructorAppNothingV{}}{\horstMatchExp{{{{x}{y}},\horstMATCHIF{\horstAND{\horstEQ{\horstVariI}{\horstConstructorAppVal{\horstVarx}}}{\horstEQ{\horstVariII}{\horstConstructorAppVal{\horstVary}}}},\horstConstructorAppJustV{\horstConstructorAppVal{\horstOpAppciremu{\horstParVarbw}{\horstVarx,\horstVary}}}},{{},\horstOTHERWISE,\horstConstructorAppJustV{\horstOpAppfreeValOrTop{}{}}}}}}
\end{horstOperation}
\begin{horstOperation}{irems}
  \horstName{irems}
  \horstParVar{bw}{\horstTypeint}
  \horstVar{iI}{\horstTypeValue}
  \horstVar{iII}{\horstTypeValue}
  \horstReturnType{\horstTypeMaybeValue}
  \horstBody{\horstCOND{\horstOpAppabseq{}{\horstVariII,\horstOpAppmkConst{0}{}}}{\horstConstructorAppNothingV{}}{\horstMatchExp{{{{jI}{jII}},\horstMATCHIF{\horstAND{\horstEQ{\horstVariI}{\horstConstructorAppVal{\horstVarjI}}}{\horstEQ{\horstVariII}{\horstConstructorAppVal{\horstVarjII}}}},\horstConstructorAppJustV{\horstConstructorAppVal{\horstOpAppcirems{\horstParVarbw}{\horstVarjI,\horstVarjII}}}},{{},\horstOTHERWISE,\horstConstructorAppJustV{\horstOpAppfreeValOrTop{}{}}}}}}
\end{horstOperation}
\begin{horstOperation}{iand}
  \horstName{iand}
  \horstParVar{bw}{\horstTypeint}
  \horstVar{iI}{\horstTypeValue}
  \horstVar{iII}{\horstTypeValue}
  \horstReturnType{\horstTypeValue}
  \horstBody{\horstMatchExp{{{{x}{y}},\horstMATCHIF{\horstAND{\horstEQ{\horstVariI}{\horstConstructorAppVal{\horstVarx}}}{\horstEQ{\horstVariII}{\horstConstructorAppVal{\horstVary}}}},\horstConstructorAppVal{\horstOpAppciand{\horstParVarbw}{\horstVarx,\horstVary}}},{{},\horstOTHERWISE,\horstOpAppfreeValOrTop{}{}}}}
\end{horstOperation}
\begin{horstOperation}{ior}
  \horstName{ior}
  \horstParVar{bw}{\horstTypeint}
  \horstVar{iI}{\horstTypeValue}
  \horstVar{iII}{\horstTypeValue}
  \horstReturnType{\horstTypeValue}
  \horstBody{\horstMatchExp{{{{x}{y}},\horstMATCHIF{\horstAND{\horstEQ{\horstVariI}{\horstConstructorAppVal{\horstVarx}}}{\horstEQ{\horstVariII}{\horstConstructorAppVal{\horstVary}}}},\horstConstructorAppVal{\horstOpAppcior{\horstParVarbw}{\horstVarx,\horstVary}}},{{},\horstOTHERWISE,\horstOpAppfreeValOrTop{}{}}}}
\end{horstOperation}
\begin{horstOperation}{ixor}
  \horstName{ixor}
  \horstParVar{bw}{\horstTypeint}
  \horstVar{iI}{\horstTypeValue}
  \horstVar{iII}{\horstTypeValue}
  \horstReturnType{\horstTypeValue}
  \horstBody{\horstMatchExp{{{{x}{y}},\horstMATCHIF{\horstAND{\horstEQ{\horstVariI}{\horstConstructorAppVal{\horstVarx}}}{\horstEQ{\horstVariII}{\horstConstructorAppVal{\horstVary}}}},\horstConstructorAppVal{\horstOpAppcixor{\horstParVarbw}{\horstVarx,\horstVary}}},{{},\horstOTHERWISE,\horstOpAppfreeValOrTop{}{}}}}
\end{horstOperation}
\begin{horstOperation}{ishl}
  \horstName{ishl}
  \horstParVar{bw}{\horstTypeint}
  \horstVar{iI}{\horstTypeValue}
  \horstVar{iII}{\horstTypeValue}
  \horstReturnType{\horstTypeValue}
  \horstBody{\horstMatchExp{{{{x}{y}},\horstMATCHIF{\horstAND{\horstEQ{\horstVariI}{\horstConstructorAppVal{\horstVarx}}}{\horstEQ{\horstVariII}{\horstConstructorAppVal{\horstVary}}}},\horstConstructorAppVal{\horstOpAppcishl{\horstParVarbw}{\horstVarx,\horstVary}}},{{},\horstOTHERWISE,\horstOpAppfreeValOrTop{}{}}}}
\end{horstOperation}
\begin{horstOperation}{ishru}
  \horstName{ishru}
  \horstParVar{bw}{\horstTypeint}
  \horstVar{iI}{\horstTypeValue}
  \horstVar{iII}{\horstTypeValue}
  \horstReturnType{\horstTypeValue}
  \horstBody{\horstMatchExp{{{{x}{y}},\horstMATCHIF{\horstAND{\horstEQ{\horstVariI}{\horstConstructorAppVal{\horstVarx}}}{\horstEQ{\horstVariII}{\horstConstructorAppVal{\horstVary}}}},\horstConstructorAppVal{\horstOpAppcishru{\horstParVarbw}{\horstVarx,\horstVary}}},{{},\horstOTHERWISE,\horstOpAppfreeValOrTop{}{}}}}
\end{horstOperation}
\begin{horstOperation}{ilshr}
  \horstName{ilshr}
  \horstParVar{bw}{\horstTypeint}
  \horstVar{iI}{\horstTypeValue}
  \horstVar{iII}{\horstTypeValue}
  \horstReturnType{\horstTypeValue}
  \horstBody{\horstMatchExp{{{{x}{y}},\horstMATCHIF{\horstAND{\horstEQ{\horstVariI}{\horstConstructorAppVal{\horstVarx}}}{\horstEQ{\horstVariII}{\horstConstructorAppVal{\horstVary}}}},\horstConstructorAppVal{\horstOpAppcilshr{\horstParVarbw}{\horstVarx,\horstVary}}},{{},\horstOTHERWISE,\horstOpAppfreeValOrTop{}{}}}}
\end{horstOperation}
\begin{horstOperation}{ishrs}
  \horstName{ishrs}
  \horstParVar{bw}{\horstTypeint}
  \horstVar{iI}{\horstTypeValue}
  \horstVar{iII}{\horstTypeValue}
  \horstReturnType{\horstTypeValue}
  \horstBody{\horstMatchExp{{{{x}{y}},\horstMATCHIF{\horstAND{\horstEQ{\horstVariI}{\horstConstructorAppVal{\horstVarx}}}{\horstEQ{\horstVariII}{\horstConstructorAppVal{\horstVary}}}},\horstConstructorAppVal{\horstOpAppcishrs{\horstParVarbw}{\horstVarx,\horstVary}}},{{},\horstOTHERWISE,\horstOpAppfreeValOrTop{}{}}}}
\end{horstOperation}
\begin{horstOperation}{irotl}
  \horstName{irotl}
  \horstParVar{bw}{\horstTypeint}
  \horstVar{iI}{\horstTypeValue}
  \horstVar{iII}{\horstTypeValue}
  \horstReturnType{\horstTypeValue}
  \horstBody{\horstMatchExp{{{{x}{y}},\horstMATCHIF{\horstAND{\horstEQ{\horstVariI}{\horstConstructorAppVal{\horstVarx}}}{\horstEQ{\horstVariII}{\horstConstructorAppVal{\horstVary}}}},\horstConstructorAppVal{\horstOpAppcirotl{\horstParVarbw}{\horstVarx,\horstVary}}},{{},\horstOTHERWISE,\horstOpAppfreeValOrTop{}{}}}}
\end{horstOperation}
\begin{horstOperation}{irotr}
  \horstName{irotr}
  \horstParVar{bw}{\horstTypeint}
  \horstVar{iI}{\horstTypeValue}
  \horstVar{iII}{\horstTypeValue}
  \horstReturnType{\horstTypeValue}
  \horstBody{\horstMatchExp{{{{x}{y}},\horstMATCHIF{\horstAND{\horstEQ{\horstVariI}{\horstConstructorAppVal{\horstVarx}}}{\horstEQ{\horstVariII}{\horstConstructorAppVal{\horstVary}}}},\horstConstructorAppVal{\horstOpAppcirotr{\horstParVarbw}{\horstVarx,\horstVary}}},{{},\horstOTHERWISE,\horstOpAppfreeValOrTop{}{}}}}
\end{horstOperation}
\begin{horstOperation}{ieq}
  \horstName{ieq}
  \horstParVar{bw}{\horstTypeint}
  \horstVar{iI}{\horstTypeValue}
  \horstVar{iII}{\horstTypeValue}
  \horstReturnType{\horstTypebool}
  \horstBody{\horstMatchExp{{{{x}{y}},\horstMATCHIF{\horstAND{\horstEQ{\horstVariI}{\horstConstructorAppVal{\horstVarx}}}{\horstEQ{\horstVariII}{\horstConstructorAppVal{\horstVary}}}},\horstOpAppcieq{\horstParVarbw}{\horstVarx,\horstVary}},{{},\horstOTHERWISE,\horstOpAppfreeBool{}{}}}}
\end{horstOperation}
\begin{horstOperation}{ine}
  \horstName{ine}
  \horstParVar{bw}{\horstTypeint}
  \horstVar{iI}{\horstTypeValue}
  \horstVar{iII}{\horstTypeValue}
  \horstReturnType{\horstTypebool}
  \horstBody{\horstMatchExp{{{{x}{y}},\horstMATCHIF{\horstAND{\horstEQ{\horstVariI}{\horstConstructorAppVal{\horstVarx}}}{\horstEQ{\horstVariII}{\horstConstructorAppVal{\horstVary}}}},\horstOpAppcine{\horstParVarbw}{\horstVarx,\horstVary}},{{},\horstOTHERWISE,\horstOpAppfreeBool{}{}}}}
\end{horstOperation}
\begin{horstOperation}{iltu}
  \horstName{iltu}
  \horstParVar{bw}{\horstTypeint}
  \horstVar{iI}{\horstTypeValue}
  \horstVar{iII}{\horstTypeValue}
  \horstReturnType{\horstTypebool}
  \horstBody{\horstMatchExp{{{{x}{y}},\horstMATCHIF{\horstAND{\horstEQ{\horstVariI}{\horstConstructorAppVal{\horstVarx}}}{\horstEQ{\horstVariII}{\horstConstructorAppVal{\horstVary}}}},\horstOpAppciltu{\horstParVarbw}{\horstVarx,\horstVary}},{{},\horstOTHERWISE,\horstOpAppfreeBool{}{}}}}
\end{horstOperation}
\begin{horstOperation}{ilts}
  \horstName{ilts}
  \horstParVar{bw}{\horstTypeint}
  \horstVar{iI}{\horstTypeValue}
  \horstVar{iII}{\horstTypeValue}
  \horstReturnType{\horstTypebool}
  \horstBody{\horstMatchExp{{{{x}{y}},\horstMATCHIF{\horstAND{\horstEQ{\horstVariI}{\horstConstructorAppVal{\horstVarx}}}{\horstEQ{\horstVariII}{\horstConstructorAppVal{\horstVary}}}},\horstOpAppcilts{\horstParVarbw}{\horstVarx,\horstVary}},{{},\horstOTHERWISE,\horstOpAppfreeBool{}{}}}}
\end{horstOperation}
\begin{horstOperation}{igtu}
  \horstName{igtu}
  \horstParVar{bw}{\horstTypeint}
  \horstVar{iI}{\horstTypeValue}
  \horstVar{iII}{\horstTypeValue}
  \horstReturnType{\horstTypebool}
  \horstBody{\horstMatchExp{{{{x}{y}},\horstMATCHIF{\horstAND{\horstEQ{\horstVariI}{\horstConstructorAppVal{\horstVarx}}}{\horstEQ{\horstVariII}{\horstConstructorAppVal{\horstVary}}}},\horstOpAppcigtu{\horstParVarbw}{\horstVarx,\horstVary}},{{},\horstOTHERWISE,\horstOpAppfreeBool{}{}}}}
\end{horstOperation}
\begin{horstOperation}{igts}
  \horstName{igts}
  \horstParVar{bw}{\horstTypeint}
  \horstVar{iI}{\horstTypeValue}
  \horstVar{iII}{\horstTypeValue}
  \horstReturnType{\horstTypebool}
  \horstBody{\horstMatchExp{{{{x}{y}},\horstMATCHIF{\horstAND{\horstEQ{\horstVariI}{\horstConstructorAppVal{\horstVarx}}}{\horstEQ{\horstVariII}{\horstConstructorAppVal{\horstVary}}}},\horstOpAppcigts{\horstParVarbw}{\horstVarx,\horstVary}},{{},\horstOTHERWISE,\horstOpAppfreeBool{}{}}}}
\end{horstOperation}
\begin{horstOperation}{ileu}
  \horstName{ileu}
  \horstParVar{bw}{\horstTypeint}
  \horstVar{iI}{\horstTypeValue}
  \horstVar{iII}{\horstTypeValue}
  \horstReturnType{\horstTypebool}
  \horstBody{\horstMatchExp{{{{x}{y}},\horstMATCHIF{\horstAND{\horstEQ{\horstVariI}{\horstConstructorAppVal{\horstVarx}}}{\horstEQ{\horstVariII}{\horstConstructorAppVal{\horstVary}}}},\horstOpAppcileu{\horstParVarbw}{\horstVarx,\horstVary}},{{},\horstOTHERWISE,\horstOpAppfreeBool{}{}}}}
\end{horstOperation}
\begin{horstOperation}{iles}
  \horstName{iles}
  \horstParVar{bw}{\horstTypeint}
  \horstVar{iI}{\horstTypeValue}
  \horstVar{iII}{\horstTypeValue}
  \horstReturnType{\horstTypebool}
  \horstBody{\horstMatchExp{{{{x}{y}},\horstMATCHIF{\horstAND{\horstEQ{\horstVariI}{\horstConstructorAppVal{\horstVarx}}}{\horstEQ{\horstVariII}{\horstConstructorAppVal{\horstVary}}}},\horstOpAppciles{\horstParVarbw}{\horstVarx,\horstVary}},{{},\horstOTHERWISE,\horstOpAppfreeBool{}{}}}}
\end{horstOperation}
\begin{horstOperation}{igeu}
  \horstName{igeu}
  \horstParVar{bw}{\horstTypeint}
  \horstVar{iI}{\horstTypeValue}
  \horstVar{iII}{\horstTypeValue}
  \horstReturnType{\horstTypebool}
  \horstBody{\horstMatchExp{{{{x}{y}},\horstMATCHIF{\horstAND{\horstEQ{\horstVariI}{\horstConstructorAppVal{\horstVarx}}}{\horstEQ{\horstVariII}{\horstConstructorAppVal{\horstVary}}}},\horstOpAppcigeu{\horstParVarbw}{\horstVarx,\horstVary}},{{},\horstOTHERWISE,\horstOpAppfreeBool{}{}}}}
\end{horstOperation}
\begin{horstOperation}{iges}
  \horstName{iges}
  \horstParVar{bw}{\horstTypeint}
  \horstVar{iI}{\horstTypeValue}
  \horstVar{iII}{\horstTypeValue}
  \horstReturnType{\horstTypebool}
  \horstBody{\horstMatchExp{{{{x}{y}},\horstMATCHIF{\horstAND{\horstEQ{\horstVariI}{\horstConstructorAppVal{\horstVarx}}}{\horstEQ{\horstVariII}{\horstConstructorAppVal{\horstVary}}}},\horstOpAppciges{\horstParVarbw}{\horstVarx,\horstVary}},{{},\horstOTHERWISE,\horstOpAppfreeBool{}{}}}}
\end{horstOperation}
\begin{horstOperation}{iclz}
  \horstName{iclz}
  \horstParVar{bw}{\horstTypeint}
  \horstVar{i}{\horstTypeValue}
  \horstReturnType{\horstTypeValue}
  \horstBody{\horstMatchExp{{{{x}},\horstMATCHIF{\horstEQ{\horstVari}{\horstConstructorAppVal{\horstVarx}}},\horstConstructorAppVal{\horstOpAppciclz{\horstParVarbw}{\horstVarx}}},{{},\horstOTHERWISE,\horstOpAppfreeValOrTop{}{}}}}
\end{horstOperation}
\begin{horstOperation}{ictz}
  \horstName{ictz}
  \horstParVar{bw}{\horstTypeint}
  \horstVar{i}{\horstTypeValue}
  \horstReturnType{\horstTypeValue}
  \horstBody{\horstMatchExp{{{{x}},\horstMATCHIF{\horstEQ{\horstVari}{\horstConstructorAppVal{\horstVarx}}},\horstConstructorAppVal{\horstOpAppcictz{\horstParVarbw}{\horstVarx}}},{{},\horstOTHERWISE,\horstOpAppfreeValOrTop{}{}}}}
\end{horstOperation}
\begin{horstOperation}{ipopcnt}
  \horstName{ipopcnt}
  \horstParVar{bw}{\horstTypeint}
  \horstVar{i}{\horstTypeValue}
  \horstReturnType{\horstTypeValue}
  \horstBody{\horstMatchExp{{{{x}},\horstMATCHIF{\horstEQ{\horstVari}{\horstConstructorAppVal{\horstVarx}}},\horstConstructorAppVal{\horstOpAppcipopcnt{\horstParVarbw}{\horstVarx}}},{{},\horstOTHERWISE,\horstOpAppfreeValOrTop{}{}}}}
\end{horstOperation}
\begin{horstOperation}{ieqz}
  \horstName{ieqz}
  \horstParVar{bw}{\horstTypeint}
  \horstVar{i}{\horstTypeValue}
  \horstReturnType{\horstTypebool}
  \horstBody{\horstMatchExp{{{{x}},\horstMATCHIF{\horstEQ{\horstVari}{\horstConstructorAppVal{\horstVarx}}},\horstOpAppcieqz{\horstParVarbw}{\horstVarx}},{{},\horstOTHERWISE,\horstOpAppfreeBool{}{}}}}
\end{horstOperation}
\begin{horstOperation}{inot}
  \horstName{inot}
  \horstParVar{bw}{\horstTypeint}
  \horstVar{iI}{\horstTypeValue}
  \horstReturnType{\horstTypeValue}
  \horstBody{\horstMatchExp{{{{x}},\horstMATCHIF{\horstEQ{\horstVariI}{\horstConstructorAppVal{\horstVarx}}},\horstConstructorAppVal{\horstOpAppcinot{\horstParVarbw}{\horstVarx}}},{{},\horstOTHERWISE,\horstOpAppfreeValOrTop{}{}}}}
\end{horstOperation}
\begin{horstOperation}{fadd}
  \horstName{fadd}
  \horstParVar{bw}{\horstTypeint}
  \horstVar{zI}{\horstTypeValue}
  \horstVar{zII}{\horstTypeValue}
  \horstReturnType{\horstTypeValue}
  \horstBody{\horstOpAppfreeValOrTop{}{}}
\end{horstOperation}
\begin{horstOperation}{fsub}
  \horstName{fsub}
  \horstParVar{bw}{\horstTypeint}
  \horstVar{zI}{\horstTypeValue}
  \horstVar{zII}{\horstTypeValue}
  \horstReturnType{\horstTypeValue}
  \horstBody{\horstOpAppfreeValOrTop{}{}}
\end{horstOperation}
\begin{horstOperation}{fmul}
  \horstName{fmul}
  \horstParVar{bw}{\horstTypeint}
  \horstVar{zI}{\horstTypeValue}
  \horstVar{zII}{\horstTypeValue}
  \horstReturnType{\horstTypeValue}
  \horstBody{\horstOpAppfreeValOrTop{}{}}
\end{horstOperation}
\begin{horstOperation}{fdiv}
  \horstName{fdiv}
  \horstParVar{bw}{\horstTypeint}
  \horstVar{zI}{\horstTypeValue}
  \horstVar{zII}{\horstTypeValue}
  \horstReturnType{\horstTypeValue}
  \horstBody{\horstOpAppfreeValOrTop{}{}}
\end{horstOperation}
\begin{horstOperation}{fmin}
  \horstName{fmin}
  \horstParVar{bw}{\horstTypeint}
  \horstVar{zI}{\horstTypeValue}
  \horstVar{zII}{\horstTypeValue}
  \horstReturnType{\horstTypeValue}
  \horstBody{\horstOpAppfreeValOrTop{}{}}
\end{horstOperation}
\begin{horstOperation}{fmax}
  \horstName{fmax}
  \horstParVar{bw}{\horstTypeint}
  \horstVar{zI}{\horstTypeValue}
  \horstVar{zII}{\horstTypeValue}
  \horstReturnType{\horstTypeValue}
  \horstBody{\horstOpAppfreeValOrTop{}{}}
\end{horstOperation}
\begin{horstOperation}{fcopysign}
  \horstName{fcopysign}
  \horstParVar{bw}{\horstTypeint}
  \horstVar{zI}{\horstTypeValue}
  \horstVar{zII}{\horstTypeValue}
  \horstReturnType{\horstTypeValue}
  \horstBody{\horstOpAppfreeValOrTop{}{}}
\end{horstOperation}
\begin{horstOperation}{feq}
  \horstName{feq}
  \horstParVar{bw}{\horstTypeint}
  \horstVar{zI}{\horstTypeValue}
  \horstVar{zII}{\horstTypeValue}
  \horstReturnType{\horstTypebool}
  \horstBody{\horstOpAppfreeBool{}{}}
\end{horstOperation}
\begin{horstOperation}{fne}
  \horstName{fne}
  \horstParVar{bw}{\horstTypeint}
  \horstVar{zI}{\horstTypeValue}
  \horstVar{zII}{\horstTypeValue}
  \horstReturnType{\horstTypebool}
  \horstBody{\horstOpAppfreeBool{}{}}
\end{horstOperation}
\begin{horstOperation}{flt}
  \horstName{flt}
  \horstParVar{bw}{\horstTypeint}
  \horstVar{zI}{\horstTypeValue}
  \horstVar{zII}{\horstTypeValue}
  \horstReturnType{\horstTypebool}
  \horstBody{\horstOpAppfreeBool{}{}}
\end{horstOperation}
\begin{horstOperation}{fgt}
  \horstName{fgt}
  \horstParVar{bw}{\horstTypeint}
  \horstVar{zI}{\horstTypeValue}
  \horstVar{zII}{\horstTypeValue}
  \horstReturnType{\horstTypebool}
  \horstBody{\horstOpAppfreeBool{}{}}
\end{horstOperation}
\begin{horstOperation}{fle}
  \horstName{fle}
  \horstParVar{bw}{\horstTypeint}
  \horstVar{zI}{\horstTypeValue}
  \horstVar{zII}{\horstTypeValue}
  \horstReturnType{\horstTypebool}
  \horstBody{\horstOpAppfreeBool{}{}}
\end{horstOperation}
\begin{horstOperation}{fge}
  \horstName{fge}
  \horstParVar{bw}{\horstTypeint}
  \horstVar{zI}{\horstTypeValue}
  \horstVar{zII}{\horstTypeValue}
  \horstReturnType{\horstTypebool}
  \horstBody{\horstOpAppfreeBool{}{}}
\end{horstOperation}
\begin{horstOperation}{fabs}
  \horstName{fabs}
  \horstParVar{bw}{\horstTypeint}
  \horstVar{z}{\horstTypeValue}
  \horstReturnType{\horstTypeValue}
  \horstBody{\horstOpAppfreeValOrTop{}{}}
\end{horstOperation}
\begin{horstOperation}{fneg}
  \horstName{fneg}
  \horstParVar{bw}{\horstTypeint}
  \horstVar{z}{\horstTypeValue}
  \horstReturnType{\horstTypeValue}
  \horstBody{\horstOpAppfreeValOrTop{}{}}
\end{horstOperation}
\begin{horstOperation}{fsqrt}
  \horstName{fsqrt}
  \horstParVar{bw}{\horstTypeint}
  \horstVar{z}{\horstTypeValue}
  \horstReturnType{\horstTypeValue}
  \horstBody{\horstOpAppfreeValOrTop{}{}}
\end{horstOperation}
\begin{horstOperation}{fceil}
  \horstName{fceil}
  \horstParVar{bw}{\horstTypeint}
  \horstVar{z}{\horstTypeValue}
  \horstReturnType{\horstTypeValue}
  \horstBody{\horstOpAppfreeValOrTop{}{}}
\end{horstOperation}
\begin{horstOperation}{ffloor}
  \horstName{ffloor}
  \horstParVar{bw}{\horstTypeint}
  \horstVar{z}{\horstTypeValue}
  \horstReturnType{\horstTypeValue}
  \horstBody{\horstOpAppfreeValOrTop{}{}}
\end{horstOperation}
\begin{horstOperation}{ftrunc}
  \horstName{ftrunc}
  \horstParVar{bw}{\horstTypeint}
  \horstVar{z}{\horstTypeValue}
  \horstReturnType{\horstTypeValue}
  \horstBody{\horstOpAppfreeValOrTop{}{}}
\end{horstOperation}
\begin{horstOperation}{fnearest}
  \horstName{fnearest}
  \horstParVar{bw}{\horstTypeint}
  \horstVar{z}{\horstTypeValue}
  \horstReturnType{\horstTypeValue}
  \horstBody{\horstOpAppfreeValOrTop{}{}}
\end{horstOperation}
\begin{horstOperation}{unOp}
  \horstName{unOp}
  \horstParVar{op}{\horstTypeint}
  \horstVar{a}{\horstTypeValue}
  \horstReturnType{\horstTypeValue}
  \horstBody{\horstMatchExp{{{},\horstMATCHIF{\horstEQ{\horstParVarop}{\horstConstIXXXIIEQZ}},\horstCOND{\horstOpAppieqz{32}{\horstVara}}{\horstOpAppmkConst{1}{}}{\horstOpAppmkConst{0}{}}},{{},\horstMATCHIF{\horstEQ{\horstParVarop}{\horstConstIXXXIICLZ}},\horstOpAppiclz{32}{\horstVara}},{{},\horstMATCHIF{\horstEQ{\horstParVarop}{\horstConstIXXXIICTZ}},\horstOpAppictz{32}{\horstVara}},{{},\horstMATCHIF{\horstEQ{\horstParVarop}{\horstConstIXXXIIPOPCNT}},\horstOpAppipopcnt{32}{\horstVara}},{{},\horstMATCHIF{\horstEQ{\horstParVarop}{\horstConstILXIVEQZ}},\horstCOND{\horstOpAppieqz{64}{\horstVara}}{\horstOpAppmkConst{1}{}}{\horstOpAppmkConst{0}{}}},{{},\horstMATCHIF{\horstEQ{\horstParVarop}{\horstConstILXIVCLZ}},\horstOpAppiclz{64}{\horstVara}},{{},\horstMATCHIF{\horstEQ{\horstParVarop}{\horstConstILXIVCTZ}},\horstOpAppictz{64}{\horstVara}},{{},\horstMATCHIF{\horstEQ{\horstParVarop}{\horstConstILXIVPOPCNT}},\horstOpAppipopcnt{64}{\horstVara}},{{},\horstMATCHIF{\horstEQ{\horstParVarop}{\horstConstFXXXIIABS}},\horstOpAppfabs{32}{\horstVara}},{{},\horstMATCHIF{\horstEQ{\horstParVarop}{\horstConstFXXXIINEG}},\horstOpAppfneg{32}{\horstVara}},{{},\horstMATCHIF{\horstEQ{\horstParVarop}{\horstConstFXXXIICEIL}},\horstOpAppfceil{32}{\horstVara}},{{},\horstMATCHIF{\horstEQ{\horstParVarop}{\horstConstFXXXIIFLOOR}},\horstOpAppffloor{32}{\horstVara}},{{},\horstMATCHIF{\horstEQ{\horstParVarop}{\horstConstFXXXIITRUNC}},\horstOpAppftrunc{32}{\horstVara}},{{},\horstMATCHIF{\horstEQ{\horstParVarop}{\horstConstFXXXIINEAREST}},\horstOpAppfnearest{32}{\horstVara}},{{},\horstMATCHIF{\horstEQ{\horstParVarop}{\horstConstFXXXIISQRT}},\horstOpAppfsqrt{32}{\horstVara}},{{},\horstMATCHIF{\horstEQ{\horstParVarop}{\horstConstFLXIVABS}},\horstOpAppfabs{64}{\horstVara}},{{},\horstMATCHIF{\horstEQ{\horstParVarop}{\horstConstFLXIVNEG}},\horstOpAppfneg{64}{\horstVara}},{{},\horstMATCHIF{\horstEQ{\horstParVarop}{\horstConstFLXIVCEIL}},\horstOpAppfceil{64}{\horstVara}},{{},\horstMATCHIF{\horstEQ{\horstParVarop}{\horstConstFLXIVFLOOR}},\horstOpAppffloor{64}{\horstVara}},{{},\horstMATCHIF{\horstEQ{\horstParVarop}{\horstConstFLXIVTRUNC}},\horstOpAppftrunc{64}{\horstVara}},{{},\horstMATCHIF{\horstEQ{\horstParVarop}{\horstConstFLXIVNEAREST}},\horstOpAppfnearest{64}{\horstVara}},{{},\horstMATCHIF{\horstEQ{\horstParVarop}{\horstConstFLXIVSQRT}},\horstOpAppfsqrt{64}{\horstVara}},{{},\horstOTHERWISE,\horstOpAppfreeValOrTop{}{}}}}
\end{horstOperation}
\begin{horstOperation}{binOp}
  \horstName{binOp}
  \horstParVar{op}{\horstTypeint}
  \horstVar{a}{\horstTypeValue}
  \horstVar{b}{\horstTypeValue}
  \horstReturnType{\horstTypeValue}
  \horstBody{\horstMatchExp{{{},\horstMATCHIF{\horstEQ{\horstParVarop}{\horstConstIXXXIIEQ}},\horstCOND{\horstOpAppieq{32}{\horstVara,\horstVarb}}{\horstOpAppmkConst{1}{}}{\horstOpAppmkConst{0}{}}},{{},\horstMATCHIF{\horstEQ{\horstParVarop}{\horstConstIXXXIINE}},\horstCOND{\horstOpAppine{32}{\horstVara,\horstVarb}}{\horstOpAppmkConst{1}{}}{\horstOpAppmkConst{0}{}}},{{},\horstMATCHIF{\horstEQ{\horstParVarop}{\horstConstIXXXIILTS}},\horstCOND{\horstOpAppilts{32}{\horstVara,\horstVarb}}{\horstOpAppmkConst{1}{}}{\horstOpAppmkConst{0}{}}},{{},\horstMATCHIF{\horstEQ{\horstParVarop}{\horstConstIXXXIILTU}},\horstCOND{\horstOpAppiltu{32}{\horstVara,\horstVarb}}{\horstOpAppmkConst{1}{}}{\horstOpAppmkConst{0}{}}},{{},\horstMATCHIF{\horstEQ{\horstParVarop}{\horstConstIXXXIIGTS}},\horstCOND{\horstOpAppigts{32}{\horstVara,\horstVarb}}{\horstOpAppmkConst{1}{}}{\horstOpAppmkConst{0}{}}},{{},\horstMATCHIF{\horstEQ{\horstParVarop}{\horstConstIXXXIIGTU}},\horstCOND{\horstOpAppigtu{32}{\horstVara,\horstVarb}}{\horstOpAppmkConst{1}{}}{\horstOpAppmkConst{0}{}}},{{},\horstMATCHIF{\horstEQ{\horstParVarop}{\horstConstIXXXIILES}},\horstCOND{\horstOpAppiles{32}{\horstVara,\horstVarb}}{\horstOpAppmkConst{1}{}}{\horstOpAppmkConst{0}{}}},{{},\horstMATCHIF{\horstEQ{\horstParVarop}{\horstConstIXXXIILEU}},\horstCOND{\horstOpAppileu{32}{\horstVara,\horstVarb}}{\horstOpAppmkConst{1}{}}{\horstOpAppmkConst{0}{}}},{{},\horstMATCHIF{\horstEQ{\horstParVarop}{\horstConstIXXXIIGES}},\horstCOND{\horstOpAppiges{32}{\horstVara,\horstVarb}}{\horstOpAppmkConst{1}{}}{\horstOpAppmkConst{0}{}}},{{},\horstMATCHIF{\horstEQ{\horstParVarop}{\horstConstIXXXIIGEU}},\horstCOND{\horstOpAppigeu{32}{\horstVara,\horstVarb}}{\horstOpAppmkConst{1}{}}{\horstOpAppmkConst{0}{}}},{{},\horstMATCHIF{\horstEQ{\horstParVarop}{\horstConstIXXXIIADD}},\horstOpAppiadd{32}{\horstVara,\horstVarb}},{{},\horstMATCHIF{\horstEQ{\horstParVarop}{\horstConstIXXXIISUB}},\horstOpAppisub{32}{\horstVara,\horstVarb}},{{},\horstMATCHIF{\horstEQ{\horstParVarop}{\horstConstIXXXIIMUL}},\horstOpAppimul{32}{\horstVara,\horstVarb}},{{},\horstMATCHIF{\horstEQ{\horstParVarop}{\horstConstIXXXIIAND}},\horstOpAppiand{32}{\horstVara,\horstVarb}},{{},\horstMATCHIF{\horstEQ{\horstParVarop}{\horstConstIXXXIIIOR}},\horstOpAppior{32}{\horstVara,\horstVarb}},{{},\horstMATCHIF{\horstEQ{\horstParVarop}{\horstConstIXXXIIXOR}},\horstOpAppixor{32}{\horstVara,\horstVarb}},{{},\horstMATCHIF{\horstEQ{\horstParVarop}{\horstConstIXXXIISHL}},\horstOpAppishl{32}{\horstVara,\horstVarb}},{{},\horstMATCHIF{\horstEQ{\horstParVarop}{\horstConstIXXXIISHRS}},\horstOpAppishrs{32}{\horstVara,\horstVarb}},{{},\horstMATCHIF{\horstEQ{\horstParVarop}{\horstConstIXXXIISHRU}},\horstOpAppishru{32}{\horstVara,\horstVarb}},{{},\horstMATCHIF{\horstEQ{\horstParVarop}{\horstConstIXXXIIROTL}},\horstOpAppirotl{32}{\horstVara,\horstVarb}},{{},\horstMATCHIF{\horstEQ{\horstParVarop}{\horstConstIXXXIIROTR}},\horstOpAppirotr{32}{\horstVara,\horstVarb}},{{},\horstMATCHIF{\horstEQ{\horstParVarop}{\horstConstILXIVEQ}},\horstCOND{\horstOpAppieq{64}{\horstVara,\horstVarb}}{\horstOpAppmkConst{1}{}}{\horstOpAppmkConst{0}{}}},{{},\horstMATCHIF{\horstEQ{\horstParVarop}{\horstConstILXIVNE}},\horstCOND{\horstOpAppine{64}{\horstVara,\horstVarb}}{\horstOpAppmkConst{1}{}}{\horstOpAppmkConst{0}{}}},{{},\horstMATCHIF{\horstEQ{\horstParVarop}{\horstConstILXIVLTS}},\horstCOND{\horstOpAppilts{64}{\horstVara,\horstVarb}}{\horstOpAppmkConst{1}{}}{\horstOpAppmkConst{0}{}}},{{},\horstMATCHIF{\horstEQ{\horstParVarop}{\horstConstILXIVLTU}},\horstCOND{\horstOpAppiltu{64}{\horstVara,\horstVarb}}{\horstOpAppmkConst{1}{}}{\horstOpAppmkConst{0}{}}},{{},\horstMATCHIF{\horstEQ{\horstParVarop}{\horstConstILXIVGTS}},\horstCOND{\horstOpAppigts{64}{\horstVara,\horstVarb}}{\horstOpAppmkConst{1}{}}{\horstOpAppmkConst{0}{}}},{{},\horstMATCHIF{\horstEQ{\horstParVarop}{\horstConstILXIVGTU}},\horstCOND{\horstOpAppigtu{64}{\horstVara,\horstVarb}}{\horstOpAppmkConst{1}{}}{\horstOpAppmkConst{0}{}}},{{},\horstMATCHIF{\horstEQ{\horstParVarop}{\horstConstILXIVLES}},\horstCOND{\horstOpAppiles{64}{\horstVara,\horstVarb}}{\horstOpAppmkConst{1}{}}{\horstOpAppmkConst{0}{}}},{{},\horstMATCHIF{\horstEQ{\horstParVarop}{\horstConstILXIVLEU}},\horstCOND{\horstOpAppileu{64}{\horstVara,\horstVarb}}{\horstOpAppmkConst{1}{}}{\horstOpAppmkConst{0}{}}},{{},\horstMATCHIF{\horstEQ{\horstParVarop}{\horstConstILXIVGES}},\horstCOND{\horstOpAppiges{64}{\horstVara,\horstVarb}}{\horstOpAppmkConst{1}{}}{\horstOpAppmkConst{0}{}}},{{},\horstMATCHIF{\horstEQ{\horstParVarop}{\horstConstILXIVGEU}},\horstCOND{\horstOpAppigeu{64}{\horstVara,\horstVarb}}{\horstOpAppmkConst{1}{}}{\horstOpAppmkConst{0}{}}},{{},\horstMATCHIF{\horstEQ{\horstParVarop}{\horstConstILXIVADD}},\horstOpAppiadd{64}{\horstVara,\horstVarb}},{{},\horstMATCHIF{\horstEQ{\horstParVarop}{\horstConstILXIVSUB}},\horstOpAppisub{64}{\horstVara,\horstVarb}},{{},\horstMATCHIF{\horstEQ{\horstParVarop}{\horstConstILXIVMUL}},\horstOpAppimul{64}{\horstVara,\horstVarb}},{{},\horstMATCHIF{\horstEQ{\horstParVarop}{\horstConstILXIVAND}},\horstOpAppiand{64}{\horstVara,\horstVarb}},{{},\horstMATCHIF{\horstEQ{\horstParVarop}{\horstConstILXIVIOR}},\horstOpAppior{64}{\horstVara,\horstVarb}},{{},\horstMATCHIF{\horstEQ{\horstParVarop}{\horstConstILXIVXOR}},\horstOpAppixor{64}{\horstVara,\horstVarb}},{{},\horstMATCHIF{\horstEQ{\horstParVarop}{\horstConstILXIVSHL}},\horstOpAppishl{64}{\horstVara,\horstVarb}},{{},\horstMATCHIF{\horstEQ{\horstParVarop}{\horstConstILXIVSHRS}},\horstOpAppishrs{64}{\horstVara,\horstVarb}},{{},\horstMATCHIF{\horstEQ{\horstParVarop}{\horstConstILXIVSHRU}},\horstOpAppishru{64}{\horstVara,\horstVarb}},{{},\horstMATCHIF{\horstEQ{\horstParVarop}{\horstConstILXIVROTL}},\horstOpAppirotl{64}{\horstVara,\horstVarb}},{{},\horstMATCHIF{\horstEQ{\horstParVarop}{\horstConstILXIVROTR}},\horstOpAppirotr{64}{\horstVara,\horstVarb}},{{},\horstMATCHIF{\horstEQ{\horstParVarop}{\horstConstFXXXIIEQ}},\horstCOND{\horstOpAppfeq{32}{\horstVara,\horstVarb}}{\horstOpAppmkConst{1}{}}{\horstOpAppmkConst{0}{}}},{{},\horstMATCHIF{\horstEQ{\horstParVarop}{\horstConstFXXXIINE}},\horstCOND{\horstOpAppfne{32}{\horstVara,\horstVarb}}{\horstOpAppmkConst{1}{}}{\horstOpAppmkConst{0}{}}},{{},\horstMATCHIF{\horstEQ{\horstParVarop}{\horstConstFXXXIILT}},\horstCOND{\horstOpAppflt{32}{\horstVara,\horstVarb}}{\horstOpAppmkConst{1}{}}{\horstOpAppmkConst{0}{}}},{{},\horstMATCHIF{\horstEQ{\horstParVarop}{\horstConstFXXXIIGT}},\horstCOND{\horstOpAppfgt{32}{\horstVara,\horstVarb}}{\horstOpAppmkConst{1}{}}{\horstOpAppmkConst{0}{}}},{{},\horstMATCHIF{\horstEQ{\horstParVarop}{\horstConstFXXXIILE}},\horstCOND{\horstOpAppfle{32}{\horstVara,\horstVarb}}{\horstOpAppmkConst{1}{}}{\horstOpAppmkConst{0}{}}},{{},\horstMATCHIF{\horstEQ{\horstParVarop}{\horstConstFXXXIIGE}},\horstCOND{\horstOpAppfge{32}{\horstVara,\horstVarb}}{\horstOpAppmkConst{1}{}}{\horstOpAppmkConst{0}{}}},{{},\horstMATCHIF{\horstEQ{\horstParVarop}{\horstConstFXXXIIADD}},\horstOpAppfadd{32}{\horstVara,\horstVarb}},{{},\horstMATCHIF{\horstEQ{\horstParVarop}{\horstConstFXXXIISUB}},\horstOpAppfsub{32}{\horstVara,\horstVarb}},{{},\horstMATCHIF{\horstEQ{\horstParVarop}{\horstConstFXXXIIMUL}},\horstOpAppfmul{32}{\horstVara,\horstVarb}},{{},\horstMATCHIF{\horstEQ{\horstParVarop}{\horstConstFXXXIIDIV}},\horstOpAppfdiv{32}{\horstVara,\horstVarb}},{{},\horstMATCHIF{\horstEQ{\horstParVarop}{\horstConstFXXXIIMIN}},\horstOpAppfmin{32}{\horstVara,\horstVarb}},{{},\horstMATCHIF{\horstEQ{\horstParVarop}{\horstConstFXXXIIMAX}},\horstOpAppfmax{32}{\horstVara,\horstVarb}},{{},\horstMATCHIF{\horstEQ{\horstParVarop}{\horstConstFXXXIICOPYSIGN}},\horstOpAppfcopysign{32}{\horstVara,\horstVarb}},{{},\horstMATCHIF{\horstEQ{\horstParVarop}{\horstConstFLXIVEQ}},\horstCOND{\horstOpAppfeq{64}{\horstVara,\horstVarb}}{\horstOpAppmkConst{1}{}}{\horstOpAppmkConst{0}{}}},{{},\horstMATCHIF{\horstEQ{\horstParVarop}{\horstConstFLXIVNE}},\horstCOND{\horstOpAppfne{64}{\horstVara,\horstVarb}}{\horstOpAppmkConst{1}{}}{\horstOpAppmkConst{0}{}}},{{},\horstMATCHIF{\horstEQ{\horstParVarop}{\horstConstFLXIVLT}},\horstCOND{\horstOpAppflt{64}{\horstVara,\horstVarb}}{\horstOpAppmkConst{1}{}}{\horstOpAppmkConst{0}{}}},{{},\horstMATCHIF{\horstEQ{\horstParVarop}{\horstConstFLXIVGT}},\horstCOND{\horstOpAppfgt{64}{\horstVara,\horstVarb}}{\horstOpAppmkConst{1}{}}{\horstOpAppmkConst{0}{}}},{{},\horstMATCHIF{\horstEQ{\horstParVarop}{\horstConstFLXIVLE}},\horstCOND{\horstOpAppfle{64}{\horstVara,\horstVarb}}{\horstOpAppmkConst{1}{}}{\horstOpAppmkConst{0}{}}},{{},\horstMATCHIF{\horstEQ{\horstParVarop}{\horstConstFLXIVGE}},\horstCOND{\horstOpAppfge{64}{\horstVara,\horstVarb}}{\horstOpAppmkConst{1}{}}{\horstOpAppmkConst{0}{}}},{{},\horstMATCHIF{\horstEQ{\horstParVarop}{\horstConstFLXIVADD}},\horstOpAppfadd{64}{\horstVara,\horstVarb}},{{},\horstMATCHIF{\horstEQ{\horstParVarop}{\horstConstFLXIVSUB}},\horstOpAppfsub{64}{\horstVara,\horstVarb}},{{},\horstMATCHIF{\horstEQ{\horstParVarop}{\horstConstFLXIVMUL}},\horstOpAppfmul{64}{\horstVara,\horstVarb}},{{},\horstMATCHIF{\horstEQ{\horstParVarop}{\horstConstFLXIVDIV}},\horstOpAppfdiv{64}{\horstVara,\horstVarb}},{{},\horstMATCHIF{\horstEQ{\horstParVarop}{\horstConstFLXIVMIN}},\horstOpAppfmin{64}{\horstVara,\horstVarb}},{{},\horstMATCHIF{\horstEQ{\horstParVarop}{\horstConstFLXIVMAX}},\horstOpAppfmax{64}{\horstVara,\horstVarb}},{{},\horstMATCHIF{\horstEQ{\horstParVarop}{\horstConstFLXIVCOPYSIGN}},\horstOpAppfcopysign{64}{\horstVara,\horstVarb}},{{},\horstOTHERWISE,\horstOpAppfreeValOrTop{}{}}}}
\end{horstOperation}
\begin{horstOperation}{trappingBinOp}
  \horstName{trappingBinOp}
  \horstParVar{op}{\horstTypeint}
  \horstVar{a}{\horstTypeValue}
  \horstVar{b}{\horstTypeValue}
  \horstReturnType{\horstTypeMaybeValue}
  \horstBody{\horstMatchExp{{{},\horstMATCHIF{\horstEQ{\horstParVarop}{\horstConstIXXXIIDIVS}},\horstOpAppidivs{32}{\horstVara,\horstVarb}},{{},\horstMATCHIF{\horstEQ{\horstParVarop}{\horstConstIXXXIIDIVU}},\horstOpAppidivu{32}{\horstVara,\horstVarb}},{{},\horstMATCHIF{\horstEQ{\horstParVarop}{\horstConstIXXXIIREMS}},\horstOpAppirems{32}{\horstVara,\horstVarb}},{{},\horstMATCHIF{\horstEQ{\horstParVarop}{\horstConstIXXXIIREMU}},\horstOpAppiremu{32}{\horstVara,\horstVarb}},{{},\horstMATCHIF{\horstEQ{\horstParVarop}{\horstConstILXIVDIVS}},\horstOpAppidivs{64}{\horstVara,\horstVarb}},{{},\horstMATCHIF{\horstEQ{\horstParVarop}{\horstConstILXIVDIVU}},\horstOpAppidivu{64}{\horstVara,\horstVarb}},{{},\horstMATCHIF{\horstEQ{\horstParVarop}{\horstConstILXIVREMS}},\horstOpAppirems{64}{\horstVara,\horstVarb}},{{},\horstMATCHIF{\horstEQ{\horstParVarop}{\horstConstILXIVREMU}},\horstOpAppiremu{64}{\horstVara,\horstVarb}},{{},\horstOTHERWISE,\horstConstructorAppNothingV{}}}}
\end{horstOperation}
\begin{horstOperation}{cvtOp}
  \horstName{cvtOp}
  \horstParVar{op}{\horstTypeint}
  \horstVar{a}{\horstTypeValue}
  \horstReturnType{\horstTypeValue}
  \horstBody{\horstMatchExp{{{},\horstMATCHIF{\horstEQ{\horstParVarop}{\horstConstIXXXIIWRAPILXIV}},\horstOpAppwrap{32,32}{\horstVara}},{{},\horstMATCHIF{\horstEQ{\horstParVarop}{\horstConstIXXXIITRUNCSFXXXII}},\horstOpApptruncs{32,32}{\horstVara}},{{},\horstMATCHIF{\horstEQ{\horstParVarop}{\horstConstIXXXIITRUNCUFXXXII}},\horstOpApptruncu{32,32}{\horstVara}},{{},\horstMATCHIF{\horstEQ{\horstParVarop}{\horstConstIXXXIITRUNCSFLXIV}},\horstOpApptruncs{32,64}{\horstVara}},{{},\horstMATCHIF{\horstEQ{\horstParVarop}{\horstConstIXXXIITRUNCUFLXIV}},\horstOpApptruncu{32,64}{\horstVara}},{{},\horstMATCHIF{\horstEQ{\horstParVarop}{\horstConstILXIVEXTENDIXXXIIS}},\horstOpAppextends{32,64}{\horstVara}},{{},\horstMATCHIF{\horstEQ{\horstParVarop}{\horstConstILXIVEXTENDIXXXIIU}},\horstOpAppextendu{32,64}{\horstVara}},{{},\horstMATCHIF{\horstEQ{\horstParVarop}{\horstConstILXIVTRUNCFXXXIIS}},\horstOpApptruncs{64,32}{\horstVara}},{{},\horstMATCHIF{\horstEQ{\horstParVarop}{\horstConstILXIVTRUNCFXXXIIU}},\horstOpApptruncu{64,32}{\horstVara}},{{},\horstMATCHIF{\horstEQ{\horstParVarop}{\horstConstILXIVTRUNCFLXIVS}},\horstOpApptruncs{64,64}{\horstVara}},{{},\horstMATCHIF{\horstEQ{\horstParVarop}{\horstConstILXIVTRUNCFLXIVU}},\horstOpApptruncu{64,64}{\horstVara}},{{},\horstMATCHIF{\horstEQ{\horstParVarop}{\horstConstFXXXIICONVERTIXXXIIS}},\horstOpAppconverts{32,32}{\horstVara}},{{},\horstMATCHIF{\horstEQ{\horstParVarop}{\horstConstFXXXIICONVERTIXXXIIU}},\horstOpAppconvertu{32,32}{\horstVara}},{{},\horstMATCHIF{\horstEQ{\horstParVarop}{\horstConstFXXXIICONVERTILXIVS}},\horstOpAppconverts{32,64}{\horstVara}},{{},\horstMATCHIF{\horstEQ{\horstParVarop}{\horstConstFXXXIICONVERTILXIVU}},\horstOpAppconvertu{32,64}{\horstVara}},{{},\horstMATCHIF{\horstEQ{\horstParVarop}{\horstConstFXXXIIDEMOTEFLXIV}},\horstOpAppdemote{32,64}{\horstVara}},{{},\horstMATCHIF{\horstEQ{\horstParVarop}{\horstConstFLXIVCONVERTIXXXIIS}},\horstOpAppconverts{64,32}{\horstVara}},{{},\horstMATCHIF{\horstEQ{\horstParVarop}{\horstConstFLXIVCONVERTIXXXIIU}},\horstOpAppconvertu{64,32}{\horstVara}},{{},\horstMATCHIF{\horstEQ{\horstParVarop}{\horstConstFLXIVCONVERTILXIVS}},\horstOpAppconverts{64,64}{\horstVara}},{{},\horstMATCHIF{\horstEQ{\horstParVarop}{\horstConstFLXIVCONVERTILXIVU}},\horstOpAppconvertu{64,64}{\horstVara}},{{},\horstMATCHIF{\horstEQ{\horstParVarop}{\horstConstFLXIVPROMOTEFXXXII}},\horstOpApppromote{64,32}{\horstVara}},{{},\horstMATCHIF{\horstEQ{\horstParVarop}{\horstConstIXXXIIREINTERPRETFXXXII}},\horstOpAppreinterpret{32,32}{\horstVara}},{{},\horstMATCHIF{\horstEQ{\horstParVarop}{\horstConstILXIVREINTERPRETFLXIV}},\horstOpAppreinterpret{64,64}{\horstVara}},{{},\horstMATCHIF{\horstEQ{\horstParVarop}{\horstConstFXXXIIREINTERPRETIXXXII}},\horstOpAppreinterpret{32,32}{\horstVara}},{{},\horstMATCHIF{\horstEQ{\horstParVarop}{\horstConstFLXIVREINTERPRETILXIV}},\horstOpAppreinterpret{64,64}{\horstVara}},{{},\horstOTHERWISE,\horstOpAppfreeValOrTop{}{}}}}
\end{horstOperation}
\begin{horstOperation}{isInRangeAux}
  \horstName{isInRangeAux}
  \horstParVar{bw}{\horstTypeint}
  \horstVar{x}{\horstTypeBVLXIV}
  \horstReturnType{\horstTypebool}
  \horstBody{\horstOR{\horstGE{\horstParVarbw}{64}}{\horstOpAppbvult{64}{\horstVarx,\horstOpAppcinj{\horstOpApppow{\horstParVarbw}{2}}{}}}}
\end{horstOperation}
\begin{horstOperation}{isVal}
  \horstName{isVal}
  \horstVar{v}{\horstTypeValue}
  \horstReturnType{\horstTypebool}
  \horstBody{\horstMatchExp{{{{x}},\horstMATCHIF{\horstEQ{\horstVarv}{\horstConstructorAppVal{\horstVarx}}},\horstTrue},{{},\horstOTHERWISE,\horstFalse}}}
\end{horstOperation}
\begin{horstOperation}{isInRange}
  \horstName{isInRange}
  \horstParVar{bw}{\horstTypeint}
  \horstVar{v}{\horstTypeValue}
  \horstReturnType{\horstTypebool}
  \horstBody{\horstMatchExp{{{{x}},\horstMATCHIF{\horstEQ{\horstVarv}{\horstConstructorAppVal{\horstVarx}}},\horstOpAppisInRangeAux{\horstParVarbw}{\horstVarx}},{{},\horstOTHERWISE,\horstFalse}}}
\end{horstOperation}
\begin{horstOperation}{boundOKAux}
  \horstName{boundOKAux}
  \horstParVar{bw}{\horstTypeint}
  \horstParVar{offset}{\horstTypeint}
  \horstVar{size}{\horstTypeBVLXIV}
  \horstVar{start}{\horstTypeBVLXIV}
  \horstReturnType{\horstTypebool}
  \horstBody{\horstOpAppbvule{64}{\horstOpAppbvadd{64}{\horstVarstart,\horstOpAppcintIIbv{64,\horstADD{\horstParVaroffset}{\horstDIV{\horstParVarbw}{8}}}{}},\horstOpAppbvmul{64}{\horstVarsize,\horstOpAppcintIIbv{64,\horstOpApppow{16}{2}}{}}}}
\end{horstOperation}
\begin{horstOperation}{storeValueAux}
  \horstName{storeValueAux}
  \horstParVar{i}{\horstTypeint}
  \horstVar{x}{\horstTypeBVLXIV}
  \horstReturnType{\horstTypeBVLXIV}
  \horstBody{\horstOpAppbvextract{64,8}{\horstOpAppbvlshr{64}{\horstVarx,\horstOpAppcintIIbv{64,\horstMUL{8}{\horstParVari}}{}}}}
\end{horstOperation}
\begin{horstOperation}{loadValueAux}
  \horstName{loadValueAux}
  \horstParVar{i}{\horstTypeint}
  \horstVar{x}{\horstTypeBVLXIV}
  \horstVar{y}{\horstTypeBVLXIV}
  \horstReturnType{\horstTypeBVLXIV}
  \horstBody{\horstOpAppbvor{64}{\horstVarx,\horstOpAppbvshl{64}{\horstVary,\horstOpAppcintIIbv{64,\horstMUL{\horstParVari}{8}}{}}}}
\end{horstOperation}
\begin{horstOperation}{growAdd}
  \horstName{growAdd}
  \horstVar{p}{\horstTypeBVLXIV}
  \horstVar{size}{\horstTypeBVLXIV}
  \horstReturnType{\horstTypeBVLXIV}
  \horstBody{\horstOpAppbvadd{64}{\horstVarp,\horstVarsize}}
\end{horstOperation}
\begin{horstOperation}{growOK}
  \horstName{growOK}
  \horstVar{p}{\horstTypeBVLXIV}
  \horstVar{size}{\horstTypeBVLXIV}
  \horstVar{max}{\horstTypeBVLXIV}
  \horstReturnType{\horstTypebool}
  \horstBody{\horstAND{\horstOpAppbvule{64}{\horstVarp,\horstOpAppcintIIbv{64,\horstOpApppow{16}{2}}{}}}{\horstOpAppbvule{64}{\horstOpAppgrowAdd{}{\horstVarp,\horstVarsize},\horstVarmax}}}
\end{horstOperation}
\begin{horstOperation}{wasmStore}
  \horstName{wasmStore}
  \horstParVar{bw}{\horstTypeint}
  \horstParVar{n}{\horstTypeint}
  \horstParVar{offset}{\horstTypeint}
  \horstVar{mem}{\horstTypeArray{\horstTypeValue}}
  \horstVar{i}{\horstTypeint}
  \horstVar{x}{\horstTypeValue}
  \horstReturnType{\horstTypeArray{\horstTypeValue}}
  \horstBody{\horstARRAYINIT{\horstOpAppfreeValOrTop{}{}}}
\end{horstOperation}
\begin{horstOperation}{wasmLoad}
  \horstName{wasmLoad}
  \horstParVar{bw}{\horstTypeint}
  \horstParVar{n}{\horstTypeint}
  \horstParVar{signed}{\horstTypebool}
  \horstParVar{offset}{\horstTypeint}
  \horstVar{mem}{\horstTypeArray{\horstTypeValue}}
  \horstVar{i}{\horstTypeint}
  \horstReturnType{\horstTypeValue}
  \horstBody{\horstOpAppfreeValOrTop{}{}}
\end{horstOperation}
\begin{horstOperation}{boundOK}
  \horstName{boundOK}
  \horstParVar{bw}{\horstTypeint}
  \horstParVar{offset}{\horstTypeint}
  \horstVar{size}{\horstTypeValue}
  \horstVar{start}{\horstTypeValue}
  \horstReturnType{\horstTypebool}
  \horstBody{\horstOpAppboundOKAux{\horstParVarbw,\horstParVaroffset}{\horstOpAppbase{}{\horstVarsize},\horstOpAppbase{}{\horstVarstart}}}
\end{horstOperation}
\begin{horstOperation}{grow}
  \horstName{grow}
  \horstVar{pages}{\horstTypeValue}
  \horstVar{size}{\horstTypeValue}
  \horstVar{max}{\horstTypeValue}
  \horstReturnType{\horstTypeValue}
  \horstBody{\horstMatchExp{{{{p}{s}{m}},\horstMATCHIF{\horstAND{\horstAND{\horstEQ{\horstVarpages}{\horstConstructorAppVal{\horstVarp}}}{\horstEQ{\horstVarsize}{\horstConstructorAppVal{\horstVars}}}}{\horstEQ{\horstVarmax}{\horstConstructorAppVal{\horstVarm}}}},\horstCOND{\horstOpAppgrowOK{}{\horstVarp,\horstVars,\horstVarm}}{\horstConstructorAppVal{\horstOpAppgrowAdd{}{\horstVarp,\horstVars}}}{\horstVarsize}},{{},\horstOTHERWISE,\horstOpAppfreeValOrTop{}{}}}}
\end{horstOperation}
\begin{horstOperation}{memOpBw}
  \horstName{memOpBw}
  \horstParVar{op}{\horstTypeint}
  \horstReturnType{\horstTypeint}
  \horstBody{\horstMatchExp{{{},\horstMATCHIF{\horstEQ{\horstParVarop}{\horstConstIXXXIILOADMEM}},32},{{},\horstMATCHIF{\horstEQ{\horstParVarop}{\horstConstFXXXIILOADMEM}},32},{{},\horstMATCHIF{\horstEQ{\horstParVarop}{\horstConstIXXXIILOADMEMVIIIS}},32},{{},\horstMATCHIF{\horstEQ{\horstParVarop}{\horstConstIXXXIILOADMEMVIIIU}},32},{{},\horstMATCHIF{\horstEQ{\horstParVarop}{\horstConstIXXXIILOADMEMXVIU}},32},{{},\horstMATCHIF{\horstEQ{\horstParVarop}{\horstConstIXXXIILOADMEMXVIS}},32},{{},\horstMATCHIF{\horstEQ{\horstParVarop}{\horstConstIXXXIISTOREMEM}},32},{{},\horstMATCHIF{\horstEQ{\horstParVarop}{\horstConstFXXXIISTOREMEM}},32},{{},\horstMATCHIF{\horstEQ{\horstParVarop}{\horstConstIXXXIISTOREMEMVIII}},32},{{},\horstMATCHIF{\horstEQ{\horstParVarop}{\horstConstIXXXIISTOREMEMXVI}},32},{{},\horstOTHERWISE,64}}}
\end{horstOperation}
\begin{horstOperation}{memOpTbw}
  \horstName{memOpTbw}
  \horstParVar{op}{\horstTypeint}
  \horstReturnType{\horstTypeint}
  \horstBody{\horstMatchExp{{{},\horstMATCHIF{\horstEQ{\horstParVarop}{\horstConstIXXXIILOADMEM}},32},{{},\horstMATCHIF{\horstEQ{\horstParVarop}{\horstConstILXIVLOADMEM}},64},{{},\horstMATCHIF{\horstEQ{\horstParVarop}{\horstConstFXXXIILOADMEM}},32},{{},\horstMATCHIF{\horstEQ{\horstParVarop}{\horstConstIXXXIILOADMEMVIIIS}},8},{{},\horstMATCHIF{\horstEQ{\horstParVarop}{\horstConstIXXXIILOADMEMVIIIU}},8},{{},\horstMATCHIF{\horstEQ{\horstParVarop}{\horstConstIXXXIILOADMEMXVIS}},16},{{},\horstMATCHIF{\horstEQ{\horstParVarop}{\horstConstIXXXIILOADMEMXVIU}},16},{{},\horstMATCHIF{\horstEQ{\horstParVarop}{\horstConstILXIVLOADMEMVIIIS}},8},{{},\horstMATCHIF{\horstEQ{\horstParVarop}{\horstConstILXIVLOADMEMVIIIU}},8},{{},\horstMATCHIF{\horstEQ{\horstParVarop}{\horstConstILXIVLOADMEMXVIS}},16},{{},\horstMATCHIF{\horstEQ{\horstParVarop}{\horstConstILXIVLOADMEMXVIU}},16},{{},\horstMATCHIF{\horstEQ{\horstParVarop}{\horstConstILXIVLOADMEMXXXIIS}},32},{{},\horstMATCHIF{\horstEQ{\horstParVarop}{\horstConstILXIVLOADMEMXXXIIU}},32},{{},\horstMATCHIF{\horstEQ{\horstParVarop}{\horstConstIXXXIISTOREMEM}},32},{{},\horstMATCHIF{\horstEQ{\horstParVarop}{\horstConstILXIVSTOREMEM}},64},{{},\horstMATCHIF{\horstEQ{\horstParVarop}{\horstConstFXXXIISTOREMEM}},32},{{},\horstMATCHIF{\horstEQ{\horstParVarop}{\horstConstFLXIVSTOREMEM}},64},{{},\horstMATCHIF{\horstEQ{\horstParVarop}{\horstConstIXXXIISTOREMEMVIII}},8},{{},\horstMATCHIF{\horstEQ{\horstParVarop}{\horstConstIXXXIISTOREMEMXVI}},16},{{},\horstMATCHIF{\horstEQ{\horstParVarop}{\horstConstILXIVSTOREMEMVIII}},8},{{},\horstMATCHIF{\horstEQ{\horstParVarop}{\horstConstILXIVSTOREMEMXVI}},16},{{},\horstMATCHIF{\horstEQ{\horstParVarop}{\horstConstILXIVSTOREMEMXXXII}},32},{{},\horstOTHERWISE,64}}}
\end{horstOperation}
\begin{horstOperation}{memLoadSigned}
  \horstName{memLoadSigned}
  \horstParVar{op}{\horstTypeint}
  \horstReturnType{\horstTypebool}
  \horstBody{\horstMatchExp{{{},\horstMATCHIF{\horstEQ{\horstParVarop}{\horstConstIXXXIILOADMEMVIIIS}},\horstTrue},{{},\horstMATCHIF{\horstEQ{\horstParVarop}{\horstConstIXXXIILOADMEMXVIS}},\horstTrue},{{},\horstMATCHIF{\horstEQ{\horstParVarop}{\horstConstILXIVLOADMEMVIIIS}},\horstTrue},{{},\horstMATCHIF{\horstEQ{\horstParVarop}{\horstConstILXIVLOADMEMXVIS}},\horstTrue},{{},\horstMATCHIF{\horstEQ{\horstParVarop}{\horstConstILXIVLOADMEMXXXIIS}},\horstTrue},{{},\horstOTHERWISE,\horstFalse}}}
\end{horstOperation}
\begin{horstOperation}{memLoad}
  \horstName{memLoad}
  \horstParVar{op}{\horstTypeint}
  \horstParVar{offset}{\horstTypeint}
  \horstVar{mem}{\horstTypeArray{\horstTypeValue}}
  \horstVar{i}{\horstTypeValue}
  \horstReturnType{\horstTypeValue}
  \horstBody{\horstMatchExp{{{},\horstMATCHIF{\horstEQ{\horstParVarop}{\horstConstFXXXIILOADMEM}},\horstOpAppfreeValOrTop{}{}},{{},\horstMATCHIF{\horstEQ{\horstParVarop}{\horstConstFLXIVLOADMEM}},\horstOpAppfreeValOrTop{}{}},{{},\horstOTHERWISE,\horstOpAppwasmLoad{\horstOpAppmemOpBw{\horstParVarop}{},\horstOpAppmemOpTbw{\horstParVarop}{},\horstOpAppmemLoadSigned{\horstParVarop}{},\horstParVaroffset}{\horstVarmem,\horstVari}}}}
\end{horstOperation}
\begin{horstOperation}{memStore}
  \horstName{memStore}
  \horstParVar{op}{\horstTypeint}
  \horstParVar{offset}{\horstTypeint}
  \horstVar{mem}{\horstTypeArray{\horstTypeValue}}
  \horstVar{i}{\horstTypeValue}
  \horstVar{x}{\horstTypeValue}
  \horstReturnType{\horstTypeArray{\horstTypeValue}}
  \horstBody{\horstMatchExp{{{},\horstMATCHIF{\horstEQ{\horstParVarop}{\horstConstFXXXIISTOREMEM}},\horstOpAppwasmStore{\horstOpAppmemOpBw{\horstParVarop}{},\horstOpAppmemOpTbw{\horstParVarop}{},\horstParVaroffset}{\horstVarmem,\horstVari,\horstOpAppfreeValOrTop{}{}}},{{},\horstMATCHIF{\horstEQ{\horstParVarop}{\horstConstFLXIVSTOREMEM}},\horstOpAppwasmStore{\horstOpAppmemOpBw{\horstParVarop}{},\horstOpAppmemOpTbw{\horstParVarop}{},\horstParVaroffset}{\horstVarmem,\horstVari,\horstOpAppfreeValOrTop{}{}}},{{},\horstOTHERWISE,\horstOpAppwasmStore{\horstOpAppmemOpBw{\horstParVarop}{},\horstOpAppmemOpTbw{\horstParVarop}{},\horstParVaroffset}{\horstVarmem,\horstVari,\horstVarx}}}}
\end{horstOperation}
\begin{horstOperation}{mkLabel}
  \horstName{mkLabel}
  \horstVar{c}{\horstTypebool}
  \horstVar{i}{\horstTypebool}
  \horstReturnType{\horstTypeLabel}
  \horstBody{\horstConstructorAppLbl{\horstVarc,\horstVari}}
\end{horstOperation}
\begin{horstOperation}{mkTop}
  \horstName{mkTop}
  \horstReturnType{\horstTypeLabel}
  \horstBody{\horstConstructorAppLbl{\horstTrue,\horstFalse}}
\end{horstOperation}
\begin{horstOperation}{mkUntrusted}
  \horstName{mkUntrusted}
  \horstReturnType{\horstTypeLabel}
  \horstBody{\horstConstructorAppLbl{\horstFalse,\horstFalse}}
\end{horstOperation}
\begin{horstOperation}{mkSecret}
  \horstName{mkSecret}
  \horstReturnType{\horstTypeLabel}
  \horstBody{\horstConstructorAppLbl{\horstTrue,\horstTrue}}
\end{horstOperation}
\begin{horstOperation}{mkBot}
  \horstName{mkBot}
  \horstReturnType{\horstTypeLabel}
  \horstBody{\horstConstructorAppLbl{\horstFalse,\horstTrue}}
\end{horstOperation}
\begin{horstOperation}{glb}
  \horstName{glb}
  \horstParVar{sz}{\horstTypeint}
  \horstVar{x}{\horstTypeHomInit{\horstTypeLabel}{\horstParVarsz}}
  \horstReturnType{\horstTypeLabel}
  \horstBody{\horstCustomSumExp{\horstSelectorFunctionAppinterval{\horstParVari}{1,\horstParVarsz}}{acc}{\horstMatchExp{{{{iI}{iII}{cI}{cII}},\horstMATCHIF{\horstAND{\horstEQ{\horstVaracc}{\horstConstructorAppLbl{\horstVarcI,\horstVariI}}}{\horstEQ{\horstACCESS{\horstVarx}{\horstParVari}}{\horstConstructorAppLbl{\horstVarcII,\horstVariII}}}},\horstConstructorAppLbl{\horstAND{\horstVarcI}{\horstVarcII},\horstOR{\horstVariI}{\horstVariII}}},{{},\horstOTHERWISE,\horstOpAppmkTop{}{}}}}{\horstACCESS{\horstVarx}{0}}{{i}}}
\end{horstOperation}
\begin{horstOperation}{lub}
  \horstName{lub}
  \horstParVar{sz}{\horstTypeint}
  \horstVar{x}{\horstTypeHomInit{\horstTypeLabel}{\horstParVarsz}}
  \horstReturnType{\horstTypeLabel}
  \horstBody{\horstCustomSumExp{\horstSelectorFunctionAppinterval{\horstParVari}{1,\horstParVarsz}}{acc}{\horstMatchExp{{{{iI}{iII}{cI}{cII}},\horstMATCHIF{\horstAND{\horstEQ{\horstVaracc}{\horstConstructorAppLbl{\horstVarcI,\horstVariI}}}{\horstEQ{\horstACCESS{\horstVarx}{\horstParVari}}{\horstConstructorAppLbl{\horstVarcII,\horstVariII}}}},\horstConstructorAppLbl{\horstOR{\horstVarcI}{\horstVarcII},\horstAND{\horstVariI}{\horstVariII}}},{{},\horstOTHERWISE,\horstOpAppmkTop{}{}}}}{\horstACCESS{\horstVarx}{0}}{{i}}}
\end{horstOperation}
\begin{horstOperation}{flowsTo}
  \horstName{flowsTo}
  \horstVar{a}{\horstTypeLabel}
  \horstVar{b}{\horstTypeLabel}
  \horstReturnType{\horstTypebool}
  \horstBody{\horstMatchExp{{{{iI}{iII}{cI}{cII}},\horstMATCHIF{\horstAND{\horstEQ{\horstVara}{\horstConstructorAppLbl{\horstVarcI,\horstVariI}}}{\horstEQ{\horstVarb}{\horstConstructorAppLbl{\horstVarcII,\horstVariII}}}},\horstAND{\horstOR{\horstNOT{\horstVarcI}}{\horstVarcII}}{\horstOR{\horstVariI}{\horstNOT{\horstVariII}}}},{{},\horstOTHERWISE,\horstFalse}}}
\end{horstOperation}
\begin{horstOperation}{mkFlowLabel}
  \horstName{mkFlowLabel}
  \horstVar{a}{\horstTypeLabel}
  \horstVar{b}{\horstTypeLabel}
  \horstReturnType{\horstTypeFlowLabel}
  \horstBody{\horstCOND{\horstOpAppflowsTo{}{\horstVara,\horstVarb}}{\horstConstructorAppLegal{}}{\horstConstructorAppIllegal{}}}
\end{horstOperation}
\begin{horstOperation}{flub}
  \horstName{flub}
  \horstParVar{sz}{\horstTypeint}
  \horstVar{x}{\horstTypeHomInit{\horstTypeFlowLabel}{\horstParVarsz}}
  \horstReturnType{\horstTypeFlowLabel}
  \horstBody{\horstCustomSumExp{\horstSelectorFunctionAppinterval{\horstParVari}{1,\horstParVarsz}}{acc}{\horstCOND{\horstEQ{\horstVaracc}{\horstConstructorAppIllegal{}}}{\horstConstructorAppIllegal{}}{\horstACCESS{\horstVarx}{\horstParVari}}}{\horstACCESS{\horstVarx}{0}}{{i}}}
\end{horstOperation}
\begin{horstOperation}{mkCtx}
  \horstName{mkCtx}
  \horstVar{p}{\horstTypeLabel}
  \horstVar{l}{\horstTypeFlowLabel}
  \horstReturnType{\horstTypeContext}
  \horstBody{\horstConstructorAppCtx{\horstVarp,\horstVarl,-1}}
\end{horstOperation}
\begin{horstOperation}{perspectiveOfCtx}
  \horstName{perspectiveOfCtx}
  \horstVar{ctx}{\horstTypeContext}
  \horstReturnType{\horstTypeLabel}
  \horstBody{\horstMatchExp{{{{p}},\horstMATCHIF{\horstEQ{\horstVarctx}{\horstConstructorAppCtx{\horstVarp,\horstVarWILDCARD,\horstVarWILDCARD}}},\horstVarp},{{},\horstOTHERWISE,\horstOpAppmkBot{}{}}}}
\end{horstOperation}
\begin{horstOperation}{labelOfCtx}
  \horstName{labelOfCtx}
  \horstVar{ctx}{\horstTypeContext}
  \horstReturnType{\horstTypeFlowLabel}
  \horstBody{\horstMatchExp{{{{l}},\horstMATCHIF{\horstEQ{\horstVarctx}{\horstConstructorAppCtx{\horstVarWILDCARD,\horstVarl,\horstVarWILDCARD}}},\horstVarl},{{},\horstOTHERWISE,\horstConstructorAppIllegal{}}}}
\end{horstOperation}
\begin{horstOperation}{raiseCtxTo}
  \horstName{raiseCtxTo}
  \horstParVar{pc}{\horstTypeint}
  \horstVar{ctx}{\horstTypeContext}
  \horstVar{l}{\horstTypeFlowLabel}
  \horstReturnType{\horstTypeContext}
  \horstBody{\horstMatchExp{{{{p}{cl}{from}},\horstMATCHIF{\horstEQ{\horstVarctx}{\horstConstructorAppCtx{\horstVarp,\horstVarcl,\horstVarfrom}}},\horstConstructorAppCtx{\horstVarp,\horstOpAppflub{2}{\horstTUPINIT{\horstVarcl,\horstVarl}},\horstCOND{\horstOR{\horstEQ{\horstVarl}{\horstConstructorAppLegal{}}}{\horstEQ{\horstVarcl}{\horstConstructorAppIllegal{}}}}{\horstVarfrom}{\horstParVarpc}}},{{},\horstOTHERWISE,\horstVarctx}}}
\end{horstOperation}
\begin{horstOperation}{mkLConst}
  \horstName{mkLConst}
  \horstParVar{i}{\horstTypeint}
  \horstReturnType{\horstTypeLValue}
  \horstBody{\horstConstructorAppLVal{\horstOpAppmkConst{\horstParVari}{},\horstConstructorAppLegal{}}}
\end{horstOperation}
\begin{horstOperation}{mkLValue}
  \horstName{mkLValue}
  \horstVar{i}{\horstTypeint}
  \horstReturnType{\horstTypeLValue}
  \horstBody{\horstConstructorAppLVal{\horstOpAppmkValue{}{\horstVari},\horstConstructorAppLegal{}}}
\end{horstOperation}
\begin{horstOperation}{lval}
  \horstName{lval}
  \horstParVar{top}{\horstTypebool}
  \horstParVar{v}{\horstTypeint}
  \horstReturnType{\horstTypeLValue}
  \horstBody{\horstConstructorAppLVal{\horstOpAppval{\horstParVartop,\horstParVarv}{},\horstConstructorAppLegal{}}}
\end{horstOperation}
\begin{horstOperation}{valueOf}
  \horstName{valueOf}
  \horstVar{x}{\horstTypeLValue}
  \horstReturnType{\horstTypeValue}
  \horstBody{\horstMatchExp{{{{v}},\horstMATCHIF{\horstEQ{\horstVarx}{\horstConstructorAppLVal{\horstVarv,\horstVarWILDCARD}}},\horstVarv},{{},\horstOTHERWISE,\horstOpAppfreeValOrTop{}{}}}}
\end{horstOperation}
\begin{horstOperation}{labelOf}
  \horstName{labelOf}
  \horstVar{x}{\horstTypeLValue}
  \horstReturnType{\horstTypeFlowLabel}
  \horstBody{\horstMatchExp{{{{l}},\horstMATCHIF{\horstEQ{\horstVarx}{\horstConstructorAppLVal{\horstVarWILDCARD,\horstVarl}}},\horstVarl},{{},\horstOTHERWISE,\horstConstructorAppIllegal{}}}}
\end{horstOperation}
\begin{horstOperation}{raiseTo}
  \horstName{raiseTo}
  \horstVar{x}{\horstTypeLValue}
  \horstVar{l}{\horstTypeFlowLabel}
  \horstReturnType{\horstTypeLValue}
  \horstBody{\horstConstructorAppLVal{\horstOpAppvalueOf{}{\horstVarx},\horstOpAppflub{2}{\horstTUPINIT{\horstVarl,\horstOpApplabelOf{}{\horstVarx}}}}}
\end{horstOperation}
\begin{horstOperation}{join}
  \horstName{join}
  \horstVar{x}{\horstTypeLValue}
  \horstVar{y}{\horstTypeLValue}
  \horstVar{z}{\horstTypeLValue}
  \horstReturnType{\horstTypebool}
  \horstBody{\horstEQ{\horstVarz}{\horstConstructorAppLVal{\horstOpAppvalueOf{}{\horstVarx},\horstCOND{\horstAND{\horstEQ{\horstOpAppflub{2}{\horstTUPINIT{\horstOpApplabelOf{}{\horstVarx},\horstOpApplabelOf{}{\horstVary}}}}{\horstConstructorAppIllegal{}}}{\horstOpAppabsneq{}{\horstOpAppvalueOf{}{\horstVarx},\horstOpAppvalueOf{}{\horstVary}}}}{\horstConstructorAppIllegal{}}{\horstConstructorAppLegal{}}}}}
\end{horstOperation}
\begin{horstOperation}{set}
  \horstName{set}
  \horstParVar{size}{\horstTypeint}
  \horstParVar{idx}{\horstTypeint}
  \horstVar{x}{\horstTypeLValue}
  \horstVar{xs}{\horstTypeHomInit{\horstTypeLValue}{\horstParVarsize}}
  \horstReturnType{\horstTypeHomInit{\horstTypeLValue}{\horstParVarsize}}
  \horstBody{\horstCONCAT{\horstSLICE{\horstVarxs}{}{\horstParVaridx}}{\horstCONCAT{\horstTUPINIT{\horstVarx}}{\horstSLICE{\horstVarxs}{\horstOpAppmin{}{\horstADD{\horstParVaridx}{1},\horstParVarsize}}{\horstParVarsize}}}}
\end{horstOperation}
\begin{horstOperation}{drop}
  \horstName{drop}
  \horstParVar{size}{\horstTypeint}
  \horstParVar{from}{\horstTypeint}
  \horstParVar{n}{\horstTypeint}
  \horstVar{xs}{\horstTypeHomInit{\horstTypeLValue}{\horstParVarsize}}
  \horstReturnType{\horstTypeHomInit{\horstTypeLValue}{\horstSUB{\horstParVarsize}{\horstParVarn}}}
  \horstBody{\horstCONCAT{\horstSLICE{\horstVarxs}{}{\horstParVarfrom}}{\horstSLICE{\horstVarxs}{\horstADD{\horstParVarfrom}{\horstParVarn}}{}}}
\end{horstOperation}
\begin{horstOperation}{reverse}
  \horstName{reverse}
  \horstParVar{sz}{\horstTypeint}
  \horstVar{x}{\horstTypeHomInit{\horstTypeLValue}{\horstParVarsz}}
  \horstReturnType{\horstTypeHomInit{\horstTypeLValue}{\horstParVarsz}}
  \horstBody{\horstCustomSumExp{\horstSelectorFunctionAppinterval{\horstParVari}{0,\horstParVarsz}}{a}{\horstOpAppset{\horstParVarsz,\horstParVari}{\horstACCESS{\horstVarx}{\horstSUB{\horstSUB{\horstParVarsz}{1}}{\horstParVari}},\horstVara}}{\horstVarx}{{i}}}
\end{horstOperation}
\begin{horstOperation}{labelsOf}
  \horstName{labelsOf}
  \horstParVar{sz}{\horstTypeint}
  \horstVar{x}{\horstTypeHomInit{\horstTypeLValue}{\horstParVarsz}}
  \horstReturnType{\horstTypeHomInit{\horstTypeFlowLabel}{\horstParVarsz}}
  \horstBody{\horstCustomSumExp{\horstSelectorFunctionAppinterval{\horstParVari}{0,\horstParVarsz}}{a}{\horstCONCAT{\horstSLICE{\horstVara}{}{\horstParVari}}{\horstCONCAT{\horstTUPINIT{\horstOpApplabelOf{}{\horstACCESS{\horstVarx}{\horstParVari}}}}{\horstSLICE{\horstVara}{\horstOpAppmin{}{\horstADD{\horstParVari}{1},\horstParVarsz}}{\horstParVarsz}}}}{\horstHOMINIT{\horstConstructorAppLegal{}}{\horstParVarsz}}{{i}}}
\end{horstOperation}
\begin{horstOperation}{valuesOf}
  \horstName{valuesOf}
  \horstParVar{sz}{\horstTypeint}
  \horstVar{x}{\horstTypeHomInit{\horstTypeLValue}{\horstParVarsz}}
  \horstReturnType{\horstTypeHomInit{\horstTypeValue}{\horstParVarsz}}
  \horstBody{\horstCustomSumExp{\horstSelectorFunctionAppinterval{\horstParVari}{0,\horstParVarsz}}{a}{\horstCONCAT{\horstSLICE{\horstVara}{}{\horstParVari}}{\horstCONCAT{\horstTUPINIT{\horstOpAppvalueOf{}{\horstACCESS{\horstVarx}{\horstParVari}}}}{\horstSLICE{\horstVara}{\horstOpAppmin{}{\horstADD{\horstParVari}{1},\horstParVarsz}}{\horstParVarsz}}}}{\horstHOMINIT{\horstOpAppmkConst{0}{}}{\horstParVarsz}}{{i}}}
\end{horstOperation}
\begin{horstOperation}{lowEq}
  \horstName{lowEq}
  \horstParVar{sz}{\horstTypeint}
  \horstVar{x}{\horstTypeHomInit{\horstTypeLValue}{\horstParVarsz}}
  \horstVar{y}{\horstTypeHomInit{\horstTypeLValue}{\horstParVarsz}}
  \horstReturnType{\horstTypebool}
  \horstBody{\horstSimpleSumExp{AND}{\horstSelectorFunctionAppinterval{\horstParVari}{0,\horstParVarsz}}{\horstCOND{\horstEQ{\horstOpApplabelOf{}{\horstACCESS{\horstVarx}{\horstParVari}}}{\horstConstructorAppLegal{}}}{\horstEQ{\horstACCESS{\horstVarx}{\horstParVari}}{\horstACCESS{\horstVary}{\horstParVari}}}{\horstEQ{\horstOpApplabelOf{}{\horstACCESS{\horstVarx}{\horstParVari}}}{\horstOpApplabelOf{}{\horstACCESS{\horstVary}{\horstParVari}}}}}{{i}}}
\end{horstOperation}
\begin{horstOperation}{joinTuples}
  \horstName{joinTuples}
  \horstParVar{sz}{\horstTypeint}
  \horstVar{x}{\horstTypeHomInit{\horstTypeLValue}{\horstParVarsz}}
  \horstVar{y}{\horstTypeHomInit{\horstTypeLValue}{\horstParVarsz}}
  \horstVar{res}{\horstTypeHomInit{\horstTypeLValue}{\horstParVarsz}}
  \horstReturnType{\horstTypebool}
  \horstBody{\horstSimpleSumExp{AND}{\horstSelectorFunctionAppinterval{\horstParVari}{0,\horstParVarsz}}{\horstOpAppjoin{}{\horstACCESS{\horstVarx}{\horstParVari},\horstACCESS{\horstVary}{\horstParVari},\horstACCESS{\horstVarres}{\horstParVari}}}{{i}}}
\end{horstOperation}
\begin{horstOperation}{indexOfMem}
  \horstName{indexOfMem}
  \horstVar{memory}{\horstTypeMemory}
  \horstReturnType{\horstTypeValue}
  \horstBody{\horstMatchExp{{{{i}},\horstMATCHIF{\horstEQ{\horstVarmemory}{\horstConstructorAppMem{\horstVari,\horstVarWILDCARD,\horstVarWILDCARD}}},\horstVari},{{},\horstOTHERWISE,\horstOpAppfreeValOrTop{}{}}}}
\end{horstOperation}
\begin{horstOperation}{valueOfMem}
  \horstName{valueOfMem}
  \horstVar{memory}{\horstTypeMemory}
  \horstReturnType{\horstTypeLValue}
  \horstBody{\horstMatchExp{{{{v}},\horstMATCHIF{\horstEQ{\horstVarmemory}{\horstConstructorAppMem{\horstVarWILDCARD,\horstVarv,\horstVarWILDCARD}}},\horstVarv},{{},\horstOTHERWISE,\horstConstructorAppLVal{\horstOpAppfreeValOrTop{}{},\horstConstructorAppIllegal{}}}}}
\end{horstOperation}
\begin{horstOperation}{sizeOfMem}
  \horstName{sizeOfMem}
  \horstVar{memory}{\horstTypeMemory}
  \horstReturnType{\horstTypeLValue}
  \horstBody{\horstMatchExp{{{{n}},\horstMATCHIF{\horstEQ{\horstVarmemory}{\horstConstructorAppMem{\horstVarWILDCARD,\horstVarWILDCARD,\horstVarn}}},\horstVarn},{{},\horstOTHERWISE,\horstConstructorAppLVal{\horstOpAppfreeValOrTop{}{},\horstConstructorAppIllegal{}}}}}
\end{horstOperation}
\begin{horstOperation}{lowEqMem}
  \horstName{lowEqMem}
  \horstVar{x}{\horstTypeMemory}
  \horstVar{y}{\horstTypeMemory}
  \horstReturnType{\horstTypebool}
  \horstBody{\horstMatchExp{{{{iI}{iII}{vI}{vII}{sI}{sII}},\horstMATCHIF{\horstAND{\horstEQ{\horstVarx}{\horstConstructorAppMem{\horstVariI,\horstVarvI,\horstVarsI}}}{\horstEQ{\horstVary}{\horstConstructorAppMem{\horstVariII,\horstVarvII,\horstVarsII}}}},\horstAND{\horstEQ{\horstVariI}{\horstVariII}}{\horstOpApplowEq{2}{\horstTUPINIT{\horstVarvI,\horstVarsI},\horstTUPINIT{\horstVarvII,\horstVarsII}}}},{{},\horstOTHERWISE,\horstFalse}}}
\end{horstOperation}
\begin{horstOperation}{joinMem}
  \horstName{joinMem}
  \horstVar{x}{\horstTypeMemory}
  \horstVar{y}{\horstTypeMemory}
  \horstVar{z}{\horstTypeMemory}
  \horstReturnType{\horstTypebool}
  \horstBody{\horstMatchExp{{{{sIII}{iI}{iII}{iIII}{vI}{vII}{vIII}{sI}{sII}},\horstMATCHIF{\horstAND{\horstAND{\horstEQ{\horstVarx}{\horstConstructorAppMem{\horstVariI,\horstVarvI,\horstVarsI}}}{\horstEQ{\horstVary}{\horstConstructorAppMem{\horstVariII,\horstVarvII,\horstVarsII}}}}{\horstEQ{\horstVarz}{\horstConstructorAppMem{\horstVariIII,\horstVarvIII,\horstVarsIII}}}},\horstAND{\horstAND{\horstAND{\horstEQ{\horstVariI}{\horstVariII}}{\horstEQ{\horstVariI}{\horstVariIII}}}{\horstOpAppjoin{}{\horstVarvI,\horstVarvII,\horstVarvIII}}}{\horstOpAppjoin{}{\horstVarsI,\horstVarsII,\horstVarsIII}}},{{},\horstOTHERWISE,\horstFalse}}}
\end{horstOperation}
\begin{horstOperation}{overApproximateLoopGlobals}
  \horstName{overApproximateLoopGlobals}
  \horstParVar{fid}{\horstTypeint}
  \horstParVar{pc}{\horstTypeint}
  \horstVar{gt}{\horstTypeHomInit{\horstTypeLValue}{\horstOpAppgs{}{}}}
  \horstVar{ngt}{\horstTypeHomInit{\horstTypeLValue}{\horstOpAppgs{}{}}}
  \horstReturnType{\horstTypebool}
  \horstBody{\horstSimpleSumExp{AND}{\horstSelectorFunctionAppinterval{\horstParVari}{0,\horstOpAppgs{}{}},\horstSelectorFunctionAppbitwidthForGlobal{\horstParVarbw}{\horstParVari},\horstSelectorFunctionAppglobalModifiedInLoopBlock{\horstParVarg}{\horstParVarfid,\horstParVarpc,\horstParVari}}{\horstCOND{\horstParVarg}{\horstAND{\horstOpAppisInRange{\horstParVarbw}{\horstOpAppvalueOf{}{\horstACCESS{\horstVarngt}{\horstParVari}}}}{\horstEQ{\horstOpApplabelOf{}{\horstACCESS{\horstVargt}{\horstParVari}}}{\horstOpApplabelOf{}{\horstACCESS{\horstVarngt}{\horstParVari}}}}}{\horstEQ{\horstACCESS{\horstVargt}{\horstParVari}}{\horstACCESS{\horstVarngt}{\horstParVari}}}}{{i}{bw}{g}}}
\end{horstOperation}
\begin{horstOperation}{overApproximateLoopLocals}
  \horstName{overApproximateLoopLocals}
  \horstParVar{fid}{\horstTypeint}
  \horstParVar{pc}{\horstTypeint}
  \horstVar{lt}{\horstTypeHomInit{\horstTypeLValue}{\horstOpAppls{\horstParVarfid}{}}}
  \horstVar{nlt}{\horstTypeHomInit{\horstTypeLValue}{\horstOpAppls{\horstParVarfid}{}}}
  \horstReturnType{\horstTypebool}
  \horstBody{\horstSimpleSumExp{AND}{\horstSelectorFunctionAppinterval{\horstParVari}{0,\horstOpAppls{\horstParVarfid}{}},\horstSelectorFunctionAppbitwidthForLocal{\horstParVarbw}{\horstParVarfid,\horstParVari},\horstSelectorFunctionApplocalModifiedInLoopBlock{\horstParVarl}{\horstParVarfid,\horstParVarpc,\horstParVari}}{\horstCOND{\horstParVarl}{\horstAND{\horstOpAppisInRange{\horstParVarbw}{\horstOpAppvalueOf{}{\horstACCESS{\horstVarnlt}{\horstParVari}}}}{\horstEQ{\horstOpApplabelOf{}{\horstACCESS{\horstVarlt}{\horstParVari}}}{\horstOpApplabelOf{}{\horstACCESS{\horstVarnlt}{\horstParVari}}}}}{\horstEQ{\horstACCESS{\horstVarlt}{\horstParVari}}{\horstACCESS{\horstVarnlt}{\horstParVari}}}}{{i}{bw}{l}}}
\end{horstOperation}
\begin{horstOperation}{overApproximateLoopMemory}
  \horstName{overApproximateLoopMemory}
  \horstParVar{fid}{\horstTypeint}
  \horstParVar{pc}{\horstTypeint}
  \horstVar{mem}{\horstTypeMemory}
  \horstVar{nmem}{\horstTypeMemory}
  \horstReturnType{\horstTypebool}
  \horstBody{\horstAND{\horstEQ{\horstOpAppindexOfMem{}{\horstVarmem}}{\horstOpAppindexOfMem{}{\horstVarnmem}}}{\horstSimpleSumExp{AND}{\horstSelectorFunctionAppmemoryStoreInLoopBlock{\horstParVars}{\horstParVarfid,\horstParVarpc}}{\horstAND{\horstCOND{\horstParVars}{\horstAND{\horstOpAppisInRange{8}{\horstOpAppvalueOf{}{\horstOpAppvalueOfMem{}{\horstVarnmem}}}}{\horstEQ{\horstOpApplabelOf{}{\horstOpAppvalueOfMem{}{\horstVarmem}}}{\horstOpApplabelOf{}{\horstOpAppvalueOfMem{}{\horstVarnmem}}}}}{\horstEQ{\horstOpAppvalueOfMem{}{\horstVarmem}}{\horstOpAppvalueOfMem{}{\horstVarnmem}}}}{\horstSimpleSumExp{AND}{\horstSelectorFunctionAppmemoryGrowInLoopBlock{\horstParVarg}{\horstParVarfid,\horstParVarpc}}{\horstCOND{\horstParVarg}{\horstAND{\horstAND{\horstOpAppileu{64}{\horstOpAppvalueOf{}{\horstOpAppsizeOfMem{}{\horstVarmem}},\horstOpAppvalueOf{}{\horstOpAppsizeOfMem{}{\horstVarnmem}}}}{\horstOpAppileu{64}{\horstOpAppvalueOf{}{\horstOpAppsizeOfMem{}{\horstVarnmem}},\horstOpAppmkConst{\horstOpAppmms{}{}}{}}}}{\horstEQ{\horstOpApplabelOf{}{\horstOpAppsizeOfMem{}{\horstVarmem}}}{\horstOpApplabelOf{}{\horstOpAppsizeOfMem{}{\horstVarnmem}}}}}{\horstEQ{\horstOpAppsizeOfMem{}{\horstVarmem}}{\horstOpAppsizeOfMem{}{\horstVarnmem}}}}{{g}}}}{{s}}}}
\end{horstOperation}
\begin{horstOperation}{overApproximateCallArguments}
  \horstName{overApproximateCallArguments}
  \horstParVar{fid}{\horstTypeint}
  \horstVar{at}{\horstTypeHomInit{\horstTypeLValue}{\horstOpAppas{\horstParVarfid}{}}}
  \horstVar{nat}{\horstTypeHomInit{\horstTypeLValue}{\horstOpAppas{\horstParVarfid}{}}}
  \horstReturnType{\horstTypebool}
  \horstBody{\horstSimpleSumExp{AND}{\horstSelectorFunctionAppinterval{\horstParVari}{0,\horstOpAppas{\horstParVarfid}{}},\horstSelectorFunctionAppbitwidthForArgument{\horstParVarbw}{\horstParVarfid,\horstParVari}}{\horstAND{\horstOpAppisInRange{\horstParVarbw}{\horstOpAppvalueOf{}{\horstACCESS{\horstVarnat}{\horstParVari}}}}{\horstEQ{\horstOpApplabelOf{}{\horstACCESS{\horstVarat}{\horstParVari}}}{\horstOpApplabelOf{}{\horstACCESS{\horstVarnat}{\horstParVari}}}}}{{i}{bw}}}
\end{horstOperation}
\begin{horstOperation}{overApproximateCallGlobals}
  \horstName{overApproximateCallGlobals}
  \horstParVar{fid}{\horstTypeint}
  \horstVar{gt}{\horstTypeHomInit{\horstTypeLValue}{\horstOpAppgs{}{}}}
  \horstVar{ngt}{\horstTypeHomInit{\horstTypeLValue}{\horstOpAppgs{}{}}}
  \horstReturnType{\horstTypebool}
  \horstBody{\horstSimpleSumExp{AND}{\horstSelectorFunctionAppinterval{\horstParVari}{0,\horstOpAppgs{}{}},\horstSelectorFunctionAppbitwidthForGlobal{\horstParVarbw}{\horstParVari}}{\horstAND{\horstOpAppisInRange{\horstParVarbw}{\horstOpAppvalueOf{}{\horstACCESS{\horstVarngt}{\horstParVari}}}}{\horstEQ{\horstOpApplabelOf{}{\horstACCESS{\horstVargt}{\horstParVari}}}{\horstOpApplabelOf{}{\horstACCESS{\horstVarngt}{\horstParVari}}}}}{{i}{bw}}}
\end{horstOperation}
\begin{horstOperation}{overApproximateCallMemory}
  \horstName{overApproximateCallMemory}
  \horstParVar{fid}{\horstTypeint}
  \horstVar{mem}{\horstTypeMemory}
  \horstVar{nmem}{\horstTypeMemory}
  \horstReturnType{\horstTypebool}
  \horstBody{\horstAND{\horstAND{\horstAND{\horstAND{\horstAND{\horstEQ{\horstOpAppindexOfMem{}{\horstVarmem}}{\horstOpAppindexOfMem{}{\horstVarnmem}}}{\horstOpAppisInRange{8}{\horstOpAppvalueOf{}{\horstOpAppvalueOfMem{}{\horstVarnmem}}}}}{\horstEQ{\horstOpApplabelOf{}{\horstOpAppvalueOfMem{}{\horstVarmem}}}{\horstOpApplabelOf{}{\horstOpAppvalueOfMem{}{\horstVarnmem}}}}}{\horstOpAppileu{64}{\horstOpAppvalueOf{}{\horstOpAppsizeOfMem{}{\horstVarmem}},\horstOpAppvalueOf{}{\horstOpAppsizeOfMem{}{\horstVarnmem}}}}}{\horstOpAppileu{64}{\horstOpAppvalueOf{}{\horstOpAppsizeOfMem{}{\horstVarnmem}},\horstOpAppmkConst{\horstOpAppmms{}{}}{}}}}{\horstEQ{\horstOpApplabelOf{}{\horstOpAppsizeOfMem{}{\horstVarmem}}}{\horstOpApplabelOf{}{\horstOpAppsizeOfMem{}{\horstVarnmem}}}}}
\end{horstOperation}
\begin{horstOperation}{labelledUnOp}
  \horstName{labelledUnOp}
  \horstParVar{op}{\horstTypeint}
  \horstVar{a}{\horstTypeLValue}
  \horstReturnType{\horstTypeLValue}
  \horstBody{\horstMatchExp{{{{i}{l}},\horstMATCHIF{\horstEQ{\horstVara}{\horstConstructorAppLVal{\horstVari,\horstVarl}}},\horstConstructorAppLVal{\horstOpAppunOp{\horstParVarop}{\horstVari},\horstVarl}},{{},\horstOTHERWISE,\horstConstructorAppLVal{\horstOpAppfreeValOrTop{}{},\horstConstructorAppIllegal{}}}}}
\end{horstOperation}
\begin{horstOperation}{labelledBinOp}
  \horstName{labelledBinOp}
  \horstParVar{op}{\horstTypeint}
  \horstVar{a}{\horstTypeLValue}
  \horstVar{b}{\horstTypeLValue}
  \horstReturnType{\horstTypeLValue}
  \horstBody{\horstMatchExp{{{{lI}{lII}{iI}{iII}},\horstMATCHIF{\horstAND{\horstEQ{\horstVara}{\horstConstructorAppLVal{\horstVariI,\horstVarlI}}}{\horstEQ{\horstVarb}{\horstConstructorAppLVal{\horstVariII,\horstVarlII}}}},\horstConstructorAppLVal{\horstOpAppbinOp{\horstParVarop}{\horstVariI,\horstVariII},\horstOpAppflub{2}{\horstTUPINIT{\horstVarlI,\horstVarlII}}}},{{},\horstOTHERWISE,\horstConstructorAppLVal{\horstOpAppfreeValOrTop{}{},\horstConstructorAppIllegal{}}}}}
\end{horstOperation}
\begin{horstOperation}{labelledCvtOp}
  \horstName{labelledCvtOp}
  \horstParVar{op}{\horstTypeint}
  \horstVar{a}{\horstTypeLValue}
  \horstReturnType{\horstTypeLValue}
  \horstBody{\horstMatchExp{{{{i}{l}},\horstMATCHIF{\horstEQ{\horstVara}{\horstConstructorAppLVal{\horstVari,\horstVarl}}},\horstConstructorAppLVal{\horstOpAppcvtOp{\horstParVarop}{\horstVari},\horstVarl}},{{},\horstOTHERWISE,\horstConstructorAppLVal{\horstOpAppfreeValOrTop{}{},\horstConstructorAppIllegal{}}}}}
\end{horstOperation}
\begin{horstOperation}{load}
  \horstName{load}
  \horstParVar{sz}{\horstTypeint}
  \horstVar{x}{\horstTypeHomInit{\horstTypeValue}{\horstParVarsz}}
  \horstReturnType{\horstTypeValue}
  \horstBody{\horstCustomSumExp{\horstSelectorFunctionAppinterval{\horstParVari}{1,\horstParVarsz}}{a}{\horstConstructorAppVal{\horstOpApploadValueAux{\horstParVari}{\horstOpAppbase{}{\horstVara},\horstOpAppbase{}{\horstACCESS{\horstVarx}{\horstParVari}}}}}{\horstACCESS{\horstVarx}{0}}{{i}}}
\end{horstOperation}
\begin{horstType}{int}
  \horstName{int}
\end{horstType}
\begin{horstType}{bool}
  \horstName{bool}
\end{horstType}
\begin{horstType}{BVLXIV}
  \horstName{BV64}
\end{horstType}
\begin{horstType}{Value}
  \horstName{Value}
  \begin{horstConstructor}{Val}
    \horstName{Val}
    \horstTypeParameter{\horstTypeBVLXIV}
  \end{horstConstructor}
\end{horstType}
\begin{horstType}{Label}
  \horstName{Label}
  \begin{horstConstructor}{Lbl}
    \horstName{Lbl}
    \horstTypeParameter{\horstTypebool}
    \horstTypeParameter{\horstTypebool}
  \end{horstConstructor}
\end{horstType}
\begin{horstType}{FlowLabel}
  \horstName{FlowLabel}
  \begin{horstConstructor}{Legal}
    \horstName{Legal}
  \end{horstConstructor}
  \begin{horstConstructor}{Illegal}
    \horstName{Illegal}
  \end{horstConstructor}
\end{horstType}
\begin{horstType}{Context}
  \horstName{Context}
  \begin{horstConstructor}{Ctx}
    \horstName{Ctx}
    \horstTypeParameter{\horstTypeLabel}
    \horstTypeParameter{\horstTypeFlowLabel}
    \horstTypeParameter{\horstTypeint}
  \end{horstConstructor}
\end{horstType}
\begin{horstType}{TablePrecision}
  \horstName{TablePrecision}
  \begin{horstConstructor}{TblPrecise}
    \horstName{TblPrecise}
  \end{horstConstructor}
  \begin{horstConstructor}{TblImprecise}
    \horstName{TblImprecise}
  \end{horstConstructor}
\end{horstType}
\begin{horstType}{Table}
  \horstName{Table}
  \begin{horstConstructor}{Tbl}
    \horstName{Tbl}
    \horstTypeParameter{\horstTypeTablePrecision}
    \horstTypeParameter{\horstTypeFlowLabel}
  \end{horstConstructor}
\end{horstType}
\begin{horstType}{LValue}
  \horstName{LValue}
  \begin{horstConstructor}{LVal}
    \horstName{LVal}
    \horstTypeParameter{\horstTypeValue}
    \horstTypeParameter{\horstTypeFlowLabel}
  \end{horstConstructor}
\end{horstType}
\begin{horstType}{Memory}
  \horstName{Memory}
  \begin{horstConstructor}{Mem}
    \horstName{Mem}
    \horstTypeParameter{\horstTypeValue}
    \horstTypeParameter{\horstTypeLValue}
    \horstTypeParameter{\horstTypeLValue}
  \end{horstConstructor}
\end{horstType}
\begin{horstType}{MaybeValue}
  \horstName{MaybeValue}
  \begin{horstConstructor}{JustV}
    \horstName{JustV}
    \horstTypeParameter{\horstTypeValue}
  \end{horstConstructor}
  \begin{horstConstructor}{NothingV}
    \horstName{NothingV}
  \end{horstConstructor}
\end{horstType}
\begin{horstType}{MaybeMemory}
  \horstName{MaybeMemory}
  \begin{horstConstructor}{JustM}
    \horstName{JustM}
    \horstTypeParameter{\horstTypeMemory}
  \end{horstConstructor}
  \begin{horstConstructor}{NothingM}
    \horstName{NothingM}
  \end{horstConstructor}
\end{horstType}
\horstConstant{IXXXIINE}{I32NE}{71}
\horstConstant{ILXIVLTU}{I64LTU}{84}
\horstConstant{FXXXIIMUL}{F32MUL}{148}
\horstConstant{THROW}{THROW}{8}
\horstConstant{ILXIVLTS}{I64LTS}{83}
\horstConstant{ILXIVROTR}{I64ROTR}{138}
\horstConstant{FLXIVLE}{F64LE}{101}
\horstConstant{FXXXIITRUNC}{F32TRUNC}{143}
\horstConstant{IXXXIIREMS}{I32REMS}{111}
\horstConstant{FLXIVLOADMEM}{F64LOADMEM}{43}
\horstConstant{IXXXIIREMU}{I32REMU}{112}
\horstConstant{FXXXIINEAREST}{F32NEAREST}{144}
\horstConstant{CALLREF}{CALLREF}{20}
\horstConstant{IXXXIISHRS}{I32SHRS}{117}
\horstConstant{ILXIVAND}{I64AND}{131}
\horstConstant{FLXIVFLOOR}{F64FLOOR}{156}
\horstConstant{FLXIVSTOREMEM}{F64STOREMEM}{57}
\horstConstant{ILXIVPOPCNT}{I64POPCNT}{123}
\horstConstant{IXXXIILOADMEMVIIIS}{I32LOADMEM8S}{44}
\horstConstant{REFISNULL}{REFISNULL}{209}
\horstConstant{IXXXIILOADMEMVIIIU}{I32LOADMEM8U}{45}
\horstConstant{ILXIVCTZ}{I64CTZ}{122}
\horstConstant{ILXIVREINTERPRETFLXIV}{I64REINTERPRETF64}{189}
\horstConstant{IXXXIIROTL}{I32ROTL}{119}
\horstConstant{IXXXIISEXTENDIVIII}{I32SEXTENDI8}{192}
\horstConstant{RETHROW}{RETHROW}{9}
\horstConstant{ILXIVSHRS}{I64SHRS}{135}
\horstConstant{ILXIVSHRU}{I64SHRU}{136}
\horstConstant{FLXIVCOPYSIGN}{F64COPYSIGN}{166}
\horstConstant{BR}{BR}{12}
\horstConstant{IXXXIISHRU}{I32SHRU}{118}
\horstConstant{IXXXIISHL}{I32SHL}{116}
\horstConstant{ILXIVSUB}{I64SUB}{125}
\horstConstant{IXXXIIADD}{I32ADD}{106}
\horstConstant{FLXIVCONVERTILXIVU}{F64CONVERTI64U}{186}
\horstConstant{FLXIVCONVERTILXIVS}{F64CONVERTI64S}{185}
\horstConstant{ILXIVIOR}{I64IOR}{132}
\horstConstant{RETURNCALLREF}{RETURNCALLREF}{21}
\horstConstant{FXXXIINE}{F32NE}{92}
\horstConstant{ILXIVLES}{I64LES}{87}
\horstConstant{FXXXIINEG}{F32NEG}{140}
\horstConstant{ILXIVROTL}{I64ROTL}{137}
\horstConstant{ILXIVLEU}{I64LEU}{88}
\horstConstant{IXXXIIROTR}{I32ROTR}{120}
\horstConstant{UNREACHABLE}{UNREACHABLE}{0}
\horstConstant{ILXIVSTOREMEM}{I64STOREMEM}{55}
\horstConstant{FLXIVGT}{F64GT}{100}
\horstConstant{FXXXIISUB}{F32SUB}{147}
\horstConstant{LOOP}{LOOP}{3}
\horstConstant{GLOBALSET}{GLOBALSET}{36}
\horstConstant{ILXIVEXTENDIXXXIIU}{I64EXTENDI32U}{173}
\horstConstant{CATCHALL}{CATCHALL}{25}
\horstConstant{ILXIVEXTENDIXXXIIS}{I64EXTENDI32S}{172}
\horstConstant{ILXIVSEXTENDIXXXII}{I64SEXTENDI32}{196}
\horstConstant{ILXIVCONST}{I64CONST}{66}
\horstConstant{FLXIVCONST}{F64CONST}{68}
\horstConstant{FLXIVGE}{F64GE}{102}
\horstConstant{CALLINDIRECT}{CALLINDIRECT}{17}
\horstConstant{TABLESET}{TABLESET}{38}
\horstConstant{CALLFUNCTION}{CALLFUNCTION}{16}
\horstConstant{FLXIVSUB}{F64SUB}{161}
\horstConstant{FXXXIILT}{F32LT}{93}
\horstConstant{FLXIVCEIL}{F64CEIL}{155}
\horstConstant{ILXIVNE}{I64NE}{82}
\horstConstant{ILXIVEQZ}{I64EQZ}{80}
\horstConstant{IXXXIIWRAPILXIV}{I32WRAPI64}{167}
\horstConstant{FLXIVMIN}{F64MIN}{164}
\horstConstant{FXXXIILE}{F32LE}{95}
\horstConstant{ILXIVGES}{I64GES}{89}
\horstConstant{ILXIVREMU}{I64REMU}{130}
\horstConstant{NOP}{NOP}{1}
\horstConstant{ILXIVREMS}{I64REMS}{129}
\horstConstant{ILXIVLOADMEMXXXIIU}{I64LOADMEM32U}{53}
\horstConstant{ILXIVTRUNCFXXXIIS}{I64TRUNCF32S}{174}
\horstConstant{RETURNCALLINDIRECT}{RETURNCALLINDIRECT}{19}
\horstConstant{ILXIVLOADMEMXXXIIS}{I64LOADMEM32S}{52}
\horstConstant{ILXIVGEU}{I64GEU}{90}
\horstConstant{FXXXIIMIN}{F32MIN}{150}
\horstConstant{IXXXIIGES}{I32GES}{78}
\horstConstant{IXXXIIGEU}{I32GEU}{79}
\horstConstant{ILXIVTRUNCFXXXIIU}{I64TRUNCF32U}{175}
\horstConstant{IXXXIIREINTERPRETFXXXII}{I32REINTERPRETF32}{188}
\horstConstant{IXXXIIPOPCNT}{I32POPCNT}{105}
\horstConstant{FLXIVSQRT}{F64SQRT}{159}
\horstConstant{END}{END}{11}
\horstConstant{IXXXIITRUNCUFLXIV}{I32TRUNCUF64}{171}
\horstConstant{IXXXIIEQZ}{I32EQZ}{69}
\horstConstant{FLXIVEQ}{F64EQ}{97}
\horstConstant{FXXXIIDEMOTEFLXIV}{F32DEMOTEF64}{182}
\horstConstant{FXXXIILOADMEM}{F32LOADMEM}{42}
\horstConstant{FLXIVABS}{F64ABS}{153}
\horstConstant{IXXXIILOADMEM}{I32LOADMEM}{40}
\horstConstant{DROP}{DROP}{26}
\horstConstant{REFFUNC}{REFFUNC}{210}
\horstConstant{RETURN}{RETURN}{15}
\horstConstant{IXXXIICLZ}{I32CLZ}{103}
\horstConstant{IXXXIIGTU}{I32GTU}{75}
\horstConstant{ILXIVMUL}{I64MUL}{126}
\horstConstant{FLXIVCONVERTIXXXIIU}{F64CONVERTI32U}{184}
\horstConstant{FLXIVCONVERTIXXXIIS}{F64CONVERTI32S}{183}
\horstConstant{LOCALSET}{LOCALSET}{33}
\horstConstant{IXXXIIGTS}{I32GTS}{74}
\horstConstant{FLXIVDIV}{F64DIV}{163}
\horstConstant{FLXIVADD}{F64ADD}{160}
\horstConstant{DELEGATE}{DELEGATE}{24}
\horstConstant{FXXXIISTOREMEM}{F32STOREMEM}{56}
\horstConstant{REFNULL}{REFNULL}{208}
\horstConstant{IXXXIIEQ}{I32EQ}{70}
\horstConstant{FXXXIICONVERTILXIVS}{F32CONVERTI64S}{180}
\horstConstant{FLXIVREINTERPRETILXIV}{F64REINTERPRETI64}{191}
\horstConstant{ILXIVSEXTENDIVIII}{I64SEXTENDI8}{194}
\horstConstant{FXXXIICONVERTILXIVU}{F32CONVERTI64U}{181}
\horstConstant{ELSE}{ELSE}{5}
\horstConstant{IXXXIILTS}{I32LTS}{72}
\horstConstant{IF}{IF}{4}
\horstConstant{IXXXIILTU}{I32LTU}{73}
\horstConstant{FXXXIIGT}{F32GT}{94}
\horstConstant{FLXIVMUL}{F64MUL}{162}
\horstConstant{FXXXIIFLOOR}{F32FLOOR}{142}
\horstConstant{FXXXIISQRT}{F32SQRT}{145}
\horstConstant{RETURNCALL}{RETURNCALL}{18}
\horstConstant{FLXIVTRUNC}{F64TRUNC}{157}
\horstConstant{FXXXIIGE}{F32GE}{96}
\horstConstant{ILXIVTRUNCFLXIVU}{I64TRUNCF64U}{177}
\horstConstant{ILXIVTRUNCFLXIVS}{I64TRUNCF64S}{176}
\horstConstant{IXXXIIAND}{I32AND}{113}
\horstConstant{LOCALTEE}{LOCALTEE}{34}
\horstConstant{ILXIVSEXTENDIXVI}{I64SEXTENDI16}{195}
\horstConstant{ILXIVGTU}{I64GTU}{86}
\horstConstant{IXXXIIMUL}{I32MUL}{108}
\horstConstant{REFASNONNULL}{REFASNONNULL}{211}
\horstConstant{ILXIVCLZ}{I64CLZ}{121}
\horstConstant{FXXXIIABS}{F32ABS}{139}
\horstConstant{CATCH}{CATCH}{7}
\horstConstant{BRTABLE}{BRTABLE}{14}
\horstConstant{IXXXIITRUNCUFXXXII}{I32TRUNCUF32}{169}
\horstConstant{ILXIVSTOREMEMXXXII}{I64STOREMEM32}{62}
\horstConstant{FXXXIIEQ}{F32EQ}{91}
\horstConstant{ILXIVSTOREMEMXVI}{I64STOREMEM16}{61}
\horstConstant{FXXXIIADD}{F32ADD}{146}
\horstConstant{SELECT}{SELECT}{27}
\horstConstant{IXXXIISTOREMEMVIII}{I32STOREMEM8}{58}
\horstConstant{FXXXIICEIL}{F32CEIL}{141}
\horstConstant{BRIF}{BRIF}{13}
\horstConstant{FXXXIIDIV}{F32DIV}{149}
\horstConstant{ILXIVGTS}{I64GTS}{85}
\horstConstant{ILXIVDIVS}{I64DIVS}{127}
\horstConstant{IXXXIISEXTENDIXVI}{I32SEXTENDI16}{193}
\horstConstant{ILXIVDIVU}{I64DIVU}{128}
\horstConstant{GLOBALGET}{GLOBALGET}{35}
\horstConstant{FXXXIICONVERTIXXXIIU}{F32CONVERTI32U}{179}
\horstConstant{IXXXIISTOREMEM}{I32STOREMEM}{54}
\horstConstant{FXXXIICONVERTIXXXIIS}{F32CONVERTI32S}{178}
\horstConstant{TABLEGET}{TABLEGET}{37}
\horstConstant{IXXXIITRUNCSFXXXII}{I32TRUNCSF32}{168}
\horstConstant{ILXIVLOADMEM}{I64LOADMEM}{41}
\horstConstant{TRY}{TRY}{6}
\horstConstant{FXXXIIMAX}{F32MAX}{151}
\horstConstant{BLOCK}{BLOCK}{2}
\horstConstant{FXXXIICONST}{F32CONST}{67}
\horstConstant{IXXXIIXOR}{I32XOR}{115}
\horstConstant{ILXIVEQ}{I64EQ}{81}
\horstConstant{ILXIVSTOREMEMVIII}{I64STOREMEM8}{60}
\horstConstant{ILXIVLOADMEMXVIU}{I64LOADMEM16U}{51}
\horstConstant{ILXIVXOR}{I64XOR}{133}
\horstConstant{ILXIVLOADMEMXVIS}{I64LOADMEM16S}{50}
\horstConstant{IXXXIISUB}{I32SUB}{107}
\horstConstant{FXXXIICOPYSIGN}{F32COPYSIGN}{152}
\horstConstant{FLXIVMAX}{F64MAX}{165}
\horstConstant{IXXXIILOADMEMXVIS}{I32LOADMEM16S}{46}
\horstConstant{MEMORYSIZE}{MEMORYSIZE}{63}
\horstConstant{FLXIVNE}{F64NE}{98}
\horstConstant{IXXXIILOADMEMXVIU}{I32LOADMEM16U}{47}
\horstConstant{MEMORYGROW}{MEMORYGROW}{64}
\horstConstant{ILXIVSHL}{I64SHL}{134}
\horstConstant{IXXXIICTZ}{I32CTZ}{104}
\horstConstant{FLXIVNEAREST}{F64NEAREST}{158}
\horstConstant{IXXXIICONST}{I32CONST}{65}
\horstConstant{IXXXIIDIVU}{I32DIVU}{110}
\horstConstant{FLXIVPROMOTEFXXXII}{F64PROMOTEF32}{187}
\horstConstant{IXXXIIDIVS}{I32DIVS}{109}
\horstConstant{IXXXIISTOREMEMXVI}{I32STOREMEM16}{59}
\horstConstant{FLXIVNEG}{F64NEG}{154}
\horstConstant{IXXXIILES}{I32LES}{76}
\horstConstant{IXXXIILEU}{I32LEU}{77}
\horstConstant{IXXXIITRUNCSFLXIV}{I32TRUNCSF64}{170}
\horstConstant{ILXIVLOADMEMVIIIS}{I64LOADMEM8S}{48}
\horstConstant{ILXIVADD}{I64ADD}{124}
\horstConstant{LOCALGET}{LOCALGET}{32}
\horstConstant{IXXXIIIOR}{I32IOR}{114}
\horstConstant{FXXXIIREINTERPRETIXXXII}{F32REINTERPRETI32}{190}
\horstConstant{ILXIVLOADMEMVIIIU}{I64LOADMEM8U}{49}
\horstConstant{FLXIVLT}{F64LT}{99}

\begin{horstRule}{brTableDefaultRule}
  \horstParVar{fid}{\horstTypeint}
  \horstParVar{pc}{\horstTypeint}
  \horstParVar{sz}{\horstTypeint}
  \horstParVar{n}{\horstTypeint}
  \horstParVar{br}{\horstTypeint}
  \horstSelectorFunctionInvocation{\horstSelectorFunctionAppfunctionIds{\horstParVarfid}{},\horstSelectorFunctionApppcsForFunctionIdAndOpcode{\horstParVarpc}{\horstParVarfid,\horstConstBRTABLE},\horstSelectorFunctionAppsizeOfBreakTable{\horstParVarsz}{\horstParVarfid,\horstParVarpc},\horstSelectorFunctionAppgetAmountOfReturnValuesInBlock{\horstParVarn}{\horstParVarfid,\horstParVarpc},\horstSelectorFunctionAppbreakTableDestinations{\horstParVarbr}{\horstParVarfid,\horstParVarpc,\horstSUB{\horstParVarsz}{1}}}
  \begin{horstClause}
    \horstFreeVar{memN}{\horstTypeMemory}
    \horstFreeVar{atN}{\horstTypeHomInit{\horstTypeLValue}{\horstOpAppas{\horstParVarfid}{}}}
    \horstFreeVar{st}{\horstTypeHomInit{\horstTypeLValue}{\horstSUB{\horstOpAppss{\horstParVarfid,\horstParVarpc}{}}{1}}}
    \horstFreeVar{tbl}{\horstTypeTable}
    \horstFreeVar{ctx}{\horstTypeContext}
    \horstFreeVar{lt}{\horstTypeHomInit{\horstTypeLValue}{\horstOpAppls{\horstParVarfid}{}}}
    \horstFreeVar{mem}{\horstTypeMemory}
    \horstFreeVar{x}{\horstTypeLValue}
    \horstFreeVar{gtN}{\horstTypeHomInit{\horstTypeLValue}{\horstOpAppgs{}{}}}
    \horstFreeVar{gt}{\horstTypeHomInit{\horstTypeLValue}{\horstOpAppgs{}{}}}
    \horstPremise{\horstPredAppMState{\horstParVarfid,\horstParVarpc}{\horstFreeVarctx,\horstCONS{\horstFreeVarx}{\horstFreeVarst},\horstFreeVargt,\horstFreeVarlt,\horstFreeVarmem,\horstFreeVartbl,\horstFreeVaratN,\horstFreeVargtN,\horstFreeVarmemN}}
    \horstPremise{\horstOpAppabsge{}{\horstOpAppvalueOf{}{\horstFreeVarx},\horstOpAppmkConst{\horstSUB{\horstParVarsz}{1}}{}}}
    \horstConclusion{\horstPredAppMState{\horstParVarfid,\horstParVarbr}{\horstOpAppraiseCtxTo{\horstParVarpc}{\horstFreeVarctx,\horstOpApplabelOf{}{\horstFreeVarx}},\horstOpAppdrop{\horstSUB{\horstOpAppss{\horstParVarfid,\horstParVarpc}{}}{1},\horstParVarn,\horstSUB{\horstSUB{\horstOpAppss{\horstParVarfid,\horstParVarpc}{}}{1}}{\horstOpAppss{\horstParVarfid,\horstParVarbr}{}}}{\horstFreeVarst},\horstFreeVargt,\horstFreeVarlt,\horstFreeVarmem,\horstFreeVartbl,\horstFreeVaratN,\horstFreeVargtN,\horstFreeVarmemN}}
  \end{horstClause}
  \begin{horstClause}
    \horstFreeVar{memN}{\horstTypeMemory}
    \horstFreeVar{atN}{\horstTypeHomInit{\horstTypeLValue}{\horstOpAppas{\horstParVarfid}{}}}
    \horstFreeVar{st}{\horstTypeHomInit{\horstTypeLValue}{\horstSUB{\horstOpAppss{\horstParVarfid,\horstParVarpc}{}}{1}}}
    \horstFreeVar{p}{\horstTypeLabel}
    \horstFreeVar{tbl}{\horstTypeTable}
    \horstFreeVar{ctx}{\horstTypeContext}
    \horstFreeVar{lt}{\horstTypeHomInit{\horstTypeLValue}{\horstOpAppls{\horstParVarfid}{}}}
    \horstFreeVar{mem}{\horstTypeMemory}
    \horstFreeVar{x}{\horstTypeLValue}
    \horstFreeVar{gtN}{\horstTypeHomInit{\horstTypeLValue}{\horstOpAppgs{}{}}}
    \horstFreeVar{gt}{\horstTypeHomInit{\horstTypeLValue}{\horstOpAppgs{}{}}}
    \horstFreeVar{from}{\horstTypeint}
    \horstPremise{\horstPredAppMState{\horstParVarfid,\horstParVarpc}{\horstFreeVarctx,\horstCONS{\horstFreeVarx}{\horstFreeVarst},\horstFreeVargt,\horstFreeVarlt,\horstFreeVarmem,\horstFreeVartbl,\horstFreeVaratN,\horstFreeVargtN,\horstFreeVarmemN}}
    \horstPremise{\horstOpAppabsge{}{\horstOpAppvalueOf{}{\horstFreeVarx},\horstOpAppmkConst{\horstSUB{\horstParVarsz}{1}}{}}}
    \horstPremise{\horstEQ{\horstOpAppraiseCtxTo{\horstParVarpc}{\horstFreeVarctx,\horstOpApplabelOf{}{\horstFreeVarx}}}{\horstConstructorAppCtx{\horstFreeVarp,\horstConstructorAppIllegal{},\horstFreeVarfrom}}}
    \horstConclusion{\horstPredAppScopeExtend{\horstParVarfid}{\horstFreeVarfrom,\horstOpAppmax{}{\horstParVarpc,\horstParVarbr}}}
  \end{horstClause}
\end{horstRule}
\begin{horstRule}{unOpRule}
  \horstParVar{fid}{\horstTypeint}
  \horstParVar{op}{\horstTypeint}
  \horstParVar{pc}{\horstTypeint}
  \horstSelectorFunctionInvocation{\horstSelectorFunctionAppfunctionIds{\horstParVarfid}{},\horstSelectorFunctionAppunOps{\horstParVarop}{},\horstSelectorFunctionApppcsForFunctionIdAndOpcode{\horstParVarpc}{\horstParVarfid,\horstParVarop}}
  \begin{horstClause}
    \horstFreeVar{memN}{\horstTypeMemory}
    \horstFreeVar{atN}{\horstTypeHomInit{\horstTypeLValue}{\horstOpAppas{\horstParVarfid}{}}}
    \horstFreeVar{st}{\horstTypeHomInit{\horstTypeLValue}{\horstSUB{\horstOpAppss{\horstParVarfid,\horstParVarpc}{}}{1}}}
    \horstFreeVar{tbl}{\horstTypeTable}
    \horstFreeVar{ctx}{\horstTypeContext}
    \horstFreeVar{lt}{\horstTypeHomInit{\horstTypeLValue}{\horstOpAppls{\horstParVarfid}{}}}
    \horstFreeVar{mem}{\horstTypeMemory}
    \horstFreeVar{x}{\horstTypeLValue}
    \horstFreeVar{gtN}{\horstTypeHomInit{\horstTypeLValue}{\horstOpAppgs{}{}}}
    \horstFreeVar{gt}{\horstTypeHomInit{\horstTypeLValue}{\horstOpAppgs{}{}}}
    \horstPremise{\horstPredAppMState{\horstParVarfid,\horstParVarpc}{\horstFreeVarctx,\horstCONS{\horstFreeVarx}{\horstFreeVarst},\horstFreeVargt,\horstFreeVarlt,\horstFreeVarmem,\horstFreeVartbl,\horstFreeVaratN,\horstFreeVargtN,\horstFreeVarmemN}}
    \horstConclusion{\horstPredAppMState{\horstParVarfid,\horstADD{\horstParVarpc}{1}}{\horstFreeVarctx,\horstCONS{\horstOpAppraiseTo{}{\horstOpApplabelledUnOp{\horstParVarop}{\horstFreeVarx},\horstOpApplabelOfCtx{}{\horstFreeVarctx}}}{\horstFreeVarst},\horstFreeVargt,\horstFreeVarlt,\horstFreeVarmem,\horstFreeVartbl,\horstFreeVaratN,\horstFreeVargtN,\horstFreeVarmemN}}
  \end{horstClause}
\end{horstRule}
\begin{horstRule}{loadRule}
  \horstParVar{fid}{\horstTypeint}
  \horstParVar{op}{\horstTypeint}
  \horstParVar{pc}{\horstTypeint}
  \horstParVar{offset}{\horstTypeint}
  \horstSelectorFunctionInvocation{\horstSelectorFunctionAppfunctionIds{\horstParVarfid}{},\horstSelectorFunctionApploadOps{\horstParVarop}{},\horstSelectorFunctionApppcsForFunctionIdAndOpcode{\horstParVarpc}{\horstParVarfid,\horstParVarop},\horstSelectorFunctionAppmemoryOffsetForFunctionIdAndPc{\horstParVaroffset}{\horstParVarfid,\horstParVarpc}}
  \begin{horstClause}
    \horstFreeVar{atN}{\horstTypeHomInit{\horstTypeLValue}{\horstOpAppas{\horstParVarfid}{}}}
    \horstFreeVar{st}{\horstTypeHomInit{\horstTypeLValue}{\horstSUB{\horstOpAppss{\horstParVarfid,\horstParVarpc}{}}{1}}}
    \horstFreeVar{tbl}{\horstTypeTable}
    \horstFreeVar{memNN}{\horstTypeMemory}
    \horstFreeVar{lt}{\horstTypeHomInit{\horstTypeLValue}{\horstOpAppls{\horstParVarfid}{}}}
    \horstFreeVar{i}{\horstTypeValue}
    \horstFreeVar{gt}{\horstTypeHomInit{\horstTypeLValue}{\horstOpAppgs{}{}}}
    \horstFreeVar{l}{\horstTypeFlowLabel}
    \horstFreeVar{size}{\horstTypeLValue}
    \horstFreeVar{memN}{\horstTypeMemory}
    \horstFreeVar{u}{\horstTypeValue}
    \horstFreeVar{ctx}{\horstTypeContext}
    \horstFreeVar{v}{\horstTypeLValue}
    \horstFreeVar{w}{\horstTypeValue}
    \horstFreeVar{x}{\horstTypeLValue}
    \horstFreeVar{gtN}{\horstTypeHomInit{\horstTypeLValue}{\horstOpAppgs{}{}}}
    \horstFreeVar{vs}{\horstTypeHomInit{\horstTypeLValue}{1}}
    \horstPremise{\horstEQ{\horstOpAppmemOpTbw{\horstParVarop}{}}{8}}
    \horstPremise{\horstOpAppiltu{64}{\horstOpAppiadd{64}{\horstOpAppvalueOf{}{\horstFreeVarx},\horstOpAppmkConst{\horstADD{\horstParVaroffset}{\horstDIV{\horstOpAppmemOpTbw{\horstParVarop}{}}{8}}}{}},\horstOpAppishl{64}{\horstOpAppvalueOf{}{\horstFreeVarsize},\horstOpAppmkConst{16}{}}}}
    \horstPremise{\horstPredAppMState{\horstParVarfid,\horstParVarpc}{\horstFreeVarctx,\horstCONS{\horstFreeVarx}{\horstFreeVarst},\horstFreeVargt,\horstFreeVarlt,\horstConstructorAppMem{\horstFreeVari,\horstFreeVarv,\horstFreeVarsize},\horstFreeVartbl,\horstFreeVaratN,\horstFreeVargtN,\horstFreeVarmemN}}
    \horstPremise{\horstSimpleSumExp{AND}{\horstSelectorFunctionAppinterval{\horstParVari}{0,1}}{\horstOR{\horstNEQ{\horstOpAppiadd{64}{\horstOpAppvalueOf{}{\horstFreeVarx},\horstOpAppmkConst{\horstADD{\horstParVaroffset}{\horstParVari}}{}}}{\horstFreeVari}}{\horstEQ{\horstFreeVarv}{\horstACCESS{\horstFreeVarvs}{\horstParVari}}}}{{i}}}
    \horstPremise{\horstPredAppMState{\horstParVarfid,\horstParVarpc}{\horstFreeVarctx,\horstCONS{\horstFreeVarx}{\horstFreeVarst},\horstFreeVargt,\horstFreeVarlt,\horstConstructorAppMem{\horstOpAppiadd{64}{\horstOpAppvalueOf{}{\horstFreeVarx},\horstOpAppmkConst{\horstADD{\horstParVaroffset}{0}}{}},\horstACCESS{\horstFreeVarvs}{0},\horstFreeVarsize},\horstFreeVartbl,\horstFreeVaratN,\horstFreeVargtN,\horstFreeVarmemNN}}
    \horstPremise{\horstEQ{\horstFreeVaru}{\horstOpAppload{1}{\horstOpAppvaluesOf{1}{\horstFreeVarvs}}}}
    \horstPremise{\horstEQ{\horstFreeVarl}{\horstOpAppflub{1}{\horstOpApplabelsOf{1}{\horstFreeVarvs}}}}
    \horstPremise{\horstEQ{\horstFreeVarw}{\horstCOND{\horstEQ{\horstOpAppmemOpBw{\horstParVarop}{}}{\horstOpAppmemOpTbw{\horstParVarop}{}}}{\horstFreeVaru}{\horstCOND{\horstOpAppmemLoadSigned{\horstParVarop}{}}{\horstOpAppextends{\horstOpAppmemOpTbw{\horstParVarop}{},\horstOpAppmemOpBw{\horstParVarop}{}}{\horstFreeVaru}}{\horstOpAppextendu{\horstOpAppmemOpTbw{\horstParVarop}{},\horstOpAppmemOpBw{\horstParVarop}{}}{\horstFreeVaru}}}}}
    \horstConclusion{\horstPredAppMState{\horstParVarfid,\horstADD{\horstParVarpc}{1}}{\horstFreeVarctx,\horstCONS{\horstConstructorAppLVal{\horstFreeVarw,\horstOpAppflub{3}{\horstTUPINIT{\horstFreeVarl,\horstOpApplabelOf{}{\horstFreeVarx},\horstOpApplabelOfCtx{}{\horstFreeVarctx}}}}}{\horstFreeVarst},\horstFreeVargt,\horstFreeVarlt,\horstConstructorAppMem{\horstFreeVari,\horstFreeVarv,\horstFreeVarsize},\horstFreeVartbl,\horstFreeVaratN,\horstFreeVargtN,\horstFreeVarmemN}}
  \end{horstClause}
  \begin{horstClause}
    \horstFreeVar{atN}{\horstTypeHomInit{\horstTypeLValue}{\horstOpAppas{\horstParVarfid}{}}}
    \horstFreeVar{st}{\horstTypeHomInit{\horstTypeLValue}{\horstSUB{\horstOpAppss{\horstParVarfid,\horstParVarpc}{}}{1}}}
    \horstFreeVar{tbl}{\horstTypeTable}
    \horstFreeVar{memNN}{\horstTypeMemory}
    \horstFreeVar{memNI}{\horstTypeMemory}
    \horstFreeVar{lt}{\horstTypeHomInit{\horstTypeLValue}{\horstOpAppls{\horstParVarfid}{}}}
    \horstFreeVar{i}{\horstTypeValue}
    \horstFreeVar{gt}{\horstTypeHomInit{\horstTypeLValue}{\horstOpAppgs{}{}}}
    \horstFreeVar{l}{\horstTypeFlowLabel}
    \horstFreeVar{size}{\horstTypeLValue}
    \horstFreeVar{memN}{\horstTypeMemory}
    \horstFreeVar{u}{\horstTypeValue}
    \horstFreeVar{ctx}{\horstTypeContext}
    \horstFreeVar{v}{\horstTypeLValue}
    \horstFreeVar{w}{\horstTypeValue}
    \horstFreeVar{x}{\horstTypeLValue}
    \horstFreeVar{gtN}{\horstTypeHomInit{\horstTypeLValue}{\horstOpAppgs{}{}}}
    \horstFreeVar{vs}{\horstTypeHomInit{\horstTypeLValue}{2}}
    \horstPremise{\horstEQ{\horstOpAppmemOpTbw{\horstParVarop}{}}{16}}
    \horstPremise{\horstOpAppiltu{64}{\horstOpAppiadd{64}{\horstOpAppvalueOf{}{\horstFreeVarx},\horstOpAppmkConst{\horstADD{\horstParVaroffset}{\horstDIV{\horstOpAppmemOpTbw{\horstParVarop}{}}{8}}}{}},\horstOpAppishl{64}{\horstOpAppvalueOf{}{\horstFreeVarsize},\horstOpAppmkConst{16}{}}}}
    \horstPremise{\horstPredAppMState{\horstParVarfid,\horstParVarpc}{\horstFreeVarctx,\horstCONS{\horstFreeVarx}{\horstFreeVarst},\horstFreeVargt,\horstFreeVarlt,\horstConstructorAppMem{\horstFreeVari,\horstFreeVarv,\horstFreeVarsize},\horstFreeVartbl,\horstFreeVaratN,\horstFreeVargtN,\horstFreeVarmemN}}
    \horstPremise{\horstSimpleSumExp{AND}{\horstSelectorFunctionAppinterval{\horstParVari}{0,2}}{\horstOR{\horstNEQ{\horstOpAppiadd{64}{\horstOpAppvalueOf{}{\horstFreeVarx},\horstOpAppmkConst{\horstADD{\horstParVaroffset}{\horstParVari}}{}}}{\horstFreeVari}}{\horstEQ{\horstFreeVarv}{\horstACCESS{\horstFreeVarvs}{\horstParVari}}}}{{i}}}
    \horstPremise{\horstPredAppMState{\horstParVarfid,\horstParVarpc}{\horstFreeVarctx,\horstCONS{\horstFreeVarx}{\horstFreeVarst},\horstFreeVargt,\horstFreeVarlt,\horstConstructorAppMem{\horstOpAppiadd{64}{\horstOpAppvalueOf{}{\horstFreeVarx},\horstOpAppmkConst{\horstADD{\horstParVaroffset}{0}}{}},\horstACCESS{\horstFreeVarvs}{0},\horstFreeVarsize},\horstFreeVartbl,\horstFreeVaratN,\horstFreeVargtN,\horstFreeVarmemNN}}
    \horstPremise{\horstPredAppMState{\horstParVarfid,\horstParVarpc}{\horstFreeVarctx,\horstCONS{\horstFreeVarx}{\horstFreeVarst},\horstFreeVargt,\horstFreeVarlt,\horstConstructorAppMem{\horstOpAppiadd{64}{\horstOpAppvalueOf{}{\horstFreeVarx},\horstOpAppmkConst{\horstADD{\horstParVaroffset}{1}}{}},\horstACCESS{\horstFreeVarvs}{1},\horstFreeVarsize},\horstFreeVartbl,\horstFreeVaratN,\horstFreeVargtN,\horstFreeVarmemNI}}
    \horstPremise{\horstEQ{\horstFreeVaru}{\horstOpAppload{2}{\horstOpAppvaluesOf{2}{\horstFreeVarvs}}}}
    \horstPremise{\horstEQ{\horstFreeVarl}{\horstOpAppflub{2}{\horstOpApplabelsOf{2}{\horstFreeVarvs}}}}
    \horstPremise{\horstEQ{\horstFreeVarw}{\horstCOND{\horstEQ{\horstOpAppmemOpBw{\horstParVarop}{}}{\horstOpAppmemOpTbw{\horstParVarop}{}}}{\horstFreeVaru}{\horstCOND{\horstOpAppmemLoadSigned{\horstParVarop}{}}{\horstOpAppextends{\horstOpAppmemOpTbw{\horstParVarop}{},\horstOpAppmemOpBw{\horstParVarop}{}}{\horstFreeVaru}}{\horstOpAppextendu{\horstOpAppmemOpTbw{\horstParVarop}{},\horstOpAppmemOpBw{\horstParVarop}{}}{\horstFreeVaru}}}}}
    \horstConclusion{\horstPredAppMState{\horstParVarfid,\horstADD{\horstParVarpc}{1}}{\horstFreeVarctx,\horstCONS{\horstConstructorAppLVal{\horstFreeVarw,\horstOpAppflub{3}{\horstTUPINIT{\horstFreeVarl,\horstOpApplabelOf{}{\horstFreeVarx},\horstOpApplabelOfCtx{}{\horstFreeVarctx}}}}}{\horstFreeVarst},\horstFreeVargt,\horstFreeVarlt,\horstConstructorAppMem{\horstFreeVari,\horstFreeVarv,\horstFreeVarsize},\horstFreeVartbl,\horstFreeVaratN,\horstFreeVargtN,\horstFreeVarmemN}}
  \end{horstClause}
  \begin{horstClause}
    \horstFreeVar{atN}{\horstTypeHomInit{\horstTypeLValue}{\horstOpAppas{\horstParVarfid}{}}}
    \horstFreeVar{st}{\horstTypeHomInit{\horstTypeLValue}{\horstSUB{\horstOpAppss{\horstParVarfid,\horstParVarpc}{}}{1}}}
    \horstFreeVar{tbl}{\horstTypeTable}
    \horstFreeVar{memNN}{\horstTypeMemory}
    \horstFreeVar{memNI}{\horstTypeMemory}
    \horstFreeVar{lt}{\horstTypeHomInit{\horstTypeLValue}{\horstOpAppls{\horstParVarfid}{}}}
    \horstFreeVar{memNII}{\horstTypeMemory}
    \horstFreeVar{memNIII}{\horstTypeMemory}
    \horstFreeVar{i}{\horstTypeValue}
    \horstFreeVar{gt}{\horstTypeHomInit{\horstTypeLValue}{\horstOpAppgs{}{}}}
    \horstFreeVar{l}{\horstTypeFlowLabel}
    \horstFreeVar{size}{\horstTypeLValue}
    \horstFreeVar{memN}{\horstTypeMemory}
    \horstFreeVar{u}{\horstTypeValue}
    \horstFreeVar{ctx}{\horstTypeContext}
    \horstFreeVar{v}{\horstTypeLValue}
    \horstFreeVar{w}{\horstTypeValue}
    \horstFreeVar{x}{\horstTypeLValue}
    \horstFreeVar{gtN}{\horstTypeHomInit{\horstTypeLValue}{\horstOpAppgs{}{}}}
    \horstFreeVar{vs}{\horstTypeHomInit{\horstTypeLValue}{4}}
    \horstPremise{\horstEQ{\horstOpAppmemOpTbw{\horstParVarop}{}}{32}}
    \horstPremise{\horstOpAppiltu{64}{\horstOpAppiadd{64}{\horstOpAppvalueOf{}{\horstFreeVarx},\horstOpAppmkConst{\horstADD{\horstParVaroffset}{\horstDIV{\horstOpAppmemOpTbw{\horstParVarop}{}}{8}}}{}},\horstOpAppishl{64}{\horstOpAppvalueOf{}{\horstFreeVarsize},\horstOpAppmkConst{16}{}}}}
    \horstPremise{\horstPredAppMState{\horstParVarfid,\horstParVarpc}{\horstFreeVarctx,\horstCONS{\horstFreeVarx}{\horstFreeVarst},\horstFreeVargt,\horstFreeVarlt,\horstConstructorAppMem{\horstFreeVari,\horstFreeVarv,\horstFreeVarsize},\horstFreeVartbl,\horstFreeVaratN,\horstFreeVargtN,\horstFreeVarmemN}}
    \horstPremise{\horstSimpleSumExp{AND}{\horstSelectorFunctionAppinterval{\horstParVari}{0,4}}{\horstOR{\horstNEQ{\horstOpAppiadd{64}{\horstOpAppvalueOf{}{\horstFreeVarx},\horstOpAppmkConst{\horstADD{\horstParVaroffset}{\horstParVari}}{}}}{\horstFreeVari}}{\horstEQ{\horstFreeVarv}{\horstACCESS{\horstFreeVarvs}{\horstParVari}}}}{{i}}}
    \horstPremise{\horstPredAppMState{\horstParVarfid,\horstParVarpc}{\horstFreeVarctx,\horstCONS{\horstFreeVarx}{\horstFreeVarst},\horstFreeVargt,\horstFreeVarlt,\horstConstructorAppMem{\horstOpAppiadd{64}{\horstOpAppvalueOf{}{\horstFreeVarx},\horstOpAppmkConst{\horstADD{\horstParVaroffset}{0}}{}},\horstACCESS{\horstFreeVarvs}{0},\horstFreeVarsize},\horstFreeVartbl,\horstFreeVaratN,\horstFreeVargtN,\horstFreeVarmemNN}}
    \horstPremise{\horstPredAppMState{\horstParVarfid,\horstParVarpc}{\horstFreeVarctx,\horstCONS{\horstFreeVarx}{\horstFreeVarst},\horstFreeVargt,\horstFreeVarlt,\horstConstructorAppMem{\horstOpAppiadd{64}{\horstOpAppvalueOf{}{\horstFreeVarx},\horstOpAppmkConst{\horstADD{\horstParVaroffset}{1}}{}},\horstACCESS{\horstFreeVarvs}{1},\horstFreeVarsize},\horstFreeVartbl,\horstFreeVaratN,\horstFreeVargtN,\horstFreeVarmemNI}}
    \horstPremise{\horstPredAppMState{\horstParVarfid,\horstParVarpc}{\horstFreeVarctx,\horstCONS{\horstFreeVarx}{\horstFreeVarst},\horstFreeVargt,\horstFreeVarlt,\horstConstructorAppMem{\horstOpAppiadd{64}{\horstOpAppvalueOf{}{\horstFreeVarx},\horstOpAppmkConst{\horstADD{\horstParVaroffset}{2}}{}},\horstACCESS{\horstFreeVarvs}{2},\horstFreeVarsize},\horstFreeVartbl,\horstFreeVaratN,\horstFreeVargtN,\horstFreeVarmemNII}}
    \horstPremise{\horstPredAppMState{\horstParVarfid,\horstParVarpc}{\horstFreeVarctx,\horstCONS{\horstFreeVarx}{\horstFreeVarst},\horstFreeVargt,\horstFreeVarlt,\horstConstructorAppMem{\horstOpAppiadd{64}{\horstOpAppvalueOf{}{\horstFreeVarx},\horstOpAppmkConst{\horstADD{\horstParVaroffset}{3}}{}},\horstACCESS{\horstFreeVarvs}{3},\horstFreeVarsize},\horstFreeVartbl,\horstFreeVaratN,\horstFreeVargtN,\horstFreeVarmemNIII}}
    \horstPremise{\horstEQ{\horstFreeVaru}{\horstOpAppload{4}{\horstOpAppvaluesOf{4}{\horstFreeVarvs}}}}
    \horstPremise{\horstEQ{\horstFreeVarl}{\horstOpAppflub{4}{\horstOpApplabelsOf{4}{\horstFreeVarvs}}}}
    \horstPremise{\horstEQ{\horstFreeVarw}{\horstCOND{\horstEQ{\horstOpAppmemOpBw{\horstParVarop}{}}{\horstOpAppmemOpTbw{\horstParVarop}{}}}{\horstCOND{\horstEQ{\horstParVarop}{\horstConstFXXXIILOADMEM}}{\horstOpAppfreeValOrTop{}{}}{\horstFreeVaru}}{\horstCOND{\horstOpAppmemLoadSigned{\horstParVarop}{}}{\horstOpAppextends{\horstOpAppmemOpTbw{\horstParVarop}{},\horstOpAppmemOpBw{\horstParVarop}{}}{\horstFreeVaru}}{\horstOpAppextendu{\horstOpAppmemOpTbw{\horstParVarop}{},\horstOpAppmemOpBw{\horstParVarop}{}}{\horstFreeVaru}}}}}
    \horstConclusion{\horstPredAppMState{\horstParVarfid,\horstADD{\horstParVarpc}{1}}{\horstFreeVarctx,\horstCONS{\horstConstructorAppLVal{\horstFreeVarw,\horstOpAppflub{3}{\horstTUPINIT{\horstFreeVarl,\horstOpApplabelOf{}{\horstFreeVarx},\horstOpApplabelOfCtx{}{\horstFreeVarctx}}}}}{\horstFreeVarst},\horstFreeVargt,\horstFreeVarlt,\horstConstructorAppMem{\horstFreeVari,\horstFreeVarv,\horstFreeVarsize},\horstFreeVartbl,\horstFreeVaratN,\horstFreeVargtN,\horstFreeVarmemN}}
  \end{horstClause}
  \begin{horstClause}
    \horstFreeVar{atN}{\horstTypeHomInit{\horstTypeLValue}{\horstOpAppas{\horstParVarfid}{}}}
    \horstFreeVar{st}{\horstTypeHomInit{\horstTypeLValue}{\horstSUB{\horstOpAppss{\horstParVarfid,\horstParVarpc}{}}{1}}}
    \horstFreeVar{tbl}{\horstTypeTable}
    \horstFreeVar{memNN}{\horstTypeMemory}
    \horstFreeVar{memNI}{\horstTypeMemory}
    \horstFreeVar{lt}{\horstTypeHomInit{\horstTypeLValue}{\horstOpAppls{\horstParVarfid}{}}}
    \horstFreeVar{memNII}{\horstTypeMemory}
    \horstFreeVar{memNIII}{\horstTypeMemory}
    \horstFreeVar{memNIV}{\horstTypeMemory}
    \horstFreeVar{i}{\horstTypeValue}
    \horstFreeVar{memNV}{\horstTypeMemory}
    \horstFreeVar{memNVI}{\horstTypeMemory}
    \horstFreeVar{gt}{\horstTypeHomInit{\horstTypeLValue}{\horstOpAppgs{}{}}}
    \horstFreeVar{memNVII}{\horstTypeMemory}
    \horstFreeVar{l}{\horstTypeFlowLabel}
    \horstFreeVar{size}{\horstTypeLValue}
    \horstFreeVar{memN}{\horstTypeMemory}
    \horstFreeVar{u}{\horstTypeValue}
    \horstFreeVar{ctx}{\horstTypeContext}
    \horstFreeVar{v}{\horstTypeLValue}
    \horstFreeVar{w}{\horstTypeValue}
    \horstFreeVar{x}{\horstTypeLValue}
    \horstFreeVar{gtN}{\horstTypeHomInit{\horstTypeLValue}{\horstOpAppgs{}{}}}
    \horstFreeVar{vs}{\horstTypeHomInit{\horstTypeLValue}{8}}
    \horstPremise{\horstEQ{\horstOpAppmemOpTbw{\horstParVarop}{}}{64}}
    \horstPremise{\horstOpAppiltu{64}{\horstOpAppiadd{64}{\horstOpAppvalueOf{}{\horstFreeVarx},\horstOpAppmkConst{\horstADD{\horstParVaroffset}{\horstDIV{\horstOpAppmemOpTbw{\horstParVarop}{}}{8}}}{}},\horstOpAppishl{64}{\horstOpAppvalueOf{}{\horstFreeVarsize},\horstOpAppmkConst{16}{}}}}
    \horstPremise{\horstPredAppMState{\horstParVarfid,\horstParVarpc}{\horstFreeVarctx,\horstCONS{\horstFreeVarx}{\horstFreeVarst},\horstFreeVargt,\horstFreeVarlt,\horstConstructorAppMem{\horstFreeVari,\horstFreeVarv,\horstFreeVarsize},\horstFreeVartbl,\horstFreeVaratN,\horstFreeVargtN,\horstFreeVarmemN}}
    \horstPremise{\horstSimpleSumExp{AND}{\horstSelectorFunctionAppinterval{\horstParVari}{0,8}}{\horstOR{\horstNEQ{\horstOpAppiadd{64}{\horstOpAppvalueOf{}{\horstFreeVarx},\horstOpAppmkConst{\horstADD{\horstParVaroffset}{\horstParVari}}{}}}{\horstFreeVari}}{\horstEQ{\horstFreeVarv}{\horstACCESS{\horstFreeVarvs}{\horstParVari}}}}{{i}}}
    \horstPremise{\horstPredAppMState{\horstParVarfid,\horstParVarpc}{\horstFreeVarctx,\horstCONS{\horstFreeVarx}{\horstFreeVarst},\horstFreeVargt,\horstFreeVarlt,\horstConstructorAppMem{\horstOpAppiadd{64}{\horstOpAppvalueOf{}{\horstFreeVarx},\horstOpAppmkConst{\horstADD{\horstParVaroffset}{0}}{}},\horstACCESS{\horstFreeVarvs}{0},\horstFreeVarsize},\horstFreeVartbl,\horstFreeVaratN,\horstFreeVargtN,\horstFreeVarmemNN}}
    \horstPremise{\horstPredAppMState{\horstParVarfid,\horstParVarpc}{\horstFreeVarctx,\horstCONS{\horstFreeVarx}{\horstFreeVarst},\horstFreeVargt,\horstFreeVarlt,\horstConstructorAppMem{\horstOpAppiadd{64}{\horstOpAppvalueOf{}{\horstFreeVarx},\horstOpAppmkConst{\horstADD{\horstParVaroffset}{1}}{}},\horstACCESS{\horstFreeVarvs}{1},\horstFreeVarsize},\horstFreeVartbl,\horstFreeVaratN,\horstFreeVargtN,\horstFreeVarmemNI}}
    \horstPremise{\horstPredAppMState{\horstParVarfid,\horstParVarpc}{\horstFreeVarctx,\horstCONS{\horstFreeVarx}{\horstFreeVarst},\horstFreeVargt,\horstFreeVarlt,\horstConstructorAppMem{\horstOpAppiadd{64}{\horstOpAppvalueOf{}{\horstFreeVarx},\horstOpAppmkConst{\horstADD{\horstParVaroffset}{2}}{}},\horstACCESS{\horstFreeVarvs}{2},\horstFreeVarsize},\horstFreeVartbl,\horstFreeVaratN,\horstFreeVargtN,\horstFreeVarmemNII}}
    \horstPremise{\horstPredAppMState{\horstParVarfid,\horstParVarpc}{\horstFreeVarctx,\horstCONS{\horstFreeVarx}{\horstFreeVarst},\horstFreeVargt,\horstFreeVarlt,\horstConstructorAppMem{\horstOpAppiadd{64}{\horstOpAppvalueOf{}{\horstFreeVarx},\horstOpAppmkConst{\horstADD{\horstParVaroffset}{3}}{}},\horstACCESS{\horstFreeVarvs}{3},\horstFreeVarsize},\horstFreeVartbl,\horstFreeVaratN,\horstFreeVargtN,\horstFreeVarmemNIII}}
    \horstPremise{\horstPredAppMState{\horstParVarfid,\horstParVarpc}{\horstFreeVarctx,\horstCONS{\horstFreeVarx}{\horstFreeVarst},\horstFreeVargt,\horstFreeVarlt,\horstConstructorAppMem{\horstOpAppiadd{64}{\horstOpAppvalueOf{}{\horstFreeVarx},\horstOpAppmkConst{\horstADD{\horstParVaroffset}{4}}{}},\horstACCESS{\horstFreeVarvs}{4},\horstFreeVarsize},\horstFreeVartbl,\horstFreeVaratN,\horstFreeVargtN,\horstFreeVarmemNIV}}
    \horstPremise{\horstPredAppMState{\horstParVarfid,\horstParVarpc}{\horstFreeVarctx,\horstCONS{\horstFreeVarx}{\horstFreeVarst},\horstFreeVargt,\horstFreeVarlt,\horstConstructorAppMem{\horstOpAppiadd{64}{\horstOpAppvalueOf{}{\horstFreeVarx},\horstOpAppmkConst{\horstADD{\horstParVaroffset}{5}}{}},\horstACCESS{\horstFreeVarvs}{5},\horstFreeVarsize},\horstFreeVartbl,\horstFreeVaratN,\horstFreeVargtN,\horstFreeVarmemNV}}
    \horstPremise{\horstPredAppMState{\horstParVarfid,\horstParVarpc}{\horstFreeVarctx,\horstCONS{\horstFreeVarx}{\horstFreeVarst},\horstFreeVargt,\horstFreeVarlt,\horstConstructorAppMem{\horstOpAppiadd{64}{\horstOpAppvalueOf{}{\horstFreeVarx},\horstOpAppmkConst{\horstADD{\horstParVaroffset}{6}}{}},\horstACCESS{\horstFreeVarvs}{6},\horstFreeVarsize},\horstFreeVartbl,\horstFreeVaratN,\horstFreeVargtN,\horstFreeVarmemNVI}}
    \horstPremise{\horstPredAppMState{\horstParVarfid,\horstParVarpc}{\horstFreeVarctx,\horstCONS{\horstFreeVarx}{\horstFreeVarst},\horstFreeVargt,\horstFreeVarlt,\horstConstructorAppMem{\horstOpAppiadd{64}{\horstOpAppvalueOf{}{\horstFreeVarx},\horstOpAppmkConst{\horstADD{\horstParVaroffset}{7}}{}},\horstACCESS{\horstFreeVarvs}{7},\horstFreeVarsize},\horstFreeVartbl,\horstFreeVaratN,\horstFreeVargtN,\horstFreeVarmemNVII}}
    \horstPremise{\horstEQ{\horstFreeVaru}{\horstOpAppload{8}{\horstOpAppvaluesOf{8}{\horstFreeVarvs}}}}
    \horstPremise{\horstEQ{\horstFreeVarl}{\horstOpAppflub{8}{\horstOpApplabelsOf{8}{\horstFreeVarvs}}}}
    \horstPremise{\horstEQ{\horstFreeVarw}{\horstCOND{\horstEQ{\horstParVarop}{\horstConstFLXIVLOADMEM}}{\horstOpAppfreeValOrTop{}{}}{\horstFreeVaru}}}
    \horstConclusion{\horstPredAppMState{\horstParVarfid,\horstADD{\horstParVarpc}{1}}{\horstFreeVarctx,\horstCONS{\horstConstructorAppLVal{\horstFreeVarw,\horstOpAppflub{3}{\horstTUPINIT{\horstFreeVarl,\horstOpApplabelOf{}{\horstFreeVarx},\horstOpApplabelOfCtx{}{\horstFreeVarctx}}}}}{\horstFreeVarst},\horstFreeVargt,\horstFreeVarlt,\horstConstructorAppMem{\horstFreeVari,\horstFreeVarv,\horstFreeVarsize},\horstFreeVartbl,\horstFreeVaratN,\horstFreeVargtN,\horstFreeVarmemN}}
  \end{horstClause}
\end{horstRule}
\begin{horstRule}{testNoninterferenceImportedFunctionMemoryDataSat}
  \horstParVar{fid}{\horstTypeint}
  \horstParVar{cid}{\horstTypeint}
  \horstSelectorFunctionInvocation{\horstSelectorFunctionAppimportedFunctionIds{\horstParVarfid}{},\horstSelectorFunctionAppimportCallMemoryDataLeak{\horstParVarcid}{\horstParVarfid}}
  \begin{horstClause}
    \horstFreeVar{memN}{\horstTypeMemory}
    \horstFreeVar{atN}{\horstTypeHomInit{\horstTypeLValue}{\horstOpAppas{\horstParVarfid}{}}}
    \horstFreeVar{tbl}{\horstTypeTable}
    \horstFreeVar{ctx}{\horstTypeContext}
    \horstFreeVar{v}{\horstTypeLValue}
    \horstFreeVar{lt}{\horstTypeHomInit{\horstTypeLValue}{\horstOpAppls{\horstParVarfid}{}}}
    \horstFreeVar{gtN}{\horstTypeHomInit{\horstTypeLValue}{\horstOpAppgs{}{}}}
    \horstFreeVar{i}{\horstTypeValue}
    \horstFreeVar{gt}{\horstTypeHomInit{\horstTypeLValue}{\horstOpAppgs{}{}}}
    \horstFreeVar{size}{\horstTypeLValue}
    \horstPremise{\horstPredAppMState{\horstParVarfid,0}{\horstFreeVarctx,\horstTUPINIT{},\horstFreeVargt,\horstFreeVarlt,\horstConstructorAppMem{\horstFreeVari,\horstFreeVarv,\horstFreeVarsize},\horstFreeVartbl,\horstFreeVaratN,\horstFreeVargtN,\horstFreeVarmemN}}
    \horstPremise{\horstSimpleSumExp{OR}{\horstSelectorFunctionAppmemoryDataInLabelForImportedFunction{\horstParVarcd,\horstParVarid}{\horstParVarfid}}{\horstAND{\horstOpAppflowsTo{}{\horstOpAppmkLabel{}{\horstParVarcd,\horstParVarid},\horstOpAppperspectiveOfCtx{}{\horstFreeVarctx}}}{\horstEQ{\horstOpApplabelOf{}{\horstFreeVarv}}{\horstConstructorAppIllegal{}}}}{{cd}{id}}}
    \horstConclusion{\horstPredApptestNoninterferenceImportedFunctionMemoryDataSat{\horstParVarfid,\horstParVarcid}{}}
  \end{horstClause}
\end{horstRule}
\begin{horstRule}{cvtOpRule}
  \horstParVar{fid}{\horstTypeint}
  \horstParVar{op}{\horstTypeint}
  \horstParVar{pc}{\horstTypeint}
  \horstSelectorFunctionInvocation{\horstSelectorFunctionAppfunctionIds{\horstParVarfid}{},\horstSelectorFunctionAppcvtOps{\horstParVarop}{},\horstSelectorFunctionApppcsForFunctionIdAndOpcode{\horstParVarpc}{\horstParVarfid,\horstParVarop}}
  \begin{horstClause}
    \horstFreeVar{memN}{\horstTypeMemory}
    \horstFreeVar{atN}{\horstTypeHomInit{\horstTypeLValue}{\horstOpAppas{\horstParVarfid}{}}}
    \horstFreeVar{st}{\horstTypeHomInit{\horstTypeLValue}{\horstSUB{\horstOpAppss{\horstParVarfid,\horstParVarpc}{}}{1}}}
    \horstFreeVar{tbl}{\horstTypeTable}
    \horstFreeVar{ctx}{\horstTypeContext}
    \horstFreeVar{lt}{\horstTypeHomInit{\horstTypeLValue}{\horstOpAppls{\horstParVarfid}{}}}
    \horstFreeVar{mem}{\horstTypeMemory}
    \horstFreeVar{x}{\horstTypeLValue}
    \horstFreeVar{gtN}{\horstTypeHomInit{\horstTypeLValue}{\horstOpAppgs{}{}}}
    \horstFreeVar{gt}{\horstTypeHomInit{\horstTypeLValue}{\horstOpAppgs{}{}}}
    \horstPremise{\horstPredAppMState{\horstParVarfid,\horstParVarpc}{\horstFreeVarctx,\horstCONS{\horstFreeVarx}{\horstFreeVarst},\horstFreeVargt,\horstFreeVarlt,\horstFreeVarmem,\horstFreeVartbl,\horstFreeVaratN,\horstFreeVargtN,\horstFreeVarmemN}}
    \horstConclusion{\horstPredAppMState{\horstParVarfid,\horstADD{\horstParVarpc}{1}}{\horstFreeVarctx,\horstCONS{\horstOpAppraiseTo{}{\horstOpApplabelledCvtOp{\horstParVarop}{\horstFreeVarx},\horstOpApplabelOfCtx{}{\horstFreeVarctx}}}{\horstFreeVarst},\horstFreeVargt,\horstFreeVarlt,\horstFreeVarmem,\horstFreeVartbl,\horstFreeVaratN,\horstFreeVargtN,\horstFreeVarmemN}}
  \end{horstClause}
\end{horstRule}
\begin{horstRule}{testNoninterferenceMemoryAreaUnsat}
  \horstParVar{fid}{\horstTypeint}
  \horstParVar{cid}{\horstTypeint}
  \horstParVar{start}{\horstTypeint}
  \horstParVar{endInclusive}{\horstTypeint}
  \horstSelectorFunctionInvocation{\horstSelectorFunctionAppstartFunctionId{\horstParVarfid}{},\horstSelectorFunctionAppmemoryAreaSafe{\horstParVarcid,\horstParVarstart,\horstParVarendInclusive}{}}
  \begin{horstClause}
    \horstFreeVar{memN}{\horstTypeMemory}
    \horstFreeVar{atN}{\horstTypeHomInit{\horstTypeLValue}{\horstOpAppas{\horstParVarfid}{}}}
    \horstFreeVar{tbl}{\horstTypeTable}
    \horstFreeVar{rt}{\horstTypeHomInit{\horstTypeLValue}{\horstOpApprs{\horstParVarfid}{}}}
    \horstFreeVar{ctx}{\horstTypeContext}
    \horstFreeVar{mem}{\horstTypeMemory}
    \horstFreeVar{gtN}{\horstTypeHomInit{\horstTypeLValue}{\horstOpAppgs{}{}}}
    \horstFreeVar{gt}{\horstTypeHomInit{\horstTypeLValue}{\horstOpAppgs{}{}}}
    \horstPremise{\horstPredAppReturnCall{\horstParVarfid}{\horstFreeVarctx,\horstFreeVarrt,\horstFreeVargt,\horstFreeVarmem,\horstFreeVartbl,\horstFreeVaratN,\horstFreeVargtN,\horstFreeVarmemN}}
    \horstPremise{\horstOpAppabsle{}{\horstOpAppmkValue{}{\horstParVarstart},\horstOpAppindexOfMem{}{\horstFreeVarmem}}}
    \horstPremise{\horstOpAppabsle{}{\horstOpAppindexOfMem{}{\horstFreeVarmem},\horstOpAppmkValue{}{\horstParVarendInclusive}}}
    \horstPremise{\horstSimpleSumExp{OR}{\horstSelectorFunctionAppmemoryDataOutLabel{\horstParVarcd,\horstParVarid}{}}{\horstAND{\horstOpAppflowsTo{}{\horstOpAppmkLabel{}{\horstParVarcd,\horstParVarid},\horstOpAppperspectiveOfCtx{}{\horstFreeVarctx}}}{\horstEQ{\horstOpApplabelOf{}{\horstOpAppvalueOfMem{}{\horstFreeVarmem}}}{\horstConstructorAppIllegal{}}}}{{cd}{id}}}
    \horstConclusion{\horstPredApptestNoninterferenceMemoryAreaUnsat{\horstParVarfid,\horstParVarcid,\horstParVarstart,\horstParVarendInclusive}{}}
  \end{horstClause}
\end{horstRule}
\begin{horstRule}{functionExitRule}
  \horstParVar{fid}{\horstTypeint}
  \horstParVar{pc}{\horstTypeint}
  \horstSelectorFunctionInvocation{\horstSelectorFunctionAppfunctionIds{\horstParVarfid}{},\horstSelectorFunctionAppexitPointsForFunctionId{\horstParVarpc}{\horstParVarfid}}
  \begin{horstClause}
    \horstFreeVar{memN}{\horstTypeMemory}
    \horstFreeVar{atN}{\horstTypeHomInit{\horstTypeLValue}{\horstOpAppas{\horstParVarfid}{}}}
    \horstFreeVar{st}{\horstTypeHomInit{\horstTypeLValue}{\horstOpAppss{\horstParVarfid,\horstParVarpc}{}}}
    \horstFreeVar{tbl}{\horstTypeTable}
    \horstFreeVar{rt}{\horstTypeHomInit{\horstTypeLValue}{\horstOpApprs{\horstParVarfid}{}}}
    \horstFreeVar{p}{\horstTypeLabel}
    \horstFreeVar{lt}{\horstTypeHomInit{\horstTypeLValue}{\horstOpAppls{\horstParVarfid}{}}}
    \horstFreeVar{mem}{\horstTypeMemory}
    \horstFreeVar{gtN}{\horstTypeHomInit{\horstTypeLValue}{\horstOpAppgs{}{}}}
    \horstFreeVar{gt}{\horstTypeHomInit{\horstTypeLValue}{\horstOpAppgs{}{}}}
    \horstFreeVar{from}{\horstTypeint}
    \horstPremise{\horstPredAppMState{\horstParVarfid,\horstParVarpc}{\horstConstructorAppCtx{\horstFreeVarp,\horstConstructorAppLegal{},\horstFreeVarfrom},\horstFreeVarst,\horstFreeVargt,\horstFreeVarlt,\horstFreeVarmem,\horstFreeVartbl,\horstFreeVaratN,\horstFreeVargtN,\horstFreeVarmemN}}
    \horstConclusion{\horstPredAppReturnToJoin{\horstParVarfid}{\horstConstructorAppCtx{\horstFreeVarp,\horstConstructorAppLegal{},\horstParVarpc},\horstSLICE{\horstFreeVarst}{}{\horstOpApprs{\horstParVarfid}{}},\horstFreeVargt,\horstFreeVarmem,\horstFreeVartbl,\horstFreeVaratN,\horstFreeVargtN,\horstFreeVarmemN}}
  \end{horstClause}
  \begin{horstClause}
    \horstFreeVar{memN}{\horstTypeMemory}
    \horstFreeVar{atN}{\horstTypeHomInit{\horstTypeLValue}{\horstOpAppas{\horstParVarfid}{}}}
    \horstFreeVar{st}{\horstTypeHomInit{\horstTypeLValue}{\horstOpAppss{\horstParVarfid,\horstParVarpc}{}}}
    \horstFreeVar{tbl}{\horstTypeTable}
    \horstFreeVar{rt}{\horstTypeHomInit{\horstTypeLValue}{\horstOpApprs{\horstParVarfid}{}}}
    \horstFreeVar{p}{\horstTypeLabel}
    \horstFreeVar{lt}{\horstTypeHomInit{\horstTypeLValue}{\horstOpAppls{\horstParVarfid}{}}}
    \horstFreeVar{mem}{\horstTypeMemory}
    \horstFreeVar{gtN}{\horstTypeHomInit{\horstTypeLValue}{\horstOpAppgs{}{}}}
    \horstFreeVar{gt}{\horstTypeHomInit{\horstTypeLValue}{\horstOpAppgs{}{}}}
    \horstFreeVar{from}{\horstTypeint}
    \horstPremise{\horstPredAppMState{\horstParVarfid,\horstParVarpc}{\horstConstructorAppCtx{\horstFreeVarp,\horstConstructorAppIllegal{},\horstFreeVarfrom},\horstFreeVarst,\horstFreeVargt,\horstFreeVarlt,\horstFreeVarmem,\horstFreeVartbl,\horstFreeVaratN,\horstFreeVargtN,\horstFreeVarmemN}}
    \horstPremise{\horstLE{0}{\horstFreeVarfrom}}
    \horstConclusion{\horstPredAppReturnToJoin{\horstParVarfid}{\horstConstructorAppCtx{\horstFreeVarp,\horstConstructorAppIllegal{},\horstFreeVarfrom},\horstSLICE{\horstFreeVarst}{}{\horstOpApprs{\horstParVarfid}{}},\horstFreeVargt,\horstFreeVarmem,\horstFreeVartbl,\horstFreeVaratN,\horstFreeVargtN,\horstFreeVarmemN}}
  \end{horstClause}
  \begin{horstClause}
    \horstFreeVar{memN}{\horstTypeMemory}
    \horstFreeVar{atN}{\horstTypeHomInit{\horstTypeLValue}{\horstOpAppas{\horstParVarfid}{}}}
    \horstFreeVar{st}{\horstTypeHomInit{\horstTypeLValue}{\horstOpAppss{\horstParVarfid,\horstParVarpc}{}}}
    \horstFreeVar{tbl}{\horstTypeTable}
    \horstFreeVar{rt}{\horstTypeHomInit{\horstTypeLValue}{\horstOpApprs{\horstParVarfid}{}}}
    \horstFreeVar{p}{\horstTypeLabel}
    \horstFreeVar{lt}{\horstTypeHomInit{\horstTypeLValue}{\horstOpAppls{\horstParVarfid}{}}}
    \horstFreeVar{mem}{\horstTypeMemory}
    \horstFreeVar{gtN}{\horstTypeHomInit{\horstTypeLValue}{\horstOpAppgs{}{}}}
    \horstFreeVar{gt}{\horstTypeHomInit{\horstTypeLValue}{\horstOpAppgs{}{}}}
    \horstFreeVar{from}{\horstTypeint}
    \horstPremise{\horstPredAppMState{\horstParVarfid,\horstParVarpc}{\horstConstructorAppCtx{\horstFreeVarp,\horstConstructorAppIllegal{},\horstFreeVarfrom},\horstFreeVarst,\horstFreeVargt,\horstFreeVarlt,\horstFreeVarmem,\horstFreeVartbl,\horstFreeVaratN,\horstFreeVargtN,\horstFreeVarmemN}}
    \horstPremise{\horstLT{\horstFreeVarfrom}{0}}
    \horstConclusion{\horstPredAppReturn{\horstParVarfid}{\horstConstructorAppCtx{\horstFreeVarp,\horstConstructorAppIllegal{},\horstFreeVarfrom},\horstSLICE{\horstFreeVarst}{}{\horstOpApprs{\horstParVarfid}{}},\horstFreeVargt,\horstFreeVarmem,\horstFreeVartbl,\horstFreeVaratN,\horstFreeVargtN,\horstFreeVarmemN}}
  \end{horstClause}
  \begin{horstClause}
    \horstFreeVar{memN}{\horstTypeMemory}
    \horstFreeVar{atN}{\horstTypeHomInit{\horstTypeLValue}{\horstOpAppas{\horstParVarfid}{}}}
    \horstFreeVar{st}{\horstTypeHomInit{\horstTypeLValue}{\horstOpAppss{\horstParVarfid,\horstParVarpc}{}}}
    \horstFreeVar{tbl}{\horstTypeTable}
    \horstFreeVar{rt}{\horstTypeHomInit{\horstTypeLValue}{\horstOpApprs{\horstParVarfid}{}}}
    \horstFreeVar{p}{\horstTypeLabel}
    \horstFreeVar{lt}{\horstTypeHomInit{\horstTypeLValue}{\horstOpAppls{\horstParVarfid}{}}}
    \horstFreeVar{mem}{\horstTypeMemory}
    \horstFreeVar{gtN}{\horstTypeHomInit{\horstTypeLValue}{\horstOpAppgs{}{}}}
    \horstFreeVar{gt}{\horstTypeHomInit{\horstTypeLValue}{\horstOpAppgs{}{}}}
    \horstFreeVar{from}{\horstTypeint}
    \horstPremise{\horstPredAppMState{\horstParVarfid,\horstParVarpc}{\horstConstructorAppCtx{\horstFreeVarp,\horstConstructorAppIllegal{},\horstFreeVarfrom},\horstFreeVarst,\horstFreeVargt,\horstFreeVarlt,\horstFreeVarmem,\horstFreeVartbl,\horstFreeVaratN,\horstFreeVargtN,\horstFreeVarmemN}}
    \horstConclusion{\horstPredAppScopeExtend{\horstParVarfid}{\horstFreeVarfrom,\horstOpApppcmax{\horstParVarfid}{}}}
  \end{horstClause}
\end{horstRule}
\begin{horstRule}{growRule}
  \horstParVar{fid}{\horstTypeint}
  \horstParVar{pc}{\horstTypeint}
  \horstSelectorFunctionInvocation{\horstSelectorFunctionAppfunctionIds{\horstParVarfid}{},\horstSelectorFunctionApppcsForFunctionIdAndOpcode{\horstParVarpc}{\horstParVarfid,\horstConstMEMORYGROW}}
  \begin{horstClause}
    \horstFreeVar{atN}{\horstTypeHomInit{\horstTypeLValue}{\horstOpAppas{\horstParVarfid}{}}}
    \horstFreeVar{st}{\horstTypeHomInit{\horstTypeLValue}{\horstSUB{\horstOpAppss{\horstParVarfid,\horstParVarpc}{}}{1}}}
    \horstFreeVar{tbl}{\horstTypeTable}
    \horstFreeVar{lt}{\horstTypeHomInit{\horstTypeLValue}{\horstOpAppls{\horstParVarfid}{}}}
    \horstFreeVar{i}{\horstTypeValue}
    \horstFreeVar{gt}{\horstTypeHomInit{\horstTypeLValue}{\horstOpAppgs{}{}}}
    \horstFreeVar{size}{\horstTypeLValue}
    \horstFreeVar{memN}{\horstTypeMemory}
    \horstFreeVar{nsize}{\horstTypeLValue}
    \horstFreeVar{ctx}{\horstTypeContext}
    \horstFreeVar{v}{\horstTypeLValue}
    \horstFreeVar{x}{\horstTypeLValue}
    \horstFreeVar{gtN}{\horstTypeHomInit{\horstTypeLValue}{\horstOpAppgs{}{}}}
    \horstFreeVar{ret}{\horstTypeLValue}
    \horstPremise{\horstPredAppMState{\horstParVarfid,\horstParVarpc}{\horstFreeVarctx,\horstCONS{\horstFreeVarx}{\horstFreeVarst},\horstFreeVargt,\horstFreeVarlt,\horstConstructorAppMem{\horstFreeVari,\horstFreeVarv,\horstFreeVarsize},\horstFreeVartbl,\horstFreeVaratN,\horstFreeVargtN,\horstFreeVarmemN}}
    \horstPremise{\horstOpAppiltu{64}{\horstFreeVari,\horstOpAppishl{64}{\horstOpAppvalueOf{}{\horstFreeVarsize},\horstOpAppmkConst{16}{}}}}
    \horstPremise{\horstEQ{\horstOpAppvalueOf{}{\horstFreeVarret}}{\horstCOND{\horstOpAppgrowOK{}{\horstOpAppbase{}{\horstOpAppvalueOf{}{\horstFreeVarx}},\horstOpAppbase{}{\horstOpAppvalueOf{}{\horstFreeVarsize}},\horstOpAppbase{}{\horstOpAppmkConst{\horstOpAppmms{}{}}{}}}}{\horstOpAppvalueOf{}{\horstFreeVarsize}}{\horstOpAppmkConst{-1}{}}}}
    \horstPremise{\horstEQ{\horstOpApplabelOf{}{\horstFreeVarret}}{\horstOpAppflub{3}{\horstTUPINIT{\horstOpApplabelOfCtx{}{\horstFreeVarctx},\horstOpApplabelOf{}{\horstFreeVarx},\horstOpApplabelOf{}{\horstFreeVarsize}}}}}
    \horstPremise{\horstEQ{\horstOpAppvalueOf{}{\horstFreeVarnsize}}{\horstOpAppgrow{}{\horstOpAppvalueOf{}{\horstFreeVarx},\horstOpAppvalueOf{}{\horstFreeVarsize},\horstOpAppmkConst{\horstOpAppmms{}{}}{}}}}
    \horstPremise{\horstEQ{\horstOpApplabelOf{}{\horstFreeVarnsize}}{\horstOpAppflub{3}{\horstTUPINIT{\horstOpApplabelOfCtx{}{\horstFreeVarctx},\horstOpApplabelOf{}{\horstFreeVarx},\horstOpApplabelOf{}{\horstFreeVarsize}}}}}
    \horstConclusion{\horstPredAppMState{\horstParVarfid,\horstADD{\horstParVarpc}{1}}{\horstFreeVarctx,\horstCONS{\horstFreeVarret}{\horstFreeVarst},\horstFreeVargt,\horstFreeVarlt,\horstConstructorAppMem{\horstFreeVari,\horstFreeVarv,\horstFreeVarnsize},\horstFreeVartbl,\horstFreeVaratN,\horstFreeVargtN,\horstFreeVarmemN}}
  \end{horstClause}
  \begin{horstClause}
    \horstFreeVar{atN}{\horstTypeHomInit{\horstTypeLValue}{\horstOpAppas{\horstParVarfid}{}}}
    \horstFreeVar{st}{\horstTypeHomInit{\horstTypeLValue}{\horstSUB{\horstOpAppss{\horstParVarfid,\horstParVarpc}{}}{1}}}
    \horstFreeVar{tbl}{\horstTypeTable}
    \horstFreeVar{lt}{\horstTypeHomInit{\horstTypeLValue}{\horstOpAppls{\horstParVarfid}{}}}
    \horstFreeVar{i}{\horstTypeValue}
    \horstFreeVar{gt}{\horstTypeHomInit{\horstTypeLValue}{\horstOpAppgs{}{}}}
    \horstFreeVar{size}{\horstTypeLValue}
    \horstFreeVar{memN}{\horstTypeMemory}
    \horstFreeVar{nsize}{\horstTypeLValue}
    \horstFreeVar{ctx}{\horstTypeContext}
    \horstFreeVar{v}{\horstTypeLValue}
    \horstFreeVar{x}{\horstTypeLValue}
    \horstFreeVar{gtN}{\horstTypeHomInit{\horstTypeLValue}{\horstOpAppgs{}{}}}
    \horstFreeVar{ret}{\horstTypeLValue}
    \horstPremise{\horstPredAppMState{\horstParVarfid,\horstParVarpc}{\horstFreeVarctx,\horstCONS{\horstFreeVarx}{\horstFreeVarst},\horstFreeVargt,\horstFreeVarlt,\horstConstructorAppMem{\horstFreeVari,\horstFreeVarv,\horstFreeVarsize},\horstFreeVartbl,\horstFreeVaratN,\horstFreeVargtN,\horstFreeVarmemN}}
    \horstPremise{\horstOpAppigeu{64}{\horstFreeVari,\horstOpAppishl{64}{\horstOpAppvalueOf{}{\horstFreeVarsize},\horstOpAppmkConst{16}{}}}}
    \horstPremise{\horstEQ{\horstOpAppvalueOf{}{\horstFreeVarret}}{\horstCOND{\horstOpAppgrowOK{}{\horstOpAppbase{}{\horstOpAppvalueOf{}{\horstFreeVarx}},\horstOpAppbase{}{\horstOpAppvalueOf{}{\horstFreeVarsize}},\horstOpAppbase{}{\horstOpAppmkConst{\horstOpAppmms{}{}}{}}}}{\horstOpAppvalueOf{}{\horstFreeVarsize}}{\horstOpAppmkConst{-1}{}}}}
    \horstPremise{\horstEQ{\horstOpApplabelOf{}{\horstFreeVarret}}{\horstOpAppflub{3}{\horstTUPINIT{\horstOpApplabelOfCtx{}{\horstFreeVarctx},\horstOpApplabelOf{}{\horstFreeVarx},\horstOpApplabelOf{}{\horstFreeVarsize}}}}}
    \horstPremise{\horstEQ{\horstOpAppvalueOf{}{\horstFreeVarnsize}}{\horstOpAppgrow{}{\horstOpAppvalueOf{}{\horstFreeVarx},\horstOpAppvalueOf{}{\horstFreeVarsize},\horstOpAppmkConst{\horstOpAppmms{}{}}{}}}}
    \horstPremise{\horstEQ{\horstOpApplabelOf{}{\horstFreeVarnsize}}{\horstOpAppflub{3}{\horstTUPINIT{\horstOpApplabelOfCtx{}{\horstFreeVarctx},\horstOpApplabelOf{}{\horstFreeVarx},\horstOpApplabelOf{}{\horstFreeVarsize}}}}}
    \horstConclusion{\horstPredAppMState{\horstParVarfid,\horstADD{\horstParVarpc}{1}}{\horstFreeVarctx,\horstCONS{\horstFreeVarret}{\horstFreeVarst},\horstFreeVargt,\horstFreeVarlt,\horstConstructorAppMem{\horstFreeVari,\horstOpAppraiseTo{}{\horstFreeVarv,\horstOpApplabelOf{}{\horstFreeVarnsize}},\horstFreeVarnsize},\horstFreeVartbl,\horstFreeVaratN,\horstFreeVargtN,\horstFreeVarmemN}}
  \end{horstClause}
\end{horstRule}
\begin{horstRule}{testNoninterferenceImportedFunctionContextSat}
  \horstParVar{fid}{\horstTypeint}
  \horstParVar{cid}{\horstTypeint}
  \horstSelectorFunctionInvocation{\horstSelectorFunctionAppimportedFunctionIds{\horstParVarfid}{},\horstSelectorFunctionAppimportCallContextLeak{\horstParVarcid}{\horstParVarfid}}
  \begin{horstClause}
    \horstFreeVar{memN}{\horstTypeMemory}
    \horstFreeVar{atN}{\horstTypeHomInit{\horstTypeLValue}{\horstOpAppas{\horstParVarfid}{}}}
    \horstFreeVar{tbl}{\horstTypeTable}
    \horstFreeVar{ctx}{\horstTypeContext}
    \horstFreeVar{lt}{\horstTypeHomInit{\horstTypeLValue}{\horstOpAppls{\horstParVarfid}{}}}
    \horstFreeVar{mem}{\horstTypeMemory}
    \horstFreeVar{gtN}{\horstTypeHomInit{\horstTypeLValue}{\horstOpAppgs{}{}}}
    \horstFreeVar{gt}{\horstTypeHomInit{\horstTypeLValue}{\horstOpAppgs{}{}}}
    \horstPremise{\horstPredAppMState{\horstParVarfid,0}{\horstFreeVarctx,\horstTUPINIT{},\horstFreeVargt,\horstFreeVarlt,\horstFreeVarmem,\horstFreeVartbl,\horstFreeVaratN,\horstFreeVargtN,\horstFreeVarmemN}}
    \horstPremise{\horstSimpleSumExp{OR}{\horstSelectorFunctionAppcontextLabelForImportedFunction{\horstParVarcc,\horstParVaric}{\horstParVarfid}}{\horstAND{\horstOpAppflowsTo{}{\horstOpAppmkLabel{}{\horstParVarcc,\horstParVaric},\horstOpAppperspectiveOfCtx{}{\horstFreeVarctx}}}{\horstEQ{\horstOpApplabelOfCtx{}{\horstFreeVarctx}}{\horstConstructorAppIllegal{}}}}{{cc}{ic}}}
    \horstConclusion{\horstPredApptestNoninterferenceImportedFunctionContextSat{\horstParVarfid,\horstParVarcid}{}}
  \end{horstClause}
\end{horstRule}
\begin{horstRule}{testNoninterferenceImportedFunctionMemoryDataUnsat}
  \horstParVar{fid}{\horstTypeint}
  \horstParVar{cid}{\horstTypeint}
  \horstSelectorFunctionInvocation{\horstSelectorFunctionAppimportedFunctionIds{\horstParVarfid}{},\horstSelectorFunctionAppimportCallMemoryDataSafe{\horstParVarcid}{\horstParVarfid}}
  \begin{horstClause}
    \horstFreeVar{memN}{\horstTypeMemory}
    \horstFreeVar{atN}{\horstTypeHomInit{\horstTypeLValue}{\horstOpAppas{\horstParVarfid}{}}}
    \horstFreeVar{tbl}{\horstTypeTable}
    \horstFreeVar{ctx}{\horstTypeContext}
    \horstFreeVar{v}{\horstTypeLValue}
    \horstFreeVar{lt}{\horstTypeHomInit{\horstTypeLValue}{\horstOpAppls{\horstParVarfid}{}}}
    \horstFreeVar{gtN}{\horstTypeHomInit{\horstTypeLValue}{\horstOpAppgs{}{}}}
    \horstFreeVar{i}{\horstTypeValue}
    \horstFreeVar{gt}{\horstTypeHomInit{\horstTypeLValue}{\horstOpAppgs{}{}}}
    \horstFreeVar{size}{\horstTypeLValue}
    \horstPremise{\horstPredAppMState{\horstParVarfid,0}{\horstFreeVarctx,\horstTUPINIT{},\horstFreeVargt,\horstFreeVarlt,\horstConstructorAppMem{\horstFreeVari,\horstFreeVarv,\horstFreeVarsize},\horstFreeVartbl,\horstFreeVaratN,\horstFreeVargtN,\horstFreeVarmemN}}
    \horstPremise{\horstSimpleSumExp{OR}{\horstSelectorFunctionAppmemoryDataInLabelForImportedFunction{\horstParVarcd,\horstParVarid}{\horstParVarfid}}{\horstAND{\horstOpAppflowsTo{}{\horstOpAppmkLabel{}{\horstParVarcd,\horstParVarid},\horstOpAppperspectiveOfCtx{}{\horstFreeVarctx}}}{\horstEQ{\horstOpApplabelOf{}{\horstFreeVarv}}{\horstConstructorAppIllegal{}}}}{{cd}{id}}}
    \horstConclusion{\horstPredApptestNoninterferenceImportedFunctionMemoryDataUnsat{\horstParVarfid,\horstParVarcid}{}}
  \end{horstClause}
\end{horstRule}
\begin{horstRule}{callIndirectHavokRule}
  \horstParVar{fid}{\horstTypeint}
  \horstParVar{pc}{\horstTypeint}
  \horstParVar{cid}{\horstTypeint}
  \horstSelectorFunctionInvocation{\horstSelectorFunctionAppfunctionIds{\horstParVarfid}{},\horstSelectorFunctionApppcsForFunctionIdAndOpcode{\horstParVarpc}{\horstParVarfid,\horstConstCALLINDIRECT},\horstSelectorFunctionApppossibleHavokCallTargets{\horstParVarcid}{\horstParVarfid,\horstParVarpc}}
  \begin{horstClause}
    \horstFreeVar{atN}{\horstTypeHomInit{\horstTypeLValue}{\horstOpAppas{\horstParVarfid}{}}}
    \horstFreeVar{st}{\horstTypeHomInit{\horstTypeLValue}{\horstSUB{\horstOpAppss{\horstParVarfid,\horstParVarpc}{}}{1}}}
    \horstFreeVar{ngt}{\horstTypeHomInit{\horstTypeLValue}{\horstOpAppgs{}{}}}
    \horstFreeVar{lt}{\horstTypeHomInit{\horstTypeLValue}{\horstOpAppls{\horstParVarfid}{}}}
    \horstFreeVar{cl}{\horstTypeFlowLabel}
    \horstFreeVar{mem}{\horstTypeMemory}
    \horstFreeVar{gt}{\horstTypeHomInit{\horstTypeLValue}{\horstOpAppgs{}{}}}
    \horstFreeVar{nmem}{\horstTypeMemory}
    \horstFreeVar{memN}{\horstTypeMemory}
    \horstFreeVar{p}{\horstTypeLabel}
    \horstFreeVar{at}{\horstTypeHomInit{\horstTypeLValue}{\horstOpAppas{\horstParVarcid}{}}}
    \horstFreeVar{ctx}{\horstTypeContext}
    \horstFreeVar{x}{\horstTypeLValue}
    \horstFreeVar{gtN}{\horstTypeHomInit{\horstTypeLValue}{\horstOpAppgs{}{}}}
    \horstFreeVar{ltbl}{\horstTypeFlowLabel}
    \horstPremise{\horstPredAppMState{\horstParVarfid,\horstParVarpc}{\horstFreeVarctx,\horstCONS{\horstFreeVarx}{\horstFreeVarst},\horstFreeVargt,\horstFreeVarlt,\horstFreeVarmem,\horstConstructorAppTbl{\horstConstructorAppTblImprecise{},\horstFreeVarltbl},\horstFreeVaratN,\horstFreeVargtN,\horstFreeVarmemN}}
    \horstPremise{\horstOpAppoverApproximateCallArguments{\horstParVarcid}{\horstOpAppreverse{\horstOpAppas{\horstParVarcid}{}}{\horstSLICE{\horstFreeVarst}{}{\horstOpAppas{\horstParVarcid}{}}},\horstFreeVarat}}
    \horstPremise{\horstOpAppoverApproximateCallGlobals{\horstParVarcid}{\horstFreeVargt,\horstFreeVarngt}}
    \horstPremise{\horstOpAppoverApproximateCallMemory{\horstParVarcid}{\horstFreeVarmem,\horstFreeVarnmem}}
    \horstPremise{\horstEQ{\horstFreeVarp}{\horstOpAppperspectiveOfCtx{}{\horstFreeVarctx}}}
    \horstPremise{\horstEQ{\horstFreeVarcl}{\horstOpAppflub{3}{\horstTUPINIT{\horstOpApplabelOfCtx{}{\horstFreeVarctx},\horstOpApplabelOf{}{\horstFreeVarx},\horstFreeVarltbl}}}}
    \horstConclusion{\horstPredAppMState{\horstParVarcid,0}{\horstOpAppmkCtx{}{\horstFreeVarp,\horstFreeVarcl},\horstTUPINIT{},\horstFreeVarngt,\horstCONCAT{\horstFreeVarat}{\horstHOMINIT{\horstOpAppmkLConst{0}{}}{\horstSUB{\horstOpAppls{\horstParVarcid}{}}{\horstOpAppas{\horstParVarcid}{}}}},\horstFreeVarnmem,\horstConstructorAppTbl{\horstConstructorAppTblImprecise{},\horstFreeVarltbl},\horstFreeVarat,\horstFreeVarngt,\horstFreeVarnmem}}
  \end{horstClause}
  \begin{horstClause}
    \horstFreeVar{atN}{\horstTypeHomInit{\horstTypeLValue}{\horstOpAppas{\horstParVarfid}{}}}
    \horstFreeVar{st}{\horstTypeHomInit{\horstTypeLValue}{\horstSUB{\horstOpAppss{\horstParVarfid,\horstParVarpc}{}}{1}}}
    \horstFreeVar{rt}{\horstTypeHomInit{\horstTypeLValue}{\horstOpApprs{\horstParVarcid}{}}}
    \horstFreeVar{ngt}{\horstTypeHomInit{\horstTypeLValue}{\horstOpAppgs{}{}}}
    \horstFreeVar{rtbl}{\horstTypeTable}
    \horstFreeVar{lt}{\horstTypeHomInit{\horstTypeLValue}{\horstOpAppls{\horstParVarfid}{}}}
    \horstFreeVar{cl}{\horstTypeFlowLabel}
    \horstFreeVar{mem}{\horstTypeMemory}
    \horstFreeVar{rfrom}{\horstTypeint}
    \horstFreeVar{gt}{\horstTypeHomInit{\horstTypeLValue}{\horstOpAppgs{}{}}}
    \horstFreeVar{nmem}{\horstTypeMemory}
    \horstFreeVar{rmem}{\horstTypeMemory}
    \horstFreeVar{memN}{\horstTypeMemory}
    \horstFreeVar{p}{\horstTypeLabel}
    \horstFreeVar{at}{\horstTypeHomInit{\horstTypeLValue}{\horstOpAppas{\horstParVarcid}{}}}
    \horstFreeVar{ctx}{\horstTypeContext}
    \horstFreeVar{x}{\horstTypeLValue}
    \horstFreeVar{gtN}{\horstTypeHomInit{\horstTypeLValue}{\horstOpAppgs{}{}}}
    \horstFreeVar{ltbl}{\horstTypeFlowLabel}
    \horstFreeVar{rgt}{\horstTypeHomInit{\horstTypeLValue}{\horstOpAppgs{}{}}}
    \horstPremise{\horstPredAppMState{\horstParVarfid,\horstParVarpc}{\horstFreeVarctx,\horstCONS{\horstFreeVarx}{\horstFreeVarst},\horstFreeVargt,\horstFreeVarlt,\horstFreeVarmem,\horstConstructorAppTbl{\horstConstructorAppTblImprecise{},\horstFreeVarltbl},\horstFreeVaratN,\horstFreeVargtN,\horstFreeVarmemN}}
    \horstPremise{\horstEQ{\horstOpApplabelOfCtx{}{\horstFreeVarctx}}{\horstConstructorAppLegal{}}}
    \horstPremise{\horstEQ{\horstOpAppflub{2}{\horstTUPINIT{\horstOpApplabelOf{}{\horstFreeVarx},\horstFreeVarltbl}}}{\horstConstructorAppIllegal{}}}
    \horstPremise{\horstOpAppoverApproximateCallArguments{\horstParVarcid}{\horstOpAppreverse{\horstOpAppas{\horstParVarcid}{}}{\horstSLICE{\horstFreeVarst}{}{\horstOpAppas{\horstParVarcid}{}}},\horstFreeVarat}}
    \horstPremise{\horstOpAppoverApproximateCallGlobals{\horstParVarcid}{\horstFreeVargt,\horstFreeVarngt}}
    \horstPremise{\horstOpAppoverApproximateCallMemory{\horstParVarcid}{\horstFreeVarmem,\horstFreeVarnmem}}
    \horstPremise{\horstEQ{\horstFreeVarp}{\horstOpAppperspectiveOfCtx{}{\horstFreeVarctx}}}
    \horstPremise{\horstPredAppReturn{\horstParVarcid}{\horstConstructorAppCtx{\horstFreeVarp,\horstConstructorAppIllegal{},\horstFreeVarrfrom},\horstFreeVarrt,\horstFreeVarrgt,\horstFreeVarrmem,\horstFreeVarrtbl,\horstFreeVarat,\horstFreeVarngt,\horstFreeVarnmem}}
    \horstPremise{\horstLT{\horstFreeVarrfrom}{0}}
    \horstConclusion{\horstPredAppMStateToJoin{\horstParVarfid,\horstADD{\horstParVarpc}{1}}{\horstFreeVarctx,\horstCONCAT{\horstFreeVarrt}{\horstSLICE{\horstFreeVarst}{\horstOpAppas{\horstParVarcid}{}}{}},\horstFreeVarrgt,\horstFreeVarlt,\horstFreeVarrmem,\horstFreeVarrtbl,\horstFreeVaratN,\horstFreeVargtN,\horstFreeVarmemN}}
  \end{horstClause}
  \begin{horstClause}
    \horstFreeVar{atN}{\horstTypeHomInit{\horstTypeLValue}{\horstOpAppas{\horstParVarfid}{}}}
    \horstFreeVar{st}{\horstTypeHomInit{\horstTypeLValue}{\horstSUB{\horstOpAppss{\horstParVarfid,\horstParVarpc}{}}{1}}}
    \horstFreeVar{rt}{\horstTypeHomInit{\horstTypeLValue}{\horstOpApprs{\horstParVarcid}{}}}
    \horstFreeVar{ngt}{\horstTypeHomInit{\horstTypeLValue}{\horstOpAppgs{}{}}}
    \horstFreeVar{rtbl}{\horstTypeTable}
    \horstFreeVar{lt}{\horstTypeHomInit{\horstTypeLValue}{\horstOpAppls{\horstParVarfid}{}}}
    \horstFreeVar{cl}{\horstTypeFlowLabel}
    \horstFreeVar{mem}{\horstTypeMemory}
    \horstFreeVar{rfrom}{\horstTypeint}
    \horstFreeVar{gt}{\horstTypeHomInit{\horstTypeLValue}{\horstOpAppgs{}{}}}
    \horstFreeVar{nmem}{\horstTypeMemory}
    \horstFreeVar{rmem}{\horstTypeMemory}
    \horstFreeVar{memN}{\horstTypeMemory}
    \horstFreeVar{p}{\horstTypeLabel}
    \horstFreeVar{at}{\horstTypeHomInit{\horstTypeLValue}{\horstOpAppas{\horstParVarcid}{}}}
    \horstFreeVar{ctx}{\horstTypeContext}
    \horstFreeVar{rl}{\horstTypeFlowLabel}
    \horstFreeVar{x}{\horstTypeLValue}
    \horstFreeVar{gtN}{\horstTypeHomInit{\horstTypeLValue}{\horstOpAppgs{}{}}}
    \horstFreeVar{ltbl}{\horstTypeFlowLabel}
    \horstFreeVar{rgt}{\horstTypeHomInit{\horstTypeLValue}{\horstOpAppgs{}{}}}
    \horstPremise{\horstPredAppMState{\horstParVarfid,\horstParVarpc}{\horstFreeVarctx,\horstCONS{\horstFreeVarx}{\horstFreeVarst},\horstFreeVargt,\horstFreeVarlt,\horstFreeVarmem,\horstConstructorAppTbl{\horstConstructorAppTblImprecise{},\horstFreeVarltbl},\horstFreeVaratN,\horstFreeVargtN,\horstFreeVarmemN}}
    \horstPremise{\horstOpAppoverApproximateCallArguments{\horstParVarcid}{\horstOpAppreverse{\horstOpAppas{\horstParVarcid}{}}{\horstSLICE{\horstFreeVarst}{}{\horstOpAppas{\horstParVarcid}{}}},\horstFreeVarat}}
    \horstPremise{\horstOpAppoverApproximateCallGlobals{\horstParVarcid}{\horstFreeVargt,\horstFreeVarngt}}
    \horstPremise{\horstOpAppoverApproximateCallMemory{\horstParVarcid}{\horstFreeVarmem,\horstFreeVarnmem}}
    \horstPremise{\horstEQ{\horstFreeVarp}{\horstOpAppperspectiveOfCtx{}{\horstFreeVarctx}}}
    \horstPremise{\horstEQ{\horstOpApplabelOfCtx{}{\horstFreeVarctx}}{\horstOpAppflub{2}{\horstTUPINIT{\horstOpApplabelOf{}{\horstFreeVarx},\horstFreeVarltbl}}}}
    \horstPremise{\horstEQ{\horstFreeVarcl}{\horstOpApplabelOfCtx{}{\horstFreeVarctx}}}
    \horstPremise{\horstEQ{\horstFreeVarcl}{\horstOpAppflub{2}{\horstTUPINIT{\horstOpApplabelOf{}{\horstFreeVarx},\horstFreeVarltbl}}}}
    \horstPremise{\horstPredAppReturn{\horstParVarcid}{\horstConstructorAppCtx{\horstFreeVarp,\horstFreeVarrl,\horstFreeVarrfrom},\horstFreeVarrt,\horstFreeVarrgt,\horstFreeVarrmem,\horstFreeVarrtbl,\horstFreeVarat,\horstFreeVarngt,\horstFreeVarnmem}}
    \horstPremise{\horstOR{\horstAND{\horstEQ{\horstFreeVarcl}{\horstConstructorAppLegal{}}}{\horstLE{0}{\horstFreeVarrfrom}}}{\horstAND{\horstEQ{\horstFreeVarcl}{\horstConstructorAppIllegal{}}}{\horstLT{\horstFreeVarrfrom}{0}}}}
    \horstConclusion{\horstPredAppMState{\horstParVarfid,\horstADD{\horstParVarpc}{1}}{\horstFreeVarctx,\horstCONCAT{\horstFreeVarrt}{\horstSLICE{\horstFreeVarst}{\horstOpAppas{\horstParVarcid}{}}{}},\horstFreeVarrgt,\horstFreeVarlt,\horstFreeVarrmem,\horstFreeVarrtbl,\horstFreeVaratN,\horstFreeVargtN,\horstFreeVarmemN}}
  \end{horstClause}
\end{horstRule}
\begin{horstRule}{testNoninterferenceMemorySizeSat}
  \horstParVar{fid}{\horstTypeint}
  \horstParVar{cid}{\horstTypeint}
  \horstSelectorFunctionInvocation{\horstSelectorFunctionAppstartFunctionId{\horstParVarfid}{},\horstSelectorFunctionAppmemorySizeLeak{\horstParVarcid}{}}
  \begin{horstClause}
    \horstFreeVar{memN}{\horstTypeMemory}
    \horstFreeVar{atN}{\horstTypeHomInit{\horstTypeLValue}{\horstOpAppas{\horstParVarfid}{}}}
    \horstFreeVar{tbl}{\horstTypeTable}
    \horstFreeVar{rt}{\horstTypeHomInit{\horstTypeLValue}{\horstOpApprs{\horstParVarfid}{}}}
    \horstFreeVar{ctx}{\horstTypeContext}
    \horstFreeVar{mem}{\horstTypeMemory}
    \horstFreeVar{gtN}{\horstTypeHomInit{\horstTypeLValue}{\horstOpAppgs{}{}}}
    \horstFreeVar{gt}{\horstTypeHomInit{\horstTypeLValue}{\horstOpAppgs{}{}}}
    \horstPremise{\horstPredAppReturnCall{\horstParVarfid}{\horstFreeVarctx,\horstFreeVarrt,\horstFreeVargt,\horstFreeVarmem,\horstFreeVartbl,\horstFreeVaratN,\horstFreeVargtN,\horstFreeVarmemN}}
    \horstPremise{\horstSimpleSumExp{OR}{\horstSelectorFunctionAppmemorySizeOutLabel{\horstParVarcs,\horstParVaris}{}}{\horstAND{\horstOpAppflowsTo{}{\horstOpAppmkLabel{}{\horstParVarcs,\horstParVaris},\horstOpAppperspectiveOfCtx{}{\horstFreeVarctx}}}{\horstEQ{\horstOpApplabelOf{}{\horstOpAppsizeOfMem{}{\horstFreeVarmem}}}{\horstConstructorAppIllegal{}}}}{{cs}{is}}}
    \horstConclusion{\horstPredApptestNoninterferenceMemorySizeSat{\horstParVarfid,\horstParVarcid}{}}
  \end{horstClause}
\end{horstRule}
\begin{horstRule}{testNoninterferenceImportedFunctionParamsSat}
  \horstParVar{fid}{\horstTypeint}
  \horstParVar{cid}{\horstTypeint}
  \horstSelectorFunctionInvocation{\horstSelectorFunctionAppimportedFunctionIds{\horstParVarfid}{},\horstSelectorFunctionAppimportCallArgumentLeak{\horstParVarcid}{\horstParVarfid}}
  \begin{horstClause}
    \horstFreeVar{memN}{\horstTypeMemory}
    \horstFreeVar{atN}{\horstTypeHomInit{\horstTypeLValue}{\horstOpAppas{\horstParVarfid}{}}}
    \horstFreeVar{tbl}{\horstTypeTable}
    \horstFreeVar{ctx}{\horstTypeContext}
    \horstFreeVar{lt}{\horstTypeHomInit{\horstTypeLValue}{\horstOpAppls{\horstParVarfid}{}}}
    \horstFreeVar{mem}{\horstTypeMemory}
    \horstFreeVar{gtN}{\horstTypeHomInit{\horstTypeLValue}{\horstOpAppgs{}{}}}
    \horstFreeVar{gt}{\horstTypeHomInit{\horstTypeLValue}{\horstOpAppgs{}{}}}
    \horstPremise{\horstPredAppMState{\horstParVarfid,0}{\horstFreeVarctx,\horstTUPINIT{},\horstFreeVargt,\horstFreeVarlt,\horstFreeVarmem,\horstFreeVartbl,\horstFreeVaratN,\horstFreeVargtN,\horstFreeVarmemN}}
    \horstPremise{\horstSimpleSumExp{OR}{\horstSelectorFunctionAppinterval{\horstParVaridx}{0,\horstOpAppas{\horstParVarfid}{}},\horstSelectorFunctionAppargumentLabelForImportedFunctionAndPosition{\horstParVarca,\horstParVaria}{\horstParVarfid,\horstParVaridx}}{\horstAND{\horstOpAppflowsTo{}{\horstOpAppmkLabel{}{\horstParVarca,\horstParVaria},\horstOpAppperspectiveOfCtx{}{\horstFreeVarctx}}}{\horstEQ{\horstOpApplabelOf{}{\horstACCESS{\horstFreeVaratN}{\horstParVaridx}}}{\horstConstructorAppIllegal{}}}}{{idx}{ca}{ia}}}
    \horstConclusion{\horstPredApptestNoninterferenceImportedFunctionParamsSat{\horstParVarfid,\horstParVarcid}{}}
  \end{horstClause}
\end{horstRule}
\begin{horstRule}{testNoninterferenceResultSat}
  \horstParVar{fid}{\horstTypeint}
  \horstParVar{cid}{\horstTypeint}
  \horstSelectorFunctionInvocation{\horstSelectorFunctionAppstartFunctionId{\horstParVarfid}{},\horstSelectorFunctionAppresultLeak{\horstParVarcid}{}}
  \begin{horstClause}
    \horstFreeVar{memN}{\horstTypeMemory}
    \horstFreeVar{atN}{\horstTypeHomInit{\horstTypeLValue}{\horstOpAppas{\horstParVarfid}{}}}
    \horstFreeVar{tbl}{\horstTypeTable}
    \horstFreeVar{rt}{\horstTypeHomInit{\horstTypeLValue}{\horstOpApprs{\horstParVarfid}{}}}
    \horstFreeVar{ctx}{\horstTypeContext}
    \horstFreeVar{mem}{\horstTypeMemory}
    \horstFreeVar{gtN}{\horstTypeHomInit{\horstTypeLValue}{\horstOpAppgs{}{}}}
    \horstFreeVar{gt}{\horstTypeHomInit{\horstTypeLValue}{\horstOpAppgs{}{}}}
    \horstPremise{\horstPredAppReturnCall{\horstParVarfid}{\horstFreeVarctx,\horstFreeVarrt,\horstFreeVargt,\horstFreeVarmem,\horstFreeVartbl,\horstFreeVaratN,\horstFreeVargtN,\horstFreeVarmemN}}
    \horstPremise{\horstSimpleSumExp{OR}{\horstSelectorFunctionAppinterval{\horstParVaridx}{0,\horstOpApprs{\horstParVarfid}{}},\horstSelectorFunctionAppresultLabelForPosition{\horstParVarcr,\horstParVarir}{\horstParVaridx}}{\horstAND{\horstOpAppflowsTo{}{\horstOpAppmkLabel{}{\horstParVarcr,\horstParVarir},\horstOpAppperspectiveOfCtx{}{\horstFreeVarctx}}}{\horstEQ{\horstOpApplabelOf{}{\horstACCESS{\horstFreeVarrt}{\horstParVaridx}}}{\horstConstructorAppIllegal{}}}}{{idx}{cr}{ir}}}
    \horstConclusion{\horstPredApptestNoninterferenceResultSat{\horstParVarfid,\horstParVarcid}{}}
  \end{horstClause}
\end{horstRule}
\begin{horstRule}{trappingCvtOpRule}
  \horstParVar{fid}{\horstTypeint}
  \horstParVar{op}{\horstTypeint}
  \horstParVar{pc}{\horstTypeint}
  \horstSelectorFunctionInvocation{\horstSelectorFunctionAppfunctionIds{\horstParVarfid}{},\horstSelectorFunctionApptrappingCvtOps{\horstParVarop}{},\horstSelectorFunctionApppcsForFunctionIdAndOpcode{\horstParVarpc}{\horstParVarfid,\horstParVarop}}
  \begin{horstClause}
    \horstFreeVar{memN}{\horstTypeMemory}
    \horstFreeVar{atN}{\horstTypeHomInit{\horstTypeLValue}{\horstOpAppas{\horstParVarfid}{}}}
    \horstFreeVar{st}{\horstTypeHomInit{\horstTypeLValue}{\horstSUB{\horstOpAppss{\horstParVarfid,\horstParVarpc}{}}{1}}}
    \horstFreeVar{tbl}{\horstTypeTable}
    \horstFreeVar{ctx}{\horstTypeContext}
    \horstFreeVar{lt}{\horstTypeHomInit{\horstTypeLValue}{\horstOpAppls{\horstParVarfid}{}}}
    \horstFreeVar{mem}{\horstTypeMemory}
    \horstFreeVar{x}{\horstTypeLValue}
    \horstFreeVar{gtN}{\horstTypeHomInit{\horstTypeLValue}{\horstOpAppgs{}{}}}
    \horstFreeVar{gt}{\horstTypeHomInit{\horstTypeLValue}{\horstOpAppgs{}{}}}
    \horstPremise{\horstPredAppMState{\horstParVarfid,\horstParVarpc}{\horstFreeVarctx,\horstCONS{\horstFreeVarx}{\horstFreeVarst},\horstFreeVargt,\horstFreeVarlt,\horstFreeVarmem,\horstFreeVartbl,\horstFreeVaratN,\horstFreeVargtN,\horstFreeVarmemN}}
    \horstConclusion{\horstPredAppMState{\horstParVarfid,\horstADD{\horstParVarpc}{1}}{\horstFreeVarctx,\horstCONS{\horstOpAppraiseTo{}{\horstOpApplabelledCvtOp{\horstParVarop}{\horstFreeVarx},\horstOpApplabelOfCtx{}{\horstFreeVarctx}}}{\horstFreeVarst},\horstFreeVargt,\horstFreeVarlt,\horstFreeVarmem,\horstFreeVartbl,\horstFreeVaratN,\horstFreeVargtN,\horstFreeVarmemN}}
  \end{horstClause}
\end{horstRule}
\begin{horstRule}{testNoninterferenceTableUnsat}
  \horstParVar{fid}{\horstTypeint}
  \horstParVar{cid}{\horstTypeint}
  \horstSelectorFunctionInvocation{\horstSelectorFunctionAppstartFunctionId{\horstParVarfid}{},\horstSelectorFunctionApptableSafe{\horstParVarcid}{}}
  \begin{horstClause}
    \horstFreeVar{memN}{\horstTypeMemory}
    \horstFreeVar{atN}{\horstTypeHomInit{\horstTypeLValue}{\horstOpAppas{\horstParVarfid}{}}}
    \horstFreeVar{ptbl}{\horstTypeTablePrecision}
    \horstFreeVar{ctx}{\horstTypeContext}
    \horstFreeVar{lt}{\horstTypeHomInit{\horstTypeLValue}{\horstOpAppls{\horstParVarfid}{}}}
    \horstFreeVar{mem}{\horstTypeMemory}
    \horstFreeVar{gtN}{\horstTypeHomInit{\horstTypeLValue}{\horstOpAppgs{}{}}}
    \horstFreeVar{gt}{\horstTypeHomInit{\horstTypeLValue}{\horstOpAppgs{}{}}}
    \horstPremise{\horstPredAppMState{\horstParVarfid,0}{\horstFreeVarctx,\horstTUPINIT{},\horstFreeVargt,\horstFreeVarlt,\horstFreeVarmem,\horstConstructorAppTbl{\horstFreeVarptbl,\horstConstructorAppIllegal{}},\horstFreeVaratN,\horstFreeVargtN,\horstFreeVarmemN}}
    \horstPremise{\horstSimpleSumExp{OR}{\horstSelectorFunctionApptableOutLabel{\horstParVarct,\horstParVarit}{}}{\horstOpAppflowsTo{}{\horstOpAppmkLabel{}{\horstParVarct,\horstParVarit},\horstOpAppperspectiveOfCtx{}{\horstFreeVarctx}}}{{ct}{it}}}
    \horstConclusion{\horstPredApptestNoninterferenceTableUnsat{\horstParVarfid,\horstParVarcid}{}}
  \end{horstClause}
\end{horstRule}
\begin{horstRule}{trappingBinOpRule}
  \horstParVar{fid}{\horstTypeint}
  \horstParVar{op}{\horstTypeint}
  \horstParVar{pc}{\horstTypeint}
  \horstSelectorFunctionInvocation{\horstSelectorFunctionAppfunctionIds{\horstParVarfid}{},\horstSelectorFunctionApptrappingBinOps{\horstParVarop}{},\horstSelectorFunctionApppcsForFunctionIdAndOpcode{\horstParVarpc}{\horstParVarfid,\horstParVarop}}
  \begin{horstClause}
    \horstFreeVar{memN}{\horstTypeMemory}
    \horstFreeVar{atN}{\horstTypeHomInit{\horstTypeLValue}{\horstOpAppas{\horstParVarfid}{}}}
    \horstFreeVar{st}{\horstTypeHomInit{\horstTypeLValue}{\horstSUB{\horstOpAppss{\horstParVarfid,\horstParVarpc}{}}{2}}}
    \horstFreeVar{tbl}{\horstTypeTable}
    \horstFreeVar{ctx}{\horstTypeContext}
    \horstFreeVar{lt}{\horstTypeHomInit{\horstTypeLValue}{\horstOpAppls{\horstParVarfid}{}}}
    \horstFreeVar{mem}{\horstTypeMemory}
    \horstFreeVar{x}{\horstTypeLValue}
    \horstFreeVar{gtN}{\horstTypeHomInit{\horstTypeLValue}{\horstOpAppgs{}{}}}
    \horstFreeVar{y}{\horstTypeLValue}
    \horstFreeVar{gt}{\horstTypeHomInit{\horstTypeLValue}{\horstOpAppgs{}{}}}
    \horstFreeVar{res}{\horstTypeMaybeValue}
    \horstPremise{\horstPredAppMState{\horstParVarfid,\horstParVarpc}{\horstFreeVarctx,\horstCONS{\horstFreeVarx}{\horstCONS{\horstFreeVary}{\horstFreeVarst}},\horstFreeVargt,\horstFreeVarlt,\horstFreeVarmem,\horstFreeVartbl,\horstFreeVaratN,\horstFreeVargtN,\horstFreeVarmemN}}
    \horstPremise{\horstEQ{\horstFreeVarres}{\horstOpApptrappingBinOp{\horstParVarop}{\horstOpAppvalueOf{}{\horstFreeVary},\horstOpAppvalueOf{}{\horstFreeVarx}}}}
    \horstPremise{\horstOpAppisJustV{}{\horstFreeVarres}}
    \horstConclusion{\horstPredAppMState{\horstParVarfid,\horstADD{\horstParVarpc}{1}}{\horstFreeVarctx,\horstCONS{\horstConstructorAppLVal{\horstOpAppfromJustV{}{\horstFreeVarres},\horstOpAppflub{3}{\horstTUPINIT{\horstOpApplabelOf{}{\horstFreeVary},\horstOpApplabelOf{}{\horstFreeVarx},\horstOpApplabelOfCtx{}{\horstFreeVarctx}}}}}{\horstFreeVarst},\horstFreeVargt,\horstFreeVarlt,\horstFreeVarmem,\horstFreeVartbl,\horstFreeVaratN,\horstFreeVargtN,\horstFreeVarmemN}}
  \end{horstClause}
\end{horstRule}
\begin{horstRule}{storeRule}
  \horstParVar{fid}{\horstTypeint}
  \horstParVar{op}{\horstTypeint}
  \horstParVar{pc}{\horstTypeint}
  \horstParVar{offset}{\horstTypeint}
  \horstSelectorFunctionInvocation{\horstSelectorFunctionAppfunctionIds{\horstParVarfid}{},\horstSelectorFunctionAppstoreOps{\horstParVarop}{},\horstSelectorFunctionApppcsForFunctionIdAndOpcode{\horstParVarpc}{\horstParVarfid,\horstParVarop},\horstSelectorFunctionAppmemoryOffsetForFunctionIdAndPc{\horstParVaroffset}{\horstParVarfid,\horstParVarpc}}
  \begin{horstClause}
    \horstFreeVar{atN}{\horstTypeHomInit{\horstTypeLValue}{\horstOpAppas{\horstParVarfid}{}}}
    \horstFreeVar{st}{\horstTypeHomInit{\horstTypeLValue}{\horstSUB{\horstOpAppss{\horstParVarfid,\horstParVarpc}{}}{2}}}
    \horstFreeVar{tbl}{\horstTypeTable}
    \horstFreeVar{d}{\horstTypeValue}
    \horstFreeVar{nv}{\horstTypeLValue}
    \horstFreeVar{lt}{\horstTypeHomInit{\horstTypeLValue}{\horstOpAppls{\horstParVarfid}{}}}
    \horstFreeVar{i}{\horstTypeValue}
    \horstFreeVar{gt}{\horstTypeHomInit{\horstTypeLValue}{\horstOpAppgs{}{}}}
    \horstFreeVar{size}{\horstTypeLValue}
    \horstFreeVar{memN}{\horstTypeMemory}
    \horstFreeVar{u}{\horstTypeValue}
    \horstFreeVar{ctx}{\horstTypeContext}
    \horstFreeVar{v}{\horstTypeLValue}
    \horstFreeVar{w}{\horstTypeLValue}
    \horstFreeVar{x}{\horstTypeLValue}
    \horstFreeVar{gtN}{\horstTypeHomInit{\horstTypeLValue}{\horstOpAppgs{}{}}}
    \horstFreeVar{y}{\horstTypeLValue}
    \horstPremise{\horstPredAppMState{\horstParVarfid,\horstParVarpc}{\horstFreeVarctx,\horstCONS{\horstFreeVarx}{\horstCONS{\horstFreeVary}{\horstFreeVarst}},\horstFreeVargt,\horstFreeVarlt,\horstConstructorAppMem{\horstFreeVari,\horstFreeVarv,\horstFreeVarsize},\horstFreeVartbl,\horstFreeVaratN,\horstFreeVargtN,\horstFreeVarmemN}}
    \horstPremise{\horstOpAppiltu{64}{\horstOpAppiadd{64}{\horstOpAppvalueOf{}{\horstFreeVary},\horstOpAppmkConst{\horstADD{\horstParVaroffset}{\horstDIV{\horstOpAppmemOpTbw{\horstParVarop}{}}{8}}}{}},\horstOpAppishl{64}{\horstOpAppvalueOf{}{\horstFreeVarsize},\horstOpAppmkConst{16}{}}}}
    \horstPremise{\horstEQ{\horstFreeVard}{\horstOpAppisub{64}{\horstFreeVari,\horstOpAppiadd{64}{\horstOpAppvalueOf{}{\horstFreeVary},\horstOpAppmkConst{\horstParVaroffset}{}}}}}
    \horstPremise{\horstEQ{\horstFreeVaru}{\horstCOND{\horstEQ{\horstOpAppmemOpBw{\horstParVarop}{}}{\horstOpAppmemOpTbw{\horstParVarop}{}}}{\horstCOND{\horstOR{\horstEQ{\horstParVarop}{\horstConstFXXXIISTOREMEM}}{\horstEQ{\horstParVarop}{\horstConstFLXIVSTOREMEM}}}{\horstOpAppfreeValOrTop{}{}}{\horstOpAppvalueOf{}{\horstFreeVarx}}}{\horstOpAppwrap{\horstOpAppmemOpBw{\horstParVarop}{},\horstOpAppmemOpTbw{\horstParVarop}{}}{\horstOpAppvalueOf{}{\horstFreeVarx}}}}}
    \horstPremise{\horstEQ{\horstOpAppvalueOf{}{\horstFreeVarw}}{\horstOpAppiand{64}{\horstOpAppilshr{64}{\horstFreeVaru,\horstOpAppishl{64}{\horstFreeVard,\horstOpAppmkConst{3}{}}},\horstOpAppmkConst{\horstSUB{\horstOpApppow{8}{2}}{1}}{}}}}
    \horstPremise{\horstEQ{\horstOpApplabelOf{}{\horstFreeVarw}}{\horstOpAppflub{4}{\horstTUPINIT{\horstOpApplabelOf{}{\horstFreeVarx},\horstOpApplabelOf{}{\horstFreeVary},\horstOpApplabelOf{}{\horstFreeVarsize},\horstOpApplabelOfCtx{}{\horstFreeVarctx}}}}}
    \horstPremise{\horstEQ{\horstFreeVarnv}{\horstCOND{\horstOpAppiltu{64}{\horstFreeVard,\horstOpAppmkConst{\horstDIV{\horstOpAppmemOpTbw{\horstParVarop}{}}{8}}{}}}{\horstFreeVarw}{\horstFreeVarv}}}
    \horstConclusion{\horstPredAppMState{\horstParVarfid,\horstADD{\horstParVarpc}{1}}{\horstFreeVarctx,\horstFreeVarst,\horstFreeVargt,\horstFreeVarlt,\horstConstructorAppMem{\horstFreeVari,\horstFreeVarnv,\horstFreeVarsize},\horstFreeVartbl,\horstFreeVaratN,\horstFreeVargtN,\horstFreeVarmemN}}
  \end{horstClause}
\end{horstRule}
\begin{horstRule}{brIfRule}
  \horstParVar{fid}{\horstTypeint}
  \horstParVar{pc}{\horstTypeint}
  \horstParVar{br}{\horstTypeint}
  \horstParVar{n}{\horstTypeint}
  \horstSelectorFunctionInvocation{\horstSelectorFunctionAppfunctionIds{\horstParVarfid}{},\horstSelectorFunctionApppcsForFunctionIdAndOpcode{\horstParVarpc}{\horstParVarfid,\horstConstBRIF},\horstSelectorFunctionAppbreakDestinations{\horstParVarbr}{\horstParVarfid,\horstParVarpc},\horstSelectorFunctionAppgetAmountOfReturnValuesInBlock{\horstParVarn}{\horstParVarfid,\horstParVarpc}}
  \begin{horstClause}
    \horstFreeVar{memN}{\horstTypeMemory}
    \horstFreeVar{atN}{\horstTypeHomInit{\horstTypeLValue}{\horstOpAppas{\horstParVarfid}{}}}
    \horstFreeVar{st}{\horstTypeHomInit{\horstTypeLValue}{\horstSUB{\horstOpAppss{\horstParVarfid,\horstParVarpc}{}}{1}}}
    \horstFreeVar{tbl}{\horstTypeTable}
    \horstFreeVar{ctx}{\horstTypeContext}
    \horstFreeVar{lt}{\horstTypeHomInit{\horstTypeLValue}{\horstOpAppls{\horstParVarfid}{}}}
    \horstFreeVar{mem}{\horstTypeMemory}
    \horstFreeVar{x}{\horstTypeLValue}
    \horstFreeVar{gtN}{\horstTypeHomInit{\horstTypeLValue}{\horstOpAppgs{}{}}}
    \horstFreeVar{gt}{\horstTypeHomInit{\horstTypeLValue}{\horstOpAppgs{}{}}}
    \horstPremise{\horstPredAppMState{\horstParVarfid,\horstParVarpc}{\horstFreeVarctx,\horstCONS{\horstFreeVarx}{\horstFreeVarst},\horstFreeVargt,\horstFreeVarlt,\horstFreeVarmem,\horstFreeVartbl,\horstFreeVaratN,\horstFreeVargtN,\horstFreeVarmemN}}
    \horstPremise{\horstOpAppabsneq{}{\horstOpAppvalueOf{}{\horstFreeVarx},\horstOpAppmkConst{0}{}}}
    \horstConclusion{\horstPredAppMState{\horstParVarfid,\horstParVarbr}{\horstOpAppraiseCtxTo{\horstParVarpc}{\horstFreeVarctx,\horstOpApplabelOf{}{\horstFreeVarx}},\horstOpAppdrop{\horstSUB{\horstOpAppss{\horstParVarfid,\horstParVarpc}{}}{1},\horstParVarn,\horstSUB{\horstSUB{\horstOpAppss{\horstParVarfid,\horstParVarpc}{}}{1}}{\horstOpAppss{\horstParVarfid,\horstParVarbr}{}}}{\horstFreeVarst},\horstFreeVargt,\horstFreeVarlt,\horstFreeVarmem,\horstFreeVartbl,\horstFreeVaratN,\horstFreeVargtN,\horstFreeVarmemN}}
  \end{horstClause}
  \begin{horstClause}
    \horstFreeVar{memN}{\horstTypeMemory}
    \horstFreeVar{atN}{\horstTypeHomInit{\horstTypeLValue}{\horstOpAppas{\horstParVarfid}{}}}
    \horstFreeVar{st}{\horstTypeHomInit{\horstTypeLValue}{\horstSUB{\horstOpAppss{\horstParVarfid,\horstParVarpc}{}}{1}}}
    \horstFreeVar{p}{\horstTypeLabel}
    \horstFreeVar{tbl}{\horstTypeTable}
    \horstFreeVar{ctx}{\horstTypeContext}
    \horstFreeVar{lt}{\horstTypeHomInit{\horstTypeLValue}{\horstOpAppls{\horstParVarfid}{}}}
    \horstFreeVar{mem}{\horstTypeMemory}
    \horstFreeVar{x}{\horstTypeLValue}
    \horstFreeVar{gtN}{\horstTypeHomInit{\horstTypeLValue}{\horstOpAppgs{}{}}}
    \horstFreeVar{gt}{\horstTypeHomInit{\horstTypeLValue}{\horstOpAppgs{}{}}}
    \horstFreeVar{l}{\horstTypeFlowLabel}
    \horstFreeVar{from}{\horstTypeint}
    \horstPremise{\horstPredAppMState{\horstParVarfid,\horstParVarpc}{\horstFreeVarctx,\horstCONS{\horstFreeVarx}{\horstFreeVarst},\horstFreeVargt,\horstFreeVarlt,\horstFreeVarmem,\horstFreeVartbl,\horstFreeVaratN,\horstFreeVargtN,\horstFreeVarmemN}}
    \horstPremise{\horstOpAppabsneq{}{\horstOpAppvalueOf{}{\horstFreeVarx},\horstOpAppmkConst{0}{}}}
    \horstPremise{\horstEQ{\horstOpAppraiseCtxTo{\horstParVarpc}{\horstFreeVarctx,\horstOpApplabelOf{}{\horstFreeVarx}}}{\horstConstructorAppCtx{\horstFreeVarp,\horstConstructorAppIllegal{},\horstFreeVarfrom}}}
    \horstConclusion{\horstPredAppScopeExtend{\horstParVarfid}{\horstFreeVarfrom,\horstOpAppmax{}{\horstParVarpc,\horstParVarbr}}}
  \end{horstClause}
  \begin{horstClause}
    \horstFreeVar{memN}{\horstTypeMemory}
    \horstFreeVar{atN}{\horstTypeHomInit{\horstTypeLValue}{\horstOpAppas{\horstParVarfid}{}}}
    \horstFreeVar{st}{\horstTypeHomInit{\horstTypeLValue}{\horstSUB{\horstOpAppss{\horstParVarfid,\horstParVarpc}{}}{1}}}
    \horstFreeVar{tbl}{\horstTypeTable}
    \horstFreeVar{ctx}{\horstTypeContext}
    \horstFreeVar{lt}{\horstTypeHomInit{\horstTypeLValue}{\horstOpAppls{\horstParVarfid}{}}}
    \horstFreeVar{mem}{\horstTypeMemory}
    \horstFreeVar{x}{\horstTypeLValue}
    \horstFreeVar{gtN}{\horstTypeHomInit{\horstTypeLValue}{\horstOpAppgs{}{}}}
    \horstFreeVar{gt}{\horstTypeHomInit{\horstTypeLValue}{\horstOpAppgs{}{}}}
    \horstPremise{\horstLE{\horstParVarpc}{\horstParVarbr}}
    \horstPremise{\horstPredAppMState{\horstParVarfid,\horstParVarpc}{\horstFreeVarctx,\horstCONS{\horstFreeVarx}{\horstFreeVarst},\horstFreeVargt,\horstFreeVarlt,\horstFreeVarmem,\horstFreeVartbl,\horstFreeVaratN,\horstFreeVargtN,\horstFreeVarmemN}}
    \horstPremise{\horstOpAppabseq{}{\horstOpAppvalueOf{}{\horstFreeVarx},\horstOpAppmkConst{0}{}}}
    \horstConclusion{\horstPredAppMState{\horstParVarfid,\horstADD{\horstParVarpc}{1}}{\horstOpAppraiseCtxTo{\horstParVarpc}{\horstFreeVarctx,\horstOpApplabelOf{}{\horstFreeVarx}},\horstFreeVarst,\horstFreeVargt,\horstFreeVarlt,\horstFreeVarmem,\horstFreeVartbl,\horstFreeVaratN,\horstFreeVargtN,\horstFreeVarmemN}}
  \end{horstClause}
  \begin{horstClause}
    \horstFreeVar{memN}{\horstTypeMemory}
    \horstFreeVar{atN}{\horstTypeHomInit{\horstTypeLValue}{\horstOpAppas{\horstParVarfid}{}}}
    \horstFreeVar{st}{\horstTypeHomInit{\horstTypeLValue}{\horstSUB{\horstOpAppss{\horstParVarfid,\horstParVarpc}{}}{1}}}
    \horstFreeVar{tbl}{\horstTypeTable}
    \horstFreeVar{ctx}{\horstTypeContext}
    \horstFreeVar{lt}{\horstTypeHomInit{\horstTypeLValue}{\horstOpAppls{\horstParVarfid}{}}}
    \horstFreeVar{mem}{\horstTypeMemory}
    \horstFreeVar{x}{\horstTypeLValue}
    \horstFreeVar{gtN}{\horstTypeHomInit{\horstTypeLValue}{\horstOpAppgs{}{}}}
    \horstFreeVar{gt}{\horstTypeHomInit{\horstTypeLValue}{\horstOpAppgs{}{}}}
    \horstPremise{\horstLT{\horstParVarbr}{\horstParVarpc}}
    \horstPremise{\horstPredAppMState{\horstParVarfid,\horstParVarpc}{\horstFreeVarctx,\horstCONS{\horstFreeVarx}{\horstFreeVarst},\horstFreeVargt,\horstFreeVarlt,\horstFreeVarmem,\horstFreeVartbl,\horstFreeVaratN,\horstFreeVargtN,\horstFreeVarmemN}}
    \horstPremise{\horstEQ{\horstOpApplabelOfCtx{}{\horstFreeVarctx}}{\horstConstructorAppLegal{}}}
    \horstPremise{\horstEQ{\horstOpApplabelOf{}{\horstFreeVarx}}{\horstConstructorAppLegal{}}}
    \horstPremise{\horstOpAppabseq{}{\horstOpAppvalueOf{}{\horstFreeVarx},\horstOpAppmkConst{0}{}}}
    \horstConclusion{\horstPredAppMState{\horstParVarfid,\horstADD{\horstParVarpc}{1}}{\horstFreeVarctx,\horstFreeVarst,\horstFreeVargt,\horstFreeVarlt,\horstFreeVarmem,\horstFreeVartbl,\horstFreeVaratN,\horstFreeVargtN,\horstFreeVarmemN}}
  \end{horstClause}
  \begin{horstClause}
    \horstFreeVar{memN}{\horstTypeMemory}
    \horstFreeVar{atN}{\horstTypeHomInit{\horstTypeLValue}{\horstOpAppas{\horstParVarfid}{}}}
    \horstFreeVar{st}{\horstTypeHomInit{\horstTypeLValue}{\horstSUB{\horstOpAppss{\horstParVarfid,\horstParVarpc}{}}{1}}}
    \horstFreeVar{p}{\horstTypeLabel}
    \horstFreeVar{tbl}{\horstTypeTable}
    \horstFreeVar{ctx}{\horstTypeContext}
    \horstFreeVar{lt}{\horstTypeHomInit{\horstTypeLValue}{\horstOpAppls{\horstParVarfid}{}}}
    \horstFreeVar{mem}{\horstTypeMemory}
    \horstFreeVar{x}{\horstTypeLValue}
    \horstFreeVar{gtN}{\horstTypeHomInit{\horstTypeLValue}{\horstOpAppgs{}{}}}
    \horstFreeVar{gt}{\horstTypeHomInit{\horstTypeLValue}{\horstOpAppgs{}{}}}
    \horstPremise{\horstLT{\horstParVarbr}{\horstParVarpc}}
    \horstPremise{\horstPredAppMState{\horstParVarfid,\horstParVarpc}{\horstFreeVarctx,\horstCONS{\horstFreeVarx}{\horstFreeVarst},\horstFreeVargt,\horstFreeVarlt,\horstFreeVarmem,\horstFreeVartbl,\horstFreeVaratN,\horstFreeVargtN,\horstFreeVarmemN}}
    \horstPremise{\horstEQ{\horstOpAppraiseCtxTo{\horstParVarpc}{\horstFreeVarctx,\horstOpApplabelOf{}{\horstFreeVarx}}}{\horstConstructorAppCtx{\horstFreeVarp,\horstConstructorAppIllegal{},\horstParVarpc}}}
    \horstPremise{\horstOpAppabseq{}{\horstOpAppvalueOf{}{\horstFreeVarx},\horstOpAppmkConst{0}{}}}
    \horstConclusion{\horstPredAppMStateToJoin{\horstParVarfid,\horstADD{\horstParVarpc}{1}}{\horstConstructorAppCtx{\horstFreeVarp,\horstConstructorAppIllegal{},\horstParVarpc},\horstFreeVarst,\horstFreeVargt,\horstFreeVarlt,\horstFreeVarmem,\horstFreeVartbl,\horstFreeVaratN,\horstFreeVargtN,\horstFreeVarmemN}}
  \end{horstClause}
  \begin{horstClause}
    \horstFreeVar{memN}{\horstTypeMemory}
    \horstFreeVar{atN}{\horstTypeHomInit{\horstTypeLValue}{\horstOpAppas{\horstParVarfid}{}}}
    \horstFreeVar{st}{\horstTypeHomInit{\horstTypeLValue}{\horstSUB{\horstOpAppss{\horstParVarfid,\horstParVarpc}{}}{1}}}
    \horstFreeVar{p}{\horstTypeLabel}
    \horstFreeVar{tbl}{\horstTypeTable}
    \horstFreeVar{ctx}{\horstTypeContext}
    \horstFreeVar{lt}{\horstTypeHomInit{\horstTypeLValue}{\horstOpAppls{\horstParVarfid}{}}}
    \horstFreeVar{mem}{\horstTypeMemory}
    \horstFreeVar{x}{\horstTypeLValue}
    \horstFreeVar{to}{\horstTypeint}
    \horstFreeVar{gtN}{\horstTypeHomInit{\horstTypeLValue}{\horstOpAppgs{}{}}}
    \horstFreeVar{gt}{\horstTypeHomInit{\horstTypeLValue}{\horstOpAppgs{}{}}}
    \horstFreeVar{from}{\horstTypeint}
    \horstPremise{\horstLT{\horstParVarbr}{\horstParVarpc}}
    \horstPremise{\horstPredAppMState{\horstParVarfid,\horstParVarpc}{\horstFreeVarctx,\horstCONS{\horstFreeVarx}{\horstFreeVarst},\horstFreeVargt,\horstFreeVarlt,\horstFreeVarmem,\horstFreeVartbl,\horstFreeVaratN,\horstFreeVargtN,\horstFreeVarmemN}}
    \horstPremise{\horstOpAppabseq{}{\horstOpAppvalueOf{}{\horstFreeVarx},\horstOpAppmkConst{0}{}}}
    \horstPremise{\horstEQ{\horstOpAppraiseCtxTo{\horstParVarpc}{\horstFreeVarctx,\horstOpApplabelOf{}{\horstFreeVarx}}}{\horstConstructorAppCtx{\horstFreeVarp,\horstConstructorAppIllegal{},\horstFreeVarfrom}}}
    \horstPremise{\horstPredAppScopeExtend{\horstParVarfid}{\horstFreeVarfrom,\horstFreeVarto}}
    \horstPremise{\horstLT{\horstADD{\horstParVarpc}{1}}{\horstFreeVarto}}
    \horstConclusion{\horstPredAppMState{\horstParVarfid,\horstADD{\horstParVarpc}{1}}{\horstConstructorAppCtx{\horstFreeVarp,\horstConstructorAppIllegal{},\horstFreeVarfrom},\horstFreeVarst,\horstFreeVargt,\horstFreeVarlt,\horstFreeVarmem,\horstFreeVartbl,\horstFreeVaratN,\horstFreeVargtN,\horstFreeVarmemN}}
  \end{horstClause}
\end{horstRule}
\begin{horstRule}{testNoninterferenceImportedFunctionTableSat}
  \horstParVar{fid}{\horstTypeint}
  \horstParVar{cid}{\horstTypeint}
  \horstSelectorFunctionInvocation{\horstSelectorFunctionAppimportedFunctionIds{\horstParVarfid}{},\horstSelectorFunctionAppimportCallTableLeak{\horstParVarcid}{\horstParVarfid}}
  \begin{horstClause}
    \horstFreeVar{memN}{\horstTypeMemory}
    \horstFreeVar{atN}{\horstTypeHomInit{\horstTypeLValue}{\horstOpAppas{\horstParVarfid}{}}}
    \horstFreeVar{ptbl}{\horstTypeTablePrecision}
    \horstFreeVar{ctx}{\horstTypeContext}
    \horstFreeVar{lt}{\horstTypeHomInit{\horstTypeLValue}{\horstOpAppls{\horstParVarfid}{}}}
    \horstFreeVar{mem}{\horstTypeMemory}
    \horstFreeVar{gtN}{\horstTypeHomInit{\horstTypeLValue}{\horstOpAppgs{}{}}}
    \horstFreeVar{gt}{\horstTypeHomInit{\horstTypeLValue}{\horstOpAppgs{}{}}}
    \horstPremise{\horstPredAppMState{\horstParVarfid,0}{\horstFreeVarctx,\horstTUPINIT{},\horstFreeVargt,\horstFreeVarlt,\horstFreeVarmem,\horstConstructorAppTbl{\horstFreeVarptbl,\horstConstructorAppIllegal{}},\horstFreeVaratN,\horstFreeVargtN,\horstFreeVarmemN}}
    \horstPremise{\horstSimpleSumExp{OR}{\horstSelectorFunctionApptableInLabelForImportedFunction{\horstParVarct,\horstParVarit}{\horstParVarfid}}{\horstOpAppflowsTo{}{\horstOpAppmkLabel{}{\horstParVarct,\horstParVarit},\horstOpAppperspectiveOfCtx{}{\horstFreeVarctx}}}{{ct}{it}}}
    \horstConclusion{\horstPredApptestNoninterferenceImportedFunctionTableSat{\horstParVarfid,\horstParVarcid}{}}
  \end{horstClause}
\end{horstRule}
\begin{horstRule}{testNoninterferenceMemoryDataSat}
  \horstParVar{fid}{\horstTypeint}
  \horstParVar{cid}{\horstTypeint}
  \horstSelectorFunctionInvocation{\horstSelectorFunctionAppstartFunctionId{\horstParVarfid}{},\horstSelectorFunctionAppmemoryDataLeak{\horstParVarcid}{}}
  \begin{horstClause}
    \horstFreeVar{memN}{\horstTypeMemory}
    \horstFreeVar{atN}{\horstTypeHomInit{\horstTypeLValue}{\horstOpAppas{\horstParVarfid}{}}}
    \horstFreeVar{tbl}{\horstTypeTable}
    \horstFreeVar{rt}{\horstTypeHomInit{\horstTypeLValue}{\horstOpApprs{\horstParVarfid}{}}}
    \horstFreeVar{ctx}{\horstTypeContext}
    \horstFreeVar{mem}{\horstTypeMemory}
    \horstFreeVar{gtN}{\horstTypeHomInit{\horstTypeLValue}{\horstOpAppgs{}{}}}
    \horstFreeVar{gt}{\horstTypeHomInit{\horstTypeLValue}{\horstOpAppgs{}{}}}
    \horstPremise{\horstPredAppReturnCall{\horstParVarfid}{\horstFreeVarctx,\horstFreeVarrt,\horstFreeVargt,\horstFreeVarmem,\horstFreeVartbl,\horstFreeVaratN,\horstFreeVargtN,\horstFreeVarmemN}}
    \horstPremise{\horstSimpleSumExp{OR}{\horstSelectorFunctionAppmemoryDataOutLabel{\horstParVarcd,\horstParVarid}{}}{\horstAND{\horstOpAppflowsTo{}{\horstOpAppmkLabel{}{\horstParVarcd,\horstParVarid},\horstOpAppperspectiveOfCtx{}{\horstFreeVarctx}}}{\horstEQ{\horstOpApplabelOf{}{\horstOpAppvalueOfMem{}{\horstFreeVarmem}}}{\horstConstructorAppIllegal{}}}}{{cd}{id}}}
    \horstConclusion{\horstPredApptestNoninterferenceMemoryDataSat{\horstParVarfid,\horstParVarcid}{}}
  \end{horstClause}
\end{horstRule}
\begin{horstRule}{testNoninterferenceMemorAreaSat}
  \horstParVar{fid}{\horstTypeint}
  \horstParVar{cid}{\horstTypeint}
  \horstParVar{start}{\horstTypeint}
  \horstParVar{endInclusive}{\horstTypeint}
  \horstSelectorFunctionInvocation{\horstSelectorFunctionAppstartFunctionId{\horstParVarfid}{},\horstSelectorFunctionAppmemoryAreaLeak{\horstParVarcid,\horstParVarstart,\horstParVarendInclusive}{}}
  \begin{horstClause}
    \horstFreeVar{memN}{\horstTypeMemory}
    \horstFreeVar{atN}{\horstTypeHomInit{\horstTypeLValue}{\horstOpAppas{\horstParVarfid}{}}}
    \horstFreeVar{tbl}{\horstTypeTable}
    \horstFreeVar{rt}{\horstTypeHomInit{\horstTypeLValue}{\horstOpApprs{\horstParVarfid}{}}}
    \horstFreeVar{ctx}{\horstTypeContext}
    \horstFreeVar{mem}{\horstTypeMemory}
    \horstFreeVar{gtN}{\horstTypeHomInit{\horstTypeLValue}{\horstOpAppgs{}{}}}
    \horstFreeVar{gt}{\horstTypeHomInit{\horstTypeLValue}{\horstOpAppgs{}{}}}
    \horstPremise{\horstPredAppReturnCall{\horstParVarfid}{\horstFreeVarctx,\horstFreeVarrt,\horstFreeVargt,\horstFreeVarmem,\horstFreeVartbl,\horstFreeVaratN,\horstFreeVargtN,\horstFreeVarmemN}}
    \horstPremise{\horstOpAppabsle{}{\horstOpAppmkValue{}{\horstParVarstart},\horstOpAppindexOfMem{}{\horstFreeVarmem}}}
    \horstPremise{\horstOpAppabsle{}{\horstOpAppindexOfMem{}{\horstFreeVarmem},\horstOpAppmkValue{}{\horstParVarendInclusive}}}
    \horstPremise{\horstSimpleSumExp{OR}{\horstSelectorFunctionAppmemoryDataOutLabel{\horstParVarcd,\horstParVarid}{}}{\horstAND{\horstOpAppflowsTo{}{\horstOpAppmkLabel{}{\horstParVarcd,\horstParVarid},\horstOpAppperspectiveOfCtx{}{\horstFreeVarctx}}}{\horstEQ{\horstOpApplabelOf{}{\horstOpAppvalueOfMem{}{\horstFreeVarmem}}}{\horstConstructorAppIllegal{}}}}{{cd}{id}}}
    \horstConclusion{\horstPredApptestNoninterferenceMemorAreaSat{\horstParVarfid,\horstParVarcid,\horstParVarstart,\horstParVarendInclusive}{}}
  \end{horstClause}
\end{horstRule}
\begin{horstRule}{testNoninterferenceMemorySizeUnsat}
  \horstParVar{fid}{\horstTypeint}
  \horstParVar{cid}{\horstTypeint}
  \horstSelectorFunctionInvocation{\horstSelectorFunctionAppstartFunctionId{\horstParVarfid}{},\horstSelectorFunctionAppmemorySizeSafe{\horstParVarcid}{}}
  \begin{horstClause}
    \horstFreeVar{memN}{\horstTypeMemory}
    \horstFreeVar{atN}{\horstTypeHomInit{\horstTypeLValue}{\horstOpAppas{\horstParVarfid}{}}}
    \horstFreeVar{tbl}{\horstTypeTable}
    \horstFreeVar{rt}{\horstTypeHomInit{\horstTypeLValue}{\horstOpApprs{\horstParVarfid}{}}}
    \horstFreeVar{ctx}{\horstTypeContext}
    \horstFreeVar{mem}{\horstTypeMemory}
    \horstFreeVar{gtN}{\horstTypeHomInit{\horstTypeLValue}{\horstOpAppgs{}{}}}
    \horstFreeVar{gt}{\horstTypeHomInit{\horstTypeLValue}{\horstOpAppgs{}{}}}
    \horstPremise{\horstPredAppReturnCall{\horstParVarfid}{\horstFreeVarctx,\horstFreeVarrt,\horstFreeVargt,\horstFreeVarmem,\horstFreeVartbl,\horstFreeVaratN,\horstFreeVargtN,\horstFreeVarmemN}}
    \horstPremise{\horstSimpleSumExp{OR}{\horstSelectorFunctionAppmemorySizeOutLabel{\horstParVarcs,\horstParVaris}{}}{\horstAND{\horstOpAppflowsTo{}{\horstOpAppmkLabel{}{\horstParVarcs,\horstParVaris},\horstOpAppperspectiveOfCtx{}{\horstFreeVarctx}}}{\horstEQ{\horstOpApplabelOf{}{\horstOpAppsizeOfMem{}{\horstFreeVarmem}}}{\horstConstructorAppIllegal{}}}}{{cs}{is}}}
    \horstConclusion{\horstPredApptestNoninterferenceMemorySizeUnsat{\horstParVarfid,\horstParVarcid}{}}
  \end{horstClause}
\end{horstRule}
\begin{horstRule}{testNoninterferenceImportedFunctionContextUnsat}
  \horstParVar{fid}{\horstTypeint}
  \horstParVar{cid}{\horstTypeint}
  \horstSelectorFunctionInvocation{\horstSelectorFunctionAppimportedFunctionIds{\horstParVarfid}{},\horstSelectorFunctionAppimportCallContextSafe{\horstParVarcid}{\horstParVarfid}}
  \begin{horstClause}
    \horstFreeVar{memN}{\horstTypeMemory}
    \horstFreeVar{atN}{\horstTypeHomInit{\horstTypeLValue}{\horstOpAppas{\horstParVarfid}{}}}
    \horstFreeVar{tbl}{\horstTypeTable}
    \horstFreeVar{ctx}{\horstTypeContext}
    \horstFreeVar{lt}{\horstTypeHomInit{\horstTypeLValue}{\horstOpAppls{\horstParVarfid}{}}}
    \horstFreeVar{mem}{\horstTypeMemory}
    \horstFreeVar{gtN}{\horstTypeHomInit{\horstTypeLValue}{\horstOpAppgs{}{}}}
    \horstFreeVar{gt}{\horstTypeHomInit{\horstTypeLValue}{\horstOpAppgs{}{}}}
    \horstPremise{\horstPredAppMState{\horstParVarfid,0}{\horstFreeVarctx,\horstTUPINIT{},\horstFreeVargt,\horstFreeVarlt,\horstFreeVarmem,\horstFreeVartbl,\horstFreeVaratN,\horstFreeVargtN,\horstFreeVarmemN}}
    \horstPremise{\horstSimpleSumExp{OR}{\horstSelectorFunctionAppcontextLabelForImportedFunction{\horstParVarcc,\horstParVaric}{\horstParVarfid}}{\horstAND{\horstOpAppflowsTo{}{\horstOpAppmkLabel{}{\horstParVarcc,\horstParVaric},\horstOpAppperspectiveOfCtx{}{\horstFreeVarctx}}}{\horstEQ{\horstOpApplabelOfCtx{}{\horstFreeVarctx}}{\horstConstructorAppIllegal{}}}}{{cc}{ic}}}
    \horstConclusion{\horstPredApptestNoninterferenceImportedFunctionContextUnsat{\horstParVarfid,\horstParVarcid}{}}
  \end{horstClause}
\end{horstRule}
\begin{horstRule}{localGetRule}
  \horstParVar{fid}{\horstTypeint}
  \horstParVar{pc}{\horstTypeint}
  \horstParVar{idx}{\horstTypeint}
  \horstSelectorFunctionInvocation{\horstSelectorFunctionAppfunctionIds{\horstParVarfid}{},\horstSelectorFunctionApppcsForFunctionIdAndOpcode{\horstParVarpc}{\horstParVarfid,\horstConstLOCALGET},\horstSelectorFunctionAppimmediateForFunctionIdAndPc{\horstParVaridx}{\horstParVarfid,\horstParVarpc}}
  \begin{horstClause}
    \horstFreeVar{memN}{\horstTypeMemory}
    \horstFreeVar{atN}{\horstTypeHomInit{\horstTypeLValue}{\horstOpAppas{\horstParVarfid}{}}}
    \horstFreeVar{st}{\horstTypeHomInit{\horstTypeLValue}{\horstOpAppss{\horstParVarfid,\horstParVarpc}{}}}
    \horstFreeVar{tbl}{\horstTypeTable}
    \horstFreeVar{ctx}{\horstTypeContext}
    \horstFreeVar{lt}{\horstTypeHomInit{\horstTypeLValue}{\horstOpAppls{\horstParVarfid}{}}}
    \horstFreeVar{mem}{\horstTypeMemory}
    \horstFreeVar{gtN}{\horstTypeHomInit{\horstTypeLValue}{\horstOpAppgs{}{}}}
    \horstFreeVar{gt}{\horstTypeHomInit{\horstTypeLValue}{\horstOpAppgs{}{}}}
    \horstPremise{\horstPredAppMState{\horstParVarfid,\horstParVarpc}{\horstFreeVarctx,\horstFreeVarst,\horstFreeVargt,\horstFreeVarlt,\horstFreeVarmem,\horstFreeVartbl,\horstFreeVaratN,\horstFreeVargtN,\horstFreeVarmemN}}
    \horstConclusion{\horstPredAppMState{\horstParVarfid,\horstADD{\horstParVarpc}{1}}{\horstFreeVarctx,\horstCONS{\horstOpAppraiseTo{}{\horstACCESS{\horstFreeVarlt}{\horstParVaridx},\horstOpApplabelOfCtx{}{\horstFreeVarctx}}}{\horstFreeVarst},\horstFreeVargt,\horstFreeVarlt,\horstFreeVarmem,\horstFreeVartbl,\horstFreeVaratN,\horstFreeVargtN,\horstFreeVarmemN}}
  \end{horstClause}
\end{horstRule}
\begin{horstRule}{testNoninterferenceGlobalSat}
  \horstParVar{fid}{\horstTypeint}
  \horstParVar{cid}{\horstTypeint}
  \horstSelectorFunctionInvocation{\horstSelectorFunctionAppstartFunctionId{\horstParVarfid}{},\horstSelectorFunctionAppglobalLeak{\horstParVarcid}{}}
  \begin{horstClause}
    \horstFreeVar{memN}{\horstTypeMemory}
    \horstFreeVar{atN}{\horstTypeHomInit{\horstTypeLValue}{\horstOpAppas{\horstParVarfid}{}}}
    \horstFreeVar{tbl}{\horstTypeTable}
    \horstFreeVar{rt}{\horstTypeHomInit{\horstTypeLValue}{\horstOpApprs{\horstParVarfid}{}}}
    \horstFreeVar{ctx}{\horstTypeContext}
    \horstFreeVar{mem}{\horstTypeMemory}
    \horstFreeVar{gtN}{\horstTypeHomInit{\horstTypeLValue}{\horstOpAppgs{}{}}}
    \horstFreeVar{gt}{\horstTypeHomInit{\horstTypeLValue}{\horstOpAppgs{}{}}}
    \horstPremise{\horstPredAppReturnCall{\horstParVarfid}{\horstFreeVarctx,\horstFreeVarrt,\horstFreeVargt,\horstFreeVarmem,\horstFreeVartbl,\horstFreeVaratN,\horstFreeVargtN,\horstFreeVarmemN}}
    \horstPremise{\horstSimpleSumExp{OR}{\horstSelectorFunctionAppinterval{\horstParVaridx}{0,\horstOpAppgs{}{}},\horstSelectorFunctionAppglobalOutLabelForPosition{\horstParVarcg,\horstParVarig}{\horstParVaridx}}{\horstAND{\horstOpAppflowsTo{}{\horstOpAppmkLabel{}{\horstParVarcg,\horstParVarig},\horstOpAppperspectiveOfCtx{}{\horstFreeVarctx}}}{\horstEQ{\horstOpApplabelOf{}{\horstACCESS{\horstFreeVargt}{\horstParVaridx}}}{\horstConstructorAppIllegal{}}}}{{idx}{cg}{ig}}}
    \horstConclusion{\horstPredApptestNoninterferenceGlobalSat{\horstParVarfid,\horstParVarcid}{}}
  \end{horstClause}
\end{horstRule}
\begin{horstRule}{testNoninterferenceImportedFunctionTableUnsat}
  \horstParVar{fid}{\horstTypeint}
  \horstParVar{cid}{\horstTypeint}
  \horstSelectorFunctionInvocation{\horstSelectorFunctionAppimportedFunctionIds{\horstParVarfid}{},\horstSelectorFunctionAppimportCallTableSafe{\horstParVarcid}{\horstParVarfid}}
  \begin{horstClause}
    \horstFreeVar{memN}{\horstTypeMemory}
    \horstFreeVar{atN}{\horstTypeHomInit{\horstTypeLValue}{\horstOpAppas{\horstParVarfid}{}}}
    \horstFreeVar{ptbl}{\horstTypeTablePrecision}
    \horstFreeVar{ctx}{\horstTypeContext}
    \horstFreeVar{lt}{\horstTypeHomInit{\horstTypeLValue}{\horstOpAppls{\horstParVarfid}{}}}
    \horstFreeVar{mem}{\horstTypeMemory}
    \horstFreeVar{gtN}{\horstTypeHomInit{\horstTypeLValue}{\horstOpAppgs{}{}}}
    \horstFreeVar{gt}{\horstTypeHomInit{\horstTypeLValue}{\horstOpAppgs{}{}}}
    \horstPremise{\horstPredAppMState{\horstParVarfid,0}{\horstFreeVarctx,\horstTUPINIT{},\horstFreeVargt,\horstFreeVarlt,\horstFreeVarmem,\horstConstructorAppTbl{\horstFreeVarptbl,\horstConstructorAppIllegal{}},\horstFreeVaratN,\horstFreeVargtN,\horstFreeVarmemN}}
    \horstPremise{\horstSimpleSumExp{OR}{\horstSelectorFunctionApptableInLabelForImportedFunction{\horstParVarct,\horstParVarit}{\horstParVarfid}}{\horstOpAppflowsTo{}{\horstOpAppmkLabel{}{\horstParVarct,\horstParVarit},\horstOpAppperspectiveOfCtx{}{\horstFreeVarctx}}}{{ct}{it}}}
    \horstConclusion{\horstPredApptestNoninterferenceImportedFunctionTableUnsat{\horstParVarfid,\horstParVarcid}{}}
  \end{horstClause}
\end{horstRule}
\begin{horstRule}{constRule}
  \horstParVar{fid}{\horstTypeint}
  \horstParVar{pc}{\horstTypeint}
  \horstParVar{v}{\horstTypeint}
  \horstParVar{top}{\horstTypebool}
  \horstSelectorFunctionInvocation{\horstSelectorFunctionAppfunctionIds{\horstParVarfid}{},\horstSelectorFunctionApppcsAndValueAndTopOfConstsForFunctionId{\horstParVarpc,\horstParVarv,\horstParVartop}{\horstParVarfid}}
  \begin{horstClause}
    \horstFreeVar{memN}{\horstTypeMemory}
    \horstFreeVar{atN}{\horstTypeHomInit{\horstTypeLValue}{\horstOpAppas{\horstParVarfid}{}}}
    \horstFreeVar{st}{\horstTypeHomInit{\horstTypeLValue}{\horstOpAppss{\horstParVarfid,\horstParVarpc}{}}}
    \horstFreeVar{tbl}{\horstTypeTable}
    \horstFreeVar{ctx}{\horstTypeContext}
    \horstFreeVar{lt}{\horstTypeHomInit{\horstTypeLValue}{\horstOpAppls{\horstParVarfid}{}}}
    \horstFreeVar{mem}{\horstTypeMemory}
    \horstFreeVar{gtN}{\horstTypeHomInit{\horstTypeLValue}{\horstOpAppgs{}{}}}
    \horstFreeVar{gt}{\horstTypeHomInit{\horstTypeLValue}{\horstOpAppgs{}{}}}
    \horstPremise{\horstPredAppMState{\horstParVarfid,\horstParVarpc}{\horstFreeVarctx,\horstFreeVarst,\horstFreeVargt,\horstFreeVarlt,\horstFreeVarmem,\horstFreeVartbl,\horstFreeVaratN,\horstFreeVargtN,\horstFreeVarmemN}}
    \horstConclusion{\horstPredAppMState{\horstParVarfid,\horstADD{\horstParVarpc}{1}}{\horstFreeVarctx,\horstCONS{\horstConstructorAppLVal{\horstOpAppval{\horstParVartop,\horstParVarv}{},\horstOpApplabelOfCtx{}{\horstFreeVarctx}}}{\horstFreeVarst},\horstFreeVargt,\horstFreeVarlt,\horstFreeVarmem,\horstFreeVartbl,\horstFreeVaratN,\horstFreeVargtN,\horstFreeVarmemN}}
  \end{horstClause}
\end{horstRule}
\begin{horstRule}{importedFunctionRule}
  \horstParVar{fid}{\horstTypeint}
  \horstSelectorFunctionInvocation{\horstSelectorFunctionAppimportedFunctionIds{\horstParVarfid}{}}
  \begin{horstClause}
    \horstFreeVar{atN}{\horstTypeHomInit{\horstTypeLValue}{\horstOpAppas{\horstParVarfid}{}}}
    \horstFreeVar{rt}{\horstTypeHomInit{\horstTypeLValue}{\horstOpApprs{\horstParVarfid}{}}}
    \horstFreeVar{tbl}{\horstTypeTable}
    \horstFreeVar{rtbl}{\horstTypeTable}
    \horstFreeVar{nv}{\horstTypeLValue}
    \horstFreeVar{lt}{\horstTypeHomInit{\horstTypeLValue}{\horstOpAppls{\horstParVarfid}{}}}
    \horstFreeVar{rfrom}{\horstTypeint}
    \horstFreeVar{i}{\horstTypeValue}
    \horstFreeVar{gt}{\horstTypeHomInit{\horstTypeLValue}{\horstOpAppgs{}{}}}
    \horstFreeVar{size}{\horstTypeLValue}
    \horstFreeVar{l}{\horstTypeFlowLabel}
    \horstFreeVar{from}{\horstTypeint}
    \horstFreeVar{memN}{\horstTypeMemory}
    \horstFreeVar{p}{\horstTypeLabel}
    \horstFreeVar{nsize}{\horstTypeLValue}
    \horstFreeVar{v}{\horstTypeLValue}
    \horstFreeVar{rl}{\horstTypeFlowLabel}
    \horstFreeVar{gtN}{\horstTypeHomInit{\horstTypeLValue}{\horstOpAppgs{}{}}}
    \horstFreeVar{rgt}{\horstTypeHomInit{\horstTypeLValue}{\horstOpAppgs{}{}}}
    \horstPremise{\horstPredAppMState{\horstParVarfid,0}{\horstConstructorAppCtx{\horstFreeVarp,\horstFreeVarl,\horstFreeVarfrom},\horstTUPINIT{},\horstFreeVargt,\horstFreeVarlt,\horstConstructorAppMem{\horstFreeVari,\horstFreeVarv,\horstFreeVarsize},\horstFreeVartbl,\horstFreeVaratN,\horstFreeVargtN,\horstFreeVarmemN}}
    \horstPremise{\horstSimpleSumExp{AND}{\horstSelectorFunctionAppinterval{\horstParVari}{0,\horstOpApprs{\horstParVarfid}{}},\horstSelectorFunctionAppbitwidthForResult{\horstParVarbw}{\horstParVarfid,\horstParVari},\horstSelectorFunctionAppresultLabelForImportedFunctionAndPosition{\horstParVarcr,\horstParVarir}{\horstParVarfid,\horstParVari}}{\horstAND{\horstOpAppisInRange{\horstParVarbw}{\horstOpAppvalueOf{}{\horstACCESS{\horstFreeVarrt}{\horstParVari}}}}{\horstEQ{\horstOpApplabelOf{}{\horstACCESS{\horstFreeVarrt}{\horstParVari}}}{\horstOpAppflub{2}{\horstTUPINIT{\horstFreeVarl,\horstOpAppmkFlowLabel{}{\horstOpAppmkLabel{}{\horstParVarcr,\horstParVarir},\horstFreeVarp}}}}}}{{i}{bw}{cr}{ir}}}
    \horstPremise{\horstSimpleSumExp{AND}{\horstSelectorFunctionAppinterval{\horstParVari}{0,\horstOpAppgs{}{}},\horstSelectorFunctionAppbitwidthForGlobal{\horstParVarbw}{\horstParVari},\horstSelectorFunctionAppglobalOutLabelForImportedFunctionAndPosition{\horstParVarpass,\horstParVarcg,\horstParVarig}{\horstParVarfid,\horstParVari}}{\horstCOND{\horstParVarpass}{\horstEQ{\horstACCESS{\horstFreeVargt}{\horstParVari}}{\horstACCESS{\horstFreeVarrgt}{\horstParVari}}}{\horstAND{\horstOpAppisInRange{\horstParVarbw}{\horstOpAppvalueOf{}{\horstACCESS{\horstFreeVarrgt}{\horstParVari}}}}{\horstEQ{\horstOpApplabelOf{}{\horstACCESS{\horstFreeVarrgt}{\horstParVari}}}{\horstOpAppflub{2}{\horstTUPINIT{\horstFreeVarl,\horstOpAppmkFlowLabel{}{\horstOpAppmkLabel{}{\horstParVarcg,\horstParVarig},\horstFreeVarp}}}}}}}{{i}{bw}{pass}{cg}{ig}}}
    \horstPremise{\horstSimpleSumExp{AND}{\horstSelectorFunctionAppmemoryDataOutLabelForImportedFunction{\horstParVarpass,\horstParVarcd,\horstParVarid}{\horstParVarfid}}{\horstCOND{\horstParVarpass}{\horstEQ{\horstFreeVarv}{\horstFreeVarnv}}{\horstAND{\horstOpAppisInRange{8}{\horstOpAppvalueOf{}{\horstFreeVarnv}}}{\horstEQ{\horstOpApplabelOf{}{\horstFreeVarnv}}{\horstOpAppflub{2}{\horstTUPINIT{\horstFreeVarl,\horstOpAppmkFlowLabel{}{\horstOpAppmkLabel{}{\horstParVarcd,\horstParVarid},\horstFreeVarp}}}}}}}{{pass}{cd}{id}}}
    \horstPremise{\horstSimpleSumExp{AND}{\horstSelectorFunctionAppmemorySizeOutLabelForImportedFunction{\horstParVarpass,\horstParVarcs,\horstParVaris}{\horstParVarfid}}{\horstCOND{\horstParVarpass}{\horstEQ{\horstFreeVarsize}{\horstFreeVarnsize}}{\horstAND{\horstAND{\horstOpAppileu{64}{\horstOpAppvalueOf{}{\horstFreeVarsize},\horstOpAppvalueOf{}{\horstFreeVarnsize}}}{\horstOpAppileu{64}{\horstOpAppvalueOf{}{\horstFreeVarnsize},\horstOpAppmkConst{\horstMUL{\horstOpAppmms{}{}}{\horstOpApppow{16}{2}}}{}}}}{\horstEQ{\horstOpApplabelOf{}{\horstFreeVarnsize}}{\horstOpAppflub{2}{\horstTUPINIT{\horstFreeVarl,\horstOpAppmkFlowLabel{}{\horstOpAppmkLabel{}{\horstParVarcs,\horstParVaris},\horstFreeVarp}}}}}}}{{pass}{cs}{is}}}
    \horstPremise{\horstSimpleSumExp{AND}{\horstSelectorFunctionApptableOutLabelForImportedFunction{\horstParVarpass,\horstParVarct,\horstParVarit}{\horstParVarfid}}{\horstCOND{\horstParVarpass}{\horstEQ{\horstFreeVartbl}{\horstFreeVarrtbl}}{\horstEQ{\horstFreeVarrtbl}{\horstConstructorAppTbl{\horstConstructorAppTblImprecise{},\horstOpAppflub{2}{\horstTUPINIT{\horstFreeVarl,\horstOpAppmkFlowLabel{}{\horstOpAppmkLabel{}{\horstParVarct,\horstParVarit},\horstFreeVarp}}}}}}}{{pass}{ct}{it}}}
    \horstPremise{\horstOR{\horstEQ{\horstFreeVarl}{\horstConstructorAppLegal{}}}{\horstAND{\horstEQ{\horstFreeVarrl}{\horstConstructorAppIllegal{}}}{\horstEQ{\horstFreeVarrfrom}{-1}}}}
    \horstConclusion{\horstPredAppReturn{\horstParVarfid}{\horstConstructorAppCtx{\horstFreeVarp,\horstFreeVarrl,\horstFreeVarrfrom},\horstFreeVarrt,\horstFreeVarrgt,\horstConstructorAppMem{\horstFreeVari,\horstFreeVarnv,\horstFreeVarnsize},\horstFreeVarrtbl,\horstFreeVaratN,\horstFreeVargtN,\horstFreeVarmemN}}
  \end{horstClause}
\end{horstRule}
\begin{horstRule}{testNoninterferenceImportedFunctionMemorySizeUnsat}
  \horstParVar{fid}{\horstTypeint}
  \horstParVar{cid}{\horstTypeint}
  \horstSelectorFunctionInvocation{\horstSelectorFunctionAppimportedFunctionIds{\horstParVarfid}{},\horstSelectorFunctionAppimportCallMemorySizeSafe{\horstParVarcid}{\horstParVarfid}}
  \begin{horstClause}
    \horstFreeVar{memN}{\horstTypeMemory}
    \horstFreeVar{atN}{\horstTypeHomInit{\horstTypeLValue}{\horstOpAppas{\horstParVarfid}{}}}
    \horstFreeVar{tbl}{\horstTypeTable}
    \horstFreeVar{ctx}{\horstTypeContext}
    \horstFreeVar{v}{\horstTypeLValue}
    \horstFreeVar{lt}{\horstTypeHomInit{\horstTypeLValue}{\horstOpAppls{\horstParVarfid}{}}}
    \horstFreeVar{gtN}{\horstTypeHomInit{\horstTypeLValue}{\horstOpAppgs{}{}}}
    \horstFreeVar{i}{\horstTypeValue}
    \horstFreeVar{gt}{\horstTypeHomInit{\horstTypeLValue}{\horstOpAppgs{}{}}}
    \horstFreeVar{size}{\horstTypeLValue}
    \horstPremise{\horstPredAppMState{\horstParVarfid,0}{\horstFreeVarctx,\horstTUPINIT{},\horstFreeVargt,\horstFreeVarlt,\horstConstructorAppMem{\horstFreeVari,\horstFreeVarv,\horstFreeVarsize},\horstFreeVartbl,\horstFreeVaratN,\horstFreeVargtN,\horstFreeVarmemN}}
    \horstPremise{\horstSimpleSumExp{OR}{\horstSelectorFunctionAppmemoryDataInLabelForImportedFunction{\horstParVarcs,\horstParVaris}{\horstParVarfid}}{\horstAND{\horstOpAppflowsTo{}{\horstOpAppmkLabel{}{\horstParVarcs,\horstParVaris},\horstOpAppperspectiveOfCtx{}{\horstFreeVarctx}}}{\horstEQ{\horstOpApplabelOf{}{\horstFreeVarsize}}{\horstConstructorAppIllegal{}}}}{{cs}{is}}}
    \horstConclusion{\horstPredApptestNoninterferenceImportedFunctionMemorySizeUnsat{\horstParVarfid,\horstParVarcid}{}}
  \end{horstClause}
\end{horstRule}
\begin{horstRule}{joinRule}
  \horstParVar{fid}{\horstTypeint}
  \horstParVar{pc}{\horstTypeint}
  \horstSelectorFunctionInvocation{\horstSelectorFunctionAppfunctionIds{\horstParVarfid}{},\horstSelectorFunctionAppjoinsForFunctionId{\horstParVarpc}{\horstParVarfid}}
  \begin{horstClause}
    \horstFreeVar{gtNI}{\horstTypeHomInit{\horstTypeLValue}{\horstOpAppgs{}{}}}
    \horstFreeVar{gtNII}{\horstTypeHomInit{\horstTypeLValue}{\horstOpAppgs{}{}}}
    \horstFreeVar{memNI}{\horstTypeMemory}
    \horstFreeVar{memNII}{\horstTypeMemory}
    \horstFreeVar{stII}{\horstTypeHomInit{\horstTypeLValue}{\horstOpAppss{\horstParVarfid,\horstParVarpc}{}}}
    \horstFreeVar{stIII}{\horstTypeHomInit{\horstTypeLValue}{\horstOpAppss{\horstParVarfid,\horstParVarpc}{}}}
    \horstFreeVar{stI}{\horstTypeHomInit{\horstTypeLValue}{\horstOpAppss{\horstParVarfid,\horstParVarpc}{}}}
    \horstFreeVar{atNII}{\horstTypeHomInit{\horstTypeLValue}{\horstOpAppas{\horstParVarfid}{}}}
    \horstFreeVar{memI}{\horstTypeMemory}
    \horstFreeVar{p}{\horstTypeLabel}
    \horstFreeVar{ltIII}{\horstTypeHomInit{\horstTypeLValue}{\horstOpAppls{\horstParVarfid}{}}}
    \horstFreeVar{ltI}{\horstTypeHomInit{\horstTypeLValue}{\horstOpAppls{\horstParVarfid}{}}}
    \horstFreeVar{ltII}{\horstTypeHomInit{\horstTypeLValue}{\horstOpAppls{\horstParVarfid}{}}}
    \horstFreeVar{memII}{\horstTypeMemory}
    \horstFreeVar{atNI}{\horstTypeHomInit{\horstTypeLValue}{\horstOpAppas{\horstParVarfid}{}}}
    \horstFreeVar{ctx}{\horstTypeContext}
    \horstFreeVar{memIII}{\horstTypeMemory}
    \horstFreeVar{gtII}{\horstTypeHomInit{\horstTypeLValue}{\horstOpAppgs{}{}}}
    \horstFreeVar{gtIII}{\horstTypeHomInit{\horstTypeLValue}{\horstOpAppgs{}{}}}
    \horstFreeVar{gtI}{\horstTypeHomInit{\horstTypeLValue}{\horstOpAppgs{}{}}}
    \horstFreeVar{tblI}{\horstTypeTable}
    \horstFreeVar{tblII}{\horstTypeTable}
    \horstPremise{\horstPredAppMStateToJoin{\horstParVarfid,\horstParVarpc}{\horstFreeVarctx,\horstFreeVarstI,\horstFreeVargtI,\horstFreeVarltI,\horstFreeVarmemI,\horstFreeVartblI,\horstFreeVaratNI,\horstFreeVargtNI,\horstFreeVarmemNI}}
    \horstPremise{\horstPredAppMStateToJoin{\horstParVarfid,\horstParVarpc}{\horstFreeVarctx,\horstFreeVarstII,\horstFreeVargtII,\horstFreeVarltII,\horstFreeVarmemII,\horstFreeVartblII,\horstFreeVaratNII,\horstFreeVargtNII,\horstFreeVarmemNII}}
    \horstPremise{\horstOpApplowEq{\horstOpAppas{\horstParVarfid}{}}{\horstFreeVaratNI,\horstFreeVaratNII}}
    \horstPremise{\horstOpApplowEq{\horstOpAppgs{}{}}{\horstFreeVargtNI,\horstFreeVargtNII}}
    \horstPremise{\horstOpApplowEqMem{}{\horstFreeVarmemNI,\horstFreeVarmemNII}}
    \horstPremise{\horstEQ{\horstOpAppperspectiveOfCtx{}{\horstFreeVarctx}}{\horstFreeVarp}}
    \horstPremise{\horstOpAppjoinTuples{\horstOpAppss{\horstParVarfid,\horstParVarpc}{}}{\horstFreeVarstI,\horstFreeVarstII,\horstFreeVarstIII}}
    \horstPremise{\horstOpAppjoinTuples{\horstOpAppgs{}{}}{\horstFreeVargtI,\horstFreeVargtII,\horstFreeVargtIII}}
    \horstPremise{\horstOpAppjoinTuples{\horstOpAppls{\horstParVarfid}{}}{\horstFreeVarltI,\horstFreeVarltII,\horstFreeVarltIII}}
    \horstPremise{\horstOpAppjoinMem{}{\horstFreeVarmemI,\horstFreeVarmemII,\horstFreeVarmemIII}}
    \horstConclusion{\horstPredAppMState{\horstParVarfid,\horstParVarpc}{\horstConstructorAppCtx{\horstFreeVarp,\horstConstructorAppLegal{},-1},\horstFreeVarstIII,\horstFreeVargtIII,\horstFreeVarltIII,\horstFreeVarmemIII,\horstFreeVartblI,\horstFreeVaratNI,\horstFreeVargtNI,\horstFreeVarmemNI}}
  \end{horstClause}
\end{horstRule}
\begin{horstRule}{dropRule}
  \horstParVar{fid}{\horstTypeint}
  \horstParVar{pc}{\horstTypeint}
  \horstSelectorFunctionInvocation{\horstSelectorFunctionAppfunctionIds{\horstParVarfid}{},\horstSelectorFunctionApppcsForFunctionIdAndOpcode{\horstParVarpc}{\horstParVarfid,\horstConstDROP}}
  \begin{horstClause}
    \horstFreeVar{memN}{\horstTypeMemory}
    \horstFreeVar{atN}{\horstTypeHomInit{\horstTypeLValue}{\horstOpAppas{\horstParVarfid}{}}}
    \horstFreeVar{st}{\horstTypeHomInit{\horstTypeLValue}{\horstSUB{\horstOpAppss{\horstParVarfid,\horstParVarpc}{}}{1}}}
    \horstFreeVar{tbl}{\horstTypeTable}
    \horstFreeVar{ctx}{\horstTypeContext}
    \horstFreeVar{lt}{\horstTypeHomInit{\horstTypeLValue}{\horstOpAppls{\horstParVarfid}{}}}
    \horstFreeVar{mem}{\horstTypeMemory}
    \horstFreeVar{x}{\horstTypeLValue}
    \horstFreeVar{gtN}{\horstTypeHomInit{\horstTypeLValue}{\horstOpAppgs{}{}}}
    \horstFreeVar{gt}{\horstTypeHomInit{\horstTypeLValue}{\horstOpAppgs{}{}}}
    \horstPremise{\horstPredAppMState{\horstParVarfid,\horstParVarpc}{\horstFreeVarctx,\horstCONS{\horstFreeVarx}{\horstFreeVarst},\horstFreeVargt,\horstFreeVarlt,\horstFreeVarmem,\horstFreeVartbl,\horstFreeVaratN,\horstFreeVargtN,\horstFreeVarmemN}}
    \horstConclusion{\horstPredAppMState{\horstParVarfid,\horstADD{\horstParVarpc}{1}}{\horstFreeVarctx,\horstFreeVarst,\horstFreeVargt,\horstFreeVarlt,\horstFreeVarmem,\horstFreeVartbl,\horstFreeVaratN,\horstFreeVargtN,\horstFreeVarmemN}}
  \end{horstClause}
\end{horstRule}
\begin{horstRule}{globalSetRule}
  \horstParVar{fid}{\horstTypeint}
  \horstParVar{pc}{\horstTypeint}
  \horstParVar{idx}{\horstTypeint}
  \horstSelectorFunctionInvocation{\horstSelectorFunctionAppfunctionIds{\horstParVarfid}{},\horstSelectorFunctionApppcsForFunctionIdAndOpcode{\horstParVarpc}{\horstParVarfid,\horstConstGLOBALSET},\horstSelectorFunctionAppimmediateForFunctionIdAndPc{\horstParVaridx}{\horstParVarfid,\horstParVarpc}}
  \begin{horstClause}
    \horstFreeVar{memN}{\horstTypeMemory}
    \horstFreeVar{atN}{\horstTypeHomInit{\horstTypeLValue}{\horstOpAppas{\horstParVarfid}{}}}
    \horstFreeVar{st}{\horstTypeHomInit{\horstTypeLValue}{\horstSUB{\horstOpAppss{\horstParVarfid,\horstParVarpc}{}}{1}}}
    \horstFreeVar{tbl}{\horstTypeTable}
    \horstFreeVar{ctx}{\horstTypeContext}
    \horstFreeVar{lt}{\horstTypeHomInit{\horstTypeLValue}{\horstOpAppls{\horstParVarfid}{}}}
    \horstFreeVar{mem}{\horstTypeMemory}
    \horstFreeVar{x}{\horstTypeLValue}
    \horstFreeVar{gtN}{\horstTypeHomInit{\horstTypeLValue}{\horstOpAppgs{}{}}}
    \horstFreeVar{gt}{\horstTypeHomInit{\horstTypeLValue}{\horstOpAppgs{}{}}}
    \horstPremise{\horstPredAppMState{\horstParVarfid,\horstParVarpc}{\horstFreeVarctx,\horstCONS{\horstFreeVarx}{\horstFreeVarst},\horstFreeVargt,\horstFreeVarlt,\horstFreeVarmem,\horstFreeVartbl,\horstFreeVaratN,\horstFreeVargtN,\horstFreeVarmemN}}
    \horstConclusion{\horstPredAppMState{\horstParVarfid,\horstADD{\horstParVarpc}{1}}{\horstFreeVarctx,\horstFreeVarst,\horstOpAppset{\horstOpAppgs{}{},\horstParVaridx}{\horstOpAppraiseTo{}{\horstFreeVarx,\horstOpApplabelOfCtx{}{\horstFreeVarctx}},\horstFreeVargt},\horstFreeVarlt,\horstFreeVarmem,\horstFreeVartbl,\horstFreeVaratN,\horstFreeVargtN,\horstFreeVarmemN}}
  \end{horstClause}
\end{horstRule}
\begin{horstRule}{nopRule}
  \horstParVar{fid}{\horstTypeint}
  \horstParVar{pc}{\horstTypeint}
  \horstSelectorFunctionInvocation{\horstSelectorFunctionAppfunctionIds{\horstParVarfid}{},\horstSelectorFunctionApppcsForFunctionIdAndOpcode{\horstParVarpc}{\horstParVarfid,\horstConstNOP}}
  \begin{horstClause}
    \horstFreeVar{memN}{\horstTypeMemory}
    \horstFreeVar{atN}{\horstTypeHomInit{\horstTypeLValue}{\horstOpAppas{\horstParVarfid}{}}}
    \horstFreeVar{st}{\horstTypeHomInit{\horstTypeLValue}{\horstOpAppss{\horstParVarfid,\horstParVarpc}{}}}
    \horstFreeVar{tbl}{\horstTypeTable}
    \horstFreeVar{ctx}{\horstTypeContext}
    \horstFreeVar{lt}{\horstTypeHomInit{\horstTypeLValue}{\horstOpAppls{\horstParVarfid}{}}}
    \horstFreeVar{mem}{\horstTypeMemory}
    \horstFreeVar{gtN}{\horstTypeHomInit{\horstTypeLValue}{\horstOpAppgs{}{}}}
    \horstFreeVar{gt}{\horstTypeHomInit{\horstTypeLValue}{\horstOpAppgs{}{}}}
    \horstPremise{\horstPredAppMState{\horstParVarfid,\horstParVarpc}{\horstFreeVarctx,\horstFreeVarst,\horstFreeVargt,\horstFreeVarlt,\horstFreeVarmem,\horstFreeVartbl,\horstFreeVaratN,\horstFreeVargtN,\horstFreeVarmemN}}
    \horstConclusion{\horstPredAppMState{\horstParVarfid,\horstADD{\horstParVarpc}{1}}{\horstFreeVarctx,\horstFreeVarst,\horstFreeVargt,\horstFreeVarlt,\horstFreeVarmem,\horstFreeVartbl,\horstFreeVaratN,\horstFreeVargtN,\horstFreeVarmemN}}
  \end{horstClause}
\end{horstRule}
\begin{horstRule}{brRule}
  \horstParVar{fid}{\horstTypeint}
  \horstParVar{pc}{\horstTypeint}
  \horstParVar{br}{\horstTypeint}
  \horstParVar{n}{\horstTypeint}
  \horstSelectorFunctionInvocation{\horstSelectorFunctionAppfunctionIds{\horstParVarfid}{},\horstSelectorFunctionApppcsForFunctionIdAndOpcode{\horstParVarpc}{\horstParVarfid,\horstConstBR},\horstSelectorFunctionAppbreakDestinations{\horstParVarbr}{\horstParVarfid,\horstParVarpc},\horstSelectorFunctionAppgetAmountOfReturnValuesInBlock{\horstParVarn}{\horstParVarfid,\horstParVarpc}}
  \begin{horstClause}
    \horstFreeVar{memN}{\horstTypeMemory}
    \horstFreeVar{atN}{\horstTypeHomInit{\horstTypeLValue}{\horstOpAppas{\horstParVarfid}{}}}
    \horstFreeVar{st}{\horstTypeHomInit{\horstTypeLValue}{\horstOpAppss{\horstParVarfid,\horstParVarpc}{}}}
    \horstFreeVar{tbl}{\horstTypeTable}
    \horstFreeVar{ctx}{\horstTypeContext}
    \horstFreeVar{lt}{\horstTypeHomInit{\horstTypeLValue}{\horstOpAppls{\horstParVarfid}{}}}
    \horstFreeVar{mem}{\horstTypeMemory}
    \horstFreeVar{gtN}{\horstTypeHomInit{\horstTypeLValue}{\horstOpAppgs{}{}}}
    \horstFreeVar{gt}{\horstTypeHomInit{\horstTypeLValue}{\horstOpAppgs{}{}}}
    \horstPremise{\horstPredAppMState{\horstParVarfid,\horstParVarpc}{\horstFreeVarctx,\horstFreeVarst,\horstFreeVargt,\horstFreeVarlt,\horstFreeVarmem,\horstFreeVartbl,\horstFreeVaratN,\horstFreeVargtN,\horstFreeVarmemN}}
    \horstConclusion{\horstPredAppMState{\horstParVarfid,\horstParVarbr}{\horstFreeVarctx,\horstOpAppdrop{\horstOpAppss{\horstParVarfid,\horstParVarpc}{},\horstParVarn,\horstSUB{\horstOpAppss{\horstParVarfid,\horstParVarpc}{}}{\horstOpAppss{\horstParVarfid,\horstParVarbr}{}}}{\horstFreeVarst},\horstFreeVargt,\horstFreeVarlt,\horstFreeVarmem,\horstFreeVartbl,\horstFreeVaratN,\horstFreeVargtN,\horstFreeVarmemN}}
  \end{horstClause}
  \begin{horstClause}
    \horstFreeVar{memN}{\horstTypeMemory}
    \horstFreeVar{atN}{\horstTypeHomInit{\horstTypeLValue}{\horstOpAppas{\horstParVarfid}{}}}
    \horstFreeVar{st}{\horstTypeHomInit{\horstTypeLValue}{\horstOpAppss{\horstParVarfid,\horstParVarpc}{}}}
    \horstFreeVar{tbl}{\horstTypeTable}
    \horstFreeVar{p}{\horstTypeLabel}
    \horstFreeVar{lt}{\horstTypeHomInit{\horstTypeLValue}{\horstOpAppls{\horstParVarfid}{}}}
    \horstFreeVar{mem}{\horstTypeMemory}
    \horstFreeVar{gtN}{\horstTypeHomInit{\horstTypeLValue}{\horstOpAppgs{}{}}}
    \horstFreeVar{gt}{\horstTypeHomInit{\horstTypeLValue}{\horstOpAppgs{}{}}}
    \horstFreeVar{from}{\horstTypeint}
    \horstPremise{\horstPredAppMState{\horstParVarfid,\horstParVarpc}{\horstConstructorAppCtx{\horstFreeVarp,\horstConstructorAppIllegal{},\horstFreeVarfrom},\horstFreeVarst,\horstFreeVargt,\horstFreeVarlt,\horstFreeVarmem,\horstFreeVartbl,\horstFreeVaratN,\horstFreeVargtN,\horstFreeVarmemN}}
    \horstConclusion{\horstPredAppScopeExtend{\horstParVarfid}{\horstFreeVarfrom,\horstParVarbr}}
  \end{horstClause}
\end{horstRule}
\begin{horstRule}{endRule}
  \horstParVar{fid}{\horstTypeint}
  \horstParVar{pc}{\horstTypeint}
  \horstParVar{next}{\horstTypeint}
  \horstSelectorFunctionInvocation{\horstSelectorFunctionAppfunctionIds{\horstParVarfid}{},\horstSelectorFunctionAppendsForFunctionId{\horstParVarpc,\horstParVarnext}{\horstParVarfid}}
  \begin{horstClause}
    \horstFreeVar{memN}{\horstTypeMemory}
    \horstFreeVar{atN}{\horstTypeHomInit{\horstTypeLValue}{\horstOpAppas{\horstParVarfid}{}}}
    \horstFreeVar{st}{\horstTypeHomInit{\horstTypeLValue}{\horstOpAppss{\horstParVarfid,\horstParVarpc}{}}}
    \horstFreeVar{tbl}{\horstTypeTable}
    \horstFreeVar{ctx}{\horstTypeContext}
    \horstFreeVar{lt}{\horstTypeHomInit{\horstTypeLValue}{\horstOpAppls{\horstParVarfid}{}}}
    \horstFreeVar{mem}{\horstTypeMemory}
    \horstFreeVar{gtN}{\horstTypeHomInit{\horstTypeLValue}{\horstOpAppgs{}{}}}
    \horstFreeVar{gt}{\horstTypeHomInit{\horstTypeLValue}{\horstOpAppgs{}{}}}
    \horstPremise{\horstPredAppMState{\horstParVarfid,\horstParVarpc}{\horstFreeVarctx,\horstFreeVarst,\horstFreeVargt,\horstFreeVarlt,\horstFreeVarmem,\horstFreeVartbl,\horstFreeVaratN,\horstFreeVargtN,\horstFreeVarmemN}}
    \horstPremise{\horstEQ{\horstOpApplabelOfCtx{}{\horstFreeVarctx}}{\horstConstructorAppLegal{}}}
    \horstConclusion{\horstPredAppMState{\horstParVarfid,\horstParVarnext}{\horstFreeVarctx,\horstFreeVarst,\horstFreeVargt,\horstFreeVarlt,\horstFreeVarmem,\horstFreeVartbl,\horstFreeVaratN,\horstFreeVargtN,\horstFreeVarmemN}}
  \end{horstClause}
  \begin{horstClause}
    \horstFreeVar{memN}{\horstTypeMemory}
    \horstFreeVar{atN}{\horstTypeHomInit{\horstTypeLValue}{\horstOpAppas{\horstParVarfid}{}}}
    \horstFreeVar{st}{\horstTypeHomInit{\horstTypeLValue}{\horstOpAppss{\horstParVarfid,\horstParVarpc}{}}}
    \horstFreeVar{tbl}{\horstTypeTable}
    \horstFreeVar{ctx}{\horstTypeContext}
    \horstFreeVar{lt}{\horstTypeHomInit{\horstTypeLValue}{\horstOpAppls{\horstParVarfid}{}}}
    \horstFreeVar{mem}{\horstTypeMemory}
    \horstFreeVar{gtN}{\horstTypeHomInit{\horstTypeLValue}{\horstOpAppgs{}{}}}
    \horstFreeVar{gt}{\horstTypeHomInit{\horstTypeLValue}{\horstOpAppgs{}{}}}
    \horstPremise{\horstPredAppMState{\horstParVarfid,\horstParVarpc}{\horstFreeVarctx,\horstFreeVarst,\horstFreeVargt,\horstFreeVarlt,\horstFreeVarmem,\horstFreeVartbl,\horstFreeVaratN,\horstFreeVargtN,\horstFreeVarmemN}}
    \horstPremise{\horstEQ{\horstOpApplabelOfCtx{}{\horstFreeVarctx}}{\horstConstructorAppIllegal{}}}
    \horstConclusion{\horstPredAppMStateToJoin{\horstParVarfid,\horstParVarnext}{\horstFreeVarctx,\horstFreeVarst,\horstFreeVargt,\horstFreeVarlt,\horstFreeVarmem,\horstFreeVartbl,\horstFreeVaratN,\horstFreeVargtN,\horstFreeVarmemN}}
  \end{horstClause}
  \begin{horstClause}
    \horstFreeVar{memN}{\horstTypeMemory}
    \horstFreeVar{atN}{\horstTypeHomInit{\horstTypeLValue}{\horstOpAppas{\horstParVarfid}{}}}
    \horstFreeVar{st}{\horstTypeHomInit{\horstTypeLValue}{\horstOpAppss{\horstParVarfid,\horstParVarpc}{}}}
    \horstFreeVar{tbl}{\horstTypeTable}
    \horstFreeVar{p}{\horstTypeLabel}
    \horstFreeVar{lt}{\horstTypeHomInit{\horstTypeLValue}{\horstOpAppls{\horstParVarfid}{}}}
    \horstFreeVar{mem}{\horstTypeMemory}
    \horstFreeVar{gtN}{\horstTypeHomInit{\horstTypeLValue}{\horstOpAppgs{}{}}}
    \horstFreeVar{to}{\horstTypeint}
    \horstFreeVar{gt}{\horstTypeHomInit{\horstTypeLValue}{\horstOpAppgs{}{}}}
    \horstFreeVar{from}{\horstTypeint}
    \horstPremise{\horstPredAppMState{\horstParVarfid,\horstParVarpc}{\horstConstructorAppCtx{\horstFreeVarp,\horstConstructorAppIllegal{},\horstFreeVarfrom},\horstFreeVarst,\horstFreeVargt,\horstFreeVarlt,\horstFreeVarmem,\horstFreeVartbl,\horstFreeVaratN,\horstFreeVargtN,\horstFreeVarmemN}}
    \horstPremise{\horstPredAppScopeExtend{\horstParVarfid}{\horstFreeVarfrom,\horstFreeVarto}}
    \horstPremise{\horstLT{\horstParVarnext}{\horstFreeVarto}}
    \horstConclusion{\horstPredAppMState{\horstParVarfid,\horstParVarnext}{\horstConstructorAppCtx{\horstFreeVarp,\horstConstructorAppIllegal{},\horstFreeVarfrom},\horstFreeVarst,\horstFreeVargt,\horstFreeVarlt,\horstFreeVarmem,\horstFreeVartbl,\horstFreeVaratN,\horstFreeVargtN,\horstFreeVarmemN}}
  \end{horstClause}
\end{horstRule}
\begin{horstRule}{testNoninterferenceImportedFunctionParamsUnsat}
  \horstParVar{fid}{\horstTypeint}
  \horstParVar{cid}{\horstTypeint}
  \horstSelectorFunctionInvocation{\horstSelectorFunctionAppimportedFunctionIds{\horstParVarfid}{},\horstSelectorFunctionAppimportCallArgumentSafe{\horstParVarcid}{\horstParVarfid}}
  \begin{horstClause}
    \horstFreeVar{memN}{\horstTypeMemory}
    \horstFreeVar{atN}{\horstTypeHomInit{\horstTypeLValue}{\horstOpAppas{\horstParVarfid}{}}}
    \horstFreeVar{tbl}{\horstTypeTable}
    \horstFreeVar{ctx}{\horstTypeContext}
    \horstFreeVar{lt}{\horstTypeHomInit{\horstTypeLValue}{\horstOpAppls{\horstParVarfid}{}}}
    \horstFreeVar{mem}{\horstTypeMemory}
    \horstFreeVar{gtN}{\horstTypeHomInit{\horstTypeLValue}{\horstOpAppgs{}{}}}
    \horstFreeVar{gt}{\horstTypeHomInit{\horstTypeLValue}{\horstOpAppgs{}{}}}
    \horstPremise{\horstPredAppMState{\horstParVarfid,0}{\horstFreeVarctx,\horstTUPINIT{},\horstFreeVargt,\horstFreeVarlt,\horstFreeVarmem,\horstFreeVartbl,\horstFreeVaratN,\horstFreeVargtN,\horstFreeVarmemN}}
    \horstPremise{\horstSimpleSumExp{OR}{\horstSelectorFunctionAppinterval{\horstParVaridx}{0,\horstOpAppas{\horstParVarfid}{}},\horstSelectorFunctionAppargumentLabelForImportedFunctionAndPosition{\horstParVarca,\horstParVaria}{\horstParVarfid,\horstParVaridx}}{\horstAND{\horstOpAppflowsTo{}{\horstOpAppmkLabel{}{\horstParVarca,\horstParVaria},\horstOpAppperspectiveOfCtx{}{\horstFreeVarctx}}}{\horstEQ{\horstOpApplabelOf{}{\horstACCESS{\horstFreeVaratN}{\horstParVaridx}}}{\horstConstructorAppIllegal{}}}}{{idx}{ca}{ia}}}
    \horstConclusion{\horstPredApptestNoninterferenceImportedFunctionParamsUnsat{\horstParVarfid,\horstParVarcid}{}}
  \end{horstClause}
\end{horstRule}
\begin{horstRule}{ifThenElseRule}
  \horstParVar{fid}{\horstTypeint}
  \horstParVar{pc}{\horstTypeint}
  \horstParVar{else}{\horstTypeint}
  \horstParVar{end}{\horstTypeint}
  \horstSelectorFunctionInvocation{\horstSelectorFunctionAppfunctionIds{\horstParVarfid}{},\horstSelectorFunctionAppifs{\horstParVarpc}{\horstParVarfid},\horstSelectorFunctionAppelseForIf{\horstParVarelse}{\horstParVarfid,\horstParVarpc},\horstSelectorFunctionAppendForIf{\horstParVarend}{\horstParVarfid,\horstParVarpc}}
  \begin{horstClause}
    \horstFreeVar{memN}{\horstTypeMemory}
    \horstFreeVar{atN}{\horstTypeHomInit{\horstTypeLValue}{\horstOpAppas{\horstParVarfid}{}}}
    \horstFreeVar{st}{\horstTypeHomInit{\horstTypeLValue}{\horstSUB{\horstOpAppss{\horstParVarfid,\horstParVarpc}{}}{1}}}
    \horstFreeVar{tbl}{\horstTypeTable}
    \horstFreeVar{ctx}{\horstTypeContext}
    \horstFreeVar{lt}{\horstTypeHomInit{\horstTypeLValue}{\horstOpAppls{\horstParVarfid}{}}}
    \horstFreeVar{mem}{\horstTypeMemory}
    \horstFreeVar{x}{\horstTypeLValue}
    \horstFreeVar{gtN}{\horstTypeHomInit{\horstTypeLValue}{\horstOpAppgs{}{}}}
    \horstFreeVar{gt}{\horstTypeHomInit{\horstTypeLValue}{\horstOpAppgs{}{}}}
    \horstPremise{\horstPredAppMState{\horstParVarfid,\horstParVarpc}{\horstFreeVarctx,\horstCONS{\horstFreeVarx}{\horstFreeVarst},\horstFreeVargt,\horstFreeVarlt,\horstFreeVarmem,\horstFreeVartbl,\horstFreeVaratN,\horstFreeVargtN,\horstFreeVarmemN}}
    \horstPremise{\horstOpAppabsneq{}{\horstOpAppvalueOf{}{\horstFreeVarx},\horstOpAppmkConst{0}{}}}
    \horstConclusion{\horstPredAppMState{\horstParVarfid,\horstADD{\horstParVarpc}{1}}{\horstOpAppraiseCtxTo{\horstParVarpc}{\horstFreeVarctx,\horstOpApplabelOf{}{\horstFreeVarx}},\horstFreeVarst,\horstFreeVargt,\horstFreeVarlt,\horstFreeVarmem,\horstFreeVartbl,\horstFreeVaratN,\horstFreeVargtN,\horstFreeVarmemN}}
  \end{horstClause}
  \begin{horstClause}
    \horstFreeVar{memN}{\horstTypeMemory}
    \horstFreeVar{atN}{\horstTypeHomInit{\horstTypeLValue}{\horstOpAppas{\horstParVarfid}{}}}
    \horstFreeVar{st}{\horstTypeHomInit{\horstTypeLValue}{\horstSUB{\horstOpAppss{\horstParVarfid,\horstParVarpc}{}}{1}}}
    \horstFreeVar{tbl}{\horstTypeTable}
    \horstFreeVar{ctx}{\horstTypeContext}
    \horstFreeVar{lt}{\horstTypeHomInit{\horstTypeLValue}{\horstOpAppls{\horstParVarfid}{}}}
    \horstFreeVar{mem}{\horstTypeMemory}
    \horstFreeVar{x}{\horstTypeLValue}
    \horstFreeVar{gtN}{\horstTypeHomInit{\horstTypeLValue}{\horstOpAppgs{}{}}}
    \horstFreeVar{gt}{\horstTypeHomInit{\horstTypeLValue}{\horstOpAppgs{}{}}}
    \horstPremise{\horstPredAppMState{\horstParVarfid,\horstParVarpc}{\horstFreeVarctx,\horstCONS{\horstFreeVarx}{\horstFreeVarst},\horstFreeVargt,\horstFreeVarlt,\horstFreeVarmem,\horstFreeVartbl,\horstFreeVaratN,\horstFreeVargtN,\horstFreeVarmemN}}
    \horstPremise{\horstOpAppabseq{}{\horstOpAppvalueOf{}{\horstFreeVarx},\horstOpAppmkConst{0}{}}}
    \horstConclusion{\horstPredAppMState{\horstParVarfid,\horstParVarelse}{\horstOpAppraiseCtxTo{\horstParVarpc}{\horstFreeVarctx,\horstOpApplabelOf{}{\horstFreeVarx}},\horstFreeVarst,\horstFreeVargt,\horstFreeVarlt,\horstFreeVarmem,\horstFreeVartbl,\horstFreeVaratN,\horstFreeVargtN,\horstFreeVarmemN}}
  \end{horstClause}
  \begin{horstClause}
    \horstFreeVar{memN}{\horstTypeMemory}
    \horstFreeVar{atN}{\horstTypeHomInit{\horstTypeLValue}{\horstOpAppas{\horstParVarfid}{}}}
    \horstFreeVar{st}{\horstTypeHomInit{\horstTypeLValue}{\horstSUB{\horstOpAppss{\horstParVarfid,\horstParVarpc}{}}{1}}}
    \horstFreeVar{p}{\horstTypeLabel}
    \horstFreeVar{tbl}{\horstTypeTable}
    \horstFreeVar{ctx}{\horstTypeContext}
    \horstFreeVar{lt}{\horstTypeHomInit{\horstTypeLValue}{\horstOpAppls{\horstParVarfid}{}}}
    \horstFreeVar{mem}{\horstTypeMemory}
    \horstFreeVar{x}{\horstTypeLValue}
    \horstFreeVar{gtN}{\horstTypeHomInit{\horstTypeLValue}{\horstOpAppgs{}{}}}
    \horstFreeVar{gt}{\horstTypeHomInit{\horstTypeLValue}{\horstOpAppgs{}{}}}
    \horstFreeVar{from}{\horstTypeint}
    \horstPremise{\horstPredAppMState{\horstParVarfid,\horstParVarpc}{\horstFreeVarctx,\horstCONS{\horstFreeVarx}{\horstFreeVarst},\horstFreeVargt,\horstFreeVarlt,\horstFreeVarmem,\horstFreeVartbl,\horstFreeVaratN,\horstFreeVargtN,\horstFreeVarmemN}}
    \horstPremise{\horstEQ{\horstOpAppraiseCtxTo{\horstParVarpc}{\horstFreeVarctx,\horstOpApplabelOf{}{\horstFreeVarx}}}{\horstConstructorAppCtx{\horstFreeVarp,\horstConstructorAppIllegal{},\horstFreeVarfrom}}}
    \horstConclusion{\horstPredAppScopeExtend{\horstParVarfid}{\horstFreeVarfrom,\horstParVarend}}
  \end{horstClause}
\end{horstRule}
\begin{horstRule}{loopRule}
  \horstParVar{fid}{\horstTypeint}
  \horstParVar{pc}{\horstTypeint}
  \horstSelectorFunctionInvocation{\horstSelectorFunctionAppfunctionIds{\horstParVarfid}{},\horstSelectorFunctionApploopsForFunctionId{\horstParVarpc}{\horstParVarfid}}
  \begin{horstClause}
    \horstFreeVar{memN}{\horstTypeMemory}
    \horstFreeVar{atN}{\horstTypeHomInit{\horstTypeLValue}{\horstOpAppas{\horstParVarfid}{}}}
    \horstFreeVar{st}{\horstTypeHomInit{\horstTypeLValue}{\horstOpAppss{\horstParVarfid,\horstParVarpc}{}}}
    \horstFreeVar{tbl}{\horstTypeTable}
    \horstFreeVar{ngt}{\horstTypeHomInit{\horstTypeLValue}{\horstOpAppgs{}{}}}
    \horstFreeVar{ctx}{\horstTypeContext}
    \horstFreeVar{lt}{\horstTypeHomInit{\horstTypeLValue}{\horstOpAppls{\horstParVarfid}{}}}
    \horstFreeVar{mem}{\horstTypeMemory}
    \horstFreeVar{nlt}{\horstTypeHomInit{\horstTypeLValue}{\horstOpAppls{\horstParVarfid}{}}}
    \horstFreeVar{gtN}{\horstTypeHomInit{\horstTypeLValue}{\horstOpAppgs{}{}}}
    \horstFreeVar{gt}{\horstTypeHomInit{\horstTypeLValue}{\horstOpAppgs{}{}}}
    \horstFreeVar{nmem}{\horstTypeMemory}
    \horstPremise{\horstPredAppMState{\horstParVarfid,\horstParVarpc}{\horstFreeVarctx,\horstFreeVarst,\horstFreeVargt,\horstFreeVarlt,\horstFreeVarmem,\horstFreeVartbl,\horstFreeVaratN,\horstFreeVargtN,\horstFreeVarmemN}}
    \horstPremise{\horstOpAppoverApproximateLoopGlobals{\horstParVarfid,\horstParVarpc}{\horstFreeVargt,\horstFreeVarngt}}
    \horstPremise{\horstOpAppoverApproximateLoopLocals{\horstParVarfid,\horstParVarpc}{\horstFreeVarlt,\horstFreeVarnlt}}
    \horstPremise{\horstOpAppoverApproximateLoopMemory{\horstParVarfid,\horstParVarpc}{\horstFreeVarmem,\horstFreeVarnmem}}
    \horstConclusion{\horstPredAppMState{\horstParVarfid,\horstADD{\horstParVarpc}{1}}{\horstFreeVarctx,\horstFreeVarst,\horstFreeVarngt,\horstFreeVarnlt,\horstFreeVarnmem,\horstFreeVartbl,\horstFreeVaratN,\horstFreeVargtN,\horstFreeVarmemN}}
  \end{horstClause}
\end{horstRule}
\begin{horstRule}{localTeeRule}
  \horstParVar{fid}{\horstTypeint}
  \horstParVar{pc}{\horstTypeint}
  \horstParVar{idx}{\horstTypeint}
  \horstSelectorFunctionInvocation{\horstSelectorFunctionAppfunctionIds{\horstParVarfid}{},\horstSelectorFunctionApppcsForFunctionIdAndOpcode{\horstParVarpc}{\horstParVarfid,\horstConstLOCALTEE},\horstSelectorFunctionAppimmediateForFunctionIdAndPc{\horstParVaridx}{\horstParVarfid,\horstParVarpc}}
  \begin{horstClause}
    \horstFreeVar{memN}{\horstTypeMemory}
    \horstFreeVar{atN}{\horstTypeHomInit{\horstTypeLValue}{\horstOpAppas{\horstParVarfid}{}}}
    \horstFreeVar{st}{\horstTypeHomInit{\horstTypeLValue}{\horstSUB{\horstOpAppss{\horstParVarfid,\horstParVarpc}{}}{1}}}
    \horstFreeVar{tbl}{\horstTypeTable}
    \horstFreeVar{ctx}{\horstTypeContext}
    \horstFreeVar{lt}{\horstTypeHomInit{\horstTypeLValue}{\horstOpAppls{\horstParVarfid}{}}}
    \horstFreeVar{mem}{\horstTypeMemory}
    \horstFreeVar{x}{\horstTypeLValue}
    \horstFreeVar{gtN}{\horstTypeHomInit{\horstTypeLValue}{\horstOpAppgs{}{}}}
    \horstFreeVar{gt}{\horstTypeHomInit{\horstTypeLValue}{\horstOpAppgs{}{}}}
    \horstPremise{\horstPredAppMState{\horstParVarfid,\horstParVarpc}{\horstFreeVarctx,\horstCONS{\horstFreeVarx}{\horstFreeVarst},\horstFreeVargt,\horstFreeVarlt,\horstFreeVarmem,\horstFreeVartbl,\horstFreeVaratN,\horstFreeVargtN,\horstFreeVarmemN}}
    \horstConclusion{\horstPredAppMState{\horstParVarfid,\horstADD{\horstParVarpc}{1}}{\horstFreeVarctx,\horstCONS{\horstFreeVarx}{\horstFreeVarst},\horstFreeVargt,\horstOpAppset{\horstOpAppls{\horstParVarfid}{},\horstParVaridx}{\horstOpAppraiseTo{}{\horstFreeVarx,\horstOpApplabelOfCtx{}{\horstFreeVarctx}},\horstFreeVarlt},\horstFreeVarmem,\horstFreeVartbl,\horstFreeVaratN,\horstFreeVargtN,\horstFreeVarmemN}}
  \end{horstClause}
\end{horstRule}
\begin{horstRule}{callRule}
  \horstParVar{fid}{\horstTypeint}
  \horstParVar{pc}{\horstTypeint}
  \horstParVar{cid}{\horstTypeint}
  \horstSelectorFunctionInvocation{\horstSelectorFunctionAppfunctionIds{\horstParVarfid}{},\horstSelectorFunctionApppcsForFunctionIdAndOpcode{\horstParVarpc}{\horstParVarfid,\horstConstCALLFUNCTION},\horstSelectorFunctionAppimmediateForFunctionIdAndPc{\horstParVarcid}{\horstParVarfid,\horstParVarpc}}
  \begin{horstClause}
    \horstFreeVar{memN}{\horstTypeMemory}
    \horstFreeVar{atN}{\horstTypeHomInit{\horstTypeLValue}{\horstOpAppas{\horstParVarfid}{}}}
    \horstFreeVar{st}{\horstTypeHomInit{\horstTypeLValue}{\horstOpAppss{\horstParVarfid,\horstParVarpc}{}}}
    \horstFreeVar{tbl}{\horstTypeTable}
    \horstFreeVar{ngt}{\horstTypeHomInit{\horstTypeLValue}{\horstOpAppgs{}{}}}
    \horstFreeVar{at}{\horstTypeHomInit{\horstTypeLValue}{\horstOpAppas{\horstParVarcid}{}}}
    \horstFreeVar{ctx}{\horstTypeContext}
    \horstFreeVar{lt}{\horstTypeHomInit{\horstTypeLValue}{\horstOpAppls{\horstParVarfid}{}}}
    \horstFreeVar{mem}{\horstTypeMemory}
    \horstFreeVar{gtN}{\horstTypeHomInit{\horstTypeLValue}{\horstOpAppgs{}{}}}
    \horstFreeVar{gt}{\horstTypeHomInit{\horstTypeLValue}{\horstOpAppgs{}{}}}
    \horstFreeVar{nmem}{\horstTypeMemory}
    \horstPremise{\horstPredAppMState{\horstParVarfid,\horstParVarpc}{\horstFreeVarctx,\horstFreeVarst,\horstFreeVargt,\horstFreeVarlt,\horstFreeVarmem,\horstFreeVartbl,\horstFreeVaratN,\horstFreeVargtN,\horstFreeVarmemN}}
    \horstPremise{\horstOpAppoverApproximateCallArguments{\horstParVarcid}{\horstOpAppreverse{\horstOpAppas{\horstParVarcid}{}}{\horstSLICE{\horstFreeVarst}{}{\horstOpAppas{\horstParVarcid}{}}},\horstFreeVarat}}
    \horstPremise{\horstOpAppoverApproximateCallGlobals{\horstParVarcid}{\horstFreeVargt,\horstFreeVarngt}}
    \horstPremise{\horstOpAppoverApproximateCallMemory{\horstParVarcid}{\horstFreeVarmem,\horstFreeVarnmem}}
    \horstConclusion{\horstPredAppMState{\horstParVarcid,0}{\horstOpAppmkCtx{}{\horstOpAppperspectiveOfCtx{}{\horstFreeVarctx},\horstOpApplabelOfCtx{}{\horstFreeVarctx}},\horstTUPINIT{},\horstFreeVarngt,\horstCONCAT{\horstFreeVarat}{\horstHOMINIT{\horstOpAppmkLConst{0}{}}{\horstSUB{\horstOpAppls{\horstParVarcid}{}}{\horstOpAppas{\horstParVarcid}{}}}},\horstFreeVarnmem,\horstFreeVartbl,\horstFreeVarat,\horstFreeVarngt,\horstFreeVarnmem}}
  \end{horstClause}
  \begin{horstClause}
    \horstFreeVar{atN}{\horstTypeHomInit{\horstTypeLValue}{\horstOpAppas{\horstParVarfid}{}}}
    \horstFreeVar{st}{\horstTypeHomInit{\horstTypeLValue}{\horstOpAppss{\horstParVarfid,\horstParVarpc}{}}}
    \horstFreeVar{rt}{\horstTypeHomInit{\horstTypeLValue}{\horstOpApprs{\horstParVarcid}{}}}
    \horstFreeVar{tbl}{\horstTypeTable}
    \horstFreeVar{ngt}{\horstTypeHomInit{\horstTypeLValue}{\horstOpAppgs{}{}}}
    \horstFreeVar{rtbl}{\horstTypeTable}
    \horstFreeVar{lt}{\horstTypeHomInit{\horstTypeLValue}{\horstOpAppls{\horstParVarfid}{}}}
    \horstFreeVar{cl}{\horstTypeFlowLabel}
    \horstFreeVar{mem}{\horstTypeMemory}
    \horstFreeVar{rfrom}{\horstTypeint}
    \horstFreeVar{gt}{\horstTypeHomInit{\horstTypeLValue}{\horstOpAppgs{}{}}}
    \horstFreeVar{nmem}{\horstTypeMemory}
    \horstFreeVar{rmem}{\horstTypeMemory}
    \horstFreeVar{memN}{\horstTypeMemory}
    \horstFreeVar{p}{\horstTypeLabel}
    \horstFreeVar{at}{\horstTypeHomInit{\horstTypeLValue}{\horstOpAppas{\horstParVarcid}{}}}
    \horstFreeVar{ctx}{\horstTypeContext}
    \horstFreeVar{rl}{\horstTypeFlowLabel}
    \horstFreeVar{gtN}{\horstTypeHomInit{\horstTypeLValue}{\horstOpAppgs{}{}}}
    \horstFreeVar{rgt}{\horstTypeHomInit{\horstTypeLValue}{\horstOpAppgs{}{}}}
    \horstPremise{\horstPredAppMState{\horstParVarfid,\horstParVarpc}{\horstFreeVarctx,\horstFreeVarst,\horstFreeVargt,\horstFreeVarlt,\horstFreeVarmem,\horstFreeVartbl,\horstFreeVaratN,\horstFreeVargtN,\horstFreeVarmemN}}
    \horstPremise{\horstOpAppoverApproximateCallArguments{\horstParVarcid}{\horstOpAppreverse{\horstOpAppas{\horstParVarcid}{}}{\horstSLICE{\horstFreeVarst}{}{\horstOpAppas{\horstParVarcid}{}}},\horstFreeVarat}}
    \horstPremise{\horstOpAppoverApproximateCallGlobals{\horstParVarcid}{\horstFreeVargt,\horstFreeVarngt}}
    \horstPremise{\horstOpAppoverApproximateCallMemory{\horstParVarcid}{\horstFreeVarmem,\horstFreeVarnmem}}
    \horstPremise{\horstEQ{\horstFreeVarp}{\horstOpAppperspectiveOfCtx{}{\horstFreeVarctx}}}
    \horstPremise{\horstEQ{\horstFreeVarcl}{\horstOpApplabelOfCtx{}{\horstFreeVarctx}}}
    \horstPremise{\horstPredAppReturn{\horstParVarcid}{\horstConstructorAppCtx{\horstFreeVarp,\horstFreeVarrl,\horstFreeVarrfrom},\horstFreeVarrt,\horstFreeVarrgt,\horstFreeVarrmem,\horstFreeVarrtbl,\horstFreeVarat,\horstFreeVarngt,\horstFreeVarnmem}}
    \horstPremise{\horstOR{\horstAND{\horstEQ{\horstFreeVarcl}{\horstConstructorAppLegal{}}}{\horstLE{0}{\horstFreeVarrfrom}}}{\horstAND{\horstEQ{\horstFreeVarcl}{\horstConstructorAppIllegal{}}}{\horstLT{\horstFreeVarrfrom}{0}}}}
    \horstConclusion{\horstPredAppMState{\horstParVarfid,\horstADD{\horstParVarpc}{1}}{\horstFreeVarctx,\horstCONCAT{\horstFreeVarrt}{\horstSLICE{\horstFreeVarst}{\horstOpAppas{\horstParVarcid}{}}{}},\horstFreeVarrgt,\horstFreeVarlt,\horstFreeVarrmem,\horstFreeVarrtbl,\horstFreeVaratN,\horstFreeVargtN,\horstFreeVarmemN}}
  \end{horstClause}
\end{horstRule}
\begin{horstRule}{callIndirectRule}
  \horstParVar{fid}{\horstTypeint}
  \horstParVar{pc}{\horstTypeint}
  \horstParVar{idx}{\horstTypeint}
  \horstParVar{cid}{\horstTypeint}
  \horstSelectorFunctionInvocation{\horstSelectorFunctionAppfunctionIds{\horstParVarfid}{},\horstSelectorFunctionApppcsForFunctionIdAndOpcode{\horstParVarpc}{\horstParVarfid,\horstConstCALLINDIRECT},\horstSelectorFunctionApppossibleCallTargets{\horstParVaridx,\horstParVarcid}{\horstParVarfid,\horstParVarpc}}
  \begin{horstClause}
    \horstFreeVar{atN}{\horstTypeHomInit{\horstTypeLValue}{\horstOpAppas{\horstParVarfid}{}}}
    \horstFreeVar{st}{\horstTypeHomInit{\horstTypeLValue}{\horstSUB{\horstOpAppss{\horstParVarfid,\horstParVarpc}{}}{1}}}
    \horstFreeVar{ngt}{\horstTypeHomInit{\horstTypeLValue}{\horstOpAppgs{}{}}}
    \horstFreeVar{lt}{\horstTypeHomInit{\horstTypeLValue}{\horstOpAppls{\horstParVarfid}{}}}
    \horstFreeVar{cl}{\horstTypeFlowLabel}
    \horstFreeVar{mem}{\horstTypeMemory}
    \horstFreeVar{gt}{\horstTypeHomInit{\horstTypeLValue}{\horstOpAppgs{}{}}}
    \horstFreeVar{nmem}{\horstTypeMemory}
    \horstFreeVar{memN}{\horstTypeMemory}
    \horstFreeVar{p}{\horstTypeLabel}
    \horstFreeVar{at}{\horstTypeHomInit{\horstTypeLValue}{\horstOpAppas{\horstParVarcid}{}}}
    \horstFreeVar{ctx}{\horstTypeContext}
    \horstFreeVar{x}{\horstTypeLValue}
    \horstFreeVar{gtN}{\horstTypeHomInit{\horstTypeLValue}{\horstOpAppgs{}{}}}
    \horstFreeVar{ltbl}{\horstTypeFlowLabel}
    \horstPremise{\horstPredAppMState{\horstParVarfid,\horstParVarpc}{\horstFreeVarctx,\horstCONS{\horstFreeVarx}{\horstFreeVarst},\horstFreeVargt,\horstFreeVarlt,\horstFreeVarmem,\horstConstructorAppTbl{\horstConstructorAppTblPrecise{},\horstFreeVarltbl},\horstFreeVaratN,\horstFreeVargtN,\horstFreeVarmemN}}
    \horstPremise{\horstOpAppabseq{}{\horstOpAppvalueOf{}{\horstFreeVarx},\horstOpAppmkConst{\horstParVaridx}{}}}
    \horstPremise{\horstOpAppoverApproximateCallArguments{\horstParVarcid}{\horstOpAppreverse{\horstOpAppas{\horstParVarcid}{}}{\horstSLICE{\horstFreeVarst}{}{\horstOpAppas{\horstParVarcid}{}}},\horstFreeVarat}}
    \horstPremise{\horstOpAppoverApproximateCallGlobals{\horstParVarcid}{\horstFreeVargt,\horstFreeVarngt}}
    \horstPremise{\horstOpAppoverApproximateCallMemory{\horstParVarcid}{\horstFreeVarmem,\horstFreeVarnmem}}
    \horstPremise{\horstEQ{\horstFreeVarp}{\horstOpAppperspectiveOfCtx{}{\horstFreeVarctx}}}
    \horstPremise{\horstEQ{\horstFreeVarcl}{\horstOpAppflub{3}{\horstTUPINIT{\horstOpApplabelOfCtx{}{\horstFreeVarctx},\horstOpApplabelOf{}{\horstFreeVarx},\horstFreeVarltbl}}}}
    \horstConclusion{\horstPredAppMState{\horstParVarcid,0}{\horstOpAppmkCtx{}{\horstFreeVarp,\horstFreeVarcl},\horstTUPINIT{},\horstFreeVarngt,\horstCONCAT{\horstFreeVarat}{\horstHOMINIT{\horstOpAppmkLConst{0}{}}{\horstSUB{\horstOpAppls{\horstParVarcid}{}}{\horstOpAppas{\horstParVarcid}{}}}},\horstFreeVarnmem,\horstConstructorAppTbl{\horstConstructorAppTblPrecise{},\horstFreeVarltbl},\horstFreeVarat,\horstFreeVarngt,\horstFreeVarnmem}}
  \end{horstClause}
  \begin{horstClause}
    \horstFreeVar{atN}{\horstTypeHomInit{\horstTypeLValue}{\horstOpAppas{\horstParVarfid}{}}}
    \horstFreeVar{st}{\horstTypeHomInit{\horstTypeLValue}{\horstSUB{\horstOpAppss{\horstParVarfid,\horstParVarpc}{}}{1}}}
    \horstFreeVar{rt}{\horstTypeHomInit{\horstTypeLValue}{\horstOpApprs{\horstParVarcid}{}}}
    \horstFreeVar{ngt}{\horstTypeHomInit{\horstTypeLValue}{\horstOpAppgs{}{}}}
    \horstFreeVar{rtbl}{\horstTypeTable}
    \horstFreeVar{lt}{\horstTypeHomInit{\horstTypeLValue}{\horstOpAppls{\horstParVarfid}{}}}
    \horstFreeVar{cl}{\horstTypeFlowLabel}
    \horstFreeVar{mem}{\horstTypeMemory}
    \horstFreeVar{rfrom}{\horstTypeint}
    \horstFreeVar{gt}{\horstTypeHomInit{\horstTypeLValue}{\horstOpAppgs{}{}}}
    \horstFreeVar{nmem}{\horstTypeMemory}
    \horstFreeVar{rmem}{\horstTypeMemory}
    \horstFreeVar{memN}{\horstTypeMemory}
    \horstFreeVar{p}{\horstTypeLabel}
    \horstFreeVar{at}{\horstTypeHomInit{\horstTypeLValue}{\horstOpAppas{\horstParVarcid}{}}}
    \horstFreeVar{ctx}{\horstTypeContext}
    \horstFreeVar{x}{\horstTypeLValue}
    \horstFreeVar{gtN}{\horstTypeHomInit{\horstTypeLValue}{\horstOpAppgs{}{}}}
    \horstFreeVar{ltbl}{\horstTypeFlowLabel}
    \horstFreeVar{rgt}{\horstTypeHomInit{\horstTypeLValue}{\horstOpAppgs{}{}}}
    \horstPremise{\horstPredAppMState{\horstParVarfid,\horstParVarpc}{\horstFreeVarctx,\horstCONS{\horstFreeVarx}{\horstFreeVarst},\horstFreeVargt,\horstFreeVarlt,\horstFreeVarmem,\horstConstructorAppTbl{\horstConstructorAppTblPrecise{},\horstFreeVarltbl},\horstFreeVaratN,\horstFreeVargtN,\horstFreeVarmemN}}
    \horstPremise{\horstEQ{\horstOpApplabelOfCtx{}{\horstFreeVarctx}}{\horstConstructorAppLegal{}}}
    \horstPremise{\horstEQ{\horstOpAppflub{2}{\horstTUPINIT{\horstOpApplabelOf{}{\horstFreeVarx},\horstFreeVarltbl}}}{\horstConstructorAppIllegal{}}}
    \horstPremise{\horstOpAppabseq{}{\horstOpAppvalueOf{}{\horstFreeVarx},\horstOpAppmkConst{\horstParVaridx}{}}}
    \horstPremise{\horstOpAppoverApproximateCallArguments{\horstParVarcid}{\horstOpAppreverse{\horstOpAppas{\horstParVarcid}{}}{\horstSLICE{\horstFreeVarst}{}{\horstOpAppas{\horstParVarcid}{}}},\horstFreeVarat}}
    \horstPremise{\horstOpAppoverApproximateCallGlobals{\horstParVarcid}{\horstFreeVargt,\horstFreeVarngt}}
    \horstPremise{\horstOpAppoverApproximateCallMemory{\horstParVarcid}{\horstFreeVarmem,\horstFreeVarnmem}}
    \horstPremise{\horstEQ{\horstFreeVarp}{\horstOpAppperspectiveOfCtx{}{\horstFreeVarctx}}}
    \horstPremise{\horstPredAppReturn{\horstParVarcid}{\horstConstructorAppCtx{\horstFreeVarp,\horstConstructorAppIllegal{},\horstFreeVarrfrom},\horstFreeVarrt,\horstFreeVarrgt,\horstFreeVarrmem,\horstFreeVarrtbl,\horstFreeVarat,\horstFreeVarngt,\horstFreeVarnmem}}
    \horstPremise{\horstLT{\horstFreeVarrfrom}{0}}
    \horstConclusion{\horstPredAppMStateToJoin{\horstParVarfid,\horstADD{\horstParVarpc}{1}}{\horstFreeVarctx,\horstCONCAT{\horstFreeVarrt}{\horstSLICE{\horstFreeVarst}{\horstOpAppas{\horstParVarcid}{}}{}},\horstFreeVarrgt,\horstFreeVarlt,\horstFreeVarrmem,\horstFreeVarrtbl,\horstFreeVaratN,\horstFreeVargtN,\horstFreeVarmemN}}
  \end{horstClause}
  \begin{horstClause}
    \horstFreeVar{atN}{\horstTypeHomInit{\horstTypeLValue}{\horstOpAppas{\horstParVarfid}{}}}
    \horstFreeVar{st}{\horstTypeHomInit{\horstTypeLValue}{\horstSUB{\horstOpAppss{\horstParVarfid,\horstParVarpc}{}}{1}}}
    \horstFreeVar{rt}{\horstTypeHomInit{\horstTypeLValue}{\horstOpApprs{\horstParVarcid}{}}}
    \horstFreeVar{ngt}{\horstTypeHomInit{\horstTypeLValue}{\horstOpAppgs{}{}}}
    \horstFreeVar{rtbl}{\horstTypeTable}
    \horstFreeVar{lt}{\horstTypeHomInit{\horstTypeLValue}{\horstOpAppls{\horstParVarfid}{}}}
    \horstFreeVar{cl}{\horstTypeFlowLabel}
    \horstFreeVar{mem}{\horstTypeMemory}
    \horstFreeVar{rfrom}{\horstTypeint}
    \horstFreeVar{gt}{\horstTypeHomInit{\horstTypeLValue}{\horstOpAppgs{}{}}}
    \horstFreeVar{nmem}{\horstTypeMemory}
    \horstFreeVar{rmem}{\horstTypeMemory}
    \horstFreeVar{memN}{\horstTypeMemory}
    \horstFreeVar{p}{\horstTypeLabel}
    \horstFreeVar{at}{\horstTypeHomInit{\horstTypeLValue}{\horstOpAppas{\horstParVarcid}{}}}
    \horstFreeVar{ctx}{\horstTypeContext}
    \horstFreeVar{rl}{\horstTypeFlowLabel}
    \horstFreeVar{x}{\horstTypeLValue}
    \horstFreeVar{gtN}{\horstTypeHomInit{\horstTypeLValue}{\horstOpAppgs{}{}}}
    \horstFreeVar{ltbl}{\horstTypeFlowLabel}
    \horstFreeVar{rgt}{\horstTypeHomInit{\horstTypeLValue}{\horstOpAppgs{}{}}}
    \horstPremise{\horstPredAppMState{\horstParVarfid,\horstParVarpc}{\horstFreeVarctx,\horstCONS{\horstFreeVarx}{\horstFreeVarst},\horstFreeVargt,\horstFreeVarlt,\horstFreeVarmem,\horstConstructorAppTbl{\horstConstructorAppTblPrecise{},\horstFreeVarltbl},\horstFreeVaratN,\horstFreeVargtN,\horstFreeVarmemN}}
    \horstPremise{\horstOpAppabseq{}{\horstOpAppvalueOf{}{\horstFreeVarx},\horstOpAppmkConst{\horstParVaridx}{}}}
    \horstPremise{\horstOpAppoverApproximateCallArguments{\horstParVarcid}{\horstOpAppreverse{\horstOpAppas{\horstParVarcid}{}}{\horstSLICE{\horstFreeVarst}{}{\horstOpAppas{\horstParVarcid}{}}},\horstFreeVarat}}
    \horstPremise{\horstOpAppoverApproximateCallGlobals{\horstParVarcid}{\horstFreeVargt,\horstFreeVarngt}}
    \horstPremise{\horstOpAppoverApproximateCallMemory{\horstParVarcid}{\horstFreeVarmem,\horstFreeVarnmem}}
    \horstPremise{\horstEQ{\horstFreeVarp}{\horstOpAppperspectiveOfCtx{}{\horstFreeVarctx}}}
    \horstPremise{\horstEQ{\horstOpApplabelOfCtx{}{\horstFreeVarctx}}{\horstOpAppflub{2}{\horstTUPINIT{\horstOpApplabelOf{}{\horstFreeVarx},\horstFreeVarltbl}}}}
    \horstPremise{\horstEQ{\horstFreeVarcl}{\horstOpApplabelOfCtx{}{\horstFreeVarctx}}}
    \horstPremise{\horstPredAppReturn{\horstParVarcid}{\horstConstructorAppCtx{\horstFreeVarp,\horstFreeVarrl,\horstFreeVarrfrom},\horstFreeVarrt,\horstFreeVarrgt,\horstFreeVarrmem,\horstFreeVarrtbl,\horstFreeVarat,\horstFreeVarngt,\horstFreeVarnmem}}
    \horstPremise{\horstOR{\horstAND{\horstEQ{\horstFreeVarcl}{\horstConstructorAppLegal{}}}{\horstLE{0}{\horstFreeVarrfrom}}}{\horstAND{\horstEQ{\horstFreeVarcl}{\horstConstructorAppIllegal{}}}{\horstLT{\horstFreeVarrfrom}{0}}}}
    \horstConclusion{\horstPredAppMState{\horstParVarfid,\horstADD{\horstParVarpc}{1}}{\horstFreeVarctx,\horstCONCAT{\horstFreeVarrt}{\horstSLICE{\horstFreeVarst}{\horstOpAppas{\horstParVarcid}{}}{}},\horstFreeVarrgt,\horstFreeVarlt,\horstFreeVarrmem,\horstFreeVarrtbl,\horstFreeVaratN,\horstFreeVargtN,\horstFreeVarmemN}}
  \end{horstClause}
\end{horstRule}
\begin{horstRule}{brTableRule}
  \horstParVar{fid}{\horstTypeint}
  \horstParVar{pc}{\horstTypeint}
  \horstParVar{sz}{\horstTypeint}
  \horstParVar{n}{\horstTypeint}
  \horstParVar{idx}{\horstTypeint}
  \horstParVar{br}{\horstTypeint}
  \horstSelectorFunctionInvocation{\horstSelectorFunctionAppfunctionIds{\horstParVarfid}{},\horstSelectorFunctionApppcsForFunctionIdAndOpcode{\horstParVarpc}{\horstParVarfid,\horstConstBRTABLE},\horstSelectorFunctionAppsizeOfBreakTable{\horstParVarsz}{\horstParVarfid,\horstParVarpc},\horstSelectorFunctionAppgetAmountOfReturnValuesInBlock{\horstParVarn}{\horstParVarfid,\horstParVarpc},\horstSelectorFunctionAppinterval{\horstParVaridx}{0,\horstSUB{\horstParVarsz}{1}},\horstSelectorFunctionAppbreakTableDestinations{\horstParVarbr}{\horstParVarfid,\horstParVarpc,\horstParVaridx}}
  \begin{horstClause}
    \horstFreeVar{memN}{\horstTypeMemory}
    \horstFreeVar{atN}{\horstTypeHomInit{\horstTypeLValue}{\horstOpAppas{\horstParVarfid}{}}}
    \horstFreeVar{st}{\horstTypeHomInit{\horstTypeLValue}{\horstSUB{\horstOpAppss{\horstParVarfid,\horstParVarpc}{}}{1}}}
    \horstFreeVar{tbl}{\horstTypeTable}
    \horstFreeVar{ctx}{\horstTypeContext}
    \horstFreeVar{lt}{\horstTypeHomInit{\horstTypeLValue}{\horstOpAppls{\horstParVarfid}{}}}
    \horstFreeVar{mem}{\horstTypeMemory}
    \horstFreeVar{x}{\horstTypeLValue}
    \horstFreeVar{gtN}{\horstTypeHomInit{\horstTypeLValue}{\horstOpAppgs{}{}}}
    \horstFreeVar{gt}{\horstTypeHomInit{\horstTypeLValue}{\horstOpAppgs{}{}}}
    \horstFreeVar{nst}{\horstTypeHomInit{\horstTypeLValue}{\horstOpAppss{\horstParVarfid,\horstParVarbr}{}}}
    \horstPremise{\horstPredAppMState{\horstParVarfid,\horstParVarpc}{\horstFreeVarctx,\horstCONS{\horstFreeVarx}{\horstFreeVarst},\horstFreeVargt,\horstFreeVarlt,\horstFreeVarmem,\horstFreeVartbl,\horstFreeVaratN,\horstFreeVargtN,\horstFreeVarmemN}}
    \horstPremise{\horstOpAppabseq{}{\horstOpAppvalueOf{}{\horstFreeVarx},\horstOpAppmkConst{\horstParVaridx}{}}}
    \horstConclusion{\horstPredAppMState{\horstParVarfid,\horstParVarbr}{\horstOpAppraiseCtxTo{\horstParVarpc}{\horstFreeVarctx,\horstOpApplabelOf{}{\horstFreeVarx}},\horstOpAppdrop{\horstSUB{\horstOpAppss{\horstParVarfid,\horstParVarpc}{}}{1},\horstParVarn,\horstSUB{\horstSUB{\horstOpAppss{\horstParVarfid,\horstParVarpc}{}}{1}}{\horstOpAppss{\horstParVarfid,\horstParVarbr}{}}}{\horstFreeVarst},\horstFreeVargt,\horstFreeVarlt,\horstFreeVarmem,\horstFreeVartbl,\horstFreeVaratN,\horstFreeVargtN,\horstFreeVarmemN}}
  \end{horstClause}
  \begin{horstClause}
    \horstFreeVar{memN}{\horstTypeMemory}
    \horstFreeVar{atN}{\horstTypeHomInit{\horstTypeLValue}{\horstOpAppas{\horstParVarfid}{}}}
    \horstFreeVar{st}{\horstTypeHomInit{\horstTypeLValue}{\horstSUB{\horstOpAppss{\horstParVarfid,\horstParVarpc}{}}{1}}}
    \horstFreeVar{p}{\horstTypeLabel}
    \horstFreeVar{tbl}{\horstTypeTable}
    \horstFreeVar{ctx}{\horstTypeContext}
    \horstFreeVar{lt}{\horstTypeHomInit{\horstTypeLValue}{\horstOpAppls{\horstParVarfid}{}}}
    \horstFreeVar{mem}{\horstTypeMemory}
    \horstFreeVar{x}{\horstTypeLValue}
    \horstFreeVar{gtN}{\horstTypeHomInit{\horstTypeLValue}{\horstOpAppgs{}{}}}
    \horstFreeVar{gt}{\horstTypeHomInit{\horstTypeLValue}{\horstOpAppgs{}{}}}
    \horstFreeVar{from}{\horstTypeint}
    \horstPremise{\horstPredAppMState{\horstParVarfid,\horstParVarpc}{\horstFreeVarctx,\horstCONS{\horstFreeVarx}{\horstFreeVarst},\horstFreeVargt,\horstFreeVarlt,\horstFreeVarmem,\horstFreeVartbl,\horstFreeVaratN,\horstFreeVargtN,\horstFreeVarmemN}}
    \horstPremise{\horstOpAppabseq{}{\horstOpAppvalueOf{}{\horstFreeVarx},\horstOpAppmkConst{\horstParVaridx}{}}}
    \horstPremise{\horstEQ{\horstOpAppraiseCtxTo{\horstParVarpc}{\horstFreeVarctx,\horstOpApplabelOf{}{\horstFreeVarx}}}{\horstConstructorAppCtx{\horstFreeVarp,\horstConstructorAppIllegal{},\horstFreeVarfrom}}}
    \horstConclusion{\horstPredAppScopeExtend{\horstParVarfid}{\horstFreeVarfrom,\horstOpAppmax{}{\horstParVarpc,\horstParVarbr}}}
  \end{horstClause}
\end{horstRule}
\begin{horstRule}{selectRule}
  \horstParVar{fid}{\horstTypeint}
  \horstParVar{pc}{\horstTypeint}
  \horstSelectorFunctionInvocation{\horstSelectorFunctionAppfunctionIds{\horstParVarfid}{},\horstSelectorFunctionApppcsForFunctionIdAndOpcode{\horstParVarpc}{\horstParVarfid,\horstConstSELECT}}
  \begin{horstClause}
    \horstFreeVar{memN}{\horstTypeMemory}
    \horstFreeVar{atN}{\horstTypeHomInit{\horstTypeLValue}{\horstOpAppas{\horstParVarfid}{}}}
    \horstFreeVar{st}{\horstTypeHomInit{\horstTypeLValue}{\horstSUB{\horstOpAppss{\horstParVarfid,\horstParVarpc}{}}{3}}}
    \horstFreeVar{tbl}{\horstTypeTable}
    \horstFreeVar{ctx}{\horstTypeContext}
    \horstFreeVar{lt}{\horstTypeHomInit{\horstTypeLValue}{\horstOpAppls{\horstParVarfid}{}}}
    \horstFreeVar{mem}{\horstTypeMemory}
    \horstFreeVar{x}{\horstTypeLValue}
    \horstFreeVar{gtN}{\horstTypeHomInit{\horstTypeLValue}{\horstOpAppgs{}{}}}
    \horstFreeVar{y}{\horstTypeLValue}
    \horstFreeVar{z}{\horstTypeLValue}
    \horstFreeVar{gt}{\horstTypeHomInit{\horstTypeLValue}{\horstOpAppgs{}{}}}
    \horstPremise{\horstPredAppMState{\horstParVarfid,\horstParVarpc}{\horstFreeVarctx,\horstCONS{\horstFreeVarx}{\horstCONS{\horstFreeVary}{\horstCONS{\horstFreeVarz}{\horstFreeVarst}}},\horstFreeVargt,\horstFreeVarlt,\horstFreeVarmem,\horstFreeVartbl,\horstFreeVaratN,\horstFreeVargtN,\horstFreeVarmemN}}
    \horstPremise{\horstOpAppabseq{}{\horstOpAppvalueOf{}{\horstFreeVarx},\horstOpAppmkConst{0}{}}}
    \horstConclusion{\horstPredAppMState{\horstParVarfid,\horstADD{\horstParVarpc}{1}}{\horstFreeVarctx,\horstCONS{\horstOpAppraiseTo{}{\horstFreeVary,\horstOpAppflub{2}{\horstTUPINIT{\horstOpApplabelOf{}{\horstFreeVarx},\horstOpApplabelOfCtx{}{\horstFreeVarctx}}}}}{\horstFreeVarst},\horstFreeVargt,\horstFreeVarlt,\horstFreeVarmem,\horstFreeVartbl,\horstFreeVaratN,\horstFreeVargtN,\horstFreeVarmemN}}
  \end{horstClause}
  \begin{horstClause}
    \horstFreeVar{memN}{\horstTypeMemory}
    \horstFreeVar{atN}{\horstTypeHomInit{\horstTypeLValue}{\horstOpAppas{\horstParVarfid}{}}}
    \horstFreeVar{st}{\horstTypeHomInit{\horstTypeLValue}{\horstSUB{\horstOpAppss{\horstParVarfid,\horstParVarpc}{}}{3}}}
    \horstFreeVar{tbl}{\horstTypeTable}
    \horstFreeVar{ctx}{\horstTypeContext}
    \horstFreeVar{lt}{\horstTypeHomInit{\horstTypeLValue}{\horstOpAppls{\horstParVarfid}{}}}
    \horstFreeVar{mem}{\horstTypeMemory}
    \horstFreeVar{x}{\horstTypeLValue}
    \horstFreeVar{gtN}{\horstTypeHomInit{\horstTypeLValue}{\horstOpAppgs{}{}}}
    \horstFreeVar{y}{\horstTypeLValue}
    \horstFreeVar{z}{\horstTypeLValue}
    \horstFreeVar{gt}{\horstTypeHomInit{\horstTypeLValue}{\horstOpAppgs{}{}}}
    \horstPremise{\horstPredAppMState{\horstParVarfid,\horstParVarpc}{\horstFreeVarctx,\horstCONS{\horstFreeVarx}{\horstCONS{\horstFreeVary}{\horstCONS{\horstFreeVarz}{\horstFreeVarst}}},\horstFreeVargt,\horstFreeVarlt,\horstFreeVarmem,\horstFreeVartbl,\horstFreeVaratN,\horstFreeVargtN,\horstFreeVarmemN}}
    \horstPremise{\horstOpAppabsneq{}{\horstOpAppvalueOf{}{\horstFreeVarx},\horstOpAppmkConst{0}{}}}
    \horstConclusion{\horstPredAppMState{\horstParVarfid,\horstADD{\horstParVarpc}{1}}{\horstFreeVarctx,\horstCONS{\horstOpAppraiseTo{}{\horstFreeVarz,\horstOpAppflub{2}{\horstTUPINIT{\horstOpApplabelOf{}{\horstFreeVarx},\horstOpApplabelOfCtx{}{\horstFreeVarctx}}}}}{\horstFreeVarst},\horstFreeVargt,\horstFreeVarlt,\horstFreeVarmem,\horstFreeVartbl,\horstFreeVaratN,\horstFreeVargtN,\horstFreeVarmemN}}
  \end{horstClause}
\end{horstRule}
\begin{horstRule}{globalGetRule}
  \horstParVar{fid}{\horstTypeint}
  \horstParVar{pc}{\horstTypeint}
  \horstParVar{idx}{\horstTypeint}
  \horstSelectorFunctionInvocation{\horstSelectorFunctionAppfunctionIds{\horstParVarfid}{},\horstSelectorFunctionApppcsForFunctionIdAndOpcode{\horstParVarpc}{\horstParVarfid,\horstConstGLOBALGET},\horstSelectorFunctionAppimmediateForFunctionIdAndPc{\horstParVaridx}{\horstParVarfid,\horstParVarpc}}
  \begin{horstClause}
    \horstFreeVar{memN}{\horstTypeMemory}
    \horstFreeVar{atN}{\horstTypeHomInit{\horstTypeLValue}{\horstOpAppas{\horstParVarfid}{}}}
    \horstFreeVar{st}{\horstTypeHomInit{\horstTypeLValue}{\horstOpAppss{\horstParVarfid,\horstParVarpc}{}}}
    \horstFreeVar{tbl}{\horstTypeTable}
    \horstFreeVar{ctx}{\horstTypeContext}
    \horstFreeVar{lt}{\horstTypeHomInit{\horstTypeLValue}{\horstOpAppls{\horstParVarfid}{}}}
    \horstFreeVar{mem}{\horstTypeMemory}
    \horstFreeVar{gtN}{\horstTypeHomInit{\horstTypeLValue}{\horstOpAppgs{}{}}}
    \horstFreeVar{gt}{\horstTypeHomInit{\horstTypeLValue}{\horstOpAppgs{}{}}}
    \horstPremise{\horstPredAppMState{\horstParVarfid,\horstParVarpc}{\horstFreeVarctx,\horstFreeVarst,\horstFreeVargt,\horstFreeVarlt,\horstFreeVarmem,\horstFreeVartbl,\horstFreeVaratN,\horstFreeVargtN,\horstFreeVarmemN}}
    \horstConclusion{\horstPredAppMState{\horstParVarfid,\horstADD{\horstParVarpc}{1}}{\horstFreeVarctx,\horstCONS{\horstOpAppraiseTo{}{\horstACCESS{\horstFreeVargt}{\horstParVaridx},\horstOpApplabelOfCtx{}{\horstFreeVarctx}}}{\horstFreeVarst},\horstFreeVargt,\horstFreeVarlt,\horstFreeVarmem,\horstFreeVartbl,\horstFreeVaratN,\horstFreeVargtN,\horstFreeVarmemN}}
  \end{horstClause}
\end{horstRule}
\begin{horstRule}{testNoninterferenceTableSat}
  \horstParVar{fid}{\horstTypeint}
  \horstParVar{cid}{\horstTypeint}
  \horstSelectorFunctionInvocation{\horstSelectorFunctionAppstartFunctionId{\horstParVarfid}{},\horstSelectorFunctionApptableLeak{\horstParVarcid}{}}
  \begin{horstClause}
    \horstFreeVar{memN}{\horstTypeMemory}
    \horstFreeVar{atN}{\horstTypeHomInit{\horstTypeLValue}{\horstOpAppas{\horstParVarfid}{}}}
    \horstFreeVar{ptbl}{\horstTypeTablePrecision}
    \horstFreeVar{ctx}{\horstTypeContext}
    \horstFreeVar{lt}{\horstTypeHomInit{\horstTypeLValue}{\horstOpAppls{\horstParVarfid}{}}}
    \horstFreeVar{mem}{\horstTypeMemory}
    \horstFreeVar{gtN}{\horstTypeHomInit{\horstTypeLValue}{\horstOpAppgs{}{}}}
    \horstFreeVar{gt}{\horstTypeHomInit{\horstTypeLValue}{\horstOpAppgs{}{}}}
    \horstPremise{\horstPredAppMState{\horstParVarfid,0}{\horstFreeVarctx,\horstTUPINIT{},\horstFreeVargt,\horstFreeVarlt,\horstFreeVarmem,\horstConstructorAppTbl{\horstFreeVarptbl,\horstConstructorAppIllegal{}},\horstFreeVaratN,\horstFreeVargtN,\horstFreeVarmemN}}
    \horstPremise{\horstSimpleSumExp{OR}{\horstSelectorFunctionApptableOutLabel{\horstParVarct,\horstParVarit}{}}{\horstOpAppflowsTo{}{\horstOpAppmkLabel{}{\horstParVarct,\horstParVarit},\horstOpAppperspectiveOfCtx{}{\horstFreeVarctx}}}{{ct}{it}}}
    \horstConclusion{\horstPredApptestNoninterferenceTableSat{\horstParVarfid,\horstParVarcid}{}}
  \end{horstClause}
\end{horstRule}
\begin{horstRule}{blockRule}
  \horstParVar{fid}{\horstTypeint}
  \horstParVar{pc}{\horstTypeint}
  \horstSelectorFunctionInvocation{\horstSelectorFunctionAppfunctionIds{\horstParVarfid}{},\horstSelectorFunctionAppblocksForFunctionId{\horstParVarpc}{\horstParVarfid}}
  \begin{horstClause}
    \horstFreeVar{memN}{\horstTypeMemory}
    \horstFreeVar{atN}{\horstTypeHomInit{\horstTypeLValue}{\horstOpAppas{\horstParVarfid}{}}}
    \horstFreeVar{st}{\horstTypeHomInit{\horstTypeLValue}{\horstOpAppss{\horstParVarfid,\horstParVarpc}{}}}
    \horstFreeVar{tbl}{\horstTypeTable}
    \horstFreeVar{ctx}{\horstTypeContext}
    \horstFreeVar{lt}{\horstTypeHomInit{\horstTypeLValue}{\horstOpAppls{\horstParVarfid}{}}}
    \horstFreeVar{mem}{\horstTypeMemory}
    \horstFreeVar{gtN}{\horstTypeHomInit{\horstTypeLValue}{\horstOpAppgs{}{}}}
    \horstFreeVar{gt}{\horstTypeHomInit{\horstTypeLValue}{\horstOpAppgs{}{}}}
    \horstPremise{\horstPredAppMState{\horstParVarfid,\horstParVarpc}{\horstFreeVarctx,\horstFreeVarst,\horstFreeVargt,\horstFreeVarlt,\horstFreeVarmem,\horstFreeVartbl,\horstFreeVaratN,\horstFreeVargtN,\horstFreeVarmemN}}
    \horstConclusion{\horstPredAppMState{\horstParVarfid,\horstADD{\horstParVarpc}{1}}{\horstFreeVarctx,\horstFreeVarst,\horstFreeVargt,\horstFreeVarlt,\horstFreeVarmem,\horstFreeVartbl,\horstFreeVaratN,\horstFreeVargtN,\horstFreeVarmemN}}
  \end{horstClause}
\end{horstRule}
\begin{horstRule}{testNoninterferenceGlobalUnsat}
  \horstParVar{fid}{\horstTypeint}
  \horstParVar{cid}{\horstTypeint}
  \horstSelectorFunctionInvocation{\horstSelectorFunctionAppstartFunctionId{\horstParVarfid}{},\horstSelectorFunctionAppglobalSafe{\horstParVarcid}{}}
  \begin{horstClause}
    \horstFreeVar{memN}{\horstTypeMemory}
    \horstFreeVar{atN}{\horstTypeHomInit{\horstTypeLValue}{\horstOpAppas{\horstParVarfid}{}}}
    \horstFreeVar{tbl}{\horstTypeTable}
    \horstFreeVar{rt}{\horstTypeHomInit{\horstTypeLValue}{\horstOpApprs{\horstParVarfid}{}}}
    \horstFreeVar{ctx}{\horstTypeContext}
    \horstFreeVar{mem}{\horstTypeMemory}
    \horstFreeVar{gtN}{\horstTypeHomInit{\horstTypeLValue}{\horstOpAppgs{}{}}}
    \horstFreeVar{gt}{\horstTypeHomInit{\horstTypeLValue}{\horstOpAppgs{}{}}}
    \horstPremise{\horstPredAppReturnCall{\horstParVarfid}{\horstFreeVarctx,\horstFreeVarrt,\horstFreeVargt,\horstFreeVarmem,\horstFreeVartbl,\horstFreeVaratN,\horstFreeVargtN,\horstFreeVarmemN}}
    \horstPremise{\horstSimpleSumExp{OR}{\horstSelectorFunctionAppinterval{\horstParVaridx}{0,\horstOpAppgs{}{}},\horstSelectorFunctionAppglobalOutLabelForPosition{\horstParVarcg,\horstParVarig}{\horstParVaridx}}{\horstAND{\horstOpAppflowsTo{}{\horstOpAppmkLabel{}{\horstParVarcg,\horstParVarig},\horstOpAppperspectiveOfCtx{}{\horstFreeVarctx}}}{\horstEQ{\horstOpApplabelOf{}{\horstACCESS{\horstFreeVargt}{\horstParVaridx}}}{\horstConstructorAppIllegal{}}}}{{idx}{cg}{ig}}}
    \horstConclusion{\horstPredApptestNoninterferenceGlobalUnsat{\horstParVarfid,\horstParVarcid}{}}
  \end{horstClause}
\end{horstRule}
\begin{horstRule}{testNoninterferenceImportedFunctionGlobalsSat}
  \horstParVar{fid}{\horstTypeint}
  \horstParVar{cid}{\horstTypeint}
  \horstSelectorFunctionInvocation{\horstSelectorFunctionAppimportedFunctionIds{\horstParVarfid}{},\horstSelectorFunctionAppimportCallGlobalLeak{\horstParVarcid}{\horstParVarfid}}
  \begin{horstClause}
    \horstFreeVar{memN}{\horstTypeMemory}
    \horstFreeVar{atN}{\horstTypeHomInit{\horstTypeLValue}{\horstOpAppas{\horstParVarfid}{}}}
    \horstFreeVar{tbl}{\horstTypeTable}
    \horstFreeVar{ctx}{\horstTypeContext}
    \horstFreeVar{lt}{\horstTypeHomInit{\horstTypeLValue}{\horstOpAppls{\horstParVarfid}{}}}
    \horstFreeVar{mem}{\horstTypeMemory}
    \horstFreeVar{gtN}{\horstTypeHomInit{\horstTypeLValue}{\horstOpAppgs{}{}}}
    \horstFreeVar{gt}{\horstTypeHomInit{\horstTypeLValue}{\horstOpAppgs{}{}}}
    \horstPremise{\horstPredAppMState{\horstParVarfid,0}{\horstFreeVarctx,\horstTUPINIT{},\horstFreeVargt,\horstFreeVarlt,\horstFreeVarmem,\horstFreeVartbl,\horstFreeVaratN,\horstFreeVargtN,\horstFreeVarmemN}}
    \horstPremise{\horstSimpleSumExp{OR}{\horstSelectorFunctionAppinterval{\horstParVaridx}{0,\horstOpAppgs{}{}},\horstSelectorFunctionAppglobalInLabelForImportedFunctionAndPosition{\horstParVarcg,\horstParVarig}{\horstParVarfid,\horstParVaridx}}{\horstAND{\horstOpAppflowsTo{}{\horstOpAppmkLabel{}{\horstParVarcg,\horstParVarig},\horstOpAppperspectiveOfCtx{}{\horstFreeVarctx}}}{\horstEQ{\horstOpApplabelOf{}{\horstACCESS{\horstFreeVargtN}{\horstParVaridx}}}{\horstConstructorAppIllegal{}}}}{{idx}{cg}{ig}}}
    \horstConclusion{\horstPredApptestNoninterferenceImportedFunctionGlobalsSat{\horstParVarfid,\horstParVarcid}{}}
  \end{horstClause}
\end{horstRule}
\begin{horstRule}{localSetRule}
  \horstParVar{fid}{\horstTypeint}
  \horstParVar{pc}{\horstTypeint}
  \horstParVar{idx}{\horstTypeint}
  \horstSelectorFunctionInvocation{\horstSelectorFunctionAppfunctionIds{\horstParVarfid}{},\horstSelectorFunctionApppcsForFunctionIdAndOpcode{\horstParVarpc}{\horstParVarfid,\horstConstLOCALSET},\horstSelectorFunctionAppimmediateForFunctionIdAndPc{\horstParVaridx}{\horstParVarfid,\horstParVarpc}}
  \begin{horstClause}
    \horstFreeVar{memN}{\horstTypeMemory}
    \horstFreeVar{atN}{\horstTypeHomInit{\horstTypeLValue}{\horstOpAppas{\horstParVarfid}{}}}
    \horstFreeVar{st}{\horstTypeHomInit{\horstTypeLValue}{\horstSUB{\horstOpAppss{\horstParVarfid,\horstParVarpc}{}}{1}}}
    \horstFreeVar{tbl}{\horstTypeTable}
    \horstFreeVar{ctx}{\horstTypeContext}
    \horstFreeVar{lt}{\horstTypeHomInit{\horstTypeLValue}{\horstOpAppls{\horstParVarfid}{}}}
    \horstFreeVar{mem}{\horstTypeMemory}
    \horstFreeVar{x}{\horstTypeLValue}
    \horstFreeVar{gtN}{\horstTypeHomInit{\horstTypeLValue}{\horstOpAppgs{}{}}}
    \horstFreeVar{gt}{\horstTypeHomInit{\horstTypeLValue}{\horstOpAppgs{}{}}}
    \horstPremise{\horstPredAppMState{\horstParVarfid,\horstParVarpc}{\horstFreeVarctx,\horstCONS{\horstFreeVarx}{\horstFreeVarst},\horstFreeVargt,\horstFreeVarlt,\horstFreeVarmem,\horstFreeVartbl,\horstFreeVaratN,\horstFreeVargtN,\horstFreeVarmemN}}
    \horstConclusion{\horstPredAppMState{\horstParVarfid,\horstADD{\horstParVarpc}{1}}{\horstFreeVarctx,\horstFreeVarst,\horstFreeVargt,\horstOpAppset{\horstOpAppls{\horstParVarfid}{},\horstParVaridx}{\horstOpAppraiseTo{}{\horstFreeVarx,\horstOpApplabelOfCtx{}{\horstFreeVarctx}},\horstFreeVarlt},\horstFreeVarmem,\horstFreeVartbl,\horstFreeVaratN,\horstFreeVargtN,\horstFreeVarmemN}}
  \end{horstClause}
\end{horstRule}
\begin{horstRule}{sizeRule}
  \horstParVar{fid}{\horstTypeint}
  \horstParVar{pc}{\horstTypeint}
  \horstSelectorFunctionInvocation{\horstSelectorFunctionAppfunctionIds{\horstParVarfid}{},\horstSelectorFunctionApppcsForFunctionIdAndOpcode{\horstParVarpc}{\horstParVarfid,\horstConstMEMORYSIZE}}
  \begin{horstClause}
    \horstFreeVar{memN}{\horstTypeMemory}
    \horstFreeVar{atN}{\horstTypeHomInit{\horstTypeLValue}{\horstOpAppas{\horstParVarfid}{}}}
    \horstFreeVar{st}{\horstTypeHomInit{\horstTypeLValue}{\horstOpAppss{\horstParVarfid,\horstParVarpc}{}}}
    \horstFreeVar{tbl}{\horstTypeTable}
    \horstFreeVar{ctx}{\horstTypeContext}
    \horstFreeVar{v}{\horstTypeLValue}
    \horstFreeVar{lt}{\horstTypeHomInit{\horstTypeLValue}{\horstOpAppls{\horstParVarfid}{}}}
    \horstFreeVar{gtN}{\horstTypeHomInit{\horstTypeLValue}{\horstOpAppgs{}{}}}
    \horstFreeVar{i}{\horstTypeValue}
    \horstFreeVar{gt}{\horstTypeHomInit{\horstTypeLValue}{\horstOpAppgs{}{}}}
    \horstFreeVar{size}{\horstTypeLValue}
    \horstPremise{\horstPredAppMState{\horstParVarfid,\horstParVarpc}{\horstFreeVarctx,\horstFreeVarst,\horstFreeVargt,\horstFreeVarlt,\horstConstructorAppMem{\horstFreeVari,\horstFreeVarv,\horstFreeVarsize},\horstFreeVartbl,\horstFreeVaratN,\horstFreeVargtN,\horstFreeVarmemN}}
    \horstConclusion{\horstPredAppMState{\horstParVarfid,\horstADD{\horstParVarpc}{1}}{\horstFreeVarctx,\horstCONS{\horstOpAppraiseTo{}{\horstFreeVarsize,\horstOpApplabelOfCtx{}{\horstFreeVarctx}}}{\horstFreeVarst},\horstFreeVargt,\horstFreeVarlt,\horstConstructorAppMem{\horstFreeVari,\horstFreeVarv,\horstFreeVarsize},\horstFreeVartbl,\horstFreeVaratN,\horstFreeVargtN,\horstFreeVarmemN}}
  \end{horstClause}
\end{horstRule}
\begin{horstRule}{invokeRule}
  \horstParVar{fid}{\horstTypeint}
  \horstParVar{hasMem}{\horstTypebool}
  \horstParVar{importMem}{\horstTypebool}
  \horstParVar{min}{\horstTypeint}
  \horstParVar{max}{\horstTypeint}
  \horstParVar{cd}{\horstTypebool}
  \horstParVar{id}{\horstTypebool}
  \horstParVar{cs}{\horstTypebool}
  \horstParVar{is}{\horstTypebool}
  \horstParVar{impreciseTbl}{\horstTypebool}
  \horstParVar{ct}{\horstTypebool}
  \horstParVar{it}{\horstTypebool}
  \horstSelectorFunctionInvocation{\horstSelectorFunctionAppstartFunctionId{\horstParVarfid}{},\horstSelectorFunctionAppisMemoryPresent{\horstParVarhasMem}{},\horstSelectorFunctionAppisMemoryImported{\horstParVarimportMem}{},\horstSelectorFunctionAppgetMemoryMin{\horstParVarmin}{},\horstSelectorFunctionAppgetMemoryMax{\horstParVarmax}{},\horstSelectorFunctionAppmemoryDataInLabel{\horstParVarcd,\horstParVarid}{},\horstSelectorFunctionAppmemorySizeInLabel{\horstParVarcs,\horstParVaris}{},\horstSelectorFunctionAppisTableImprecise{\horstParVarimpreciseTbl}{},\horstSelectorFunctionApptableInLabel{\horstParVarct,\horstParVarit}{}}
  \begin{horstClause}
    \horstFreeVar{p}{\horstTypeLabel}
    \horstFreeVar{at}{\horstTypeHomInit{\horstTypeLValue}{\horstOpAppas{\horstParVarfid}{}}}
    \horstFreeVar{v}{\horstTypeLValue}
    \horstFreeVar{i}{\horstTypeValue}
    \horstFreeVar{arr}{\horstTypeArray{\horstTypeValue}}
    \horstFreeVar{gt}{\horstTypeHomInit{\horstTypeLValue}{\horstOpAppgs{}{}}}
    \horstFreeVar{size}{\horstTypeLValue}
    \horstPremise{\horstSimpleSumExp{AND}{\horstSelectorFunctionAppinterval{\horstParVaridx}{0,\horstOpAppgs{}{}},\horstSelectorFunctionAppvalueAndTopOfGlobal{\horstParVarv,\horstParVartop}{\horstParVaridx},\horstSelectorFunctionAppbitwidthForGlobal{\horstParVarbw}{\horstParVaridx},\horstSelectorFunctionAppglobalInLabelForPosition{\horstParVarcg,\horstParVarig}{\horstParVaridx}}{\horstAND{\horstAND{\horstEQ{\horstOpAppvalueOf{}{\horstACCESS{\horstFreeVargt}{\horstParVaridx}}}{\horstOpAppval{\horstParVartop,\horstParVarv}{}}}{\horstOpAppisInRange{\horstParVarbw}{\horstOpAppvalueOf{}{\horstACCESS{\horstFreeVargt}{\horstParVaridx}}}}}{\horstEQ{\horstOpApplabelOf{}{\horstACCESS{\horstFreeVargt}{\horstParVaridx}}}{\horstOpAppmkFlowLabel{}{\horstOpAppmkLabel{}{\horstParVarcg,\horstParVarig},\horstFreeVarp}}}}{{idx}{v}{top}{bw}{cg}{ig}}}
    \horstPremise{\horstSimpleSumExp{AND}{\horstSelectorFunctionAppinterval{\horstParVaridx}{0,\horstOpAppas{\horstParVarfid}{}},\horstSelectorFunctionAppbitwidthForArgument{\horstParVarbw}{\horstParVarfid,\horstParVaridx},\horstSelectorFunctionAppargumentLabelForPosition{\horstParVarca,\horstParVaria}{\horstParVaridx}}{\horstAND{\horstOpAppisInRange{\horstParVarbw}{\horstOpAppvalueOf{}{\horstACCESS{\horstFreeVarat}{\horstParVaridx}}}}{\horstEQ{\horstOpApplabelOf{}{\horstACCESS{\horstFreeVarat}{\horstParVaridx}}}{\horstOpAppmkFlowLabel{}{\horstOpAppmkLabel{}{\horstParVarca,\horstParVaria},\horstFreeVarp}}}}{{idx}{bw}{ca}{ia}}}
    \horstPremise{\horstOpAppisVal{}{\horstFreeVari}}
    \horstPremise{\horstOpAppiltu{64}{\horstFreeVari,\horstOpAppmkConst{\horstMUL{\horstParVarmax}{\horstOpApppow{16}{2}}}{}}}
    \horstPremise{\horstCOND{\horstParVarimportMem}{\horstOpAppisInRange{8}{\horstOpAppvalueOf{}{\horstFreeVarv}}}{\horstEQ{\horstOpAppvalueOf{}{\horstFreeVarv}}{\horstCustomSumExp{\horstSelectorFunctionAppdatasegmentsWithPositions{\horstParVarpos,\horstParVarval}{}}{acc}{\horstCOND{\horstEQ{\horstOpAppmkConst{\horstParVarpos}{}}{\horstFreeVari}}{\horstOpAppmkConst{\horstParVarval}{}}{\horstVaracc}}{\horstOpAppmkConst{0}{}}{{pos}{val}}}}}
    \horstPremise{\horstEQ{\horstOpApplabelOf{}{\horstFreeVarv}}{\horstCustomSumExp{\horstSelectorFunctionAppmemoryDataInLabelExceptions{\horstParVarstart,\horstParVarendInclusive,\horstParVarced,\horstParVaried}{}}{x}{\horstCOND{\horstAND{\horstOpAppabsle{}{\horstOpAppmkConst{\horstParVarstart}{},\horstFreeVari}}{\horstOpAppabslt{}{\horstFreeVari,\horstOpAppmkConst{\horstParVarendInclusive}{}}}}{\horstOpAppmkFlowLabel{}{\horstOpAppmkLabel{}{\horstParVarced,\horstParVaried},\horstFreeVarp}}{\horstVarx}}{\horstOpAppmkFlowLabel{}{\horstOpAppmkLabel{}{\horstParVarcd,\horstParVarid},\horstFreeVarp}}{{start}{endInclusive}{ced}{ied}}}}
    \horstPremise{\horstEQ{\horstOpAppvalueOf{}{\horstFreeVarsize}}{\horstOpAppmkConst{\horstParVarmin}{}}}
    \horstPremise{\horstEQ{\horstOpApplabelOf{}{\horstFreeVarsize}}{\horstOpAppmkFlowLabel{}{\horstOpAppmkLabel{}{\horstParVarcs,\horstParVaris},\horstFreeVarp}}}
    \horstConclusion{\horstPredAppInit{\horstParVarfid}{\horstFreeVarp,\horstFreeVarat,\horstFreeVargt,\horstConstructorAppMem{\horstFreeVari,\horstFreeVarv,\horstFreeVarsize}}}
  \end{horstClause}
  \begin{horstClause}
    \horstFreeVar{tbl}{\horstTypeTable}
    \horstFreeVar{p}{\horstTypeLabel}
    \horstFreeVar{at}{\horstTypeHomInit{\horstTypeLValue}{\horstOpAppas{\horstParVarfid}{}}}
    \horstFreeVar{mem}{\horstTypeMemory}
    \horstFreeVar{gt}{\horstTypeHomInit{\horstTypeLValue}{\horstOpAppgs{}{}}}
    \horstPremise{\horstPredAppInit{\horstParVarfid}{\horstFreeVarp,\horstFreeVarat,\horstFreeVargt,\horstFreeVarmem}}
    \horstPremise{\horstEQ{\horstFreeVartbl}{\horstConstructorAppTbl{\horstCOND{\horstParVarimpreciseTbl}{\horstConstructorAppTblImprecise{}}{\horstConstructorAppTblPrecise{}},\horstOpAppmkFlowLabel{}{\horstOpAppmkLabel{}{\horstParVarct,\horstParVarit},\horstFreeVarp}}}}
    \horstConclusion{\horstPredAppMState{\horstParVarfid,0}{\horstOpAppmkCtx{}{\horstFreeVarp,\horstConstructorAppLegal{}},\horstTUPINIT{},\horstFreeVargt,\horstCONCAT{\horstFreeVarat}{\horstHOMINIT{\horstOpAppmkLConst{0}{}}{\horstSUB{\horstOpAppls{\horstParVarfid}{}}{\horstOpAppas{\horstParVarfid}{}}}},\horstFreeVarmem,\horstFreeVartbl,\horstFreeVarat,\horstFreeVargt,\horstFreeVarmem}}
  \end{horstClause}
\end{horstRule}
\begin{horstRule}{functionScopeExtendRule}
  \horstParVar{fid}{\horstTypeint}
  \horstSelectorFunctionInvocation{\horstSelectorFunctionAppfunctionIds{\horstParVarfid}{}}
  \begin{horstClause}
    \horstPremise{\horstTrue}
    \horstConclusion{\horstPredAppScopeExtend{\horstParVarfid}{-1,\horstOpApppcmax{\horstParVarfid}{}}}
  \end{horstClause}
\end{horstRule}
\begin{horstRule}{testNoninterferenceMemoryDataUnsat}
  \horstParVar{fid}{\horstTypeint}
  \horstParVar{cid}{\horstTypeint}
  \horstSelectorFunctionInvocation{\horstSelectorFunctionAppstartFunctionId{\horstParVarfid}{},\horstSelectorFunctionAppmemoryDataSafe{\horstParVarcid}{}}
  \begin{horstClause}
    \horstFreeVar{memN}{\horstTypeMemory}
    \horstFreeVar{atN}{\horstTypeHomInit{\horstTypeLValue}{\horstOpAppas{\horstParVarfid}{}}}
    \horstFreeVar{tbl}{\horstTypeTable}
    \horstFreeVar{rt}{\horstTypeHomInit{\horstTypeLValue}{\horstOpApprs{\horstParVarfid}{}}}
    \horstFreeVar{ctx}{\horstTypeContext}
    \horstFreeVar{mem}{\horstTypeMemory}
    \horstFreeVar{gtN}{\horstTypeHomInit{\horstTypeLValue}{\horstOpAppgs{}{}}}
    \horstFreeVar{gt}{\horstTypeHomInit{\horstTypeLValue}{\horstOpAppgs{}{}}}
    \horstPremise{\horstPredAppReturnCall{\horstParVarfid}{\horstFreeVarctx,\horstFreeVarrt,\horstFreeVargt,\horstFreeVarmem,\horstFreeVartbl,\horstFreeVaratN,\horstFreeVargtN,\horstFreeVarmemN}}
    \horstPremise{\horstSimpleSumExp{OR}{\horstSelectorFunctionAppmemoryDataOutLabel{\horstParVarcd,\horstParVarid}{}}{\horstAND{\horstOpAppflowsTo{}{\horstOpAppmkLabel{}{\horstParVarcd,\horstParVarid},\horstOpAppperspectiveOfCtx{}{\horstFreeVarctx}}}{\horstEQ{\horstOpApplabelOf{}{\horstOpAppvalueOfMem{}{\horstFreeVarmem}}}{\horstConstructorAppIllegal{}}}}{{cd}{id}}}
    \horstConclusion{\horstPredApptestNoninterferenceMemoryDataUnsat{\horstParVarfid,\horstParVarcid}{}}
  \end{horstClause}
\end{horstRule}
\begin{horstRule}{propagateReturnRule}
  \horstParVar{fid}{\horstTypeint}
  \horstSelectorFunctionInvocation{\horstSelectorFunctionAppstartFunctionId{\horstParVarfid}{}}
  \begin{horstClause}
    \horstFreeVar{memN}{\horstTypeMemory}
    \horstFreeVar{atN}{\horstTypeHomInit{\horstTypeLValue}{\horstOpAppas{\horstParVarfid}{}}}
    \horstFreeVar{tbl}{\horstTypeTable}
    \horstFreeVar{rt}{\horstTypeHomInit{\horstTypeLValue}{\horstOpApprs{\horstParVarfid}{}}}
    \horstFreeVar{p}{\horstTypeLabel}
    \horstFreeVar{mem}{\horstTypeMemory}
    \horstFreeVar{gtN}{\horstTypeHomInit{\horstTypeLValue}{\horstOpAppgs{}{}}}
    \horstFreeVar{gt}{\horstTypeHomInit{\horstTypeLValue}{\horstOpAppgs{}{}}}
    \horstFreeVar{l}{\horstTypeFlowLabel}
    \horstFreeVar{from}{\horstTypeint}
    \horstPremise{\horstPredAppReturn{\horstParVarfid}{\horstConstructorAppCtx{\horstFreeVarp,\horstFreeVarl,\horstFreeVarfrom},\horstFreeVarrt,\horstFreeVargt,\horstFreeVarmem,\horstFreeVartbl,\horstFreeVaratN,\horstFreeVargtN,\horstFreeVarmemN}}
    \horstPremise{\horstPredAppInit{\horstParVarfid}{\horstFreeVarp,\horstFreeVaratN,\horstFreeVargtN,\horstFreeVarmemN}}
    \horstPremise{\horstLE{0}{\horstFreeVarfrom}}
    \horstConclusion{\horstPredAppReturnCall{\horstParVarfid}{\horstConstructorAppCtx{\horstFreeVarp,\horstFreeVarl,\horstFreeVarfrom},\horstFreeVarrt,\horstFreeVargt,\horstFreeVarmem,\horstFreeVartbl,\horstFreeVaratN,\horstFreeVargtN,\horstFreeVarmemN}}
  \end{horstClause}
\end{horstRule}
\begin{horstRule}{testNoninterferenceResultUnsat}
  \horstParVar{fid}{\horstTypeint}
  \horstParVar{cid}{\horstTypeint}
  \horstSelectorFunctionInvocation{\horstSelectorFunctionAppstartFunctionId{\horstParVarfid}{},\horstSelectorFunctionAppresultSafe{\horstParVarcid}{}}
  \begin{horstClause}
    \horstFreeVar{memN}{\horstTypeMemory}
    \horstFreeVar{atN}{\horstTypeHomInit{\horstTypeLValue}{\horstOpAppas{\horstParVarfid}{}}}
    \horstFreeVar{tbl}{\horstTypeTable}
    \horstFreeVar{rt}{\horstTypeHomInit{\horstTypeLValue}{\horstOpApprs{\horstParVarfid}{}}}
    \horstFreeVar{ctx}{\horstTypeContext}
    \horstFreeVar{mem}{\horstTypeMemory}
    \horstFreeVar{gtN}{\horstTypeHomInit{\horstTypeLValue}{\horstOpAppgs{}{}}}
    \horstFreeVar{gt}{\horstTypeHomInit{\horstTypeLValue}{\horstOpAppgs{}{}}}
    \horstPremise{\horstPredAppReturnCall{\horstParVarfid}{\horstFreeVarctx,\horstFreeVarrt,\horstFreeVargt,\horstFreeVarmem,\horstFreeVartbl,\horstFreeVaratN,\horstFreeVargtN,\horstFreeVarmemN}}
    \horstPremise{\horstSimpleSumExp{OR}{\horstSelectorFunctionAppinterval{\horstParVaridx}{0,\horstOpApprs{\horstParVarfid}{}},\horstSelectorFunctionAppresultLabelForPosition{\horstParVarcr,\horstParVarir}{\horstParVaridx}}{\horstAND{\horstOpAppflowsTo{}{\horstOpAppmkLabel{}{\horstParVarcr,\horstParVarir},\horstOpAppperspectiveOfCtx{}{\horstFreeVarctx}}}{\horstEQ{\horstOpApplabelOf{}{\horstACCESS{\horstFreeVarrt}{\horstParVaridx}}}{\horstConstructorAppIllegal{}}}}{{idx}{cr}{ir}}}
    \horstConclusion{\horstPredApptestNoninterferenceResultUnsat{\horstParVarfid,\horstParVarcid}{}}
  \end{horstClause}
\end{horstRule}
\begin{horstRule}{testNoninterferenceImportedFunctionMemorySizeSat}
  \horstParVar{fid}{\horstTypeint}
  \horstParVar{cid}{\horstTypeint}
  \horstSelectorFunctionInvocation{\horstSelectorFunctionAppimportedFunctionIds{\horstParVarfid}{},\horstSelectorFunctionAppimportCallMemorySizeLeak{\horstParVarcid}{\horstParVarfid}}
  \begin{horstClause}
    \horstFreeVar{memN}{\horstTypeMemory}
    \horstFreeVar{atN}{\horstTypeHomInit{\horstTypeLValue}{\horstOpAppas{\horstParVarfid}{}}}
    \horstFreeVar{tbl}{\horstTypeTable}
    \horstFreeVar{ctx}{\horstTypeContext}
    \horstFreeVar{v}{\horstTypeLValue}
    \horstFreeVar{lt}{\horstTypeHomInit{\horstTypeLValue}{\horstOpAppls{\horstParVarfid}{}}}
    \horstFreeVar{gtN}{\horstTypeHomInit{\horstTypeLValue}{\horstOpAppgs{}{}}}
    \horstFreeVar{i}{\horstTypeValue}
    \horstFreeVar{gt}{\horstTypeHomInit{\horstTypeLValue}{\horstOpAppgs{}{}}}
    \horstFreeVar{size}{\horstTypeLValue}
    \horstPremise{\horstPredAppMState{\horstParVarfid,0}{\horstFreeVarctx,\horstTUPINIT{},\horstFreeVargt,\horstFreeVarlt,\horstConstructorAppMem{\horstFreeVari,\horstFreeVarv,\horstFreeVarsize},\horstFreeVartbl,\horstFreeVaratN,\horstFreeVargtN,\horstFreeVarmemN}}
    \horstPremise{\horstSimpleSumExp{OR}{\horstSelectorFunctionAppmemoryDataInLabelForImportedFunction{\horstParVarcs,\horstParVaris}{\horstParVarfid}}{\horstAND{\horstOpAppflowsTo{}{\horstOpAppmkLabel{}{\horstParVarcs,\horstParVaris},\horstOpAppperspectiveOfCtx{}{\horstFreeVarctx}}}{\horstEQ{\horstOpApplabelOf{}{\horstFreeVarsize}}{\horstConstructorAppIllegal{}}}}{{cs}{is}}}
    \horstConclusion{\horstPredApptestNoninterferenceImportedFunctionMemorySizeSat{\horstParVarfid,\horstParVarcid}{}}
  \end{horstClause}
\end{horstRule}
\begin{horstRule}{returnJoinRule}
  \horstParVar{fid}{\horstTypeint}
  \horstSelectorFunctionInvocation{\horstSelectorFunctionAppfunctionIds{\horstParVarfid}{}}
  \begin{horstClause}
    \horstFreeVar{gtNI}{\horstTypeHomInit{\horstTypeLValue}{\horstOpAppgs{}{}}}
    \horstFreeVar{gtNII}{\horstTypeHomInit{\horstTypeLValue}{\horstOpAppgs{}{}}}
    \horstFreeVar{memNI}{\horstTypeMemory}
    \horstFreeVar{memNII}{\horstTypeMemory}
    \horstFreeVar{rtIII}{\horstTypeHomInit{\horstTypeLValue}{\horstOpApprs{\horstParVarfid}{}}}
    \horstFreeVar{rtI}{\horstTypeHomInit{\horstTypeLValue}{\horstOpApprs{\horstParVarfid}{}}}
    \horstFreeVar{rtII}{\horstTypeHomInit{\horstTypeLValue}{\horstOpApprs{\horstParVarfid}{}}}
    \horstFreeVar{atNII}{\horstTypeHomInit{\horstTypeLValue}{\horstOpAppas{\horstParVarfid}{}}}
    \horstFreeVar{memI}{\horstTypeMemory}
    \horstFreeVar{memII}{\horstTypeMemory}
    \horstFreeVar{atNI}{\horstTypeHomInit{\horstTypeLValue}{\horstOpAppas{\horstParVarfid}{}}}
    \horstFreeVar{ctx}{\horstTypeContext}
    \horstFreeVar{memIII}{\horstTypeMemory}
    \horstFreeVar{gtII}{\horstTypeHomInit{\horstTypeLValue}{\horstOpAppgs{}{}}}
    \horstFreeVar{gtIII}{\horstTypeHomInit{\horstTypeLValue}{\horstOpAppgs{}{}}}
    \horstFreeVar{gtI}{\horstTypeHomInit{\horstTypeLValue}{\horstOpAppgs{}{}}}
    \horstFreeVar{tblI}{\horstTypeTable}
    \horstFreeVar{tblII}{\horstTypeTable}
    \horstPremise{\horstPredAppReturnToJoin{\horstParVarfid}{\horstFreeVarctx,\horstFreeVarrtI,\horstFreeVargtI,\horstFreeVarmemI,\horstFreeVartblI,\horstFreeVaratNI,\horstFreeVargtNI,\horstFreeVarmemNI}}
    \horstPremise{\horstPredAppReturnToJoin{\horstParVarfid}{\horstFreeVarctx,\horstFreeVarrtII,\horstFreeVargtII,\horstFreeVarmemII,\horstFreeVartblII,\horstFreeVaratNII,\horstFreeVargtNII,\horstFreeVarmemNII}}
    \horstPremise{\horstOpApplowEq{\horstOpAppas{\horstParVarfid}{}}{\horstFreeVaratNI,\horstFreeVaratNII}}
    \horstPremise{\horstOpApplowEq{\horstOpAppgs{}{}}{\horstFreeVargtNI,\horstFreeVargtNII}}
    \horstPremise{\horstOpApplowEqMem{}{\horstFreeVarmemNI,\horstFreeVarmemNII}}
    \horstPremise{\horstOpAppjoinTuples{\horstOpApprs{\horstParVarfid}{}}{\horstFreeVarrtI,\horstFreeVarrtII,\horstFreeVarrtIII}}
    \horstPremise{\horstOpAppjoinTuples{\horstOpAppgs{}{}}{\horstFreeVargtI,\horstFreeVargtII,\horstFreeVargtIII}}
    \horstPremise{\horstOpAppjoinMem{}{\horstFreeVarmemI,\horstFreeVarmemII,\horstFreeVarmemIII}}
    \horstConclusion{\horstPredAppReturn{\horstParVarfid}{\horstFreeVarctx,\horstFreeVarrtIII,\horstFreeVargtIII,\horstFreeVarmemIII,\horstFreeVartblI,\horstFreeVaratNI,\horstFreeVargtNI,\horstFreeVarmemNI}}
  \end{horstClause}
\end{horstRule}
\begin{horstRule}{testNoninterferenceImportedFunctionGlobalsUnsat}
  \horstParVar{fid}{\horstTypeint}
  \horstParVar{cid}{\horstTypeint}
  \horstSelectorFunctionInvocation{\horstSelectorFunctionAppimportedFunctionIds{\horstParVarfid}{},\horstSelectorFunctionAppimportCallGlobalSafe{\horstParVarcid}{\horstParVarfid}}
  \begin{horstClause}
    \horstFreeVar{memN}{\horstTypeMemory}
    \horstFreeVar{atN}{\horstTypeHomInit{\horstTypeLValue}{\horstOpAppas{\horstParVarfid}{}}}
    \horstFreeVar{tbl}{\horstTypeTable}
    \horstFreeVar{ctx}{\horstTypeContext}
    \horstFreeVar{lt}{\horstTypeHomInit{\horstTypeLValue}{\horstOpAppls{\horstParVarfid}{}}}
    \horstFreeVar{mem}{\horstTypeMemory}
    \horstFreeVar{gtN}{\horstTypeHomInit{\horstTypeLValue}{\horstOpAppgs{}{}}}
    \horstFreeVar{gt}{\horstTypeHomInit{\horstTypeLValue}{\horstOpAppgs{}{}}}
    \horstPremise{\horstPredAppMState{\horstParVarfid,0}{\horstFreeVarctx,\horstTUPINIT{},\horstFreeVargt,\horstFreeVarlt,\horstFreeVarmem,\horstFreeVartbl,\horstFreeVaratN,\horstFreeVargtN,\horstFreeVarmemN}}
    \horstPremise{\horstSimpleSumExp{OR}{\horstSelectorFunctionAppinterval{\horstParVaridx}{0,\horstOpAppgs{}{}},\horstSelectorFunctionAppglobalInLabelForImportedFunctionAndPosition{\horstParVarcg,\horstParVarig}{\horstParVarfid,\horstParVaridx}}{\horstAND{\horstOpAppflowsTo{}{\horstOpAppmkLabel{}{\horstParVarcg,\horstParVarig},\horstOpAppperspectiveOfCtx{}{\horstFreeVarctx}}}{\horstEQ{\horstOpApplabelOf{}{\horstACCESS{\horstFreeVargtN}{\horstParVaridx}}}{\horstConstructorAppIllegal{}}}}{{idx}{cg}{ig}}}
    \horstConclusion{\horstPredApptestNoninterferenceImportedFunctionGlobalsUnsat{\horstParVarfid,\horstParVarcid}{}}
  \end{horstClause}
\end{horstRule}
\begin{horstRule}{binOpRule}
  \horstParVar{fid}{\horstTypeint}
  \horstParVar{op}{\horstTypeint}
  \horstParVar{pc}{\horstTypeint}
  \horstSelectorFunctionInvocation{\horstSelectorFunctionAppfunctionIds{\horstParVarfid}{},\horstSelectorFunctionAppbinOps{\horstParVarop}{},\horstSelectorFunctionApppcsForFunctionIdAndOpcode{\horstParVarpc}{\horstParVarfid,\horstParVarop}}
  \begin{horstClause}
    \horstFreeVar{memN}{\horstTypeMemory}
    \horstFreeVar{atN}{\horstTypeHomInit{\horstTypeLValue}{\horstOpAppas{\horstParVarfid}{}}}
    \horstFreeVar{st}{\horstTypeHomInit{\horstTypeLValue}{\horstSUB{\horstOpAppss{\horstParVarfid,\horstParVarpc}{}}{2}}}
    \horstFreeVar{tbl}{\horstTypeTable}
    \horstFreeVar{ctx}{\horstTypeContext}
    \horstFreeVar{lt}{\horstTypeHomInit{\horstTypeLValue}{\horstOpAppls{\horstParVarfid}{}}}
    \horstFreeVar{mem}{\horstTypeMemory}
    \horstFreeVar{x}{\horstTypeLValue}
    \horstFreeVar{gtN}{\horstTypeHomInit{\horstTypeLValue}{\horstOpAppgs{}{}}}
    \horstFreeVar{y}{\horstTypeLValue}
    \horstFreeVar{gt}{\horstTypeHomInit{\horstTypeLValue}{\horstOpAppgs{}{}}}
    \horstPremise{\horstPredAppMState{\horstParVarfid,\horstParVarpc}{\horstFreeVarctx,\horstCONS{\horstFreeVarx}{\horstCONS{\horstFreeVary}{\horstFreeVarst}},\horstFreeVargt,\horstFreeVarlt,\horstFreeVarmem,\horstFreeVartbl,\horstFreeVaratN,\horstFreeVargtN,\horstFreeVarmemN}}
    \horstConclusion{\horstPredAppMState{\horstParVarfid,\horstADD{\horstParVarpc}{1}}{\horstFreeVarctx,\horstCONS{\horstOpAppraiseTo{}{\horstOpApplabelledBinOp{\horstParVarop}{\horstFreeVary,\horstFreeVarx},\horstOpApplabelOfCtx{}{\horstFreeVarctx}}}{\horstFreeVarst},\horstFreeVargt,\horstFreeVarlt,\horstFreeVarmem,\horstFreeVartbl,\horstFreeVaratN,\horstFreeVargtN,\horstFreeVarmemN}}
  \end{horstClause}
\end{horstRule}

\renewcommand{\horstTypeArray}[1]{
  \mleft( \horstTypeint \mapsto #1 \mright)
}

\newcommand{\horstTypeint}{\mathbb{Z}}
\newcommand{\horstTypeValue}{\mathbb{B}_{64}}
\newcommand{\horstTypeBVLXIV}{\mathbb{B}_{64}}
\newcommand{\horstTypeLValue}{\mathbb{V}}
\newcommand{\horstTypeFlowLabel}{\mathbb{L}}
\newcommand{\horstTypeLabel}{\mathcal{L}}
\newcommand{\horstTypeTablePrecision}{\textit{Precision}}
\newcommand{\horstConstructorAppTblPrecise}[1]{\textit{Precise}}
\newcommand{\horstConstructorAppTblImprecise}[1]{\textit{Imprecise}}
\newcommand{\horstParVarfunctionId}{\mathbf{fid}}
\newcommand{\horstParVarreturnSize}{\mathbf{rs}}
\newcommand{\horstParVarglobalsSize}{\mathbf{gs}}
\newcommand{\horstParVarargumentsSize}{\mathbf{as}}
\newcommand{\horstConstructorAppTopB}{\top_B}
\newcommand{\fromSI}[2]{\textsf{SI.#1}\mleft(#2\mright)}
\newcommand{\horstOpAppss}[2]{\fromSI{ss}{#1}}
\newcommand{\horstOpAppls}[2]{\fromSI{ls}{#1}}
\newcommand{\horstOpAppas}[2]{\fromSI{as}{#1}}
\newcommand{\horstOpAppcsmaxFor}[2]{\textsf{SI.cs}_\textsf{max}\mleft(#1\mright)}
\newcommand{\horstOpAppgs}[2]{\fromSI{gs}{#1}}
\newcommand{\horstOpApprs}[2]{\fromSI{rs}{#1}}

\ExplSyntaxOn
\NewDocumentCommand{\ruleGroup}{m m o}{
  \group_begin:
  \IfNoValueF{#1}{
    \seq_set_from_clist:Nn \l_tmpa_seq {#3}
   }
    \begin{clauseGroup}{#2}
      \group_begin:
        \int_set:Nn \l_tmpa_int {0}
        \int_set:Nn \l_tmpb_int {\int_use:c {g_horstRule_#1_clauseCount_int}}

        \int_while_do:nNnn {\l_tmpa_int}<{\l_tmpb_int}
        {
          \group_begin:
          \par
          \smallskip
          \seq_if_in:NVTF \l_tmpa_seq \l_tmpa_int {
            \addedInWanilla{
              \clauseBox{#1}{\int_use:N \l_tmpa_int}
            }
          }{
            \clauseBox{#1}{\int_use:N \l_tmpa_int}
          } 
          \group_end:
          \int_incr:N \l_tmpa_int
        }
      \group_end:
    \end{clauseGroup}
  \medskip
  \group_end:
}
\ExplSyntaxOff

\ExplSyntaxOn
\DefineBinaryInfixOperation{\horstBINOP}{70}{\mathbin{\circledast \c_math_subscript_token {\textbf{op}}}}
\DefineBinaryInfixOperation{\horstTRAPPINGBINOP}{70}{\mathbin{\circledast \c_math_subscript_token {\textbf{op}}}}
\DefineBinaryInfixOperation{\horstINSET}{30}{\in}
\DefineBinaryInfixOperation{\horstLABSNEQ}{30}{\mathrel{\neq}}
\DefineBinaryInfixOperation{\horstLABSEQ}{30}{\mathrel{=}}
\DefineBinaryInfixOperation{\horstLUB}{80}{\sqcup}
\DefineBinaryInfixOperation{\horstLUBV}{80}{\mathbin{\overset{v}{\sqcup}}}
\DefineBinaryInfixOperation{\horstLUBPC}{80}{\mathbin{\overset{\textbf{pc}}{\sqcup}}}
\DefineBinaryInfixOperation{\horstSHL}{45}{\mathrel{\ll}}
\DefineBinaryInfixOperation{\horstSHR}{45}{\mathrel{\gg}}
\DefineBinaryInfixOperation{\horstBWAND}{30}{\mathrel{\&}}
\DefineBinaryInfixOperation{\horstBWOR}{29}{\mathrel{\parallel}}

\newcommand{\horstOpApplabelledBinOp}[2]{
  \group_begin:
    \clist_set:Nn \l_tmpa_clist { #2 }

    \horstBINOP{\clist_item:Nn \l_tmpa_clist { 1 }}{\clist_item:Nn \l_tmpa_clist { 2 }}
  \group_end:
}

\DefineBinaryInfixOperation{\horstOALOOPGL}{30}{\mathrel{{}\c_math_subscript_token \textbf{fid}\mkern-3mu\overset{G}{\lesssim}\mkern-3mu \c_math_subscript_token \textbf{pc}}}
\DefineBinaryInfixOperation{\horstOALOOPLO}{30}{\mathrel{{}\c_math_subscript_token \textbf{fid}\mkern-3mu\overset{L}{\lesssim}\mkern-3mu \c_math_subscript_token \textbf{pc}}}
\DefineBinaryInfixOperation{\horstOALOOPME}{30}{\mathrel{{}\c_math_subscript_token \textbf{fid}\mkern-3mu\overset{M}{\lesssim}\mkern-3mu \c_math_subscript_token \textbf{pc}}}

\DefineBinaryInfixOperation{\horstOACALLGL}{30}{\mathrel{{}\c_math_subscript_token \textbf{cid}\mkern-3mu\overset{G}{\lesssim}}}
\DefineBinaryInfixOperation{\horstOACALLAR}{30}{\mathrel{{}\c_math_subscript_token \textbf{cid}\mkern-3mu\overset{L}{\lesssim}}}
\DefineBinaryInfixOperation{\horstOACALLME}{30}{\mathrel{{}\c_math_subscript_token \textbf{cid}\mkern-3mu\overset{M}{\lesssim}}}

\newcommand{\horstOpAppoverApproximateLoopGlobals}[2]{
  \group_begin:
    \clist_set:Nn \l_tmpa_clist { #2 }

    \horstOALOOPGL{\clist_item:Nn \l_tmpa_clist { 1 }}{\clist_item:Nn \l_tmpa_clist { 2 }}
  \group_end:
}

\newcommand{\horstOpAppoverApproximateLoopLocals}[2]{
  \group_begin:
    \clist_set:Nn \l_tmpa_clist { #2 }

    \horstOALOOPLO{\clist_item:Nn \l_tmpa_clist { 1 }}{\clist_item:Nn \l_tmpa_clist { 2 }}
  \group_end:
}

\newcommand{\horstOpAppoverApproximateLoopMemory}[2]{
  \group_begin:
    \clist_set:Nn \l_tmpa_clist { #2 }

    \horstOALOOPME{\clist_item:Nn \l_tmpa_clist { 1 }}{\clist_item:Nn \l_tmpa_clist { 2 }}
  \group_end:
}

\newcommand{\horstOpAppoverApproximateCallGlobals}[2]{
  \group_begin:
    \clist_set:Nn \l_tmpa_clist { #2 }

    \horstOACALLGL{\clist_item:Nn \l_tmpa_clist { 1 }}{\clist_item:Nn \l_tmpa_clist { 2 }}
  \group_end:
}

\newcommand{\horstOpAppoverApproximateCallArguments}[2]{
  \group_begin:
    \clist_set:Nn \l_tmpa_clist { #2 }

    \horstOACALLAR{\clist_item:Nn \l_tmpa_clist { 1 }}{\clist_item:Nn \l_tmpa_clist { 2 }}
  \group_end:
}

\newcommand{\horstOpAppoverApproximateCallMemory}[2]{
  \group_begin:
    \clist_set:Nn \l_tmpa_clist { #2 }

    \horstOACALLME{\clist_item:Nn \l_tmpa_clist { 1 }}{\clist_item:Nn \l_tmpa_clist { 2 }}
  \group_end:
}

\newcommand{\horstOpApptrappingBinOp}[2]{
  \group_begin:
    \clist_set:Nn \l_tmpa_clist { #2 }

  \horstTRAPPINGBINOP{\clist_item:Nn \l_tmpa_clist { 1 }}{\clist_item:Nn \l_tmpa_clist { 2 }}
  \group_end:
}

\newcommand{\horstOpAppraiseCtxTo}[2]{
  \group_begin:
    \clist_set:Nn \l_tmpa_clist { #2 }

    \horstLUBPC{\clist_item:Nn \l_tmpa_clist { 1 }}{\clist_item:Nn \l_tmpa_clist { 2 }}
  \group_end:
}

\newcommand{\horstOpAppjoinTuples}[2]{
  \clist_set:Nn \l_tmpa_clist { #2 }

  \clist_item:Nn \l_tmpa_clist { 3 } = \horstOpApp{join}{}{\clist_item:Nn \l_tmpa_clist { 1 }, \clist_item:Nn \l_tmpa_clist { 2 }}
}

\newcommand{\horstOpAppjoinMem}[2]{
  \clist_set:Nn \l_tmpa_clist { #2 }

  \clist_item:Nn \l_tmpa_clist { 3 } = \horstOpApp{join}{}{\clist_item:Nn \l_tmpa_clist { 1 }, \clist_item:Nn \l_tmpa_clist { 2 }}
}

\newcommand{\horstOpAppflub}[2]{
  \clist_set:Nn \l_tmpa_clist { #2 }

  \group_begin:

  \renewcommand{\horstTUPINIT}[1]{ 
    \clist_set:Nn \l_tmpb_clist { ##1 }
    \clist_use:Nn \l_tmpb_clist {\sqcup}
  }

  \clist_item:Nn \l_tmpa_clist { 1 }

  \group_end:
}

\newcommand{\horstOpAppvalueOf}[2]{
  \group_begin:
    \clist_set:Nn \l_tmpa_clist { #2 }
    \access{\clist_item:Nn \l_tmpa_clist { 1 }}{val}
  \group_end:
}

\newcommand{\horstOpAppvaluesOf}[2]{
  \group_begin:
    \clist_set:Nn \l_tmpa_clist { #2 }
    \clist_item:Nn \l_tmpa_clist { 1 } . \textsf{value}
  \group_end:
}

\newcommand{\horstOpAppval}[2]{
  \group_begin:
    \clist_set:Nn \l_tmpa_clist { #1 }
    \clist_item:Nn \l_tmpa_clist { 2 }
  \group_end:
}

\newcommand{\horstOpAppdrop}[2]{
  \group_begin:
    \clist_set:Nn \l_tmpa_clist { #2 }
    \clist_set:Nn \l_tmpb_clist { #1 }
    \horstOpApp{unwind}{\clist_item:Nn \l_tmpb_clist { 2 }}{\clist_item:Nn \l_tmpa_clist { 1 }}
  \group_end:
}

\newcommand{\horstOpApplabelOf}[2]{
  \group_begin:
    \clist_set:Nn \l_tmpa_clist { #2 }
    \access{\clist_item:Nn \l_tmpa_clist { 1 }}{label}
  \group_end:
}

\newcommand{\horstOpApplabelsOf}[2]{
  \group_begin:
    \clist_set:Nn \l_tmpa_clist { #2 }
    \clist_item:Nn \l_tmpa_clist { 1 } . \textsf{label}
  \group_end:
}

\newcommand{\horstOpAppload}[2]{
  \group_begin:
    \horstOpApp{combine}{}{#2}
  \group_end:
}

\newcommand{\horstOpAppflowsTo}[2]{
  \group_begin:
    \clist_set:Nn \l_tmpa_clist { #2 }
    \clist_item:Nn \l_tmpa_clist { 1 } \sqsubseteq \clist_item:Nn \l_tmpa_clist { 2 }
  \group_end:
}

\newcommand{\horstOpApplub}[2]{
  \group_begin:
    \bigsqcup \mleft( #2 \mright)
  \group_end:
}

\newcommand{\horstOpAppglb}[2]{
  \group_begin:
    \clist_set:Nn \l_tmpa_clist { #2 }
    \clist_item:Nn \l_tmpa_clist { 1 } \sqcap \clist_item:Nn \l_tmpa_clist { 2 } 
  \group_end:
}

\newcommand{\horstOpAppset}[2]{
  \group_begin:
    \clist_set:Nn \l_tmpa_clist { #1 }
    \clist_set:Nn \l_tmpb_clist { #2 }
    \clist_item:Nn \l_tmpb_clist { 2 } \mleft[ \clist_item:Nn \l_tmpa_clist { 2 } \leftarrow \clist_item:Nn \l_tmpb_clist { 1 } \mright]
  \group_end:
}

\newcommand{\horstOpAppraiseTo}[2]{
  \group_begin:
    \clist_set:Nn \l_tmpa_clist { #2 }
    \horstLUBV{\clist_item:Nn \l_tmpa_clist { 1 }}{\clist_item:Nn \l_tmpa_clist { 2 }}
  \group_end:
}

\newcommand{\horstOpApplabelOfCtx}[2]{
  \group_begin:
    \clist_set:Nn \l_tmpa_clist { #2 }
    \access{\clist_item:Nn \l_tmpa_clist { 1 }}{label}
  \group_end:
}

\newcommand{\horstOpAppfromJustV}[2]{
  \group_begin:
    \clist_set:Nn \l_tmpa_clist { #2 }
    \clist_item:Nn \l_tmpa_clist { 1 }
  \group_end:
}

\newcommand{\horstOpAppisJustV}[2]{
  \group_begin:
    \clist_set:Nn \l_tmpa_clist { #2 }
    \horstINSET{\clist_item:Nn \l_tmpa_clist { 1 }}{ \horstTypeValue }
  \group_end:
}

\newcommand{\horstOpApplubIII}[2]{
  \group_begin:
    \clist_set:Nn \l_tmpa_clist { #2 }
    \clist_item:Nn \l_tmpa_clist { 1 } \sqcup \clist_item:Nn \l_tmpa_clist { 2 }  \sqcup \clist_item:Nn \l_tmpa_clist { 3 }
  \group_end:
}

\newcommand{\horstOpAppmkLConst}[2]{
  \group_begin:
    \clist_set:Nn \l_tmpa_clist { #1 }
    \clist_item:Nn \l_tmpa_clist { 1 }
  \group_end:
}

\newcommand{\horstOpAppmkConst}[2]{
  \group_begin:
    \clist_set:Nn \l_tmpa_clist { #1 }
    \clist_item:Nn \l_tmpa_clist { 1 }
  \group_end:
}

\newcommand{\horstOpAppiadd}[2]{
  \group_begin:
    \clist_set:Nn \l_tmpa_clist { #2 }
    \horstADD{\clist_item:Nn \l_tmpa_clist { 1 }}{\clist_item:Nn \l_tmpa_clist { 2 }}
  \group_end:
}

\newcommand{\horstOpAppimul}[2]{
  \group_begin:
    \clist_set:Nn \l_tmpa_clist { #2 }
    \horstMUL{\clist_item:Nn \l_tmpa_clist { 1 }}{\clist_item:Nn \l_tmpa_clist { 2 }}
  \group_end:
}

\newcommand{\horstOpAppisub}[2]{
  \group_begin:
    \clist_set:Nn \l_tmpa_clist { #2 }
    \horstSUB{\clist_item:Nn \l_tmpa_clist { 1 }}{\clist_item:Nn \l_tmpa_clist { 2 }}
  \group_end:
}

\newcommand{\horstOpAppiltu}[2]{
  \group_begin:
    \clist_set:Nn \l_tmpa_clist { #2 }
    \horstLT{\clist_item:Nn \l_tmpa_clist { 1 }}{\clist_item:Nn \l_tmpa_clist { 2 }}
  \group_end:
}

\newcommand{\horstOpAppishl}[2]{
  \group_begin:
    \clist_set:Nn \l_tmpa_clist { #2 }
    \horstSHL{\clist_item:Nn \l_tmpa_clist { 1 }}{\clist_item:Nn \l_tmpa_clist { 2 }}
  \group_end:
}

\newcommand{\horstOpAppilshr}[2]{
  \group_begin:
    \clist_set:Nn \l_tmpa_clist { #2 }
    \horstSHR{\clist_item:Nn \l_tmpa_clist { 1 }}{\clist_item:Nn \l_tmpa_clist { 2 }}
  \group_end:
}

\newcommand{\horstOpAppiand}[2]{
  \group_begin:
    \clist_set:Nn \l_tmpa_clist { #2 }
    \horstBWAND{\clist_item:Nn \l_tmpa_clist { 1 }}{\clist_item:Nn \l_tmpa_clist { 2 }}
  \group_end:
}

\newcommand{\horstOpApplabsneq}[2]{
  \group_begin:
    \clist_set:Nn \l_tmpa_clist { #2 }
    \horstLABSNEQ{\clist_item:Nn \l_tmpa_clist { 1 }}{\clist_item:Nn \l_tmpa_clist { 2 }}
  \group_end:
}

\newcommand{\horstOpAppabsneq}[2]{
  \group_begin:
    \clist_set:Nn \l_tmpa_clist { #2 }
    \horstNEQ{\clist_item:Nn \l_tmpa_clist { 1 }}{\clist_item:Nn \l_tmpa_clist { 2 }}
  \group_end:
}

\newcommand{\horstOpAppabseq}[2]{
  \group_begin:
    \clist_set:Nn \l_tmpa_clist { #2 }
    \horstEQ{\clist_item:Nn \l_tmpa_clist { 1 }}{\clist_item:Nn \l_tmpa_clist { 2 }}
  \group_end:
}

\newcommand{\horstOpApplabseq}[2]{
  \group_begin:
    \clist_set:Nn \l_tmpa_clist { #2 }
    \horstLABSEQ{\clist_item:Nn \l_tmpa_clist { 1 }}{\clist_item:Nn \l_tmpa_clist { 2 }}
  \group_end:
}

\newcommand{\horstOpAppreverse}[2]{
  \group_begin:
    \horstOpApp{reverse}{}{#2}
  \group_end:
}

\newcommand{\horstOpApprestrictLabel}[2]{
  \group_begin:
    \horstOpApp{restrictLabel}{}{#2}
  \group_end:
}

\newcommand{\horstOpApplowEq}[2]{
  \clist_set:Nn \l_tmpa_clist { #2 }

  \loweq{\clist_item:Nn \l_tmpa_clist { 1 }}{\clist_item:Nn \l_tmpa_clist { 2 }}
}

\newcommand{\horstOpApplowEqMem}[2]{
  \clist_set:Nn \l_tmpa_clist { #2 }

  \loweq{\clist_item:Nn \l_tmpa_clist { 1 }}{\clist_item:Nn \l_tmpa_clist { 2 }}
}

\newcommand{\horstOpAppperspectiveOfCtx}[2]{
  \clist_set:Nn \l_tmpa_clist { #2 }

  \access{\clist_item:Nn \l_tmpa_clist { 1 }}{attacker}
}

\newcommand{\horstConstructorAppLegal}{\L}
\newcommand{\horstConstructorAppIllegal}{\H}

\ExplSyntaxOff

\horstDeclarePredicateApplications
\horstDeclareSelectorFunctionApplications
\horstDeclareTypes
\horstDeclareOperationApplications

\ExplSyntaxOn
\newcommand{\horstFreeVarfrom}{\textit{from}}
\newcommand{\horstFreeVarresI}{\textit{res} \c_math_subscript_token {1}}
\newcommand{\horstFreeVarresII}{\textit{res} \c_math_subscript_token {2}}
\newcommand{\horstFreeVarctxI}{\textit{ctx} \c_math_subscript_token {1}}
\newcommand{\horstFreeVarctxII}{\textit{ctx} \c_math_subscript_token {2}}
\newcommand{\horstFreeVarctxIII}{\textit{ctx} \c_math_subscript_token {3}}
\newcommand{\horstFreeVarmemI}{\textit{mem} \c_math_subscript_token {1}}
\newcommand{\horstFreeVarmemII}{\textit{mem} \c_math_subscript_token {2}}
\newcommand{\horstFreeVarmemIII}{\textit{mem} \c_math_subscript_token {3}}
\newcommand{\horstFreeVarbrI}{\textit{br} \c_math_subscript_token {1}}
\newcommand{\horstFreeVarbrII}{\textit{br} \c_math_subscript_token {2}}
\newcommand{\horstFreeVarbrIII}{\textit{br} \c_math_subscript_token {3}}
\newcommand{\horstFreeVargt}{\textit{gt}}
\newcommand{\horstFreeVargtI}{\textit{gt} \c_math_subscript_token {1}}
\newcommand{\horstFreeVargtII}{\textit{gt} \c_math_subscript_token {2}}
\newcommand{\horstFreeVargtIII}{\textit{gt} \c_math_subscript_token {3}}
\newcommand{\horstFreeVarltI}{\textit{lt} \c_math_subscript_token {1}}
\newcommand{\horstFreeVarltII}{\textit{lt} \c_math_subscript_token {2}}
\newcommand{\horstFreeVarltIII}{\textit{lt} \c_math_subscript_token {3}}
\newcommand{\horstFreeVarstI}{\textit{st} \c_math_subscript_token {1}}
\newcommand{\horstFreeVarstII}{\textit{st} \c_math_subscript_token {2}}
\newcommand{\horstFreeVarstIII}{\textit{st} \c_math_subscript_token {3}}
\newcommand{\horstFreeVaratI}{\textit{at} \c_math_subscript_token {1}}
\newcommand{\horstFreeVaratII}{\textit{at} \c_math_subscript_token {2}}
\newcommand{\horstFreeVaratIII}{\textit{at} \c_math_subscript_token {3}}
\newcommand{\horstFreeVaratN}{\textit{at} \c_math_subscript_token {0}}
\newcommand{\horstFreeVargtN}{\textit{gt} \c_math_subscript_token {0}}
\newcommand{\horstFreeVarrtI}{\textit{rt} \c_math_subscript_token {1}}
\newcommand{\horstFreeVarrtII}{\textit{rt} \c_math_subscript_token {2}}
\newcommand{\horstFreeVarrtIII}{\textit{rt} \c_math_subscript_token {3}}
\newcommand{\horstFreeVarmemN}{\textit{mem} \c_math_subscript_token {0}}
\newcommand{\horstFreeVargtNI}{\textit{gt} \c_math_subscript_token {0 \c_math_subscript_token 1}}
\newcommand{\horstFreeVargtNII}{\textit{gt} \c_math_subscript_token {0 \c_math_subscript_token 2}}
\newcommand{\horstFreeVaratNI}{\textit{at} \c_math_subscript_token {0 \c_math_subscript_token 1}}
\newcommand{\horstFreeVaratNII}{\textit{at} \c_math_subscript_token {0 \c_math_subscript_token 2}}
\newcommand{\horstFreeVarmemNN}{\textit{mem} \c_math_subscript_token {0 \c_math_subscript_token 0}}
\newcommand{\horstFreeVarmemNI}{\textit{mem} \c_math_subscript_token {0 \c_math_subscript_token 1}}
\newcommand{\horstFreeVarmemNII}{\textit{mem} \c_math_subscript_token {0 \c_math_subscript_token 2}}
\newcommand{\horstFreeVarmemNIII}{\textit{mem} \c_math_subscript_token {0 \c_math_subscript_token 3}}
\newcommand{\horstFreeVartblI}{\textit{tbl} \c_math_subscript_token {1}}
\newcommand{\horstFreeVartblII}{\textit{tbl} \c_math_subscript_token {2}}
\newcommand{\horstFreeVartblIII}{\textit{tbl} \c_math_subscript_token {3}}

\newcommand{\horstFreeVarngt}{\textit{gt}'}
\newcommand{\horstFreeVarnlt}{\textit{lt}'}
\newcommand{\horstFreeVarnmem}{\textit{mem}'} 

\newcommand{\horstFreeVarrgtI}{\textit{rgt} \c_math_subscript_token {1}}
\newcommand{\horstFreeVarrgtII}{\textit{rgt} \c_math_subscript_token {2}}

\newcommand{\horstFreeVarreslabel}{\textit{reslabel}}
\newcommand{\horstFreeVarreslabelI}{\textit{reslabel} \c_math_subscript_token {1}}
\newcommand{\horstFreeVarreslabelII}{\textit{reslabel} \c_math_subscript_token {2}}
\ExplSyntaxOff

\newcommand*\circled[1]{\tikz[baseline=(char.base)]{
  \node[shape=rectangle,rounded corners=1.6mm, minimum width=3.2mm, minimum height=3.2mm, fill, inner ysep=0.5mm, inner xsep=0.8mm, 
  black] (char) {\normalsize{\textbf{\textsf{\color{white}{#1}}}}};}}

\ExplSyntaxOn
\int_new:N \g_clause_group_clause_counter_int
\ExplSyntaxOff


\ExplSyntaxOn
\int_gset:Nn\g_clause_group_clause_counter_int{1}
\newenvironment{clauseGroup}[1]{
    \newcommand{\clauseGroupLabel}{#1}
    \NewCommandCopy{\oldClauseBox}{\clauseBox}
    \renewcommand{\clauseBox}[2]{
      \begin{minipage}[t]{\textwidth}
        \begin{minipage}{5mm}
          \hfill
          \circled{\int_use:N \g_clause_group_clause_counter_int}
        \end{minipage}

      \iow_now:cx { @auxout }
      {
        \token_to_str:N \ExplSyntaxOn
        ^^J
        \token_to_str:N\newlabel{cls: ##1 : ##2}{{\token_to_str:N \circled{\int_use:N \g_clause_group_clause_counter_int}}{\thepage}{}{}{}}
        ^^J
        \token_to_str:N \ExplSyntaxOff
      }

      \hspace{0.1cm}
      \oldClauseBox{##1}{##2}
      \int_gincr:N\g_clause_group_clause_counter_int
      \end{minipage}
    }

  \begin{minipage}{.80\textwidth}
}
{
  \end{minipage}
  \vrule
  \begin{minipage}{.005\textwidth}
    \phantom{}
  \end{minipage}
  \begin{minipage}{.18\textwidth}
    \clauseGroupLabel
  \end{minipage}
}
\ExplSyntaxOff

\begin{horstRule}{brIfPseudoRule}
  \horstParVar{fid}{\horstTypeint}
  \horstParVar{pc}{\horstTypeint}
  \horstParVar{br}{\horstTypeint}
  \horstParVar{n}{\horstTypeint}
  \horstSelectorFunctionInvocation{\horstSelectorFunctionAppfunctionIds{\horstParVarfid}{},\horstSelectorFunctionApppcsForFunctionIdAndOpcode{\horstParVarpc}{\horstParVarfid,\horstConstBRIF},\horstSelectorFunctionAppbreakDestinations{\horstParVarbr}{\horstParVarfid,\horstParVarpc},\horstSelectorFunctionAppgetAmountOfReturnValuesInBlock{\horstParVarn}{\horstParVarfid,\horstParVarpc}}
  \begin{horstClause}
    \horstFreeVar{memN}{\horstTypeMemory}
    \horstFreeVar{atN}{\horstTypeHomInit{\horstTypeLValue}{\horstOpAppas{\horstParVarfid}{}}}
    \horstFreeVar{st}{\horstTypeHomInit{\horstTypeLValue}{\horstSUB{\horstOpAppss{\horstParVarfid,\horstParVarpc}{}}{1}}}
    \horstFreeVar{tbl}{\horstTypeTable}
    \horstFreeVar{ctx}{\horstTypeContext}
    \horstFreeVar{lt}{\horstTypeHomInit{\horstTypeLValue}{\horstOpAppls{\horstParVarfid}{}}}
    \horstFreeVar{mem}{\horstTypeMemory}
    \horstFreeVar{x}{\horstTypeLValue}
    \horstFreeVar{gtN}{\horstTypeHomInit{\horstTypeLValue}{\horstOpAppgs{}{}}}
    \horstFreeVar{gt}{\horstTypeHomInit{\horstTypeLValue}{\horstOpAppgs{}{}}}
    \horstPremise{\horstLT{\horstParVarbr}{\horstParVarpc}}
    \horstPremise{\horstPredAppMState{\horstParVarfid,\horstParVarpc}{\horstFreeVarctx,\horstCONS{\horstFreeVarx}{\horstFreeVarst},\horstFreeVargt,\horstFreeVarlt,\horstFreeVarmem,\horstFreeVartbl,\horstFreeVaratN,\horstFreeVargtN,\horstFreeVarmemN}}
    \horstPremise{\horstEQ{\horstOpApplabelOfCtx{}{\horstFreeVarctx}}{\horstConstructorAppLegal{}}}
    \horstPremise{\horstEQ{\horstOpApplabelOf{}{\horstFreeVarx}}{\horstConstructorAppIllegal{}}}
    \horstPremise{\horstOpAppabseq{}{\horstOpAppvalueOf{}{\horstFreeVarx},\horstOpAppmkConst{0}{}}}
    \horstConclusion{\horstPredAppMState{\horstParVarfid,\horstParVarpc}{\horstOpAppraiseCtxTo{\horstParVarpc}{\horstFreeVarctx,\horstOpApplabelOf{}{\horstFreeVarx}},\horstCONS{\horstFreeVarx}{\horstFreeVarst},\horstFreeVargt,\horstFreeVarlt,\horstFreeVarmem,\horstFreeVartbl,\horstFreeVaratN,\horstFreeVargtN,\horstFreeVarmemN}}
  \end{horstClause}
\end{horstRule}

\newcommand{\WA}{WebAssembly\xspace}
\newcommand{\wasm}{Wasm\xspace}

\newcommand{\etal}{et al.\xspace}
\newcommand{\etc}{etc.\xspace}
\newcommand{\ie}{i.e.\xspace}
\newcommand{\eg}{e.g.\xspace}
\newcommand{\challenge}[1]{\textbf{#1.}}

\newcommand{\mkfit}[1]{
\noindent
\resizebox{0.49\textwidth}{!}{
\begin{minipage}{\linewidth}
#1
\end{minipage}
}
}

\begin{abstract}
\WA (\wasm) is rapidly gaining popularity as a distribution format for software components embedded in various security-critical domains.
Unfortunately, despite its prudent design, \WA's primary use case as a compilation target for memory-unsafe languages leaves some possibilities for memory corruption.
Independently of that, \wasm is an inherently interesting target for information flow analysis due to its interfacing role.

Both the information flows between a Wasm module and its embedding context, as well as the memory integrity within a module, can be described by the hyperproperty noninterference. 
So far, no sound, fully static noninterference analysis for \wasm has been presented, but sound reachability analyses were. 
This work presents a novel and general approach to lift reachability analyses to noninterference by tracking taints on values and using value-sensitive, relational reasoning to remove them when appropriate.
We implement this approach in \tool{}, the first automatic, sound, and fully static noninterference analysis for WebAssembly, and demonstrate its performance and precision by verifying memory integrity and other noninterference properties with several synthetic and real-world benchmarks.
\end{abstract}

\maketitle
\renewcommand{\horstBoxWidth}{13.5cm}

\section{Introduction}

\WA (often shortened to \wasm)~\cite{DBLP:conf/pldi/HaasRSTHGWZB17,wasm-core-1.0,wasm-live-document} is a new compilation target rapidly gaining popularity.
Originally designed to speed up programs run in web browsers, its isolation guarantees and prudent design choices have helped to popularize it in a variety of fields,
such as plugin systems \cite{WasmPlugins1, WasmPlugins2}, smart contracts \cite{DBLP:conf/internetware/HuangJC20,WasmSmartContract1,WasmSmartContract2}, and edge computing \cite{DBLP:conf/eurosys/NiekeAK21, WasmEdge1}.

\WA's isolation and memory safety guarantees, however, only govern its interaction with the embedding context:
Within a \WA module, all functions of a module share the same mutable memory, which is dangerous if the \WA code was compiled from a memory-unsafe language. 
As Lehmann et al.\ have shown in previous research~\cite{DBLP:conf/uss/0002KP20}, this can be used to corrupt arbitrary (including ``constant'') data stored in the memory, which can break the layout and control flow assumptions of the high-level language compiled to \WA, leading to a variety of possible exploits.
A lack of mitigation strategies such as ASLR and read-only memory pages exacerbates this.

Consider, for example, the code in \autoref{fig:integrityc}, which might implement part of a game that is played on an 8-by-8 grid (chess, checkers, etc.)~\cite{wappler}.
Due to the insufficient restrictions on the values in \autoref{line:integrity-if} (\ccode{x} and \ccode{y} could be negative), the assignment in \autoref{line:integrity-write} could overwrite \ccode{trusted}, which shares the (mutable) memory with \ccode{game_state}. 
Later, in \autoref{line:integrity-eval}, \ccode{trusted} is passed to \ccode{eval}, a stand-in for any privileged function that expects to be called with trusted inputs.
In the context of a web application (where \ccode{eval} could, for example, modify the DOM), such a vulnerability could lead to an XSS attack.
Properly restricting \ccode{x} and \ccode{y} (e.g., by declaring them \ccode{unsigned}) would mitigate all unintended flows to \ccode{trusted}.

\begin{figure}
\begin{lstlisting}[language=C, xleftmargin=5.0ex, escapechar=|,]
static const char* trusted = "TRUSTED";
char game_state[64] = ...;
extern void eval(const char*);
void update_game_state(int x, int y, char c) {
  if (x < 8 && y < 8) { |\label{line:integrity-if}|
    game_state[y * 8 + x] = c; |\label{line:integrity-write}|
  }
  eval(trusted); |\label{line:integrity-eval}|
}
\end{lstlisting}
  \caption{Example of a vulnerable function (C).}
  \label{fig:integrityc}
\end{figure}

This kind of integrity problem can be described in terms of \emph{noninterference}~\cite{DBLP:conf/sp/GoguenM82a}, an important hyperproperty\cite{DBLP:conf/csfw/ClarksonS08} concerning information flows.
Noninterference, however, can not only verify the integrity of information but can also be used to verify its confidentiality, a crucial property for several of \wasm's application domains. 
As an example, consider the code in \autoref{fig:confidentialityc}, which illustrates a (secure) edge computing function.
The runtime populates a struct \ccode{message}, which may contain a (secret) 128-bit session ID and 384 bits of (public) payload, and calls \ccode{process}.
\ccode{process} can be executed on testing instances (\ccode{IS_PROD} is \ccode{false}) or in production (\ccode{IS_PROD} is \ccode{true}), where \ccode{IS_PROD} is considered trusted and public.
When \ccode{message} has session data attached, an imported, trusted function \ccode{authorize_and_execute} is executed. 
\ccode{authorize_and_execute} can access the entire memory (including \ccode{message}) and uses the secret data to check if the session is authorized to execute the operation described in \ccode{payload}.
To increase developer productivity, we allow for executing messages without session data on testing instances.
\ccode{untrusted_log} might be an untrusted third-party component that is not supposed to be handed any secret information and can also access the entire memory.
Therefore, we zero out all secret information, if present, before \ccode{untrusted_log} is called.
An attacker potentially monitoring the output of \ccode{untrusted_log} cannot learn the contents of \ccode{session} as they are zero in any case.

\begin{figure}
\begin{lstlisting}[language=C, xleftmargin=5.0ex]
struct {
  uint64_t session[2];    // secret-untrusted
  uint8_t payload[384];   // public-untrusted
} message;
void sanitize(void) {
  message.session[0] = 0;
  message.session[1] = 0;
}
uint64_t has_session(void) {
  return message.session[0] | message.session[1];
}
extern void authorize_and_execute(void); //mem: SU
extern void untrusted_log(void);         //mem: PU
extern uint32_t IS_PROD; 

void process(void) {
  if(has_session() || !IS_PROD) {
    authorize_and_execute();
    sanitize();
  }
  untrusted_log();
}

\end{lstlisting}
  \caption{A difficult-to-analyze, noninterferent function (C).}
  \label{fig:confidentialityc}
\end{figure}

\subsection{Related Work}
\label{sec:introduction-related-work}
In the following, we summarize the state-of-the-art in information flow analysis and its limitations, describing why they hinder the deployment of off-the-shelf techniques in the context of \wasm. 

Broadly speaking, sound noninterference analyses can be categorized into two families:
The first forbids writing to ``low'' (i.e., public or trusted) locations after branching on ``high'' (i.e., secret or untrusted) values (like with security type systems \cite{DBLP:journals/jsac/SabelfeldM03,DBLP:journals/mscs/BarthePR13,DBLP:conf/aplas/KobayashiS02} and dependency analyses \cite{DBLP:journals/ijisec/HammerS09,DBLP:journals/tosem/SneltingRK06}), or forbids branching on ``high'' values in general (like with synchronous product programs \cite{DBLP:conf/fm/ZaksP08, DBLP:conf/fm/BartheCK11}).
This effectively means that safe programs like the one in \autoref{fig:confidentialityc} have to be rejected, as a write (\ccode{sanitize()}) is executed depending on a secret (\ccode{has_session()}).
The other family, including (sequential) self-composition \cite{DBLP:conf/csfw/BartheDR04,DBLP:conf/spc/DarvasHS05,DBLP:conf/sas/TerauchiA05}, allows arbitrary control flows but is already prohibitively expensive when applied to high-level languages \cite{DBLP:conf/fm/BartheCK11,DBLP:conf/sas/TerauchiA05}, which limits its applicability for low-level targets such as \WA.
Additionally, these approaches crucially rely on precise reasoning, thus disallowing the introduction of abstractions in the name of runtime performance--a strategy often necessary in the static analysis of low-level code.
Taint tracking\footnote{The term ``taint tracking'' is not used homogeneously in the literature. In the formulation of Schoepe et al.~\cite{DBLP:conf/eurosp/SchoepeBPS16}, taints only attach to values, not to the ``control flow'', which means that only explicit flows can be tracked. Austin et al.~\cite{DBLP:conf/popl/AustinF12} demonstrate that some implicit flows are ignored even when tainting the control flow (and all values updated while the control flow is tainted). The approach of Bond et al.~\cite{DBLP:conf/uss/BondHKLLPRST17} is sound, as it uses taints to forbid branches on secret values. This is appropriate for their use case of preventing \emph{timing side-channels} and places it firmly in the first family. Bernardeschi's and De Francesco's \cite{DBLP:conf/vmcai/BernardeschiF02} approach is not precisely ``taint tracking'', but conceptually similar. It is sound, but value-insensitive and thus unable to verify the code in \autoref{fig:confidentialityc}.
 } is an approach that has been successfully applied in low-level contexts~\cite{DBLP:conf/uss/BondHKLLPRST17} but is unfortunately unsound in the presence of implicit flows \cite{DBLP:conf/eurosp/SchoepeBPS16, DBLP:conf/popl/AustinF12}.

For \WA in particular, information flow has, to our knowledge, been treated in two publications:
Wassail~\cite{DBLP:conf/scam/StievenartR20}, which provides summaries that describe the information flow within and between functions in the style of taint tracking, 
and SecWasm~\cite{DBLP:conf/sas/BastysASS22}, which introduces a hybrid verification technique combining a type system with a dynamic component to handle memory labels.
These works fall short of tackling a few fundamental challenges in Wasm noninterference analysis, which prevents them from correctly analyzing the two example programs, as we discuss below.

\challenge{Reasoning on the memory level}
\WA supports named and typed variables, but only for four primitive types (integral and floating point values in 32- and 64-bit);
all other values (such as \ccode{game_state}) have to be stored in the untyped linear memory.
This lack of structure in the linear memory complicates static analyses of information flow, as the memory is big (the smallest possible size is 64 KiB), can grow during execution, and allows unaligned accesses for various, differently sized types.
If the exact addresses of accessed memory cells are not modeled, one risks either missing genuine or reporting spurious flows.
In fact,  Wassail ignores some flows to the memory for the sake of performance and reducing false positives, thereby losing soundness, whereas SecWasm proposes a dynamic component that handles memory labels, thereby introducing runtime overhead and losing the ability to analyze programs statically.
Consider, for example, \autoref{fig:integrityc}.
Given that \ccode{x}, \ccode{y}, and \ccode{c} are untrusted, an attacker can always modify part of the memory.
To uncover the real problem (namely that \ccode{trusted} can be modified when only \ccode{game_state} is meant to, due to the insufficient restrictions on \ccode{x} and \ccode{y}) and to prove the fixed version noninterferent, it is necessary to differentiate between different (linear) memory regions and precisely model the values of \ccode{x} and \ccode{y}.

\challenge{Reasoning about values} 
Tracking values precisely not only enables us to reason over memory on a more granular level but is, at times, also necessary to verify noninterference, as shown in \autoref{fig:confidentialityc}.
The control flow in \ccode{process} branches on secret information (the return value of \ccode{has_session}).
Therefore, any assignments in the body of the \ccode{if} (including \ccode{sanitize}) could potentially lead to an implicit flow, which type-system-related approaches such as SecWasm would flag.
Only by tracking the values in \ccode{message}, which neither SecWasm nor Wassail support, can we ensure that, indeed, all secrets are set to zero, and an attacker who can observe the memory when \ccode{untrusted_log} is called can learn no secrets. 

\challenge{Balancing performance, soundness, and precision}
Besides precision and soundness, a critical challenge for a noninterference analysis in Wasm is performance.
Specifically, noninterference requires relating different executions of a program, which is notoriously expensive by itself. Doing that while ensuring value sensitivity in presence of an unstructured memory adds insult to\penalty100{}~injury.

Given the limitations of existing approaches, we pose the following research question:
\emph{
    Can we design an efficient yet sound and precise static analysis technique for information flow, which can, in particular, handle the specific challenges of \WA?
    }

\subsection{Our contributions}
In this work, we answer affirmatively by introducing a fundamentally novel and generic static analysis technique for noninterference, which we specifically instantiate for \WA.

Our approach can be summarized as follows: we extend an existing flow-sensitive, value-sensitive reachability analysis in the style of an abstract interpretation (in our case, \wappler~\cite{wappler}) with taint-tracking.
More specifically, we taint the outputs of computations with the combined taints of their inputs and the control flow if it is conditioned on a tainted value. 
This part of the analysis tells us which values can appear during the execution at different locations and whether they are tainted. 
The innovation of our approach lies in the fact that at certain points during the execution (e.g., after exiting an $\mycode{if}\cdot\mycode{then}\cdot\mycode{else}$-statement), we examine taints and values in all different related runs and adjust them.
This allows us to achieve soundness in the case of missing taints and increase precision by removing spurious taints if all related runs agree on a tainted value, such as in \autoref{fig:confidentialityc}. 
Additionally, the labeling system allows differentiating between ostensible flows introduced by imprecisions in the reachability analysis (which often are a desirable tradeoff to enhance runtime performance) and actual flows.

While we instantiate our analysis technique in Wasm, given its growing popularity and lack of effective verification techniques, the approach presented here is generic and applicable to arbitrary programming languages and, thus, of independent interest.
To summarize, the contributions of this work are as follows:

\begin{itemize}
  \item we devise a novel approach to lift value-sensitive, flow-sensitive reachability analyses to noninterference analysis; 
  \item we implement this approach in \tool, the first automatic, sound, fully static noninterference analysis for \wasm, and make it available to the community~\cite{wanilla-artifacts};
  \item we successfully establish \tool's performance and precision on two noninterference datasets from literature; in particular, we compare favorably in terms of termination with a general hyperproperty checker on its noninterference benchmark \cite{DBLP:conf/fmcad/BartheEGGKM19} and improve upon two WebAssembly analyzers \cite{DBLP:conf/internetware/HuangJC20, DBLP:conf/scam/StievenartR20} in terms of both precision and soundness when analyzing a realistic benchmark of 82 smart contracts for the use of bad randomness (an integrity property);
  \item we additionally introduce a noninterference benchmark comprising \wanillaBenchmarkModuleCount{} \wasm modules and \wanillaBenchmarkTestCaseCount{} test cases and make it available to the community~\cite{wanilla-artifacts}
\end{itemize}

\section{Preliminaries}

\subsection{\WA}

While a complete discussion of \wasm's design would go beyond the scope of this article\footnote{Interested readers are referred to the initial paper~\cite{DBLP:conf/pldi/HaasRSTHGWZB17}, the specification of \wasm 1.0~\cite{wasm-core-1.0} (the subject of this article) and the developing version of the specification~\cite{wasm-live-document}.}, in the following, we give a broad overview of \wasm's essential features, further elaborating on them when discussing the relevant parts of the analysis.

The basic model of \WA is a \emph{stack machine}.
The stack is required to be \emph{well-typed}, which means that the size and content type of the stack are known at every point during the execution.
Therefore, certain classes of runtime errors (stack over- and underflows, most type errors) can be statically ruled out.
\WA 1.0 supports four different types: integers and floating point numbers in 32- and 64-bit variants.
The instruction $\Add{i32}$, for example, takes two $\watype{i32}$ values from the value stack and pushes their sum (again, an $\watype{i32}$ value) to the stack.
The program $(\Const{f32}{3.0})\allowbreak(\Const{i32}{5})~\allowbreak\Add{i32}$ would not pass the type checker, as the two topmost stack elements when \Add{i32} is executed are not both of type $\watype{i32}$ (\Const{t}{x} pushes a value $x$ of type \watype{t} to the stack).

The type system requires the use of \emph{structured control flow primitives}, which is a distinctive feature among low-level languages (we will illustrate how some of these primitives work in \autoref{sec:prelim-example}).
Another feature not found in other low-level languages are \emph{functions}, which are modeled as \emph{activations} on a call stack.
The functions in \WA are typed and can carry a list of typed and numbered variables, the so-called \emph{locals}.
The \emph{globals}, which are also typed and numbered, are available in all functions.
Since functions take their arguments from and leave their results on the stack, they are subject to the type checker.
The \Call{x} instruction     puts an activation for the statically known (and type-checked) function $x$ on the call stack.
The \CallIndirect{} instruction takes an index value from the stack and uses it to look up a function to be called in the \emph{table}.
The table is a designated memory region that holds opaque pointers to functions;
if the looked-up function does not have the right type, the execution traps (aborts exceptionally).
Such traps cannot be caught in \wasm and also appear in other contexts such as out-of-bounds memory accesses or undefined arithmetic.

All values that do not fit into the four primitive types have to be stored in the \emph{linear memory}, a mutable byte array that can grow (but not shrink) during the execution.
As \WA is meant to be embedded into different contexts, its designers took care to isolate \WA's context from the embedding context.
All components (e.g., functions, memories, or globals) that can interact with the embedding context have to be explicitly imported or exported.
For our analysis, these last points lead to two insights:
First, \wasm's memory-safety guarantees must be interpreted in relation to the embedding context: \wasm modules may not corrupt the memory of the embedder or other \wasm modules but may, however, corrupt their own memory.
Second, we have to take special care when handling imported functions (which we do in \autoref{sec:lift_soundness}).

\subsection{Noninterference}
\label{sec:prelim_noninterference}
Noninterference~\cite{DBLP:conf/sp/GoguenM82a} is an important hyperproperty~\cite{DBLP:conf/csfw/ClarksonS08} that expresses the inability of an attacker to tamper with a system.
It comprises \emph{Confidentiality}---the inability of an attacker to learn secrets by observing the public outputs of a system---%
and \emph{Integrity}---the inability of an attacker to alter a system's trusted state by modifying its untrusted inputs.
The different information handled by a system and the abilities of an attacker w.r.t.\ that information can be expressed by a lattice \Lat{} (see \autoref{fig:nilattice} for the lattice that is used in the following examples).
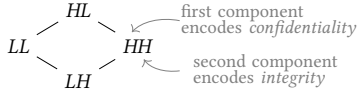
\begin{figure}[t]
  \center
  \resizebox{5.455cm}{!}{
  \begin{tikzpicture}[yscale=0.54, xscale=0.9]
    \node (SU) at (0,1)  {\SU};
    \node (PU) at (-1,0) {\PU};
    \node (ST) at (1,0)  {\ST};
    \node (PT) at (0,-1) {\PT};

    \node (FA) at (1.6,0.7)  [anchor=west,text width=32mm,align=left,font=\small,color=gray] {first component\\[-1mm] encodes \emph{confidentiality}};
    \node (SA) at (1.8,-0.7) [anchor=west,text width=30mm,align=left,font=\small,color=gray] {second component\\[-1mm] encodes \emph{integrity}};

    \draw ($(FA.west) + (0.1,0.0)$) edge [->,bend right=40,color=gray] ($(ST) + (-0.10,0.38)$);
    \draw ($(SA.west) - (0.1,0.0)$) edge [->,bend left=40,color=gray] ($(ST) + (0.10,-0.38)$);

    \draw (SU) -- (PU);
    \draw (SU) -- (ST);
    \draw (PU) -- (PT);
    \draw (ST) -- (PT);
 \end{tikzpicture}
}
  \caption{Example lattice \Lat{} with $H$ = ``high'' and $L$=``low''.}
  \label{fig:nilattice}
\end{figure}
Specifically, data of level $\ell$ can flow to locations of level $\ell$ or higher, whereas an attacker at $\ell \in \Lat$ is allowed to modify/observe inputs/outputs that are at or below $\ell$ in \Lat.
Formally, we write $\ell_1 \flowsto \ell_2$ to express the fact that $\ell_1$ flows to $\ell_2$.
$\ell_1 \sqcup \ell_2$ denotes the least upper bound of $\ell_1$ and $\ell_2$ ($\ell_1 \sqcup \ell_2 = \ell_2 \implies \ell_1 \flowsto \ell_2$), while $\ell_1 \sqcap \ell_2$ denotes their greatest lower bound ($\ell_1 \sqcap \ell_2 = \ell_1 \implies \ell_1 \flowsto \ell_2$).
The assignment of memory positions to labels is done by so-called security policies $\Gamma$, which are formally introduced in \autoref{sec:security_policies}.
Given an equivalence relation $\elleq{\cdot}{\cdot}$, which relates all possible inputs/ outputs if they agree on the positions labeled $\ell$ or lower, we call $S$ noninterferent if for all $\ell$-equal input pairs $i_1, i_2$ $S$ produces an $\ell$-equal output (i.e., $\forall i_1\, i_2\,.\, \elleq{i_1}{i_2} \implies \elleq{S(i_1)}{S(i_2)}$).

\subsection{Example}
\label{sec:prelim-example}
As an example, consider the function \ccode{process} from \autoref{fig:confidentialityc}.
A possible translation%
\footnote{We chose the presented translation to illustrate the ideas of our analysis.
Real-world compilers will translate the function differently (e.g., by inlining calls to simple functions), but not in a way that impacts our analysis negatively.}
of \ccode{process} to \wasm is shown in \autoref{fig:symbolic-runs}, together with a symbolic representation of all possible runs.
In this example, we are concerned with flows to \ccode{untrusted_log}: in particular, we want to prove that \ccode{untrusted_log} does not leak any secrets (high confidentiality) while not caring about untrusted (low integrity) data.
As described in the introduction, we assume that \ccode{untrusted_log} can access the whole linear memory but nothing else.
To verify confidentiality, we thus have to show that when \ccode{untrusted_log} is executed, all runs that agree on the initially public data (here: \ccode{payload} and \ccode{IS_PROD}) will agree on their linear memory (corresponding to the high-level variables \ccode{session} and \ccode{payload}).

\begin{figure}[t]
  \includegraphics[width=8.5cm]{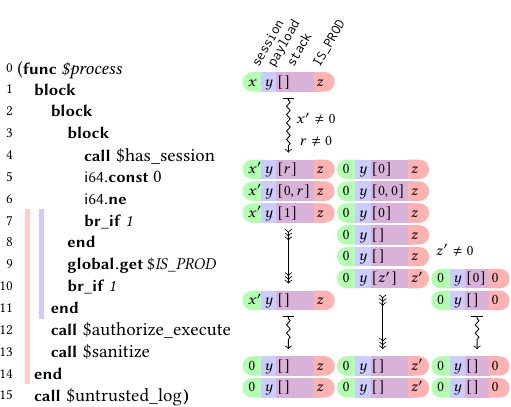}
  \caption{Translation of \ccode{process} from \autoref{fig:confidentialityc} to \wasm{}.}
  \label{fig:symbolic-runs}
\end{figure}

\autoref{fig:symbolic-runs} shows simplified memory configurations as they evolve during the program execution.
As establishing confidentiality (and noninterference in general) requires reasoning about universally quantified inputs, we do not show concrete values (unless a value is uniquely determined) but instead variables like $x$, $y$, and $z$.
Unless noted otherwise, these variables range over all possible values (e.g., $x$ could be any $128$-bit bit vector, $z$ any $32$-bit integer, etc.).

Until \Call{\textdollar has\_session} (whose execution is straightforward and thus here omitted for brevity) returns, all possible runs look the same.%
\footnote{To keep the figure clean, we use the squiggly arrow to symbolize progress that does not change the program state.}
Afterward, we have to distinguish two cases:
If \ccode{session} has a non-zero value, \Call{\textdollar has\_session} will put a non-zero value on the stack.
Otherwise, the session and the topmost stack value are both $0$.
In the figure, we illustrate these two scenarios by splitting the execution and introducing the restricted variables $x'$ and $r$.

In either case, the execution pushes $0$ to the stack (\Const{i64}{0}) and checks if the two topmost stack elements are unequal (\AnyCmd{i64}{ne}).
As we know that $r \neq 0$ in column 1, we know that \AnyCmd{i64}{ne} will leave $1$ on the stack, while in column 2, the result will be $0$.

\BrIf{1} executes a conditional branch.
If the topmost stack entry is equal to $0$, we continue with the next line in the program; otherwise, we continue at the \End{} of the first surrounding \Block{} (i.e., in \li{11}\footnote{In the following \li{x} will denote line $x$ in \autoref{fig:symbolic-runs}}; we use the barbed arrow to signify a branch like this in the diagram).%
\footnote{``First'' here is taking into account a ``zeroth'' block, which is the block immediately containing \BrIf{1}.
To illustrate the \WA semantics: In this position, \BrIf{0} would have continued the execution in \li{8}, \BrIf{2} in \li{14}.
If the surrounding structure would begin with $\Loop{}$ instead of $\Block{}$, branching instructions would instead take the execution to the \emph{start} of the block.
}
Conditional control flow is essential in noninterference analyses, as state modifications that depend on, for example, secret data may lead to the revelation of secrets.
It is, therefore, crucial to determine the program segments that are only conditionally executed.
We will discuss the details of how we calculate these segments in \autoref{sec:scope_extensions_and_implicit_flows}; for now, note that the region marked red is a sound approximation of the conditionally executed segments.

The execution in column 2 leaves the block and pushes \ccode{IS_PROD}'s contents to the stack.
Again, we distinguish two cases w.r.t.\ equality to $0$.
If $z$ is unequal to $0$, we jump directly to \li{14} (column 2); otherwise (column 3), we continue at \li{11} together with column 1.
Columns 1 and 3 execute \Call{\textdollar authorize\_execute} and \Call{\textdollar sanitize}. The latter sets the contents of \ccode{session} to 0.
In \li{15}, where \ccode{untrusted_log} is called, we have two scenarios w.r.t. to the initial public values:
either the \ccode{IS_PROD} was $0$ or not, where we do not have to restrict \ccode{payload} in any way.
Columns 1 and 3 show the first case, and columns 1 and 2 the show second (column 1 has no restriction on \ccode{IS_PROD} and therefore appears twice).
In any case, the memory contents are the same, which shows that the analyzed program does preserve confidentiality.

\section{From Reachability to Noninterference}
\label{sec:background}
\renewcommand{\example}[1]{#1}

In this section, we will describe our approach to generically lift reachability analyses to noninterference.
In \autoref{sec:computational_model}, we describe a simple model of computation, which is meant  to capture  real-world programming languages.
\autoref{sec:reachbility_analyses} likewise introduces a generic  interface to reason about flow-sensitive reachability analyses.
In \autoref{sec:security_policies}, we introduce the security policies and taint labels that are at the core of our noninterference analysis, which we introduce in \autoref{sec:constructing_nia} and whose soundness we discuss in \autoref{sec:lift_soundness}. 
\example{Throughout, we will illustrate the technical concepts by means of examples from  \autoref{fig:symbolic-runs}}.

\subsection{Semantics and our Computational Model}
\label{sec:computational_model}

Our semantic model relies  on two sets, $\MMPos$ and $\MPPos$, to describe memory and program positions, respectively.
\example{
\wasm has several memory regions, such as the linear memory, the globals, and the function pointer table, which are global to the module, as well as memory regions that are addressed relative to the current activation, such as the locals and the stack.
Since we only consider one activation in our examples, we will omit the activation's identifier for clarity's sake.
In the following, \mpst{i} will denote the $i$-th stack position (counting from the bottom), \mpgl{i} will denote the $i$-th global, and \mplm{i} will denote the $i$-th byte in the linear memory.
Program positions in \wasm can be modeled as sequences of activations and program counters, that is, call stacks that track where to continue the execution upon returning.
As we are only describing one activation in our examples, we will again simplify this and identify the program position with the line number of the example program.
}
Program configurations are then tuples of program positions and memories ($\MMem$), where memories are functions from memory positions to values\footnote{\example{In \wasm, values are integers or floating point numbers in either 32- or 64-bit variants.}} ($\MVal$).
These tuples represent the program's state at a particular position.
The only restriction we introduce on these types is that we can decide equality for $\MVal$, $\MMPos$, and $\MPPos$.

The semantics of the analyzed program is given by $\Mstep$.
For a configuration $c$, $\Mstep[c]$ returns the next configuration if it exists.
$\Mstep[c][*]$ denotes $\Mstep[c]$'s reflexive
and transitive closure, i.e., all configurations that are reachable from $c$. 
$\Mstep[c][n]$ denotes the configuration reached by applying $\Mstep$ $n$ times, starting with $c$.\footnote{For readability, we write $\Mstep[pp, m]$ to mean $\Mstep[(pp,m)]$ when discussing a configuration's constituent parts and adopt the same convention for similar functions.}

We assume the existence of special program positions $\{\terminatepc, \nojoinpc\} \subseteq \MPPos$.
$\terminatepc$ denotes that the execution has terminated, i.e.\ $\nexists pp'\, m\, m'\,.\,\allowbreak (pp', m') = \Mstep[\terminatepc, m]$ and $\forall pp\,m \,.\, pp \neq \terminatepc \implies \exists pp'\,m'\,.\,(pp',m') \allowbreak = \Mstep[pp, m]$.
$\nojoinpc$, conversely, denotes a program position that is never reached, i.e., $\nexists pp\, m\, m'\,.\,(\nojoinpc, m') = \Mstep[pp, m]$.

\subsection{Reachability Analyses}
\label{sec:reachbility_analyses}

For this semantics, we assume the existence of a reachability analysis $\Mreachability$, with the following signature:

\begin{myalign}
\Mreachability:& \MPPos \times \MMem \mapsto \powerset{\MPPos \times \MMem}
\end{myalign}
For a start configuration (program position and memory), $\Mreachability$ returns a set of possibly reachable configurations.%
\footnote{
\label{footnote:interface}
Please note that this set has to be understood as an ``interface'' to capture arbitrary reachability analyses:
\shortened{
If an analysis $\Mfunc{ra'}: \MPPos \times \MMem \mapsto (\MPPos \mapsto (\MMPos \mapsto \powerset{\MVal}))$ outputs a set of values for every memory position
(in the context of abstract interpretation, this is sometimes called a ``non-relational'' abstraction), we can, for example, convert $\Mfunc{ra'}$ to the correct signature as follows:
\begin{align*}  
\Mfunc{ra''}[pp, m] & = \{ (pp', m') | pp' \in \MPPos, m' \in G_{\MMPos}(\Mfunc{ra'}[pp, m](pp')) \} \\
G_{\{p\}}(f) & = \{ \lambda x . v ~|~ v \in f(p) \} \\
G_{S}(f) & = \{ \lambda x . \begin{cases} v & \text{if } x = p \\ f'(x) & \text{otherwise}\end{cases} ~|~ v \in f(p), p \in S, \\
& \hspace{3.7cm} f' \in G_{S \backslash \{p\}}(f) \} \\
\end{align*} 
}{}
analyses usually do not return arbitrary $x \in \powerset{\MPPos \times \MMem}$ but efficiently representable objects (e.g., intervals for values instead of arbitrary sets.).
For generality, we do not put any such restriction on our underlying analysis.
}
$\Mreachability$ is characterized by an abstract step function $\Mupdate$ that, given a configuration, returns a set of possible updates consisting of
a follow-up program position,
a set of memory positions that might have influenced this program position\footnotenumber[fn1],
plus a set of triples of updated values, each consisting of the updated memory position, the new value, and a set of memory positions that influenced the calculation of this value\footnotenumber[fn2].
\multifootnote[fn1,fn2]{We will refer to these sets in the following as input sets.}
\begin{myalign}
	&\Mupdate: \MPPos \times \MMem \mapsto \\  &\quad\powerset{\MPPos \times \powerset{\MMPos} \times \powerset{\MMPos \times \MVal \times \powerset{\MMPos}}}
\end{myalign}
\begin{examplethm}[Abstract Steps]
\label{example:abstract_steps}
\example{
Consider the following possible return values for $\Mupdate$ in \autoref{fig:symbolic-runs}. 

$\Mupdate[5, m]= \{(6, \emptyset, \{(\mpst1, 0, \emptyset)\})\}$: \Const{i64}{0} in \li{5} continues the execution at \li{6} and pushes $0$ to the top of the stack (which is \mpst1 in \li{6}).
No value influences the execution or the pushed value; therefore, the corresponding input sets are empty. 

$\Mupdate[6, m]= \{(7, \emptyset, \{(\mpst0, 0, \{\mpst0, \mpst1 \})\},             (\mpst1, \bot, \emptyset)\})$ (if $m(\mpst0) \neq m(\mpst1)$, like in column 1) or 
$\Mupdate[6, m]= \{(7, \emptyset, \{(\mpst0, 1, \{\mpst0, \mpst1 \})\}, \allowbreak (\mpst1, \bot, \emptyset)\})$ (if $m(\mpst0) = m(\mpst1)$, like in column 2):
\AnyCmd{i64}{ne} in \li{6} continues in any case in at \li{7}, writes, depending on \mpst0 and \mpst1, either 0 or 1 to the top of the stack (\mpst0 in \li{7}).
The former top of the stack \mpst1 will be deleted, regardless of any inputs.  

$\Mupdate[7, m]= \{(11, \{\mpst0\}, \{(\mpst0, \bot, \emptyset)\})\}$ (if $m(\mpst0) \neq 0$, like in column 1),
$\Mupdate[7, m]= \{( 8, \{\mpst0\}, \{(\mpst0, \bot, \emptyset)\})\}$ (if $m(\mpst0) = 0$, like in column 2): 
\BrIf{1} in \li{7} continues, depending on \mpst0, either in \li{8} or \li{11} and deletes \mpst0 regardless.
}
\end{examplethm}

\begin{defthm}[$\Mreachability$ in terms of $\Mupdate$]
\label{def:ra_in_terms_of_astep}
$\Mreachability(pp, m)$ can be defined in terms of $\Mupdate$ as the least fixed point of
\begin{myalign}
	f(X) = &\{(pp,m)\} \cup \{(pp'', \Mapplyupdate[m', U]) ~|~ (pp',m') \in X, \\
	& \phantom{\{(pp,m)\} \cup {}\{} (pp'', \_, U) \in \Mupdate[pp', m'] \}
\end{myalign}
where $\Mapplyupdate$ is defined as
\begin{myalign}
\Mapplyupdate[m,U] = \lambda p . \begin{cases}
 v    & \text{ if } (p,v,\_) \in U \\
 m(p) & \text{ otherwise}
 \end{cases}
\end{myalign}
\end{defthm}

$\Mupdate$ allows for imprecision in the analysis as it returns a set of possible ways to update the state;
as long as it returns one tuple that correctly captures the semantics of $\Mstep$ (as formalized below), it is sound.
\begin{figure}
\mkfit{
\begin{align*}
	(1)~&\forall pp\,pp'\,m\,m'. \\
	    &\quad \Big((pp', m') = \Mstep[pp,m] \implies \\
	(2)~&\quad\quad \exists I_{pp}\,U . (pp', I_{pp}, U) \in \Mupdate[pp,m] \land {}\\
	(3)~&\quad\quad\quad \big(\forall U'\,\ppi\,\mi . (\ppi, \mi) = \Mstep[pp, \Mapplyupdate[m, U']] \land {} \\
		&\quad\quad\quad\quad pp' \neq \ppi' \implies \exists p' . (p',\_,\_) \in U' \land p' \in I_{pp}\big) \land{}\\
	(4)~&\quad\quad\quad \big(m' = \Mapplyupdate[m,U] \big) \land {}\\
	(5)~&\quad\quad\quad \big(\forall p\,v\,v'\,I\,I' . (p, v, I) \in U \land (p,v',I') \in U\\
	    &\quad\quad\quad\quad\implies v = v' \land I = I' \big) \land {}\\
	(6)~&\quad\quad\quad \big(\forall p\,I_U\,v\,U'\,\ppi\,\mi . (p,v,I_U) \in U \land {} \\
	    &\quad\quad\quad\quad (\ppi, \mi) = \Mstep[pp, \Mapplyupdate[m, U']] \land m'(p) \neq \mi(p) \\
	    &\quad\quad\quad\quad\quad {} \implies \exists p' . (p',\_,\_) \in U' \land p' \in I_U\big) \land{}\\
	(7)~&\quad\quad\quad \big(\forall p . \big( \exists m''\,\ppi\,\mi . (\ppi, \mi) = \Mstep[pp, m''] \land{}\\
	    &\quad\quad\quad\quad \mi(p) \neq m''(p)\big) \implies \exists v'\,I_{pp} . (p,v',I_{pp}) \in U\big)\Big)
\end{align*}
}

\caption{$\Mupdate$'s soundness condition.}
\label{fig:soundness_astep}
\end{figure}
\begin{defthm}[Soundness of $\Mupdate$]
\label{def:soundness_astep}
$\Mupdate$ is sound w.r.t.\ $\Mstep$ if the condition in \autoref{fig:soundness_astep} (which we discuss immediately below) holds.
\end{defthm}
The soundness condition requires that for every concrete step $\Mstep$ produces $(1)$, $\Mupdate$ produces at least one abstract step $(2)$ that correctly captures it. 
What ``correctly'' capturing means is subject to conditions $(2)$--$(7)$.  
Firstly, to capture at least the same states as $\Mstep$, the abstract step has to contain the correct follow-up program position $pp'$ $(2)$ and an update set $U$ that creates the correct memory $m'$ from $m$ $(4)$.
To correctly track information flows, we have to be able to determine which memory positions influenced the follow-up program position and each value in the update set.
This is ensured by conditions $(3)$ and $(6)$.
If $\Mstep$ returns a different follow-up program position when applying any arbitrary update $U'$ to $m$ (while $pp$ remains unchanged), we require that $I_{pp}$ contains at least one of the memory positions in $U'$.
Since this holds for all possible $U'$, it holds in particular for the minimal $U'$, meaning that if changing a single memory position could lead to a different follow-up program position, then this memory position has to be included in $I_{pp}$.
Condition $(6)$ expresses the same notion for the updated values in $U$. 
As we will see later, it is in the interest of precision to assign $I_{pp}$ and $I_U$ to the minimal set that fulfills (3), respectively, (6).
As $\Mapplyupdate$ is not defined for ambiguous update sets (i.e., ones that update a particular memory position more than once), condition $(5)$ forbids them.
Lastly, condition $(7)$ tells us that $U$ has to contain an update for every memory position $p$ that \emph{might} change at $pp$.
This means that even if $\Mstep$ does not change the value at a specific position, $\Mupdate$ has to ``update'' it to the value it already holds.
As we will see soon, there are also flows by data \emph{not} being written, which is what $(7)$ addresses by requiring the enumeration of all possibly updated memory positions.

\begin{examplethm}[Input Sets and Updated Values]
To illustrate the importance of input sets (conditions $(3)$ and $(6)$) and tracking potentially updated values (condition $(7)$), we anticipate the basic principle of our noninterference analysis and give an example.
The analysis works by tracking which values are ``tainted'' and propagating this taint  during the program execution.
The taint of a value updated in a step is determined by the taint labels of all values in the input set.\footnote{In case of the $I_{pp}$, the execution itself will be tainted, as we will see. Our example, however, will only concern values.}
A particularly interesting case for the computation of these taints are program constructs that value-dependently update the memory, such as array accesses or, in \wasm, accesses to the linear memory.
Consider the \AnyCmd{i32}{store8} instruction, which takes two values $i$ (e.g., stored in \mpst0) and $c$ (e.g., stored in \mpst1) from the stack and writes the least significant byte of $c$ to the memory cell at $i$. 
In the case of confidentiality analysis, an attacker does not necessarily have to learn the value of \mplm{i} to learn something about $i$, as all other memory cells \emph{not} being written can already reveal it.
Therefore, our analysis has to taint all potentially updated values in such a case.
When we are not considering the surrounding code \AnyCmd{i32}{store8} appears in, this means we have to taint the whole linear memory.
This is encoded by condition $(7)$.
The minimal input sets for each updated memory position, however, depend on the exact value at \mpst0.
$\Mupdate$ has to return $(\mplm{i}, \{\mpst0, \mpst1\}, c \mathbin{\text{mod}} 256)$, as two minimal updates can change the value at \mplm{i}:
Either the value at \mpst0 is changed (which means that the resultant value at \mplm{i} changes by \emph{not} being updated), or the value at \mpst1 is changed (thereby changing the value directly). 
For all other memory positions \mplm{j} (with $j \neq i$), the input set can be $\{\mpst0\}$, as \mpst0 \emph{has} to change in order to change the resulting value at \mplm{j}.
As this very general formalization would mean that writing to a secret offset would taint the whole linear memory, we run a simple preanalysis to determine if the offset is constant.
If it is, this drastically shrinks the update set, as only one memory cell per written byte has to be updated.
Furthermore, it removes \mpst0 from the respective input sets, as its contents are fixed.
Therefore, the attacker can learn nothing from it, and we can ignore its taint.
If we cannot determine if the index is constant, we apply the general rule.
\end{examplethm}

\begin{defthm}[Soundness for reachability]
\label{def:soundness_reachability}
We call $\Mreachability$ sound w.r.t.\ to $\Mstep$ if it fulfills the following formula:
\begin{myalign}
&\forall pp\, pp'\, m\, m' . (pp', m') \in \Mstep[pp,m][*] \implies (pp', m') \in \Mreachability[pp, m]
\end{myalign}
\end{defthm}
\begin{theorem}[Soundness for reachability ($\Mupdate$)]
$\Mreachability$ in \autoref{def:ra_in_terms_of_astep} is sound w.r.t.\ \autoref{def:soundness_reachability} if $\Mupdate$ fulfills the conditions in \autoref{def:soundness_astep}. 
\end{theorem}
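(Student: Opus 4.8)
The plan is to reduce the statement to a one-step simulation property and then lift it to the reflexive--transitive closure by induction on the number of steps. The key observation is that $\Mstep$ is deterministic, so $\Mstep[pp,m][*]$ is simply the orbit $\{\,\Mstep[pp,m][n] \mid n \geq 0\,\}$; consequently \autoref{def:soundness_reachability} is equivalent to showing $\Mstep[pp,m][n] \in \Mreachability[pp,m]$ for every $n$ (whenever this configuration exists). Before the induction I would record two facts about $\Mreachability[pp,m]$, both immediate from it being the \emph{least} fixed point of the monotone operator $f$ of \autoref{def:ra_in_terms_of_astep} on the complete lattice $\powerset{\MPPos \times \MMem}$ (existence and fixed-point property via Knaster--Tarski): (i) $(pp,m) \in \Mreachability[pp,m]$, since $f(X) \supseteq \{(pp,m)\}$ for every $X$ and $\Mreachability[pp,m] = f(\Mreachability[pp,m])$; and (ii) $\Mreachability[pp,m]$ is closed under a single abstract step, i.e.\ if $(q,n) \in \Mreachability[pp,m]$ and $(q', \_, U) \in \Mupdate[q,n]$ then $(q', \Mapplyupdate[n,U]) \in \Mreachability[pp,m]$, again directly from the second clause of $f$ together with the fixed-point equality.

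Next I would isolate the single-step simulation lemma: if $(pp',m') = \Mstep[pp,m]$, then there exist $I_{pp}$ and $U$ with $(pp', I_{pp}, U) \in \Mupdate[pp,m]$ and $m' = \Mapplyupdate[m,U]$. This is exactly the conjunction of conditions $(2)$ and $(4)$ in \autoref{fig:soundness_astep}, instantiated at the concrete transition. It is worth stressing that these are the only two of the seven soundness clauses needed here: conditions $(3)$, $(5)$, $(6)$ and $(7)$ constrain the input sets and the enumeration of possibly-updated positions, which matter for the later information-flow (taint) argument but are irrelevant to plain reachability. The lemma provides precisely a witness for the comprehension in the second clause of $f$.

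The induction then goes through cleanly. The base case $n = 0$ is fact (i). For the inductive step, assume $\Mstep[pp,m][n] = (pp'', m'') \in \Mreachability[pp,m]$; if $\Mstep[pp'',m'']$ is undefined (that is, $pp'' = \terminatepc$) there is nothing to prove, and otherwise set $(pp',m') = \Mstep[pp'',m''] = \Mstep[pp,m][n+1]$. The simulation lemma yields some $(pp', I, U) \in \Mupdate[pp'',m'']$ with $m' = \Mapplyupdate[m'',U]$, and closure (fact (ii)) gives $(pp', m') \in \Mreachability[pp,m]$, completing the induction. I do not expect a genuine obstacle: this is a standard simulation argument, and the only points requiring care are (a) extracting exactly the right two clauses from the bundled soundness condition of \autoref{fig:soundness_astep}, and (b) justifying the closure property (ii) as an equality arising from the least-fixed-point characterisation rather than as a mere post-fixed-point inclusion. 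The comparatively verbose information-flow conditions are a deliberate distraction at this stage and should simply be discharged as unused.
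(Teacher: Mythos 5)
Your proof is correct and follows essentially the same route as the paper's: induction on the number of applications of $\Mstep$, using the single-step simulation witness provided by the soundness condition and closing under the fixed-point clause of \autoref{def:ra_in_terms_of_astep}. One correction, though: the paper invokes clauses $(2)$, $(4)$, \emph{and} $(5)$, and your claim that only $(2)$ and $(4)$ are needed is slightly too strong. The function $\Mapplyupdate[m,U]$ is only well-defined when $U$ contains at most one update per memory position (the paper says so explicitly), so the equality $m' = \Mapplyupdate[m,U]$ in clause $(4)$ --- and likewise the comprehension $\{(pp'', \Mapplyupdate[m',U]) \mid \ldots\}$ in the definition of $f$ --- silently presupposes clause $(5)$; your simulation lemma uses it implicitly rather than discharging it as unused. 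Your dismissal of clauses $(3)$, $(6)$, and $(7)$ as relevant only to the later taint-tracking argument is accurate, and your explicit justification of the closure property via the Knaster--Tarski least-fixed-point characterisation is a detail the paper leaves implicit.
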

\begin{proof}
By induction on the number of applications of $\Mstep$ and applying \checkme{$2$}, \checkme{$4$}, and \checkme{$5$} from \autoref{def:soundness_astep} to \autoref{def:ra_in_terms_of_astep}.
\end{proof}

\subsection{Security Policies and Taint Labels}
\label{sec:security_policies}

We formalize the concept of security policies introduced in \autoref{sec:prelim_noninterference} as functions from memory positions to elements of the security lattice: $\MMPos \mapsto \Lat$.
Equality w.r.t.\ an attacker $\ell$ and security policy $\Gamma$, $\elleqg$,  is defined as: 
\begin{myalign}
&\elleqg[m_1][m_2] \iff \forall p . \Gamma(p) \sqsubseteq \ell \implies m_1(p) = m_2(p)
\end{myalign}

\begin{examplethm}[Security Policies]
\label{example:security_policies}
\example{
The initial security policy for the code in \autoref{fig:symbolic-runs} is
\begin{myalign}
\Gamma_0(p) &=
\begin{cases}
\PT & \text{ if }p = \mpgl0 \\
\SU & \text{ if }p \in \{\mplm{i}~|~0   \le i < 128\}\\
\PU & \text{ otherwise }
\end{cases}
\end{myalign}

\ccode{IS_PROD} (assumed to be stored in \mpgl0) is a public ($L$) and trusted ($H$) value; therefore, it is assigned $\PT$.
\ccode{session}, which we assume to be stored in the first 128 bytes of the linear memory, is secret ($H$) and untrusted ($L$), while all other values are assumed to be public ($L$) and untrusted ($L$) (mainly \ccode{payload}, although the compiled \WA module will contain other memory locations).
At \li{15}, where \ccode{untrusted_log} is called, we require the linear memory not to contain any secrets, which corresponds to the security policy
\begin{myalign}
\Gamma_{15}(p) &=
\begin{cases}
\PU & \text{ if }p \in \{\mplm{i}~|~i \in \mathbb{N}\}\\
\SU & \text{ otherwise }
\end{cases}
\end{myalign}
}
\end{examplethm}

Our noninterference analysis $\Mnoninterference$, which is introduced in \autoref{sec:constructing_nia}, does not directly reason about security policies but about \emph{taint labels}, which it assigns to every memory position.
The set of taint labels $\horstTypeFlowLabel$ contains $\L$ and $\H$.
The meaning of $\L$ is that the labeled value cannot possibly be influenced by a value whose initial label (per a security policy) is higher than the currently analyzed attacker in $\Lat$, while this is not true for a value labeled $\H$. 
$(\horstTypeFlowLabel,\sqcup,\flowsto)$ forms a lattice. We have $\L \flowsto \H$, $x \sqcup \H = \H$, and $x \sqcap \L = \L$.
As $\sqcup$ and $\sqcap$ are associative and commutative, we sometimes write $\bigsqcup_{x \in X}$ and $\bigsqcap_{x \in X}$ to denote the least upper/greatest lower bound of all elements in $X$.
To translate from security policies to taint labels, we introduce the type of label maps $\MLMap: \MMPos \mapsto \horstTypeFlowLabel$ that assigns taint labels to memory positions.
Given an attacker $\ell$ and a security policy $\Gamma$, we can define the corresponding label map $\Lmap$ as follows: 
\begin{myalign}
\Lmap[p] &= \begin{cases}
\L & \text{if }\Gamma(p) \sqsubseteq \ell \\
\H & \text{otherwise}
\end{cases}
\end{myalign}

\begin{examplethm}[Label Maps]
\label{example:label_maps}
To analyze the program in \autoref{fig:symbolic-runs} for confidentiality (but not integrity), we have to analyze it from the perspective of the $\PU$ attacker, who can manipulate untrusted (low integrity) inputs but does not know secret inputs when the execution starts.
Thus, $\Lambda^{\Gamma_0}_\PU$ has to taint exactly all memory positions that correspond to \ccode{session}, which happens when we combine the definitions of $\Lmap$, $\Lat$, and $\Gamma_0$. 

\begin{myalign}
\Lambda^{\Gamma_0}_\PU & \begin{cases}
\H & \text{ if }p \in \{\mplm{i}~|~0   \le i < 128\}\\
\L & \text{ otherwise}
\end{cases}
\end{myalign}
\end{examplethm}

\subsection{Constructing $\Mnoninterference$}
\label{sec:constructing_nia}

To motivate the design of our noninterference analysis, $\Mnoninterference$, let us first reiterate the design objectives introduced in \autoref{sec:introduction-related-work}:
Our analysis should be sound for noninterference, which comprises reasoning about explicit and implicit flows. 
At the same time, it should be expressive enough to allow control flow that is dependent on ``high'' ($\H$-labeled) values, respectively allow writing to low positions while the control flow is tainted (like in the red area following \li{7} in \autoref{fig:symbolic-runs}).
Additionally, it should be value-sensitive to be able to verify programs like in \autoref{fig:symbolic-runs}, but not solely rely on value-based reasoning, as pure value-based analysis of noninterference (i.e., deriving a flow from any two differing values reachable at an attacker-observable location) disallows any kind of performance-enhancing imprecision. 
Consider the following:
A technique to efficiently analyze the reachable configurations is abstract interpretation \cite{CousotPatrick2021Poai}, which fundamentally computes an \emph{abstraction} for every program configuration, i.e., a set of values that this configuration can possibly hold at different memory locations.
Naively, this approach is, however, not suitable for noninterference analysis, as determining the (in)equality of two memory locations of two abstractions is an imprecise business: they can only definitely be judged equal if their abstractions contain exactly and only the same value.
When analyzing \autoref{fig:symbolic-runs}, a precise enough abstract interpretation could determine that \ccode{session} is $0$ in any execution (indeed, our underlying analysis does, although maintaining this level of precision using statement-wise abstraction (especially in low-level code) can be difficult \cite{DBLP:conf/vmcai/JiangCWW17}).
The problem here lies in the contents of \ccode{payload}: a naive abstract interpretation will (correctly) derive that \ccode{payload} can hold any value in \li{15}.
Using only this information, we cannot verify that these (attacker-observable) values did not change from their initial values and are thus the same in all executions that started with the same public values.

$\Mnoninterference$ thus combines ideas from both taint-tracking systems and abstract interpretation.
In the following, we will go through the program in \autoref{fig:symbolic-runs} and explain the rules presented in \autoref{fig:niarules}, which lift a reachability analysis $\Mreachability$ (characterized by $\Mupdate$) to a noninterference analysis. 

\subsubsection{Initialization}
\phantom{}
Like a regular taint-tracking system, $\Mnoninterference$ has to apply an initial taint according to some security policy.
This is done by \rniainit, which applies the initial label map $\Lambda_{\PU}^{\Gamma_0}$ as already discussed in Examples \ref{example:security_policies} and \ref{example:label_maps}.  
As can be seen, $\Mnoninterference$ takes an initial labeling, an initial program position, and two initial memories as arguments.%
\footnote{In the spirit of \autoref{footnote:interface}, $\Mnoninterference$ is not meant to be run once for every possible pair $m_1, m_2$ but to take and return efficiently representable abstractions as input.
The rules as presented here are to be understood as the most general interface for an abstract interpretation.
To verify, for example, the program in \autoref{fig:symbolic-runs} in the most efficient way, one would run $\Mnoninterference$ with three abstract memories $\hat{m}_1$, $\hat{m}_2$, and $\hat{m}_3$ that capture all memories with $m_1(x) \neq 0$, $m_2(x) = 0 \land m_2(z) \neq 0$, and $m_3(x) = m_3(z) = 0$, respectively.
For these abstract memories, we would (abstractly) have $\elleqg[\hat{m}_1][\hat{m}_2][\Gamma_0][\PU]$ and $\elleqg[\hat{m}_1][\hat{m}_3][\Gamma_0][\PU]$ but not $\elleqg[\hat{m}_2][\hat{m}_3][\Gamma_0][\PU]$, since they necessarily disagree on $z$ (which is stored at $\mpgl{0}$), where $\Gamma_0(\mpgl{0}) = \PT$ and $\PT \flowsto \PU$. For $\elleqg[\hat{m}_1][\hat{m}_2]$ (or $\ellneqg[\hat{m}_1][\hat{m}_2]$) to hold, $\elleqg[m_1][m_2]$ has to hold for any two (concrete) memories $m_1$, $m_2$ that are abstracted by $\hat{m}_1$ and $\hat{m}_2$, respectively.
}
It returns a set of tuples that contain a program position, a memory, and the corresponding label map in the last three components and a context object in the first one.
We will explain the context object in detail later on;
here, we only mention that $\Mlctx$ expresses that the current program position does not depend on any information labeled $\H$, which is the case, as the execution has not started yet.

\subsubsection{Explicit Flows}
From \li{1} to \li{7}, no conditional control flow happens, and the analysis only has to propagate the taint labels.
This is achieved by \rniaprop, in particular, by the calculation of $\Lambda'$.
Ignoring $\ctxlabel_L$ for now, we see that every updated memory position is assigned the least upper bound of the labels of its input set.
We can thus determine that \ccode{has_session}'s return value in \li{5} will be labeled $\H$, as it is directly computed from a secret.
Referring back to \autoref{example:abstract_steps}, we see that $0$ (pushed in \li{5}) will be labeled $\L$ (as it has an empty input set), and the result of \AnyCmd{i64}{ne} will be labeled $\H$ as one of its inputs (at $\mpst{0}$) is labeled $\H$.
The memory is updated exactly the same way as in $\Mreachability$ by applying $U$.
The conditions in the third line prevent \rniaprop from being executed if an other rule applies.

\subsubsection{Implicit Flows}
All flows so far are explicit flows; to handle implicit flows (which some analyses~\cite{DBLP:conf/scam/StievenartR20} forgo for increased performance and lower false positives), we have to reason about the control flow.
Control flow  appears in $\Mnoninterference$ in three constructs which we overview shortly here and formally describe in \autoref{sec:lift_soundness}.
$\Mccf[pp, pp', m_1, m_2]$ returns true for all program positions $pp'$ that hold a conditional control flow instruction such as $\If$.
Given a conditional control flow instruction at $pp'$, $\Mjoinat[pp, pp', m_1, m_2]$ is used to determine when the program's control flow does not depend on said instruction anymore.
$\Mjoinat$ does this by returning a pair of program positions $(pp_{j_1}, pp_{j_2})$ such that all executions starting at $pp'$ necessarily encounter $pp_{j_1}$ and $pp_{j_2}$ consecutively at some point.
$(pp_{j_1}, pp_{j_2})$ can be thought of as an edge in the control flow graph of $\Mstep$.
As all of our rules describe the transition from one state into the next one, just identifying a state that all executions encounter would not suffice (as from one state, different, diverging transitions would be possible).
Since all executions starting at $pp'$ execute this so-called \emph{joining transition}, this means that at this point, the execution no longer depends on the control flow decision taken at $pp'$.
Lastly, we introduce the context object:
it  is either $\lctx$ (which, as mentioned above, expresses that the control flow does not depend on tainted information) or a tuple $(\H, pp, (pp_{j_1}, pp_{j_2}))$.
The latter variant expresses that the control flow may depend on tainted information.
It started depending on tainted information when executing the instruction at $pp$ and will cease to do so after encountering the joining transition $(pp_{j_1}, pp_{j_2})$.
Raising the context label is done by \rniarcst (and its variant \rniarcns), as it is applied in \li{7}.
If the control flow is not yet tainted (i.e., the context object equals $\Mlctx$), the executed instruction is a conditional control flow instruction (expressed by $\Mccf$), and the next program counter $pp''$ depends on tainted data, then we have to taint the context object, i.e., set its first coordinate to $\H$.
Recall that $I_{pp}$ contains all memory positions that could influence $pp''$'s value, thus the last condition can be assessed by calculating the least upper bound of $I_{pp}$ in $\Lambda$.
Additionally, the updated context object $\ctxlabel'$ will hold the current program position $pp'$ in the second coordinate and joining transition $(pp_{j_1}, pp_{j_2})$ in the third.
A valid joining transition for \li{7} in \autoref{fig:symbolic-runs} is $(14,15)$ (the transition leaving the red area). 
\rniarcns is a special case to handle divergent control flows where one of the executions does not actually progress before rejoining.
This is expressed by $\Mjoinat$ returning $(pp', pp'')$, i.e., the current and next program position.
The analysis of \autoref{fig:symbolic-runs} does not contain an application of this rule, but a characteristic example of this behavior would be \mycode{do}-\mycode{while}-loops, where one execution can leave the loop without iterating and the other one cannot (depending on an $\H$-labeled condition).

Revisiting \rniaprop, we can now explain the role of the context's taint $\ctxlabel_L$ in the computation of $\Lambda'$: every updated memory position's label is raised to at least $\ctxlabel_L$.
This means that $\Mnoninterference$ taints all values it updates while the context is tainted to track implicit flows.
Examples of this behavior appear in \li{9} and during the execution of \ccode{sanitize}, where public and constant values are tainted. 

\begin{figure}[t]
\begin{lstlisting}[language=C++, xleftmargin=5.0ex]
bool f(bool h) {
	bool x = true, l = true;
	if (h) x = false;
	if (x) l = false; |@\label{line:conditional-update}@|
	return l;
}
\end{lstlisting}
  \caption{Adapted from~\cite{DBLP:conf/popl/AustinF12}: The tainted input \ccode{h} is returned via \ccode{l}, but without being careful, \ccode{l} will not be tainted.}
  \label{fig:taintsoundnesproblem}
\end{figure}

\subsubsection{Soundness and Precision Concerns}
With the rules presented so far, $\Mnoninterference$ would be a pure taint tracking system, which would reject our example program:
In \li{15}, the values corresponding to \ccode{session} would be tainted in any case: either because the originally tainted values are not overwritten (column 2) or because the values are overwritten in a tainted context (column 1 and 3).
Additionally, naive tainting systems (such as $\Mnoninterference$ is so far) have a soundness problem as illustrated in \autoref{fig:taintsoundnesproblem}:
Implicit flows are only tracked if a value is updated.
If now a portion of the program is only executed if a certain memory location is \emph{not} updated (such as line \ref{line:conditional-update} in \autoref{fig:taintsoundnesproblem}), the control flow in this portion will not be tainted, which enables arbitrary implicit flows.
To address these two points, $\Mnoninterference$ contains \rniajoin, which is executed when going from \li{14} to \li{15}.
\rniajoin relates all so far derived results that have the same $\H$-labeled context object.
Then, \rniajoin computes $\Lambda'$ as follows:
For every memory position $p$ that is not updated, we check if either of the two intermediate label maps $\Lambda_1$ and $\Lambda_2$ considers it $\H$.
If this is the case \emph{and} the two memories disagree on $p$, $p$ is labeled $\H$.
This also raises the taints of unequal values that were only updated in one execution.
If the value at $p$ is updated, the least upper bound of either of the input sets is $\H$, and the updated value differs, $p$ is labeled $\H$.
In \wasm, this does not happen as the only update potentially joining instructions execute is deleting values from the stack.\footnote{If we treat all deleted values as a special symbol $\bot$, deleted values cannot differ, and therefore not cause any flows.
The type system of \wasm ensures that the shape of the stack is known at every program position.
Thus, it is impossible that a value is deleted from the stack in one branch and not deleted in the other.
}
If neither of these two options is the case, the label is set to $\L$.
In particular, that means two things:
Firstly, two differing $\L$-labeled values do not change their label since their inequality is the result of imprecisions in the reachability analysis and not indicative of an actual flow. 
Secondly, two equal values will be labeled $\L$ in any case; if $\Mreachability$ derives only one value for a specific memory position, this means that this memory position will certainly be $\L$ in $\Lambda'$.
If $\Mreachability$ is precise enough, this happens when going from \li{14} to \li{15}: the contents of \ccode{session} are labeled $\H$, but at the same time, all possible runs agree on the value ($0$).
Therefore, the label is lowered.
It is easy to see that a pure (value-insensitive) labeling system could not remove the taint here.

\begin{figure*}
\resizebox{0.97\linewidth}{!}
{
\small
\begin{mathpar}

\infer[Init]{
}{
\{(\Mlctx, pp, \Lambda_0, m_1), (\Mlctx, pp, \Lambda_0, m_2) \} \subseteq \Mnoninterference[\Lambda_0, pp, m_1, m_2] 
}

\infer[Propagate-Taint]{
(\ctxlabel, pp', \Lambda, m') \in \Mnoninterference[\Lambda_0, pp, m_1, m_2] \\
\ctxlabel = (\ctxlabel_L, \_, (pp_{j_1},pp_{j_2})) \\
(pp'', I_{pp}, U) \in \Mupdate[pp', m'] \\
\hbox{$\Lambda' = \lambda p . \begin{cases}
\textstyle\bigsqcup_{i \in I_U} \Lambda(i) \sqcup \ctxlabel_L & \text{ if } (p,\_,I_U) \in U \\
\Lambda(p)                                      & \text{ otherwise}
\end{cases}$} \\
\lnot \Mccf[pp, pp', m_1, m_2] \lor \big(\Mccf[pp, pp',m_1,m_2] \land (\ctxlabel_L = \L \land \textstyle\bigsqcup_{i \in I_{pp}} \Lambda(i) = \L) \lor \ctxlabel_L = \H \big)\\
pp_{j_1} \neq pp' \lor  pp_{j_2} \neq pp'' \\
}{
\{(\ctxlabel, pp'', \Lambda', \Mapplyupdate[m',U])\} \subseteq \Mnoninterference[\Lambda_0, pp, m_1, m_2] 
}

\infer[Raise-Context-Step]{
\hbox{$
(\Mlctx, pp', \Lambda, m') \in \Mnoninterference[\Lambda_0, pp, m_1, m_2] \hspace{0.8cm}
(pp'', I_{pp}, U) \in \Mupdate[pp', m']
$}\\
(pp_{j_1}, pp_{j_2}) = \Mjoinat[pp, pp', m_1, m_2] \\
\hbox{$\Lambda' = \lambda p . \begin{cases}
\textstyle\bigsqcup_{i \in I_U} \Lambda(i)                & \text{ if } (p,\_,I_U) \in U \\
\Lambda(p)                                      & \text{ otherwise}
\end{cases}$} \\
\ctxlabel' = (\H, pp', (pp_{j_1}, pp_{j_2}))\\ 
pp_{j_1} \neq pp' \lor  pp_{j_2} \neq pp'' \\
\H = \textstyle\bigsqcup_{i \in I_{pp}} \Lambda(i)\\
\Mccf[pp, pp', m_1, m_2]}{
\{(\ctxlabel', pp'', \Lambda', \Mapplyupdate[m', U])\} \subseteq \Mnoninterference[\Lambda_0, pp, m_1, m_2] \\
}

\infer[Raise-Context-No-Step]{
(\Mlctx, pp', \Lambda, m') \in \Mnoninterference[\Lambda_0, pp, m_1, m_2] \\
(pp'', I_{pp}, U) \in \Mupdate[pp', m'] \\
(pp', pp'') = \Mjoinat[pp, pp', m_1, m_2]\\
\ctxlabel' = (\H, pp', (pp', pp''))\\ 
\H = \textstyle\bigsqcup_{i \in I_{pp}} \Lambda(i)\\
\Mccf[pp, pp', m_1, m_2]}{
\{(\ctxlabel', pp', \Lambda, m' )\} \subseteq \Mnoninterference[\Lambda_0, pp, m_1, m_2] \\
} 

\infer[Join]{
\hbox{$
((\H, pp_d, (pp', pp'')), pp', \Lambda_1, m_1') \in \Mnoninterference[\Lambda_0, pp, m_1, m_2] \hspace{0.5cm}
(pp'', I_{pp_1}, U_1) \in \Mupdate[pp_1', m_1'] 
$}\\
\hbox{\rule[-.5\baselineskip]{0pt}{1.4\baselineskip}$
((\H, pp_d, (pp', pp'')), pp', \Lambda_2, m_2') \in \Mnoninterference[\Lambda_0, pp, m_1, m_2] \hspace{0.5cm}
(pp'', I_{pp_2}, U_2) \in \Mupdate[pp_2', m_2']
$}\\
\hbox{$\Lambda' = \lambda p . \begin{cases}
\H & \text{ if } (\Lambda_1(p) \sqcup \Lambda_2(p) = \H) \land m_1'(p) \neq m_2'(p) \land (p,\_,\_) \not \in U_1 \land (p,\_,\_) \not \in U_2 \\
\H & \text{ if } (\textstyle\bigsqcup_{i \in I_{U_1}} \Lambda_1(i) \sqcup \textstyle\bigsqcup_{i \in I_{U_2}} \Lambda_2(i) = \H) \land v_1 \neq v_2 \land (p,v_1,I_{U_1}) \in U_1 \land (p,v_2,I_{U_2}) \in U_2\\
\L & \text{ otherwise}
\end{cases}$}
}{
\{(\Mlctx, pp'', \Lambda', \Mapplyupdate[m_1', U_1]), (\Mlctx, pp'', \Lambda', \Mapplyupdate[m_2', U_2])\} \subseteq \Mnoninterference[\Lambda_0, pp, m_1, m_2] 
}
\end{mathpar}
}
\caption{Rules to lift $\Mreachability$ to $\Mnoninterference$.}
\label{fig:niarules}
\end{figure*}

\subsection{Soundness}
\label{sec:lift_soundness}

To formalize the intuitions given above and prove $\Mnoninterference$'s soundness, we define $\Mnoninterference$'s signature as:
\begin{myalign}
	\Mnoninterference &: \MLMap \times \MPPos \times \MMem \times \MMem \mapsto \powerset{\MCtx \times \MPPos \times \MLMap \times \MMem }
\end{myalign}
Given an initial label map, an initial program position, and two initial memories, $\Mnoninterference$ gives us a set of possible labelings for a memory at a certain program counter and a certain context. 
Contexts are tuples $\MCtx: \horstTypeFlowLabel \times \MPPos \times (\MPPos\times\MPPos)$.
We write $\Mlctx$ for $(\L, \bot, \bot)$.

As seen above, $\Mnoninterference$ crucially relies on the ability to reason about the possible control flows of $\Mstep$ to handle implicit flows.
This reasoning is done by two functions, $\Mjoinat$ and $\Mccf$. 
$\Mccf$ encodes the intuition of conditional control flow instructions (instructions that have more than one possible follow-up program position),
while $\Mjoinat$ encodes the intuition of entering a part of an execution shared among all possible executions after possibly diverging before.

Formally, $\Mjoinat$ is a function that, given an initial program position, a queried program position, and two initial memories, returns a pair of program positions that describes a transition that post-dominates the queried program position, i.e., a transition that is surely encountered in all executions starting at $pp'$\footnote{This follows an idea first published by Denning and Denning~\cite{DBLP:journals/cacm/DenningD77}.} (from configurations that are either reachable from $(pp, m_1)$ or $(pp,m_2)$).
If such a transition cannot be calculated (e.g., because the executions can diverge permanently at the queried program position), the second component of the transition is $\nojoinpc$.
\begin{myalign}
	\Mjoinat:& \MPPos \times \MPPos \times \MMem \times \MMem \mapsto \MPPos \times \MPPos
\end{myalign}

\begin{defthm}[Soundness of $\Mjoinat$]
\label{def:joinat-soundness}
$\Mjoinat$ is sound w.r.t.\ $\Mstep$ if it fulfills the following formula:
\begin{myalign}
	&\forall pp\,pp'\,pp_{j_1}\,pp_{j_2}\,m\,m'\,m_1\,m_2\,m_1'\,m_2'\,.\, \\
	&\quad (pp_{j_1},pp_{j_2}) = \Mjoinat[pp, pp', m_1, m_2] \land pp_{j_2} \neq \nojoinpc \land {} \\
	&\quad (m = m_1 \lor m = m_2) \land (pp', m') \in \Mstep[pp, m][*] \implies \exists k\,m_{j_1}\,m_{j_2}. \\
	&\quad\quad          (pp_{j_1}, m_{j_1}) = \Mstep[pp',m'][k] \land (pp_{j_2}, m_{j_2}) = \Mstep[pp',m'][k+1]
\end{myalign}
\end{defthm}
Formally, $\Mccf$ is a function that takes the same arguments as $\Mjoinat$ but returns whether the queried program position can have multiple follow-up program positions.
\begin{myalign}
	\Mccf:& \MPPos \times \MPPos \times \MMem \times \MMem \mapsto \{\textit{true},\textit{false}\}
\end{myalign}

\begin{defthm}[Soundness of $\Mccf$]
\label{def:ccf-soundness}
$\Mccf$ is sound w.r.t.\ $\Mstep$ if it fulfills the following formula:
\begin{myalign}
    &\forall pp\,pp'\,m_1\,m_2 \,.\, |\{\ppi ~|~ (pp',m') \in \Mstep[pp,m][*] \land {}\\ 
	&\quad (m = m_1 \lor m = m_2) \land (\ppi, \_) = \Mstep[pp',m'] \}|  > 1 \\
	&\quad \implies \Mccf[pp, pp', m_1, m_2]
\end{myalign}
\end{defthm}

Assuming that $\Mupdate$, $\Mjoinat$, and $\Mccf$ fulfill their soundness conditions, we can define various soundness notions that $\Mnoninterference$ fulfills.

\begin{theorem}[Soundness w.r.t.\ reachability]
For any two initial memories, $\Mnoninterference$ and $\Mreachability$ compute the same state.
\begin{myalign}
&\forall pp\,m_1\,m_2\,\Lambda .  (\Mreachability[pp, m_1] \cup \Mreachability[pp, m_2]) = \\
&\quad\quad    \{(pp',m') ~|~ (\_,pp',\_,m') \in \Mnoninterference[\Lambda, pp, m_1, m_2]\} 
\end{myalign}
\end{theorem}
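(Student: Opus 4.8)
The plan is to prove the stated set equality by establishing the two inclusions separately, exploiting the fact that the context and label-map components are projected away: I therefore only need to match the program-position/memory pairs and am free to choose any $\ctxlabel$ and $\Lambda$ that make a rule fire. I would first observe that this equivalence is purely structural — it relates the least-fixed-point characterization of $\Mreachability$ in \autoref{def:ra_in_terms_of_astep} with the inductive rules of \autoref{fig:niarules}, both of which are built on the same abstract step relation $\Mupdate$ — so it holds regardless of whether $\Mupdate$, $\Mccf$, or $\Mjoinat$ satisfy their soundness conditions; only their types and the syntactic shape of the rules matter.

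For the inclusion $\supseteq$ (every configuration produced by $\Mnoninterference$ is reachable), I would induct on the derivation of membership in $\Mnoninterference[\Lambda_0, pp, m_1, m_2]$. The base case is \rniainit, which emits $(\Mlctx, pp, \Lambda_0, m_i)$, and $(pp, m_i) \in \Mreachability[pp, m_i]$ holds because $(pp,m)$ is the seed of the fixed point $f$ in \autoref{def:ra_in_terms_of_astep}. The three configuration-advancing rules \rniaprop, \rniarcst, and \rniajoin each take a premise tuple $(\_, pp', \_, m')$ — reachable by the induction hypothesis, say $(pp',m') \in \Mreachability[pp,m_a]$ for some $m_a \in \{m_1,m_2\}$ — together with an abstract step $(pp'', \_, U) \in \Mupdate[pp', m']$, and emit the configuration $(pp'', \Mapplyupdate[m', U])$; this is exactly one application of $f$, so $(pp'', \Mapplyupdate[m',U]) \in \Mreachability[pp, m_a]$, hence in the union. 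The remaining rule \rniarcns reproduces the premise's configuration $(pp', m')$ unchanged, so it introduces nothing new. This settles $\supseteq$.

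For the inclusion $\subseteq$ (every reachable configuration is produced), I would induct on the length of the $\Mupdate$-chain witnessing $(pp', m') \in \Mreachability[pp, m]$ for $m \in \{m_1, m_2\}$, proving the strengthened claim that some tuple $(\ctxlabel, pp', \Lambda, m')$ lies in $\Mnoninterference[\Lambda_0, pp, m_1, m_2]$. The length-zero case is \rniainit. For the step, given a reachable NIA tuple at $(pp',m')$ and an abstract step $(pp'', I_{pp}, U) \in \Mupdate[pp',m']$ realizing the next chain element, I would perform a case analysis to exhibit a rule that advances to $(pp'', \Mapplyupdate[m',U])$: when the context is $\Mlctx$ and either $\lnot\Mccf$ or $\bigsqcup_{i \in I_{pp}}\Lambda(i) = \L$, \rniaprop fires; when the context is $\Mlctx$, $\Mccf$ holds, and $\bigsqcup_{i \in I_{pp}}\Lambda(i) = \H$, then either \rniarcst advances directly, or, if $\Mjoinat$ returns exactly $(pp',pp'')$, \rniarcns first raises the context (leaving the configuration fixed) and then \rniajoin crosses the transition; and when the context is already tainted, \rniaprop fires unless $(pp',pp'')$ is the recorded joining transition, in which case \rniajoin fires. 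These cases are complementary, since \rniarcst and \rniarcns are guarded by $pp_{j_1} \neq pp' \lor pp_{j_2} \neq pp''$ and its negation.

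The main obstacle, and the part I would treat most carefully, is this last case analysis for $\subseteq$ — specifically the crossing of a joining transition, where \rniaprop is deliberately disabled by the side condition $pp_{j_1} \neq pp' \lor pp_{j_2} \neq pp''$ and progress must instead come from \rniajoin. The subtlety is that \rniajoin is a binary rule requiring two premises sharing the context $(\H, pp_d, (pp', pp''))$ at $pp'$, whereas a single reachability chain may arrive with only one such tuple. I would resolve this by applying \rniajoin diagonally, instantiating both premises and both abstract steps with the same tuple and the same $U$: the rule does not forbid $\Lambda_1 = \Lambda_2$ and $m_1' = m_2'$, and in that instantiation both emitted configurations collapse to $(\Mlctx, pp'', \Lambda', \Mapplyupdate[m', U])$, delivering exactly the required successor. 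I would also verify the bookkeeping that the context recorded by \rniarcst or \rniarcns is preserved unchanged by \rniaprop across the tainted region, so that the context pattern demanded by \rniajoin at $pp'$ is indeed available.
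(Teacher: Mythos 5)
Your proof is correct, and it is substantially more rigorous than the paper's own argument, which is a two-line sketch: the paper merely observes that \rniainit{} returns the initial memories, that all other rules apply $\Mapplyupdate$ with an update from $\Mupdate$ directly, and parenthetically that \rniarcns{} is always directly followed by \rniajoin. That sketch really only establishes your $\supseteq$ direction (plus a gesture at the rest); you instead decompose the equality into two inclusions and actually discharge $\subseteq$, which the paper leaves implicit. The two genuinely valuable additions in your route are, first, the up-front observation that the theorem is purely structural --- it compares $\Mnoninterference$ against the $\Mupdate$-defined fixed point of $\Mreachability$, so no soundness conditions on $\Mupdate$, $\Mccf$, or $\Mjoinat$ are needed, only their totality and types --- and, second, the resolution of the joining-transition obstacle: since \rniaprop{} is disabled exactly when $(pp',pp'')$ is the recorded joining transition, progress must come from the binary rule \rniajoin, and you correctly note that nothing in its premises forbids instantiating both tuples and both abstract steps identically, so the diagonal application $\Lambda_1=\Lambda_2$, $m_1'=m_2'$, $U_1=U_2$ collapses both conclusions to the single required successor $(\Mlctx, pp'', \Lambda', \Mapplyupdate[m',U])$. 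Your case analysis for $\subseteq$ is also complete: the guard of \rniaprop{} ($\lnot\Mccf$, or low join over $I_{pp}$ in an $\Mlctx$ context, or an already-tainted context away from the joining transition), \rniarcst{} when $\Mjoinat$ returns a transition other than $(pp',pp'')$, and \rniarcns{} followed by diagonal \rniajoin{} in the remaining case --- matching the paper's parenthetical remark but actually justifying it. What your approach buys is a verification that would catch a real design error (e.g., if \rniajoin's premises had required distinct memories, the theorem as stated would fail), whereas the paper's one-liner would not; the cost is length, since the paper's intended audience is presumably expected to see the $\subseteq$ direction as routine.
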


\begin{proof}
\rniainit returns the initial memories as required, all other rules (except \textsc{Raise-Context-No-Step}, which is always directly followed by \rniajoin) $\Mapplyupdate$ the updates required by $\Mupdate$ directly. 
\end{proof}

$\Mnoninterference$'s soundness w.r.t.\ noninterference is captured by \autoref{thm:nia_sound_memories} and \autoref{thm:nia_sound_traces}.
\ifextended
The full proofs are given in ~\autoref{sec:proofs} for space reasons. 
\else
We relegate the full proofs to the technical report~\cite{wanilla-tr} for space reasons. 
\fi

\begin{theorem}[Soundness of $\Mnoninterference$ (w.r.t. memories)]
\label{thm:nia_sound_memories}
Given an attacker $\ell$ and a security policy $\Gamma$:
For any two executions starting at the same program position $pp$ from two memories $m_1$ and $m_2$ that are equal w.r.t.\ $\ell$ and $\Gamma$ and terminating with the updated memories $m_1'$ and $m_2'$, it holds that
for every memory position $p$ that $m_1'$ and $m_2'$ disagree on, $\Mnoninterference$ can derive a labeling $\Lambda$ at $\terminatepc$ for either $m_1'$ or $m_2'$ that labels $p$ $\H$ from $\ell$'s and $\Gamma$'s label map, $pp$, $m_1$, and $m_2$.
\begin{myalign}	
&\forall \ell\,\Gamma\,pp\,m_1\,m_2\,m_1'\,m_2'\,p~.~\elleqg[m_1][m_2] \land m_1'(p) \ne m_2'(p)\land{}\\
&\quad(\terminatepc,m_1') \in \Mstep[pp, m_1][*] \land (\terminatepc,m_2') \in \Mstep[pp, m_2][*]{} \\
&\quad\quad \implies \exists \ctxlabel\, m'\, \Lambda\,.\,  (\ctxlabel, \terminatepc, m', \Lambda) \in \Mnoninterference[\Lmap, pp, m_1, m_2] \land {} \\
&\quad\quad\quad\Lambda(p) = \H \land (m_1' = m' \lor m_2' = m')
\end{myalign}
\end{theorem}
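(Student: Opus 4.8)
The plan is to prove the statement by a relational lockstep argument, carried out by well-founded induction on the sum of the remaining step counts of the two terminating executions $\Mstep[pp,m_1][*]$ and $\Mstep[pp,m_2][*]$. I would maintain a two-mode invariant over \emph{aligned} pairs of reachable configurations $((pp_1,\mu_1),(pp_2,\mu_2))$, one drawn from each execution. In the \emph{synchronized} mode we have $pp_1 = pp_2$ and the context is $\Mlctx$, and I require that $\Mnoninterference$ derives $(\Mlctx, pp_1, \Lambda_1, \mu_1)$ and $(\Mlctx, pp_1, \Lambda_2, \mu_2)$ with the property that every position on which $\mu_1,\mu_2$ disagree is labeled $\H$ by \emph{both} $\Lambda_1$ and $\Lambda_2$ (equivalently, the memories are low-equal w.r.t.\ each labeling). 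In the \emph{diverged} mode the two configurations share one high context object $(\H, pp_d, (pp_{j_1},pp_{j_2}))$, may sit at different positions, and I only require the weaker disjunctive property that every disagreeing position is labeled $\H$ by at least one of $\Lambda_1,\Lambda_2$. Establishing the synchronized invariant at the start is immediate: \rniainit emits $(\Mlctx, pp, \Lmap, m_1)$ and $(\Mlctx, pp, \Lmap, m_2)$, and $\elleqg[m_1][m_2]$ unfolds via the definition of $\Lmap$ to exactly $m_1(p)\neq m_2(p)\implies\Lmap[p]=\H$. The conclusion then falls out of the terminal invariant: once both executions stand at $\terminatepc$, either mode yields a derivation whose labeling marks every differing $p$ as $\H$, and the theorem only asks for one such labeling matching $m_1'$ or $m_2'$.

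The engine driving every preservation step is a single lemma isolating conditions (3) and (6) of \autoref{fig:soundness_astep}: if an execution steps from $(pp^{*},\mu_i)$ with chosen update $U_i$, and $\mu_i$'s disagreements with a second memory $\mu_j$ are all $\H$ in $\Lambda_i$, then (a) whenever $\Mstep[pp^{*},\mu_j]$ would move to a different follow-up, $\bigsqcup_{q\in I_{pp}}\Lambda_i(q)=\H$, and (b) whenever $(p,v_i,I_{U_i})\in U_i$ but $\Mstep[pp^{*},\mu_j]$ produces a different value at $p$, then $\bigsqcup_{q\in I_{U_i}}\Lambda_i(q)=\H$. Both parts follow by instantiating the universally quantified $U'$ in (3) and (6) with the update that rewrites $\mu_i$ into $\mu_j$, whose touched positions are $\H$ by hypothesis. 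Crucially, (7) guarantees that two steps from the same position enumerate the same set of ``potentially changed'' positions, and (6) forces every spuriously updated (identity) position into its own input set, so a genuine difference can never be silently relabeled $\L$ — this is exactly what makes flows-by-not-writing sound.

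With the lemma in hand I would case-split each synchronized step on $\Mccf$. If $pp^{*}$ is not conditional control flow, or is but its control input set is all-$\L$, part (a) forces both executions onto the same follow-up, so \rniaprop applies to each and part (b) re-establishes the both-$\H$ invariant, with (7) covering unwritten positions. If the control input is $\H$, then by \autoref{def:ccf-soundness} and part (a) \rniarcst fires for both executions with the \emph{shared} join transition $\Mjoinat[pp,pp^{*},m_1,m_2]$ (shared because $\Mjoinat$ depends only on the initial data), moving both into the diverged mode; the disjunctive invariant holds at entry by the joint-step argument. Inside the diverged region each execution advances asynchronously by \rniaprop with $\ctxlabel_L=\H$, which stamps every written position $\H$ and so preserves the disjunctive invariant trivially. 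By \autoref{def:joinat-soundness} both terminating executions must traverse the post-dominating transition $(pp_{j_1},pp_{j_2})$; there \rniaprop is blocked by its last premise, \rniajoin fires on the two high-context states, and its first clause discharges all positions untouched by the joining transition, re-entering the synchronized mode with a shared $\Lambda'$.

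The hard part will be the interaction between the diverged mode's weak disjunctive invariant and \rniajoin's \emph{second} clause, which must mark a position $\H$ from $\bigsqcup_{I_{U_1}}\Lambda_1\sqcup\bigsqcup_{I_{U_2}}\Lambda_2$ when the joining transition itself computes differing values $v_1\neq v_2$. Here the key lemma only supplies an $\H$-disagreeing input on \emph{one} side, and since one execution may have written a position the other did not inside the region, that input can be $\H$ in $\Lambda_2$ yet $\L$ in $\Lambda_1$ (and symmetrically), so neither lattice join is obviously $\H$. I would attack this by strengthening the diverged-mode invariant to record, per differing position, \emph{why} it is high — namely whether it differed already at the divergence point (hence $\H$ in \emph{both} labelings) or was written in-region (hence $\H$ in the \emph{writing} execution's labeling) — and then applying (6) symmetrically to $U_1$ and $U_2$ to force at least one lub to $\H$. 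A secondary obstacle, which the same bookkeeping should settle, is guaranteeing aligned \emph{entry} into the diverged mode when both executions happen to take the same branch at an $\H$-conditioned instruction while their value-dependent control input sets $I_{pp}$ differ; I would show the shared post-dominator still reconverges them via \rniajoin, using \rniarcns for the degenerate transitions in which one side does not progress before rejoining.
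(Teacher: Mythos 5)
Your overall architecture coincides with the paper's: your two-mode invariant is exactly the paper's \autoref{thm:nia_general_memory_soundness} (a \emph{shared} context object, with the conjunctive $\sqcap$-labeling property in low context and the disjunctive $\sqcup$-property in high context), your ``key lemma'' is exactly how the paper exploits conditions $(3)$ and $(6)$ of \autoref{fig:soundness_astep} (instantiating the universally quantified $U'$ with the update rewriting one memory into the other), your base case is \autoref{lemma:initial_labeling}, and your final existential weakening is \autoref{lemma:weakening_soundness}. The only organizational difference is the induction: the paper first factors an arbitrary pair of terminating executions into \emph{synchronous} and \emph{diverging} pairwise steps (\autoref{lemma:general_execution_pairwise_reachable}, \autoref{corollary:terminating_execution_pairwise_reachable}) and then inducts on the number of such pairwise steps, treating each diverged excursion atomically (\autoref{lemma:diverging_steps_context}, \autoref{lemma:diverging_steps}, \autoref{lemma:synchronous_steps}), whereas you induct on remaining concrete steps and walk through the diverged region incrementally. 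These are interchangeable, but note that the paper's diverging step reconverges at the \emph{first} shared program position, which may precede the $\Mjoinat$ transition; the context then simply stays $\H$ across further synchronous steps until the joining transition blocks \rniaprop, a situation your mode discipline should state explicitly (your ``diverged mode'' must also admit equal positions).

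The genuine gap is in your last paragraph, at precisely the spot you flag. Your provenance strengthening tells you that a differing $q \in I_{U_1}$ with $\Lambda_1(q) = \L$ must have been written only by the second execution (and symmetrically for $q' \in I_{U_2}$), but ``applying $(6)$ symmetrically'' then leaves you exactly in the crossing pattern you describe --- neither lattice join is forced to $\H$, and nothing in your sketch refutes that pattern. What closes it is a \emph{hybrid-memory} argument, which needs no provenance at all: assume toward contradiction that $\bigsqcup_{i \in I_{U_1}}\Lambda_1(i) = \bigsqcup_{i \in I_{U_2}}\Lambda_2(i) = \L$, and define $h$ to agree with $m_1'$ on $I_{U_1}$, with $m_2'$ on $I_{U_2}$, and with $m_1'$ elsewhere. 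This is well-defined because a \emph{differing} position in $I_{U_1} \cap I_{U_2}$ would be labeled $\L$ by both $\Lambda_1$ and $\Lambda_2$, contradicting the disjunctive invariant; hence differing positions lie in at most one input set. The rewrite $m_1' \to h$ touches no position of $I_{U_1}$, so the contrapositive of condition $(6)$ forces the concrete step from $(pp', h)$ to produce $v_1$ at $p$; symmetrically, $m_2' \to h$ touches no position of $I_{U_2}$, forcing the same concrete step to produce $v_2$. Since $\Mstep[pp', h]$ is one and the same step, $v_1 = v_2$, contradicting $v_1 \neq v_2$. So the plain disjunctive invariant suffices, and your strengthened bookkeeping is both insufficient as sketched and unnecessary.

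For calibration: the paper's own proof of \autoref{lemma:synchronous_steps} merely asserts this step (``we know by the hypothesis and the soundness condition of $\Mupdate$ that the upper bound \ldots will be $\H$''), so you correctly identified the one place where the published argument is terse; in the \wasm{} instantiation the case is moreover vacuous, since joining transitions only delete stack values and hence cannot produce differing updated values, as the paper notes.
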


\begin{proof}
\ifextended
We prove the stronger \autoref{thm:nia_general_memory_soundness} in \autoref{sec:proofs}.
\fi
In short, we observe that the pairwise execution can either proceed in lock-step or not.
We show that if it does not proceed in lock-step, we can derive $\H$ as context label, and that if we can only derive $\L$ as context label, we have the same correct labeling for both executions.
\ifextended
\else
Refer to the technical report~\cite{wanilla-tr} for~details.
\fi
\end{proof}

In some contexts, it is useful not only to define noninterference in terms of indistinguishability of memories but also to model that an attacker can observe certain events.
\example{In the case of WebAssembly, this might, for example, be imported functions with observable side effects, such as \ccode{untrusted_log}.}

To capture this notion, we first define the (possibly infinite) trace of an execution as
\begin{myalign}
&\Tr{pp, m} = (pp,m) \cdot \Tr{pp', m'}\\
&\quad\quad \text{ where }(pp',m') = \Mstep[pp,m]
\end{myalign}

To assess which events are observable by an attacker, we introduce \emph{security policies for events}.
$\Gamma_{E}: \MPPos \mapsto \Lat$ assigns to every program position an element from the security lattice.
If, for an attacker $\ell$, we have $\Gamma_E(pp) \flowsto \ell$, an attacker can observe the fact that an event has happened at the given program position.
\example{In our example, we have $\PU \flowsto \Gamma_E(15)$, as the attacker at $\PU$ is able to observe the call to \ccode{untrusted_log}.}
Additionally, we assume a function that tells us for every memory position if an attacker can observe it as part of the event, i.e., a function that assigns security policies to program positions:
$\Gamma_{\textit{ME}}: \MPPos \mapsto (\MMPos \mapsto \Lat)$.
\example{In our example, we have $\Gamma_{\textit{ME}}(15) = \Gamma_{15}$, as at the call site of \ccode{untrusted_log}, we require the linear memory to only contain public data, which is exactly what $\Gamma_{15}$ encodes.
}

With these definitions, we can define trace equality w.r.t.\ an attacker $\ell$, $\traceelleqg$ (and its negation $\traceellneqg$), as follows:
\begin{myalign}
  \traceelleqg[t_1][t_2] =
  \begin{cases}
    \textit{true}            & \text{if }          t_1 = \epsilon \land t_2 = \epsilon \\
    \traceelleqg[t_1'][t_2]  & \text{elif }          t_1 = (pp_1, m_1) \cdot t_1' \land \Gamma_E(pp_1) \notflowsto \ell \\
    \traceelleqg[t_1][t_2']  & \text{elif }          t_2 = (pp_2, m_2) \cdot t_2' \land \Gamma_E(pp_2) \notflowsto \ell \\
    \traceelleqg[t_1'][t_2'] & \text{elif }          t_1 = (pp, m_1) \cdot t_1' \land t_2 = (pp, m_2) \cdot t_2' \land {} \\
                       &\phantom{\text{elif }} \elleqg[m_1][m_2][\Gamma_\textit{ME}(pp)]\\
	\textit{false} & \text{otherwise}
  \end{cases}
\end{myalign}

$\Mnoninterference$'s soundness w.r.t.\ establishing the equality of event traces is captured by the following theorem.

\begin{theorem}[Soundness of $\Mnoninterference$ (w.r.t.\ event traces)]
\label{thm:nia_sound_traces}
Given an attacker $\ell$:
For any two traces originating from the same program position $pp$ from two memories $m_1$ and $m_2$ that are equal w.r.t.\ $\ell$ and $\Gamma$, it holds that if the
traces are not equal w.r.t.\ $\ell$, there exists a configuration $(pp', m')$ for which $\Mnoninterference$ can derive a context $\ctxlabel$ and labeling $\Lambda$, such that $\ell$ can observe the event at $(pp',m')$ and either $\ctxlabel$ or a memory position that does not flow to $\ell$ is labeled $\H$ in $\Lambda$. 
\begin{myalign}
&\forall \ell\,pp\,m_1\,m_2. \traceellneqg[\Tr{pp,m_1}][\Tr{pp,m_2}] \land\elleqg[m_1][m_2] \implies {}\\
&\quad\exists \ctxlabel\, pp'\, m'\, \Lambda\,.\,   (\ctxlabel, pp', m', \Lambda) \in \Mnoninterference[\Lmap, pp, m_1, m_2] \land \Gamma_E(pp',m') \flowsto \ell \land {} \\
&\quad\quad \big((\ctxlabel = (\H,\_,\_))  \lor \exists p . \Lambda(p) = \H \land \Gamma_\textit{ME}(pp,p) \notflowsto \ell \big)
\end{myalign}

\end{theorem}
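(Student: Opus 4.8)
The plan is to reduce the statement to a claim about the \emph{first} point at which the two event traces diverge, and to discharge that claim using the relational lock-step invariant underlying \autoref{thm:nia_sound_memories}, i.e.\ the strengthened \autoref{thm:nia_general_memory_soundness}. The key observation is that, although $\Tr{pp,m_1}$ and $\Tr{pp,m_2}$ may be infinite, the judgement $\traceellneqg[\Tr{pp,m_1}][\Tr{pp,m_2}]$ bottoms out in the ``otherwise'' clause of $\traceelleqg$ after finitely many unfoldings. I would therefore induct on that finite derivation, carrying as an invariant that, up to the current pair of heads, $\Mnoninterference[\Lmap, pp, m_1, m_2]$ has derived configurations for both executions together with a label map that over-approximates their differences in the sense of \autoref{thm:nia_general_memory_soundness}.

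In the three recursive clauses of $\traceelleqg$ the traces still agree on the visible prefix, so the only task is to re-establish the invariant after advancing. When an invisible head ($\Gamma_E(pp) \notflowsto \ell$) is skipped, I advance the corresponding execution by one $\Mstep$ and appeal to soundness of $\Mupdate$ (\autoref{def:soundness_astep}) together with \rniaprop and \rniarcst to show the derivation keeps pace; while the context is $\Mlctx$ both executions sit at the same program position (soundness of $\Mccf$, \autoref{def:ccf-soundness}), so invisible events are skipped in lock-step, and the matching clause advances both executions through a shared visible event with $\ell$-equal observable memory. The one-sided skipping in the definition is exactly what accommodates the high-context windows, where the two executions may occupy different positions and emit visible events at different times.

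The witness is extracted in the base (``otherwise'') clause, where I distinguish two sub-cases. If both heads are visible events at the \emph{same} position $pp'$ but $\elleqg[m_1'][m_2'][\Gamma_\textit{ME}(pp')]$ fails, then by the invariant the executions reached $pp'$ in lock-step under context $\Mlctx$, so the offending observable position $p$ is labeled $\H$ by \autoref{thm:nia_general_memory_soundness}; hence $\Mnoninterference$ derives a configuration $(\Mlctx, pp', \Lambda, m')$ with $m' \in \{m_1', m_2'\}$ and $\Lambda(p) = \H$, which together with $\Gamma_E(pp') \flowsto \ell$ discharges the second disjunct. Otherwise the mismatch is a \emph{control-flow} divergence (different positions, or one execution having halted at $\terminatepc$ while the other still emits a visible event). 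Such a divergence can only arise from a conditional branch on $\H$-labeled data, which by soundness of $\Mccf$ forces \rniarcst (or \rniarcns) to raise the context to $(\H,\_,(pp_{j_1},pp_{j_2}))$. By soundness of $\Mjoinat$ (\autoref{def:joinat-soundness}) the context can only be lowered again at the post-dominating joining transition, which requires \emph{both} executions to reach it; since they diverge before doing so, or $\Mjoinat$ returns $\nojoinpc$ for permanently diverging or differently terminating runs, the divergent visible event still occurs under context $\H$, yielding a derivable configuration with $\ctxlabel = (\H,\_,\_)$ and discharging the first disjunct.

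The main obstacle will be establishing and threading the invariant of \autoref{thm:nia_general_memory_soundness} through an entire high-context window: one must show that after \rniarcst raises the context, every one-sided step of either execution is tracked by \rniaprop with a correctly propagated label, that the two branches remain independently derivable, and that \rniajoin re-synchronizes them at $(pp_{j_1},pp_{j_2})$ with a label map that both satisfies the memory-difference soundness condition and restores the context to $\Mlctx$. The delicate points are the interaction between the one-sided trace skipping and the two-sided analysis, and the $\nojoinpc$ case, where permanent divergence must keep the context pinned at $\H$ so that every subsequent observable event is flagged.
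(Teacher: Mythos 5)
Your proposal follows essentially the same route as the paper's proof: reduce $\traceellneqg$ to the first observable mismatch after an $\ell$-equal prefix; discharge the shared-position, differing-memory case via the relational invariant of \autoref{thm:nia_general_memory_soundness}; and discharge position mismatch and premature termination by showing the offending event occurs inside a divergence window where only an $\H$-labeled context is derivable. The paper packages the latter as \autoref{lemma:diverging_steps_context} over its synchronous/diverging pairwise step relation $\Mstep[][][\{s,d\}]$, and your induction on the unfolding of $\traceelleqg$ is the paper's induction on the number of attacker-observable events in the equal prefix (\autoref{lemma:equal_prefix_pairwise_execution_reachable}) in different clothing, so the difference is bookkeeping rather than substance.

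One intermediate claim is too strong, though the conclusion survives. In the base case you assert that a shared visible event with $\ell$-distinguishable memories is reached ``in lock-step under context $\Mlctx$.'' That fails when the executions diverge on tainted data and re-join (via \rniajoin) before emitting the shared event, or when the event position is exactly the endpoint of a diverging step at which only an $\H$ context has been derived: the history is then not lock-step, and a low context, where derivable, exists only from the join onward. The repair is already inside the invariant you are threading: \autoref{thm:nia_general_memory_soundness} is stated for arbitrary mixtures of synchronous and diverging pairwise steps, and its two context cases ($\ctxlabel_L=\L$: both labelings taint every differing position; $\ctxlabel_L=\H$: their join does) combined with the weakening step (\autoref{lemma:pairwise_step_weak_noninterference} via \autoref{lemma:weakening_soundness}) yield a derivable labeling with $\Lambda(p)=\H$ at the event regardless of how it was reached. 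So drop the lock-step premise and invoke the invariant directly, which is exactly what the paper's memory-mismatch case does; the same caution applies to your recursive-clause claim that a low context forces equal program positions, which holds only along invariant-respecting derivations, not for the nondeterministic rule system at large (recall the paper's spurious joins).
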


\begin{proof}
\ifextended
See \autoref{sec:proofs}.
\else
For details, consult the technical report~\cite{wanilla-tr}.
\fi
\ifextended
In short, if an event appears in only one trace, it has to be executed in a context labeled $\H$; if it appears in both traces, but they disagree on an attack-observable memory position, this position has to be labeled $\H$ by \autoref{thm:nia_general_memory_soundness}.
\else
In short, if an event appears in only one trace, it has to be executed in a context labeled $\H$; if it appears in both traces, but they disagree on an attack-observable memory position, this position has to be labeled $\H$ by the technical report's main theorem.
\fi
\end{proof}

\section{Analysis}
\label{sec:analysis}

\begin{figure*}
\small
\newcommand{\opc}{\overline{\textit{pc}}}
\begin{align*}
\Wjoinat[pp, \ppi' \cdot (\fid, \ipc), m_1, m_2] &= 
\begin{cases}
(\ppi' \cdot (\fid, pc'), \ppi' \cdot (\fid, pc' + 1)) &\text{if }    \ipc' < |\access{\access{M}{\arraccess{funcs}{\fid}}}{body}| \land  \access{c_1}{pp} = \access{c_2}{pp} = \ppi' \cdot (\fid, \ipc) \land {} \\
                                           &\phantom{\text{ if }}     m_1 = \access{c_1}{mem} \land m_2 = \access{c_2}{mem} \land \left(\derive{\Delta}{\h{ScopeExtend}[\fid][\ipc, pc']}\right) \land{} \\
                                           &\phantom{\text{ if }}     \Delta = \Wabstract[c_1] \cup \Wabstract[c_2] \land \left(\forall \opc . pc' < \opc \implies \nderive{\Delta}{\h{ScopeExtend}[\fid][\ipc, \opc]}\right) \\
(\ppi' \cdot (\fid, \ipc), \ppi' \cdot (\fid, \ipc + 1)) &\text{if }  \arraccess{\access{\access{M}{\arraccess{funcs}{\fid}}}{body}}{\ipc} = \CallIndirect \\
(\nojoinpc, \nojoinpc) &  \text{otherwise}
\end{cases}\\
\Wccf[pp, \ppi' \cdot pp', m_1, m_2] &= \begin{cases}
\textit{true} & \text{if } \arraccess{\access{\access{M}{\arraccess{funcs}{\access{\mathit{pp}'}{fid}}}}{body}}{\access{pp'}{pc}} \in \{\BrIf{}, \BrTable{}, \CallIndirect, \If{} \} \\
\textit{false} & \text{otherwise}
\end{cases} 
\end{align*}
\caption{Definitions of the specific instances of $\Mjoinat$ and $\Mccf$ for \tool{} (with fixed $\Gamma_0$ and $M$). Note that the program positions in \WA are modeled as stack of pairs of function identifiers and program counter.}
\label{fig:joinatccf}
\end{figure*}
\begin{figure*}[ht]
\ExplSyntaxOn
\makefitcolumn{
\newcommand{\myspace}{\hspace{0.35cm}}
\begin{smallalign*}
  & \mathrlap{
    \addedInWanilla{ \horstTypeFlowLabel = \mathTypeDefinitionConstructors{FlowLabel}} \myspace
    \addedInWanilla{ \horstTypeContext = \mathTypeDefinitionConstructors{Context}} \myspace
    \horstTypeLValue = \horstTypeValue \times \addedInWanilla{ \horstTypeFlowLabel } \myspace
    \horstTypeMemory = \mathTypeDefinitionConstructors{Memory} \myspace
    \addedInWanilla{ \horstTypeTablePrecision = \mathTypeDefinitionConstructors{TablePrecision}} \myspace
    \addedInWanilla{ \horstTypeTable = \mathTypeDefinitionConstructors{Table}}
  }\\
  & \mathPredicateSignatureName{MState} : &&
  \group_begin:
    \declarePredicateVariables{MState}
    \horstDeclareParVars{\thisPredicateParVarNameList}
    \addedInWanilla{\seq_item:cn { \thisPredicateArgTypeList } { 1 }} \times
    \seq_item:cn { \thisPredicateArgTypeList } { 2 } \times
    \seq_item:cn { \thisPredicateArgTypeList } { 3 } \times
    \seq_item:cn { \thisPredicateArgTypeList } { 4 } \times
    \seq_item:cn { \thisPredicateArgTypeList } { 5 } \times
    \addedInWanilla{\seq_item:cn { \thisPredicateArgTypeList } { 6 }} \times
    \seq_item:cn { \thisPredicateArgTypeList } { 7 } \times
    \seq_item:cn { \thisPredicateArgTypeList } { 8 } \times
    \seq_item:cn { \thisPredicateArgTypeList } { 9 }
  \group_end:
  &&& \addedInWanilla{ \mathPredicateSignatureName{ScopeExtend} } :&& \addedInWanilla{ \mathPredicateSignatureArgumentTypes{ScopeExtend} } \\
  & \addedInWanilla{ \mathPredicateSignatureName{MStateToJoin} }:&& \addedInWanilla{ \mathPredicateSignatureArgumentTypes{MStateToJoin} }
\end{smallalign*}
}
\ExplSyntaxOff

  \caption{Subset of type and predicate signature of \tool{}. Changes from \wappler are highlighted.}
  \label{fig:wanilla_signature}
\end{figure*}

Following the approach described in the previous section, we implemented a prototype analysis, \tool{}\footnote{\textbf{W}eb\textbf{A}ssembly \textbf{N}on\textbf{i}nterference \textbf{L}ayered \textbf{L}abel \textbf{A}nalysis}, which we base on \wappler{}~\cite{wappler}, an  existing reachability analysis for \WA.
In the model presented in \autoref{sec:background}, \wappler is an instance of $\Mreachability$ (respectively $\Mupdate$), while \tool is an instance of $\Mnoninterference$.
Going forth, we will overview \wappler and \tool's basic analysis approach in \autoref{sec:analysis_horn_clause_abstraction} and then highlight some interesting implementation details in \autoref{sec:from_wappler_to_wanilla}. 
\ifextended
The interested reader can find an extended technical discussion with additional examples in \autoref{sec:additional_details}.
\else
The interested reader can find an extended technical discussion with additional examples in the technical report~\cite{wanilla-tr}.
\fi

\subsection{Horn-clause-based Abstractions}
\label{sec:analysis_horn_clause_abstraction}
Both \wappler and \tool implement a so-called Horn-clause-based abstraction. Intuitively, the concrete configurations of the analyzed program are translated into sets of first-order predicates, the so-called \emph{abstract configurations}.
As the name suggests, these sets may represent many concrete configurations, as some of the details of the concrete configuration are abstracted away.
The behavior of the analyzed program is translated into (constrained) Horn clauses, i.e., first-order implications with a single (non-negated) predicate application in the conclusion and a conjunction of (non-negated) predicate applications (and possibly some terms over a theory) in the premise. 
\ifextended
The translation to facts and Horn clauses is formalized by the function $\Wabstract$, which is described in detail in Appendix \autoref{sec:proofs-on-wanilla}
\else
The translation to facts and Horn clauses is formalized by the function $\Wabstract$, which is described in detail in the technical report \cite{wanilla-tr}.
\fi
Queries of the form $(pp', m') \in \Mreachability[pp, m]$ are thus translated into (first-order) satisfiability problems of the shape ``Is $\Delta'$ (an abstraction of $(pp', m')$) logically entailed by $\Delta$ (an abstraction of $(pp, m)$ and the behavior of the analyzed program)'' or, symbolically, $\derive{\Delta}{\Delta'}$.
These satisfiability problems can then be solved by SMT solvers, such as Z3~\cite{DBLP:conf/tacas/MouraB08}.
SMT solvers can universally quantify over inputs, which means that, given a sound abstraction, an UNSAT result in the abstract (i.e., a proof that an abstract configuration with certain properties is not entailed by an abstraction of any initial configuration) proves that concrete configurations with these properties cannot appear in concrete executions of the program. 
For \tool in particular, we can additionally universally quantify over all possible attackers and determine if any one of them may cause unintended flows.

\subsection{From \wappler to \tool}
\label{sec:from_wappler_to_wanilla}
\subsubsection{Predicate Signature}
\autoref{fig:wanilla_signature} shows an illustrative subset of the predicate signature of \tool (with the changes from \wappler{} \addedInWanilla{highlighted}).
$\mathPredicateSignatureName{MState}$ is the principal predicate of the analysis.
A fact $\h{MState}[\hfid,\hpc][\hspace{-0.6mm}\ctx,\hspace{-0.1mm} \st,\hspace{-0.1mm} \gt,\hspace{-0.1mm} \lt,\hspace{-0.4mm} \h{Mem}[i,\hspace{-0.1mm} \vm,\hspace{-0.1mm} \sz],\hspace{-0.1mm} t,\hspace{-0.1mm} \oat,\hspace{-0.1mm} \ogt, \hspace{-0.4mm} \h{Mem}[i,\hspace{-0.1mm} \ovm,\osz]\hspace{-0.6mm}]$ expresses that the (abstract) execution encountered an activation of the function $\hfid$ that executed the instruction at program counter $\hpc$ on the call stack.
Furthermore, the values on the value stack and the local variables of the activations hold the values in $\st$ and $\lt$, while $\gt$ holds the globals' values.%
\footnote{$\textsf{SI}.*$ is a family of functions that returns information on the module, such as the number of globals ($\fromSI{gs}{}$) or the stack size ($\fromSI{ss}{}$) for a program position $(\hpc, \hfid)$.}
The linear memory is handled slightly differently: since we do not necessarily know its size upfront, and it would be infeasible to generate it as a tuple for its size, we instead model it by having an $\mathPredicateSignatureName{MState}$-fact for every possible byte-sized memory cell at $i$. 
Being able to derive the fact from above means that the value $v$ is stored at index $i$ and the current size of the memory (in pages) is $\sz$.
$t$ describes the state of the \emph{table}, a special region in the memory that holds function pointers for dynamically dispatched calls.
It is relevant for the $\CallIndirect$ instruction, which takes the top-of-stack value to look up which function to call in the table (opposed to the $\Call{}$ instruction for compile-time-known call targets).
$t$ has two components.
The first tracks if the function table has been changed since the execution started.
If it has not, we can track call targets precisely (corresponding to $\h{TblPrecise}$); if it has, we have to overapproximate the call targets (corresponding to $\h{TblImprecise}$).
The second tracks whether tainted information has been stored to the table.
This is a slight change from \wappler{}, where the table contents are modeled as a separate predicate.
We did this since it simplifies the implementation, and Wasm 1.0 does not provide means to modify the table from within a module (imported functions, however, can change the table).
The last three parameters of $\mathPredicateSignatureName{MState}$ hold the arguments (a subset of the locals), the globals, and the memory cell (at the same offset as above) at the time the $\hfid$ was called.%
\footnote{\wappler holds these initial values to increase precision, as we will see, they are crucial for relating runs that started in attacker-indistinguishable configurations.}

To reconstruct the set of concrete configurations that are abstracted by an abstract configuration $\Delta$, one has to first find subsets of $\Delta$ that agree on all values but the memory and contain exactly one $\mathPredicateSignatureName{MState}$-fact for each possible memory index.
\ifextended
We call these subsets abstract activations and generate them with the function $\preframes$ (described in Appendix \autoref{sec:proofs-on-wanilla}).
\else
We call these subsets abstract activations and generate them with the function $\preframes$ (described in the technical report \cite{wanilla-tr}).
\fi
Due to the condition mentioned above, accessing the program position or the value of a memory position $p$ in an abstract activation $\Delta'$ is well-defined.
We write $\access{\Delta'}{pp}$ and $\mpaccess{\Delta'}{p}$, respectively.
These abstract activations can be arranged into call stacks by matching the values of $\st$, $\gt$, $\vm$, and $\sz$ in the abstract activation describing the caller with the values of $\oat$, $\ogt$, $\ovm$, and $\osz$ in the abstract activation describing the callee.
The globally accessible data (corresponding to the table, the linear memory, and the globals) appear once for each abstract activation; the ones used in the concretization are those contained in the abstract activation put at the top of the call stack.
\ifextended
The formal description of $\gamma$, the concretization function, is relegated to Appendix~\autoref{sec:proofs-on-wanilla}.
\else
The formal description of $\gamma$, the concretization function, is relegated to the technical report~\cite{wanilla-tr}.
\fi
Since the functions characterizing the program's semantics/reachability analysis ($\Mnoninterference$, $\Mccf$, $\Mjoinat$) are defined in terms of concrete configurations, it is, however, important to be able to convert between abstract and concrete configurations.

Excluding the minor change in modeling the table mentioned above, all changes between \wappler{}'s and \tool{}'s predicate signature are \emph{additions}:
The first parameter of $\mathPredicateSignatureName{MState}$ holds a \addedInWanilla{context} object, which corresponds to $\Mnoninterference$'s context object and is further discussed below.
In \tool{}, $\Mnoninterference$'s label maps are not instantiated as separate objects.
Instead, we change the definition of \wappler{}'s value type $\horstTypeLValue$ to carry a taint label.
The remaining predicates shown in \autoref{fig:wanilla_signature} will be explained when they are used in the following paragraphs.

\subsubsection{Transition Rules and Abstract Semantics}
\label{analysis_transition_rules}
\wappler works by assessing which \WA instructions can be executed at a given program position and then generating specialized Horn clauses accordingly.
Applying possibly multiple Horn clauses to an abstract configuration corresponds to an application of $\Mupdate$.
The next program position is reflected in the $\hfid$ and $\hpc$ component of the $\mathPredicateSignatureName{MState}$-fact in the conclusion, the update set in the changed values.
\ifextended
A definition of \wappler's (and \tool's) $\Mupdate$ function ($\Wupdate$) can be found in Appendix \ref{sec:proofs-on-wanilla}.
\else
A definition of \wappler's (and \tool's) $\Mupdate$ function ($\Wupdate$) can be found in the technical report~\cite{wanilla-tr}.
\fi
For \tool, we take the same approach of generating specialized Horn clauses for each instruction but additionally have to assess which rule from \autoref{fig:symbolic-runs} might be applicable, given the semantics of the instruction.
All instructions that do not pertain to control flow are instances of \rniaprop,
local control flow instructions, such as \BrIf{k}, are, depending on their position and the current labeling, instances of any rule (but \rniainit), $\End$, the instruction executed when leaving blocks, is either an instance of \rniaprop or \rniajoin, depending on the calculated \emph{scope extension}.

\subsubsection{Scope Extensions and Implicit Flows}
\label{sec:scope_extensions_and_implicit_flows}
Like previous work~\cite{DBLP:conf/sas/BastysASS22}, we use the syntactical structure to determine where implicit flows might happen and, conversely, where to apply \rniajoin.
In our predicate signature, there is a predicate $\mathPredicateSignatureName{ScopeExtend}$ that assigns for each function from which start program counter to which end program counter implicit flows might happen.
We can use $\mathPredicateSignatureName{ScopeExtend}$ to define $\Wjoinat$, which is $\Mjoinat$ instantiated to \tool for a fixed module description $M$ and security policy $\Gamma_0$, in \autoref{fig:joinatccf}.
$\mathPredicateSignatureName{ScopeExtend}$ is populated the following way:
When a conditional control flow instruction is executed with an $\H$-labeled condition and an $\L$-labeled context, we derive a $\mathPredicateSignatureName{ScopeExtend}$ predicate from the current program counter to the program counter where the control flow converges \emph{at the earliest}, as this follows from the syntactical structure of the program and is statically known.
In \autoref{fig:symbolic-runs}, this happens in \li{7}, where an initial scope from \li{7} to \li{11} (marked in blue) is calculated.
If, however, a conditional control flow instruction is executed with an already $\H$-labeled context, the scope starting at the divergence point stored in the context object is extended.
This happens in \li{10} and corresponds to the area marked in red.
For every function, we furthermore imply a $\mathPredicateSignatureName{ScopeExtension}$ fact from $\nojoinpc$\footnote{As the program counters in \wappler are non-negative, we use $-1$ to designate a program counter that cannot be reached.} to the end of the function.
When entering a function in an $\H$-labeled context, we set the context object to $\horstConstructorAppCtx{\ell, \H, \nojoinpc}$, which means that within the function, the context label can effectively not be lowered.
When the execution selects which function to enter based on tainted data, the context is similarly tainted for the whole execution until this function returns.
This happens when a $\CallIndirect$ instruction is executed and either the table or the top element of the value stack is tainted, and is handled in the second case in $\Wjoinat$'s definition.%
\footnote{$\arraccess{\access{\access{M}{\arraccess{funcs}{\fid}}}{body}}{\ipc} = \CallIndirect$ means that the instruction at program counter $\ipc$ in function $\fid$ is $\CallIndirect$, as described by the module description $M$.
The notation follows the Wasm specification~\cite{wasm-core-1.0}.}
For every potential transition where \rniajoin should possibly be executed (i.e., instances of $\End$ and instances of $\BrIf{}$ that leave loops), we generate Horn clauses with the following behavior: 
If the scope of the program counter stored in the context object extends over the next program counter, the context label will stay $\H$ (this corresponds to an application of \rniaprop).
\rniajoin, however, is executed in any case.
This is done because Horn clause reasoning does not allow us to negate premises (as in ``There is no scope that does extend over the next program counter.''). 
Our soundness statement is not affected by these spurious joins, as it speaks about the existence of certain results;
applying additional rules (in this case, applying \rniaprop and \rniajoin in parallel) only allows for the derivation of additional results.

\tool{}'s context object looks slightly different from $\Mnoninterference$'s.
Since the $\Mjoinat$'s result is encoded in the clauses, we do not need to track it in the context object.
Instead, we carry an element of $\Lat$ there to describe the current attacker's perspective.
This way, we can quantify over multiple attackers during one analysis run.

\rniajoin is implemented via the $\mathPredicateSignatureName{MStateToJoin}$-predicate and a special join rule that is generated for all potentially joining instructions.
This rule has two predicate applications of $\mathPredicateSignatureName{MState}$ in its premises that must have the same values in their initial values (i.e., their last three arguments) if they are labeled $\L$.
In this way, \rniajoin is only executed for configurations that are related by having been derived from starting configurations related by $\elleqg$.

\subsubsection{Overapproximations}
\label{analysis_overapproximations}

\tool inherits from \wappler its imprecise treatment of floating point numbers: every time a floating point number is involved in a computation, the result of said computation is indeterminate.
They are, however, still subject to the labeling system, meaning that the result of a computation involving only $\L$-labeled values still is labeled $\L$, even if, by looking at the results of two related executions, we could not determine if they have indeed the same values.
\tool additionally includes a conservative preanalysis to determine which memory positions could be modified in loops and overapproximates their values (while, again, keeping their labels).
This (optional) optimization helps SMT solvers, who often struggle with recursive Horn clauses (like they are generated by loops), to terminate.

\subsubsection{Soundness of \tool}
Given these definitions, we are now ready to formalize the soundness of \tool.
The following theorem states that
if two concrete executions\footnote{We assume hereafter that all executions are concerned with a single module $M$.} start out in configurations that are indistinguishable to an attacker and end up in configurations,
which disagree on a memory position the attacker can observe,
this memory position will be tainted.
In particular, the theorem states that for any \WA module description $M$ and two configurations $c_1$, $c_2$,
whose memories are equal w.r.t.\ an attacker $\ell$ and a security policy $\Gamma$ ($\elleqg[\access{c_1}{mem}][\access{c_2}{mem}]$)\footnote{Since a \WA configuration $c$ is not (syntactically) a tuple of program positions and memory, we instead write $\access{c}{pp}$ to access its program position as extracted by $\exconfpp[c]$ and $\access{c}{mem}$ to mean $\exconfmem[c]$.
\ifextended
Both $\exconfpp$ and $\exconfmem$ are defined in Appendix~\autoref{sec:proofs-on-wanilla}.
\else
Both $\exconfpp$ and $\exconfmem$ are defined in~\cite{wanilla-tr}.
\fi
},
which have the same program position ($\access{c_1}{pp} = \access{c_2}{pp}$),
whose executions end at the same program position $pp'$ ($\nstep{*}{c_1}{c_1'} \land \nstep{*}{c_2}{c_2'} \land pp' = \access{c_1'}{pp} = \access{c_2'}{pp} \land \terminalconf[pp']$),
we can derive an abstract configuration $\Delta$ from the abstractions of $c_1$ and $c_2$ ($\derive{\Wabstract[c_1][\Gamma][M], \Wabstract[c_2][\Gamma][M]}{\Delta}$),
such that $\Delta$ contains an abstract activation $\Delta'$ at $pp'$ ($\Delta' \in \preframes[\Delta] \land \access{\Delta'}{pp} = pp'$),
such that $\Delta'$ models the attacker $\ell$ ($\access{\Delta'}{attacker} = \ell$) and the value that $\Delta'$ holds at $p$ is tainted ($\mpaccess{\Delta'}{p} = (v, \H)$),
and equal to $p$ in either $c_1'$ or $c_2'$ ($\mcall{\access{c_1'}{mem}}{p} = v \lor \mcall{\access{c_2'}{mem}}{p} = v$).

\begin{theorem}[Soundness of \tool]
\hfill{}\\
\label{thm:soundness-wanilla}
\makefitcolumn
{
\small
\begin{align*}
&\forall \ell\,\Gamma\,M\,c_1\,c_2\,c_1'\,c_2'\,pp'\, p~.~\elleqg[\access{c_1}{mem}][\access{c_2}{mem}] \land \access{c_1'}{mem}(p) \ne \access{c_2'}{mem}(p)\land{}\\
&\hspace{0.2cm} \access{c_1}{pp} = \access{c_2}{pp} \land \nstep{*}{c_1}{c_1'} \land \nstep{*}{c_2}{c_2'} \land pp' = \access{c_1'}{pp} = \access{c_2'}{pp} \land \terminalconf[pp'] \\
&\hspace{0.4cm} \implies \exists \Delta\, \Delta'\, v \,.\, \derive{\Wabstract[c_1][\Gamma][M], \Wabstract[c_2][\Gamma][M]}{\Delta} \land \Delta' \in \preframes[\Delta] \land \access{\Delta'}{attacker} = \ell \land {}\\
&\hspace{0.6cm} \mpaccess{\Delta'}{p} = (v, \H) \land \access{\Delta'}{pp} = pp'  \land \left( \mcall{\access{c_1'}{mem}}{p} = v \lor \mcall{\access{c_2'}{mem}}{p} = v\right)
\end{align*}
}
\end{theorem}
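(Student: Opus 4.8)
The plan is to derive \autoref{thm:soundness-wanilla} from the generic memory-soundness result \autoref{thm:nia_sound_memories} by exhibiting \tool{} as a sound instance of the abstract framework of \autoref{sec:constructing_nia}. Concretely, I would factor the argument into three pieces: (i) verifying that \tool{}'s concrete characterising functions $\Wupdate$, $\Wjoinat$ and $\Wccf$ satisfy the interface soundness conditions of \autoref{def:soundness_astep}, \autoref{def:joinat-soundness} and \autoref{def:ccf-soundness}; (ii) establishing that the Horn-clause derivation relation $\derive{}{}$ abstractly simulates the set-based analysis $\Mnoninterference$, so that every tuple $(\ctxlabel, pp', \Lambda, m')$ producible by the rules of \autoref{fig:niarules} is witnessed by a derivable abstract activation; and (iii) combining the two with \autoref{thm:nia_sound_memories} to transport the concrete $\H$-labelling of $p$ into the abstract configuration $\Delta$.

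For piece (i), the $\Wccf$ obligation is a purely syntactic check: by its definition in \autoref{fig:joinatccf} it returns true exactly for the branching opcodes, which are the only \wasm{} instructions whose $\Wupdate$ admits more than one follow-up program position, so \autoref{def:ccf-soundness} holds immediately. The $\Wjoinat$ obligation is more delicate: I would argue that the $\textsf{ScopeExtend}$ predicate soundly over-approximates the set of program counters reachable while the control flow is still tainted, exploiting \wasm{}'s \emph{structured} control flow so that the earliest reconvergence point is statically determined by the block nesting; the returned transition $(\ppi' \cdot (\fid, pc'), \ppi' \cdot (\fid, pc'+1))$ is then post-dominating in the sense required by \autoref{def:joinat-soundness}, while the separate $\CallIndirect$ branch handles tainted dynamic dispatch, where no intra-function reconvergence exists and the join is deferred to the call's return. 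The $\Wupdate$ obligation reduces to the analogous soundness argument already carried out for \wappler{} (appealing to \autoref{def:ra_in_terms_of_astep}), extended only to track the per-value taint labels now folded into $\horstTypeLValue$ and the initial values retained in the last three components of $\textsf{MState}$.

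Piece (ii) is the technical heart. I would set up the concretisation $\gamma$ of abstract activations and prove, by induction on the length of the $\Mnoninterference$ derivation, a simulation lemma: whenever $(\ctxlabel, pp', \Lambda, m') \in \Mnoninterference[\Lmap, pp, m_1, m_2]$, there is an abstract configuration $\Delta$ with $\derive{\Wabstract[c_1][\Gamma][M], \Wabstract[c_2][\Gamma][M]}{\Delta}$ containing an activation $\Delta'$ at $pp'$ whose attacker component is $\ell$, whose context component encodes $\ctxlabel$, and whose value at each $p$ is $(m'(p), \Lambda(p))$. The base case matches \rniainit{} against the $\textsf{Init}$/$\Invoke$ clauses; the step cases match \rniaprop{}, \rniarcst{} and \rniarcns{} against the per-instruction transition clauses, and crucially match \rniajoin{} against the $\textsf{MStateToJoin}$ mechanism. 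Here I must check that the Horn-clause join's side conditions---requiring the two premise facts to agree on their retained initial values whenever those are $\L$-labelled---precisely restrict attention to run pairs related by $\elleqg$, mirroring the $\Lambda_1,\Lambda_2$ relation in \rniajoin{}, and that the label recomputation in the $\textsf{join}$ operation reproduces the three-case definition of $\Lambda'$. Since the conclusion is existential, the spurious joins forced by the non-negatability of Horn premises and the loop/floating-point over-approximations of \autoref{analysis_overapproximations} are harmless: they only enlarge the derivable set and can never invalidate a required witness.

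Finally, for piece (iii), I would instantiate \autoref{thm:nia_sound_memories} with the given $\ell$, $\Gamma$, $pp = \access{c_1}{pp} = \access{c_2}{pp}$, and the memories $\access{c_1}{mem}$, $\access{c_2}{mem}$ (using $\elleqg$ and that both executions terminate at $\terminalconf[pp']$). This yields a context $\ctxlabel$, a memory $m'$ equal to $\access{c_1'}{mem}$ or $\access{c_2'}{mem}$, and a labelling $\Lambda$ with $\Lambda(p) = \H$ derivable at $\terminatepc$. Feeding this tuple through the simulation lemma of piece (ii) produces the required $\Delta$, $\Delta'$ and $v = m'(p)$ with $\mpaccess{\Delta'}{p} = (v, \H)$, $\access{\Delta'}{attacker} = \ell$ and $\access{\Delta'}{pp} = pp'$. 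The main obstacle I anticipate is piece (ii): reconciling the pointwise, memory-indexed semantics of $\Mnoninterference$ with \tool{}'s abstraction, where label maps are not first-class but distributed across value labels and where the linear memory is represented by one $\textsf{MState}$ fact per byte cell; keeping the correspondence between a single concrete $\Lambda$ and the family of per-cell facts exact---while the attacker perspective is quantified inside the context object rather than fixed a priori---will require the most careful bookkeeping.
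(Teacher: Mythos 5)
Your proposal is correct and takes essentially the same approach as the paper: the paper likewise reduces \autoref{thm:soundness-wanilla} to the generic memory-soundness result together with separate soundness theorems for $\Wupdate$, $\Wjoinat$, and $\Wccf$, proven along exactly the lines you sketch (syntactic check for $\Wccf$, induction over control-flow instructions with $\textsf{ScopeExtend}$ and the deferred-join treatment of $\CallIndirect{}$ for $\Wjoinat$, and an appeal to \wappler{}'s soundness plus the input-set table for $\Wupdate$). The only difference is that your piece~(ii)---the simulation between the Horn-clause derivation relation and $\Mnoninterference$---is dispatched by the paper in a single sentence (``a fixed point computation such as $\Mnoninterference$ can be translated straight-forwardly to a constrained Horn-clause program''), so your elaboration is a more careful rendering of the same step rather than a different route.
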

\begin{proof}
\ifextended
This follows directly from \autoref{thm:nia_sound_memories} and the soundness of $\Wupdate$, $\Wjoinat$, and $\Wccf$. For details see Appendix \ref{sec:proofs-on-wanilla}.
\else
This follows directly from the main theorem in the technical report~\cite{wanilla-tr} and the soundness of $\Wupdate$, $\Wjoinat$, and $\Wccf$.
\fi
\end{proof}

If (with the same preconditions as before) \tool cannot derive an $\H$-label for any attacker-observable position $p$ ($\call{\Gamma}{p} \flowsto \ell$), the memories of $c_1'$ and $c_2'$ have to be equal w.r.t.\ to $\Gamma$ and $\ell$.
\begin{corollary}[Noninterference Proofs]
\hfill{}\\
\makefitcolumn{
{
\small
\begin{align*}
&\forall \ell\,\Gamma\,M\,c_1\,c_2\,c_1'\,c_2'\,pp'\,p~.~\elleqg[\access{c_1}{mem}][\access{c_2}{mem}] \land \access{c_1'}{mem}(p) \ne \access{c_2'}{mem}(p)\land{}\\
&\quad \access{c_1}{pp} = \access{c_2}{pp} \land \nstep{*}{c_1}{c_1'} \land \nstep{*}{c_2}{c_2'} \land pp' = \access{c_1'}{pp} = \access{c_2'}{pp} \land \terminalconf[pp'] \land{} \\
&\quad \big(\nexists \Delta\, \Delta'\, v \, p .\, \derive{\Wabstract[c_1][\Gamma][M], \Wabstract[c_2][\Gamma][M]}{\Delta} \land \Delta' \in \preframes[\Delta] \land \access{\Delta'}{pp} = pp' \land {}\\
&\quad \mpaccess{\Delta'}{p} = (v, \H) \land \access{\Delta'}{attacker} = \ell \land \call{\Gamma}{p} \flowsto \ell \big)
\implies \elleqg[m_1'][m_2']
\end{align*}
}
}
\end{corollary}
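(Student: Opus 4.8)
The plan is to obtain this statement as a direct contrapositive of \autoref{thm:soundness-wanilla}, the only extra ingredient being the definition of $\ell$-equality from \autoref{sec:security_policies}. Recall that $\elleqg[\access{c_1'}{mem}][\access{c_2'}{mem}]$ holds precisely when $\access{c_1'}{mem}(q) = \access{c_2'}{mem}(q)$ for every memory position $q$ with $\call{\Gamma}{q} \flowsto \ell$. Hence it suffices to show that, under the corollary's hypotheses, no attacker-observable position can be a point of disagreement between the two terminal memories, and I would argue this by contradiction.

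First I would fix an arbitrary position $q$ with $\call{\Gamma}{q} \flowsto \ell$ and assume, towards a contradiction, that $\access{c_1'}{mem}(q) \neq \access{c_2'}{mem}(q)$. The remaining premises demanded by \autoref{thm:soundness-wanilla} --- namely $\elleqg[\access{c_1}{mem}][\access{c_2}{mem}]$, $\access{c_1}{pp} = \access{c_2}{pp}$, $\nstep{*}{c_1}{c_1'}$, $\nstep{*}{c_2}{c_2'}$, and $pp' = \access{c_1'}{pp} = \access{c_2'}{pp} \land \terminalconf[pp']$ --- are inherited verbatim from the corollary's own hypotheses and are independent of which disagreeing position one picks. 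Since \autoref{thm:soundness-wanilla} quantifies universally over the disagreeing position, I can instantiate it at $q$, which is legitimate because $q$ now satisfies $\access{c_1'}{mem}(q) \neq \access{c_2'}{mem}(q)$.

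Instantiation yields $\Delta$, $\Delta'$, and $v$ with $\derive{\Wabstract[c_1][\Gamma][M], \Wabstract[c_2][\Gamma][M]}{\Delta}$, $\Delta' \in \preframes[\Delta]$, $\access{\Delta'}{pp} = pp'$, $\access{\Delta'}{attacker} = \ell$, and $\mpaccess{\Delta'}{q} = (v, \H)$. Because $q$ was chosen so that $\call{\Gamma}{q} \flowsto \ell$, the tuple $(\Delta, \Delta', v, q)$ witnesses every conjunct appearing inside the corollary's $\nexists$ clause (the internal existential over $v$ is discharged directly, and the internally bound position is instantiated to $q$), contradicting that hypothesis. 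Therefore $\access{c_1'}{mem}(q) = \access{c_2'}{mem}(q)$; as $q$ ranged over all attacker-observable positions, $\elleqg[\access{c_1'}{mem}][\access{c_2'}{mem}]$ follows.

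I expect no substantive obstacle here: the argument is purely logical once \autoref{thm:soundness-wanilla} is available, and no new calculation over the abstract semantics is required. The only point demanding care is quantifier bookkeeping. The position $p$ in the corollary's outer premises is distinct from, and shadowed by, the position bound inside the $\nexists$ clause, while the disagreeing position $q$ driving the contradiction is freshly chosen from the negation of the conclusion rather than reused from the premises. It is also worth remarking, for consistency of the statement, that applying the same reasoning to the outer $p$ (for which $\access{c_1'}{mem}(p) \neq \access{c_2'}{mem}(p)$ is assumed) forces $\call{\Gamma}{p} \notflowsto \ell$, so the corollary's hypotheses are coherent rather than vacuously unsatisfiable.
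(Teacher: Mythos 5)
Your proof is correct and matches the paper's intent exactly: the paper states this corollary without a separate proof precisely because it is the direct contrapositive of \autoref{thm:soundness-wanilla}, which is what you spell out, with the quantifier bookkeeping (fresh $q$ versus the shadowed inner position, and the observation that the outer $p$ forces $\call{\Gamma}{p} \notflowsto \ell$) handled correctly.
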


\subsubsection{Comparison of \wappler{} and \tool{}'s Expressiveness}
As \tool{} is not only calculating taints but also reachable program configurations, it can also be used as reachability analysis, and can potentially even combine the two aspects (e.g., to answer queries of the form ``Is it possible to derive a taint when the configuration is restricted to these values?'').
Concerning the expressiveness of \tool{}-as-reachability-analysis and \wappler, there are only two relevant differences, which we reiterate here:
First, \wappler in prin-\penalty-10000ciple allows queries of the form ``Is a certain value derivable in the table?'' while \tool{} only remembers if the table has changed.
In practice, this has a negligible impact as \wasm 1.0  offers no language-internal means to modify the table: only host functions, which both \wappler and \tool{} have to overapproximate, can do so.

More significant in terms of expressiveness is \tool{}'s overapproximation of loops described above.
This means that \wappler{} can reason over loops more precisely (to the degree the underlying SMT solver supports it).

\section{Implementation and Evaluation}
As the underlying analysis, \tool{} is implemented using the \horst framework~\cite{DBLP:conf/ccs/SchneidewindGSM20,DBLP:conf/isola/SchneidewindSM20}, which compiles declarative descriptions of static analyses to SMT-LIB~\cite{BarFT-RR-17} files.
Indeed, most rules did not need major modifications except for adding the execution context and labeled value types.
In sum, the specification of the abstract semantics and properties takes
\checkme{around 3k} lines of \horst, while parsing test specifications and other preprocessing is implemented in \checkme{approximately the same amount} of \java.

To assess the precision and soundness of \tool{}, we focussed on existing noninterference benchmarks that \emph{come with a ground truth}.
Since annotating code with a security policy and assessing the ground truth is a labor-intensive process (especially if the programs are not available in a high-level language), the number of available benchmarks is still limited despite \WA's rising popularity.
To remediate this, we translated one existing noninterference benchmark to \WA (see \autoref{sec:rapid_benchmarks}) and introduce our own benchmark suite (see \autoref{sec:wanilla_benchmarks}).
Additionally, we evaluate \tool{} on a benchmark of real-world smart contracts against the tool that introduced the benchmark~\cite{DBLP:conf/internetware/HuangJC20} and Wassail~\cite{DBLP:conf/scam/StievenartR20} in \autoref{sec:eosfuzzer_benchmarks}\footnote{We did not evaluate against SecWasm~\cite{DBLP:conf/sas/BastysASS22} as it is not a fully static approach (which would necessitate some fuzzing component for judging a (non-instrumented) program noninterferent) and, more importantly, does not come with an implementation.}.

In the following, noninterferent programs will be marked \nimark, interferent ones \inmark. \cmark{} denotes the correct result, \xmark{} an incorrect one.

\subsection{On the Used Solvers}

\tool{}'s outputs SMT-LIB files, specifically using its Constrained Horn Clause fragment with integer and bit vector arithmetic.
A solver supporting these fragments is the Z3 solver~\cite{DBLP:conf/tacas/MouraB08}, which we use to solve the benchmarks in \autoref{sec:wanilla_benchmarks} and \autoref{sec:rapid_benchmarks}.
For the programs in \autoref{sec:eosfuzzer_benchmarks}, which have instruction counts (marked ``ic'' in the tables)  in the thousands and for which \tool{} generates megabytes of SMT-LIB files, Z3 does not terminate within 7 hours (or terminates with an error) for some queries (in particular, the ones expected to show an attack).
We, therefore, implemented our own experimental solver that supports the requisite subset of SMT-LIB.

In essence, our solver implements an abstract interpretation of the SMT-LIB program by assigning a single valuation to each predicate that captures all possible valuations.
If, e.g., Z3 could derive exactly $P(x)$ for $x \in \{3,7\}$, our solver, using an interval abstraction, would be able to derive $P(x)$ for $3 \le x \le 7$.
This comes with a precision penalty (in the example above, we would be able to spuriously derive $P(x)$ for $x \in \{4,5,6\}$), but maintains soundness and allows us to analyze programs for which classical SMT solvers time out.
To quantify the effect of the imprecision, we report the number of failed test cases for the benchmarks in \autoref{sec:wanilla_benchmarks} and \autoref{sec:rapid_benchmarks}.

\subsection{\tool{} Benchmark}
\label{sec:wanilla_benchmarks}

In order to evaluate our implementation for precision and correctness, we created a benchmark (including the code presented in the figures of this paper) of \wanillaBenchmarkModuleCount{} WebAssembly modules and \wanillaBenchmarkTestCaseCount{} test cases that cover all constructs relevant to noninterference.
The benchmarks are synthetic and designed to give minimal examples of information flow behavior as combination of programs and security policies.
Each test case might contain multiple queries corresponding to different memory regions.
The average instruction count in this benchmark is \wanillaBenchmarkAverageInstructionCount (min: \wanillaBenchmarkMinimalInstructionCount, max: \wanillaBenchmarkMaximalInstructionCount).
When using the Z3 as the solver, we are able to solve all benchmarks precisely.
Using our experimental solver, we get false positives in \wanillaBenchmarkAsmtImpreciseTestcases{} test cases (\wanillaBenchmarkAsmtImpreciseQueries{} queries) and solve \wanillaBenchmarkAsmtPreciseTestcases{} (\wanillaBenchmarkAsmtPreciseQueries{} queries) precisely.
We make this benchmark available to the community at~\cite{wanilla-artifacts}.

\subsection{Rapid Benchmarks}
\label{sec:rapid_benchmarks}

To compare our expressiveness with existing solutions, we translated the noninterference benchmark from RAPID~\cite{DBLP:conf/fmcad/BartheEGGKM19} to \WA and ran \tool on it.
The programs are all noninterferent and show a variety of potential explicit and implicit flows that are meant to be difficult for \emph{sound} tools to handle.\footnote{Wassail, for example, also marks all programs in this benchmark noninterferent. This unsurprising, as it ignores implicit flows. For an analysis that is, in principle, capable of detecting implicit flows, verifying these benchmarks is more difficult.}
We translated the benchmarks from RAPID's while-like programming language to \WA{} by hand.
The reason for doing this (instead of writing the benchmark programs in, say, C and compiling them to \WA{}) was that, in many cases, the compiler produced unreasonably difficult-to-analyze code when instructed not to apply any optimizations (mostly modeling the call stack in the linear memory instead of using \WA{}'s dedicated mechanisms).
When instructed to use optimizations, however, the compiler would often remove the interesting, potentially interfering parts by applying general-purpose optimizations.

The results (presented in \autoref{fig:wanilla_vs_rapid}) are not directly comparable, as RAPID is a general hyperproperty checker for a simplified model language (using a special version of vampire\cite{DBLP:conf/cav/KovacsV13, SpecialVampire} as a solver), while \tool{} is specialized in noninterference on a real-world language (and uses Z3\cite{DBLP:conf/tacas/MouraB08}).
The result of our comparison is that \tool{} can prove all noninterference examples of RAPID's benchmark noninterferent, while we were not able to get results with RAPID for two of the examples.\footnote{We asked RAPID's authors how to make these benchmarks terminate in February 2024, but they have not come back to us in time of submission.}
W.r.t.\ to the time needed by the underlying solvers to solve the generated SMT-LIB files, RAPID is faster in most terminating cases, while \tool{} is still reasonably fast.
The one (terminating) example where \tool{} is faster (10-ni-rsa-exponentiation) contains a loop, which indicates that the loop abstraction bears fruits.

Our experimental solver is imprecise for \rapidBenchmarkAsmtImpreciseTestcases{} test cases (\rapidBenchmarkAsmtImpreciseQueries{} queries) and solves \rapidBenchmarkAsmtPreciseTestcases{} (\rapidBenchmarkAsmtPreciseQueries{} queries) precisely.

\begin{figure}
  \begin{tabular}{p{1mm} p{40mm} r  r r p{5mm} r}
            &  \textbf{Benchmark}             & \multicolumn{2}{c}{RAPID}& \multicolumn{3}{c}{\tool{}}\\
            &                                 &        & t [ms] &       & \mbox{t [ms]}       & ic\\
    \hline
    \nimark &  1-ni-assign-to-high            & \cmark & 3      &\cmark & 33           & 8 \\
    \nimark &  2-ni-branch-on-high            & \cmark & 3      &\cmark & 364          & 18\\
    \nimark &  2-ni-branch-on-high-twice      & \cmark & 17     &\cmark & 2239         & 35\\
    \nimark &  \mbox{3-ni-high-guard-equal-branches} & \cmark & 3      &\cmark & 158          & 13\\
    \nimark &  5-ni-temp-impl-flow            & \cmark & 16     &\cmark & 445          & 21\\
    \nimark &  6-ni-branch-assign-equal-val   & \cmark & 193    &\cmark & 2485         & 12\\
    \nimark &  7-ni-explicit-flow             & \cmark & 8      &\cmark & 248          & 20\\
    \nimark &  8-ni-explicit-flow-while       & -      & $\bot$ &\cmark & 291          & 21\\
    \nimark &  9-ni-equal-output              & -      & $\bot$ &\cmark & 490          & 18\\
    \nimark &  10-ni-rsa-exponentiation       & \cmark & 12402  &\cmark & 2604         & 46
\end{tabular}

  \caption{Times used to solve the files generated by RAPID (with vampire) and \tool{} (with Z3).}
  \label{fig:wanilla_vs_rapid}
\end{figure}

\begin{figure}
  \begin{tabular}{l p{16mm} l r l r l r r}
       &  \textbf{Benchmark}           & \multicolumn{2}{c}{EOSFuzzer}  & \multicolumn{2}{c}{Wassail} & \multicolumn{3}{c}{\tool{}} \\
       &                               &        & t [s]               &          &  t [s]           &             & t [s]          & ic \\
\hline
\inmark &  coingame                   & \cmark   &        -            & \xmark   &    0.5           & \cmark      & 8.2           & 3140\\
\inmark &  eosfun                     & \cmark   &        -            & \xmark   &    0.5           & \cmark      & 7.5           & 3178\\
\inmark &  lottery1                   & \xmark   &        -            & \cmark   &  104.6           & \cmark      & 221.9         & 16538\\
\hline
\nimark &  random                     & -        &        -            & \cmark   &    0.7           & \cmark      & 3.4           & 2575\\
\end{tabular}

  \caption{Times (in seconds) used to analyze the \WA modules from the EOSFuzzer benchmark. }
  \label{fig:eosfuzzer_benchmark_eval}
\end{figure}

\subsection{EOSFuzzer Benchmarks}
\label{sec:eosfuzzer_benchmarks}

EOSFuzzer~\cite{DBLP:conf/internetware/HuangJC20} is a fuzzing tool for the EOSIO blockchain platform, which uses \wasm for its smart contracts.
In such platforms, computations have to be deterministic, as multiple so-called miners have to compute the same state.
Some contracts, notably lotteries, use block information, which is not trivial to predict but still deterministic, to model ``randomness''.
A sufficiently advanced attacker may predict the outcomes of such lotteries and thus claim the prize while bereaving honest users of their chance to win.

We can express the absence of this problem in terms of noninterference by treating block information as low-integrity while requiring high-integrity inputs and contexts for functions transferring tokens.
As all blockchain-specific functionality in EOSIO is handled by imported functions, we simply have to provide the correct security policies to solve the problem with \tool{}.
Orienting ourselves on EOSFuzzer, we treat the outputs of \ccode{tapos_block_num} and \ccode{tapos_block_prefix} as low-integrity while requiring high-integrity inputs and contexts for \ccode{send_inline} and \ccode{send_deferred}.
Of the 82 contracts with sources that EOSFuzzer includes, 79 are trivially noninterferent according to this model, as they do not contain both calls to functions producing low-integrity output and functions requiring high-integrity input.
We also included one contract from a different part of EOSFuzzer's dataset (\ccode{atom/random}) for which no results are reported\footnote{We assume EOSFuzzer can judge this contract correctly, as it is a directory with testing contracts, but we cannot verify it, as the docker image for EOSFuzzer is unavailable as of April 2025~\cite{eosfuzzerdockernotavailable}, and the project does not contain detailed build instructions.}, as it is not trivially safe.

In \autoref{fig:eosfuzzer_benchmark_eval}, we report EOSFuzzer's results from~\cite{DBLP:conf/internetware/HuangJC20} and the results of running Wassail and \tool{} (using our experimental solver) on the benchmark ourselves.
Wassail fails to analyze two smart contracts, as the relevant flows are implicit flows.\footnote{In the context of these lotteries this means that whether the payment happens depends on bad randomness but not the payment amount.}
EOSFuzzer does not find the vulnerability in lottery1, as the set of concrete values that trigger the vulnerability is very small and thus unlikely to be discovered by random fuzzing.
\tool{} correctly labels all flows in the benchmark.

\subsection{Limitations}
The most difficult-to-analyze part of WebAssembly is the linear memory, which means modules with complicated memory interactions are sometimes challenging for \tool{}.
As demonstrated in the preceding section, this can be mitigated by sacrificing precision for performance.
As \wappler, \tool can only analyze one module at a time and overapproximates functions imported from other \WA modules as if they were host functions.
An improvement in \wappler in this regard would directly (modulo the changes in \autoref{sec:from_wappler_to_wanilla}) carry over to \tool.

The analysis presented here is termination-insensitive.
Handling the exceptional halt that \wasm specifies (which we have not mentioned so far, as it was not needed for the presentation of the analysis) does not present a fundamental problem for our approach, as reasoning about the potential reachability of exceptions is very similar to reasoning about the reachability of regular program states.
Possible non-termination, on the other hand, requires different analysis techniques.
In the absence of additional machinery for termination-checking, one would have to consider every loop that is conditioned on an $\H$ variable as a potential termination channel, which is possible but imprecise.

Lastly, while the \WA specification tries to minimize nondeterminism, some instructions, e.g., $\mycode{memory.grow}$ to grow the memory, are nondeterministic.
While it is possible to model this nondeterminism in the analysis, we assume that for noninterference, both executions will make the same nondeterministic choice and that this does not offer an information channel for a potential attacker.
If such an information channel was present, it would be easy to label nondeterministic values $\H$ by default.

\section{Conclusion and Future Work}
We presented a novel approach to lift sound reachability analyses to sound noninterference analyses and instantiated it for \WA in \tool{}.
We evaluated \tool{} on a variety of benchmarks against existing analyzers for \WA and improved upon them in terms of precision and soundness.

In terms of future work, multiple avenues are possible:
As noninterference is needlessly strict in some contexts, one could extend our approach to more lenient forms of information flow control, such as robust declassification~\cite{DBLP:conf/csfw/ZdancewicM01, DBLP:conf/ccs/CecchettiMA17} or the ``abstract noninterference'' of Giacobazzi and Mastroeni~\cite{DBLP:conf/popl/GiacobazziM04}.

Making the approach more composable, e.g., by deriving interfaces that specify which input values/taint labels could lead to a flow on the function level, would help to analyze bigger code bases. 

Any approach that could extract ``representative runs'' (such as the three columns in \autoref{fig:symbolic-runs}) from a program could obviate the use of SMT-Solvers (which construct them implicitly) and improve the precision and performance of our approach when using more classical abstract interpretations as underlying analyses.
If similar approaches could be used to verify hyperproperties that are unrelated to noninterference remains an open research question. 

For \wasm in particular, any improvement of the underlying analysis (such as support for Wasm 2.0, floating point number, or an analysis to gather non-overlapping linear memory segments) would directly carry over to \tool{}.

\section*{Acknowledgements}
We thank our anonymous reviewers for their valuable feedback and Quentin Stiévenart for promptly answering our questions about Wassail.
This work was partially supported by the European Research Council (ERC; grant 101141432-BlockSec);
by the Austrian Science Fund (FWF) through the SpyCode SFB project F8510-N;
by the Austrian Research Promotion Agency (FFG) through the SBA-K1 NGC;
by the Vienna Science and Technology Fund (WWTF) through the project ForSmart;
and by the Christian Doppler Research Association through the Christian Doppler Laboratory Blockchain Technologies for the Internet of Things (CDL-BOT).

\bibliographystyle{abbrv}
\bibliography{literature}

\ifextended
\appendix

\section{Additional Details}
\label{sec:additional_details}
\begin{figure*}[t]
  \newcommand{\horstFreeVarctx}{\addedInWanilla{\textit{ctx}}}
  \let\oldhorstOpAppraiseCtxTo\horstOpAppraiseCtxTo
  \renewcommand{\horstOpAppraiseCtxTo}[2]{\addedInWanilla{\oldhorstOpAppraiseCtxTo{#1}{#2}}}
  \let\oldhorstOpAppraiseTo\horstOpAppraiseTo
  \renewcommand{\horstOpAppraiseTo}[2]{\addedInWanilla{\oldhorstOpAppraiseTo{#1}{#2}}}
  \renewcommand{\horstConstructorAppCtx}[1]{\addedInWanilla{\horstConstructorApp{Ctx}{#1}}}
  \renewcommand{\horstPredAppScopeExtend}[2]{\addedInWanilla{\horstPredApp{ScopeExtend}{#1}{#2}}}
  \let\oldhorstOpApplabelOf\horstOpApplabelOf
  \renewcommand{\horstOpApplabelOf}[2]{\addedInWanilla{\oldhorstOpApplabelOf{#1}{#2}}}
  \let\oldhorstOpApplabelOfCtx\horstOpApplabelOfCtx
  \renewcommand{\horstOpApplabelOfCtx}[2]{\addedInWanilla{\oldhorstOpApplabelOfCtx{#1}{#2}}}
  \let\oldhorstConstructorAppIllegal\horstConstructorAppIllegal
  \renewcommand{\horstConstructorAppIllegal}[1]{\addedInWanilla{\oldhorstConstructorAppIllegal{#1}}}
  \let\oldhorstConstructorAppLegal\horstConstructorAppLegal
  \renewcommand{\horstConstructorAppLegal}[1]{\addedInWanilla{\oldhorstConstructorAppLegal{#1}}}
  \let\oldhorstOpAppoverApproximateLoopGlobals\horstOpAppoverApproximateLoopGlobals 
  \renewcommand{\horstOpAppoverApproximateLoopGlobals}[2]{\addedInWanilla{\oldhorstOpAppoverApproximateLoopGlobals{#1}{#2}}}
  \let\oldhorstOpAppoverApproximateLoopLocals\horstOpAppoverApproximateLoopLocals 
  \renewcommand{\horstOpAppoverApproximateLoopLocals}[2]{\addedInWanilla{\oldhorstOpAppoverApproximateLoopLocals{#1}{#2}}}
  \let\oldhorstOpAppoverApproximateLoopMemory\horstOpAppoverApproximateLoopMemory 
  \renewcommand{\horstOpAppoverApproximateLoopMemory}[2]{\addedInWanilla{\oldhorstOpAppoverApproximateLoopMemory{#1}{#2}}}
  
  \center
  \resizebox{!}{9.1cm}{
  \begin{minipage}{\textwidth}
  \ruleGroup{binOpRule}{\Add{i32},\Mul{i64}, \ldots}
  \ruleGroup{blockRule}{\Block{}}
  \ruleGroup{loopRule}{\Loop{}}
  \ruleGroup{brIfRule}{\BrIf{k}}[1,4,5]
  \ruleGroup{brIfPseudoRule}{pseudo \BrIf{k}}
  {
  \let\oldhorstLT\horstLT 
  \renewcommand{\horstLT}[2]{\addedInWanilla{\oldhorstLT{#1}{#2}}}
  \ruleGroup{endRule}{\End{}}[1]
  }
  \ruleGroup{functionScopeExtendRule}{for each function $\h{fid}<p>$}[0]
  \ruleGroup{joinRule}{potential points of converging control flow, e.g. \End{}, \BrIf{k}}[0]
  \end{minipage}
  }
  \caption{Illustrative subset of \tool{}'s abstract rules.}
  \label{fig:wanilla_rules}
\end{figure*}
\subsection{Extensions to the Type and Predicate Signature}

\autoref{fig:wanilla_signature} shows part of the predicate signature of \tool, with changes from \wappler{} \addedInWanilla{highlighted}.
We introduce \h{FlowLabel} and  \h{Context}
to formalize the taint labels and context objects from \autoref{sec:background} and \autoref{sec:analysis}.
$\lctx[\ell]$ corresponds to $\h{Ctx}[\ell, \L, -1]$, where $\ell$ is the current attacker's perspective and $-1$ corresponds to $\nojoinpc$.
Given $\ctx \in \h{Context}$, we write $\h{labelOfCtx}[\h{ctx}<f>]$ to access its taint label and \h{perspectiveOfCtx}[\h{ctx}<f>] to access its attacker perspective.
The principle value type of the analysis, \h{LValue}, is changed from an alias of \h{Value} (64-bit bitvectors) to a tuples of bitvector and taint label.
Given $x \in \h{LValue}$, we write $\h{valueOf}[x]$ for its value component and $\h{labelOf}[x]$ for its label.
We sometime write $x_=$ for $(x,\L) \in \h{LValue}$ and $x_?$ for $(x,\H) \in \h{LValue}$. 
The memory abstraction of \wappler unchanged: for every memory cell there is a named tuple comprising its index, value, and size.
Note, however, that since the value abstraction changed, the value and size fields carry a taint label now.
While the reason for the former is obvious, the reason for the latter is more subtle: an attacker could, e.g., learn a secret if the linear memory was grown by a secret value or while the control flow depended on a secret.
For this reason, we treat the memory size are potential flow.
\wappler models the \emph{table} via a predicate.
For \tool, we instead carry an abstraction of the table (of the type \h{Table}) in the relevant predicates.
The first component of \h{Table} describes if the contents of the table are initialized to known values and can thus be modeled precisely (\h{TblPrecise}), or if the initial values have been overwritten (\h{TblImprecise}).%
\footnote{
Please note, that the version of \WA that we handle in this paper does not provide language-internal means to modify the table: all modifications have to be done in imported functions.
Imported functions in \tool do not only come with the security policies $\Gamma_{E}$ and $\Gamma_{ME}$ but also with an annotation that describes their behavior, i.e., if they modify globals, the memory or the table (similar to the policies \wappler provides in such cases).
We forwent modeling one particular behavior of imported functions in \tool that is modeled in \wappler:
An imported function can add functions to the \wasm store and modify the table to point towards these functions.
In order to model these newly added functions, we would have to specify a security policy for them, which would be possible to implement but cumbersome and detrimental to the clarity of exposition.
}

In the second line of \autoref{fig:wanilla_signature}, we describe the principal predicate of the analysis: 
$\mathPredicateSignatureName{MState}$ encodes part of an abstract configuration of the function identified by $\textbf{fid}$ at program counter $\textbf{pc}$.%
\footnote{$\textsf{SI}.*$ is a family of functions that returns information on the module, such as the number of globals ($\fromSI{gs}{}$) or the stack size ($\fromSI{ss}{}$) for a program position $(\hpc, \hfid)$.}
It holds as arguments \addedInWanilla{the context},
tuples of values for the value stack, the globals and locals,
an instance of the $\horstTypeMemory$ type for a single memory cell,
\addedInWanilla{a table},
and tuples for the initial arguments and the globals and a memory object for when the function is called.

Assuming \ccode{session} and \ccode{payload} were stored at memory addresses $0$ and $128$, the linear memory currently held $s$ pages, $t$ was a table object, and \ccode{main} had the function ID $0$, an abstraction of the first state in column 1 from \autoref{fig:symbolic-runs} would include the following set:\footnote{We write $x_0$ for the byte least-significant byte $x$, $y_1$ for the second-least-significant byte in $y$ and so on} 
\ 
\makefitcolumn{
\renewcommand{\MState}[2]{\textit{MState}_{#1}\mkern-2mu(\mkern-1mu#2\mkern-2mu)}
\begin{smallalign*}
&\big\{
 \MState{0,0}{\lctx[\PU],\es, [1_=], \es, \h{Mem}[0, {x_0}_?, s], t, \es, [1_=], \h{Mem}[0, {x_0}_=, s]}, \\
&\MState{0,0}{\lctx[\PU],\es, [1_=], \es, \h{Mem}[128, {y_0}_=, s], t, \es, [1_=], \h{Mem}[128, {y_0}_=, s]}, \\
&\MState{0,0}{\lctx[\PU],\es, [1_=], \es, \h{Mem}[129, {y_1}_=, s], t, \es, [1_=], \h{Mem}[129, {y_1}_=, s]}\big\}
\end{smallalign*}
}

Under the same assumptions, an abstraction of column 1 right before \ccode{untrusted_log} was called would include:

\makefitcolumn{
\renewcommand{\MState}[2]{\textit{MState}_{#1}\mkern-2mu(\mkern-1mu#2\mkern-2mu)}
\begin{smallalign*}
&\big\{
 \MState{0,15}{\lctx[\PU],\es, [1_=], \es, \h{Mem}[0, 0_=, s], t, \es, [1_=], \h{Mem}[0, {x_0}_?, s]}, \\
&\MState{0,15}{\lctx[\PU],\es, [1_=], \es, \h{Mem}[128, {y_0}_=, s], t, \es, [1_=], \h{Mem}[128, {y_0}_=, s]}, \\
&\MState{0,15}{\lctx[\PU],\es, [1_=], \es, \h{Mem}[129, {y_1}_=, s], t, \es, [1_=], \h{Mem}[129, {y_1}_=, s]}\big\}
\end{smallalign*}
}

Note that while the memory cell describing \ccode{session} changed (the first \h{Mem} object in the first fact) between program counters $0$ and $15$, the initial memory cell (the second \h{Mem} object) did not change.
The fields describing the input values of a function at call time are used to relate executions when joining them.
The $\mathPredicateSignatureName{MStateToJoin}$ is used exactly for this reason, while the $\mathPredicateSignatureName{ScopeExtend}$ is used to correctly taint the context.
Both are described below in \autoref{sec:scope_extension} and \autoref{sec:adjustment_opportunistic_lowering_flow_labels}.

\subsection{Explicit and Implicit Flows}

To describe the interactions between taint labels, contexts and labeled values, we introduce the following notations:
Given $x \in \h{LValue}$ and $l \in \h{FlowLabel}$, we can \emph{raise} $x$'s label by $l$.
We write $\h{raiseTo}[x,l]$ to mean $(\h{valueOf}[x], \horstLUB{\h{labelOf}[x]}{l})$.
We introduce a similar operation for contexts:
Given $\h{ctx}<f> \in \h{Context}$, $l \in \h{FlowLabel}$ and a constant $\ppc$, we define \emph{raising a context at a program program counter} as

\begin{smallalign*}
  \horstLUBPC{\h{ctx}<f>}{l} =& \begin{cases}
    \h{ctx}<f>                            & \text{if }\h{labelOfCtx}[\h{ctx}<f>] = \H \lor l = \L\\
    \h{Ctx}[\h{perspectiveOfCtx}[\h{ctx}<f>], \H, \textbf{pc}] & \text{otherwise} 
  \end{cases}
\end{smallalign*}

If the context label is already $\H$ or $l$ is $\L$, we do not change $\h{ctx}<f>$, otherwise we keep the attacker perspective but set to taint label to $\H$ and the program counter to \ppc.

\paragraph{Numeric Instructions}

Operations that only operate on the stack (that includes all numeric operations, comparisons, and conversions) can be modeled as illustrated by \ref{cls:binOpRule:0} in \autoref{fig:wanilla_rules}.
The topmost stack elements are fed into an operation $\horstOpApplabelledBinOp{}{}$ that applies an operation to the values and computes the least upper bound of its operands labels.
Additionally, we raise this label to the label of the current context.
An instance of this rule is for example applied in \li{6} in \autoref{fig:symbolic-runs} when \AnyCmd{i64}{ne} is executed.

\paragraph{Raising the Context Label}

Whenever the control flow depends on an $\H$-labeled value, the context label has to be raised, using the $\horstLUBPC{\cdot}{\cdot}$-operator.
As an exemplary control flow instruction, we showcase the abstract rules for $\BrIf{}$---others are handled similarly.   
To model $\BrIf{}$'s semantics, we assume $\h{br}<p>$ holds the program counter at which the execution continues if the branch is taken.
In \autoref{fig:symbolic-runs}, $\h{br}<p>$ would be $11$ for $\BrIf{1}$ in \li{7} and $14$ for  $\BrIf{1}$ in \li{10}. 

Clause \ref{cls:brIfRule:0} describes a branch taken.
If a branch is not taken, we distinguish different cases:
\ref{cls:brIfRule:2} For forward branches, we handle the changes to the context just like in \ref{cls:brIfRule:0}.
In the case of not taking a backward branch (via \ref{cls:brIfRule:4}), which corresponds to exiting a loop (and therefore a join point), we continue differently, as we might be able to lower the context (see below).
\ref{cls:brIfRule:3} If the context and conditional labels are $\L$, then we continue without changing the context. 
\ref{cls:brIfPseudoRule:0} raises the context of an abstract configuration to $\H$ without progressing the program counter if we \emph{do not} execute a branch to the start of a loop conditioned on an $\H$-labeled value.
The theoretical description of the analysis includes this rule, as otherwise \rniarcns would be missing. 
In the actual implementation, however, does not include it as it is subsumed by \ref{cls:brIfRule:4} in all cases of practical interest.

\subsection{Scope Extensions}
\label{sec:scope_extension}
Similar to previous work~\cite{DBLP:conf/sas/BastysASS22}, we make use of \WA's syntactical structure to determine where implicit flows might happen.
To illustrate \WA's peculiarities, consider again \autoref{fig:symbolic-runs}.
When executing a branching instruction conditioned on an $\H$-labeled value, such as \BrIf{1} in \li{7}, the context has to be tainted until the corresponding \End.
We say that the scope of the \BrIf{k} extends from 7 to 11 (similarly, we say that the scope of $\hctx[7]$ extends to 11).
This region is marked in blue in \autoref{fig:symbolic-runs}.
In our abstract configurations, this is expressed by being able to derive a fact $\h{ScopeExtend}[0][7,11]$ via clause~\ref{cls:brIfRule:1}, as the condition $x$ is labeled $\H$.
Column 1 and 3, for example converge indeed at \li{7} (and would be related, if \ccode{IS_PROD} in column 1 had $0_=$ as value, which would not change the execution).
Such convergences are handled by the rule for \End{} and \BrIf{} in clauses \ref{cls:brIfRule:4} and \ref{cls:endRule:1} by implying $\h{MStateToJoin}<P>$-facts.

The blue region, however, is too small to soundly verify against implicit flows, as witnessed by column 2:
Since it branches beyond \li{11} by executing \BrIf{1} in \li{10}, implicit flows could happen in columns 1 and 3 in \li{12} to \li{14}. 
The scope of \checkme{7} should therefore be extended to \checkme{14} (this is marked in red).
This is, again, accomplished via \ref{cls:brIfRule:1} by implying new \h{ScopeExtend}<P> predicates when a branching instruction is executed with an $\H$-labeled context, in this case $\h{ScopeExtend}[0][7,14]$.
Clauses \ref{cls:brIfRule:5} and \ref{cls:endRule:2} allow us to derive follow-up states with an $\H$-labeled context if the context label is already $\H$ and the scope of the context extends over the next program counter.
As long as we can derive an $\H$-labeled context whenever an implicit flow might happen, it is sound to lower the context indiscriminately at every possible join point (as done in \ref{cls:brIfRule:4} and \ref{cls:endRule:1}).
This part of our analysis does not only take into account the structure of the program but also its reachable values.
If, for example, due to an additional assumption or analyzing preceding code, we were able to deduce that \ccode{IS_PROD} was $0$ in all runs (i.e., all runs would look similar to columns 1 and 3), the scope of \BrIf{1} in \li{7} would not extend beyond 11 (which is more precise than previous approaches \cite{DBLP:conf/sas/BastysASS22}).

If a function is entered in an $\H$-labeled context, the context cannot be lowered while executing this function.
The context's program counter of such a function is set to $-1$.
Together with rule~\ref{cls:functionScopeExtendRule:0}, this ensures that we can always have $\H$ as context label in such a case. 

\subsection{Adjustment and Opportunistic Lowering of Taint Labels}
\label{sec:adjustment_opportunistic_lowering_flow_labels}
As mentioned in \autoref{sec:scope_extensions_and_implicit_flows}, lowering the context label requires us to adjust the taint labels in related executions and provides an opportunity to regain precision.
In \tool, this is done via the \h{MStateToJoin}<P> predicate, which just holds exactly the same information as \h{MState}<P> and is implied at points where a divergent control flow can converge again (see \autoref{sec:scope_extension}). 
For every rule that implies \h{MStateToJoin}<P>-facts, we instantiate clause~\ref{cls:joinRule:0}. 
If two \h{MStateToJoin}<P> facts that describe the same program counter and function ID and are related by having the $\L$-labeled initial values (this is written as $\h{lowEq}[\cdot,\cdot]$), a new \h{MState}<P> fact can be derived that has an $\L$-labeled context and whose values (in the stack, the globals, the memory cell, etc.) are \emph{joined} by the following operation.

\makefitcolumn{
\begin{smallalign*}
  \call{join}{x,y} & = \begin{cases}
    (\h{valueOf}[x], \H) & \text{ if } \h{valueOf}[x] \neq \h{valueOf}[y] \land (\h{labelOf}[x] \sqcup \h{labelOf}[y]) = \H  \\
    (\h{valueOf}[x], \L) & \text{ otherwise }
  \end{cases}
\end{smallalign*}
}

In our running example we cannot derive a scope that extends over the join point in \li{14}, which means that only joined executions will continue after this point.
As all related executions agree on \ccode{session}, the only possible value after joining them is $0_=$.

\subsection{Overapproximations}
\label{sec:overapproximations}

In some cases, namely in the presence of loops and recursive functions, SMT-solvers struggle to terminate, if the input programs become too complex.
For these cases, we can provide overapproximations of the bit vector components of the values we track.
This is done  by introducing a set of overapproximation relations 
${}_\textbf{fid}\mkern-3mu\overset{.}{\lesssim}\mkern-3mu{}_\textbf{pc}$ for the different components of the abstract configuration, namely the locals, the globals and the memory, and can be seen in clause \ref{cls:loopRule:0}.
Any relation that does not change the value (i.e., only possibly allows additional values) and does not lower the taint label is sound to use at this position (in particular, simple equality).
For our implementation, we determine which globals or locals might be modified in the loop that starts at $\textbf{pc}$ in function $\textbf{fid}$, respectively if the memory is modified at all in a static preanalysis step (this is possible since the index of accessed globals/locals is known at analysis time).
The overapproximation relation then is fulfilled if unmodified values are the same in both arguments, while potentially modified values only have to share the same label.
For the globals tuple, the overapproximation relation is formally defined as follows, the relations for locals and the memory are handled similarly:

\mkfit{
\begin{smallalign*}
  \horstOpAppoverApproximateLoopGlobals{}{\horstFreeVargt, \horstFreeVarngt} = \begin{cases}
    \horstACCESS{\horstFreeVargt}{i} = \horstACCESS{\horstFreeVarngt}{i}&\text{ if  }i\text{ cannot be modified} \\
    \horstOpApplabelOf{}{\horstACCESS{\horstFreeVargt}{i}} = \horstOpApplabelOf{}{\horstACCESS{\horstFreeVarngt}{i}}&\text{ otherwise} \\
  \end{cases}
\end{smallalign*}
}

We relegate finding more sophisticated overapproximations to future research, but mention here that the abstract interpretation concept of widening (see, e.g.,~\cite{RivalXavier2020Itsa}) is related.
Note that we still handle the ``back-edge'' of the loop: if, due to the overapproxmation the labels change within the loop, these changes are taken into account.

\FloatBarrier

\section{Proofs}
\label{sec:proofs}
\subsection{Proofs on $\Mnoninterference$}

To define our notion of noninterference, we first define a family of pair-wise step functions.
$\Mstep[][][s]$ describes a \emph{synchronous step}.
If we can take a step from $pp$ to $pp'$ starting with either $m_1$ or $m_2$ as the memory (while updating the memory to $m_1'$ and $m_2'$, respectively), we can also take a synchronous step.
\begin{align*}
&\Mstep[pp,m_1,m_2][][s] = (pp', m_1', m_2')\\
&         {} \iff    (pp', m_1') = \Mstep[(pp,m_1)] \land {} \\
&\phantom{{} \iff{}} (pp', m_2') = \Mstep[(pp,m_2)]
\end{align*}

\emph{Diverging steps}, defined by $\Mstep[][][d]$, also start and end in the same program position in each memory configuration but diverge inbetween:
Starting from $pp$, they have to disagree on the next program position.
In order to derive the next memory configuration, the may take arbitrarily many steps, as long as at least one step is taken in either of the two executions, and the set of intermediate program positions has no intersections.
\begin{align*}
&\Mstep[pp,m_1,m_2][][d] = (pp', m_1', m_2')\\
&         {} \iff    \exists \ppi_1\, \ppi_2\, k\, l\,.  \\
&\phantom{{} \iff{}} (\ppi_1, \_) = \Mstep[(pp,m_1)] \land {} \\
&\phantom{{} \iff{}} (\ppi_2, \_) = \Mstep[(pp,m_2)] \land \ppi_1 \neq \ppi_2 \land{} \\
&\phantom{{} \iff{}} (pp', m_1') = \Mstep[(pp, m_1)][k] \land {} \\
&\phantom{{} \iff{}} (pp', m_2') = \Mstep[(pp, m_2)][l] \land \Mfunc{max}[k,l] > 0 \land{} \\
&\phantom{{} \iff{}} \big(\forall k'\,l'\,\ppi_1'\,\ppi_2'\,.\, 0 < k' < k \land 0 \le l' \le l  \land{} \\
&\phantom{{} \iff{}}\quad (\ppi_1', \_) = \Mstep[(pp,m_1)][k'] \land {} \\
&\phantom{{} \iff{}}\quad (\ppi_2', \_) = \Mstep[(pp,m_2)][l'] \implies \ppi_1' \neq \ppi_2'\big) \land{} \\
&\phantom{{} \iff{}} \big(\forall k'\,l'\,\ppi_1'\,\ppi_2'\,.\, 0 \le k' \le k \land 0 < l' < l  \land{} \\
&\phantom{{} \iff{}}\quad (\ppi_1', \_) = \Mstep[(pp,m_1)][k'] \land {} \\
&\phantom{{} \iff{}}\quad (\ppi_2', \_) = \Mstep[(pp,m_2)][l'] \implies \ppi_1' \neq \ppi_2'\big)
\end{align*}

$\Mstep[][][s,d]$ describes either a synchronous or a diverging step. 
\begin{align*}
&\Mstep[pp,m_1,m_2][][s,d] = (pp', m_1', m_2')\\
&         {} \iff    (pp', m_1', m_2') = \Mstep[pp,m_1,m_2][][s] \lor {} \\
&\phantom{{} \iff{}} (pp', m_1', m_2') = \Mstep[pp,m_1,m_2][][d]
\end{align*}

\begin{proof}[Proof of \autoref{thm:nia_sound_memories}]
By \autoref{corollary:terminating_execution_pairwise_reachable} and \autoref{lemma:pairwise_step_weak_noninterference}.
\end{proof}

\begin{proof}[Proof of \autoref{thm:nia_sound_traces}]
If we have $\Tr{pp, m_1}\traceellneqg\Tr{pp, m_2}$, we can by the definition of $\traceelleqg[\cdot][\cdot]$ decompose $\Tr{pp, m_1}$ and $\Tr{pp, m_2}$ into (possibly empty) prefixes $s_1$ and $s_2$ with $s_1 \traceelleqg s_2$ and suffixes $r_1$ and $r_2$ for which one of the following conditions hold:
\begin{itemize}
  \item attacker-distinguishable event positions: $r_1 = (\ppi_1, \mi_1) \cdot r'_1$ such $r_2 = (\ppi_2, \mi_2) \cdot r'_2$ such that $\ppi_1 \neq \ppi_2$, $\Gamma_E(\ppi_1) \flowsto \ell$, and $\Gamma_E(\ppi_2) \flowsto \ell$; or
  \item attacker-distinguishable premature termination: $\big(r_1 = \epsilon \land (\exists \ppi\,\mi . (\ppi,\mi) \in r_2 \land \Gamma_E(\ppi) \flowsto \ell)\big) \lor \big(r_2 = \epsilon \land \allowbreak (\exists \ppi\,\mi . \allowbreak (\ppi,\mi) \in r_1 \land \Gamma_E(\ppi) \flowsto \ell) \big)$ 
  \item attacker-distinguishable memories attached to event: $r_1 = (\ppi, \mi_1) \cdot r'_1$ and $r_2 = (\ppi, \mi_2) \cdot r'_2$ such that $\ellneqg[\mi_1][\mi_2][\Gamma_\textit{ME}(\ppi)]$ and $\Gamma_E(\ppi) \flowsto \ell$.
\end{itemize}
We proceed by distinguishing these three cases.

\emph{Event Positions}. In this case either $\ppi_1$ or $\ppi_2$ (or both; w.l.o.g. $\ppi_1$) have to be reached by an application of $\Mstep$ that was has no ``synchronous'' counterpart (as $\Mstep[][][s]$ always generates the same program position in lock-step).
This means $\ppi_1$ is reached as an intermediate configuration within $\Mstep[][][d]$.
Thus, \autoref{lemma:diverging_steps_context} applies (we can derive an $\H$-labeled context).

\emph{Premature Termination}. This case is handled similarly: if one execution already terminated, and the other is still running, the last step will be a diverging step and \autoref{lemma:diverging_steps_context} applies (we can derive an $\H$-labeled context)..

\emph{Memories}. In this case, there exists a memory position $p$ with $\Gamma_\textit{ME}(\ppi)(p) \flowsto \ell$ and $\mi_1(p) \neq \mi_2(p)$.
By \autoref{lemma:equal_prefix_pairwise_execution_reachable} we can apply \autoref{lemma:pairwise_step_weak_noninterference}, which tells us that $p$ can be labeled $\H$.
\end{proof}

\begin{lemma}
\label{lemma:general_execution_pairwise_reachable}
Given two executions starting in $(pp, m_1)$ and $(pp, m_2)$ and ending at program counter $pp'$ with memories $m_1'$ and $m_2'$, and the $pp'$ is not reached in an intermediate configuration, we can be sure that $(pp', m_1', m_2')$ is contained in $\Mstep[pp,m_1,m_2][*][\{s,d\}]$.
\begin{align*}
&\forall pp\,pp'\,m_1\,m_2\,m_1'\,m_2'\,n\,l.\\
&\quad (pp', m_1') = \Mstep[pp, m_1][n] \land (pp', m_2') = \Mstep[pp, m_2][l] \\
&\quad (\nexists n'\,\mi_1 . (pp',\mi_1) = \Mstep[pp,m_1][n'] \land n' < n) \land {} \\
&\quad (\nexists l'\,\mi_2 . (pp',\mi_2) = \Mstep[pp,m_2][l'] \land l' < l)  \implies {}\\
&\quad\quad (pp', m_1', m_2') \in \Mstep[pp, m_1, m_2][*][s,d]
\end{align*}
\end{lemma}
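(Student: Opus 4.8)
Looking at Lemma~\ref{lemma:general_execution_pairwise_reachable}, the goal is to show that two executions which both first reach $pp'$ after $n$ (resp.\ $l$) steps are captured by the reflexive-transitive closure of the combined step relation $\Mstep[][][s,d]$. The natural approach is induction on $\max(n, l)$, decomposing the two executions into a sequence of synchronized and diverging segments.

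The plan is to proceed by strong induction on $n + l$ (or $\max(n,l)$). In the base case, if $n = l = 0$, then $(pp', m_1', m_2') = (pp, m_1, m_2)$ and this is trivially in the reflexive closure $\Mstep[pp,m_1,m_2][*][s,d]$. For the inductive step, I would examine the very first transition from $(pp, m_1)$ and $(pp, m_2)$: compute $(\ppi_1, \_) = \Mstep[pp, m_1]$ and $(\ppi_2, \_) = \Mstep[pp, m_2]$. Two cases arise. First, if $\ppi_1 = \ppi_2 =: pp_{\mathit{next}}$, then a single synchronous step $\Mstep[pp,m_1,m_2][][s]$ takes us to $(pp_{\mathit{next}}, m_1'', m_2'')$, reducing both counters by one; the preconditions (that $pp'$ is not reached earlier) are inherited for the shorter executions, so the induction hypothesis applies to conclude $(pp', m_1', m_2') \in \Mstep[pp_{\mathit{next}}, m_1'', m_2''][*][s,d]$, and we prepend the synchronous step. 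Second, if $\ppi_1 \neq \ppi_2$, I would invoke the divergence condition: I need to find the earliest point at which the two executions re-synchronize on a common program position, and argue that the two segments up to that point constitute a valid diverging step $\Mstep[][][d]$.

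The main obstacle is constructing the diverging step correctly and verifying its non-intersection side-conditions. Concretely, I would define $k$ and $l'$ as the smallest positive step counts such that $\Mstep[pp,m_1][k]$ and $\Mstep[pp,m_2][l']$ land on a common program position $pp_c$ (such a $pp_c$ must exist, since both executions eventually reach $pp'$, which is common). I then must show that for all strictly smaller step counts the intermediate program positions are disjoint — this is precisely the pair of universally-quantified non-intersection clauses in the definition of $\Mstep[][][d]$, and it follows from choosing $k, l'$ \emph{minimal}. The delicate point is that $pp_c$ might coincide with $pp'$ itself, or might be an intermediate position strictly before $pp'$; in either subcase I must confirm the preconditions of the lemma (that $pp'$ is not reached prematurely) are not violated by the chosen re-synchronization point, and that after the diverging step the remaining executions still satisfy the induction hypothesis with strictly smaller counters. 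Since $\max(k, l') > 0$ by minimality and distinctness of $\ppi_1, \ppi_2$, the diverging step genuinely makes progress, so the induction is well-founded.

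Once the decomposition is established, the remainder is routine: each segment is either an $\Mstep[][][s]$ or an $\Mstep[][][d]$ step, their concatenation lies in $\Mstep[pp,m_1,m_2][*][s,d]$ by definition of the reflexive-transitive closure of $\Mstep[][][s,d]$, and the final segment terminates at $(pp', m_1', m_2')$ as required. I expect this lemma to feed directly into the corollary that terminating execution pairs are pairwise reachable (used in the proof of \autoref{thm:nia_sound_memories}), so the key structural insight I want to surface is that \emph{any} pair of executions sharing start and end positions can be canonically segmented into alternating synchronized and diverging phases — the soundness arguments for $\Mnoninterference$ then reason phase-by-phase.
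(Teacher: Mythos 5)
Your proof is correct and uses the same induction measure as the paper's (the combined step count $n+l$, via Noetherian/strong induction), but it finds the decomposition differently, so the comparison is worth making explicit. The paper argues top-down: in the inductive step it asks whether the two executions share \emph{any} intermediate program position $\ppi$ (with combined index sum strictly between $0$ and $n+l$); if so, it picks the first configurations at $\ppi$ in each execution, applies the induction hypothesis to the prefix pair and to the suffix pair, and composes; if not, the negated case condition directly yields all the non-intersection side conditions of $\Mstep[][][d]$, so the entire remaining pair is one diverging step. You instead build the decomposition greedily from the left: a synchronous step when the immediate successors agree, and otherwise a diverging step up to the \emph{earliest} re-synchronization point, discharging the non-intersection clauses of $\Mstep[][][d]$ by a minimality argument (which does go through: any violating pair would be a common landing with strictly smaller sum and positive maximum). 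Your route makes the alternating sync/diverging phase structure explicit, at the price of verifying the side conditions by hand where the paper gets them for free. Incidentally, your explicit synchronous case also covers a corner the paper's write-up glosses over: when $n = l = 1$ and both executions step to $pp'$ in lock-step, the paper's second case fires and claims ``a single, divergent step,'' but $\ppi_1 = \ppi_2$ there, so $\Mstep[][][d]$ does not apply and the step is in fact synchronous; your case split handles this cleanly.

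One wording issue you must fix: you define $k$ and $l'$ as ``the smallest \emph{positive} step counts'' landing on a common position, but the minimization has to range over all pairs with $\max(k,l') > 0$, i.e., you must allow $k = 0$ or $l' = 0$. The definition of $\Mstep[][][d]$ deliberately requires only $\Mfunc{max}[k,l] > 0$, precisely to capture one execution standing still while the other loops back to the current program position (the do-while scenario that motivates \rniarcns). If both counts were forced positive, the earliest common landing could be a pair $(0, l'')$, and your chosen both-positive pair would then violate the second non-intersection clause, since execution~1's position at index $0$ (namely $pp$) recurs as an intermediate position of execution~2. Your later appeal to ``$\max(k,l') > 0$ by minimality'' indicates you intend the right notion, and with it the argument is sound; note also that this same subtlety rules out your parenthetical alternative measure: a one-sided diverging step consumes steps of only one execution, so $\max(n,l)$ need not decrease, and only the sum $n+l$ gives a well-founded induction. (The mixed boundary case $n=0, l>0$ cannot arise, since $n=0$ forces $pp'=pp$, which the first-reach precondition for execution~2 then forces $l=0$ as well, so your inspection of both first transitions is always well-defined.)
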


\begin{proof}
By Noetherian induction on $k = n + l$.

$k = 0$. This case trivially holds by setting $pp' = pp$, $m_1' = m_1$, and $m_2' = m_2$.

$k' = k + 1$. Either the computation of $(pp', m_1')$ and $(pp', m_2')$ share an intermediate program position $\ppi$ that can be reached by $\Mstep[pp,m_1,m_2][*][\{s,d\}]$, or not.

\quad $\exists \ppi\,\mi_1\,\mi_2\,\dot{n}\,\dot{l} \,.\, (\ppi, \mi_1') = \Mstep[pp, m_1][\dot{n}] \land (\ppi, \mi_2') = \penalty-60000 \Mstep[pp, m_2][\dot{l}] \land 0 < \dot{n} + \dot{l} \le k$.
Let $(\ppi, \mi_1)$ and $(\ppi, \mi_2)$ be the first configurations at $\ppi$ derivable from $(pp, m_1)$ and $(pp, m_2)$, respectively. 
As they are the first configurations at $\ppi$ (i.e., there is no configuration at $\ppi$ derivable in fewer steps), we can apply the induction hypothesis to $(\ppi, \mi_1') = \Mstep[pp, m_1][\dot{n}]$ and $(\ppi, \mi_2') = \Mstep[pp, m_2][\dot{l}]$ and, therefore, have $(\ppi, \mi_1, \mi_2) \in \penalty-60000 \Mstep[pp, m_1, m_2][*][\{s,d\}]$.
We can also apply the induction hypothesis on the second part of the execution, $(pp',m_1') = \Mstep[\ppi, \mi_1][n - \dot{n}]$ and $(pp',m_2') = \Mstep[\ppi, \mi_2][l - \dot{l}]$ to derive $(pp', m_1', m_2') \in \penalty-100000 \Mstep[\ppi, \mi_1, \mi_2][*][\{s,d\}]$.
By combining these two facts, we can derive $(pp', m_1', m_2') \in \Mstep[pp, m_1, m_2][*][\{s,d\}]$ 

\quad $\nexists \ppi\,\mi_1\,\mi_2\,\dot{n}\,\dot{l} . (\ppi, \mi_1') = \Mstep[pp, m_1][\dot{n}] \land (\ppi, \mi_2') = \penalty-10000 \Mstep[pp, m_2][\dot{l}] \land 0 < \dot{n} + \dot{l} \le k$.
In this case, we can by the definition of $\Mstep[][][d]$ go from $(pp,m_1,m_2)$ to $(pp',m_1',m_2')$ in a single, divergent step.
\end{proof}

\begin{corollary}
\label{corollary:terminating_execution_pairwise_reachable}
Given two executions starting in $(pp, m_1)$ and $(pp, m_2)$ and terminating with memories $m_1'$ and $m_2'$, we can be sure that $(\terminatepc, m_1', m_2')$ is contained in $\Mstep[pp,m_1,m_2][*][\{s,d\}]$.
\begin{align*}
&\forall pp\,m_1\,m_2\,m_1'\,m_2'\,n\,l.\\
&\quad (\terminatepc, m_1') = \Mstep[pp, m_1][n] \land (\terminatepc, m_2') = \Mstep[pp, m_2][l] \implies {}\\
&\quad\quad (\terminatepc, m_1', m_2) \in \Mstep[pp, m_1, m_2][*][s,d]
\end{align*}
\end{corollary}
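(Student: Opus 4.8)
The plan is to derive this corollary as an immediate special case of \autoref{lemma:general_execution_pairwise_reachable}. The corollary asks us to show that if two executions from $(pp, m_1)$ and $(pp, m_2)$ both terminate (reaching $\terminatepc$ with memories $m_1'$ and $m_2'$), then $(\terminatepc, m_1', m_2')$ is reachable by the pairwise step relation $\Mstep[pp,m_1,m_2][*][s,d]$. The lemma already provides exactly this conclusion for an arbitrary target program position $pp'$, provided the hypothesis that $pp'$ is \emph{not} reached as an intermediate configuration in either execution. So the entire task reduces to discharging that extra ``first reachability'' hypothesis when we instantiate $pp' := \terminatepc$.

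First I would instantiate \autoref{lemma:general_execution_pairwise_reachable} with $pp' := \terminatepc$, keeping $pp$, $m_1$, $m_2$, $m_1'$, $m_2'$, $n$, and $l$ as given by the corollary's premises $(\terminatepc, m_1') = \Mstep[pp, m_1][n]$ and $(\terminatepc, m_2') = \Mstep[pp, m_2][l]$. To apply the lemma, I must supply the two ``no earlier occurrence'' conditions: that there is no $n' < n$ with $(\terminatepc, \mi_1) = \Mstep[pp,m_1][n']$, and symmetrically no $l' < l$ on the second execution.

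The key observation that discharges these conditions is the defining property of $\terminatepc$ from \autoref{sec:computational_model}: $\terminatepc$ is a terminal configuration, meaning $\nexists pp'\, m\, m'\,.\,(pp', m') = \Mstep[\terminatepc, m]$. Consequently, once an execution reaches $\terminatepc$ it cannot take a further step, so $\terminatepc$ can only ever appear as the \emph{last} configuration of any finite execution prefix. Formally, if $(\terminatepc, \mi_1) = \Mstep[pp, m_1][n']$ for some $n' < n$, then $\Mstep[pp,m_1][n'+1] = \Mstep[\terminatepc, \mi_1]$ would have to exist in order for the step count to reach $n$, contradicting that $\terminatepc$ has no successor. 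Hence any index at which $\terminatepc$ is reached must be maximal, forcing $n' \not< n$ (indeed such an $n'$ would equal $n$), and the first-reachability hypothesis holds; the same argument applies to the second execution with $l$. Substituting these into the lemma yields exactly $(\terminatepc, m_1', m_2') \in \Mstep[pp, m_1, m_2][*][s,d]$, which is the desired conclusion.

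I do not anticipate a genuine obstacle here, since the work has been done in the lemma; the only subtlety is making the terminality argument airtight, namely that reaching $\terminatepc$ earlier than the final index is impossible because $\Mstep$ is undefined on $\terminatepc$, so no continuation exists that could extend the prefix to length $n$ (respectively $l$). This is a one-line consequence of the $\terminatepc$ specification and requires no further case analysis.
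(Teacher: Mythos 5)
Your proposal is correct and follows exactly the paper's own argument: the paper also proves this corollary by invoking \autoref{lemma:general_execution_pairwise_reachable} with $pp' = \terminatepc$, noting that $\terminatepc$ cannot occur as an intermediate program position. Your spelled-out justification (that $\Mstep$ has no successor at $\terminatepc$, so any earlier occurrence would make the length-$n$ prefix impossible) merely makes explicit what the paper leaves implicit.
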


\begin{proof}
By definition, $\terminatepc$ cannot appear as intermediate program position in any execution.
This allows us to apply \autoref{lemma:general_execution_pairwise_reachable}.
\end{proof}

\begin{lemma}
\label{lemma:pairwise_step_weak_noninterference}
Given an attacker $\ell$ and a security policy $\Gamma$:
When starting two executions consisting of arbitrarily many synchronous and diverging steps, starting at program position $pp$ from two memories $m_1$ and $m_2$ that are equal w.r.t.\ $\ell$ and $\Gamma$ and ending at $pp'$ with the updated memories $m_1'$ and $m_2'$:
for every memory position $p$ that $m_1'$ and $m_2'$ disagree on, $\Mnoninterference$ can derive a labeling $\Lambda$ at $pp'$ for either $m_1'$ or $m_2'$ that labels $p$ $\H$ from corresponding label map of $\ell$ and $\Gamma$.

\begin{align*}	
&\forall \ell\,\Gamma\,pp\,pp'\,m_1\,m_2\,m_1'\,m_2'\,p.\\
&\quad(pp',m_1', m_2') \in \Mstep[pp, m_1, m_2][*][s,d] \land {}\\ 
&\quad\elleqg[m_1][m_2] \land m_1'(p) \ne m_2'(p) \\
&\quad\quad \implies \exists \ctxlabel\, m'\, \Lambda\,.\,  (\ctxlabel, pp', m', \Lambda) \in \Mnoninterference[\Lmap, pp, m_1, m_2] \land {} \\
&\quad\quad\quad\Lambda(p) = \H \land (m_1' = m' \lor m_2' = m')
\end{align*}
\end{lemma}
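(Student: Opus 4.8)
The plan is to prove \autoref{lemma:pairwise_step_weak_noninterference} by Noetherian induction on the number of pairwise steps $n$ in the derivation $(pp',m_1',m_2') \in \Mstep[pp, m_1, m_2][*][s,d]$, since this is the engine behind both \autoref{thm:nia_sound_memories} (via \autoref{corollary:terminating_execution_pairwise_reachable}) and the ``Memories'' case of \autoref{thm:nia_sound_traces}. The base case $n = 0$ is immediate: here $pp' = pp$, $m_1' = m_1$, $m_2' = m_2$, and \rniainit gives us $(\Mlctx, pp, \Lambda_0, m_1) \in \Mnoninterference[\Lambda_0, pp, m_1, m_2]$ with $\Lambda_0 = \Lmap$. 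Since $\elleqg[m_1][m_2]$ holds, any position $p$ on which they disagree must satisfy $\Gamma(p) \notsqsubseteq \ell$, hence $\Lmap(p) = \H$ by the definition of $\Lmap$, discharging the conclusion.

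For the inductive step I would split the final pairwise step into the two constructors of $\Mstep[][][s,d]$. In the \emph{synchronous} case, $(pp'', m_1'', m_2'') = \Mstep[pp', m_1', m_2'][][s]$ extends an inner derivation reaching $(pp', m_1', m_2')$, for which the induction hypothesis applies. The key observation is that a synchronous step takes the same concrete step $\Mstep$ on both memories, so $\Mupdate$ (being sound per \autoref{def:soundness_astep}) produces a common update set structure; I would track how the taint of each disagreed-upon position $p$ at $pp''$ traces back either to a disagreement already present at $pp'$ (handled by the IH and propagated via \rniaprop or \rniarcst, using soundness conditions $(6)$ and $(7)$ to guarantee the influencing positions land in the input set) or to the context having been raised to $\H$. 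If the context is $\Mlctx$ throughout, I would argue the two executions agree on the follow-up program position and on all influencing inputs, so that any new disagreement at $p$ forces some input position to have been labeled $\H$, yielding $\Lambda(p) = \H$ by the least-upper-bound computation. In the \emph{diverging} case, I would invoke \autoref{lemma:diverging_steps_context} to obtain an $\H$-labeled context covering the divergent region, and then show that \rniajoin at the joining transition (whose existence is guaranteed by soundness of $\Mjoinat$, \autoref{def:joinat-soundness}) relabels every position on which the two rejoining memories disagree to $\H$: this is exactly the first branch of \rniajoin's $\Lambda'$ definition (for unupdated positions) together with the second branch (for updated ones), and it also repairs the soundness gap illustrated in \autoref{fig:taintsoundnesproblem}.

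The main obstacle I anticipate is the interaction between divergence, the context lifecycle, and the join rule — specifically, ensuring that whenever two executions diverge and later reconverge, the intervening steps are genuinely executed under an $\H$-labeled context \emph{and} that \rniajoin actually fires at the reconvergence point with the correct matched pair of intermediate states. This requires a careful auxiliary argument (which I expect to phrase as \autoref{lemma:diverging_steps_context}) that a diverging step $\Mstep[][][d]$ between $pp'$ and the post-dominating joining transition $(pp_{j_1}, pp_{j_2})$ from $\Mjoinat$ is always bracketed by an application of \rniarcst or \rniarcns that sets $\ctxlabel = (\H, pp', (pp_{j_1},pp_{j_2}))$, with the context remaining $\H$ (by the guards in \rniaprop) until the joining transition is executed. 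The delicate point is matching the two independent \rniajoin premises — the two $\mathMState$-facts with a \emph{common} $\H$-labeled context $(\H, pp_d, (pp', pp''))$ — which forces me to argue that both executions, though diverging, reach the same joining configuration $pp'$ and that the labelings $\Lambda_1, \Lambda_2$ derived for them share this context object; soundness of $\Mccf$ (\autoref{def:ccf-soundness}) is what guarantees the raising rule is applicable at the divergence point. Once this bracketing lemma is in hand, the inductive step reduces to a case analysis that is largely bookkeeping over the update sets and the three cases of \rniajoin's $\Lambda'$.
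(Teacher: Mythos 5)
Your overall architecture (induction on the number of pairwise steps, base case via \rniainit, case split on synchronous versus diverging final step, \autoref{lemma:diverging_steps_context} for the divergent region, \rniajoin at the reconvergence) matches the paper's. The genuine gap is that you run the induction on the lemma's statement itself, and that statement is too weak to serve as an induction hypothesis. The lemma only guarantees, for each disagreeing position $p$ separately, the existence of \emph{some} labeling for \emph{one} of the two memories with $\Lambda(p) = \H$; the witnessing labeling may differ from position to position, and the tuples derived for the two executions need not share a context object. This breaks exactly at the point you yourself flag as delicate: \rniajoin consumes a \emph{single pair} of facts carrying the \emph{same} $\H$-labeled context, and its first branch lowers a disagreeing, non-updated $p$ to $\L$ unless $\Lambda_1(p) \sqcup \Lambda_2(p) = \H$ holds for that specific pair $(\Lambda_1, \Lambda_2)$ --- a property your per-position existential cannot supply. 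The same weakness bites in the synchronous $\L$-context case: for the contexts of \emph{both} tuples to be raised by \rniarcst with matching context objects (so that \rniajoin can later fire), the divergence-causing input must be labeled $\H$ in \emph{each} execution's own labeling, i.e.\ you need a $\sqcap$-form guarantee, not ``some labeling on some side.''

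The paper repairs this by strengthening the induction statement to \autoref{thm:nia_general_memory_soundness}: it tracks a pair of labelings $\Lambda_1, \Lambda_2$ for $m_1'$ and $m_2'$ derivable with a \emph{common} context object, together with a set $P \supseteq \{p \mid m_1'(p) \neq m_2'(p)\}$ satisfying $p \in P \iff \Lambda_1(p) \sqcup \Lambda_2(p) = \H$ when the context label is $\H$, and $p \in P \iff \Lambda_1(p) \sqcap \Lambda_2(p) = \H$ when it is $\L$ (indeed $\Lambda_1 = \Lambda_2$ in the $\L$-context case, as the proof of \autoref{lemma:synchronous_steps} observes). This paired invariant is what allows \autoref{lemma:initial_labeling}, \autoref{lemma:synchronous_steps}, and \autoref{lemma:diverging_steps} to compose through the induction; the lemma you were asked to prove then follows by a final, separate weakening step (\autoref{lemma:weakening_soundness}), which discharges the existential by picking whichever of $\Lambda_1, \Lambda_2$ labels the given $p$ as $\H$. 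Your case analyses are otherwise on target, but as written the inductive step cannot be closed; replace your induction statement with the strengthened paired form and the argument goes through.
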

\begin{proof}
By \autoref{thm:nia_general_memory_soundness} and \autoref{lemma:weakening_soundness}
\end{proof}

\begin{theorem}[General Soundness of Noninterference w.r.t. memories] 
\label{thm:nia_general_memory_soundness}
If from two attacker-indistinguishable memories we can derive two memories that disagree on all memory positions in the set $P$, we can derive a labeling for one of the two resulting memories that labels the disagreeing position $\H$.
Additionally, if we can derive a labeling for one of the result memories with $\Mlctx$ as the context, we can also derive the same labeling for the other one and this labeling labels the disagreeing position $\H$. 

\begin{align*}	
&\forall \ell\,\Gamma\,pp\,pp'\,m_1\,m_2\,m_1'\,m_2'\,n.\\
&\quad(pp',m_1', m_2') = \Mstep[pp, m_1, m_2][n][s,d] \land {}\\ 
&\quad\elleqg[m_1][m_2] \implies {} \exists P\,\ctxlabel_L\,s\,j\,\Lambda_1\,\Lambda_2.\\
&\quad\quad P \supseteq \{p ~|~ m_1'(p) \ne m_2'(p)\} \land {} \\
&\quad\quad ((\ctxlabel_L,s,j), pp', m_1', \Lambda_1) \in \Mnoninterference[\Lmap, pp, m_1, m_2] \land {} \\
&\quad\quad ((\ctxlabel_L,s,j), pp', m_2', \Lambda_2) \in \Mnoninterference[\Lmap, pp, m_1, m_2] \land {} \\
&\quad\quad (\ctxlabel_L = \H \implies (\forall p . p \in P \iff \Lambda_1(p) \sqcup \Lambda_2(p) = \H)) \land {} \\
&\quad\quad (\ctxlabel_L = \L \implies (\forall p . p \in P \iff \Lambda_1(p) \sqcap \Lambda_2(p) = \H))
\end{align*}
\end{theorem}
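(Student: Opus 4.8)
The plan is to prove \autoref{thm:nia_general_memory_soundness} by induction on the number $n$ of pairwise steps in $\Mstep[pp, m_1, m_2][n][s,d]$, exploiting that the pairwise relation is a partial function on triples $(pp, m_1, m_2)$: it takes a synchronous step when the two single-step successors agree on the program position, and otherwise the unique diverging step that reconverges at the first shared position. The invariant carried through the induction is exactly the conjunction in the conclusion: both derived configurations sit at the same program position $pp'$ with the \emph{same} context object $(\ctxlabel_L, s, j)$; the set $P$ over-approximates the positions on which $m_1'$ and $m_2'$ disagree; and $P$ is pinned down by the labelings --- via the join $\Lambda_1 \sqcup \Lambda_2$ when $\ctxlabel_L = \H$ (the executions have diverged, so the two labelings may differ) and via the meet $\Lambda_1 \sqcap \Lambda_2$ when $\ctxlabel_L = \L$ (the executions are in lock-step). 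For the base case $n = 0$ I would instantiate both derivations with \rniainit, take $\Lambda_1 = \Lambda_2 = \Lmap$ and $\ctxlabel_L = \L$, and set $P = \{p \mid \Lmap[p] = \H\}$; the hypothesis $\elleqg[m_1][m_2]$ gives $P \supseteq \{p \mid m_1(p) \neq m_2(p)\}$ because any disagreement forces $\call{\Gamma}{p} \not\sqsubseteq \ell$, hence $\Lmap[p] = \H$, and the meet characterization is immediate.

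For the inductive step I would split $\Mstep[pp, m_1, m_2][n+1][s,d]$ into its first $n$ steps reaching $(pp', m_1', m_2')$ and a final pairwise step to $(pp'', m_1'', m_2'')$, apply the induction hypothesis to the prefix, and then case-split on the context label $\ctxlabel_L$ of the prefix and on the shape of the final step. When $\ctxlabel_L = \L$ the runs are in lock-step and two situations arise. If the final step is not a conditional control-flow instruction with a tainted guard, then \rniaprop applies to both runs, the context stays $\L$, and I would take the new $P'$ to be the positions where $\Lambda_1' \sqcap \Lambda_2' = \H$; the obligation $P' \supseteq \{p \mid m_1''(p) \neq m_2''(p)\}$ is the explicit-flow argument, discharged by conditions $(5)$, $(6)$ and especially $(7)$ of \autoref{def:soundness_astep}: condition $(7)$ forces every possibly-changing position into the update set of \emph{both} runs, so a value-dependent write with a tainted offset taints the whole affected region on both sides, keeping the labelings in agreement and capturing every disagreement as $\H$. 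If instead the guard is tainted --- which, by \autoref{def:ccf-soundness} and condition $(3)$, is exactly the situation in which a diverging final step out of an $\L$-context can occur --- then \rniarcst (or \rniarcns, when one run does not advance) fires: $\Mccf$ holds, $\bigsqcup_{i \in I_{pp}} \Lambda(i) = \H$ because the differing guard position is labeled $\H$ by the prefix invariant, and $\Mjoinat$ supplies the common join transition $j = (pp_{j_1}, pp_{j_2})$ for both runs. This moves us to $\ctxlabel_L = \H$, where I re-establish the join characterization of $P'$.

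When $\ctxlabel_L = \H$ the two executions have already diverged and carry the same recorded join transition $j = (pp', pp'')$ that has not yet been taken. If the final pairwise step is not this recorded transition, I would extend each branch independently by repeated applications of \rniaprop (a single diverging $s,d$-step may correspond to several single steps per branch); since \rniaprop adds the context label $\H$ to every updated position, all writes in the diverged region become $\H$ and the join characterization is preserved. If the final step \emph{is} the recorded join transition, then \rniajoin applies: both prefix derivations are at $pp'$ with the identical context $(\H, pp_d, (pp', pp''))$, and \autoref{def:joinat-soundness} guarantees this transition is genuinely taken by both runs, so the premises of \rniajoin are met. The context drops back to $\Mlctx$, and I must show the freshly computed $\Lambda'$ both over-approximates the disagreements of $m_1''$ and $m_2''$ and matches the meet characterization.

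The main obstacle is precisely this join step and the interface between the join ($\H$-context) and meet ($\L$-context) characterizations. Soundness here requires checking that \rniajoin's two $\H$-clauses catch every kind of residual disagreement: positions updated in both branches to differing values (caught by the input-set clause, using that a differing update must draw on an $\H$-labeled input), and --- the subtle implicit-flow case exemplified by \autoref{fig:taintsoundnesproblem} --- positions left unwritten in the diverged region whose values nonetheless disagree, which must be caught by the first clause using that $\Lambda_1 \sqcup \Lambda_2 = \H$ held on them throughout the $\H$-scope. Conversely, precision (and the equivalence, not merely inclusion, in the meet characterization) requires showing that every position on which the memories now agree is downgraded to $\L$, which is where value-sensitivity is essential and where I expect the bookkeeping to be most delicate. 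The remaining routine work is verifying that each rule application respects the update sets produced by $\Mupdate$ and that the context object is threaded unchanged through non-joining steps; the weakened statements actually invoked elsewhere then follow via \autoref{lemma:weakening_soundness}.
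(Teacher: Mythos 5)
Your proposal is correct and takes essentially the same route as the paper: induction on $n$ with \autoref{lemma:initial_labeling} as the base case and the inductive step discharged by a case analysis that is exactly the content of \autoref{lemma:synchronous_steps} and \autoref{lemma:diverging_steps} (including \autoref{lemma:diverging_steps_context} for the intermediate configurations of a diverging step), carrying the identical invariant---shared context object, join characterization of $P$ under $\H$, meet characterization under $\L$---so that nesting the split on $\ctxlabel_L$ outside the split on the step shape, rather than inside as the paper does, is a purely cosmetic reordering. One small over-claim: since $P$ is existentially quantified, the ``$\Leftarrow$'' direction of the meet/join characterization comes for free by defining $P$ as the $\H$-labeled set, so the downgrading of agreeing positions you flag as delicate is not actually an obligation of this theorem (although \rniajoin does perform it, and it is what buys the precision discussed in \autoref{sec:constructing_nia}).
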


\begin{proof}
The argument of soundness for noninterference works by the induction on the number of steps $n$ in $\Mstep[pp,m][n][\{a,d\}]$.

$n=0$.
We have $m_1 = m_1'$, $m_2 = m_2'$, and $pp = pp'$, since $\Mstep$ has not been applied.
This allows us to apply \autoref{lemma:initial_labeling}.

$n' = n + 1$. By induction hypothesis, the theorem holds for executions of length $n$ where $(pp', m_1', m_2') = \Mstep[pp, m_1, m_2][n][\{a,d\}]$.
We proceed by case distinction on the nature of the last step.

\quad $(pp'', m_1'', m_2'') = \Mstep[pp', m_1', m_2'][][d]$. By \autoref{lemma:diverging_steps}.

\quad $(pp'', m_1'', m_2'') = \Mstep[pp', m_1', m_2'][][s]$. By \autoref{lemma:synchronous_steps}.
\end{proof}

\begin{lemma}
\label{lemma:weakening_soundness}
If $\Mnoninterference$ can derive two labelings $\Lambda_1$ for memory $m_1'$ and $\Lambda_2$ for memory $m_2'$ and a context label $\ctxlabel_L$ such that in case that $\ctxlabel_L = \H$, either $\Lambda_1$ or $\Lambda_2$ label any position on which $m_1'$ and $m_2'$ differ $\H$, and in case that $\ctxlabel_L = \L$, both $\Lambda_1$ and $\Lambda_2$ label any position on which $m_1'$ and $m_2'$ differ $\H$, then $\Mnoninterference$ can also derive a labeling $\Lambda$ for either $m_1'$ or $m_2'$ for any $p$ at which $m_1'(p) \neq m_2'(p)$ such that $\Lambda(p) = \H$.
\begin{align*}
&\forall P\,\ctxlabel_L\,s\,j\,\Lambda_1\,\Lambda_2.\\
&\quad P \supseteq \{p ~|~ m_1'(p) \ne m_2'(p)\} \land {} \\
&\quad ((\ctxlabel_L,s,j), pp', m_1', \Lambda_1) \in \Mnoninterference[\Lmap, pp, m_1, m_2] \land {} \\
&\quad ((\ctxlabel_L,s,j), pp', m_2', \Lambda_2) \in \Mnoninterference[\Lmap, pp, m_1, m_2] \land {} \\
&\quad (\ctxlabel_L = \H \implies (\forall p . p \in P \iff \Lambda_1(p) \sqcup \Lambda_2(p) = \H)) \land {} \\
&\quad (\ctxlabel_L = \L \implies (\forall p . p \in P \iff \Lambda_1(p) \sqcap \Lambda_2(p) = \H)) \\
&\quad {} \implies \forall p. \big( m_1'(p) \ne m_2'(p) \\
&\quad\quad \implies \exists \ctxlabel\, m'\, \Lambda\,.\,  (\ctxlabel, pp', m', \Lambda) \in \Mnoninterference[\Lmap, pp, m_1, m_2] \land {} \\
&\quad\quad\quad\Lambda(p) = \H \land (m_1' = m' \lor m_2' = m') \big)
\end{align*}
\end{lemma}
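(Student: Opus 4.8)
The goal of Lemma~\ref{lemma:weakening_soundness} is to pass from a \emph{paired} derivation (two labelings $\Lambda_1, \Lambda_2$ with a shared context, satisfying a conjunctive/disjunctive characterization of the tainted set $P$) to the \emph{single-sided} statement that for every disagreeing position $p$, one of the two executions carries a labeling with $\Lambda(p) = \H$. This is a purely logical weakening step: the heavy lifting (relating the two executions, deriving the context label, characterizing $P$) is already done in \autoref{thm:nia_general_memory_soundness}, and this lemma just repackages the result into the form used by \autoref{lemma:pairwise_step_weak_noninterference}. The plan is therefore to fix an arbitrary $p$ with $m_1'(p) \neq m_2'(p)$ and do a case split on the context label $\ctxlabel_L$, in each case exhibiting the required witness $(\ctxlabel, pp', m', \Lambda)$.

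First I would fix $p$ with $m_1'(p) \neq m_2'(p)$. Since $P \supseteq \{p \mid m_1'(p) \ne m_2'(p)\}$, we have $p \in P$. Now split on $\ctxlabel_L$. In the case $\ctxlabel_L = \L$, the hypothesis gives $\forall p.\, p \in P \iff \Lambda_1(p) \sqcap \Lambda_2(p) = \H$; since $p \in P$ this yields $\Lambda_1(p) \sqcap \Lambda_2(p) = \H$. Because $\H$ is the top of $(\horstTypeFlowLabel, \sqcup, \flowsto)$ and $x \sqcap \L = \L$, a greatest lower bound equal to $\H$ forces \emph{both} $\Lambda_1(p) = \H$ and $\Lambda_2(p) = \H$. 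Hence I may take the witness to be $((\L,s,j), pp', m_1', \Lambda_1)$ (from the second hypothesis), set $m' = m_1'$, $\Lambda = \Lambda_1$, $\ctxlabel = (\L,s,j)$, and the conclusion $\Lambda(p)=\H \land (m_1'=m' \lor m_2'=m')$ holds immediately.

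In the case $\ctxlabel_L = \H$, the hypothesis gives $\forall p.\, p \in P \iff \Lambda_1(p) \sqcup \Lambda_2(p) = \H$, so $\Lambda_1(p) \sqcup \Lambda_2(p) = \H$. Since the join of two labels in $\{\L,\H\}$ is $\H$ exactly when at least one operand is $\H$, we get $\Lambda_1(p) = \H$ \emph{or} $\Lambda_2(p) = \H$. In the first disjunct I choose the witness $((\H,s,j), pp', m_1', \Lambda_1)$, set $m' = m_1'$, $\Lambda = \Lambda_1$; in the second disjunct I choose $((\H,s,j), pp', m_2', \Lambda_2)$, set $m' = m_2'$, $\Lambda = \Lambda_2$. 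In both subcases $\Lambda(p) = \H$ and the disjunction $m_1'=m' \lor m_2'=m'$ is satisfied, and the corresponding membership in $\Mnoninterference[\Lmap, pp, m_1, m_2]$ is exactly one of the two hypotheses supplied. Since $p$ was arbitrary, the universally quantified conclusion follows.

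There is no genuine obstacle here; the only point requiring care is the lattice bookkeeping, namely justifying that in a two-element lattice $\{\L,\H\}$ a meet equal to $\H$ forces both arguments to be $\H$ while a join equal to $\H$ forces at least one, which is precisely the content of the stated identities $x \sqcup \H = \H$, $x \sqcap \L = \L$ and $\L \flowsto \H$ from \autoref{sec:security_policies}. I would make sure the witness selection is explicit so that the existential $\exists \ctxlabel\, m'\, \Lambda$ in the conclusion is discharged by a concrete tuple drawn directly from the lemma's own premises, and note that the shared-context structure $(\ctxlabel_L,s,j)$ is preserved verbatim, so no new derivation in $\Mnoninterference$ is needed beyond what the hypotheses already assert.
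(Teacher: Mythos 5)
Your proposal is correct and follows essentially the same argument as the paper's proof: fix a disagreeing position $p$ (which lies in $P$), case-split on $\ctxlabel_L$, and in the $\L$ case use that the meet being $\H$ forces both labelings to mark $p$ as $\H$, while in the $\H$ case the join being $\H$ yields at least one such labeling, from which the witness tuple is drawn directly from the hypotheses. Your version merely spells out the lattice bookkeeping and witness selection more explicitly than the paper does.
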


\begin{proof}
By the hypothesis we have that any $p$ such that $m_1'(p) \neq m_2'(p)$ is in $P$.
We distinguish the $\ctxlabel_L = \L$ and $\ctxlabel_L = \H$. 

\quad $\ctxlabel_L = \L$. In this case, we can derive $\Lambda_1$ and $\Lambda_2$ such that all memory positions in $P$ are labeled $\H$.
Therefore, we can satisfy the conclusion by setting $\Lambda$ to either one of them and $m'$ to either $m_1'$ or $m_2'$. 

\quad $\ctxlabel_L = \H$. In this case, for any $p$ in $P$ either $\Lambda_1(p) = \H$ or $\Lambda_2(p) = \H$.
 Therefore, we can satisfy the conclusion by setting $\Lambda$ to the one that labels $p$ $\H$ and $m'$ to the corresponding memory (either $m_1'$ or $m_2'$).
\end{proof}

\begin{lemma}[Initial Labeling]
\label{lemma:initial_labeling}
\begin{align*}
&\forall \ell\,\Gamma\,pp\,m_1\,m_2\,.\, \elleqg[m_1][m_2] \implies \big( \exists P\,\ctxlabel_L\,s\,j\,\Lambda_1\,\Lambda_2 . \\ 
&\quad P \supseteq \{p ~|~ m_1(p) \ne m_2(p)\} \land {} \\
&\quad ((\ctxlabel_L,s,j), pp, m_1, \Lambda_1) \in \Mnoninterference[\Lmap, pp, m_1, m_2] \land {} \\
&\quad ((\ctxlabel_L,s,j), pp, m_2, \Lambda_2) \in \Mnoninterference[\Lmap, pp, m_1, m_2] \land {} \\
&\quad (\ctxlabel_L = \H \implies (\forall p . p \in P \iff \Lambda_1(p) \sqcup \Lambda_2(p) = \H)) \land {} \\
&\quad (\ctxlabel_L = \L \implies (\forall p . p \in P \iff \Lambda_1(p) \sqcap \Lambda_2(p) = \H)) \big)
\end{align*}
\end{lemma}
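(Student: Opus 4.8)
The plan is to treat \autoref{lemma:initial_labeling} as exactly the base case invoked at $n=0$ in \autoref{thm:nia_general_memory_soundness}, and to dispatch it directly through the \rniainit rule. First I would apply \rniainit to the initial configuration, which by construction places both $(\Mlctx, pp, \Lmap, m_1)$ and $(\Mlctx, pp, \Lmap, m_2)$ into $\Mnoninterference[\Lmap, pp, m_1, m_2]$. Since $\Mlctx = (\L, \bot, \bot)$, this immediately fixes the existential witnesses as $\ctxlabel_L = \L$, $s = \bot$, $j = \bot$, and $\Lambda_1 = \Lambda_2 = \Lmap$, discharging the two membership conjuncts of the goal.

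Next I would pick the witness set $P = \{p ~|~ \Lmap[p] = \H\}$. Because $\Lambda_1 = \Lambda_2 = \Lmap$, we have $\Lambda_1(p) \sqcap \Lambda_2(p) = \Lmap[p]$, so the required biconditional for the low-context case, $p \in P \iff \Lambda_1(p) \sqcap \Lambda_2(p) = \H$, reduces to $p \in P \iff \Lmap[p] = \H$, which holds by the definition of $P$. The high-context clause, guarded by $\ctxlabel_L = \H$, is vacuously satisfied since here $\ctxlabel_L = \L$.

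The one substantive step is the containment $P \supseteq \{p ~|~ m_1(p) \ne m_2(p)\}$. For this I would unfold the hypothesis $\elleqg[m_1][m_2]$, which by definition asserts $\forall p.\, \Gamma(p) \sqsubseteq \ell \implies m_1(p) = m_2(p)$. Taking the contrapositive, any $p$ with $m_1(p) \ne m_2(p)$ must satisfy $\Gamma(p) \not\sqsubseteq \ell$; then by the definition of the label map, $\Lmap[p] = \H$, hence $p \in P$. This closes the final conjunct.

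I do not expect a genuine obstacle: the lemma is essentially a matter of lining up the definition of $\elleqg$ with that of $\Lmap$ and reading off the trivial context produced by \rniainit. The only points demanding care are bookkeeping on the tuple ordering used in the membership predicate and noting which lattice meet/join is in play — the low context selects the $\sqcap$ clause, though since both labelings coincide, $\sqcap$ and $\sqcup$ agree and the distinction is immaterial here.
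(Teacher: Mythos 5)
Your proof is correct and follows essentially the same route as the paper's: instantiate \rniainit{} to obtain $(\Mlctx, pp, \Lmap, m_1)$ and $(\Mlctx, pp, \Lmap, m_2)$, take $P$ to be the set of $\H$-labeled positions under $\Lmap$, and close the containment by the contrapositive of the definition of $\elleqg$ combined with the definition of $\Lmap$. The paper's version is merely a terser sketch of the same argument; your explicit choice of witnesses and the observation that the $\ctxlabel_L = \H$ clause is vacuous are exactly the details it leaves implicit.
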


\begin{proof}
We have $\elleqg[m_1][m_2]$, meaning that $m_1$ and $m_2$ agree on all $p$ with $\Gamma(p) \flowsto \ell$.
By \textsc{Init}, $\Mnoninterference$ can return $(\Mlctx,pp,m_1,\Lmap)$ and $(\Mlctx,pp,m_2,\Lmap)$.
$\Lmap$ is constructed such that all $p$ with $\Gamma(p) \flowsto \ell$ are labeled $\L$, which means that all potentially disagreeing memory positions are labeled $\H$.
\end{proof}

\begin{lemma}[Diverging Steps have $\H$-labeled context]
\label{lemma:diverging_steps_context}
For all intermediate configurations $(\ppi_1, \mi_1)$ and $(\ppi_2, \mi_2)$ that appear as results of $\Mstep$ in and application of $\Mstep[][][d]$ $\Mnoninterference$ can derive an $\H$-labeled context. 

\mkfit{
\begin{align*}	
&\forall \ell\,\Gamma\,pp\,pp'\,pp''\,m_1\,m_2\,m_1'\,m_2'\,m_1''\,m_2''\,n.\\
&\quad(pp'',m_1'', m_2'') = \Mstep[pp', m_1', m_2'][][d] \land {} \\
&\quad(pp'',m_1'') = \Mstep[pp', m_1'][k] \land  (pp'',m_2'') = \Mstep[pp', m_2'][l] \land {} \\
&\quad(\nexists k' . k' < k \land (pp',m_1') = \Mstep[pp, m_1][k']) \land {} \\
&\quad(\nexists l' . l' < l \land (pp',m_2') = \Mstep[pp, m_2][l']) \land {}\\
&\quad\big( \exists P\,\ctxlabel_L\,s\,j\,\Lambda_1\,\Lambda_2 . \\ 
&\quad\quad P \supseteq \{p ~|~ m_1'(p) \ne m_2'(p)\} \land {} \\
&\quad\quad ((\ctxlabel_L,s,j), pp', m_1', \Lambda_1) \in \Mnoninterference[\Lmap, pp, m_1, m_2] \land {} \\
&\quad\quad ((\ctxlabel_L,s,j), pp', m_2', \Lambda_2) \in \Mnoninterference[\Lmap, pp, m_1, m_2] \land {} \\
&\quad\quad (\ctxlabel_L = \H \implies (\forall p . p \in P \iff \Lambda_1(p) \sqcup \Lambda_2(p) = \H)) \land {} \\
&\quad\quad (\ctxlabel_L = \L \implies (\forall p . p \in P \iff \Lambda_1(p) \sqcap \Lambda_2(p) = \H)) \big) \\
&\quad                {} \implies  \big( \forall k' . 0 < k' \le k \implies \exists s'\,j'\,\Lambda_1.\\
&\quad\quad \hphantom{{} \implies {}}(\ppi_1, \mi_1') = \Mstep[pp, m_1][k'] \land {} \\
&\quad\quad \hphantom{{} \implies {}}((\H,s',j'), \ppi, \mi_1, \Lambda_1) \in \Mnoninterference[\Lmap, pp, m_1, m_2] \big) \land {} \\
&\quad      \hphantom{{} \implies {}}\big( \forall l' . 0 < l' \le l \implies \exists s'\,j'\,\Lambda_2.\\
&\quad\quad \hphantom{{} \implies {}}(\ppi_2, \mi_2') = \Mstep[pp, m_2][l'] \land {} \\
&\quad\quad \hphantom{{} \implies {}}((\H,s',j'), \ppi, \mi_2, \Lambda_2) \in \Mnoninterference[\Lmap, pp, m_1, m_2] \big)
\end{align*}
}
\end{lemma}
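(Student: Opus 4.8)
The plan is to prove the $\H$-labeled context for every intermediate configuration of the diverging step in two phases: a \emph{base phase} that establishes the $\H$ context for the first step out of $pp'$, and a \emph{preservation phase} that propagates it through the remaining steps up to and including the convergence configuration at $pp''$. Throughout I would split on the two cases $\ctxlabel_L = \L$ and $\ctxlabel_L = \H$ offered by the hypothesis, the former being the substantive one.

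For the base phase with $\ctxlabel_L = \L$, I would first exploit the divergence. Since $(pp',m_1')$ and $(pp',m_2')$ have distinct successors $\ppi_1 \neq \ppi_2$, soundness of $\Mccf$ (\autoref{def:ccf-soundness}) yields $\Mccf[pp, pp', m_1, m_2]$. Next I would select the abstract step $(\ppi_1, I_{pp}, U) \in \Mupdate[pp', m_1']$ capturing the concrete transition to $\ppi_1$ and instantiate soundness condition (3) of \autoref{def:soundness_astep} with the update $U'$ rewriting $m_1'$ into $m_2'$ on exactly their differing positions; since this changes the successor to $\ppi_2 \neq \ppi_1$, condition (3) forces $I_{pp}$ to contain some position $p$ with $m_1'(p) \neq m_2'(p)$, hence $p \in P$. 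The hypothesis' $\sqcap$-clause (valid because $\ctxlabel_L = \L$) then gives $\Lambda_1(p) = \Lambda_2(p) = \H$, so $\bigsqcup_{i \in I_{pp}} \Lambda_j(i) = \H$ for each $j \in \{1,2\}$. Using the structural invariant (verifiable by inspecting the rules) that an $\L$-labeled context is always $\Mlctx$, I would then feed $(\Mlctx, pp', \Lambda_j, m_j')$ into \rniarcst --- or into \rniarcns in the degenerate case $\Mjoinat[pp, pp', m_1, m_2] = (pp', \ppi_j)$ --- whose remaining premises are exactly $\Mccf$ and $\H = \bigsqcup_{i \in I_{pp}} \Lambda_j(i)$, producing a context $(\H, pp', (pp_{j_1}, pp_{j_2}))$ at step $k'=1$ (symmetrically $l'=1$). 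When $\ctxlabel_L = \H$ the hypothesis already supplies an $\H$ context at $pp'$, so no raise is needed and the base configuration is obtained by a single application of \rniaprop, whose control-flow guard is discharged by the $\ctxlabel_L = \H$ disjunct.

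The preservation phase is an induction on $k'$ (symmetrically on $l'$) whose crux is that the carried joining transition $(pp_{j_1}, pp_{j_2})$ is never completed before the convergence point $pp''$. I would argue this from soundness of $\Mjoinat$ (\autoref{def:joinat-soundness}) together with the disjointness clauses built into $\Mstep[][][d]$: $pp_{j_1}$, being reached in both executions, is by disjointness reached no earlier than the first common position $pp''$, and instantiating the disjointness clause with $l' = l$ shows the first execution does not revisit $pp''$ at any step before $k$. Consequently, for every $0 < k' < k$ the transition out of $\ppi_1^{(k')} \neq pp_{j_1}$ differs from the joining transition, the \rniaprop guard $pp_{j_1} \neq pp' \lor pp_{j_2} \neq pp''$ is satisfied, and \rniaprop carries the $\H$ context one step further, reaching $pp''$ with an $\H$ context at $k'=k$. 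For the already-$\H$ case the same disjointness argument applies, using the extra observation that a context still labeled $\H$ at $pp'$ means its join has not yet fired, so its $pp_{j_1}$ again lies at or beyond $pp''$.

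I expect the main obstacle to be the rigorous application of condition (3) of \autoref{def:soundness_astep}: I must pin down the correct abstract step that captures the concrete divergence and instantiate its universally quantified $U'$ with the precise ``difference update'' so that the altered successor genuinely forces a differing position into $I_{pp}$. The second delicate point is the bookkeeping showing the joining transition is not completed inside the diverging step; here I lean most heavily on the exact disjointness clauses of $\Mstep[][][d]$ and on soundness of $\Mjoinat$, and the nested-divergence case ($\ctxlabel_L = \H$) additionally requires the observation that an $\H$ context surviving into $pp'$ pins its own $pp_{j_1}$ at or beyond $pp''$.
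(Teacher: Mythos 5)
Your proposal is correct and takes essentially the same route as the paper's proof: the same case split on the context label, where the $\L$-case uses soundness of $\Mccf$ and condition $(3)$ of $\Mupdate$'s soundness to force a differing (hence, by the hypothesis, $\H$-labeled) position into $I_{pp}$ so that \rniarcst (or, degenerately, \rniarcns) fires, while the $\H$-case and all subsequent intermediate steps are carried by \rniaprop, justified by the joining transition not being executed strictly inside the diverging step (soundness of $\Mjoinat$ plus the disjointness clauses of $\Mstep[][][d]$). Your explicit instantiation of condition $(3)$ with the ``difference update'' and your induction on $k'$ merely spell out what the paper's terse two-case argument leaves implicit.
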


\begin{proof}
Either $\Mnoninterference$ returns an $\H$-labeled context for $m_1'$ and $m_2'$ or not.

\quad $\ctxlabel_L = \H$.
In this case, we can be sure that (by the soundness condition of $\Mjoinat$ and the fact that asynchronous steps have no shared program positions between the initial and the final program position) \textsc{Join} is not applicable for any of the intermediate configurations $(\ppi_1, \mi_1)$ and $(\ppi_2, \mi_2)$.
Indeed, the only applicable rule is \rniaprop, which does never lower the context label.

\quad $\ctxlabel_L = \L$.
In this case, by the definition of $\Mstep[][][d]$ and the soundness condition on $\Mccf$, we know that $\Mccf[pp, pp', m_1, m_2]$ is true.
The soundness condition on $\Mupdate$ necessitates that if the next program positions of $(pp',m_1')$ and $(pp',m_2')$ differ, that the memory position(s) that causes this divergence has to be included in the $I_{pp}$-component of the tuple returned by $\Mupdate$.
By the hypothesis, differing values in $m_1'$ and $m_2'$ can be labeled $\H$, which means that $\bigsqcup_{i \in I_{pp}}i = \H$, and that therefore the first rule applied by $\Mnoninterference$ is either \rniarcst or \rniarcns.
All following applications will be, by the same argument as in the former case, \rniaprop, which does not lower the context label.
\end{proof}

\begin{lemma}[Diverging Steps]
\label{lemma:diverging_steps}
\ \\
\mkfit{
\begin{align*}	
&\forall \ell\,\Gamma\,pp\,pp'\,pp''\,m_1\,m_2\,m_1'\,m_2'\,m_1''\,m_2''\,n.\\
&\quad(pp'',m_1'', m_2'') = \Mstep[pp', m_1', m_2'][][d] \land \big( \exists P\,\ctxlabel_L\,s\,j\,\Lambda_1\,\Lambda_2 . \\ 
&\quad\quad P' \supseteq \{p ~|~ m_1'(p) \ne m_2'(p)\} \land {} \\
&\quad\quad ((\ctxlabel_L,s,j), pp', m_1', \Lambda_1) \in \Mnoninterference[\Lmap, pp, m_1, m_2] \land {} \\
&\quad\quad ((\ctxlabel_L,s,j), pp', m_2', \Lambda_2) \in \Mnoninterference[\Lmap, pp, m_1, m_2] \land {} \\
&\quad\quad (\ctxlabel_L = \H \implies (\forall p . p \in P \iff \Lambda_1(p) \sqcup \Lambda_2(p) = \H)) \land {} \\
&\quad\quad (\ctxlabel_L = \L \implies (\forall p . p \in P \iff \Lambda_1(p) \sqcap \Lambda_2(p) = \H)) \big) \\
&\quad \implies {} \exists P'\,\ctxlabel_L'\,s'\,j'\,\Lambda_1'\,\Lambda_2'.\\
&\quad\quad P' \supseteq \{p ~|~ m_1''(p) \ne m_2''(p)\} \land {} \\
&\quad\quad ((\ctxlabel_L',s',j'), pp'', m_1'', \Lambda_1') \in \Mnoninterference[\Lmap, pp, m_1, m_2] \land {} \\
&\quad\quad ((\ctxlabel_L',s',j'), pp'', m_2'', \Lambda_2') \in \Mnoninterference[\Lmap, pp, m_1, m_2] \land {} \\
&\quad\quad (\ctxlabel_L' = \H \implies (\forall p . p \in P' \iff \Lambda_1(p) \sqcup \Lambda_2(p) = \H)) \land {} \\
&\quad\quad (\ctxlabel_L' = \L \implies (\forall p . p \in P' \iff \Lambda_1(p) \sqcap \Lambda_2(p) = \H))
\end{align*}
}

\end{lemma}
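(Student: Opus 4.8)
The plan is to show that the invariant of \autoref{thm:nia_general_memory_soundness} is re-established after the diverging macro-step, and that it always lands in the $\ctxlabel_L' = \H$ branch. First I would record two structural facts about the step. Since the step is diverging, the two executions disagree on the program position immediately following $pp'$; by the soundness of $\Mccf$ (\autoref{def:ccf-soundness}) this forces $\Mccf[pp, pp', m_1, m_2]$ to hold. Second, by \autoref{lemma:diverging_steps_context} every configuration reached during the macro-step---in particular both final configurations $(pp'', m_1'')$ and $(pp'', m_2'')$---is derivable by $\Mnoninterference$ with an $\H$-labeled context. The two derivations moreover carry the \emph{same} recorded divergence point $s'$ and the same join transition $j'$: the join transition is $\Mjoinat[pp, pp', m_1, m_2]$, which is a function of data shared by both executions, and the divergence point is $pp'$ itself (set by \rniarcst/\rniarcns if the incoming context was $\L$, or inherited unchanged through \rniaprop if it was already $\H$). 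I would therefore fix $\ctxlabel_L' = \H$ and let $\Lambda_1', \Lambda_2'$ be the labelings produced along these two derivations.

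With the context pinned to $\H$, I would take $P' := \{p \mid \Lambda_1'(p) \sqcup \Lambda_2'(p) = \H\}$, which satisfies the required $\H$-biconditional by construction; the remaining obligation is $P' \supseteq \{p \mid m_1''(p) \neq m_2''(p)\}$, i.e.\ every finally-disagreeing position is labeled $\H$ in at least one execution. The engine turning the IH into this conclusion is an inner induction over the micro-steps of the macro-step, maintaining the invariant that, in each execution, a position is $\H$-labeled iff it was $\H$ at $pp'$ or has been updated since. This invariant holds because, once the context is $\H$, every update performed by \rniaprop raises the written value to at least $\ctxlabel_L = \H$.

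The main case analysis---and where I expect the real difficulty---is a fixed $p$ with $m_1''(p) \neq m_2''(p)$. If $p$ is never updated by either execution, then $m_i''(p) = m_i'(p)$, so $p$ already disagreed at $pp'$ and lies in the IH's set $P$; the IH's characterization then forces $\Lambda_1(p) \sqcup \Lambda_2(p) = \H$ (incoming context $\H$) or $\Lambda_1(p) \sqcap \Lambda_2(p) = \H$, hence both $\H$ (incoming context $\L$), and the labels carry over unchanged. If $p$ is updated at some non-initial micro-step, the $\H$-context forces that execution's label to $\H$. The delicate residual case is a $p$ updated \emph{only} at the initial, context-raising micro-step handled by \rniarcst, where the written label is $\bigsqcup_{i \in I_U}\Lambda(i)$ rather than being forced to $\H$; here I would descend to the $\Mupdate$ interface. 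Both executions take this first micro-step from the \emph{same} position $pp'$, so I can instantiate condition $(6)$ of \autoref{def:soundness_astep} with the update $U'$ witnessing $m_2' = \Mapplyupdate[m_1', U']$: it yields that if $p$'s post-step values differ then some disagreeing position lies in $p$'s input set $I_U$, and such a position is in $P$ and hence $\H$-labeled by the IH, so $\bigsqcup_{i \in I_U}\Lambda_1(i) = \H$ (with condition $(7)$ guaranteeing that $p$ is actually enumerated in $U$ whenever it can change). Assembling these cases establishes $P' \supseteq \{p \mid m_1''(p) \neq m_2''(p)\}$ and closes the step; the principal obstacle is precisely this reconciliation of a value written during the single context-raising transition across the two divergent branches, which is why the argument cannot stay at the level of the rules but must appeal to the soundness conditions on $\Mupdate$.
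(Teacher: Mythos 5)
Your proof is correct and follows essentially the same route as the paper's: both arguments rest on \autoref{lemma:diverging_steps_context} to keep the context $\H$-labeled throughout the macro-step (so no \rniajoin can fire before $pp''$ and both derivations carry the same context tuple), on the soundness of $\Mccf$ together with condition $(3)$ to force \rniarcst{} or \rniarcns{} when the incoming context is $\L$, and on \rniaprop's context-lub to taint every position updated after the raise. Your explicit instantiation of conditions $(6)$ and $(7)$ of \autoref{def:soundness_astep} for positions written at the context-raising step itself is precisely the justification the paper compresses into its one-sentence claim that \rniarcst{} labels all disagreeing follow-up positions $\H$, so you have spelled out a step the paper leaves terse rather than taken a different path.
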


\begin{proof}
By the soundness condition of $\Mupdate$, there is an update that correctly generates all intermediate configurations, in particular $\Mnoninterference$ can generate $(pp', m_1', m_2')$ from $(pp, m_1, m_2)$. 

Either $\Mnoninterference$ returns an $\H$-labeled context for $m_1'$ and $m_2'$ or not.

\quad $\ctxlabel_L = \H$.
By \autoref{lemma:diverging_steps_context} we know that we can derive an $\H$-labeled context for all intermediate configurations and that the only rule that is intermediately applied is \rniaprop.
If there is a memory position $p$ for which $m_1''(p) \neq m_2''(p)$, then either we already had $m_1'(p) \neq m_2'(p)$ (in that case we have $\Lambda_1(p) \sqcup \Lambda_2(p) = \H$ by the hypothesis) or we only have $m_1''(p) \neq m_2''(p)$.
In the latter case, $p$'s value (in either the first or the second execution) has been changed in a configuration for which an $\H$ context label is derivable.
This means, that $p$'s label in the returned label map is $\H$, as the least upper bound of all values changed by \rniaprop is the context label (which is $\H$). 
Since \rniaprop does not change does not change the context object at all, $s = s'$ and $j = j'$. 

\quad $\ctxlabel_L = \L$.
In this case, by the definition of $\Mstep[][][d]$ and the soundness condition on $\Mccf$, we know that $\Mccf[pp, pp', m_1, m_2]$ is true.
The soundness condition on $\Mupdate$ necessitates that if the next program positions of $(pp',m_1')$ and $(pp',m_2')$ differ, that the memory position(s) that causes this divergence has to be included in the $I_{pp}$-component of the tuple returned by $\Mupdate$.
By the hypothesis, differing values in $m_1'$ and $m_2'$ can be labeled $\H$, which means that $\bigsqcup_{i \in I_{pp}}i = \H$, and that therefore the first rule applied by $\Mnoninterference$ is either \rniarcst or \rniarcns.
If the next transition that the executions starting at $(pp',m_1')$ and $(pp',m_2')$ share is already the immediately next one (by the definition of $\Mstep[][][d]$ this can only happen if one of the two execution does not do a step), we apply \rniarcns to raise the context label without applying any other changes.
In any other case, the first rule that $\Mnoninterference$ applies will be \rniarcst, which sets the context label to $\H$.
\rniarcst propagates the taints of all updated memory positions in the same way as \rniaprop, which means that all memory positions the two follow-up configurations disagree on will be labeled $\H$. 
By the same argument as above, all subsequent rule applications will be \rniaprop and all memory positions $m_1''$ and $m_2''$ disagree on can be labeled $\H$.
Similarly, we observe that the context object generated in \rniarcst or \rniarcns is the same in any case and not changed by the subsequent applications of \rniaprop.
Therefore, the same context object can be derived for $m_1''$ and $m_2''$. 
\end{proof}

\begin{lemma}[Synchronous Steps]
\label{lemma:synchronous_steps}
\ \\
\mkfit{
\begin{align*}	
&\forall \ell\,\Gamma\,pp\,pp'\,pp''\,m_1\,m_2\,m_1'\,m_2'\,m_1''\,m_2''\,n.\\
&\quad(pp'',m_1'', m_2'') = \Mstep[pp', m_1', m_2'][][s] \land \big( \exists P\,\ctxlabel_L\,s\,j\,\Lambda_1\,\Lambda_2 . \\ 
&\quad\quad P' \supseteq \{p ~|~ m_1'(p) \ne m_2'(p)\} \land {} \\
&\quad\quad ((\ctxlabel_L,s,j), pp', m_1', \Lambda_1) \in \Mnoninterference[\Lmap, pp, m_1, m_2] \land {} \\
&\quad\quad ((\ctxlabel_L,s,j), pp', m_2', \Lambda_2) \in \Mnoninterference[\Lmap, pp, m_1, m_2] \land {} \\
&\quad\quad (\ctxlabel_L = \H \implies (\forall p . p \in P \iff \Lambda_1(p) \sqcup \Lambda_2(p) = \H)) \land {} \\
&\quad\quad (\ctxlabel_L = \L \implies (\forall p . p \in P \iff \Lambda_1(p) \sqcap \Lambda_2(p) = \H)) \big) \\
&\quad \implies {} \exists P'\,\ctxlabel_L'\,s'\,j'\,\Lambda_1'\,\Lambda_2'.\\
&\quad\quad P' \supseteq \{p ~|~ m_1''(p) \ne m_2''(p)\} \land {} \\
&\quad\quad ((\ctxlabel_L',s',j'), pp'', m_1'', \Lambda_1') \in \Mnoninterference[\Lmap, pp, m_1, m_2] \land {} \\
&\quad\quad ((\ctxlabel_L',s',j'), pp'', m_2'', \Lambda_2') \in \Mnoninterference[\Lmap, pp, m_1, m_2] \land {} \\
&\quad\quad (\ctxlabel_L' = \H \implies (\forall p . p \in P' \iff \Lambda_1(p) \sqcup \Lambda_2(p) = \H)) \land {} \\
&\quad\quad (\ctxlabel_L' = \L \implies (\forall p . p \in P' \iff \Lambda_1(p) \sqcap \Lambda_2(p) = \H))
\end{align*}
}

\end{lemma}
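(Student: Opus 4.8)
The plan is to prove \autoref{lemma:synchronous_steps} by a careful case analysis on the context label $\ctxlabel_L$ in the hypothesis, mirroring the structure of \autoref{lemma:diverging_steps}, but now exploiting the fact that a synchronous step produces \emph{the same} follow-up program position $pp''$ in both executions. First I would invoke the soundness condition of $\Mupdate$ (\autoref{def:soundness_astep}) to obtain a single abstract update $(pp'', I_{pp}, U_1)$ (resp.\ $(pp'', I_{pp}, U_2)$) that correctly captures the concrete step taken from $(pp', m_1')$ (resp.\ $(pp', m_2')$). The goal is to show that whichever rule among \rniaprop, \rniarcst, \rniarcns, or \rniajoin fires, the resulting labelings $\Lambda_1'$, $\Lambda_2'$ continue to satisfy the characterizing equivalence ``$p \in P' \iff \Lambda_1'(p) \sqcup \Lambda_2'(p) = \H$'' (when $\ctxlabel_L' = \H$) or the dual with $\sqcap$ (when $\ctxlabel_L' = \L$).

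The main structural split is between the case $\ctxlabel_L = \H$ and $\ctxlabel_L = \L$, and within each, between whether the current transition $(pp', pp'')$ is the joining transition $(pp_{j_1}, pp_{j_2})$ recorded in the context. When $\ctxlabel_L = \H$ and the step is \emph{not} the joining transition, the only applicable rule is \rniaprop (the side condition $pp_{j_1} \neq pp' \lor pp_{j_2} \neq pp''$ holds), which raises every updated position's label to at least $\ctxlabel_L = \H$; I would argue that any newly-disagreeing position must have been updated (since both executions execute the same instruction from memories that already agree on everything outside $P$, by condition $(7)$ any possibly-changed position is in $U_i$), hence is labeled $\H$, preserving the invariant, and that the context object is untouched so $s' = s$, $j' = j$. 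When $\ctxlabel_L = \H$ \emph{and} the step is the joining transition, \rniajoin fires: here I would verify its two-line definition of $\Lambda'$ against the invariant — differing, non-updated positions get $\H$ exactly when $\Lambda_1 \sqcup \Lambda_2 = \H$ (matching the inductive hypothesis for $\H$-context), and differing updated positions get $\H$ via the input-set condition; crucially the resulting context is $\Mlctx$, so I must re-establish the \emph{$\L$-context} form of the invariant, namely $p \in P' \iff \Lambda_1'(p) \sqcap \Lambda_2'(p) = \H$, which holds because after \rniajoin both executions carry the \emph{same} label map $\Lambda'$, so $\sqcap$ and $\sqcup$ coincide.

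When $\ctxlabel_L = \L$, I would first note that in a synchronous step the two executions agree on the next program position by definition of $\Mstep[][][s]$, so neither \rniarcst nor \rniarcns can be forced by a control-flow divergence; the applicable rule is \rniaprop with the condition $\bigsqcup_{i \in I_{pp}} \Lambda(i) = \L$ (or $\Mccf$ false). Since the context stays $\L$, values propagate their input taints unchanged by $\ctxlabel_L$, and I would show that a position newly entering $P'$ must have been updated from inputs on which the two executions disagree — such inputs are in $P$, hence labeled $\H$ in \emph{both} $\Lambda_1$ and $\Lambda_2$ by the $\L$-context hypothesis, so condition $(6)$'s input sets force the updated value's label to $\H$ in both, preserving the $\sqcap$-form of the invariant.

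The step I expect to be the main obstacle is the \rniajoin case under an $\H$-context, specifically the bookkeeping required to switch between the $\sqcup$-characterization (the inductive hypothesis, since the incoming context is $\H$) and the $\sqcap$-characterization (the outgoing obligation, since \rniajoin resets the context to $\Mlctx$). I would need to handle the three branches of \rniajoin's $\Lambda'$ definition simultaneously and confirm that for every $p$, $p \in P'$ (i.e.\ $\Mapplyupdate[m_1', U_1](p) \neq \Mapplyupdate[m_2', U_2](p)$) holds exactly when $\Lambda'(p) = \H$; the subtlety is that positions differing only because they were \emph{not} written in one branch must be caught by the first branch of the definition, which is precisely the soundness fix illustrated by \autoref{fig:taintsoundnesproblem}. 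Establishing that the \WA-specific footnote argument (that joining transitions only ever delete stack values, mapped to $\bot$, so no genuine update-induced disagreement arises) does not leave a gap in the general model will require appealing to conditions $(5)$ and $(7)$ to rule out ambiguous or missing updates.
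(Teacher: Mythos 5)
Your overall architecture (case split on $\ctxlabel_L$, then on which rule fires; the $\H$/\rniaprop subcase via condition $(7)$ and context-raising of updated labels; the $\H$/\rniajoin subcase with the observation that \rniajoin outputs a \emph{shared} $\Lambda'$, so the $\sqcap$- and $\sqcup$-characterizations coincide when the context drops to $\Mlctx$) matches the paper's proof. However, your $\ctxlabel_L = \L$ case contains a genuine gap: you argue that because a synchronous step yields the same $pp''$ in both runs, ``neither \rniarcst nor \rniarcns can be forced by a control-flow divergence,'' and that \rniaprop applies with $\bigsqcup_{i \in I_{pp}} \Lambda(i) = \L$ (or $\Mccf$ false). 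But the rules of $\Mnoninterference$ are not guarded by \emph{actual} divergence of the two concrete runs; they are guarded by $\Mccf$ and by the taint of $I_{pp}$ under the current label map. A synchronous step can perfectly well execute a conditional control flow instruction whose guard is $\H$-labeled: this happens whenever both runs branch on a secret on which they happen to agree (precisely the central scenario of \autoref{fig:confidentialityc}, cf.\ columns 1 and 3 of \autoref{fig:symbolic-runs}), and also whenever $\Lambda$ or the $I_{pp}$-component returned by $\Mupdate$ is imprecise. In that situation \rniaprop is literally inapplicable — its side condition demands $\ctxlabel_L = \L \land \bigsqcup_{i \in I_{pp}} \Lambda(i) = \L$ when $\Mccf$ holds — so the only derivations go through \rniarcst (or \rniarcns when $\Mjoinat$ returns the immediate transition). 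Your argument therefore fails to establish even the \emph{existence} of the required tuples at $pp''$ in exactly the case the whole analysis is built around: it is the mechanism by which taint enters the context so that \rniajoin can later fire and lower labels on agreeing values.

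The paper closes this case by noting that \rniarcst computes $\Lambda'$ by the same formula as \rniaprop, so your label-propagation argument carries over unchanged; that under an $\L$ context the hypothesis forces the same tainting in both runs (indeed $\Lambda_1 = \Lambda_2$), so every position on which $m_1''$ and $m_2''$ disagree is $\H$ in \emph{both} $\Lambda_1'$ and $\Lambda_2'$, which re-establishes the invariant in its $\sqcup$-form for the now-$\H$ outgoing context (and, a fortiori, in both forms); and that both executions receive the \emph{same} new context object, since $\Mjoinat[pp, pp', m_1, m_2]$ depends only on the initial data and the queried position, hence agrees across the two runs. With that repair, your remaining subcases go through essentially as written.
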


\begin{proof}
By the soundness condition of $\Mupdate$, there is an update that correctly generates all the next generations.
When talking about the components of the applied update ($I_{pp}, U, \Lambda$ etc.), we assume these components to fulfill $\Mupdate$'s soundness condition.
To distinguish between a variable $X$ that is bound in a rule application that concerns a configuration derived from $m_1$, we will call it $X_1$ and conversely $X_2$ for a configuration derived from $m_2$.

Either $\Mnoninterference$ returns an $\H$-labeled context for $m_1'$ and $m_2'$ or not.

\quad$\ctxlabel_L = \L$.
In latter case, the only applicable rules are \rniaprop and \rniarcst (if the control flow does not diverge, but the label map or the $I_{pp}$-component returned by $\Mupdate$ are imprecise).
As the calculation of the new label map is the same in both cases, the argument is the same.
If $m_1''(p) \neq m_2''(p)$ then we either have $m_1'(p) \neq m_2'(p)$ which means that we can apply the induction hypothesis (as neither \rniaprop nor \rniarcst ever lower the flow labels in returned in $\Lambda$) or the value at $p$ was modified in the last step in either of the two executions.
In this case (by the soundness condition of $\Mupdate$) an update for $p$ is included in both $U_1$ and $U_2$. 
At least one of the inputs used to calculate the new values at $m_1''(p)$ (respectively $m_2''(p)$) has to differ between $m_1'$ and $m_2'$.
By the soundness condition of $\Mupdate$, we know that the input sets for $p$, $I_{U_1}$ and $I_{U_2}$, contain all memory positions that might cause the value at $p$ in $m_1''$ or $m_2''$ to change, so in particular they contain the aforementioned position(s) where $m_1'$ and $m_2'$ differ.
Because $\ctxlabel_L = \L$, we know (by the hypothesis) that all values $m_1'$ and $m_2'$ disagree on are labeled $\H$ in both $\Lambda_1$ and $\Lambda_2$, indeed that $\Lambda_1 = \Lambda_2$.
This means that $p$ is labeled $\H$ in both $\Lambda_1'$ and $\Lambda_2'$ (because its label equals $\bigsqcup_{i \in I_{U_1}} \Lambda_1(i) = \bigsqcup_{i \in I_{U_2}} \Lambda_2(i)$), which is (as $\ctxlabel_L'$ will be $\L$ for both $m_1''$ and $m_2''$) exactly what is to be proved.
If the applied rule was \rniaprop, we have $s = s'$ and $j = j'$.
Otherwise, if the applied rule was \rniarcst, we update the context object in the same way for both executions.

\quad $\ctxlabel_L = \H$.
In this case, the rules that can be applied are either \rniaprop or \rniajoin.
In case of \rniaprop, $\ctxlabel_L'$ will also be $\H$, so we only have to prove that any value that differs between $m_1''$ and $m_2''$ has to be labeled $\H$ in one of $\Lambda_1$ and $\Lambda_2$.
If $m_1''(p) \neq m_2''(p)$ then we either have $m_1'(p) \neq m_2'(p)$ which means that we can apply the induction hypothesis (as \rniaprop never lowers the flow labels in returned in $\Lambda$) or the value at $p$ was modified in the last step in either of the two executions.
As all the labels of all memory positions updated by \rniaprop are raised to at least $\ctxlabel_L$, which is $\H$, any changed position will indeed be labeled $\H$. 
As before, the context object is not changed \rniaprop.

In case of \rniajoin, $\ctxlabel_L'$ will be $\L$, so we have to prove that all positions that differ between $m_1''$ and $m_2''$ have to be labeled $\H$ in both $\Lambda_1'$ and $\Lambda_2'$.
From the induction hypothesis, we know that all values that differ in $m_1'$ and $m_2'$ are labeled $\H$ in either $\Lambda_1$ or $\Lambda_2$.
For each memory position $p$, we distinguish two different cases: either $p$ is included in $U_1$ and $U_2$ (by the soundness condition of $\Mupdate$, we know that each $p$ is updated in either both or neither).
If $p$ is not updated (i.e., $m_1''(p) = m_1'(p)$ and $m_2''(p) = m_2'(p)$), and $m_1'(p) \neq m_2'(p)$, we know, by the hypothesis, that $\Lambda_1(p) \sqcup \Lambda_2(p) = \H$, which means that in $\Lambda'$ $p$ will be labeled $\H$.
If $p$ is updated to different values (i.e., $m_1''(p) = v_1$ and $m_2''(p) = v_2$ with $v_1 \neq v_2$), we know by the hypothesis and the soundness condition of $\Mupdate$ that the upper bound of all labels in $\Lambda_1$ for positions in $I_{U_1}$ and in $\Lambda_2$ for positions in $I_{U_2}$ will be $\H$.
Therefore, $p$'s label in $\Lambda'$ will be $\H$.
The context object for both $m_1''$ and $m_2''$ will be $\Mlctx$. 
\end{proof}

\begin{lemma}
\label{lemma:equal_prefix_pairwise_execution_reachable}

If two traces starting at $(pp, m_1)$ and $(pp, m_2)$, respectively, arrive at intermediate configurations $(\ppi, \mi_1)$ and $(\ppi, \mi_2)$ and the respective sequence of events before arriving at these configurations are attacker-indistinguishable, and $\ppi$ is attacker-observable, then $(\ppi, \mi_1, \mi_2)$ can be derived from $(pp, m_1, m_2)$ via $\Mstep[][][\{s,d\}]$.
\begin{align*}
&\forall pp\,m_1\,m_2\,\mi_1\,\mi_2\,\ppi\,r_1\,r_2 \\
&\quad \Tr{pp, m_1} = s_1 \cdot (\ppi, \mi_1) \cdot r_1 \land \Gamma_E(\ppi) \flowsto \ell \land {} \\
&\quad \Tr{pp, m_2} = s_2 \cdot (\ppi, \mi_2) \cdot r_2 \land s_1 \traceelleqg s_2 \\
&\quad \implies (\ppi, \mi_1, \mi_2) \in \Mstep[pp, m_1, m_2][*][\{s,d\}] 
\end{align*}
\end{lemma}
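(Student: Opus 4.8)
The plan is to proceed by induction on the number $k$ of attacker-observable events occurring in the common prefix $s_1$ (equivalently $s_2$), reducing each maximal observable-event-free segment of the pairwise execution to an application of \autoref{lemma:general_execution_pairwise_reachable}. The first thing I would establish is a convenient reformulation of the hypothesis $s_1 \traceelleqg s_2$: unfolding the recursive definition of $\traceelleqg$, it is equivalent to the statement that the subsequences of $s_1$ and $s_2$ obtained by keeping only configurations $(\ppi', \mi')$ with $\Gamma_E(\ppi') \flowsto \ell$ have equal length and agree pointwise on their program positions (and, though not needed here, on their $\Gamma_\textit{ME}$-observable memory). In particular $s_1$ and $s_2$ contain the same number $k$ of observable events, and their $i$-th observable events sit at the same program position. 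This reformulation is what lets me peel observable events off either end of the prefix.

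For the base case $k = 0$, neither $s_1$ nor $s_2$ contains an observable event; since $\ppi$ itself is observable ($\Gamma_E(\ppi) \flowsto \ell$), the position $\ppi$ is therefore not visited anywhere in $s_1$ or $s_2$, so $(\ppi, \mi_1)$ and $(\ppi, \mi_2)$ are the \emph{first} occurrences of $\ppi$ in the two executions, and \autoref{lemma:general_execution_pairwise_reachable} yields $(\ppi, \mi_1, \mi_2) \in \Mstep[pp, m_1, m_2][*][s,d]$ directly. For the inductive step $k \ge 1$, I would split $s_1 = s_1^{\mathrm{pre}} \cdot (q, \nu_1) \cdot s_1^{\mathrm{post}}$ and $s_2 = s_2^{\mathrm{pre}} \cdot (q, \nu_2) \cdot s_2^{\mathrm{post}}$ at the \emph{last} observable configuration, which by the reformulation sits at a common position $q$ and leaves $s_1^{\mathrm{post}}, s_2^{\mathrm{post}}$ observable-event-free. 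The prefixes carry $k-1$ observable events and still satisfy $s_1^{\mathrm{pre}} \traceelleqg s_2^{\mathrm{pre}}$, so the induction hypothesis gives $(q, \nu_1, \nu_2) \in \Mstep[pp, m_1, m_2][*][s,d]$. Because $s_1^{\mathrm{post}}$ and $s_2^{\mathrm{post}}$ contain no observable events, $\ppi$ is not revisited strictly between $q$ and the target, so starting from $(q, \nu_1)$ and $(q, \nu_2)$ the configurations $(\ppi, \mi_1)$ and $(\ppi, \mi_2)$ are again first occurrences of $\ppi$; a second application of \autoref{lemma:general_execution_pairwise_reachable} (to the suffix executions $\Tr{q, \nu_1}$ and $\Tr{q, \nu_2}$) places $(\ppi, \mi_1, \mi_2) \in \Mstep[q, \nu_1, \nu_2][*][s,d]$, and transitivity of the reflexive-transitive closure $\Mstep[\cdot][*][s,d]$ closes the step.

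The main obstacle is the degenerate case $q = \ppi$, i.e. when the last observable event before the target is itself a \emph{prior} visit to $\ppi$ (for instance, an observable loop head reached on consecutive iterations). Then $(\ppi, \mi_1)$ is no longer a first occurrence of $\ppi$ in $\Tr{q, \nu_1} = \Tr{\ppi, \nu_1}$ — it already appears at step $0$ — so \autoref{lemma:general_execution_pairwise_reachable} cannot be invoked verbatim for the final segment. I would resolve this by advancing one pairwise step past $q$ before applying the lemma: from $(\ppi, \nu_1, \nu_2)$ one first takes a single $\Mstep[\cdot][][s,d]$ step (synchronous if the two successors agree on their position, diverging otherwise, with the diverging case possibly rejoining immediately at the next shared position), after which $\ppi$ is a genuine first occurrence and the lemma applies. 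Equivalently, this final segment can be handled by replaying the same Noetherian induction on the combined step count $n+l$ that already underlies \autoref{lemma:general_execution_pairwise_reachable}, now targeting a not-necessarily-first occurrence. The only other point requiring care is the back-peeling reformulation of $\traceelleqg$ flagged above, which follows by a straightforward induction on the recursive definition of $\traceelleqg$.
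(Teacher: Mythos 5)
Your proof is correct and follows essentially the same route as the paper's: induction on the number of attacker-observable events in the prefixes, peeling off the last observable event (which sits at a common position by the unfolding of $\traceelleqg$) and closing each observable-event-free segment with \autoref{lemma:general_execution_pairwise_reachable}. The only point where you go beyond the paper is the degenerate case $q = \ppi$ (consecutive observable visits to $\ppi$, e.g.\ an observable loop head), where the first-occurrence precondition of \autoref{lemma:general_execution_pairwise_reachable} is violated at step $0$ of the suffix executions; the paper's proof silently applies the lemma there anyway, so your one-step-advance patch (or the replayed Noetherian induction) is a genuine and needed refinement of the argument rather than a deviation from it.
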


\begin{proof}
By induction on the number of attacker-observable events in $s_1$ and $s_2$, $k = \sum_{(\ppi,e) \in s_1} \begin{cases} 1 \text{ if } \Gamma_E(\ppi) \flowsto \ell \\ 0 \text{ otherwise} \end{cases} = \penalty-100000 \sum_{(\ppi,e) \in s_2} \begin{cases} 1 \text{ if } \Gamma_E(\ppi) \flowsto \ell \\ 0 \text{ otherwise} \end{cases}$.

$k = 0$. Since neither $s_1$ nor $s_2$ contain any attacker-observable events, they in particular contain no event at $\ppi$ (since we have $\Gamma_E(\ppi)$).
This allows us to apply \autoref{lemma:general_execution_pairwise_reachable}. 

$k' = k+1$. Let $(\ppi', \mi_1')$ and $(\ppi', \mi_2')$ be the last attacker-observable events before reaching $(\ppi, \mi_1)$ and $(\ppi, \mi_2)$.
This means that we can (for $i \in \{1,2,\}$) decompose $\Tr{pp, m_i}$ like this, \penalty-10000 $\Tr{pp, m_i} = s_i \cdot (\ppi', \mi_i') \cdot r_i' \cdot (\mi, \mi_i) \cdot r_i$, such that $s_i$ contains $k$ attacker-observable events and $r_i'$ contains no attacker-observable events.
By induction hypothesis, we then have $(\ppi', \mi_1', \mi_2') \in \Mstep[pp, m_1, m_2][*][\{s,d\}]$.
Since $r_i'$ contains no attacker-observable \allowbreak events, it in particular does not contain an event at $\ppi$.
This means, we can again apply \autoref{lemma:general_execution_pairwise_reachable}. 
\end{proof}

\subsection{Proofs on \tool}
 \label{sec:proofs-on-wanilla}

\begin{figure*}[t]
  \newcommand{\horstFreeVarctx}{\addedInWanilla{\textit{ctx}}}
  \let\oldhorstOpAppraiseCtxTo\horstOpAppraiseCtxTo
  \renewcommand{\horstOpAppraiseCtxTo}[2]{\addedInWanilla{\oldhorstOpAppraiseCtxTo{#1}{#2}}}
  \let\oldhorstOpAppraiseTo\horstOpAppraiseTo
  \renewcommand{\horstOpAppraiseTo}[2]{\addedInWanilla{\oldhorstOpAppraiseTo{#1}{#2}}}
  \renewcommand{\horstConstructorAppCtx}[1]{\addedInWanilla{\horstConstructorApp{Ctx}{#1}}}
  \renewcommand{\horstPredAppScopeExtend}[2]{\addedInWanilla{\horstPredApp{ScopeExtend}{#1}{#2}}}
  \let\oldhorstOpApplabelOf\horstOpApplabelOf
  \renewcommand{\horstOpApplabelOf}[2]{\addedInWanilla{\oldhorstOpApplabelOf{#1}{#2}}}
  \let\oldhorstOpApplabelOfCtx\horstOpApplabelOfCtx
  \renewcommand{\horstOpApplabelOfCtx}[2]{\addedInWanilla{\oldhorstOpApplabelOfCtx{#1}{#2}}}
  \let\oldhorstConstructorAppIllegal\horstConstructorAppIllegal
  \renewcommand{\horstConstructorAppIllegal}[1]{\addedInWanilla{\oldhorstConstructorAppIllegal{#1}}}
  \let\oldhorstConstructorAppLegal\horstConstructorAppLegal
  \renewcommand{\horstConstructorAppLegal}[1]{\addedInWanilla{\oldhorstConstructorAppLegal{#1}}}
  \let\oldhorstOpAppoverApproximateLoopGlobals\horstOpAppoverApproximateLoopGlobals 
  \renewcommand{\horstOpAppoverApproximateLoopGlobals}[2]{\addedInWanilla{\oldhorstOpAppoverApproximateLoopGlobals{#1}{#2}}}
  \let\oldhorstOpAppoverApproximateLoopLocals\horstOpAppoverApproximateLoopLocals 
  \renewcommand{\horstOpAppoverApproximateLoopLocals}[2]{\addedInWanilla{\oldhorstOpAppoverApproximateLoopLocals{#1}{#2}}}
  \let\oldhorstOpAppoverApproximateLoopMemory\horstOpAppoverApproximateLoopMemory 
  \renewcommand{\horstOpAppoverApproximateLoopMemory}[2]{\addedInWanilla{\oldhorstOpAppoverApproximateLoopMemory{#1}{#2}}}
  
  \center
  \begin{minipage}{\textwidth}
  \ruleGroup{brRule}{\Br{}}
  \ruleGroup{brTableRule}{\BrTable{}}
  \ruleGroup{brTableDefaultRule}{\BrTable{} (default case)}
  \ruleGroup{ifThenElseRule}{\If{}~\Then{}~\Else{}}
  \end{minipage}
  \caption{\tool's abstract rules for $\Br{}$, $\BrTable{}$, and $\If{}~\Then{}~\Else{}$.}
  \label{fig:wanilla_rules_brtable}
\end{figure*}

\begin{figure*}[t]
  \newcommand{\horstFreeVarctx}{\addedInWanilla{\textit{ctx}}}
  \let\oldhorstOpAppraiseCtxTo\horstOpAppraiseCtxTo
  \renewcommand{\horstOpAppraiseCtxTo}[2]{\addedInWanilla{\oldhorstOpAppraiseCtxTo{#1}{#2}}}
  \let\oldhorstOpAppraiseTo\horstOpAppraiseTo
  \renewcommand{\horstOpAppraiseTo}[2]{\addedInWanilla{\oldhorstOpAppraiseTo{#1}{#2}}}
  \renewcommand{\horstConstructorAppCtx}[1]{\addedInWanilla{\horstConstructorApp{Ctx}{#1}}}
  \renewcommand{\horstPredAppScopeExtend}[2]{\addedInWanilla{\horstPredApp{ScopeExtend}{#1}{#2}}}
  \let\oldhorstOpApplabelOf\horstOpApplabelOf
  \renewcommand{\horstOpApplabelOf}[2]{\addedInWanilla{\oldhorstOpApplabelOf{#1}{#2}}}
  \let\oldhorstOpAppperspectiveOfCtx\horstOpAppperspectiveOfCtx
  \renewcommand{\horstOpAppperspectiveOfCtx}[2]{\addedInWanilla{\oldhorstOpAppperspectiveOfCtx{#1}{#2}}}
  \let\oldhorstOpApplabelOfCtx\horstOpApplabelOfCtx
  \renewcommand{\horstOpApplabelOfCtx}[2]{\addedInWanilla{\oldhorstOpApplabelOfCtx{#1}{#2}}}
  \let\oldhorstConstructorAppIllegal\horstConstructorAppIllegal
  \renewcommand{\horstConstructorAppIllegal}[1]{\addedInWanilla{\oldhorstConstructorAppIllegal{#1}}}
  \let\oldhorstConstructorAppLegal\horstConstructorAppLegal
  \renewcommand{\horstConstructorAppLegal}[1]{\addedInWanilla{\oldhorstConstructorAppLegal{#1}}}
  \let\oldhorstOpAppoverApproximateLoopGlobals\horstOpAppoverApproximateLoopGlobals 
  \renewcommand{\horstOpAppoverApproximateLoopGlobals}[2]{\addedInWanilla{\oldhorstOpAppoverApproximateLoopGlobals{#1}{#2}}}
  \let\oldhorstOpAppoverApproximateLoopLocals\horstOpAppoverApproximateLoopLocals 
  \renewcommand{\horstOpAppoverApproximateLoopLocals}[2]{\addedInWanilla{\oldhorstOpAppoverApproximateLoopLocals{#1}{#2}}}
  \let\oldhorstOpAppoverApproximateLoopMemory\horstOpAppoverApproximateLoopMemory 
  \renewcommand{\horstOpAppoverApproximateLoopMemory}[2]{\addedInWanilla{\oldhorstOpAppoverApproximateLoopMemory{#1}{#2}}}

  \let\oldhorstOpAppoverApproximateCallGlobals\horstOpAppoverApproximateCallGlobals 
  \renewcommand{\horstOpAppoverApproximateCallGlobals}[2]{\addedInWanilla{\oldhorstOpAppoverApproximateCallGlobals{#1}{#2}}}
  \let\oldhorstOpAppoverApproximateCallArguments\horstOpAppoverApproximateCallArguments
  \renewcommand{\horstOpAppoverApproximateCallArguments}[2]{\addedInWanilla{\oldhorstOpAppoverApproximateCallArguments{#1}{#2}}}
  \let\oldhorstOpAppoverApproximateCallMemory\horstOpAppoverApproximateCallMemory 
  \renewcommand{\horstOpAppoverApproximateCallMemory}[2]{\addedInWanilla{\oldhorstOpAppoverApproximateCallMemory{#1}{#2}}}

  \let\oldhorstOpAppflub\horstOpAppflub
  \renewcommand{\horstOpAppflub}[2]{\addedInWanilla{\oldhorstOpAppflub{#1}{#2}}}
  
  \center
  \begin{minipage}{\textwidth}
  \ruleGroup{callRule}{\Call{}}
  \ruleGroup{callIndirectRule}{\CallIndirect{}}
  \ruleGroup{callIndirectHavokRule}{\CallIndirect{} (when table has been changed)}
  \end{minipage}
  \caption{\tool's abstract rules for $\Call{}$ and $\CallIndirect{}$.}
  \label{fig:wanilla_rules_brtable}
\end{figure*}

\begin{figure*}[t]
  \newcommand{\horstFreeVarctx}{\addedInWanilla{\textit{ctx}}}
  \let\oldhorstOpAppraiseCtxTo\horstOpAppraiseCtxTo
  \renewcommand{\horstOpAppraiseCtxTo}[2]{\addedInWanilla{\oldhorstOpAppraiseCtxTo{#1}{#2}}}
  \let\oldhorstOpAppraiseTo\horstOpAppraiseTo
  \renewcommand{\horstOpAppraiseTo}[2]{\addedInWanilla{\oldhorstOpAppraiseTo{#1}{#2}}}
  \renewcommand{\horstConstructorAppCtx}[1]{\addedInWanilla{\horstConstructorApp{Ctx}{#1}}}
  \renewcommand{\horstPredAppScopeExtend}[2]{\addedInWanilla{\horstPredApp{ScopeExtend}{#1}{#2}}}
  \let\oldhorstOpApplabelOf\horstOpApplabelOf
  \renewcommand{\horstOpApplabelOf}[2]{\addedInWanilla{\oldhorstOpApplabelOf{#1}{#2}}}
  \let\oldhorstOpAppperspectiveOfCtx\horstOpAppperspectiveOfCtx
  \renewcommand{\horstOpAppperspectiveOfCtx}[2]{\addedInWanilla{\oldhorstOpAppperspectiveOfCtx{#1}{#2}}}
  \let\oldhorstOpApplabelOfCtx\horstOpApplabelOfCtx
  \renewcommand{\horstOpApplabelOfCtx}[2]{\addedInWanilla{\oldhorstOpApplabelOfCtx{#1}{#2}}}
  \let\oldhorstConstructorAppIllegal\horstConstructorAppIllegal
  \renewcommand{\horstConstructorAppIllegal}[1]{\addedInWanilla{\oldhorstConstructorAppIllegal{#1}}}
  \let\oldhorstConstructorAppLegal\horstConstructorAppLegal
  \renewcommand{\horstConstructorAppLegal}[1]{\addedInWanilla{\oldhorstConstructorAppLegal{#1}}}
  \let\oldhorstOpAppoverApproximateLoopGlobals\horstOpAppoverApproximateLoopGlobals 
  \renewcommand{\horstOpAppoverApproximateLoopGlobals}[2]{\addedInWanilla{\oldhorstOpAppoverApproximateLoopGlobals{#1}{#2}}}
  \let\oldhorstOpAppoverApproximateLoopLocals\horstOpAppoverApproximateLoopLocals 
  \renewcommand{\horstOpAppoverApproximateLoopLocals}[2]{\addedInWanilla{\oldhorstOpAppoverApproximateLoopLocals{#1}{#2}}}
  \let\oldhorstOpAppoverApproximateLoopMemory\horstOpAppoverApproximateLoopMemory 
  \renewcommand{\horstOpAppoverApproximateLoopMemory}[2]{\addedInWanilla{\oldhorstOpAppoverApproximateLoopMemory{#1}{#2}}}

  \let\oldhorstOpAppoverApproximateCallGlobals\horstOpAppoverApproximateCallGlobals 
  \renewcommand{\horstOpAppoverApproximateCallGlobals}[2]{\addedInWanilla{\oldhorstOpAppoverApproximateCallGlobals{#1}{#2}}}
  \let\oldhorstOpAppoverApproximateCallArguments\horstOpAppoverApproximateCallArguments
  \renewcommand{\horstOpAppoverApproximateCallArguments}[2]{\addedInWanilla{\oldhorstOpAppoverApproximateCallArguments{#1}{#2}}}
  \let\oldhorstOpAppoverApproximateCallMemory\horstOpAppoverApproximateCallMemory 
  \renewcommand{\horstOpAppoverApproximateCallMemory}[2]{\addedInWanilla{\oldhorstOpAppoverApproximateCallMemory{#1}{#2}}}

  \let\oldhorstOpAppflub\horstOpAppflub
  \renewcommand{\horstOpAppflub}[2]{\addedInWanilla{\oldhorstOpAppflub{#1}{#2}}}
  
  \center
  \begin{minipage}{\textwidth}
  \ruleGroup{functionExitRule}{\Return{} and \End{} leaving functions}
  \ruleGroup{returnJoinRule}{\rniajoin on leaving functions}
  \end{minipage}
  \caption{\tool's abstract rules for exiting functions.}
  \label{fig:wanilla_rules_brtable}
\end{figure*}

\newcommand{\cci}{conditional control flow instruction\xspace}
\newcommand{\ccis}{conditional control flow instructions\xspace}

\begin{theorem}[Soundness of $\Wjoinat$]
\label{thm:soundness-wanilla-joinat}
Given a module description $M$ and a security policy $\Gamma_0$, 
$\Wjoinat$ is sound according to \autoref{def:joinat-soundness} w.r.t.\ the WebAssembly 1.0 semantics (formalized by $\step{\cdot}{\cdot}$~\cite{wappler}).
\end{theorem}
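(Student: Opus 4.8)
The plan is to verify the soundness formula of \autoref{def:joinat-soundness} directly, by a case split on the three branches defining $\Wjoinat$ in \autoref{fig:joinatccf}. The default branch returns $(\nojoinpc, \nojoinpc)$, so $pp_{j_2} = \nojoinpc$ and the implication holds vacuously; this leaves the two substantive branches. For the $\CallIndirect$ branch I would argue from the call semantics: $\Wccf$ classifies $\CallIndirect$ as a conditional control flow instruction, and the returned transition $(\ppi'\cdot(\fid,\ipc),\, \ppi'\cdot(\fid,\ipc+1))$ instantiates the \rniarcns pattern in which $\Mjoinat$ yields the current program position together with its immediate successor. Under the wappler step relation, invoking a terminating function from the call site $\ipc$ returns control to the same frame at program counter $\ipc+1$; hence with $k=0$ we have $\Mstep[pp',m'][0] = (pp',m')$ located at $\ppi'\cdot(\fid,\ipc)$ and its successor located at $\ppi'\cdot(\fid,\ipc+1)$, discharging the obligation. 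I would here make explicit the (termination-insensitive) assumption that the invoked function returns, and the exact shape of the call-stack-indexed step relation that realises this return as a single transition of the caller frame.

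The heart of the proof is the first branch, where $pc'$ is the maximal program counter for which $\derive{\Delta}{\h{ScopeExtend}[\fid][\ipc, pc']}$ holds over $\Delta = \Wabstract[c_1] \cup \Wabstract[c_2]$. Here I would prove a \emph{convergence lemma}: under \WA's structured control flow this maximal $pc'$ is a post-dominator of $\ipc$ in the intraprocedural control-flow graph of $\fid$, and the step from $pc'$ to $pc'+1$ is the unique way an execution exits the region $[\ipc, pc']$. The key observation is that branching instructions in \WA target only the ends of enclosing blocks (forward) or the heads of enclosing loops (backward), so the set of program counters reachable from $\ipc$ while still governed by the tainted conditional is a syntactically determined, contiguous range bounded by the end of the smallest enclosing structured construct that contains every forward branch target reachable from $\ipc$. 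I would then show that the clauses generating $\mathPredicateSignatureName{ScopeExtend}$ (those of $\BrIf{}$, $\BrTable{}$, $\If$, together with the function-entry rule) compute exactly this bound, by induction on the $\mathPredicateSignatureName{ScopeExtend}$ derivation: each clause extends the scope to $\max(\ipc, \mathit{br})$, and the maximality side condition $\forall \overline{pc}.\, pc' < \overline{pc} \implies \nderive{\Delta}{\h{ScopeExtend}[\fid][\ipc, \overline{pc}]}$ closes this under all forward branch targets occurring inside the current scope, reaching the fixed point at the enclosing construct's end.

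With the convergence lemma in hand, the obligation for the first branch follows: for any concrete execution reaching $\ppi'\cdot(\fid,\ipc)$ with memory $m'$, structured control flow forces every continuation to stay within the region until it performs the exit transition $(\ppi'\cdot(\fid,pc'),\, \ppi'\cdot(\fid,pc'+1))$, which is therefore taken at some consecutive pair of steps $k, k+1$. Two points need care. First, the call-stack prefix $\ppi'$ must be tracked: intermediate calls push deeper frames, but well-formed \WA executions return to the caller frame, so the region is traversed at the fixed prefix $\ppi'$ and the exit transition occurs there. Second, the derivation used in the side condition ranges over the abstraction of \emph{both} memories, $\Wabstract[c_1] \cup \Wabstract[c_2]$; I must argue it soundly over-approximates the branch behaviour of each concrete execution, which reduces to the reachability soundness of $\Wabstract$ (the instance of \autoref{def:soundness_astep}), guaranteeing that every forward branch actually taken is witnessed by a $\mathPredicateSignatureName{ScopeExtend}$ fact and hence bounded by $pc'$.

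The main obstacle I anticipate is the convergence lemma itself, namely establishing a faithful correspondence between the abstract scope computed by $\mathPredicateSignatureName{ScopeExtend}$ and the semantic post-domination property of the concrete step relation. The delicate subcases are (i) loops wholly contained in the region, where I must show that backward branches target the loop head inside $[\ipc,pc']$ and so cannot enlarge the forward scope, while loop exits remain bounded by $pc'$, with non-terminating loops covered only under the termination-insensitive reading already assumed by the analysis; and (ii) nested conditionals that extend the scope transitively, which is precisely what the maximality condition over the iterated $\mathPredicateSignatureName{ScopeExtend}$ derivation is designed to capture. I expect the structured-control-flow post-domination argument, made rigorous against the wappler operational semantics, to constitute the bulk of the work, whereas the $\CallIndirect$ and default branches should be comparatively routine.
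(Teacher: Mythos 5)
Your overall route is genuinely different from the paper's: the paper proves this theorem by induction on the number of control-flow instructions executed in either (finite, regularly terminating) trace, doing a case analysis on the \emph{kind} of the last control-flow instruction and invoking the soundness of the taint analysis itself (\autoref{thm:nia_general_memory_soundness} and \autoref{lemma:diverging_steps_context}) at each stage, whereas you propose a one-shot ``convergence lemma'' identifying the maximal $\mathPredicateSignatureName{ScopeExtend}$ endpoint with a post-dominator. The paper's structure is not incidental: it is what breaks a circularity your proposal does not address. The $\mathPredicateSignatureName{ScopeExtend}$-generating clauses (\autoref{cls:brIfRule:1}, \autoref{cls:brRule:1}, \autoref{cls:brTableRule:1}, \autoref{cls:functionExitRule:3}) fire only from abstract states whose context or condition is already tainted, and keeping the context tainted inside a region is itself mediated by previously derived $\mathPredicateSignatureName{ScopeExtend}$ facts (\autoref{cls:brIfRule:5}, \autoref{cls:endRule:2}). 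So your reduction of ``every branch actually taken is witnessed by a $\mathPredicateSignatureName{ScopeExtend}$ fact'' to plain reachability soundness of $\Wabstract$ is too quick: witnessing requires tainted-context derivability along the run, which is exactly what the paper's outer induction supplies before each new scope extension is argued.

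Two further concrete gaps. First, your convergence lemma enumerates the generators as ``$\BrIf{}$, $\BrTable{}$, $\If$, together with the function-entry rule,'' omitting $\Br{}$ and, more importantly, $\Return$. A run inside $[\ipc, pc']$ can exit the function via $\Return$ without ever executing the transition $(pc', pc'+1)$, so your claimed unique-exit property is false unless $\Return$ in a tainted context extends the scope to the end of the function --- which is precisely what \autoref{cls:functionExitRule:3} does in the paper, after which $\Wjoinat$ falls back to $(\nojoinpc,\nojoinpc)$-style vacuity rather than a bogus join point. Second, your discharge of the $\CallIndirect$ branch with $k=0$ is wrong: the soundness condition of \autoref{def:joinat-soundness} demands consecutive configurations at $\ppi'\cdot(\fid,\ipc)$ and $\ppi'\cdot(\fid,\ipc+1)$, but $\Mstep[pp',m'][1]$ after a $\CallIndirect$ is \emph{inside the callee} --- its program position is a strictly longer call stack --- and the execution is never again at exactly $\ppi'\cdot(\fid,\ipc)$ before returning. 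The paper handles this case not by exhibiting $k$ at the call site but by entering the callee with an unlowerable tainted context ($\h{mkCtx}$ with counter $-1$, backed by \autoref{cls:functionScopeExtendRule:0}) and joining via $\mathPredicateSignatureName{MStateToJoin}$ in the step after the return (\autoref{cls:callIndirectRule:1}, \autoref{cls:callIndirectHavokRule:1}); any direct verification of the formula for this branch needs the return transition, not step one.
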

\begin{proof}
We first observe that \tool{}'s soundness claim only covers termination-insensitive noninterference and thus exclude initial memories, that generate infinite traces or halt exceptionally.
We thus have $|\Tr{pp, m_1}| \in \nat$ and $|\Tr{pp, m_2}| \in \nat$.
Furthermore, we observe that $\Wjoinat$'s results are only relevant in circumstances where the derived context is untainted.

We proceed by induction on $n$, the maximum number of control flow instructions (as marked by $\Wccf$ plus $\Br{}$, $\Call{}$, and $\Return$) executed in either $\Tr{pp, m_1}$ or $\Tr{pp, m_2}$.

\medskip

$n = 0$. Neither of the two run encountered a control flow instruction, thus they encounter exactly the same program positions in the same order.
If due to some imprecision in the analysis can derive $\mathPredicateSignatureName{ScopeExtend}$-facts that correspond to reachable program counters, they will be encountered by both executions.
If $\mathPredicateSignatureName{ScopeExtend}$ is not implied for any reachable program position,
this means $\Wjoinat$ returns $(\nojoinpc, \nojoinpc)$ in any case and the implication is trivially true.

We note that this means that for runs without \ccis \tool is sound (since its instantiations of $\Mupdate$, $\Mccf$, and $\Mjoinat$ are sound) and we can thus use \autoref{thm:nia_general_memory_soundness}. A similar result could alternatively be derived applying \autoref{lemma:synchronous_steps} inductively. 

\medskip

$n = 1$. This means one control flow instruction is encountered by both executions.
$\Wjoinat$ can be called either for a program position $pp'$ which holds a control flow instruction or not.
If not, $\Wjoinat$ returns $(\nojoinpc, \nojoinpc)$ and the soundness conditions holds again trivially.
If $pp'$ holds a control flow instruction, we first observe that the execution so far has encountered no control flow instruction, in particular no \emph{conditional} control flow instruction.
We thus know (by \autoref{thm:nia_general_memory_soundness} and the induction hypothesis) that all values the concrete configurations at disagree on will be tainted.

We then proceed by case distinction on the executed control flow instruction:

$\Br{}$: We know that the context object is untainted and no \cci is encountered before or after $pp'$.
Since both executions will continue in lockstep, there is no need to adjust $\Wjoinat$'s return values.
$\Br{}$'s abstraction \autoref{cls:brRule:1} indeed only implies $\mathPredicateSignatureName{ScopeExtend}$ if the context is tainted.

$\Call{}$: Same as $\Br{}$.

$\Return{}$: Same as $\Br{}$.

$\BrIf{}$, $\BrTable{}$: In this case we need to imply a new $\mathPredicateSignatureName{ScopeExtend}$ fact from $pp'$ to the where the two executions will join again.
If the two concrete executions diverge at $pp'$, they will disagree on their to-of-stack value, which will be tainted as noted above.
The two concrete executions will furthermore surely execute either the step out of the enclosing block (from $\hbr$ to the next program counter after the block) or, if the enclosing block is a $\Loop{}$, exit the loop by stepping from $\hpc$ to $\hpc+1$.
Other scenarios are not possible, as there are no control flow instructions in the execution following $pp'$.
We thus (depending on if we are in a $\Loop{}$ or a $\Block{}$, distinguished by the relation of $\hpc$ and $\hbr$) imply $\mathPredicateSignatureName{ScopeExtend}$ from $pp'$ to the program counter before we want to join in \autoref{cls:brIfRule:1}, \autoref{cls:brTableRule:1}, and \autoref{cls:brTableDefaultRule:1} (since the context is untainted, $\h{from}<f>$ will be $\hpc$, which is our $pp'$). 

$\If{}\cdot\Then{}\cdot\Else{}$: Same as $\BrIf{}$, except that we imply one big $\mathPredicateSignatureName{ScopeExtend}$ fact that encloses the whole structure in \autoref{cls:brIfRule:1}.

$\CallIndirect{}$: In this case we need to enter the called function with a tainted context that is tainted and remains so and join in the step after returning.
This is achieved by entering new function with a context tainted with least upper bound of the taint of top-of-stack value and table and its program counter set to $-1$ (this is what the $\h{mkCtx}[\cdot]$-function return), as done in \autoref{cls:callIndirectRule:0} and \autoref{cls:callIndirectHavokRule:0}.
This context is within the function unlowerable, as for every function there is a scope that extends from $-1$ to the end of the function by \autoref{cls:functionScopeExtendRule:0}.
If the function then returns, it has the same context object as it entered with (tainted and with a program counter $<0$), which mean that  \autoref{cls:callIndirectRule:1} and \autoref{cls:callIndirectHavokRule:1} imply $\h{MStateToJoin}<P>$ which executes \rniajoin.

\medskip

$n' = n + 1$.
Since both executions are regularly terminating there exist two final configurations $(\terminatepc, m_1') = \Mstep[pp, m_1][k_1]$ and $(\terminatepc, m_2') = \Mstep[pp, m_2][k_2]$.
Thus, we have by \autoref{corollary:terminating_execution_pairwise_reachable} $(\terminatepc, m_1', m_2') = \Mstep[pp, m_1, m_2][*][\{s,d\}]$.
Consider the case that there are multiple divergent steps within $\Mstep[pp, m_1, m_2][*][\{s,d\}]$ and \tool can precisely identify that the context can be lowered at one point in between (precisely meaning here that \emph{only} an untainted context can be derived at some program position between parts of the execution where a tainted context can be derived).
In this case we can apply the induction hypothesis to the subtraces, as they have at most $n$ executions of control flow instructions within them.

This means we can focus on the case where one divergent step starts at $pp'$.
Let $\ppi$ be the position of the last control flow instruction in the execution.
This means the $\Mnoninterference$'s soundness guarantees hold for \tool up to that point by the induction hypothesis.
In particular, we can be sure that the context will be tainted when $\ppi$ is executed by \autoref{lemma:diverging_steps_context}.
We proceed with case distinction on the kind of the executed control flow instruction:

$\Call{}$ or $\CallIndirect{}$: In this case the join point does not have to be changed as we can surely join after returning (possibly multiple times).
Let's say $((\fid, pp_{j_1}), (\fid, pp_{j_1}+1)) = \Wjoinat[pp, pp', m_1, m_2]$, i.e., the $pp_{j_1}$ is the largest number for which we can derive \penalty-10000 $\h{ScopeExtend}[\hfid][pp', pp_{j_1}]$.
The abstractly executing the $\Call{}$ (or $\CallIndirect{}$) could potentially to a change in the joining transition returned by $\Wjoinat$ by making it possible to derive additional $\mathPredicateSignatureName{ScopeExtend}$ facts.
This can only lead to the scope of $pp'$ growing, such that we can derive $\h{ScopeExtend}[\hfid][pp', pp_{j_1}']$ with $pp_{j_1}' > pp_{j_1}$.
As both executions executed $(pp_{j_1}, pp_{j_1}+1)$ (by the induction hypothesis), both executions will execute $(pp_{j_1}', pp_{j_1}'+1)$ (as there are no control flow instructions after the $\Call{}$ (or $\CallIndirect{}$) that could divert the control from just increasing the program counter to the end of the function.

$\If{}\cdot\Then{}\cdot\Else{}\cdot\End$: If the control flow is tainted, the scope of must extend after the $\End$ of the $\If{}\cdot\Then{}\cdot\Else{}\cdot\End$ already.
Since within either of the branches no control flow instructions are executed the current scope extension (or an extension as mentioned above) is fine.

$\Br{}$, $\BrIf{}$: Executing a branching instruction during a divergent step can change the joining transition.
Since the executed control flow instruction is the last one, the scope has to be extended to at most the program counter in the case of a forward jump (backward jumps do not have to extend the scope).
This is achieved by rules \autoref{cls:brIfRule:1}, \autoref{cls:brRule:1}, \autoref{cls:brTableRule:1}, and \autoref{cls:brTableDefaultRule:1}.

$\Return$: If a $\Return$ is executed in a high context, we have to extend the scope to the end of the function (meaning that the joining transition is stepping out of the function).
This is achieved by \autoref{cls:functionExitRule:3}.
\end{proof}

The soundness of \tool{}'s instance of $\Mccf$, $\Wccf$ is formalized by the following theorem.

\begin{theorem}[Soundness of $\Wccf$]
\label{thm:soundness-wanilla-ccf}
Given a module description $M$ and a security policy $\Gamma_0$, 
$\Wccf$ is sound (according to \autoref{def:joinat-soundness} w.r.t.\ the WebAssembly 1.0 semantics (formalized by $\step{\cdot}{\cdot}$~\cite{wappler}) when only considering regularly terminating executions.
\end{theorem}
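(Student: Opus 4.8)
The plan is to prove the contrapositive of the soundness condition for $\Mccf$ (\autoref{def:ccf-soundness}): I will show that whenever $\Wccf[pp, pp', m_1, m_2]$ returns $\textit{false}$ — i.e.\ whenever the opcode recorded in $M$ at the queried program position $pp'$ is \emph{not} one of $\BrIf{}$, $\BrTable{}$, $\If{}$, or $\CallIndirect$ — every regularly terminating execution reaching $pp'$ from $(pp, m_1)$ or $(pp, m_2)$ has a \emph{uniquely determined} follow-up program position, so the set quantified over in the hypothesis of \autoref{def:ccf-soundness} has cardinality at most one. Since $\Wccf$ depends only on the opcode at $pp'$ and not on the memories, it suffices to argue per opcode, which reduces the theorem to a case analysis over the \WA 1.0 instruction set as given by the reduction relation $\step{\cdot}{\cdot}$.

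First I would discharge the restriction to regularly terminating executions. This is essential because many \WA instructions (memory loads and stores, integer division, $\CallIndirect$ on a type mismatch, and $\Unreachable$) admit a trapping reduction as an alternative to their normal reduction. Under the termination restriction, a trapping reduct never contributes a configuration with a successor program position and therefore does not count towards the cardinality in \autoref{def:ccf-soundness}; non-termination is likewise excluded. Having removed traps from consideration, the remaining obligation is purely about the branching structure of the non-trapping reductions.

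The observation driving the case analysis is the distinction between \emph{nondeterminism in the value or memory produced} and \emph{nondeterminism in the follow-up program position}; only the latter is relevant to $\Mccf$. I would group the opcodes accordingly. The plain stack, numeric, comparison, and conversion instructions, together with $\Drop$, $\Select$, $\Nop$, the local and global accessors, memory loads and stores, $\Size$, and $\Grow$, all advance deterministically to the successor program counter; in particular $\Grow$ and the floating-point operations are nondeterministic only in the value pushed, leaving the follow-up program position fixed. The structured-control and administrative steps ($\Block{}$, $\Loop{}$, $\End$ leaving a block or a function, $\Br{}$, and $\Return$) each have a single follow-up determined by the block and label nesting recorded in $M$ and by the call-stack component of the current program position, while $\Call{}$ enters the statically known callee and is hence deterministic as well. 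Thus no opcode for which $\Wccf$ returns $\textit{false}$ can yield two distinct successors, whereas the four opcodes for which it returns $\textit{true}$ — $\BrIf{}$ (successor $\ipc+1$ or the branch target), $\BrTable{}$ (one of several targets), $\If{}$ (the then- or else-branch), and $\CallIndirect$ (the callee selected by the runtime table index) — are exactly those that may branch, which confirms the needed implication.

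I expect the main obstacle to be the bookkeeping around the structured-control and function-exit cases rather than any conceptual difficulty: reduction in \WA interleaves ordinary instructions with administrative $\mycode{label}$ and $\mycode{frame}$ reductions, so I must argue that abstracting a program position as a call stack of $(\fid, \ipc)$ pairs (together with $M$'s static block structure) collapses each such administrative sequence to a single, configuration-determined next position. The one case requiring genuine care is $\CallIndirect$, which must be confirmed to be correctly flagged $\textit{true}$: its trapping behaviour on type mismatch is excluded by the termination restriction, but distinct in-range indices still select distinct callees and hence distinct follow-up program positions, so multiple successors genuinely arise and membership in the $\Wccf$ opcode set is warranted.
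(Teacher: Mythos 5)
Your proposal is correct and matches the paper's own argument, which is precisely an exhaustive case analysis of the reduction rules in the \WA 1.0 specification, excluding those that rewrite to $\Trap$, to confirm that only the opcodes listed in $\Wccf$'s definition ($\BrIf{}$, $\BrTable{}$, $\If{}$, $\CallIndirect$) can produce multiple follow-up program positions. Your contrapositive framing, the explicit discharge of trapping rules via the termination restriction, and the bookkeeping for administrative $\mycode{label}$/$\mycode{frame}$ reductions are simply a more detailed elaboration of the same route, which the paper compresses into a single sentence.
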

\begin{proof}
By going over all rules in~\cite{wasm-core-1.0} and excluding all rules that rewrite to $\Trap$, we see that the only those listed in $\Wccf$'s definition in \autoref{fig:joinatccf} can lead to multiple different program positions.
\end{proof}

For an analysis to handle exceptional halts, all potentially trapping instructions would have to be handled as conditional control flow instructions (as, e.g., a \AnyCmd{t}{store} instruction can either go to the next program counter or to the exceptional state, depending on if the accessed index is out-of-bounds or not). 

\begin{theorem}[Soundness of $\Wupdate$]
\label{thm:soundness-wanilla-update}
Given a \WA configuration $c = S;F;\iinstr$ as defined in \cite{wappler}, with $(pp, m) = \exconftuple[c]$, \WA module description $M$, and a security policy $\Gamma$.
Let $\Delta_T$ the abstraction of the topmost activation of $c$ (the set returned from the last call of $\aframe{\cdot}$ in $\athread{c, \access{F}{stor}, 0, \Gamma}$) and $\Delta_B$ be the abstraction of all activations but the topmost.
Let $\Delta'$ be the abstract configurations derivable by a first-order calculus from $\Delta_T$ and $\afunctions{S}$, i.e, $\derive{\Delta, \afunctions{S}}{\Delta'}$.
If we define $\Wupdate[pp, m]$ being defined as the updates sets leading to the follow-up configurations in $\concrete{\Delta' \cup \Delta_B}$, where the input sets are defined by table in \ref{fig:wanilla-input-update-sets}, $\Wupdate$ fulfills the soundness conditions from \autoref{def:soundness_astep}. 
\end{theorem}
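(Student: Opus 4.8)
The plan is to establish the soundness of $\Wupdate$ by verifying each of the seven conditions in \autoref{def:soundness_astep} separately, reducing each to a property of the \WA semantics $\step{\cdot}{\cdot}$ and the Horn-clause abstraction machinery ($\derive{\cdot}{\cdot}$, $\concrete{\cdot}$, $\preframes$, $\afunctions{\cdot}$, $\athread{\cdot}$). The core observation driving the whole proof is that the abstraction is \emph{instruction-indexed}: for each program position $pp$ (equivalently, each pair $(\hfid, \hpc)$ reachable in the top activation), exactly the rules whose selector-function guards match the instruction $\arraccess{\access{\access{M}{\arraccess{funcs}{\hfid}}}{body}}{\hpc}$ fire. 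Hence I would first prove a lemma establishing a bijection (up to the overapproximations noted in \autoref{analysis_overapproximations}) between the clauses generated for $pp$ and the semantic cases of $\step{c}{c'}$ for configurations $c$ at $pp$. This lemma lets me discharge condition $(2)$ (existence of a capturing abstract step) and condition $(4)$ ($\Mapplyupdate$ reconstructs $m'$) essentially by inspection of the conclusion of the matching clause, since each clause's conclusion $\mathPredicateSignatureName{MState}$-fact encodes both the follow-up program position (in its $\hfid, \hpc$ arguments) and the updated values (in the changed stack/local/global/memory components), exactly as described in \autoref{analysis_transition_rules}.

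The central technical work lies in conditions $(3)$, $(6)$, and $(7)$, which concern the \emph{input sets} $I_{pp}$ and $I_U$ and the enumeration of potentially-updated positions. I would handle these through the table of input and update sets referenced as \ref{fig:wanilla-input-update-sets}, checking for each instruction family that: the positions recorded in $I_{pp}$ include every memory position whose perturbation could redirect control flow (condition $(3)$), and that the per-value input sets $I_U$ include every position that could alter the written value (condition $(6)$). The most delicate case, as anticipated by the discussion around the \AnyCmd{i32}{store8} example in the \emph{Input Sets and Updated Values} example, is value-dependent memory access: when the store offset is not statically known, condition $(7)$ forces the update set to enumerate \emph{all} linear-memory cells (since a non-write to a cell is itself an observable flow), and condition $(6)$ forces the index operand into the input set of every such cell. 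I would argue that the preanalysis determining offset constancy (also described there) soundly shrinks these sets: when the offset is constant, only the actually-written cells appear in the update set and the index operand is legitimately dropped from their input sets, because its value is fixed and thus carries no information. Conditions $(5)$ (no duplicate updates) and the functional-determinism half of $(6)$ follow from the fact that each clause writes each memory component at most once in its conclusion, together with the well-typedness of the \WA stack guaranteeing a unique shape at each $pp$.

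The main obstacle I expect is reconciling the \emph{abstract} nature of the Horn-clause derivation $\derive{\Delta, \afunctions{S}}{\Delta'}$ with the \emph{concrete} quantifiers in \autoref{def:soundness_astep}, which range over concrete configurations and concrete applications of $\Mstep$. The soundness conditions $(3)$ and $(6)$ quantify over \emph{all} alternative updates $U'$ (``if $\Mstep$ returns a different follow-up position when applying any arbitrary $U'$\ldots''), which is a statement about the genuine \WA semantics, whereas $\Wupdate$ is defined via $\concrete{\Delta' \cup \Delta_B}$, i.e., by concretizing the abstract successors. I would bridge this gap by exploiting the soundness of \wappler's underlying reachability analysis: since $\derive{\cdot}{\cdot}$ soundly overapproximates reachability (the abstraction $\Wabstract$ and concretization $\gamma$ form a sound pair, as established for \wappler), every concrete successor of $c$ appears in $\concrete{\Delta' \cup \Delta_B}$, so the update set derived for it is realized by some clause, and the input-set table was constructed precisely to dominate the concrete data-dependencies of the corresponding semantic rule. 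The argument must carefully separate the top activation (where $\Delta_T$ is stepped via $\derive{\cdot}{\cdot}$) from the unchanged lower activations $\Delta_B$, which are simply carried along; I would note that $\Minput$ and $\Moutput$ matching (the $\oat, \ogt, \ovm, \osz$ versus $\st, \gt, \vm, \sz$ correspondence used in concretizing call stacks) guarantees that stepping only the top frame is sound because callee frames do not mutate caller-observable state except through the explicitly-threaded return values, which the $\Call{}$/$\Return{}$ clauses handle as ordinary updates subject to the same conditions.

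Finally, I would remark that the overapproximations inherited from \wappler and introduced in \tool (imprecise floating-point results, loop-variable overapproximation, and the imprecise-table modeling) do not threaten soundness: each only \emph{enlarges} the set of abstract successors or \emph{coarsens} a value to $\freeValOrTop$, which is consistent with $\Wupdate$ returning a \emph{set} of possible updates of which at least one must capture each concrete step. Since conditions $(2)$ and $(7)$ are existential and universal-over-actual-changes respectively, adding spurious successors or retaining conservative (never-lowered) taint labels preserves the required inclusions. This closes the argument, and the theorem then feeds directly into \autoref{thm:soundness-wanilla} via \autoref{thm:nia_sound_memories} together with \autoref{thm:soundness-wanilla-joinat} and \autoref{thm:soundness-wanilla-ccf}.
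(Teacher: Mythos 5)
Your proposal is correct and takes essentially the same route as the paper's (far terser) proof, which establishes condition $(5)$ by construction, conditions $(2)$ and $(4)$ by direct appeal to \wappler's soundness proof, and conditions $(3)$, $(6)$, and $(7)$ by comparing the input-set table in \autoref{fig:wanilla-input-update-sets} against the \WA semantics. Your elaborations---the clause-to-semantic-case correspondence, the treatment of value-dependent stores with the constant-offset preanalysis, and the observation that the inherited overapproximations only enlarge the set of abstract successors---are faithful expansions of steps the paper leaves implicit rather than a different argument.
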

\begin{proof}
The returned sets are well-formed (condition $(5)$) by construction.
That the follow-up memory configurations/program positions are correct (conditions $(2)$ and $(4)$), directly follows from \wappler's soundness proof~\cite{wappler-tr}.
That the input sets are chosen correctly can be verified by comparing \ref{fig:wanilla-input-update-sets} with the \WA semantics in~\cite{wappler-tr}.
\end{proof}

\begin{proof}[Proof of \autoref{thm:soundness-wanilla}]
A fixed point computation such as $\Mnoninterference$ can be translated straight-forwardly to a constrained Horn-clause program.
By \autoref{thm:nia_general_memory_soundness} we know that an instance of $\Mnoninterference$ is sound if the implementation of the constituent functions $\Mupdate$, $\Mccf$, and $\Mjoinat$ are sound.
These are proven sound by \autoref{thm:soundness-wanilla-joinat}, \autoref{thm:soundness-wanilla-ccf}, and \autoref{thm:soundness-wanilla-update}, respectively.
\end{proof}

\begin{theorem}[Joins in $\L$-Contexts]
Replacing an application of \rniaprop in a low context with an application of \rniajoin (that is only changed by requiring $\lctx$ in the context object returned by $\Mnoninterference$) retains the soundness of $\Mnoninterference$.
\end{theorem}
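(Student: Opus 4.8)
The plan is to show that the proposed low-context variant of \rniajoin, whenever it fires in place of \rniaprop, reproduces a witness for the inductive invariant of \autoref{thm:nia_general_memory_soundness}; since that invariant is the load-bearing statement from which all downstream soundness results (the memory and trace theorems) are derived, preserving it suffices. Concretely, the only situation in which \rniaprop fires in a low context is the synchronous-step, low-context case treated in \autoref{lemma:synchronous_steps} (in a diverging step the context is first raised to $\H$ by \rniarcst or \rniarcns). I would therefore re-run that case with the modified rule substituted and check that its conclusion still satisfies the invariant, keeping $\ctxlabel_L = \L$ in the result.

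First I would verify that the premises of the modified rule are actually available. In the low-context case the inductive hypothesis supplies both $(\Mlctx, pp', \Lambda_1, m_1')$ and $(\Mlctx, pp', \Lambda_2, m_2')$ sharing the context object $\Mlctx$, which is exactly what the modified rule demands in place of the $\H$-labelled context object of the original \rniajoin. Because the step is synchronous, both updates $(pp'', \_, U_1) \in \Mupdate[pp', m_1']$ and $(pp'', \_, U_2) \in \Mupdate[pp', m_2']$ can be chosen with the common follow-up position $pp''$ (by condition~(2) of \autoref{def:soundness_astep}), and by condition~(7) a position is updated in $U_1$ iff it is updated in $U_2$, ruling out any one-sided-update case.

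The crux is computing the joined labelling $\Lambda'$ and proving $\Lambda'(p) = \H$ exactly when $m_1''(p) \neq m_2''(p)$. For a position $p$ left unchanged by both updates, disagreement $m_1''(p) \neq m_2''(p)$ coincides with $m_1'(p) \neq m_2'(p)$, which by the invariant forces $\Lambda_1(p) = \Lambda_2(p) = \H$, so the first clause of the join fires precisely on the disagreeing unchanged positions. For a position $p$ updated on both sides to $v_1 \neq v_2$, I would instantiate condition~(6) with the update $U'$ that rewrites $m_1'$ into $m_2'$: since stepping from $m_2'$ yields value $v_2 \neq v_1$ at $p$, condition~(6) produces an input position $q \in I_{U_1}$ on which $m_1'$ and $m_2'$ already disagree, whence $\Lambda_1(q) = \H$ by the invariant, so $\bigsqcup_{i \in I_{U_1}} \Lambda_1(i) = \H$ and the second clause fires. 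Conversely, both clauses require genuine value disagreement, so agreeing positions are labelled $\L$. The modified rule thus returns $(\Mlctx, pp'', \Lambda', m_1'')$ and $(\Mlctx, pp'', \Lambda', m_2'')$ with a single $\Lambda'$ for which $\Lambda'(p) \sqcap \Lambda'(p) = \Lambda'(p) = \H$ iff $p$ is a disagreeing position, yielding a valid low-context witness with $P'$ tightened to exactly $\{p \mid m_1''(p) \neq m_2''(p)\}$.

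The main obstacle I anticipate is the ``only if'' direction: establishing that this substitution may \emph{lower} labels (discarding the superset slack permitted in $P$) without ever dropping a genuinely interfering taint. This is precisely where value sensitivity is exploited, and it rests entirely on conditions~(6) and~(7) of \autoref{def:soundness_astep} --- (6) guarantees that any value which truly differs after the step is traceable to an already-disagreeing, hence $\H$-labelled, input, while (7) prevents a one-sided update from silently carrying a stale value. A secondary point worth recording is that the invariant of \autoref{thm:nia_general_memory_soundness} is existential in the derivable labellings, so adding or substituting in the modified rule can only enlarge the set of witnesses; no previously provable labelling is invalidated, and the downstream trace- and memory-soundness theorems carry over verbatim.
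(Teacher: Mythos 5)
Your proposal is correct, but it takes a genuinely different route from the paper's. The paper dispatches the theorem in two lines via a program-transformation reduction: wrap the instruction $\Cmd$ as $\If{\textit{true}}~\Cmd~\End$ and treat the constant condition as tainted (choosing appropriate return values for $\Mjoinat$); the \emph{unmodified} rules then fire \rniarcst{} on the guard, \rniaprop{} through $\Cmd$, and \rniajoin{} at the matching $\End$, and this derivation has exactly the effect of the low-context \rniajoin{} on the original program, so soundness is inherited from the already-proved system. You instead verify the inductive invariant of \autoref{thm:nia_general_memory_soundness} directly, re-running the low-context branch of \autoref{lemma:synchronous_steps} with the substituted rule. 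Your case isolation is right (diverging steps always carry an $\H$-labeled context by \autoref{lemma:diverging_steps_context}, so only the synchronous low-context case is affected), and your instantiation of condition~(6) of \autoref{def:soundness_astep} with the update $U'$ rewriting $m_1'$ into $m_2'$ is exactly the mechanism the paper itself deploys inside \autoref{lemma:synchronous_steps}; your reading of condition~(7) as ruling out one-sided updates likewise matches the paper's own use of it there. What the reduction buys is brevity and uniformity: no case analysis, and the interaction with $\Mccf$ and context bookkeeping comes for free from the equivalence of the transformed program. What your direct argument buys is an explicit, strictly stronger conclusion: the joined labelling is value-exact ($\Lambda'(p) = \H$ iff $m_1''(p) \neq m_2''(p)$), so the substituted rule never drops a genuine taint and may only tighten the disagreement set $P$ --- a precision fact the paper's proof leaves implicit behind the phrase ``appropriate return values for $\Mjoinat$'' --- and it makes visible that conditions~(6) and~(7) are the load-bearing hypotheses. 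Your closing observation that the derivable-labelling set is existential, so parallel application of both rules only adds witnesses, is also exactly the argument the main text uses in \autoref{sec:scope_extensions_and_implicit_flows} to justify the spurious joins in the Horn-clause encoding.
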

\begin{proof}
Any instruction $\Cmd$ where such a behavior is wanted can be replaced by the instruction sequence $\If{\textit{true}}~\Cmd~\End$ (or the equivalent in the respective specification mechanism), which maintains the program behavior.

If we treat the condition as tainted, the effect in $\Mnoninterference$ (with appropriate return values for $\Mjoinat$) would be the same as executing \rniajoin instead of \rniaprop in the original program.
\end{proof}

\begin{figure*}
{
\small
\begin{minipage}{0.3\textwidth}
\begin{align*}
\avalue{x} = \begin{cases}
             \{ \call{zeropad}{c}  \}                       & \text{if } x = \Const{\watype{i32}}{c} \\
             \{c\}                                          & \text{if } x = \Const{\watype{i64}}{c} \\
             \{\call{zeropad}{y}~|~y \in \mathbb{B}_{32} \} & \text{if } x = \Const{\watype{f32}}{c} \\
             \mathbb{B}_{64}                                & \text{if } x = \Const{\watype{f64}}{c}
             \end{cases}
\end{align*}
\end{minipage}
\hspace{0.5cm}
\begin{minipage}{0.5\textwidth}
\begin{align*}
\newcommand{\xs}{\textit{xs}}
\newcommand{\ys}{\textit{ys}}
\aseq{s, \addedInWanilla{\Lambda, p_i}} = \begin{cases}
           \{ \addedInWanilla{({\color{black}y}, \mcall{\Lambda}{p_i})} \consSYMBOL \ys ~|~ y \in \avalue{x}, \ys \in \aseq{\xs, \addedInWanilla{\Lambda, p_{i+1}}} \} & \text{if } s = x \consSYMBOL \xs \\
           \{\epsilon\}                                                             & \text{otherwise }
           \end{cases}\\
\end{align*}
\end{minipage}
\begin{align*}
\athread{\begin{minipage}{1.28cm}$S;F;\iinstrs, \\ \addedInWanilla{S_0, i, \Gamma}$\end{minipage}} &= 
\begin{cases}
  \aframe{\access{F}{fid}, \access{F'}{pc}, \access{F'}{stor}, F, \access{F'}{args}~\access{F'}{index}, \iinstrs, \addedInWanilla{S_0, i, \Gamma}} 
  \cup \athread{S;F';\iinstrs', \addedInWanilla{S_0, i+1, \Gamma}}
& \text{if } \iinstrs = \EC{\Frame{n}{F'}~\iinstrs'~\End}\\
\aframe{\access{F}{fid}, \hpc, S, F, \epsilon, \iinstrs, \addedInWanilla{S_0, i, \Gamma}} 
& \text{if } \iinstrs = \EC{\Cmd_{\hpc}}\\
\end{cases}\\
\aframe{\begin{minipage}{1.58cm} $\hfid, \hpc, S, F, p, \\\iinstrs, \addedInWanilla{S_0, i, \Gamma} $ \end{minipage}}
  &= \begin{minipage}{17.1cm} $\big\{
  \MState{\hfid, \hpc}{\addedInWanilla{\ctx}, \st, \gt, \lt, \mem, \addedInWanilla{\atable{S, S_0, \Lmap}}, \oat, \ogt, \omem} ~|~ \st \in \aseq{\exstack{\iinstrs} \concatSYMBOL p, \addedInWanilla{\Lmap, \mpst{i,0}}} , \gt \in \aseq{\exglobals{S,F}, \addedInWanilla{\Lmap, \mpgl{0}}}, \\
  \phantom{\big\{~} \lt \in \aseq{\access{F}{locals}, \addedInWanilla{\Lmap, \mplo{0}}}, 
  \oat \in \aseq{\access{F}{args}, \addedInWanilla{\Lmap, \mpar{0}}}, \ogt \in \aseq{\exglobals{\access{F}{stor}, F}, \addedInWanilla{\Lmap, \mpglo{0}}}, 0 \le i \le \exmemsize{S,F}, \\
  \phantom{\big\{~}\mem = \amem{\addedInWanilla{\mplm{i}},S,F, \addedInWanilla{\Lmap, \mpsz}}, \omem = \amem{\addedInWanilla{\mplmo{i}},\access{F}{stor}, F, \addedInWanilla{\Lmap, \mpszo}}, \addedInWanilla{\ctx = \h{Ctx}[\ell, L, -1],  \ell \in \Lat}
  \big\}
  $\end{minipage}
\\%
\addedInWanilla{ \atable{S, S_0, \Lambda} } &= \addedInWanilla{
  \begin{cases}
  \h{Tbl}[\h{TblPrecise}, \bigsqcup_{i} \mcall{\Lambda}{\mpta{i}} ] & \text{if } \access{S}{tables} = \access{S_0}{tables}\\
  \h{Tbl}[\h{TblImprecise}, \bigsqcup_{i} \mcall{\Lambda}{\mpta{i}} ] & \text{otherwise}\\
  \end{cases}
 }
\\
  \exstack{\iinstrs} &= 
    \begin{cases}
      \Const{t}{x} \consSYMBOL \exstack{\iinstrs'} & \text{if } \iinstrs = \Const{t}{x} \consSYMBOL \iinstrs' \\
      \exstack{\iinstrs'}                          & \text{if } \iinstrs = \Label{n}{} \consSYMBOL \iinstrs' \\
        \epsilon & \text{otherwise}
    \end{cases}
  \\
  \exglobals{S,F} &= \exglobals{S,F} =\eta_G'(S,F,0)\\
  \call{\eta_G'}{S,F,i}&=
    \begin{cases}
      x = \access{\arraccess{\access{S}{globals}}{\arraccess{\access{\access{F}{module}}{globaladdrs}}{i}}}{value} \consSYMBOL \call{\eta_G'}{S,F,i+1} & \text{if } i < |\arraccess{\access{\access{F}{module}}{globaladdrs}}{i}| \\
        \epsilon & \text{otherwise}
    \end{cases}\\
  \exmemsize{S,F} &=
  \begin{cases}
    2^{32}                    & \text {if }  \access{\exmem{S,F}}{max} = \epsilon \\
    \access{\exmem{S,F}}{max} \cdot 2^{16} & \text {otherwise}
  \end{cases}\\
  \exmem{S,F} &= \horstACCESS{\access{S}{mems}}{\horstACCESS{\access{\access{F}{moduleinst}}{memaddrs}}{0}}\\
  \amem{\addedInWanilla{p_i},S,F,\addedInWanilla{\Lambda, sz}} &= \h{Mem}[i, \addedInWanilla{({\color{black} \horstACCESS{d}{i},} \mcall{\Lambda}{p_i})}, \addedInWanilla{({\color{black}|d|/2^{16},} \mcall{\Lambda}{sz})} ]\\
  &\phantom{==}\text{ where } d =  \access{\exmem{S,F}}{data}\\
 \afunctions{S} &= \bigcup_{\substack{0 \le \hfid < |\access{S}{funcs}|,\\f = \horstACCESS{\access{S}{funcs}}{\hfid}}} \bigcup_{\Cmd_{\hpc} \in \annotate{\Block{\access{f}{type}}~\access{f}{code}~\End}}
 \begin{cases}
   \ainstr{\hpc}{\If} \cup  \ainstr{\hpc+1}{\Block} & \text{ if } \Cmd = \If \\
   \ainstr{\hpc}{\Block} \cup  \ainstr{\hpc-1}{\End} & \text{ if } \Cmd = \Else \\
   \ainstr{\hpc}{\Cmd} & \text{ otherwise}
 \end{cases}
\\%
\addedInWanilla{\Wabstract[S;F;\iinstrs][\Gamma]} &= \athread{S;F;\iinstrs, \addedInWanilla{\access{F}{stor}, 0, \Gamma} } \cup \afunctions{S}\\   
\end{align*}
}

  \caption{The abstraction function for configurations.}
  \label{fig:abstraction-function-configurations}
\end{figure*}

\begin{figure*}
\begin{center}
\begin{tabular}{ lllll }
\textbf{regular instructions} ($I_{pp}= \emptyset$)\\
$U$                                                                                                 && commands                                                            \\ 
$\{(\mpst{z+1}, v, \emptyset \}$                                                                    && \Const{t}{v}                                                        \\ 
$\{(\tos, v, \{\tos)\} \}$                                                                          && \AnyCmd{t}{unop}, \AnyCmd{t}{testop}, \AnyCmd{t2}{cvtop\_t_1\_sx^?} \\ 
$\{(\tos, \bot, \emptyset), (\mpst{z-1}, v, \{\tos, \mpst{z-1}\}) \}$                               && \AnyCmd{t}{binop}, \AnyCmd{t}{relop}                                \\ 
$\{(\tos, \bot, \emptyset) \}$                                                                      && \Drop                                                               \\ 
$\{(\tos, \bot, \emptyset), (\mpst{z-1}, \bot, \emptyset), (\mpst{z-2}, v, \{\tos, \mpst{z-1}\})\}$ && \Select                                                             \\ 
$\{(\tos, \bot, \emptyset), (\mpst{z-1}, \bot, \emptyset), (\mpst{z-2}, v, \{\tos, \mpst{z-2}\})\}$ && \Select                                                             \\ 
$\{(\mpst{z+1}, v, \{\mplo{x}\})\}$                                                                 && \LocalGet{x}                                                        \\ 
$\{(\tos, \bot, \emptyset), (\mplo{x}, v, \{\tos\})\}$                                              && \LocalSet{x}                                                        \\ 
$\{(\mplo{x}, v, \{\tos\})\}$                                                                       && \LocalTee{x}                                                        \\ 
$\{(\mpst{z+1}, v, \{\mpgl{x}\})\}$                                                                 && \GlobalGet{x}                                                       \\ 
$\{(\tos, \bot, \emptyset), (\mpgl{x}, v, \{\tos\})\}$                                              && \GlobalSet{x}                                                       \\ 
$\{(\tos, v, \{\tos\} \cup \{\mplm{i} ~|~ \access{\memarg}{offset} + \call{\mu}{\mpst{z}} + i < |t| \})\}$ && \Load{t}~\memarg                                                    \\ 
$\{(\tos, \bot, \emptyset),(\mpst{z-1}, \bot, \emptyset), (\mplm{i}, v, \{\tos, \mpst{z-1}\})\}$    && \Store{t}                                                           \\ 
$\{(\mpst{z+1}, v, \{\mpsz\})\}$                                                                    && \Size                                                               \\ 
$\{(\tos, v, \{\tos, \mpsz\}), (\mpsz, v', \{\tos, \mpsz\})\}$                                      && \Grow                                                               \\ 
$\emptyset$                                                                                         && \Nop, $\Block{t^n}~\iinstrs~\End{}$, $\Loop{t^?}~\iinstrs~\End{}$   \\ 
---                                                                                                 && \Unreachable                                                        \\ 
\textbf{control flow instructions}\\
$U$                               & $I_{pp}$    & commands\\
$\{(\tos, \bot, \emptyset)\}$     & $\{ \tos\}$ & $\If[t^?]~\iinstrs[1]~\Then~\iinstrs[2]~\End$ & \\
$\emptyset$                       & $\emptyset$ & \Br{l}                                        & \\
$\{(\tos, \bot, \emptyset)\}$     & $\{ \tos\}$ & \BrIf{l}                                      & \\
$\{(\tos, \bot, \emptyset)\}$     & $\{ \tos\}$ & \BrTable{\seq{l} l_N}                         & \\
\begin{minipage}{5.5cm}$\{ (\mpst{q,i}, \bot, \emptyset) ~|~ 0 \le i \le z \} \cup {}$~~~~~~~~~$\{ (\mpst{q-1,z'-j+n}, \memmp{\mpst{q,z-j}}, \{ \mpst{z-j} \} ) ~|~ 0 \le j < n  \}$\end{minipage} & $\emptyset$ & \Return{} ($n$ is the number of returned values)\\
$ \{(\mplo{i,q+1}, \memmp{\mpst{z-i}}, \{ \mpst{z-i}\})   ~|~ 0 \le i < m\} $     & $\emptyset$                               & \Call{x} ($m$ is the number of arguments) \\
$ \{(\mplo{i,q+1}, \memmp{\mpst{z-i-1}}, \{ \mpst{z-i-1}\})   ~|~ 0 \le i < m\} $ & $\{\tos\} \cup \{\mpta{i} ~|~ 0 \le i \}$ & \CallIndirect~{x} ($m$ is the number of arguments) \\
 \hline
\end{tabular}
\end{center}
\caption{List of update and input sets for each instruction. In cases where $v$ is not given it assumed to be correct by \wappler's soundness. $z$ and $q$ hold the respective indices of the topmost value and call stack element \emph{before} execution the described command.
If we omit a call stack index from a memory position that should carry one, it is assumed to be $q$ (e.g., $\mpst{0} = \mpst{q,0}$).
We write $\call{\mu}{p}$ for the value of $p$ in before executing the command.
}
\label{fig:wanilla-input-update-sets}
\end{figure*}

\begin{figure*}
\begin{align*}
\exconftuple[S;F;\iinstrs] & = (\exconfpp[\iinstrs], \exconfmem[F,\iinstrs]) \\
\exconfmem[S;F;\iinstrs]   &= \lambda p .
\begin{cases}
k                                                                                                                        & \text{if } p = \mpst{x,y} \land \arraccess{\getframe[\iinstrs, x]}{y} = \Const{t}{k}\\
\access{\getframe[\iinstrs, x]}{\arraccess{locals}{y}}                                                                   & \text{if } p = \mplo{x,y} \\
\access{\access{S}{\arraccess{global}{\access{F}{\access{moduleinst}{\arraccess{globaladdrs}{x}}}}}}{value}              & \text{if } p = \mpgl{x}   \\
\access{\access{S}{\arraccess{mems}{\access{F}{\access{moduleinst}{\arraccess{memaddrs}{0}}}}}}{\arraccess{data}{x}}     & \text{if } p = \mplm{x}   \\
|\access{\access{S}{\arraccess{mems}{\access{F}{\access{moduleinst}{\arraccess{memaddrs}{0}}}}}}{data}|/2^{16}           & \text{if } p = \mpsz      \\
\access{\access{S}{\arraccess{tables}{\access{F}{\access{moduleinst}{\arraccess{tableaddrs}{0}}}}}}{\arraccess{elem}{x}} & \text{if } p = \mpta{x}   \\
\bot                                                                                                                     & \text{otherwise}
\end{cases}\\
\exconfpp[F,\iinstrs] &= \begin{cases}
(\access{F}{fid}, \access{F'}{pc}) \cdot \exconfpp[F', \iinstrs']
& \text{if } \iinstrs = \EC{\Frame{n}{F'}~\iinstrs'~\End}\\
(\access{F}{fid}, \hpc)
& \text{if } \iinstrs = \EC{\Cmd_{\hpc}}\\
\end{cases}\\
\end{align*}
\begin{align*}
\getframe[\iinstrs, k] =
\begin{cases}
F'                        & \text{if } k = 0 \land \iinstrs = \Frame{n}{F'}~\iinstrs'~\End \\
\getframe[\iinstrs', k-1] & \text{if } k > 0 \land \iinstrs = E[\Frame{n}{F'}~\iinstrs'~\End] \\
\bot                      & \text{otherwise}
\end{cases}\\
\getstack[\iinstrs, k] =
\begin{cases}
\iinstrs'                 & \text{if } k = 0 \land \iinstrs = \Frame{n}{F'}~\iinstrs'~\End \\
\getstack[\iinstrs', k-1] & \text{if } k > 0 \land \iinstrs = E[\Frame{n}{F'}~\iinstrs'~\End] \\
\bot                      & \text{otherwise}
\end{cases}\\
\end{align*}
\caption{Conversion between \WA configurations and the $\MPPos \times \MMem$ view of configurations.}
\end{figure*}

\begin{figure*}
\small
\begin{align*}
\preframes[\Delta] &= \big\{ \Delta' ~|~ \Delta' \subseteq \Delta, \forall i . 0 \le i < \access{\access{\arraccess{\access{M}{mems}}{0}}{type}}{max} \implies \exists f . f \in \Delta' \land \h{MState}[\_, \_, \_, \_, \h{Mem}[i, \_, \_ ], \_, \_, \_, \h{Mem}[i, \_, \_]],\\
&\forall i\,f\,f' . \{f, f'\} \subseteq \Delta \land \memaccess{f}{\mplm{i}} \neq \bot \land \memaccess{f'}{\mplm{i}} \neq \bot \implies f = f', \\
&\forall i\,j\,f . \h{MState}[\_, \_, \_, \_, \h{Mem}[i, \_, \_ ], \_, \_, \_, \h{Mem}[j, \_, \_]] \in \Delta \implies j = i, \\
&\forall p\,f\,f' . \{f, f'\} \subseteq \Delta \land p \in (\MMPos \setminus \{ p' ~|~ p = \mplm{i} \lor \mplmo{i}, i \in \mathbb{B}_{32}\}) \implies \memaccess{f}{p} = \memaccess{f'}{p}, \\
&\forall f\,f' . \access{f}{pc} = \access{f'}{pc} \land \access{f}{fid} = \access{f'}{fid} \land \access{f}{table} = \access{f'}{table}
\\
 \big\}
\\
\abstractcallgraph[\Delta]&= \big(\preframes[\Delta], \big\{ (\Delta_a,\Delta_b) ~|~ \{\Delta_a,\Delta_b\} \subseteq \preframes[\Delta], \\
& \pca = \access{\Delta_a}{pc}, \pcb = \access{\Delta_b}{pc}, \fida = \access{\Delta_a}{fid}, \fidb = \access{\Delta_b}{fid}, f_a = \access{M}{\arraccess{funcs}{\fida}}, f_b = \access{M}{\arraccess{funcs}{\fidb}},\\
& (\Call{}[\pca] \in \annotate{\Block{\access{f_a}{type}}~\access{f_a}{body}~\End}
 \land o = 0) \lor\\
&\quad\quad
(\CallIndirect[t_b][\pca] \in \annotate{\Block{\access{f_b}{type}}~\access{f_b}{body}~\End} \land (\access{\Delta}{table} = \h{TblPrecise} \implies \memaccess{\Delta_a}{\tos} = \fidb) \land o = 1),\\
& \forall i . i \in \mathbb{B}_{32} \implies \memaccess{\Delta_a}{\mplm{i}} = \memaccess{\Delta_b}{\mplmo{i}} \land \memaccess{\Delta_a}{\mpgl{i}} = \memaccess{\Delta_b}{\mpglo{i}} \land  \memaccess{\Delta_a}{\mpsz} = \memaccess{\Delta_b}{\mpszo}, \\
& t_b = {t_1}^n \rightarrow {t_2}^m \land \forall i . 0 \le i < m \implies \memaccess{\Delta_a}{\tos[-i-o]} = \memaccess{\Delta_b}{\mpar{i}}\\
& \big\} \big)
\end{align*}
\caption{Concretization 1: Abstract activations and abstract frames.}
\end{figure*}

\begin{figure*}

{
\small
\begin{align*}
&\concretestack{\Delta} =&& \{ \Frame{n}{F}~\concretestackp{\Delta_1,\ldots,\Delta_n, \access{\Delta_0}{pc}, \annotate{\Block{\access{f}{type}}~\access{f}{body}~\End}, 0, \concretestacklocal{\Delta_0} }~\End ~|~ \\
&&&\quad \Delta_0,\ldots,\Delta_n \text{ is a path through }\abstractcallgraph[\Delta], f = \access{M}{\arraccess{funcs}{\access{\Delta_0}{fid}}} \}\\
&\concretestackp{\vec{\Delta}, \ipc, \iinstrs, \icnt, \ist} =&&
\begin{cases}
\ist~\iinstrs
\\\quad\quad \text{if } \iinstrs = \Cmd[\ipc]~\iinstrsrest \land \vec{\Delta} = \epsilon \\
\arraccess{\ist}{0:-|\access{\Delta_0}{args}|}~
\Frame{k}{F}~
\concretestackp{\Delta_1,\ldots,\Delta_n, \access{\Delta_0}{pc}, \annotate{\Block{\access{f}{type}}~\access{f}{code}~\End}, 0, \arraccess{\ist}{|\access{\Delta_0}{args}|:} }
~\End
~\iinstrsrest
\\\quad\quad \text{if } \iinstrs = \Cmd[\ipc]~\iinstrsrest \land \vec{\Delta} = \Delta_0,\ldots,\Delta_n \land f = \access{M}{\arraccess{funcs}{\access{\Delta_0}{fid}}}\\
\arraccess{\ist}{0:\icnt}~
\Label{n}{\epsilon}~
\concretestackp{\vec{\Delta}, \ipc, \iinstrs', 0, \arraccess{\ist}{\icnt:} }
\End{}~
~\iinstrsrest
\\\quad\quad \text{if } \iinstrs = \Block{t^n}[\ipc']~\iinstrs'~\End[\ipc'']~\iinstrsrest \land \ipc' < \ipc < \ipc''\\
\concretestackp{\vec{\Delta}, \ipc, \iinstrsrest, \icnt + n, \ist}
\\\quad\quad \text{if } \iinstrs = \Block{t^n}[\ipc']~\iinstrs'~\End[\ipc'']~\iinstrsrest \land \lnot(\ipc' < \ipc < \ipc'') \\
\arraccess{\ist}{0:\icnt}~
\Label{0}{\Loop{t^n}~\iinstrs'~\End{}}~
\concretestackp{\vec{\Delta}, \ipc, \iinstrs', 0, \arraccess{\ist}{\icnt:} }
\End{}~
~\iinstrsrest
\\\quad\quad  \text{if } \iinstrs = \Loop{t^n}[\ipc']~\iinstrs'~\End[\ipc'']~\iinstrsrest \land \ipc' < \ipc < \ipc''\\
\concretestackp{\vec{\Delta}, \ipc, \iinstrsrest, \icnt + n, \ist}
\\\quad\quad  \text{if } \iinstrs = \Loop{t^n}[\ipc']~\iinstrs'~\End[\ipc'']~\iinstrsrest \land \lnot(\ipc' < \ipc < \ipc'') \\
\arraccess{\ist}{0:\icnt}~
\Label{n}{\epsilon}~
\concretestackp{\vec{\Delta}, \ipc, \iinstrs_1', 0, \arraccess{\ist}{\icnt:} }
\End{}~
~\iinstrsrest
\\\quad\quad  \text{if } \iinstrs = \If[t^n][\ipc']~\iinstrs_1~\Else[\ipc'']~\iinstrs_2~\End[\ipc''']~\iinstrsrest \land \ipc' < \ipc < \ipc''\\
\arraccess{\ist}{0:\icnt}~
\Label{n}{\epsilon}~
\concretestackp{\vec{\Delta}, \ipc, \iinstrs_2', 0, \arraccess{\ist}{\icnt:} }
\End{}~
~\iinstrsrest
\\\quad\quad  \text{if } \iinstrs = \If[t^n][\ipc']~\iinstrs_1~\Else[\ipc'']~\iinstrs_2~\End[\ipc''']~\iinstrsrest \land \ipc'' < \ipc < \ipc''' \\
\concretestackp{\vec{\Delta}, \ipc, \iinstrsrest, \icnt + n, \ist}
\\\quad\quad  \text{if } \iinstrs = \If[t^n][\ipc']~\iinstrs_1~\Else[\ipc'']~\iinstrs_2~\End[\ipc''']~\iinstrsrest \land \lnot(\ipc' < \ipc < \ipc''') \\
\concretestackp{\vec{\Delta}, \ipc, \iinstrsrest, \icnt - n + m, \ist}
\\\quad\quad  \text{if } \iinstrs = \Cmd[\ipc']~\iinstrsrest \land \ipc' \neq \ipc \land \Cmd[\ipc'] \text{ takes }n\text{ values and returns }m\text{ values}%
\end{cases}
\end{align*}
\begin{align*}
\concretemem{\Delta} &= \record{\recordentry{data}{\concretememlocal{\Delta, 0} },\recordentry{max}{\access{M}{\access{\arraccess{mems}{0}}{\access{type}{max}}}}}\\
\concretememlocal{\Delta, i} &=
\begin{cases}
\memaccess{\Delta}{\mplm{i}} \cdot \concretemem{\Delta, i+1} & \text{ if } i < \mpaccess{\Delta}{\mpsz} \\
\epsilon                                                         & \text{ otherwise}
\end{cases} \\
\concreteglobals{\Delta} &= \concreteglobalslocal{\Delta, 0}\\
\concreteglobalslocal{\Delta, i} &=
\begin{cases}
\record{\recordentry{val}{\memaccess{\Delta}{\mpgl{i}}}, \recordentry{mut}{\access{\arraccess{\access{M}{globals}}{i}}{mut}}} \cdot \concreteglobals{\Delta, i+1} & \text{ if } i < |\access{\Delta}{globals}| \\
\epsilon                                                         & \text{ otherwise}
\end{cases} \\
\concretestore{\Delta} &= \{ \concretestorelocal{\Delta_n} ~|~\Delta_0,\ldots,\Delta_n \text{ is a path through }\abstractcallgraph[\Delta] \} \\
\concretestorelocal{\Delta} &= \{ \record{\recordentry{funcs}{f}, \recordentry{tables}{t}, \recordentry{mems}{\concretemem{\Delta}}, \recordentry{globals}{\concreteglobals{\Delta}}} ~|~ t \in \concretetable{\Delta}, f = \access{M}{funcs} \} \\
\concretetable{\Delta} &= \begin{cases}
\big\{\record{\recordentry{elem}{\access{\arraccess{\access{M}{elem}}{0}}{init}}, \recordentry{max}{\access{\arraccess{\access{M}{tables}}{0}}{max}}} \big\} & \text{if } \access{\Delta}{table} = \h{TblPrecise}\\
\big\{\record{\recordentry{elem}{t},                                              \recordentry{max}{\access{\arraccess{\access{M}{tables}}{0}}{max}}} ~|~ t \text{ is any possible table of correct size}\big\} & \text{if } \access{\Delta}{table} = \h{TblImprecise}\\
\end{cases}\\
\concrete{\Delta} &= \big\{S; \record{\recordentry{module}{\{\}}, \recordentry{local}{\epsilon}};\iinstrs ~|~ \iinstrs \in \concretestack{\Delta}, S \in \concretestore{\Delta}  \big\}
\end{align*}
}
\caption{Concretization 2: Mapping between abstract and concrete configurations.}
\end{figure*}

\fi
\end{document}